\begin{document}
	
	\frontmatter %Use lowercase Roman numerals for page numbers

	\title{%\includegraphics[width=5cm]{UHA.jpeg}\hspace{5cm}
	\includegraphics[width=10cm]{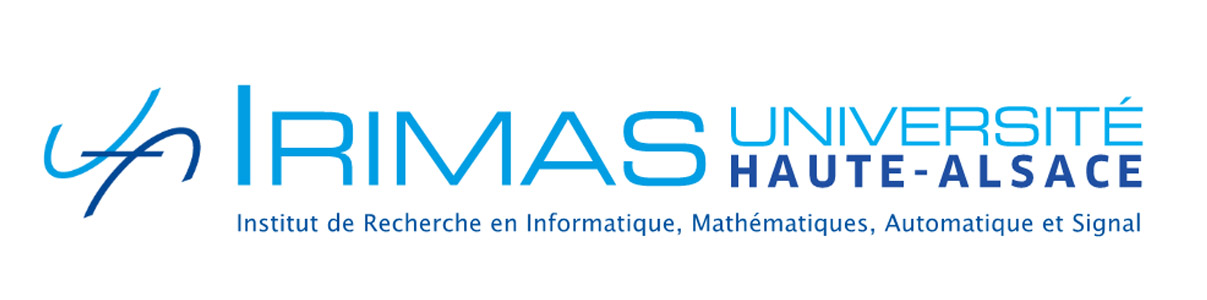}  \\
% 	~\vspace{0.5cm}
	{\Large UNIVERSITE DE HAUTE ALSACE} ~\\
	{\large UNIVERSITE DE STRASBOURG}
	~\vspace{1.5cm}\\
	{\bf\Huge Habilitation à Diriger des Recherches} \\
	\vspace{0.7cm}
	{\bf\normalsize ECOLE DOCTORALE MSII (ED 269)} \\
	{\normalsize Discipline: Mathématiques} \\
	\vspace{0.7cm}
	{\normalsize Présentée et soutenue publiquement \\ par \\
	Pierre J. CLAVIER-BAUER} \\
	\vspace{1.2cm}
	\hrulefill \\
	{\Huge TRAPs, Generalisations of MZVs, Locality and Resurgence for Quantum Field Theories} \\
	\hrulefill \\
	\vspace{1cm}
	{\normalsize
	\begin{tabular}{cl}
     Jury: & Prof. Dominique MANCHON, CNRS et Université Clermont-Auvergne (Rapporteur) \\[1ex]
     & Prof. Kasia REJZNER, University of York (Rapportrice) \\[1ex]
	                    & Prof. Leila SCHNEPS, Sorbonne Université, (Rapportrice)  \\[1ex]
	                    & Prof. Frédéric CHAPOTON, CNRS et Université de Strasbourg (Examinateur) \\[1ex]
	                    & Prof. Vladimir DOTSENKO, Université de Strasbourg (Examinateur) \\[1ex]
	                    & Prof. Martin BORDEMANN, Université de Haute Alsace (Garant) \\[1ex]
	                    & Dr. Frédéric MENOUS, Université de Paris-Saclay (Invité)
    \end{tabular}
% 	\begin{minipage}[t]{0.9\linewidth}
% 	\begin{itemize}
% 	 \item[] Prof. Dominique MANCHON, CNRS et Université Clermont-Auvergne (Rapporteur)
% 	                    \item[] Prof. Kasia REJZNER, University of York (Rapportrice)
% 	                    \item[] Prof. Leila SCHNEPS, Sorbonne Université, (Rapportrice)
% 	                    \item[] Prof. Frédéric CHAPOTON, CNRS et Université de Strasbourg (Examinateur)
% 	                    \item[] Prof. Vladimir DOTSENKO, Université de Strasbourg (Examinateur)
% 	                    \item[] Prof. Martin BORDEMANN, Université de Haute Alsace (Garant)
% 	                    \item[] Dr. Frédéric MENOUS, Université de Paris-Saclay (Invité)
%     \end{itemize}
%     \end{minipage}
 }}
	
	\author{
% 	Pierre J. Clavier${}^{1}$\\
% 		~\\
% 		{\small \it $^1$ Université de Haute Alsace, IRIMAS,}
% 		{\small \it 12 rue des Frères Lumière,}\\
% 		{\small \it 68 093 MULHOUSE Cedex, France}\\ 
}
	
	\date{}
	
	\maketitle

% 	Quotes:
% \epigraph{Alas, our fraility is the cause, not we, for such we are made, for such we be.}{Shakespeare}
% ``Alas, our fraility is the cause, not we, for such we are made, for such we be.'' -- Shakespeare.
% 
% ``A quoi bon vous tracasser pour si peu, allez donc faire un somme en attendant.'' -- Calaferte.

~\thispagestyle{empty}
\begin{center}
 \emph{A la mémoire de Yolande Clavier.}
\end{center}

\newpage

~\thispagestyle{empty}
\vspace{2cm}
\begin{epigraphs}
 \qitem{Alas, our \ty{frailty} is the cause, not we, \\ For such we are made, for such we be.}{Shakespeare\\\emph{\ty{Twelfth} night}, II, 2, Viola}
 \vspace{1cm}
 \qitem{A quoi bon vous tracasser pour si peu, allez donc faire un somme en attendant.}{Calaferte\\ \emph{Septentrion}}
\end{epigraphs}

    \tableofcontents

%% MAIN TEXT:

    % 
% 
% \documentclass[11pt,twoside,a4paper]{book}
% 
% \usepackage{HDR}
% 
% %% SI ON LAISSE LES DEUX LIGNES SUIVANTES DANS LE STY, ELLES NE MARCHENT PAS... (probablement à cause de \input).
% \input{xy}
% \xyoption{all}
% 
% 
% \begin{document}
% 
% % % % 
% % % % % % COMMENTER CE QUI EST AVANT CA ET ENDDOCUMENT POUR COMPILER HDR.tex.

\newpage

\section*{Remerciements}
    
    \addcontentsline{toc}{section}{Remerciements}
    
    \renewcommand{\chaptermark}[1]{%
\markboth{#1}{}}
\chaptermark{Remerciements}
    
Je voudrais exprimer mes remerciements les plus sincères envers les membres du jury, et en particulier les trois rapporteur.ice.s qui ont pris le temps de lire ce très long manuscrit. Leurs retours, ainsi que ceux de mon garant ont grandement amélioré la qualité de ce travail et je leurs en suis extrêmement reconnaissant. Il va sans dire que toute erreur restante est uniquement de mon fait.

J'éprouve également une profonde gratitude envers tout les membres de l'équipe d'algèbre-géométrie, et plus généralement du département de mathématiques, qui créent un environnement de travail plaisant et stimulant. Ce point s'étends également à l'ensemble de l'IRIMAS, dont les membres, et surtout la direction, m'ont encouragés à passer mon HDR. J'inclue aussi ici l'ensemble des membres du MATHEOR dont la quête perpetuelle de l'excellence mathématique est une source d'inspiration inépuisable. Un salut tout particulier à Douglas, qui a accepter de me faire confiance pour faire sa thèse sous ma co-tutelle.

Ecrire des remerciements est une bonne manière de réaliser à quel point on doit beaucoup à énormément de personnes. Dans mon cas, il en est deux sans lesquelles je ne serais pas en train de faire de la recherche aujourd'hui. Il s'agit en premier lieu de Marc Bellon, qui fut mon directeur de thèse, et de Sylvie Paycha, sous la supervision de qui j'ai effectué mes post-doctorats. Sachez que je suis immensément conscient et reconnaissant des chances que vous m'avez données. Et merci à tous les autres, que nous ayons collaboré, discuté ou simplement passé un moment agréable ensemble. Je vous dois beaucoup, et sachez que j'essaye, modestement, de transmettre ce qui m'a été offert.

Enfin, un très grand merci à ma famille, à mes parents Jean-Philippe et Anouk, à mes soeurs Noémie et Suzanne, pour être les personnes merveilleuses que vous êtes. Un grand merci à Wilfried et Waltraud: \emph{Es bedeutet mir viel, dass ihr heute in Mulhouse seid. Merci!} Et j'embrasse au passage toute la famille éloignée, de A comme André à Y comme Yolande et Yui, qui serait trop longue à énumérer.

Il y a des bons jours, il y a des mauvais jours. Mais aucun n'est vraiment mauvais si il se termine avec vous. Merci Franziska et L.

% The acknowledgements (in the TZVs paper) have to be included somewhere there.

\mainmatter % Now Use Arabic numerals for page numbers

\chapter*{General Introduction} 
    
%     CHANGER REF PAPIER BROWN!!!!! XXXXXX
    
    \addcontentsline{toc}{chapter}{General Introduction}
    
    \chaptermark{GENERAL INTRODUCTION}

%    THERE ARE STILL XXXXX IN CHAP 2, check everywhere!!!  

I start with some considerations on what this thesis is and how I came to \ty{write} it. These discussions are not strictly scientific and the readers not interested in them should start with Section \nameref{sec:contents} below.

\section*{Foreword}

\addcontentsline{toc}{section}{Foreword}

I would like to start the thesis with a few words regarding what it actually is. This requires a short presentation of the French system for those not familiar with it. I apologise to the readers that are and advise them to skip the next paragraph.

I am currently maître de conférences, a status that more or less corresponds to lecturer in the Anglo-Saxon academic \ty{world}. If I eventually wish to apply to positions of professeur des universités (a.k.a. full professor) I first need to have a diploma that essentially allows me to supervise PhD students. This thesis was written in order to obtain this diploma, which is called habilitation à diriger des recherches.

The goal of this thesis is two-fold. First it is to provide a summary of my research after my PhD. Of course it does not aim at being exhaustive, but it nonetheless covers what I deemed to be my most important results. The second goal is to try to present the directions future works could point towards and the goals I might hope to achieve with them. 

% Therefore I do not aim at presenting new results here.
It might be that some results are written in a slightly new form (for example Theorem \ref{thm:flattening} is the concatenation of two separate results), while some other were known but just not written in published papers (e.g. Theorem \ref{thm:loc_Hopf_prop_forests}). Nonetheless, this text is essentially not a research text. I end each of the chapter of the thesis with a presentation of open questions. 
% However the most original pieces of this thesis are the last sections of each chapters, where I attend to present open questions.
Some of them are written in the form of conjectures, and at least two of them (Conjectures \ref{conj:TRAP_reg} and \ref{conj:asympt_bound}) did not appear in previous work. Some results that are direct consequences of these conjectures are also new, e.g. Proposition \ref{prop:rapport_conj}. Furthermore, some other open questions are written in a less precise form and therefore are not called conjectures. When possible, I try to explain how I imagine one could tackle these questions.

% ~\\
% 
% Explain what the thesis is (and french system). Point out the method I like to use (algebra to solve analytical problem). Statement about no new results.

% Dire comment ces sujets sont tous reliés à la TQC?

% New results (more or less new, often known but not written, or just not written like that):
% \begin{itemize}
%  \item the one in chapter 2 (equivalence with shuffles) \ref{thm:flattening}
%  \item \ref{thm:loc_Hopf_prop_forests} (just not written)
%  \item Conjecture \ref{conj:TRAP_reg}
%  \item Conjecture  \ref{conj:asympt_bound}
% \end{itemize}

\section*{The long march to the HDR}

\addcontentsline{toc}{section}{The long march to the HDR}

When people ask me what I do, I say that I am a mathematical physicist, or just a mathematician if they ask in \ty{French}. In general people make a couple of derogatory remarks, stop asking about my work and often to talk to me altogether. I am great at parties. But you, dear reader, I can imagine your reaction to fit my plans for this section, and there \ty{is} not much you can do about it. So I'd like to believe that you would have asked me what I mean by that.

Well, thanks for asking. I am a mathematician, but the questions I like to investigate are typically related to Physics. Sometimes, admittedly in a rather unclear way. I like being a mathematical physicist. The rigor of mathematical work suits me better than the intuitive (some would say hand-wavy) aspects of physics, however theoretical. And I like that my work might still relate to an exterior truth. 
I hope to stay long in this area of mathematics where one can still see, even faintly, some Physics. \\ 

So, if I like rigour, why mathematical physics rather than some more traditional domain of mathematics? Well, this seems like a natural question, but it turns out that
I have a bachelor, master and PhD in Theoretical Physics. Thus it might be more relevant to explain to the curious readers how I ended up writing an HDR in Mathematics, even one with little pieces of Physics sprinkled here and there.
% About why they are reading it though, I can offer no answer.

As a kid, I wanted to do something with ``astro'' in it, thus the Physics studies. This lasted until a seminar I attended during an internship I had to do at the very end of my bachelor. In this excellent seminar, a young researcher explained to us his method to find exoplanets, and how he successfully used it to discover one. It was pretty exciting. At some point during his presentation he showed us his raw data, which took the form of thick, rather random-looking wavy lines. At the end of his presentation, someone attending the conference asked which spikes of these wavy lines were the signal of the exoplanet he found. The speaker answer something like ``oh, none of them! This is only noise, the signal on this graph would be about one thousandth of a pixel''. I was already sure I would be rather bad at processing signals, and that I would hate to \ty{attempt} to do it, so I decided here and there to not pursue a career in astro-something. To this day, I am extremely thankful to the speaker, the asker and all the people that made it possible for me to attend this talk.

Therefore I decided to look towards theoretical physics, which long discussions with my close friend E. Dumonceau convinced me to be more to my liking. During my master degree I had \ty{remarkable} lectures in QFT and it seemed like an exciting topic. I also had the opportunity to dip a couple of toes in some aspects of string theory and was less convinced. With this naive point of view I had the great \ty{luck} to meet Marc Bellon, who accepted to be my PhD advisor. With him and for three years I studied QFT, and in particular Schwinger-Dyson equations.

Many things happened during these three years, good and bad. But two that are relevant here are that Marc encouraged me to \ty{attend} to a weekly mathematical physics seminar held in Paris VII (Diderot) in the mathematics group. There I met many \ty{remarkable} mathematicians and physicists and also got a (much) better understanding of what mathematical research is. Secondly, toward the end of my PhD Marc and I realised that the most \ty{interesting} questions we \ty{encountered} in our line of research were of rather mathematical nature. These questions \ty{were} typically of the form ``which singularities solutions of Schwinger-Dyson Equations can have and where can they be located in the complex plane?''.

A few months before my PhD thesis was due, I started looking for a post-doc position, with very limited success. Growing rather worried that I would end up a jobless doctor, I wrote to many colleagues I had meet at the \ty{aforementioned} mathematical physics seminar. My second great \ty{luck} is that Sylvie Paycha gave me a positive answer. I could work in her group for four years, plus one where I was in Peter Friz's group in TU Berlin as well as in Sylvie's group in Potsdam University. \\

At some point during these five year the tipping moment came when I became a mathematician more than a physicist. I cannot say when it was precisely, probably towards the beginning. I must say that it was not always easy: it sometimes felt like doing a second PhD, but with a very insufficient background. It was only possible due to the patience, kindness and willingness of Sylvie and other colleagues, in particular of the Analysis group of Potsdam University, to share their knowledge that this \ty{endeavour} was possible. I would like to express here once again my warmest thanks to all of them.

It might seem surprising that the first mathematics group I worked in was an analysis group and that I am now in an algebra and geometry group. First and foremost, our group also includes \ty{researchers} (beside myself) with interests in mathematical physics. I also want to say that at least some aspects of analysis speak to the remainder of my physicist's soul. One can do long and technical computations in analysis that would shame no physicist. But most importantly working with analysts made me realise I like to tackle the challenges they offer using algebraic methods. For example, it is \ty{not such an} easy task to determine when arborified zeta values converge. I used the universal property of rooted forests to define them and control their behaviour. I really like this method of defining an analytical object using an algebraic property, typically a universal property in an appropriate category. I aim at using it further, ideally for Feynman rules. I think I can -- very humbly -- say that this is my style of doing mathematics and that I could only find it because I had the chance to work in an analysis group.

In any cases, I started to work in math, but still on questions related to Physics. The first question I looked at with co-authors Sylvie Paycha, Li Guo and Bin Zhang, was to \ty{develop} a framework adapted to multivariate renormalisation schemes. This led to other questions, that may be more or less mathematical. For example, I am interested by number theory, a very mathematical topic indeed, but that I learned first because some of the numbers I study appear in evaluations of Feynman graphs and in the amplitudes of some string theories. I guess I could say at this point of this presentation that it is a full loop: I went from Physics to discuss my current relation to Physics. It could be a form of closure and be  satisfactory but I don't think it is correct. Today is not about closure since it is not an end, merely a step. I am not worried about the destination where this step and the next ones will lead me. After all I am just here for the ride.

% Here an explanation of how my research get to exist: first goal astrophysics, then theoretical physics, then mathematical physics, then mathematics (say it is chance?)

\section*{Contents and main results} \label{sec:contents}

\addcontentsline{toc}{section}{Contents and main results}

\subsection*{General overview}

This HDR contains four chapters and one appendix, which is simply a curriculum vitae and will not be talked about further here. Each of the chapters is intended to be rather self-contained. However I will briefly present the content of the thesis here in the hope to ease the reading of this admittedly long text. In particular, I will aim at pointing out the main results of each chapters. This results will not be stated precisely and sometimes not rigorously but rather in a way that gives the readers the main ideas behind these works, the logic that articulates the various chapters. I also \ty{try} to give a feeling for the motivations behind each of these research projects. \\

I must also say that these four chapters might seem rather unrelated so I would like to specify here the common logic \ty{underlining} the whole text. The common thread is {\bf Quantum Field Theory} (QFT). Very roughly speaking, QFT consists of the physical models describing small particles going fast, as well as the mathematical tools and structures behind these models. Thus it aims at providing a framework to unify special relativity with quantum mechanics.

The most famous aspect of QFT are the so-called Feynman graphs which encode possible interactions in the \ty{perturbative} approach of QFT. One then needs to evaluate these graphs using {\bf Feynman rules}. However these rules are more like a recipe: put an integral here, a propagator there... They are not given as a map from a set of \ty{Feynman} graphs to some analytical space. It is actually not clear in general that they are well-defined as a map. Typically, one does not know in which space Feynman rules take their values. The first chapter aims at presenting a method to answer this question. \ty{It provides} a category in which graphs have a universal property that could \ty{enable us} to define Feynman rules. Our strategy is to introduce the category of {\bf TRAces and Permutations} (TRAPs), in which graphs have a universal property.

Whatever their status, when one applies Feynman rules to non trivial graphs, one typically finds divergent expressions. These then need to be renormalised which is often, somewhat improperly, described as removing divergences. This process is now fairly well-understood thanks in particular to the celebrated Connes-Kreimer Hopf algebra. Various renormalisation schemes exist in the \ty{literature} and the goal of the third chapter is to present a new one, which has the particularity of using {\bf multiple regularisation parameters}. It requires to \ty{define} a class of partial structures named {\bf locality structures} which are \ty{tailored} to encode the desired properties of the renormalised quantities.

So, once one has applied the Feynman rules, and renormalised the results, \ty{which} type of numbers does one find? Well, one finds in particular evaluations of Riemann's zeta function at integers greater than one, and their generalisation: Multiple Zeta Values (MZVs). Many open questions still stand regarding these numbers. In the second chapter, we investigate some {\bf generalisations of MZVs}, the algebraic identities they obey, and their relations to MZVs. Some of these generalisations also appear in some physical theories, e.g. conical zeta values appeared in some string theoretical computations.

At this point, one might ask why the second and third chapters are not exchanged. It would make more sense according to the previous discussion, but this is not the logic I followed. I chose to first talk about the most current research directions I am following. This is because the original physical question have sometimes (and in particular for the second and third chapters) slipped in \ty{the back} of my mind and are not  my main motivation \ty{anymore}.

This being said, is the story of QFT over once one understands the number-theoretical content of Feynman graphs? The answer to this question is a resounding no. Many aspects of QFT, even mathematical aspects, were not at all mentioned in the discussion above. In particular, I only described perturbative aspects of QFTs, but these are far from being the only ones. One way to obtain {\bf non-\ty{perturbative} results} for QFTs is through the analysis of their Schwinger-Dyson Equations (SDEs) and their Renormalisation Group Equations (RGEs). While the famous RGEs encode how the theory behave under changes of scales, the less known SDEs are, in some sense, equations of motion of QFTs. In the fourth chapter I argue that \'Ecalle's  {\bf resurgence theory} offers tools to reach these non-perturbative sectors of QFTs. In particular I discuss how it provides a mechanism to build two-point functions that are not just formal \ty{perturbative} series but true functions analytic in some complex domains. \\

As already mentioned, this thesis is based in particular on nine research articles. Let me list them all together here, regrouped by the chapters they contribute to:
\begin{itemize}
 \item Chapter 1: \cite{ClFoPa20} and \cite{ClFoPa21}.
 \item Chapter 2: \cite{Cl20} and \cite{ClPe23}.
 \item Chapter 3: \cite{CGPZ1}, \cite{CGPZ2} and \cite{CFLP22}.
 \item Chapter 4: \cite{Cl21} and \cite{BC19}.
\end{itemize}
Some results of other articles are of course also used. These articles are quoted below.

\subsection*{Chapter 1: PROPs and TRAPs for QFT}

This first chapter is based on the unpublished preprint \cite{ClFoPa20} and the paper \cite{ClFoPa21}, both written with coauthors Lo\"ic Foissy and Sylvie Paycha. The work presented in this chapter aims at offering a way to rigorously build Feynman rules for QFTs.\\

PROPs is an \ty{abbreviation} for PROducts and Permutations. This structure can be defined in a very algebraic way, as \emph{a strict symmetric monoidal  category enriched over $\Vect_\K$ whose objects are indexed by natural numbers $\{[n]\}$ and whose monoidal product is given by the addition:
$[n]\otimes[m]=[n+m]$} (Definition \ref{defn:PROP_cat}). The first chapter starts with a very quick overview of some aspects of category theory; the ones that allow to understand this definition. Then working out the various aspects of this definition allows us to reformulate it in a much less compact, but somewhat more easily tractable form, the one of Definition \ref{def:prop}. 

Roughly speaking, a PROP is a collection of vector spaces indexed by two positive integers. One of these can be seen as the number of ``\ty{inputs}'' and the other as the number of ``outputs''. Then one has a horizontal concatenation (where the numbers of inputs and outputs are added) and a vertical concatenation, where the inputs of the second element are branched to the outputs of the first element. This can be done when the number of inputs and outputs coincide. Furthermore the symmetric group acts on the inputs and the outputs.

This more pedestrian definition is used to build examples of PROPs. The first example (Definition-Proposition \ref{defi:Hom_V}) is the PROP $\Hom_V$ of linear morphisms between tensor products of a finite dimensional vector space $V$. To generalise this to infinite dimensional vector spaces, we may put some more structure on these vector spaces. This is done in the second example: Theorem \ref{thm:Hom_V_generalised}, which is the PROP $\Hom_V^c$ of continuous linear maps between completed tensor powers of a Fréchet nuclear space $V$. 

We then introduce \emph{generalised graphs} (Definition \ref{def:graph}). Roughly speaking, a generalised graph consists of vertices, oriented internal edges between two vertices, ingoing edges (with no origin but that arrive on a vertex), outgoing edges (with no end but that start on a vertex) and ingoing-outgoing edges, that just ``go through'' without being attached at either \ty{end} to a vertex. Furthermore, the inputs of the graph (given by ingoing edges and ingoing-outgoing edges) are labelled, as well as the outputs of the graph (given by outgoing edges and ingoing-outgoing edges). We then show in Theorem \ref{theo:ProP_graph} that the set of generalised graphs can be endowed with a PROP structure.

The next step is to introduce \emph{indecomposable graphs} (Definition \ref{def:indecomposable}). Roughly speaking, a generalised graph is indecomposable if it has at least one vertex, no ingoing-outgoing edges and can not be written in a non-trivial way as the horizontal or vertical concatenation of two subgraphs. \cy{Let $\Gri$ be} the set of indecomposable generalised graphs. The first important result of this chapter is then
\begin{theo*} (Theorem \ref{thm:freeness_Gr})
 The PROP of generalised graphs is the free PROP over indecomposable graphs.
\end{theo*}
Notice that this is a rather ``folklore'' result in that it is often mentioned in passing but, to the best of our knowledge, no proof was given before \cite{ClFoPa20}. This is possibly due to the technical issues of the proof but most probably to the fact that this result turns out to not have many consequences. Indeed, our motivation to study free PROPs was to use their universal property to build a Feynman map, i.e. a map $F_\tau:\calG_\tau\longrightarrow X$ with $\calG_\tau$ the set of Feynman graphs of the Quantum Field Theory $\tau$ and $X$ a suitable analytical space. Therefore, to use this universal property of generalised graphs in the category of PROPs to build such a map one needs to show that the target algebra $X$ carries a PROP structure and that there exists a morphism $f:\Gri\longrightarrow X$ with the right algebraic properties. The universal property of generalised graphs then gives a lift of $f$ to all generalised graphs from which we obtain the desired map $F_\tau$.

The first of these tasks is rather formidable but essentially unavoidable. However the second one is not less difficult, and seems rather unnecessary. Indeed, the practice of computing Feynman graphs tells us that $F_\tau$ is hard to build in particular on indecomposable graphs. Therefore this approach does not \ty{much reduce} the difficulty of \ty{the problem.}

This observation has two immediate consequences. First I do not speak of the possible generalisations of this first result such as similar universal properties for planar graphs or for decorated generalised graphs. The second is the inevitable consequence that PROPs are not the right objects to build Feynman rules. This is why we introduced \cy{what we thought was a new structure: TRAPs. It turned out that unitary TRAPs are wheeled PROPs (see \cite{Merkulov2006,JY15}). In this thesis I omit the proof that the categories of TRAPs and wheeled PROPs are isomorphic, see \cite[Theorem 5.3.1]{ClFoPa21} for this result. Therefore, TRAPs can be seen as non-unitary wheeled PROPs.} \\

% isomorphic to the category \\

TRAces and Permutations (TRAPs) are defined in \ref{def:Trap}. Again, roughly speaking, a TRAP is a \ty{family} of vector spaces indexed by two non-negative integers which one can still interpret as numbers of inputs and outputs. As for PROPs, a TRAP has a ``horizontal concatenation'' where the numbers of inputs and outputs are added. \ty{Furthermore}, a TRAP comes with ``partial traces'', where one output of an element is branched to the input of the same element. 
%Notice that TRAPs are closely related to wheeled PROPs (see \cite{Merkulov2006,JY15} for introductions to wheeled PROPs
a%nd \cite[Theorem 5.3.1]{ClFoPa21} for the relation between them and TRAPs).

This rather pedestrian definition allows us to find example of TRAPs. First (Proposition \ref{prop:HomV}) the TRAP $\Hom_V$ of linear maps between tensor powers of a finite dimensional vector space $V$. Then (Proposition \ref{prop:Homfr}) the TRAP $\Hom_V^{fr}$ of linear maps of finite rank between tensor powers of a vector space $V$. Still mimicking our work for PROPs, we show that the set of continuous linear maps between tensor powers of a nuclear \ty{Fr\'echet} space $V$ also admits  a TRAP structure. Our last example of this type (Theorem \ref{theo:Kinfty}) is the TRAP 
 $\mathcal{K}_{M}^\infty$ of generalised smooth kernels on the smooth finite dimensional orientable closed manifold $M$.
 
 In order to obtain a freeness result in the category of TRAPs we introduce in Definition \ref{def:solar_graphs} \emph{corolla oriented} generalised graphs. These are the generalised graphs that were previously introduced with the further structure that the sets of ingoing and outgoing edges of each of the vertices are totally ordered. We also introduce \emph{solar} graphs as graphs that have no ingoing-outgoing edges. We further define a \emph{decorated graph} (corolla ordered or not) in Definition \ref{defidecorations} to be a generalised graph $G$ with a decoration map $d_G:V(G)\longrightarrow X=(X(k,l))_{k,l\geqslant 0}$ such that if $v\in V(G)$ has $k$ inputs and $l$ outputs, then $d_G(v)\in X(k,l)$; i.e. the decoration respects the number of inputs and outputs of the vertices.
We have then our second important result of this first chapter
\begin{theo*} (Theorem \ref{thm:freetraps})
 Solar corolla oriented generalised graphs decorated by $X$ are the free TRAP generated by $X$.
\end{theo*}
In particular (Remark \ref{rk:PhiPGrX}), if the collection of sets $X$ has a TRAP structure, then we have a canonical morphism from solar corolla oriented generalised graphs decorated by $X$ to $X$.

Our next result is Definition-Proposition \ref{propverticalconcatenation}. We show that from a (unitary) TRAP, one can always build a PROP. In particular from the TRAP structures $\Hom_V$ and $\Hom_V^c$, one can rebuilt the PROP structures on the same sets and denoted by the same symbols.

We then show that iterating the partial traces of a TRAP gives an operation that can be called generalised trace since it is defined on any TRAP. The generalised trace is introduced in Definition \ref{defi:generalised_Traces} and its properties are given in Proposition \ref{prop:generalised_traces}.

The universal property of the TRAP of solar corolla oriented graphs allows us to define the generalised amplitude associated to a TRAP (Definition \ref{def:generalised-P-convolution}). We show in Proposition \ref{prop:gen_conv_vert_conc} that the generalised amplitude respects the horizontal concatenation of TRAPs but also the vertical concatenation of the PROP structure obtained from the TRAP structure. As an example of these various construction we work them all out explicitly in Theorem \ref{thm:generalised_convolution} for the TRAP $\mathcal{K}_M^\infty$ of generalised smooth kernels. In this Theorem, we state that the vertical concatenation  of two kernels corresponds to their generalised  convolution; that the associativity property of this generalised convolution 
     amounts to  the Fubini property for the corresponding multiple integrals; that the generalised trace  of a generalised kernel is given by its integral along the small diagonal and that the $\mathcal{K}_M^\infty$-amplitude is compatible with the horizontal and vertical concatenations. This Theorem \ref{thm:generalised_convolution} serves as a proof of concept of the type of results one can obtain with the TRAP technology we \ty{developed}. This chapter does not dwell deeper \ty{on} the applications of these results but instead ends on an ambitious possible application of the theory of TRAPs to perturbative QFT. \\
     
     The last section of this first chapter is devoted to this conjectural application. First, we set out to precisely state the problem we wish to tackle. Based on \cite{Ri91} we understand that a multivariate regularised Feynman rules for a scalar QFT over $\R^d$ should be a map 
     \begin{equation*}
        A_G:\vec z=(z_1,\cdots,z_E)\longmapsto\left(A_G(\vec z):\calD(\R^d)^{\otimes(k+l)}\longrightarrow\C\right).
     \end{equation*}
     The $z$s are needed because the naive application of Feynman rules to many important QFTs typically gives divergent expressions. These need to be regularised (and later renormalised) and the $z$s do just this: they are regularisation parameters.
     
     Of course this map $A_G$ should have some regularity properties. These properties are given by the theory of \emph{meromorphic germs} (Definition \ref{def:mero_germs_distrib}) with values in a space of distributions. Our first conjecture, numbered \ref{conjecture1}, is that the Feynman rules of a scalar QFT on $\R^d$, when applied to a Feynman graph of the same theory, gives a multivariate meromorphic germ with 
     values in distributions. Notice that I do not claim that this conjecture is entirely new. If I do not recall having read it in this precise form, it may nonetheless exist, and in any case is very much part of the ``folklore knowledge'' of the QFT community. In particular, a version of this conjecture for scalar QFT over a closed Riemannian manifold has been proved in \cite{DZ17}. Notice also that this conjecture could be made more precise by demanding that the germs are of a given order. 
     
     Moving on, I show how one can take an inductive limit on the number of variables for these distribution-valued meromorphic germs. On the spaces obtained after this limit one can conjecturally define a horizontal concatenation and partial traces. The symmetric group acts naturally on these objects, so we can state our main conjecture:
     \begin{conj*} (Conjecture \ref{conj:TRAP_reg})
      The inductive limit of the spaces of meromorphic germs with linear poles and values on distributions \ty{carries} a TRAP structure.
     \end{conj*}
     I expect that the main difficulty with proving this result is showing that these spaces are stable under the action of the partial trace maps. However recent analytical results of \cite{DPS22} about spaces related to these give me some hope that this conjecture could actually be tackled.
     
     I conclude this section by showing that this conjecture gives a canonical way to build Feynman rules (Definition \ref{def:TRAP_Feynman_rules}) and that these Feynman rules have the expected analytical and algebraic properties (Proposition \ref{prop:rapport_conj}).
     
     \subsection*{Chapter 2: Generalisations of Multi Zeta Values}
     
     As its title suggests, the second chapter of this thesis deals with generalisations of \ty{Multi Zeta} Values (MZVs), and presents their algebraic and number-theoretic properties. It is based on the papers \cite{Cl20} and \cite{ClPe23} (the \ty{latter} written with Dorian Perrot). Some methods and results this work is based on also appeared in the earlier work \cite{CGPZ2}. My initial motivation \ty{for studying} these generalisations of MZVs was that some difficult open questions regarding MZVs could be easier to solve for these generalisations.\\
     
     This second chapter starts with various definitions from the combinatorics of words and the statement of some classical results of the theory of MZVs. The definitions playing the main role for us are the ones of the \emph{shuffle product} (Definition \ref{def:shuffle}) and of the $\lambda$-shuffle products (Definition \ref{def:stuffle}), with $\lambda$ a real number. For $\lambda=1$ one finds the classical \emph{stuffle product} (see Hoffman \cite{Ho00}), also called sticky shuffle or quasi-shuffle by various authors. Notice that for the $\lambda$-shuffles to be defined, the alphabet $\Omega$ must be a semigroup, which we assume to be commutative.
     
     MZVs can be seen as linear maps from some words to $\R$. They come in two \ty{flavours}: shuffle and stuffle. The shuffle MZVs are defined on words written in an alphabet 
     with two letters; I take $\{x,y\}$. The stuffle MZVs are defined on words written in the alphabet $\N^*$. Shuffle MZVs are iterated integrals while stuffle MZVs are iterated sums. Both of these maps are given in Definition \ref{def:MZVs}. They are also both defined on some words only, named convergent words (Definition \ref{def:conv_words}). The names ``shuffle'' and ``stuffle'' MZVs are justified by Theorem \ref{theo:alg_prop_MZVs_stuffle_shuffle} which precisely states that shuffle (resp.~stuffle) MZVs are algebra morphisms for the shuffle (resp. stuffle) product of words. These two versions of MZVs are linked by the branched binarisation map (Definition \ref{defn:bin_map_words}) according to Kontsevitch's relation (Theorem \ref{thm:shuffle_stuffle_words}). The last property of MZVs we are interested into is the famous \emph{Hoffman's regularisation relation}, stated in Theorem \ref{theo:Hoffman_reg_relations}.
     
     Our goal is to find a generalisation to rooted forests of MZVs that have all these properties: exist in two versions (iterated sums and iterated integrals), are algebra morphisms for generalisation to rooted forests of the shuffle and stuffle products of words, are related by a branched version of the binarisation map, and obey a generalisation of Hoffman's regularisation relation. My original motivation to study this problem was the observation that trees are sometimes easier to deal with than words.
     
     In order to tackle this task, we first introduce various classical elements of the combinatorics of rooted trees and forests. I will not quote them all here, but for clarity let me say that rooted forests are \cy{obtained by  concatenating} rooted trees, which are connected directed graphs. Our trees and forests are always rooted (thus we will sometimes drop this adjective) and non-planar. In other words, they are commutative. They are often decorated, in the sense that we have a decoration map from their vertices to some decoration set $\Omega$. An important structure is the \emph{branching map} $B_+$ which to a forest $F=T_1\cdots T_n$ associates the tree obtained by adding a new vertex (eventually decorated) to $F$ and linking it to the roots of all of the $T_i$s. Therefore this new vertex becomes the root of the tree $B_+(F)$.
     
     This grafting operation is important because it endows rooted forests with an operated structure (see Definition \ref{def:op_structures}, a notion introduced in \cite{Guo07}). Theorem \ref{thm:univ_prop_tree} is then a classical result, namely that rooted forests, together with the \cy{grafting} map, are the initial object in the category of operated algebras. Roughly speaking, an operated algebra is an algebra $A$ together with a family of maps from $A$ to $A$ indexed by a set (see \cite{Guo07}). This universal result is due to \cite{KP13,Guo07} and is the result we will use to build our generalisations of MZVs. \cy{These generalisations will be built using the branching of maps (Definition \ref{defn:phi_hat}) as well as the} lifting of maps (Definition \ref{defn:phi_sharp}).
     
     The next combinatorial structure that we introduce is less classical: it is the shuffle and $\lambda$-shuffle of rooted forests (Definition \ref{defn:shuffle_tree}). We observe (Proposition \ref{prop:shuffles_trees}) that these products are unital, commutative but not associative. 
     
     Words can be endowed with various structures of operated algebras, one for each choice of a $\lambda$-shuffle product, thus one for each choice of $\lambda\in\R$. The universal property of the rooted forests then implies the existence of morphisms of operated algebras from forests to words, one for each weight $\lambda\in\R$. We call these morphisms \emph{flattening maps} (Definition \ref{defn:flattening_maps}). These flattening maps give us a way to go back to words and we take this opportunity to define the words versions of the lifted and branched maps (Definition \ref{def:branched_lifted_words}). The last structure we need to introduce is rather classical:  \emph{Rota-Baxter operators} (Definition \ref{def:RB_maps}). 
     
     We can then state and prove the first main result of this second chapter. It is from \cite[Theorem 2.13]{CGPZ1} and \cite[Theorems 2.20 and 5.8]{Cl20}. It can be stated as follows.
     \begin{theo*} (Theorem \ref{thm:flattening})
      Let $A$ be a commutative algebra and $P\in\End(A)$ a linear map. Then the following are equivalent:
       \begin{enumerate}
  \item $P$ is a Rota-Baxter map of weight $\lambda$.
  \item The branching of $P$ factorises through words via the flattening map of weight $\lambda$.
  \item The words version of the branching of $P$ is an algebra morphism for the $\lambda$-shuffle product.
  \item The forests version of the branching of $P$ is an algebra morphism for the $\lambda$-shuffle product of rooted forests.
 \end{enumerate}
     \end{theo*}
     This result will be the cornerstone of our construction to generalisations of MZVs and their study, which are the next topics of the second chapter.\\
     
     The previous theorem will be used in analytic spaces of functions, in particular on \emph{log-polyhomogeneous symbols} which we introduce in Definition
     \ref{def:log_poly_hom}. We also state without proofs some useful properties of these objects in Propositions \ref{prop:translation} and \ref{prop:bifiltration}. To control the behaviour of log-polyhomogeneous symbols we use the \emph{Euler-MacLaurin formula} with which the symbols behave in the way given by Proposition \ref{prop:prop_P}.
     
     At this point we can at last use our method of branching and lifting of maps to log-polyhomogeneous symbols to define our first generalisation of MZVs to forests. The crucial point is that we can control the asymptotic behaviour of these branched maps rather \ty{simply}: this is done in Proposition \ref{prop:order_calz}. In particular we can identify the forests on which these branched maps have a behaviour at infinity mild enough so that we can take the suitable limit. This motivates our definition of \emph{convergent forests} (Definition \ref{def:convergent_forests}) which generalise the \ty{aforementioned} convergent words. \emph{Arborified Zeta Values} (AZVs) are defined on convergent forests as these limits in Definition-Proposition \ref{defnprop:arborified_zeta}. From their definition we directly have that AZVs define an algebra morphism for the concatenation product of forests (Proposition \ref{prop:stuffle_alg_mor}).
     
     The same construction can be applied to words and allows us to give new proofs to classical results of the theory of MZVs. In particular, we can prove that stuffle MZVs are defined on convergent words (Definition-Proposition \ref{defnprop:zeta_stuffle}) and that they form algebra morphisms for the stuffle products (Proposition \ref{prop:zeta_stuffle_mor}). But our main result concerns AZVs: it is half of the second main result of this second chapter. We state this result in full here although the shuffle aspects will only make sense in the next paragraph.
     \begin{theo*} (Theorems \ref{thm:main_result_stuffle} and \ref{thm:main_result_shuffle})
      The stuffle (resp. starred stuffle, shuffle) AZVs can be written as finite linear combinations of stuffle (resp. starred stuffle, shuffle) MZVs with rational coefficients. These expressions are given by the 1-flattening (resp. -1-flattening, 0-flattening) of forests.
      
      \ty{Furthermore} stuffle (resp. starred stuffle, shuffle) AZVs form algebra morphisms for the stuffle (resp. anti-stuffle, shuffle) products of rooted forests.
     \end{theo*}
     The next section of this second chapter is dedicated to building shuffle AZVs and proving the shuffle part of this theorem. Shuffle AZVs are constructed in a slightly different way: through Chen \ty{integrals} (Proposition-Definition \ref{defnprop:Chen_int}). We define them as a map which can then be arborified (Definition \ref{defn:arborified_chen_int}). Then our first main result gives almost for free that arborified Chen integrals are linear combination of usual Chen integrals with \ty{integer} coefficients (Proposition \ref{prop:chen_int_arbo_words}).
     
     Using known properties of single variable multiple polylogarithms (Definition \ref{defn:multiple_polylogs}) we are able to define single variable arborified multiple polylogarithms in Definition-Proposition \ref{defnprop:arbo_polylogs}. Their properties are proved in Theorem \ref{thm:arborified_polylogs} and we are also able to give a new proof of a classical property of the usual multiple polylogarithms (Proposition \ref{prop:polylogs_shuffle_mor}).
     
     In the same way that MZVs can be \ty{defined} as a limit of multiple polylogarithms, we define shuffle AZVs as a limit of arborified multiple polylogarithms in Definition \ref{defn:shuffle_AZVs}. The shuffle part of the second main result of this chapter, which has already been written, is then a direct consequence of the \ty{aforementioned} properties of arborified multiple polylogarithms. As before, we also obtain a new proof that shuffle MZVs form an algebra morphism for the shuffle product of words (Proposition \ref{prop:shuffle_zeta_alg_mor_shuffle}).
     
     At this point we have gone quite far in our quest of generalising MZVs to rooted forests. We have built
     shuffle at stuffle AZVs which each generalise \ty{their respective} version of MZVs. We have shown that these AZVs build algebra morphisms for non-associative generalisations of the shuffle and stuffle products. We have also been able to show in a rather constructive way that AZVs are linear combinations of MZVs with rational coefficients. The only things missing are Kontsevitch's relation and Hoffman's regularisation relations.
     
     To tackle this next point we endow the algebra of forests decorated by the set $\{x,y\}$ with a new structure of operated algebra. This gives a natural and purely algebraic generalisation of the binarisation map, constructed in Definition \ref{defn:branched_bin_map}. However we show in Theorem \ref{thm:relation_shuffle_stuffle} that the generalisation of Kontsevitch's relation does not hold with this branched binarisation map. However we still try to look for a generalisation of Hoffman's regularisation relation. It turns out that we have two legitimate candidates for this, but Proposition  \ref{prop:branched_Hoffman} invalidates one, and while Proposition \ref{prop:conv_Hoffman_trees} gives us hope for the second one, one quickly finds a counter-example (e.g. Equation \eqref{eq:counter_example_Hoffman}). The situation of our \ty{attempt} is summarised in Table \ref{table:MZV_AZV}.\\
     
     This situation, admittedly rather bleak, is improved with our next result. It essentially states that shuffle AZVs do admit a series representation given by the branched binarisation map, just not the series defining stuffle AZVs. This is the third important result of this second chapter upon which most of the rest of the chapter is built.
     \begin{theo*} (Theorem \ref{thm:integral_sum})
      Shuffle AZVs can be written as multiple series given by the branched binarisation map.
     \end{theo*}
     This result suggests \ty{we} first study these multiple series. They are another generalisation to rooted forests of stuffle MZVs and we call them \emph{Tree Zeta Values} (TZVs) in Definition \ref{defn:tree_zeta_values}. From the previous theorem as well as from what we already proved for AZVs we \ty{straightforwardly} derive some properties of TZVs. These are stated in Proposition \ref{prop:crucial} and Theorem \ref{thm:tree_zeta_MZVs}.
     
     To derive further properties of TZVs, we define (Definition \ref{defn:yew}) a new product, the upsilon product, which is commutative and unital but not associative (Proposition \ref{prop:properties_yew}). Theorem \ref{thm:yew_formula} gives an inductive formula for the upsilon product. We further show that it stabilises convergent forests (Proposition \ref{prop:yew_stabilises_conv_forests}) and that TZVs form an algebra morphism for this product (Theorem \ref{thm:yew_TZVs}). We also show that the upsilon product is associative on ladder trees, a fact that allows us to define a flattening map associated to the upsilon product (Definition \ref{defn:flattening_yew}). We relate this flattening with a previous flattening and our binarisation maps in Proposition \ref{thm:comm_fl_fl_yew}. Putting all these facts together we obtain in Corollary \ref{coro:TZVs_stuffle_MZVs} another explicit way to express TZVs in terms of MZVs through the upsilon-flattening.
     
     The next section is dedicated to applications of our results to other generalisations of MZVs. The first of these applications is to \emph{Mordell-Tornheim zetas} (Definition \ref{defn:MT}). We obtain with Proposition \ref{prop:MT_s1_0} a formula to express some Mordell-Tornheim zetas but our main result regarding these object is Theorem \ref{thm:MT_tree} which gives new proofs for their convergence domain and number-theoretical contents. We also manage to obtain a formula for Mordell-Tornheim zetas that seems to be new; namely Equation \eqref{eq:expression_MT}.
     
     However our main applications of our results on AZVs and TZVs regard \emph{Conical Zeta Values} (CZVs). These appear in particular in the amplitude of some superstring theories and are introduced in this text in Definition \ref{defn:useful_stuff_cone}. We are interested in these objects because TZVs are CZVs. So characterising which CZVs are TZVs allows us to obtain general results for these objects, such as formula to express them in terms of MZVs. 
     
     For this purpose we define tree-like cones and their convergent counterparts in Definition \ref{prop:forest_cones}. We then directly have a relation for CZVs on tree-like cones and TZVs, given in Proposition \ref{prop:forest_cones}. This in turn provides a formula (Theorem \ref{thm:tree_CZVs_MZVs}) for CZVs on tree-like cones. Thus it is important for us to be able to determine which cones are tree-like. This is done in Theorem \ref{thm:carac_tree_cone} which requires a few definitions to be stated and proved, definitions that we skip here. The last subsection, Subsection \ref{subsec:computations_CZVs} contains various examples of computations of CZVs using the methods \ty{developed} through this chapter.\\
     
     As in the first one, the second chapter ends with a section where open questions are listed. There are natural questions of enumerative combinatorics (for example, the dimension of the associator of the shuffle of rooted forests) which have not yet been looked at. Another line of research that I am currently exploring with my PhD student Douglas Modesto is to look for algebraic structures (e.g. non associative Hopf algebras) related to the shuffle and stuffle products of rooted forests. I have also started to look with collaborators Lucile Sautot and Ludovic Journaux at possible applications of these products to data science.
     
     However, the question I would most like to tackle is the initial one I asked. We are still missing a fully satisfactory generalisation of MZVs to rooted forests. At this moment, what seems to be missing is a generalisation of the stuffle product to rooted forests such that TZVs are an algebra morphism for this product. I list various \ty{unsuccessful} attempts I made at this question, and \ty{offers} two lines of research that still seem rather promising to me. The first one is to exploit the fact that TZVs are CZVs and that these latter objects are coalgebra morphisms for cone decomposition and algebra morphisms for the Minkowski product of cones. The goal would be to find a way to decompose (non-trivially) any Minkowski product of tree-like cones in tree-like cones. The second line of research (which we started to look at with Douglas Modesto and Pierre Catoire) is to work in the category of tridendriform algebras.
     
     \subsection*{Chapter 3: Locality Structures and multivariate renormalisation}
     
     The third chapter is based on \cite{CGPZ1} and \cite{CGPZ2} written with Li Guo, Sylvie Paycha and Bin Zhang. Some ideas from \cite{CFLP22} (written with Lo\"ic Foissy, Diego Lopez Valencia and Sylvie Paycha) are also present in this chapter. Our motivation to introduce locality structures was to have a mathematical framework suited to build a multivariate renormalisation scheme. This was fully achieved and we have applied the scheme to various objects, one of which is presented below. However, there are also open questions regarding locality structures.\\
     
     Locality is of course a concept stemming from Physics. It can be summed up as the idea that events that are far apart in some sense cannot influence each other in short time. With coauthors Li Guo, Sylvie Paycha and Bin Zhang we \ty{developed} a mathematical framework that is suited to encode this concept: the framework of \emph{locality structures}.
     
     The simplest of these structures are locality sets (Definition \ref{defn:independence}). A locality set is a set $X$ together with a symmetric relation $\top\subseteq X\times X$ dubbed \emph{independence relation}. If $(x,y)\in\top$, we say that $x$ and $y$ are \emph{independent}. \cy{Then for $Y\subseteq X$, $Y$ inherits a locality relation from $X$ and the \emph{polar set} $Y^\top$ of $Y$ is defined as the set of elements of $X$ independent to every elements of $Y$:
%      \begin{equation*}
      $Y^T:=\{x\in X|\,\forall y\in Y,\,x\top y\}.$
%      \end{equation*}
} 
     
     We then define the notion of locality vector space (Definition \ref{defn:lvs}) for which we require the independence relation to respect the linear structure in \cy{the following sens: for any subset $X\subseteq V$, its polar set $X^\top$ must be a vector subspace of $V$}. Together with the notion of locality linear map (Definition \ref{defn:locallmap}) we obtain the category of locality vector spaces in which we can already state some results such as Proposition \ref{lem:locallinearmap}.
     
     Some of the most interesting locality structures are the ones with a product. Locality semigroups, monoids and groups are defined in Definition \ref{defn:lsg}. The basic idea is that we have only partial products, which are defined only on pairs of independent elements. Locality algebras (Definition \ref{defn:localisedalgebra}) are one of the structures we will use the most. We also introduce locality Rota-Baxter operators (Definition \ref{defn:lrba}) on locality algebras and show in Proposition \ref{prop:multpi} that some classical results on Rota-Baxter operators stay true in the locality setup. We give an important analytical example of locality vector space which is also a locality algebra with important locality subalgebras in Propositions \ref{prop:mult_germs_loc_alg} and \ref{pp:merodecomp}. This example is on germs of multivariate meromorphic functions with linear poles at the origin and rational coefficients which are introduced in Definition \ref{defn:mero_germs_lin_rational}.
     
     Our next goal is to define the locality versions of tensor products, a task that turns out to be surprisingly subtle. We first define the final locality relation on a set (Definition \ref{defn:final_relation}), a terminology we borrow from topology. We provide in Proposition \ref{prop:description_final_relation} a description of this final locality relation in a simple case. A special case of a final locality relation is the quotient locality on quotients of locality vector spaces which we introduce in Definition \ref{defn:quotientlocality}. A curious but important observation is that the quotient of two locality vector spaces is not always a locality vector space for the quotient locality relation.
     
     In any case we define locality tensor products as quotients of locality vector spaces in Definition \ref{defn:loctensprod1}. Higher locality cartesian and tensor products are also defined (Definition  \ref{defn:loctensalgebra}). These have many useful properties that were shown in \cite{CFLP22} but these are beyond the scope of this chapter.\\
     
     Since our original motivation is to build a multivariate renormalisation scheme, for which locality structures are adapted, we need a locality version of the Birkhoff-Hopf factorisation. Thus we need the locality versions of coalgebraic structures. This is done first in Definition \ref{defn:colocalcoproduct} where locality coproducts are introduced. We then define successively locality bialgebras (Definition \ref{defn:locbialgebra}), the convolution product of locality linear maps (Definition \ref{def:conv_prod_loc}) and locality Hopf algebras (Definition \ref{defn:LHopf}). Then, as in the usual non-locality case, the antipode of a locality Hopf algebra can be written in terms of its unit and counit (Proposition \ref{prop:localisedantipode}). Our next step is Theorem \ref{thm:loc_conv_prod} which states that the usual properties of convolution of linear maps stay true in the locality framework (stability, group property and existence of an inverse).
     
     With these tools at our disposal we can prove the main result of this section, namely the locality version of the Birkhoff-Hopf factorisation.
     \begin{theo*} (Theorem \ref{thm:abflhopf})
      Let $(H,\top_H)$ be a locality Hopf algebra, $(A,\top_A)$ be a locality algebra and $\phi:(H,\top_H)\longrightarrow(A,\top_A)$ be a locality algebra morphism. Then $\phi$ admits a unique locality Birkhoff-Hopf factorisation $\phi= \phi_1^{\star (-1)} \star \phi_2$ with $\phi_i: H\to \K+A_i$ locality algebra homomorphisms.
      
      Furthermore, if $A_1$ is a locality subalgebra and $A_2$ is a locality ideal, then we have $\phi_1^{\star(-1)}=\pi_1 \phi$, the projection of $\phi$ onto $A_1$ along $A_2$.
     \end{theo*}
     The \ty{striking} point of this result is the last sentence. Under some technical assumptions that hold true in our case of interest (multivariate meromorphic germs with linear poles), the renormalised values of $\phi$ are simply given the minimal \ty{subtraction} scheme! This can be understood as a payback for using multivariate renormalisation. At the price of working with more technical objects (namely multivariate instead of single variable germs) we keep track of the locality properties we wish to preserve through renormalisation. Since locality is preserved in some sense at each step of the process, we do not need to re-\ty{enforce} it at the very end. Thus the Birkhoff-Hopf factorisation reduces to a minimal \ty{subtraction}.
     
     The multivariate renormalisation scheme is a \ty{straightforwardly} consequence of the above theorem. The next section is dedicated to one application of this scheme. We first need to introduce the locality versions of the operated structures that were already discussed in the second chapter (Definitions \ref{defn:loc_op_set}, \ref{defn:basedlocsg} and \ref{defn:morphoplocstr}). We then recall the definition of the Connes-Kreimer coproduct of rooted forests (Definition \ref{def:CK_coproducts}) and finally define a natural locality relation on forests decorated by a locality set (Definition \ref{def:loc_rooted_forests}). Theorem \ref{thm:loc_Hopf_prop_forests} then states that the so-called properly decorated forests (which are introduced in Definition \ref{def:prop_dec_forests}) have the structure of a locality Hopf algebra.
     
     We then move on to show in Proposition \ref{prop:operatedalgebra} that these properly decorated forests, together with a restriction of the grafting operator, have the structure of a locality operated algebra. This structure is not random: we show in Theorem \ref{thm:univ_prop_trees_loc} that the universal property of rooted forests in the category of operated algebras still holds in the locality framework, when one replaces forests by properly decorated forests. At this point we have all the algebraic tools we need to define and characterise the multivariate renormalisation of Kreimer's toy model. After some analytical work, this is achieved in Definition-Theorem \ref{defthm:reg_ren_maps}.\\
     
     As in the previous chapters, this one ends with some open questions. The main one is a conjecture that I have hinted at in this summary. It has to do with the fact that the quotient of locality vector space is not always a locality vector space. But the locality tensor product is defined as a quotient of locality space, so is it a locality vector space? The conjecture states that it is.
     \begin{conj*} 
        Let $(E,\top)$ be any locality vector space, $V$ and $W$ any two locality \ty{vector subspaces} of $E$. Then $V\otimes_\top W$ is a locality vector space for the quotient locality relation.
     \end{conj*}
     We have given evidence for this conjecture in \cite{CFLP22}. First it holds in many cases of interests, namely locality vector spaces  with a locality basis and on Hilbert spaces where the locality is given by orthogonality. Furthemore, our work to find a counterexample was unsuccessful and actually suggested more precise conjectures. A possible approach to this conjecture would be to build a (locality) lattice from a locality vector space and show that building a counterexample would be equivalent to building a path with absurd properties in this lattice.
     
     There are many other open questions in the theory of locality structures. One is whether one can endow the category of locality vector spaces with a monoidal structure? This seems far more ambitious than the above conjecture but might nonetheless be relevant. It also suggests that locality categories are not yet very well understood. For example, we also have no notion of a locality dual. Another open question is the classification of locality Lie groups. We have shown that it would be stricly richer than usual non-locality Lie groups.
     
     \subsection*{Chapter 4: Resurgence in Quantum Field Theory}
     
     This last chapter is based on \cite{Cl21} and \cite{BC19}, the latter with Marc Bellon. It uses results from previous papers \cite{BC14,BC15,BC18} (with Marc Bellon) but also \cite{BS13,Be10} that I did not authored. The purpose of this chapter is to prove the summability à la \'Ecalle of the solution of a (truncated) Schwinger-Dyson Equation (SDE) and a Renormalisation Group Equation (RGE) and to argue that some related technics could be of use for asymptotically free QFTs and/or to provide a non-perturbative mass generation mechanism.\\
     
     After a short discussion of why resurgence theory is  relevant for Physics and in particular for QFT, this chapter introduces some basics concepts needed for \'Ecalle resurgence theory. Notice that we present only a small piece of resurgence theory \cite{Ecalle81,Ecalle81b,Ecalle81c}. First we present the (formal) Borel transform in Definition \ref{def:formal_Borel_tsfm} and state some of its properties in Proposition \ref{prop:properties_formal_Borel_tsfm}. We then give the notion of 1-Gevrey series in Definition \ref{def:Gevrey}, which are precisely the formal series whose Borel transform is convergent, as stated in Theorem \ref{thm:Gevrey_give_conv_Borel}. The Laplace transform is introduced in Definition \ref{def:Laplace_tsfm}. It is the last piece of the famous \emph{Borel-Laplace resummation method} (Definition \ref{defn:Borel_Laplace}). We state in Theorem \ref{thm:analyticity_domain_Borel_resum} the analyticity domain of Borel-Laplace resummed functions, which nicely avoid Dyson's argument for the summability of perturbative series in QFT. However, one can check that in practice Borel-Ecalle resummation method is not enough to resum most divergent series of QFT, due to the presence of singularities of the Borel transform in the direction where one wishes to perform the Laplace integrals. \'Ecalle's resurgence theory offers a (literal) workaround for this issue.
     
     We then turn our attention to resurgence theory itself, and more specifically to its aspects that allow to generalise the Borel-Laplace resummation method. Resurgent functions are introduced in Definition \ref{defn:resum_function} and their stability under the convolution product is stated in Theorem \ref{thm:stability_resu_fct}. We state one more result of resurgent analysis, Theorem \ref{thm:bound_resu_Sauzin}, which provides a bound on convolutions of resurgent functions.
     
     Another crucial piece of \'Ecalle's generalisation of the Borel-Laplace resummation method is the notion of well-behaved averages. For this we first need to work in the ramified plane (Definition \ref{def:ramified_plane}). We then define averages (Definition \ref{def:averages}) and the most important ones for our purpose: those that are well-behaved (Definition \ref{defn:well_behaved_averages}). With this last piece we are able to define the \emph{Borel-\'Ecalle resummation method} which we give in the form of Theorem \ref{thm:Borel_Ecalle_resummation}. \\
     
     Our goal is to give an example of application of this method to a QFT model. Thus the next section is dedicated to presenting this model and the equations we want to tackle. The model is an exactly supersymmetric one, called Wess-Zumino. We aim at building an analytic solution to the system formed of a truncation of the Schwinger-Dyson Equation (SDE) (Equation \eqref{eq:SDnlin}) and the Renormalisation Group Equation (Equation \eqref{eq:RGE}). We first rewrite equations in more tractable forms: Equations \eqref{eq:recursion_gamma} and \eqref{SDE}. These two equations are then Borel-transformed. 
     
     The starting point of our analysis is the asymptotic behavior of the coefficients of the anomalous dimension of the theory. This asymptotic behaviour is given in Equation \eqref{eq:asymp_behavior_cn}. We also take for granted a fact that was ``proven'' in the physicists's sense of the term, namely that the Borel transformed anomalous dimension of the Wess-Zumino model is resurgent (Claim \ref{thm:resurgence_gamma}). We do not attempt to rigorously prove this as it would probably require some sophisticated methods (e.g. resurgence monomials) which would in turn offer a completely different approach to the problem we are tackling.
     
     We then turn our attention to the analysis of the RGE. We first give in Proposition \ref{prop:solution_RGE} a solution of this equation. This solution is then shown in Proposition \ref{prop:G_one_Gevrey} to be 1-Gevrey. This implies that the Borel transform of the two-point function is convergent near the origin. It is more difficult, but still doable, to show that it is actually resurgent. This is done in Theorem \ref{thm:resurgence_two_points_function} after some preliminary work.
     
     The last step is to show that this Borel transform admits suitable analytical bounds at infinity. We first observe in Subsection \ref{subsect:need_SDE} that this seems to be doable only when one uses the SDE on top of the RGE. This turns out to be rather delicate and requires a series of bounds that we cannot easily summarise. Eventually, Theorem \ref{thm:bound_two_point_infinity} gives us the required bound. This, together with previous result, directly implies the main result of this section.
\begin{theo*} (Corollary \ref{cor:main_result})
 The solution of the renormalisation group equation and the Schwinger-Dyson equation is Borel-\'Ecalle resummable. For any $L$ in $\R^*_+$, the resummed function $a\to G^{\rm res}(a,L)$ is analytic in the open subset of $\C$ defined by
 \begin{equation*}
  \left|a-\frac{1}{20L}\right| < \frac{1}{20L}.
 \end{equation*}
%  for any $L$ in $\R^*_+$.
\end{theo*}
We finish this section by a short discussion where we observe that the analyticity domain of the resummed function escapes Dyson's argument. Furthermore, let us assume that the asymptotic bound for the Borel transform of the two-point function is saturated; meaning that the bound we find is an equivalence of functions. Then this provides a nice non-perturbative mass generation mechanism, and that the mass gap of the theory is given by the asymptotic behaviour of the two-point function. \\

Of course one of the holy \ty{grails} of theoretical Physics is to explain the mass gap of Quantum Chromodynamics (QCD). The last observation suggests that resurgence theory is a contender for partially achieving this goal. Thus the last section of this thesis aims at presenting how (generalisations of) the methods presented above could work for asymptotically free theories. This section is rather speculative since we do not work out one specific theory.

First we recall that 't Hooft argued that the two-point function of an asymptotically free QFT can only be analytic in a horn-shaped domain, which we illustrate in Figure \ref{fig:an_dom_tHooft}. We then present another aspect of \'Ecalle's theory, namely acceleration. The idea is that a Borel transform might have an asymptotic behaviour too divergent to allow the Laplace transform. \'Ecalle then argues that one can perform an \emph{acceleration}, in practice changing variable in the Borel plane in a specific way (See Equation \eqref{def_acc}). We then show that the simplest of such acceleration gives rise to a resummed function that still solves the equations we started from (this is \'Ecalle's result) and is analytic in a domain illustrated in Figure \ref{fig:an_dom_acc} which coincides with the analyliticity domain predicted by 't Hooft!

This allows us to make a reasonable conjecture, namely that the two-point function of \ty{asymptotically} free QFT should not be summable à la Borel-\'Ecalle, but rather accelero-summable:
\begin{conj*} (Conjecture  \ref{conj:asympt_bound})
 For an asymptotically free QFT it exists
 for each values of the \ty{kinematic} parameter $\Lambda$ a number $\sigma_\Lambda$ such that the formal series $G(\Lambda,a)$ is accelero-summable with acceleratrix $F(y)=\frac{1}{\sigma_\Lambda}\log(y)$.
\end{conj*}
Furthermore, the numbers $\sigma_\Lambda$ should obey bounds depending on the first coefficient of the beta function of the theory.

We end this chapter with some other open questions that are not formulated as conjectures. The first one concerns the non-\ty{perturbative} mass generation mechanism that was \ty{mentioned} above. Does this mechanism fits within the accelero-summation framework? Another question, which is of importance to the physicists's approach of resurgence theory is whether or not one could find a Sokal-Watson's theorem for Borel-\'Ecalle (accelero-)summation. Finally, \'Ecalle's resummation method demands to choose a well-behaved average. It is still unclear how the resummed function depends on this choice.

% \section*{Directions de recherche}
% 
% \addcontentsline{toc}{section}{Directions de recherche}
% 
% Potentiellement titre à changer. Basé sur le texte à la con que l'HDR m'avait demandé de faire, un peu nettoyé et etoffé.
    
    \section*{Notations and list of symbols}
    
    \addcontentsline{toc}{section}{Notations and list of symbols}
    
    I use $\N:=\Z_{\geq0}$ and $\N^*:=\Z_{\geq1}$. Furthermore, for any $n\in\N$, I set $[n]:=\{1,\cdots,n\}$. I also set $[0]:=\emptyset$. \\
    ~\\
    I also allow myself to use the ugly but practical notation $\R_{\geq x}:=\{y\in\R|y\geq x\}$ (where of course $\R$ can be replaced by other sets so that the notation makes sense, and $\geq$ can be replaced by $\leq$, $>$ or $<$). I also use $\R_+:=\R_{\geq0}$. Furthermore, as for $\Z$, I use the star $*$ to exclude the zero, for example $\R^*:=\R\setminus\{0\}$, $\R^*_+:=\R_+\setminus\{0\}$. \\ 
    ~\\
    This star should not be confused with the other star $\star$ that I use for the convolution products (in bialgebras or of resurgent functions). These two stars are used very differently and I expect no ambiguity from them. \\
    ~\\
    In order to enhance readability, I sometimes (improperly) write $k,l\in\N$ instead of $(k,l)\in\N^2$. In particular I use this short hand notation for the cases where $k$ and $l$ are subscripts. \\
    ~\\
    Below is a list of symbols used for various spaces and maps through the thesis.\\
    ~\\
%     \vspace{5cm}
%     ~\\
%     XXXXXX add resrgent spaces!!!\\
    \begin{tabular}{|c|c|}
\hline
    Symbol & Space or map \\ \hline \hline
    $\PGr$ & TRAP of corolla oriented generalised graphs. \\ \hline 
    $\PGr(X)$ & TRAP of corolla oriented generalised graphs decorated by $X$. \\ \hline
    $CS^\alpha$ & Classical (or polyhomogeneous) symbols of weight $\alpha\in\R$. \\ \hline
    $\C\dsetminus\Omega$ & $\Omega$-ramified plane (homotopy classes of rectifiable paths avoiding $\Omega$). \\ \hline
    $\calF$ & $\R$-vector space freely generated by rooted forests. \\ \hline
    $\calF_\Omega$ & $\R$-vector space freely generated by rooted forests decorated by the set $\Omega$. \\ \hline
    $\Gr$ & PROP of generalised graphs. \\ \hline
    $\Gri$ & Set of indecomposable generalised graphs. \\ \hline
    $\Hom_V$ & TRAP or PROP of linear maps between tensor powers of \\
    & a finite dimensional vector space $V$. \\ \hline
    $\Hom_V^c$ & TRAP or PROP of continuous linear maps between tensor powers of \\
    & a nuclear \ty{Fr\'echet} space $V$. \\ \hline
    $\Hom_V^{fr}$ & TRAP of linear maps of finite rank between tensor powers of a vector space $V$. \\ \hline
    $\mathcal{K}_{M}^\infty$ & TRAP of generalised smooth kernels on \\
    & the smooth finite dimensional orientable closed manifold $M$. \\ \hline
    $\calM_\Q$ & Vector space of these meromorphic germs with linear poles at zero \\
    & and rational coefficients. \\ \hline
    $\calM_+$ & Vector space of holomorphic germs at zeros (with rational coefficients). \\ \hline 
    $\calM_-^Q$ & Vector space of polar germs w.r.t. the system $Q$ of scalar products.\\ \hline
    $MT(s_1,\cdots,s_n|s)$ & Mordell-Tornheim zeta values. \\ \hline
    $\calP^{\alpha,k}$ & Log-polyhomogeneous symbols of order $(\alpha,k)\in\R\times\N$. \\ \hline
    $\widehat{\mathcal{R}}_\Omega$ & Set of $\Omega$-resurgent functions. \\ \hline
    $\calS^r$ & Symbols of weight $r\in\R$. \\ \hline
    $\rPGr(X)$ & TRAP of solar corolla oriented generalised graphs decorated by $X$. \\ \hline 
    $\rGr(X)$ & TRAP of solar generalised graphs decorated by $X$. \\ \hline
    $\calT$ & $\R$-vector space freely generated by rooted trees. \\ \hline
    $\calT_\Omega$ & $\R$-vector space freely generated by rooted trees decorated by the set $\Omega$. \\ \hline
    $\mathcal{U}_\Omega$ & Complex domain obtained from  radial cut starting from the first singularity of $\Omega$. \\ \hline
    $\widehat{\mathcal{U}}_\Omega$ & Set of uniform functions on $\C\dsetminus\Omega$. \\ \hline
    $W_\Omega$ & Set of words written in the alphabet $\Omega$. \\ \hline
    $\calW_\Omega$ & $\R$-vector space freely generated by words written in the alphabet $\Omega$. \\ \hline
    $z^{-1}\C[[z^{-1}]]_1$ & Set of 1-Gevrey formal series. \\ \hline
    $\zeta$ & Conical zeta values (multiple series). \\ \hline
    $\zeta_\shuffle$ & Shuffle MZVs (iterated integrals). \\ \hline
    $\zeta_\stuffle$ & Stuffle MZVs (iterated series). \\ \hline
%     $\zeta_\shuffle$ & Shuffle MZVs (iterated integrals). \\ \hline
    $\zeta^\star_\stuffle$ & Starred stuffle MZVs (iterated series with non-strict inequalities). \\ \hline
    $\zeta_\shuffle^T$ & Shuffle AZVs (iterated integrals). \\ \hline
    $\zeta_\stuffle^T$ & Stuffle AZVs (iterated series). \\ \hline
    $\zeta^{T,\star}_\stuffle$ & Starred stuffle AZVs (iterated series with non-strict inequalities). \\ \hline
    $\zeta^t$ & Tree zeta values (multiple series). \\ \hline
\end{tabular}

% % % % %  COMMENTER CE QUI EST CI-DESSOUS.
% % 
%  \bibliographystyle{unsrt}
%  \addcontentsline{toc}{section}{Bibliography}
% \bibliography{HDR_biblio}
%  
% \end{document}

    \renewcommand{\chaptermark}[1]{%
\markboth{\MakeUppercase{%
\chaptername\ \thechapter.%
\ #1}}{}}
    
    % 
% \documentclass[11pt,twoside,a4paper]{book}
% % \documentclass[11pt,twoside,a4paper]{article}
% 
% \usepackage{HDR}
% 
% %%% SI ON LAISSE LES DEUX LIGNES SUIVANTES DANS LE STY, ELLES NE MARCHENT PAS... (probablement à cause de \input).
% \input{xy}
% \xyoption{all}
% 
% 
% 
% \begin{document}
% % % % 
% % % % % % COMMENTER CE QUI EST AVANT CA ET ENDDOCUMENT POUR COMPILER HDR.tex.

\chapter{PROPs and TRAPs for QFT} 
\label{chap:PROP}

I introduce structures that might help rigorously define Feynman rules for Quantum Field Theories.
Most of the results presented in this chapter were proven in \cite{ClFoPa20} and \cite{ClFoPa21}.

\section*{Introduction}

\addcontentsline{toc}{section}{Introduction}

\subsection*{Statement of the problem}

\addcontentsline{toc}{subsection}{Statement of the problem}

Feynman graphs, introduced in \cite{Fe49}, have somehow secured themselves into popular culture as well as into many many textbooks \cite{Peskin95,Weinberg95}. One can easily argue that they are the most well-known sight in the broad field of QFT. This is probably due to their unique graphical representation, apt to strike the imagination as carrying arcane meanings.

Is is not the point of this work to say that Feynman graphs are not mysterious. Quite the contrary: when, as a master student, I stubbled upon some Feynman graphs that needed to be computed, I was so unable to grasp how one is supposed to perform the computations that I decided to never compute any Feynman graphs. It is now more than ten years later and I have betrailed all I once hold dear and holy: I now wish to deal with them, although I still do not know how to evaluate them.

What I do know is that I don't want to simply apply Feynman rules. This is not only out of lazyness and bad memories. It is also because at some point in the last ten years, I have become a mathematician and I now do care if a computation has a meaning when I performe it. Therefore I first and foremost wish to understand Feynman rules, and prove that they are well-defined.

Before giving precisions about the question at hand, let me recall that for a QFT $\tau$, its lagrangian tells us which type of interactions can appear in its \ty{perturbative} series. To compute the terms of the perturbative series, Feynman devised in 1948 his \ty{aforementioned} {\bf Feynman graphs} (or Feynman diagrams). The {\bf Feynman rules} dictate which diagrams appear in a given theory and are a 
set of prescriptions that tell how to associate to each graph an analytical expression (typically an integral) that has to be evaluated. So formally speaking the Feynman rules of a QFT $\tau$ can be seen as a map 
\begin{equation*}
 F_\tau:\calG_\tau\longrightarrow X
\end{equation*}
with $\calG_\tau$ the set of Feynman graphs of the theory and $X$ some analytical space to be determined. In Section \ref{section:QFT} below we investigate what $X$ should be. It turns out that $X$ is a space of distribution valued meromorphic germs.

Phycisists notoriously do not care so much of the well-definess of their computations. This is not a judgement: they just have better things to do. So they do not try to rigourously define the map $F_\tau$ and actually do not necessarily think of it as a map. They instead apply it (or rather apply the ``rules'') to some graphs and observe that they end up in some analytical space.

This approach is extremely successful but quite frustrating for the mathematicians. In order to deal with this frustration I would like to \emph{define} the map $F_\tau$ and show that $F_\tau(G)$ lies in the right space $X$ for any Feynman \ty{graph} $G$. So this is a very analytical problem. It was actually solved in some cases and in particular recently for a QFT on a compact Riemannian manifold in \cite{DZ17}. In this paper the authors use a plethora of analytical methods to deal with the many challenges they encounter. The approach I wish to present here is quite different.

Indeed, I like to deal with analytical problem by finding a structure and a universal property that will solve the problem. So I wish to look for a category $\calC$ such that 
\begin{itemize}
 \item the space $X$ is an object in $\calC$,
 \item $\calC$ admits a free object that can be described in term of graphs.
\end{itemize}
I then wish to see the Feynman graphs as the free object of the category $\calC$. Their universal property will then give us the existence of essential properties of the map $F_\tau$, i.e. of the Feynman rules.

\subsection*{Content and main results}

\addcontentsline{toc}{subsection}{Content and main results}

In order to make the first steps of the program presented above, my collaborators Sylvie Paycha from Potsdam Universität and Loïc Foissy from Université du Littoral and myself started to peruse \ty{literature} looking for a category with the desired properties. We found out the PROPs (PROducts and Permutations) were good candidates. I will present PROPs in details below so won't say more about their definition here. The key point was a ``folklore theorem'', namely that graphs are a free PROP. This statement is usually stated in a very abstract categorical framework. For our purpose we needed to make it more pedestrian. We then set to prove it in the pedestrian approach. This is presented in the first three sections of this chapter.

In Section \ref{sec:PROP_def} I introduce PROPs, first in their categorical setting (Definition \ref{defn:PROP_cat}). I then work down this abstract definition to obtain a pedestrian definition of PROPs (Definition \ref{def:prop}).

This pedestrian approach allows me to give examples of PROPs in Section \ref{sec:PROP_ex}. In particular, I show in Theorem \ref{thm:Hom_V_generalised} that morphisms of tensor products of a nuclear Fréchet spaces can be endowed with a PROP structure. Since nuclear Fréchet spaces have similarities with the space $X$ where Feynman rules evaluate Feynman graphs, this is a step in the right direction. Another important example of a PROP is Theorem 
\ref{theo:ProP_graph} that (isoclasses of generalised) graphs have a PROP structure. These objects are introduced in Definitions \ref{def:graph} and \ref{def:morph_graphs}.

The first main result of this chapter in in Section \ref{sec:free_prop}. It is Theorem \ref{thm:freeness_Gr}, stating that the PROP of graphs introduced earlier is freely generated by indecomposable graphs. This result is a precise statement of the \ty{aforementioned} ``folklore theorem'' that graphs are a free PROP. However, as it is discussed after the sketch of proof in Subsection \ref{subsec:freePROP}, this result does not provide enough help to rigorously build Feynman rules.

\vspace{0.4cm}

This issues come from the existence of ``big loops'' in graphs that cannot be tamed by the PROP structure of graphs. Sylvie Paycha, Loïc Foissy and myself then started to look for another category that would be able to deal with these big loops. We came up with the notion of TRAPs (Traces and Permutations)\footnote{to our dismay, it was pointed out to us that a version of TRAPs had already \ty{been} introduced under the name of ``wheeled PROPs'', see Remark \ref{rk:wheeledPROP}.} which are the subject of the next four sections of this chapter. 

In Section \ref{sec:TRAP} I \ty{define} the category of TRAPs (Definitions \ref{def:Trap} and \ref{defn:trap_morphism}). Notice that only the pedestrian version of the definition is given. A categorical definition exists for unitary TRAPs but not, as far as I am aware, for non-unitary TRAPs.

Examples of TRAPs are presented in Section \ref{sec:TRAPs_ex}. In particular, Theorem \ref{thm:Hom_V_generalised} states that morphisms of tensor products of a nuclear Fréchet spaces admit a TRAP structure; and in Theorem \ref{theo:Kinfty} I obtain the same result for smoothing pseudo-differential operators.

The most important example of TRAP is build in Section \ref{sec:freeTRAP}: it is the TRAP of corolla oriented graphs (Definitions \ref{def:solar_graphs} and \ref{def:isoclassesTRAPgraphs}). Theorem \ref{thm:TRAPgraphs} is an important result: graphs also have a TRAP structure. But Section \ref{sec:freeTRAP} also contains the second main result of this chapter, namely Theorem \ref{thm:freetraps}. It states that the TRAP of graphs is actually freely generated by the decorations of the vertices of the graphs.

Section \ref{sec:applicationsTRAPs} is then devoted to the applications of these results. First I show in Proposition-Definition \ref{propverticalconcatenation} how TRAPs are related to PROPs. I then introduce in Definition \ref{defi:generalised_Traces} the notion of generalised traces for TRAPs and give some of their properties in Proposition \ref{prop:generalised_traces}. The most important of these applications for QFT is probably the notion of $P$-amplitude (Definition \ref{def:generalised-P-convolution}). Properties of $P$-amplitudes are given in Proposition \ref{prop:gen_conv_vert_conc}. These various constructions (vertical concatenation in TRAPs, generalised traces and amplitudes) are then applied in Theorem \ref{thm:generalised_convolution} to the TRAP of smoothing pseudo-differential operators already mentioned. 

\vspace{0.4cm}

In the final Section \ref{section:QFT} I \ty{attempt} to draw a plan on how to use the universal property of graphs in the category of TRAPs to define Feynman rules. I start by expanding the above discussion on Feynman rules to try to interpret some \ty{formulae} that can be found in excellent \ty{Physics} textbook \cite{Ri91}. This leads me to introduce various relevant analytical spaces, and in particular the space of distribution-valued meromorphic germs in Definition \ref{def:mero_germs_distrib}. I can then precisely formulate our goal of defining Feynman rules in Conjecture \ref{conjecture1}. I then conjecture that a relevant space of distributon-valued meromorphic germs carries a TRAP structure (Conjecture \ref{conj:TRAP_reg}). Then I precisely explain in Definition \ref{def:TRAP_Feynman_rules} how this Conjecture, if true, allows to rigorously define Feynamn rules and how it would solve our initial problem (Proposition \ref{prop:rapport_conj}). Along the way I discuss the various challenges one would have to overcome to fulfill this program and sometimes what methods could be successful to do so.

\section{The category of PROPs} \label{sec:PROP_def}

\subsection{Elements of category theory}

Category theory is not the main point of this thesis and as such I will not write down a full introduction to category theory. Not only it would be needlessly long and cumbersome to read, but also not essential since I will mostly use a pedestrian definition for PROPs. We refer the author to the many great introductions to category theory available in the \ty{literature}; for example \cite{ML71} from which the following definitions are borrowed.
\begin{defn}
 A \textbf{strict monoidal category} is a triple $(\calC,\otimes,e)$ with
 \begin{enumerate}
  \item $\calC$ a category.
  \item $\otimes$ a bifunctor $\otimes:\calC\times\calC\longrightarrow\calC$ which is associative:
  \begin{equation*} %\label{eq:asso_functor}
   \otimes\circ(\otimes\times \Id_\calC)=\otimes\circ(\Id_\calC\times\otimes)
  \end{equation*}
  (where, as in \cite{ML71}, we identify $\calC\times(\calC\times \calC)$ and $(\calC\times \calC)\times\calC$).
  \item $e\in Obj(\calC)$ is a unit for $\otimes$:
  \begin{equation*}
   \otimes(e\times A) = A = \otimes(A\times e)
  \end{equation*}
  for any $A\in Obj(\calC)$.
  \end{enumerate}
\end{defn}
\begin{rk}
%  This definition is actually the definition of \emph{strict} monoidal categories. Since I will always consider strict monoidal categories, I did not fill the need to write strict. This is in no way an original choice but is justified by the fact that general (i.e. non strict) monoidal categories require the definition of natural morphisms.
I define only \emph{strict} monoidal category since I will not need the more general notion of monoidal category, which would require the definition of natural morphisms. 

To be more precise, one should specify that we will actually work with non-strict monoidal categories. Very roughly speaking, for these objects the equal signs of the definition above should be replaced by natural isomorphisms as the associator and the unitors satisfying the well-known pentagon and triangle diagrams. However, MacLane's coherence theorem \cite{mac1963natural} for monoidal categories essentially guarantees that one can always “strictify” monoidal
categories which is a common practice among specialists. \cy{We will follow this practice here and refer the reader to \cite[XI.5]{kassel2012quantum} for a classical introduction to this topic, or \cite{becerra2023strictification} for a recent review.}
% think of the categories as being strict.
% 
% {\color{red} Peut-être à enlever pour la def de ``symmetric''.}
\end{rk}
Notice that the fact that $\otimes$ is a bifunctor implies in particular for any objects $A,B$ of $\calC$ we have
\begin{equation*}
 \Id_A\otimes\Id_B=\Id_{A\otimes B}.
\end{equation*}
Furthermore, for any objects $A$, $B$, $C$, $A'$, $B'$ and $C'$ of $\calC$ and any morphisms $f:A\longrightarrow B$, $g:B\longrightarrow C$, $\ty{f':}A\longrightarrow B'$ and $g':B'\longrightarrow C'$ of $\calC$ we have
\begin{equation} \label{eq:monoidal_prod_function} 
 \cy{(g\otimes g')\circ(f\otimes f')=(g\circ f)\otimes(g'\circ f').}
%  (f\otimes g)\circ(f'\otimes g') = (f\circ f')\otimes (g\circ g').
\end{equation}
\begin{example}
 The archetypal example of a (non-strict) monoidal is the category $\Vect_\K$ of vector spaces over a field $\K$, with linear maps as morphisms and the usual tensor product as monoidal product (hence the notation for general monoidal products). Of course, in practice, one does not think of $\Vect_\K$ as not being strict. For example, we know how with which isomorphisms we should identity $(A\otimes B)\otimes C$ and $A\otimes(B\otimes C)$.
\end{example}
We will need monoidal categories where $A\otimes B\simeq B\otimes A$.
\begin{defn} \label{defn:symm_mon_cat} (\cite{ML71})
 A \textbf{symmetric monoidal category} is a strict monoidal category with isomorphisms $\gamma_{A,B}:A\otimes B\longrightarrow B\otimes A$ such that:
 \begin{itemize}
  \item the $\gamma_{A,B}$ are natural, i.e. for any objects $A,A'$ and $B,B'$ and for any $f:A\longrightarrow A'$ and $g:B\longrightarrow B'$ we have
  \begin{figure}[h!] 
  		\begin{center}
  			\begin{tikzpicture}[->,>=stealth',shorten >=1pt,auto,node distance=3cm,thick]
  			\tikzstyle{arrow}=[->]
  			
  			\node (1) {$A\otimes B$};
  			\node (2) [right of=1] {$B\otimes A$};
  			\node (3) [below of=1] {$A'\otimes B'$};
  			\node (4) [right of=3] {$B'\otimes A'$};

  			\path
  			(1) edge node [above] {$\gamma_{A,B}$} (2)
  			(1) edge node [left] {$f\otimes g$} (3)
  			(3) edge node [below] {$\gamma_{A',B'}$} (4)
  			(2) edge node [right] {$g\otimes f$} (4);
  			\end{tikzpicture}
  		\end{center}
  	\end{figure}
%   	(i.e. the $\gamma$s are natural).
  \item for any objects $A$, $B$ and $C$, 
%   writing collectively all the isomorphisms $(A\otimes B)\otimes C\longrightarrow A\otimes(B\otimes C)$ as $\alpha$ 
  we have
  \begin{figure}[h!] 
  		\begin{center}
  			\begin{tikzpicture}[->,>=stealth',shorten >=1pt,auto,node distance=3cm,thick]
  			\tikzstyle{arrow}=[->]
  			
  			\node (1) {$A\otimes B\otimes C$};
%   			\node (2) [right of=1] {$A\otimes (B\otimes C)$};
  			\node (3) [right of=2] {$B\otimes C\otimes A$};
  			\node (4) [below of=1] {$B\otimes A\otimes C$};
%   			\node (5) [right of=4] {$B\otimes (A\otimes C)$};
%   			\node (6) [right of=5] {$B\otimes (C\otimes A)$};

  			\path
%   			(1) edge node [above] {$\alpha$} (2)
  			(1) edge node [above] {$\gamma_{A,B\otimes C}$} (3)
%   			(3) edge node [right] {$\alpha$} (6)
  			(1) edge node [left] {$\gamma_{A,B}\otimes \Id_C$} (4)
  			(3) edge node [right] {$~~\Id_B\otimes\gamma_{C,A}$} (4);
%   			(4) edge node [below] {$\alpha$} (5)
%   			(5) edge node [below] {$\Id_B\otimes\gamma_{A,C}$} (6);
  			\end{tikzpicture}
  		\end{center}
  	\end{figure}
  \item for any objects $A$ and $B$
  \begin{equation*}
   \cy{\gamma_{B,A}\circ}\gamma_{A,B}=\Id_{A\otimes B}.
  \end{equation*}
 \end{itemize}
\end{defn}
\begin{rk}
  For non-strict monoidal categories, the diagram above has to be enlarged with associators. Writing these various isomorphisms as $\alpha$ we have to require the extended diagram \ref{diag:non_strict_sym} below to commute.
    \begin{figure}[h!] \label{diag:non_strict_sym}
  		\begin{center}
  			\begin{tikzpicture}[->,>=stealth',shorten >=1pt,auto,node distance=3cm,thick]
  			\tikzstyle{arrow}=[->]
  			
  			\node (1) {$(A\otimes B)\otimes C$};
  			\node (2) [right of=1] {$A\otimes (B\otimes C)$};
  			\node (3) [right of=2] {$(B\otimes C)\otimes A$};
  			\node (4) [below of=1] {$(B\otimes A)\otimes C$};
  			\node (5) [right of=4] {$B\otimes (A\otimes C)$};
  			\node (6) [right of=5] {$B\otimes (C\otimes A)$};

  			\path
  			(1) edge node [above] {$\alpha$} (2)
  			(2) edge node [above] {$\gamma_{A,B\otimes C}$} (3)
  			(3) edge node [right] {$\alpha$} (6)
  			(1) edge node [left] {$\gamma_{A,B}\otimes \Id_C$} (4)
  			(4) edge node [below] {$\alpha$} (5)
  			(5) edge node [below] {$\Id_B\otimes\gamma_{A,C}$} (6);
  			\end{tikzpicture}
  		\end{center}
  	\end{figure} 
\end{rk}
Recall that a category is {\bf locally small} if for any objects $A$ and $B$ the morphisms between $A$ and $B$ form a set and not a proper class. 
% Then a locally small category $\calC$ is called enriched over a monoidal category $\calD$ if, roughly speaking, for any pair of objects $A$ $B$ of $\calC$, the sets of morphisms of $\calC$ between $A$ and $B$ is a object in $\calD$. Let us state a more rigorous definition of enriched categories.
\begin{defn} \label{defn:enriched_cat}
 let $\calD$ be a (strict) monoidal category. A locally small category $\calC$ is \textbf{enriched over $\calD$} if 
%  \begin{itemize}
%   \item 
  for any $(A,B)\in Obj(\calC)^2$, $\Hom(A,B)$ (the set of morphisms from $A$ to $B$) is a object of  $\calD$.
%   \item for any $A\in Obj(\calC)$ there is a morphism $I_A:e\longrightarrow \Hom(A,A)$ with $e$ the monoidal unit of $\calD$,
%   \item for any $(A,B,C)\in Obj(\calC)^3$ there is a map 
%   \begin{equation*}
%    \circ_{A,B,C}:\Hom(A,B)\otimes\Hom(B,C)\longrightarrow\Hom(A,C)
%   \end{equation*}
%   which is associative and unital.
%  \end{itemize}
\end{defn}
As for (strict) monoidal categories, the archetypal example of an enriched category is $\Vect_\K$, which is enriched over itself. Indeed the set of linear morphisms between any two vector spaces has the structure of a vector space.

Enriched categories seem to have been considered for the first time in \cite{ML65}, where PROPs were also introduced. Indeed enriched categories enter the categorical definition of PROPs, which we are now able to give. In the meantime, the definitive text on enriched categories seems to be \cite{kelly1982basic}.

\subsection{PROPs: categorical definition}

PROPs (PROducts and Permutations) seem to have appeared first in \cite{ML65}. They have been since a very active field of research and it is not within the scope of this thesis to review the full \ty{literature} on the topic. Let us just quote \cite{Markl} for a (fairly) recent review of the topic.

To relate them to a structure that is perhaps more familiar to the reader, one could present PROPs as a generalisation of operads and co-operads. While operads can be seen as dealing with operations with several inputs and one outputs, and co-operads with operations with several outputs and one inputs, PROPs allow to treat operations with several inputs \emph{and} outputs. On top of \cite{Markl}, a classical reference to operads is \cite{loday2012algebraic}. \cite{bremner2016algebraic} is another, more recent, introduction on operads which emphazises computational aspects.

Let us now dive in and give a compact definition of PROPs, taken from \cite{AT-L19} and also from \cite{Markl}.
\begin{defn} \label{defn:PROP_cat}
 A \textbf{PROP} (PROducts and Permutations) is a strict symmetric monoidal  category enriched over $\Vect_\K$ whose objects are indexed by natural number $\{[n]\}$ and whose monoidal product is given by the addition:
 \begin{equation} \label{eq:monoidal_prod_add}
  [n]\otimes[m]=[n+m].
 \end{equation}
\end{defn}
We can also state a categorical definition of morphisms of PROP, which we quote from \cite{AT-L19}.
\begin{defn} 
 Let $P$ and $Q$ be two PROPs. A \textbf{morphism of PROPs} between $P$ and $Q$ is a strict monoidal functor $F:P\longrightarrow Q$, i.e. a functor such that $F([n]_P)\otimes_Q F([m]_P)=F([n]_P\otimes_P[m]_P)$.
 
 We write $\PROP$ the category of PROPs.
\end{defn}
Let us now discuss this discussion in order to make it clearer to the layman mathematician. In this discussion we will not aim at mathematical rigour. The complete axioms will be given in the next Subsection. A PROP consists of a category whose objects can be identified with the integers, therefore:
\begin{itemize}
 \item A PROP is only \ty{characterised} by its morphisms, and the sets of morphisms\footnote{these are indeed sets since the category, \cy{being enriched, is locally small (Definition \ref{defn:enriched_cat})}} between any two objects $[l]$ and $[k]$ are vector spaces since the category is enriched over $\Vect_\K$. A PROP can thus be written as a family $P=(P(k,l))_{k,l\in  \N}$ of vector spaces \cy{with $P(k,l):=\Hom([k],[l])$}.
 \item Since the elements of the vector spaces $P(k,l)$ are actually morphisms from $[l]$ to $[k]$, we can compose elements     
 if their domain and image coincide. Thus we have associative maps 
 \begin{equation*}
  \circ:P(l,m)\otimes P(k,l)\longrightarrow P(k,m).
 \end{equation*}
 This is item \ref{item:3a} of Definition \ref{def:prop} below.
 \item Since any objects $[n]$ admits a unit $\Id_n$, the maps $\circ$, seen as products, are unital. This is item \ref{item:3b} of Definition \ref{def:prop} below.
 \item The monoidal structure of the category gives another product in a PROP. By \cy{the previous point,} equations \eqref{eq:monoidal_prod_function} and \eqref{eq:monoidal_prod_add} the monoidal product gives a family of products
 \begin{equation*}
  \otimes:P(k,l)\otimes P(k',l')\longrightarrow P(k+k',l+l')
 \end{equation*}
 which are associative (by definition of a strict monoidal category) and commutative (since the category is symmetric). These requirements are items \ref{item:2a} and \ref{item:2c} of Definition \ref{def:prop} below.
 \item The \ty{monoidal} unit implies that the products $\otimes$ are unital. Furthermore, Equation \eqref{eq:monoidal_prod_add} implies that the monoidal unit is the object $[0]$ and that the unit for the products $\otimes$ is a element of $P(0,0)$. This is item \ref{item:2b} of Definition \ref{def:prop} below.
 
 \item Equation \eqref{eq:monoidal_prod_function} gives a compatibility condition between the products $\circ$ and $\otimes$. It is item \ref{item:4} of Definition \ref{def:prop} below.
 \item For $m\geq1$, \ty{identifying} the object $m$ with $[m]=\{1,\cdots,m\}$\cy{, Equation \eqref{eq:monoidal_prod_add} implies $[m]=[1]^{\otimes m}$. Then the symmetry of the monoidal product (Definition \ref{defn:symm_mon_cat}) together with the definition of the $P(k,l)$ implies that the ``flip'' operation $\tau=\gamma_{[1],[1]}$ defined by $\tau(x\otimes y)=y\otimes x$ lies in $P(2,2)$. Then the functoriality of the tensor product gives maps $\Id_{[i]}\otimes\tau\otimes\Id_{[m-i-2]}:[1]^{\otimes[m]}\longrightarrow[1]^{\otimes[m]}$. These maps are transpositions acting on $[m]$, and composing them we that the symmetry group with $m$ elements
 } $\sym_m$ is a subgroup of \ty{$\Hom([m],[m])$}. More precisely, there is a map $f_\sigma\in\ty{\Hom([m],[m])}$ for any $\sigma\in\sym_m$. This induces a left action of $\sym_m$ on \ty{$\Hom([n],[m])$} by setting $\sigma.p:=f_\sigma\circ p$. \ty{With} the same argument we also obtain a right action of the symmetry group. Thus a PROP is a $\sym\times\sym^{op}$-module (item \ref{item:1} of Definition \ref{def:prop} below).
 
 \item Finally, since these actions of the symmetry groups are given by the categorical structures, we have compatibility axioms between these actions and the product $\otimes$ and $\circ$. These are items \ref{item:5} and \ref{item:6} of Definition \ref{def:prop} below.
\end{itemize}

% COMPLETE FROM THE PRESENTATION FROM THE OPERAD READING GROUP!

% UNE QUESTION: QU4EST6CE QU'UN MORPHISME DE PROP DANS CE CONTEXTE?

\subsection{PROPs: pedestrian definition}

From the discussion above we can reformulate Definition \ref{defn:PROP_cat} to obtain a \cy{definition of PROPs less compact but more tractable for examples}.
\begin{defn} \label{def:prop}
A \textbf{PROP} is a family $P=(P(k,l))_{k,l\in  \N}$ of vector spaces such that:
\begin{enumerate}
\item \label{item:1} $P$ is a $\sym\times\sym^{op}$-module, that is to say, for any $(k,l)\in  \N^2$, $P(k,l)$
is a $\sym_l\times \sym_k^{op}$-module. In other words, there exist maps
\begin{align*}
&\left\{\begin{array}{rcl}
\sym_l\times P(k,l)&\longrightarrow&P(k,l)\\
(\sigma,p)&\longmapsto&\sigma\cdot p,
\end{array}\right.&
&\left\{\begin{array}{rcl}
P(k,l)\times \sym_k&\longrightarrow&P(k,l)\\
(p,\tau)&\longmapsto&p\cdot \tau,
\end{array}\right.
\end{align*}
such that for any $(k,l)\in  \N^2$, for any $(\sigma,\sigma',\tau,\tau')\in \sym_l^2\times \sym_k^2$, for any $p\in P(k,l)$,
\begin{align*}
&&\mathrm{Id}_{[l]}\cdot p&=p\cdot \mathrm{Id}_{[k]}=p,\\
\sigma\cdot (\sigma'\cdot p)&=(\sigma\sigma')\cdot p,&
\sigma \cdot (p\cdot \tau)&=(\sigma\cdot p)\cdot \tau,&
(p\cdot\tau)\cdot \tau'&=p\cdot(\tau\tau').
\end{align*}
\item \label{item:2} For any $(k,l,k',l')\in  \N^4$, there exists a product $*$ from $P(k,l)\otimes P(k',l')$ to $P(k+k',l+l')$
such that:
\begin{enumerate}
\item \label{item:2a} For any $(k,l,k',l',k'',l'')\in  \N^6$, for any $(p,p',p'')\in P(k,l)\times  P(k',l') \times P(k'',l'')$,
\[p*(p'*p'')=(p*p')*p''.\]
\item \label{item:2b} There exists $I_0\in P(0,0)$, such that for any $(k,l)\in  \N^2$, for any $p\in P(k,l)$,
\[p*I_0=I_0*p=p.\]
\item \label{item:2c} The product $*$ is commutative in the following sense: for any $(k,k',l,l')\in \N^4$, for any $(p,p')\in P(k,l)\times P(k',l')$,
\begin{equation}\label{eqpstarpprime}c_{l,l'}\cdot (p*p')=(p'*p)\cdot c_{k,k'},\end{equation}
where for any $(m,n)\in  \N^2$, $c_{m,n}\in \sym_{m+n}$ is defined by:
\begin{align} 
\label{defcmn} c_{m,n}(i)&=\begin{cases}
i+n\mbox{ if }i\leq m,\\
i-m\mbox{ if }i>m.
\end{cases}
\end{align}
\end{enumerate}
This product $*$ is called the  \textbf{horizontal concatenation}.
\item For any $(k,l,m)\in  \N^3$, there exists a product $\circ$ from $P(l,m)\otimes P(k,l)$ to $P(k,m)$ such that:
\begin{enumerate}
\item \label{item:3a} For any $(k,l,m,n)\in  \N^4$, for any $(p,q,r)\in P(m,n)\times P(l,m)\times P(k,l)$,
\[p\circ (q\circ r)=(p\circ q)\circ r.\]
\item \label{item:3b} There exists $I_1\in P(1,1)$, such that for any $(k,l)\in  \N^2$, for any $p\in P(k,l)$,
\[p\circ I_k=I_l\circ p=p,\]
where we put $I_n=I_1^{*n}$ for any $n\in  \N$, with the convention $I_1^{*0}=I_0$.
\end{enumerate}
This product $\circ$ is called the  \textbf{vertical concatenation}.
\item \label{item:4} The vertical and horizontal concatenations are compatible: for any $(k,k',l,l',m,m') \in  \N^6$, 
for any $(p,p',q,q')\in P(l,m)\times P(l',m') \times P(k,l) \times P(k',l')$,
\[(p*p')\circ(q*q')=(p\circ q)*(p'{\circ}q').\]
\item \label{item:5} The vertical concatenation and the action of $\sym\times \sym^{op}$ are compatible:
for any $(k,l,m)\in  \N^3$, for any $(p,q)\in P(l,m)\times P(k,l)$, for any $(\sigma,\tau,\nu) \in \sym_m\times\sym_l\times\sym_k$,
\begin{align*}
\sigma\cdot(p\circ q)&=(\sigma\cdot p)\circ q,&
(p\circ q)\cdot \nu&=p\circ (q\cdot \nu),&
(p\cdot \tau)\circ q&=p\circ (\tau\cdot q).
\end{align*}
\item \label{item:6} The horizontal concatenation and the action of $\sym\times \sym^{op}$ are compatible:
% \begin{enumerate}
% \item 
for any $(k,k',l,l')\in  \N^4$, for any $(p,p')\in P(k,l)\times P(k',l')$, 
for any $(\sigma,\sigma',\tau,\tau') \in \sym_l\times\sym_{l'}\times\sym_k\times\sym_{k'}$,
\begin{align*}
(\sigma\cdot p)*(\sigma'\cdot p')&=(\sigma\otimes \sigma')\cdot(p*p'),&
(p\cdot \tau)*(p'\cdot \tau')&=(p*p')\cdot (\tau \otimes \tau'),
\end{align*}
where for any $\alpha \in \sym_m$, $\beta \in \sym_n$, $\alpha \otimes \beta \in \sym_{m+n}$ is defined by:
\[\alpha \otimes \beta(i)=\begin{cases}
\alpha(i)\mbox{ if }i\leq m,\\
\beta(i-m)+m\mbox{ if }i>m.
\end{cases}\]
% \item \label{item:6b}Commutativity of the horizontal concatenation). For any $(k,k',l,l')\in \N^4$, for any $(p,p')\in P(k,l)\times P(k',l')$,
% \begin{equation}\label{eqpstarpprime}c_{l,l'}\cdot (p*p')=(p'*p)\cdot c_{k,k'},\end{equation}
% where for any $(m,n)\in  \N^2$, $c_{m,n}\in \sym_{m+n}$ is defined by:
% \begin{align} 
% \label{defcmn} c_{m,n}(i)&=\begin{cases}
% i+n\mbox{ if }i\leq m,\\
% i-m\mbox{ if }i>m.
% \end{cases}
% \end{align}
% \end{enumerate}
\end{enumerate}\end{defn}
We can also write the pedestrian definition of morphisms of PROPs.
\begin{defn} \label{def:PROP_morph}
      Let $P=(P(k,l))_{k,l\geq0}$ and $Q=(Q(k,l))_{k,l\geq0}$ be two PROPs. A \textbf{morphism of PROPs} is a family 
     $\phi=(\phi_{k,l})_{k,l\geq0}$ of linear  maps $\phi_{k,l}:P(k,l)\mapsto Q(k,l)$ which form a morphism for the horizontal 
     concatenation, the vertical concatenation and the actions of the symmetric groups. More precisely, for any $(k,l,m,n)\in \N^4$:
     \begin{itemize}
      \item $\forall (p,q)\in P(l,m)\times P(k,l),~\phi_{k,m}(p\circ q) = \phi_{l,m}(p)\circ \phi_{k,l}(q)$,
      \item $\forall (p,q)\in P(k,l)\times P(n,m),~\phi_{k+n,l+m}(p* q) = \phi_{k,l}(p)* \phi_{n,m}(q)$,
      \item $\forall (\sigma,p)\in\sym_l\times P(k,l),~\phi_{k,l}(\sigma.p)=\sigma.\phi_{k,l}(p)$,
      \item $\forall (p,\tau)\in P(k,l)\times\sym_k,~\phi_{k,l}(p.\tau)=\phi_{k,l}(p).\tau$.
     \end{itemize}
     We will write $\PROP$ the category of PROPs. 
     
 By abuse of  notation, we  shall write $\phi(p)$ instead of $\phi_{k,l}(p)$ for $p\in P(k,l)$. 
\end{defn}

\section{Examples of PROPs} \label{sec:PROP_ex}

\subsection{The PROP of linear morphisms} \label{subsec:PROP_lin_morph}

We start with the most classical example of PROP (see for example \cite{Vallette1} and \cite{Markl}) namely the PROP of linear morphisms between tensor products of a finite dimensional vector space.
\begin{defiprop} \label{defi:Hom_V} 
Given a finite dimensional $\K$-vector space $V$, the PROP 
$\Hom_V$ is defined in the following way:
\begin{enumerate}
\item For any $k,l\in  \N$, \[\Hom_V(k,l):=\Hom(V^{\otimes k}, V^{\otimes l}).\]
\item For any $\sigma \in \sym_n$, let  $\theta_\sigma$ be the endomorphism of $V^{\otimes n}$ defined by
\[\theta_\sigma(v_1\otimes \ldots \otimes v_n):=v_{\sigma^{-1}(1)}\otimes \ldots \otimes v_{\sigma^{-1}(n)}.\] 
This defines a left action of $\sym_n$ on $V^{\otimes n}$. 
For any $(k,l)\in  \N^2$, for any $f\in \Hom_V (k,l)$, for any  $(\sigma,\tau) \in \sym_l\times \sym_k$, we set:
\begin{align*}
\sigma\cdot f&:=\theta_\sigma \circ f ,&f\cdot \tau&:=f\circ \theta_\tau.
\end{align*} 
\item The horizontal concatenation is the tensor product of maps and $I_0:\K\longrightarrow\K$ is the identity 
map $I_0:=\mathrm{Id}_\K$.
\item The vertical concatenation is the usual composition of maps and $I_1:V\longrightarrow V$ is the identity map $I_1:=\mathrm{Id}_V$.
\end{enumerate}
 \end{defiprop} 

 \begin{rk}
  Following the convention that for a  PROP  $P=(P(k,l))_{k,l\in \N}$, an element in $  P(k,l)$ has  ``$k$ entries and $l$ exits'', for the PROP $\Hom_V$,  an element $f\in\Hom_V(k,l)$ has  ``$k$ entries and $l$ exits''.
 \end{rk}

\begin{proof}
  \begin{enumerate}
   \item The maps $\theta_\sigma$ turns $\Hom_V$ into a $\sym_l\times \sym_k^{op}$-module by associativity of the composition product.
   \item The horizontal concatenation is associative as a result of the associativity of the tensor product $\otimes$, and we trivially have that $\otimes$ maps $\Hom_V(k,l)\otimes\Hom_V(k',l')$ to $\Hom_V(k+k',l+l')$. Furthermore, if $(k,l)\in  \N^2$ and  $f\in\Hom_V(k,l)$, for any $v\in V^{\otimes k}$, we have
   \begin{equation*}
    (I_0\otimes f)(v) =(I_0\otimes f)(1.v):= I_0(1)\otimes f(v) = 1_\K\otimes f(v) = f(v)
   \end{equation*}
   \item The vertical concatenation is associative as the consequence of the associativity of the composition product. We furthermore have 
   $I_n:=I_1^{\otimes n}=\mathrm{Id}_V^{\otimes n} = \mathrm{Id}_{V^{\otimes n}}$ where the last identity follows
   from the definition of the tensor product of maps.
   \item For any $f\in\Hom_V(l,m)$, $f'\in\Hom_V(l',m')$, $g\in\Hom_V(k,l)$, $g'\in\Hom_V(k',l')$, $v\in V^{\otimes k}$ and 
   $v'\in V^{\otimes k'}$ we have
   \begin{align*}
    (f\otimes f')\circ (g\otimes g')(v\otimes v') & = (f\otimes f')(g(v)\otimes g'(v')) \\
    & = (f\circ g)(v)\otimes (f'\circ g')(v') \\
    & = [(f\circ g)\otimes (f'\circ g')](v\otimes v').
   \end{align*}
   Thus, the horizontal and vertical concatenation are compatible.
   \item The vertical concatenation and the action of $\sym\times \sym^{op}$ are compatible by associativity of the composition product.
   \item For any $f\in \Hom_V(k,l)$, $f'\in \Hom_V(k',l')$, $\sigma \in \sym_l$, $\sigma'\in \sym_{l'}$, $v\in V^{\otimes k}$, 
   $v'\in V^{\otimes k'}$ we have 
   \begin{align*}
    (\sigma.f)\otimes(\sigma'.f')(v\otimes v') & = ({\theta_\sigma}\circ f)\otimes ({\theta_{\sigma'}}\circ f')(v\otimes v') \\
					       & = {\theta_\sigma}(f(v))\otimes ({\theta_{\sigma'}}f'(v') \\
					       & = ({\theta_\sigma}\otimes {\theta_{\sigma'}})[f(v)\otimes f'(v')] \\
					       & = (\sigma\otimes \sigma').(f\otimes f')(v\otimes v').
   \end{align*}
   Similarly, we have $(f.\tau)\otimes(f'.\tau') = (f\otimes f').(\tau\otimes\tau')$ and $c_{l,l'}\cdot (f*f')=(f'*f)\cdot c_{k,k'}$, therefore the horizontal action of $\sym\times \sym^{op}$ are compatible. 
  \end{enumerate}
 \end{proof}
 Since in QFT one has typically to deal with infinite dimensional space, we have in \cite{ClFoPa20} generalised the PROP $\Hom_V$ to a PROP of morphisms of tensor products of infinite dimensional vector spaces.
 
 \subsection{A PROP for Fréchet nuclear spaces}
 
 It is well-known that when dealing with tensor products of infinite dimensional vector spaces on has to be much more careful than in the finite dimensional case. In his seminal work \cite{Gr54}, Grothendieck showed that the notion of ``nuclear spaces'' is the right framework to deal with \ty{these} difficulties. We will therefore work with nuclear spaces which, for the sake of simplicity, we will assume to be Fréchet. One could also work in some more general frameworks, for example with barreled nuclear spaces. We found out we do not need this level of generality for our task at hand.
 
 \subsubsection{Topologies on tensor products} 
 
The first challenge when dealing with tensor products of infinite dimensional \ty{(topological) vector spaces} is that one can define many different non-equivalent topologies on the tensor product. We recall the main ones here.

A first possibility is the so-called \textbf{$\epsilon$-topology}; \cite[Definition 43.1]{Treves67}. For two   topological vector spaces $E$ and $F$, one 
can show (\cite[Proposition 42.4]{Treves67}) the isomorphism  of vector spaces
$E\otimes F\simeq \mathcal{B}^c(E'_\sigma\times F'_\sigma,\K)$ where 
$\mathcal{B}^c(E'_\sigma\times F'_\sigma,\K)$ denotes the space of continuous bilinear maps from $E'_\sigma\times F'_\sigma$ to $\K$ and $E'_\sigma$ (resp. 
$F'_\sigma$)  the topological dual of $E$ (resp. $F$) for $\sigma$, the weak topology. 

Recall that a bilinear map $f:E\times F\longrightarrow K$ is called separately continuous if, 
for any pair $(x,y)\in E\times F$, the maps $z\mapsto f(x,z)$ and $z'\mapsto f(z',y)$ are continuous. We then clearly have that continuous bilinear maps 
build a linear subspace of the space $\mathcal{B}^{sc}(E\times F,\K)$ of separately continuous bilinear maps.

The space  $\mathcal{B}^{sc}(E\times F,\K)$ can be equipped with the topology of uniform convergence on products of equicontinuous subsets of $E'_\sigma$ with 
equicontinuous subsets of $F'_\sigma$. Recall that, for a topological space $X$ and a topological vector space $G$, a set $S$ of maps from $X$ to $G$ is 
said to be equicontinuous at $x_0\in X$ if, for any $V\subseteq G$ neighbourhood of zero, there is some neighbourhood  $V(x_0)\subseteq X$   of $x_0$, such 
that 
\begin{equation*}
 \forall f\in S,~x\in V(x_0) \Rightarrow f(x)-f(x_0)\in V.
\end{equation*}
In our case, $G$ is $\K$ and $X$ is $E_\sigma$ (resp. $F_\sigma$). This topology induces a topology on the subspace 
$\mathcal{B}^c(E'_\sigma\times F'_\sigma,\K)$ and thus on $E\otimes F$. We denote by $E\otimes_\epsilon F$ the topological vector space obtained by 
endowing $E\otimes F$ with this topology.

There is another  topology on $E\otimes F$   called the \textbf{projective topology}; 
\cite[Definition 43.2]{Treves67}.  The projective topology is defined as the strongest 
locally convex topology on $E\otimes F$ such that the canonical map $\phi:E\times F\longrightarrow E\otimes F$ is continuous. 
We write $E\otimes_\pi F$ the topological vector space obtained by 
endowing $E\otimes F$ with this topology.

The neighbourhoods of zero of the projective topology can be simply described in terms of neighbourhoods of zero in $E$ and $V$. A convex subset $S$ of 
$E\otimes F$ containing zero is a neighbourhood of zero if it exist a neighbourhood $U$ (resp. V) of zero in $E$ (resp. $F$) such that 
$U\otimes V:=\{u\otimes v|u\in U\wedge v\in V\}\subseteq S$.
 
 \subsubsection{Nuclear Fréchet spaces} \label{subsec:PROP_nuc}
 
 Most of the results stated here can  be found in \cite{Gr52,Gr54}. We also 
refer the readerto the more recent presentation \cite{Treves67} which has notations closer to our own.

We recall that 
\begin{itemize}
 \item A topological vector space is \textbf{Fr\'echet} if it is Hausdorff, has its topology induced by a countable family of semi-norms and is complete with respect to this family of semi-norms. 
 \item A topological vector space is called \textbf{reflexive} if $E''=(E')'=E$, where $E'$ is the topological dual of $E$.
\end{itemize}
In the following $E$ and $F$ are two topological vector spaces and $\Hom^c(E,F)$ is the set of continuous  linear maps from $E$ to $F$.
\begin{rk}
 When $E$ and $F$ are finite dimensional, we have $\Hom^c(E,F)$=$\Hom(E,F)$.
\end{rk}
In order to build the PROP $\Hom^c_V$ in the infinite dimensional case, we need Grothendieck's   completion of the tensor product, a notion we recall 
here in the setup of locally convex topological $\K$-vector spaces.

Let $E$ and $F$ be two vector spaces. Recall that there exists a   vector space $E\otimes F$, and a bilinear map 
$\phi:E\times F\longrightarrow E\otimes F$ such that for any vector space $V$ and bilinear map $f:E\times F\longrightarrow V$, there is a unique 
linear map $\tilde f:E\otimes F\to V$ satisfying  $f=\tilde f\circ \phi$. The space $E\otimes F$ is unique modulo isomorphism and is called the \textbf{tensor product} of $E$ and $F$.

Given two  topological vector spaces, $E$ and $F$ one can a priori equip   $E\otimes F$ with several topologies, among which the 
\textbf{$\epsilon$-topology} and the \textbf{projective topology} presented above. We denote as before by $E\otimes_\epsilon F$ (resp. $E\otimes_\pi F$) the space $E\otimes F$ endowed with the $\epsilon$-topology (resp. the projective topology)  and by $E\widehat\otimes _\epsilon F$ (resp. $E\widehat\otimes _\epsilon F$) of $E\otimes_\epsilon F$ (resp. $E\otimes_\epsilon F$) their completion with respect to the 
$\epsilon$-topology (resp. projective topology). These two spaces differ in general but coincide for nuclear spaces.
\begin{defn} \cite{Gr54}
 A locally convex topological vector space $E$ is \textbf{nuclear} if, and only if, for any locally convex topological vector space $F$,
 \begin{equation*}
 E\widehat\otimes _\epsilon F = E\widehat\otimes _\pi F =: E\widehat\otimes  F
\end{equation*}
holds, in which case $E\widehat\otimes  F $ is called the \textbf{completed tensor product} of $E$ and $F$.
\end{defn}
There are other equivalent definitions of nuclearity, see for example \cite{gelfand1964,hida2008}.

Given a locally convex topological vector space $E$, its topological dual $E'$ can be endowed with various topologies. An important one for our applications will be 
the \textbf{strong topology}, generated by the family of semi-norms of $E'$ defined, on any $f\in E'$: 
\begin{equation} \label{eq:strong_dual_topo}
 \text{for any bounded set }B\text{ of }E,~||f||_B:=\sup_{x\in B}|f(x)|.
\end{equation}
% for any bounded set $B$ of $E$. 
The topological dual $E'$ endowed with this topology is called the \textbf{strong dual}.

For Fr\'echet spaces, nuclearity is preserved under strong duality.
\begin{prop} \begin{itemize}
              \item 
\cite[Proposition 50.6]{Treves67} \label{prop:Frechet_nuclear}
 A Fr\'echet space is nuclear if and only if its strong dual is nuclear.
 \item 
 \cite[Proposition 36.5]{Treves67} A Fr\'echet nuclear space is reflexive.
             \end{itemize}
\end{prop}
Many spaces relevant to renormalisation issues are Fr\'echet and nuclear. We list here some examples.
\begin{example}\label{ex:findimtensor1}  
 Any finite dimensional vector space  can be equipped with a norm and for any of these norms, they are trivially  Banach, hence Fr\'echet and nuclear. 
 If $E$ and $F$ are finite dimensional vector spaces  we have
 $\Hom^c (E, F)=  \Hom (E, F)\simeq  E^* \otimes F$, where $  \Hom (E, F)$ stands for the space of $F$-valued linear maps on 
 $E$ and where the dual $E^*$ is the { \bf algebraic dual}.
\end{example} 
\begin{example}\label{ex:infindimtensor1} 
Let $U$  be an open subset of $\R^n$.
Take $E= C^\infty(U)=:{\mathcal E}(U)$. The topological dual 
 is the space ${E'}={\mathcal E}^\prime(U)$ of distributions 
 on $U$ with compact support. 
 
 Then $E$ is Fr\'echet (\cite{Treves67}, pp. 86-89), and $E'$ is nuclear (\cite{Treves67}, Corollary p. 530). By Proposition 
 \ref{prop:Frechet_nuclear}, $E$ is also nuclear. 
\end{example}
\begin{rk}\label{rk:dualnotfrechet}  Note that the dual $E' $ of a Fr\'echet space  $E$ is  never a Fr\'echet space (for any of the natural topologies on $E'$), unless $E$  is actually a Banach space (see for example 
\cite{kothe1969}).   In particular, ${\mathcal E}^\prime(U)$ is generally not Fr\'echet.
\end{rk} 
We now sum up various results of \cite{Treves67} of importance for  later purposes.
\begin{theo}\cite[Equations (50.17)--(50.19)]{Treves67}
 Let $E$ and $F$ be two Fr\'echet spaces, with $E$ nuclear. The following isomorphisms of topological vector spaces hold.
 \begin{align}
  & E'\widehat\otimes  F \simeq \Hom^c(E,F) \label{eq:E_prime_otimes_F} \\
  & E\widehat\otimes  F \simeq \Hom^c(E',F) \label{eq:E_otimes_F} \\
  & E'\widehat\otimes  F' \simeq (E\widehat\otimes  F)' \simeq {\mathcal B}^c(E\times F, \K). \label{eq:E_prime_otimes_F_prime}
 \end{align}
 with ${\mathcal B}^c(E\times F, \K)$ the set of continuous bilinear maps 
 $K:E\times F\longrightarrow\K$. Here the duals are endowed with the strong dual topology, 
 $\Hom^c(E,F)$ with the strong topology and ${\mathcal B}^c(E\times F, \K)$ with the topology of uniform convergence on products of bounded sets.
\end{theo}
We  also need the stability of Fr\'echet nuclear spaces under completed tensor products.
\begin{prop}
 Let $V$ be a Fr\'echet nuclear space. Then 
 \begin{equation} \label{eq:echange_dual_prod}
 \left(V^{ \widehat\otimes  k}\right)' \simeq\left(V'\right)^{ \widehat\otimes  k}
 \end{equation} 
 holds for any $k\geq1$, where the duals are endowed with their strong topologies.
\end{prop}
\begin{proof}
 Let $V$ be a Fr\'echet nuclear space. The case $k=1$ is trivial. Then Equation \eqref{eq:echange_dual_prod} with $k=2$ holds by Equation \eqref{eq:E_prime_otimes_F_prime} 
 with $E=F=V$. The cases $k\geq2$ are proved by induction, using $E=V^{\widehat\otimes  k-1}$ and $F=V$. Now, if $E$ and $F$ are two nuclear spaces then $E\widehat\otimes  F$ is a nuclear space (\cite[Equation (50.9)]{Treves67}). It is moreover complete since 
 $E\widehat\otimes  F$ is obtained by completion. Thus the completed tensor product  $E\widehat\otimes  F$ of two Fr\'echet nuclear spaces is a Fr\'echet nuclear space and the induction holds.
\end{proof}
 
 \subsubsection{A PROP for Fr\'echet nuclear spaces} \label{subsection:infinite_dim_prop}

We start by recalling the definition of distributions over a    finite dimensional smooth manifold ${X}$. We quote \cite[Definition 6.3.3]{Ho89}.
 \begin{defn} 
 To every coordinate system 
  $\kappa:U_k\subset {X}\longrightarrow V_k\subset\R^n$ we associate 
  a distribution $u_k\in\mathcal{D}'(V _k)$ such that 
  \begin{equation*}
   u_{k'}=(\kappa\circ\kappa'^{-1})^*u_k
  \end{equation*}
  in $\kappa'(U_k\cap U_{k'})$; with $(\kappa\circ\kappa'^{-1})^*u_k$ the pullback of 
  $u_k$ by $\kappa\circ\kappa'^{-1}$ whose existence and uniqueness
  is given by 
  \cite[Theorem 6.1.2]{Ho89}. Then the system $u_k$ of distributions is called a distribution on ${X}$. The set of distributions on ${X}$ is written 
  $\mathcal{D}'({X})$. Similarly we define $\mathcal{E}'({X})$, the set of distributions with 
  compact support.
 \end{defn}
 We can now state the properties that will allow us to obtain a PROP for infinite dimensional vector spaces. We state it without a rigorous proof but give precise reference to the existing proofs of the statements.
 \begin{prop} \label{prop:fction_manifold_Frechet_nuclear}
 $\mathcal{E}({X})$ is a Fr\'echet nuclear space and $\mathcal{E}'({X})$ is a nuclear, but non Fréchet space.
\end{prop}
The fact that $\mathcal{E}({X})$ is Fr\'echet a classical result of functional analysis that the space of functions over a smooth manifold is Fr\'echet (see for example \cite[Exercise 2.3.2]{BaCr13}). The fact that it is nuclear is a folklore result, often stated without proof nor references and the only proof known to the author is in \cite[p. 4]{BrDaLGRe17}. It then follows from Proposition \ref{prop:Frechet_nuclear}, that the space $\mathcal{E}'(X)$ is also nuclear. From Remark \ref{rk:dualnotfrechet} the 
space $\mathcal{E}'(X)$ is \emph{not} Fr\'echet  since the dual of a Fr\'echet space $F$ is Fr\'echet if and only if $F$ is Banach (see for example \cite{kothe1969}) which is not the case of $\mathcal{E}({X})$.

One further useful result is
\begin{prop} \label{prop:prod_function}
 Let ${X}$ and ${Y}$ be two finite dimensional smooth manifolds. Then 
 \begin{equation*}
 \Hom^c(\mathcal{E}'({X}),\mathcal{E}({Y}))\simeq \,  \mathcal{E}({X})\,\widehat\otimes\, \mathcal{E}({Y}) \simeq \mathcal{E}({X}\times {Y})
 \end{equation*}
 holds.
\end{prop}
The second isomorphism   \cite[Chap. 5, p. 105]{Gr52} can be proved using a version of the Schwartz kernel theorem for smoothing operators 
\cite[Theorem 2.4.5]{BaCr13} by means of the identification $\Hom^c(\mathcal{E}'({X}),\mathcal{E}({Y}))\simeq \mathcal{E}({X}\times {Y})$. The result then follows from 
\eqref{eq:E_otimes_F} applied to $\mathcal{E}(X)$ and $\mathcal{E}(Y)$ which are Fr\'echet nuclear spaces.

We can now introduce the spaces that will carry a PROP structure generalising the PROP $\Hom_V$ of Subsection \ref{subsec:PROP_lin_morph} to the infinite dimensional case.
\begin{defn} \label{defi:Hom_V_generalised}
 Let $V$ be a Fr\'echet nuclear space. For any $k,l\in  \N$, we set
\[\Hom_V^c(k,l)=\Hom^c(V^{\hat \otimes k}, V^{\hat \otimes l})\simeq(V')^{\widehat\otimes  k}\widehat\otimes  V^{\widehat\otimes  l},\]
where, as before $V'$ stands for the strong topological dual. Furthermore we set $\Hom^c_V:=(\Hom_V^c(k,l))_{k,l\geq0}$.

For any $\sigma \in \sym_n$, let  $\theta_\sigma$ be the endomorphism of $V^{\otimes n}$ defined by
\[\theta_\sigma(v_1\otimes \ldots \otimes v_n)=v_{\sigma^{-1}(1)}\otimes \ldots \otimes v_{\sigma^{-1}(n)}.\] It extends to a continuous linear map 
$\overline{\theta_\sigma}$ on the closure 
$V^{\widehat\otimes  n}$.   
For any $f\in \Hom_V^c (k,l)$, $\sigma \in \sym_l$, $\tau\in \sym_k$, we set:
\begin{align*}
\sigma\cdot f&=\overline{\theta_\sigma} \circ f ,&f\cdot \tau&=f\circ \overline{\theta_\tau}.
\end{align*} 
\end{defn}
In the above definition, the superscript ``c'' stands for continuous. As advertised, the family $\Hom_V^c$ carries a PROP structure.
\begin{theo} \label{thm:Hom_V_generalised}
 Let $V$ be a Fr\'echet nuclear space. $\Hom_V^c$, with the action of $\sym\times\sym^\mathrm{op}$ described above, is a PROP. Its horizontal 
 concatenation is the usual (topological) tensor product of maps with $I_0:\K\longrightarrow\K$ is the constant map $I_0(x):=1_\K$  and 
 its vertical concatenation is the usual composition of maps and $I_1:V\longrightarrow V$ is the identity map.
\end{theo}
\begin{proof}
  The proof is exactly the same as the proof of Definition-Proposition \ref{defi:Hom_V}.
 \end{proof}
  \begin{example}
      For a  finite dimensional vector space $V$ the classical PROP $\Hom_V$ of Proposition-Definition 
      \ref{defi:Hom_V} coincides with the the PROP $\Hom_V^c$.
     \end{example}
     \begin{example}
      Let $U$ be an open of $\R^n$. From Example \ref{ex:infindimtensor1} and Equation 
      \eqref{eq:echange_dual_prod}
the family  $({\mathcal K}_U(k, l))_{k,l\geq 0}$, with 
${\mathcal K}_U(k, l)= \left({\mathcal E}^\prime(U)\right)^{\widehat\otimes  k}\, \widehat\otimes  \,  
\left({\mathcal E}(U)\right)^{\widehat\otimes  l}$  
defines a PROP.
     \end{example}
     
     \begin{example}
     Let $X$ be a smooth finite dimensional manifold.
      From Proposition \ref{prop:fction_manifold_Frechet_nuclear} and Equation \eqref{eq:echange_dual_prod}
the family  $({\mathcal K}_X(k, l))_{k,l\geq 0}$,
with ${\mathcal K}_X(k, l)= \left({\mathcal E}^\prime(X)\right)^{\widehat\otimes  k}\, \widehat\otimes  \,  {\mathcal E}(X)^{\widehat\otimes  l}$  
defines a PROP.
     \end{example}

\subsection{The PROP of graphs} \label{subsec:PROP_graphs}

A folklore result is that graphs have a PROP structure and are the free PROP. To rigorously prove the latter statement we need to precisely define the PROP structure that exists on a family of (generalised) graphs. 

\subsubsection{Generalised graphs}

\begin{defn} \label{def:graph}
A \textbf{graph} is a family $G=(V(G),E(G),I(G),O(G),IO(G),s,t,\alpha,\beta)$, where:
\begin{enumerate}
\item $V(G)$ (set of vertices), $E(G)$ (set of internal edges), $I(G)$ (set of input edges),
$O(G)$ (set of output edges) and $IO(G)$ (set of input-output edges) are finite (maybe empty) sets.
\item $s:E(G)\sqcup O(G)\longrightarrow V(G)$ is a map (source map).
\item $t:E(G)\sqcup I(G)\longrightarrow V(G)$ is a map (target map).
\item $\alpha:I(G)\sqcup IO(G)\longrightarrow [i(G)]$ is a bijection, with $i(G)=|I(G)|+|IO(G)|$
(indexation of the input edges).
\item $\beta:O(G)\sqcup IO(G)\longrightarrow [o(G)]$ is a bijection, with $o(G)=|O(G)|+|IO(G)|$
(indexation of the output edges).
\end{enumerate}
\end{defn}
Note that this definition differs from  \cite[Definition 1.3.1]{ClFoPa20} since here the loops of graphs will play no role, neither for PROPs nor for TRAPs.
\begin{example}\label{ex4}
Here is a graph $G$ : 
\begin{align*}
V(G)&=\{x,y\},&E(G)&=\{a,b\},&I(G)&=\{c,d\},&O(G)&=\{e,f\},&IO(G)&=\{g\},
\end{align*}
and:
\begin{align*}
s&:\left\{\begin{array}{rcl}
a&\mapsto&y\\
b&\mapsto&x\\
e&\mapsto&y\\
f&\mapsto&y,
\end{array}\right.&
t&:\left\{\begin{array}{rcl}
a&\mapsto&x\\
b&\mapsto&y\\
c&\mapsto&x\\
d&\mapsto&x,
\end{array}\right.&
\alpha&:\left\{\begin{array}{rcl}
c&\mapsto&1\\
d&\mapsto&2\\
g&\mapsto&3,
\end{array}\right.&
\beta&:\left\{\begin{array}{rcl}
e&\mapsto&3\\
f&\mapsto&1\\
g&\mapsto&2.
\end{array}\right.&
\end{align*}
This is graphically represented as follows:
\[\xymatrix{1&&3&2\\
&\rond{y}\ar[ru]_e \ar[lu]^f \ar@/_1pc/[d]_a&&\\
&\rond{x}\ar@/_1pc/[u]_b&&\\
1\ar[ru]^c&&2\ar[lu]_d&3\ar[uuu]_g}\]
\end{example}

\begin{defn} \label{def:morph_graphs}
Let $G$ and $G'$ be two graphs. An \textbf{(resp. iso-)morphism} of graphs from $G$ to $G'$ is a family of (resp. bijections) maps $f=(f_V,f_E,f_I,f_O,f_{IO})$ with:
\begin{align*}
f_V:V(G)\longrightarrow & V(G'),\qquad \qquad f_E:E(G)\longrightarrow E(G'),\qquad \qquad f_I:I(G)\longrightarrow I(G'),\\
f_O\,:~ & O(G)\longrightarrow O(G'),\qquad \qquad f_{IO}:IO(G)\longrightarrow IO(G'),
\end{align*}
such that:
\begin{align*}
s'\circ f_E&=f_V\circ s_{\mid E(G)},&s'\circ f_O&=f_V\circ s_{\mid O(G)},\\
t'\circ f_E&=f_V\circ t_{\mid E(G)},&t'\circ f_I&=f_V\circ t_{\mid I(G)},\\
\alpha'\circ f_I&=\alpha_{\mid I(G)},&\alpha'\circ f_{IO}&=\alpha_{\mid IO(G)},\\
\beta'\circ f_O&=\beta_{\mid O(G)},&\beta'\circ f_{IO}&=\beta_{\mid IO(G)}.
\end{align*}
For any $k,l\in  \N$, we denote by $\Gr(k,l)$ the space generated by the isoclasses of graphs $G$ such that
$i(G)=k$ and $o(G)=l$, i.e. $\Gr(k,l)$ is the quotient space of graphs with $k$ input edges and $l$ output edges by the equivalence relation given by isomorphism.
\end{defn}
In what follows, we shall write \emph{graphs} for \emph{isoclasses of graphs}.
\begin{example}
The isomorphism class of the graph of Example \ref{ex4} is represented by:
\[\xymatrix{1&&3&2\\
&\rond{}\ar[ru] \ar[lu] \ar@/_1pc/[d]&&\\
&\rond{}\ar@/_1pc/[u]&&\\
1\ar[ru]&&2\ar[lu]&3\ar[uuu]}\]
\end{example}

\subsubsection{The PROP structure} \label{subsec:prop_graph}

We now want  to equip the set $\Gr$ of isoclasses of graphs  with a PROP structure.
\begin{itemize}
	\item  Let us first define an action of $\sym\times \sym^{op}$ on graphs.
Let $G=(V(G),E(G),I(G),O(G),IO(G),s,t,\alpha,\beta)\in \Gr(k,l)$, $\sigma \in \sym_k$ and $\tau\in \sym_l$.
Then:
\begin{equation} \label{eq:sym_sym_graph}
 \tau\cdot G\cdot \sigma=\cy{
(V(G),E(G),I(G),O(G),IO(G),s,t,\sigma^{-1}\circ \alpha,\tau \circ \beta)}.
\end{equation}
    \item We now define the \textbf{horizontal concatenation}. If $G$ and $G^\prime$ are two disjoint graphs,
	we define a graph $G*G'$ in the following way:
	\begin{align*}
	\qquad &V(G*G')=V(G)\sqcup V(G'),\qquad E(G*G')=E(G)\sqcup E(G'),\\
	I(G*G')=I(G)\sqcup &I(G'),\qquad O(G*G')=O(G)\sqcup O(G'),\qquad IO(G*G')=IO(G)\sqcup IO(G').
	\end{align*}
	The source and target maps are given by:
	\begin{align*}
	s''_{\mid E(G)\sqcup O(G)}&=s,&s''_{\mid E(G')\sqcup O(G')}&=s',\\
	t''_{\mid E(G)\sqcup I(G)}&=t,&t''_{\mid E(G')\sqcup I(G')}&=t'.
	\end{align*}
	The indexations of the input and output edges are given by:
	\begin{align*}
	\alpha''_{\mid I(G)\sqcup IO(G)}&=\alpha,&\alpha''_{\mid I(G')\sqcup IO(G')}&=i(G)+\alpha',\\
	\beta''_{\mid O(G)\sqcup IO(G)}&=\beta,&\beta''_{\mid O(G')\sqcup IO(G')}&=o(G)+\beta'
	\end{align*}
	with an obvious abuse of notation in the definition of the second column.
	Notice that this product is not commutative in the usual sense for $G*G'$ and $G'*G$ might differ by the indexation of their input and output 
	edges. However, it is commutative in the sense of Axiom \ref{item:2c} of PROPs.
	Roughly speaking, $G*G'$ is the disjoint union of $G$ and $G'$, the input and output edges of $G'$ being indexed
	after the input and output edges of $G$. 
	\begin{center}
		\begin{tikzpicture}[line cap=round,line join=round,>=triangle 45,x=0.5cm,y=0.5cm]
		\clip(-2.5,-4.) rectangle (1.,4.);
		\draw [line width=0.4pt] (-2.,1.)-- (0.5,1.);
		\draw [line width=0.4pt] (0.5,1.)-- (0.5,-1.);
		\draw [line width=0.4pt] (0.5,-1.)-- (-2.,-1.);
		\draw [line width=0.4pt] (-2.,-1.)-- (-2.,1.);
		\draw [->,line width=0.4pt] (-1.5,1.) -- (-1.5,3.);
		\draw [->,line width=0.4pt] (0.,1.) -- (0.,3.);
		\draw [->,line width=0.4pt] (-1.5,-3.) -- (-1.5,-1.);
		\draw [->,line width=0.4pt] (0.,-3.) -- (0.,-1.);
		\draw (-1.25,0.5) node[anchor=north west] {$G$};
		\draw (-1.8,-3) node[anchor=north west] {$1$};
		\draw (-0.3,-3) node[anchor=north west] {$k$};
		\draw (-1.4,-2.2) node[anchor=north west] {$\ldots$};
		\draw (-1.8,4.2) node[anchor=north west] {$1$};
		\draw (-0.3,4.2) node[anchor=north west] {$l$};
		\draw (-1.4,2.) node[anchor=north west] {$\ldots$};
		\end{tikzpicture}
		$\substack{\displaystyle *\\ \vspace{3cm}}$
		\begin{tikzpicture}[line cap=round,line join=round,>=triangle 45,x=0.5cm,y=0.5cm]
		\clip(-2.5,-4.) rectangle (0.7,4.);
		\draw [line width=0.4pt] (-2.,1.)-- (0.5,1.);
		\draw [line width=0.4pt] (0.5,1.)-- (0.5,-1.);
		\draw [line width=0.4pt] (0.5,-1.)-- (-2.,-1.);
		\draw [line width=0.4pt] (-2.,-1.)-- (-2.,1.);
		\draw [->,line width=0.4pt] (-1.5,1.) -- (-1.5,3.);
		\draw [->,line width=0.4pt] (0.,1.) -- (0.,3.);
		\draw [->,line width=0.4pt] (-1.5,-3.) -- (-1.5,-1.);
		\draw [->,line width=0.4pt] (0.,-3.) -- (0.,-1.);
		\draw (-1.25,0.5) node[anchor=north west] {$G'$};
		\draw (-1.8,-3) node[anchor=north west] {$1$};
		\draw (-0.3,-3) node[anchor=north west] {$k'$};
		\draw (-1.4,-2.2) node[anchor=north west] {$\ldots$};
		\draw (-1.8,4.2) node[anchor=north west] {$1$};
		\draw (-0.3,4.2) node[anchor=north west] {$l'$};
		\draw (-1.4,2.) node[anchor=north west] {$\ldots$};
		\end{tikzpicture}
		$\substack{\displaystyle =\\ \vspace{3cm}}$
		\begin{tikzpicture}[line cap=round,line join=round,>=triangle 45,x=0.5cm,y=0.5cm]
		\clip(-2.5,-4.) rectangle (0.5,4.);
		\draw [line width=0.4pt] (-2.,1.)-- (0.5,1.);
		\draw [line width=0.4pt] (0.5,1.)-- (0.5,-1.);
		\draw [line width=0.4pt] (0.5,-1.)-- (-2.,-1.);
		\draw [line width=0.4pt] (-2.,-1.)-- (-2.,1.);
		\draw [->,line width=0.4pt] (-1.5,1.) -- (-1.5,3.);
		\draw [->,line width=0.4pt] (0.,1.) -- (0.,3.);
		\draw [->,line width=0.4pt] (-1.5,-3.) -- (-1.5,-1.);
		\draw [->,line width=0.4pt] (0.,-3.) -- (0.,-1.);
		\draw (-1.25,0.5) node[anchor=north west] {$G$};
		\draw (-1.8,-3) node[anchor=north west] {$1$};
		\draw (-0.3,-3) node[anchor=north west] {$k$};
		\draw (-1.4,-2.2) node[anchor=north west] {$\ldots$};
		\draw (-1.8,4.2) node[anchor=north west] {$1$};
		\draw (-0.3,4.2) node[anchor=north west] {$l$};
		\draw (-1.4,2.) node[anchor=north west] {$\ldots$};
		\end{tikzpicture}
		\begin{tikzpicture}[line cap=round,line join=round,>=triangle 45,x=0.5cm,y=0.5cm]
		\clip(-2.5,-4.) rectangle (2.,4.);
		\draw [line width=0.4pt] (-2.,1.)-- (0.5,1.);
		\draw [line width=0.4pt] (0.5,1.)-- (0.5,-1.);
		\draw [line width=0.4pt] (0.5,-1.)-- (-2.,-1.);
		\draw [line width=0.4pt] (-2.,-1.)-- (-2.,1.);
		\draw [->,line width=0.4pt] (-1.5,1.) -- (-1.5,3.);
		\draw [->,line width=0.4pt] (0.,1.) -- (0.,3.);
		\draw [->,line width=0.4pt] (-1.5,-3.) -- (-1.5,-1.);
		\draw [->,line width=0.4pt] (0.,-3.) -- (0.,-1.);
		\draw (-1.25,0.5) node[anchor=north west] {$G'$};
		\draw (-2.3,-3) node[anchor=north west] {$k+1$};
		\draw (-0.3,-3) node[anchor=north west] {$k+k'$};
		\draw (-1.4,-2.2) node[anchor=north west] {$\ldots$};
		\draw (-2.3,4.2) node[anchor=north west] {$l+1$};
		\draw (-0.3,4.2) node[anchor=north west] {$l+l'$};
		\draw (-1.4,2.) node[anchor=north west] {$\ldots$};
		\end{tikzpicture}
		
		\vspace{-1.5cm}
	\end{center}
	
	\begin{example} Here is an example of horizontal concatenation :\\
	
	\vspace{-2cm}
	
			\[\xymatrix{1&3&2\\
			&\rond{}\ar[lu] \ar[u]\ar[d]&\\
			&\rond{}\ar[ruu]&\\
			1\ar[ru]&&2\ar[lu]} 
			\substack{\vspace{3cm}\\\displaystyle \mbox{$*$}}
		\xymatrix{1&&2\\
			&\rond{}\ar[ru] \ar[lu]&\\
			&&\\
			&1\ar[uu]&}
					\substack{\vspace{3cm}\\\displaystyle =}
					\xymatrix{1&3&2&4&&5\\
			&\rond{}\ar[lu] \ar[u]\ar[d]&&&\rond{}\ar[ru] \ar[lu]&\\
			&\rond{}\ar[ruu]&&&&\\
			1\ar[ru]&&2\ar[lu]&&3\ar[uu]&}\]
	\end{example}
	
	This product of graphs induces a product $*:\Gr(k,l)\otimes \Gr(k',l')\longrightarrow
	\Gr(k+k',l+l')$. If $G$, $G'$ and $G''$ are three graphs, clearly
	\[G*(G'*G'')=(G*G')*G''.\]
	Hence, the product $*$ is associative. Its unit $I_0$ is the unique graph such that
	$V(I_0)=E(I_0)=I(I_0)=O(I_0)=IO(I_0)=\emptyset$. 
	
	\item We now define the \textbf{ vertical concatenation}. Let $G$ and $G'$ be {disjoint} graphs such that $o(G)=i(G')$. We define a graph 
	$G''=G'\circ G$
	in the following way:
	\begin{align*}
	V(G'')&=V(G)\sqcup V(G'),\\
	E(G'')&=E(G)\sqcup E(G')\sqcup \{(f,e')\in O(G)\times I(G'):\beta(f)=\alpha'(e')\},\\
	I(G'')&=I(G)\sqcup \{(f,e')\in IO(G)\times I(G'):\beta(f)=\alpha'(e')\},\\
	O(G'')&=O(G)\sqcup \{(f,e')\in O(G)\times IO(G') : \beta(f)=\alpha'(e')\},\\
	IO(G'')&=\{(f,e')\in IO(G)\times IO(G'): \beta(f)=\alpha'(e')\}.
	\end{align*}	
	Its \textbf{source} and \textbf{target} maps  are given by:
	\begin{align*}
	s''_{\mid E(G)}&=s_{\mid E(G)},&s''_{\mid E(G')}&=s'_{\mid E(G')},&
	s''_{\mid O(G')}&=s'_{\mid O(G')},&s''((f,e'))&=s(f),\\
	t''_{\mid E(G)}&=t_{\mid E(G)},&t''_{\mid E(G')}&=s'_{\mid E(G')},&
	t''_{\mid I(G)}&=s_{\mid I(G)},&s''((f,e'))&=t'(e').
	\end{align*}
	The indexations of its input and output edges are given by:
	\begin{align*}
	\alpha''_{\mid I(G)}&=\alpha_{\mid I(G)},& \alpha''((f,e))&=\alpha(f),\\
	\beta''_{\mid O(G')}&=\beta'_{\mid O(G')},&\beta''((f,e))&=\beta'(e).
	\end{align*}
	Roughly speaking, $G'\circ G$ is obtained by gluing together the outgoing edges of $G$ and the incoming
	edges of $G'$ according to their indexation. 
		\begin{center}
		\begin{tikzpicture}[line cap=round,line join=round,>=triangle 45,x=0.5cm,y=0.5cm]
		\clip(-2.5,-4.) rectangle (1.,4.);
		\draw [line width=0.4pt] (-2.,1.)-- (0.5,1.);
		\draw [line width=0.4pt] (0.5,1.)-- (0.5,-1.);
		\draw [line width=0.4pt] (0.5,-1.)-- (-2.,-1.);
		\draw [line width=0.4pt] (-2.,-1.)-- (-2.,1.);
		\draw [->,line width=0.4pt] (-1.5,1.) -- (-1.5,3.);
		\draw [->,line width=0.4pt] (0.,1.) -- (0.,3.);
		\draw [->,line width=0.4pt] (-1.5,-3.) -- (-1.5,-1.);
		\draw [->,line width=0.4pt] (0.,-3.) -- (0.,-1.);
		\draw (-1.25,0.5) node[anchor=north west] {$G'$};
		\draw (-1.8,-3) node[anchor=north west] {$1$};
		\draw (-0.3,-3) node[anchor=north west] {$l$};
		\draw (-1.4,-2.2) node[anchor=north west] {$\ldots$};
		\draw (-1.8,4.2) node[anchor=north west] {$1$};
		\draw (-0.3,4.2) node[anchor=north west] {$m$};
		\draw (-1.4,2.) node[anchor=north west] {$\ldots$};
		\end{tikzpicture}
		$\substack{\displaystyle \circ\\ \vspace{3cm}}$		
		\begin{tikzpicture}[line cap=round,line join=round,>=triangle 45,x=0.5cm,y=0.5cm]
		\clip(-2.5,-4.) rectangle (0.7,4.);
		\draw [line width=0.4pt] (-2.,1.)-- (0.5,1.);
		\draw [line width=0.4pt] (0.5,1.)-- (0.5,-1.);
		\draw [line width=0.4pt] (0.5,-1.)-- (-2.,-1.);
		\draw [line width=0.4pt] (-2.,-1.)-- (-2.,1.);
		\draw [->,line width=0.4pt] (-1.5,1.) -- (-1.5,3.);
		\draw [->,line width=0.4pt] (0.,1.) -- (0.,3.);
		\draw [->,line width=0.4pt] (-1.5,-3.) -- (-1.5,-1.);
		\draw [->,line width=0.4pt] (0.,-3.) -- (0.,-1.);
		\draw (-1.25,0.5) node[anchor=north west] {$G$};
		\draw (-1.8,-3) node[anchor=north west] {$1$};
		\draw (-0.3,-3) node[anchor=north west] {$k$};
		\draw (-1.4,-2.2) node[anchor=north west] {$\ldots$};
		\draw (-1.8,4.2) node[anchor=north west] {$1$};
		\draw (-0.4,4.1) node[anchor=north west] {$l$};
		\draw (-1.4,2.) node[anchor=north west] {$\ldots$};
		\end{tikzpicture}
		$\substack{\displaystyle =\\ \vspace{3cm}}$
		\begin{tikzpicture}[line cap=round,line join=round,>=triangle 45,x=0.5cm,y=0.5cm]
		\clip(-2.5,-4.) rectangle (0.5,8.);
		\draw [line width=0.4pt] (-2.,1.)-- (0.5,1.);
		\draw [line width=0.4pt] (0.5,1.)-- (0.5,-1.);
		\draw [line width=0.4pt] (0.5,-1.)-- (-2.,-1.);
		\draw [line width=0.4pt] (-2.,-1.)-- (-2.,1.);
		\draw [->,line width=0.4pt] (-1.5,1.) -- (-1.5,3.);
		\draw [->,line width=0.4pt] (0.,1.) -- (0.,3.);
		\draw [->,line width=0.4pt] (-1.5,-3.) -- (-1.5,-1.);
		\draw [->,line width=0.4pt] (0.,-3.) -- (0.,-1.);
		\draw (-1.25,0.5) node[anchor=north west] {$G$};
		\draw (-1.8,-3) node[anchor=north west] {$1$};
		\draw (-0.3,-3) node[anchor=north west] {$k$};
		\draw (-1.4,-2.2) node[anchor=north west] {$\ldots$};
		\draw [line width=0.4pt] (-2.,5.)-- (0.5,5.);
		\draw [line width=0.4pt] (0.5,5.)-- (0.5,3.);
		\draw [line width=0.4pt] (0.5,3.)-- (-2.,3.);
		\draw [line width=0.4pt] (-2.,3.)-- (-2.,5.);
		\draw [->,line width=0.4pt] (-1.5,5.) -- (-1.5,7.);
		\draw [->,line width=0.4pt] (0.,5.) -- (0.,7.);
		\draw (-1.25,4.5) node[anchor=north west] {$G'$};
		\draw (-1.8,8.2) node[anchor=north west] {$1$};
		\draw (-0.4,8.1) node[anchor=north west] {$m$};
		\draw (-1.4,6.) node[anchor=north west] {$\ldots$};
		\end{tikzpicture}
		
		\vspace{-1.5cm}
	\end{center}
	\begin{example} Here is an example of vertical concatenation :\\
	
	\vspace{-1.8cm}
		\[
		\xymatrix{&2&1 \\ 
			&\rond{}\ar[u]&\rond{}\ar[l] \ar[u] \\
			1\ar[ru]&2\ar[u]&3\ar[u]&}\hspace{5mm}
		\substack{\vspace{2.5cm}\\ \displaystyle \circ}
		\xymatrix{&2&1&3\\
			&\rond{}\ar[u]\ar@/_1pc/[r]&\rond{}\ar[u]\ar[ru]\ar@/_1pc/[l]&\\
			1\ar[ru]&2\ar[u]&3\ar[u]&4\ar[lu]} 
						\substack{\vspace{2.5cm}\\ \hspace{.3cm}\displaystyle=}
		\xymatrix{&2&1&\\
			&\rond{}\ar[u]&\rond{}\ar[u]\ar[l]\\
			&\rond{}\ar[u]\ar@/_1pc/[r]&\rond{}\ar@/_.5pc/[lu]\ar[u]\ar@/_1pc/[l]&\\
			1\ar[ru]&2\ar[u]&3\ar[u]&4\ar[lu]}\]
	\end{example}
	\end{itemize}
We can now state and prove the main result of this section.
\begin{theo} \label{theo:ProP_graph}
	The family $\Gr=(\Gr_{k,l})_{k,l\in  \N}$, equipped with this $\sym\times \sym^{op}$-action and these horizontal and vertical concatenations,
	is a PROP.
\end{theo}
\begin{proof}
\begin{itemize}
	\item We check the \textbf{associativity} of $\circ$. Let $G$, $G'$ and $G''$ be three graphs with $o(G)=i(G')$ and $o(G')=i(G'')$. 
	The graphs $(G''\circ G')\circ G$ and $G''\circ (G'\circ G)$ may be different, but both are isomorphic to the graph $H$
	defined by:
	\begin{align*}
	V(H)&=V(G)\sqcup V(G')\sqcup V(G''),\\
	E(H)&=E(G)\sqcup E(G')\sqcup E(G'')\\
	&\sqcup \{(f,e)\in O(G)\times I(G'): \beta(f)=\alpha'(e)\}
	\sqcup \{(f,e)\in O(G')\times I(G''): \beta'(f)=\alpha''(e)\}\\
	&\sqcup \{(f,f',e)\in O(G)\times IO(G')\times I(G''): \beta(f)=\alpha'(f'),\beta'(f')=\alpha''(e)\},\\
	I(H)&=I(G)\sqcup \{(f,e)\in IO(G)\times I(G'): \beta(f)=\alpha'(e)\}\\
	&\sqcup \{(f,f',e)\in IO(G)\times IO(G')\times I(G''): \beta(f)=\alpha'(f'),\beta'(f')=\alpha''(e)\},\\
	O(H)&=O(G'')\sqcup \{(f,e)\in O(G')\times IO(G''): \beta'(f)=\alpha''(e)\}\\
	&\sqcup \{(f,f',e)\in O(G)\times IO(G')\times IO(G''): \beta(f)=\alpha'(f'),\beta'(f')=\alpha''(e)\},\\
	IO(H)&=\{(f,f',e)\in IO(G)\times IO(G')\times IO(G''): \beta(f)=\alpha'(f'),\beta'(f')=\alpha''(e)\},
	\end{align*}
	with immediate  source, target  and indexation maps. So $\circ$ induces an associative product  $\circ: \Gr(l,m)\otimes\Gr(k,l)\longrightarrow\Gr(k,m)$.  
	
	\item Let $I_1$ be the graph such that
	\begin{align*}
	V(I_1)&=E(I_1)=I(I_1)=O(I_1)=\emptyset,&
	IO(I_1)&=[1].
	\end{align*}
	We show that $I_1$ is the \textbf{unit} for $\circ$:  
	The indexation maps are both the identity of $[1]$. 
	For any integer $n\in  \N_0$, $I_1^{*n}$ is isomorphic to the graph $I_n$ such that
	\begin{align*}
	V(I_n)&=E(I_n)=I(I_n)=O(I_n))=\emptyset,&
	IO(I_n)&=[n],
	\end{align*}
	the indexation maps being both the identity of $[n]$. If $G$ is a graph and $k=i(G)$, then $H=G\circ I_k$
	is the graph such that:
	\begin{align*}
	V(H)&=V(G),&I(H)&=\{(\alpha(e),e): e\in I(G)\},\\
	E(H)&=E(G),&IO(H)&=\{(\alpha(e),e):e\in IO(G)\},\\
	O(H)&=O(G),&
	\end{align*}
	with immediate source, target  and indexation maps. This graph $H$ is isomorphic to $G$, via the isomorphism given by:
	\begin{align*}
	f_V&=\mathrm{Id}_{V(G)},&f_I((\alpha(e),e))&=e,\\
	f_E&=\mathrm{Id}_{E(G)},&f_{IO}((\alpha(e),e))&=e,\\
	f_O&=\mathrm{Id}_{O(G)}.
	\end{align*}
	Similarly, $I_l\circ G$ and $G$ are isomorphic. Hence, $I_1$ is the unit of $\circ$ in $\Gr$. 
	\item We check the \textbf{compatibility} of the horizontal and vertical concatenations.
	Let $G$, $G'$, $H$ and $H'$ be graphs such that $o(G)=i(H)$ and $o(G')=i(H')$. The graphs
	$(H*H')\circ (G*G')$ and $(H\circ G)* (H'\circ G')$ are both \cy{isomorphic} to the graph $K$, such that:
	\begin{align*}
	V(K)&=V(G)\sqcup V(G')\sqcup V(H)\sqcup V(H'),\\
	E(K)&=E(G)\sqcup E(G')\sqcup E(H)\sqcup E(H')\\
	&\sqcup \{(f,e)\in O(G)\times I(H): \beta(f)=\alpha'(e)\}\\
	&\sqcup \{(f,e)\in O(G')\times I(H'); \beta(f)=\alpha'(e)\},\\
	I(K)&=I(G)\sqcup I(G')\sqcup \{(f,e)\in IO(G)\times I(H); \beta(f)=\alpha'(e)\}\\
	&\sqcup \{(f,e)\in IO(G')\times I(H'); \beta(f)=\alpha'(e)\},\\
	O(K)&=O(H)\sqcup O(H')\sqcup \{(f,e)\in O(G)\times IO(H); \beta(f)=\alpha'(e)\}\\
	&\sqcup \{(f,e)\in O(G')\times IO(H'); \beta(f)=\alpha'(e)\},\\
	IO(K)&=\sqcup \{(f,e)\in IO(G)\times IO(H); \beta(f)=\alpha'(e)\}\\
	&\sqcup \{(f,e)\in IO(G')\times IO(H'); \beta(f)=\alpha'(e)\},
	\end{align*}
	with obvious source, target  and indexation maps. Hence, the vertical and the horizontal concatenations are compatible. 
	\item We check the \textbf{module} structure of $\Gr$ over the symmetric group.
	Let $G$ be a graph, $\sigma\in \sym_{o(G)}$ and $\tau\in \sym_{i(G)}$. We set:
	\begin{align}\label{eqGrmod}
	\sigma\cdot G&=(V(G),E(G),I(G),O(G),IO(G),s,t,\alpha,\sigma \circ\beta),\nonumber\\
	G\cdot \tau&=(V(G),E(G),I(G),O(G),IO(G),s,t,\tau^{-1}\circ\alpha,\beta).
	\end{align} 
	This induces a structure of $\sym\times \sym^{op}$-module over $\Gr$.
	
	\item Let us prove the compatibility of this action with the vertical concatenation. Let $G$ and $G'$ be two graphs
	such that $o(G)=i(G')$, and let $\sigma \in \sym_{o(G')}$, $\tau\in \sym_{o(G)}$,
	$\nu\in \sym_{i(G)}$. Clearly, the graphs $\sigma\cdot(G'\circ G)$ and $(\sigma\cdot G')\circ G$ are equal;
	the graphs $(G'\circ G)\cdot \nu$ and $G'\circ (G\cdot \nu)$ are equal. Let us compare the graphs
	$H=(G'\cdot \tau)\circ G$ and $H'=G'\circ (\tau \cdot G)$. Their set of vertices coincide. Moreover:
	\begin{align*}
	E(H)&=E(G)\sqcup E(G')\sqcup\{(f,e)\in O(G)\times I(G'): \beta(f)=\tau^{-1}\circ \alpha'(e)\},\\
	E(H')&=E(G)\sqcup E(G')\sqcup\{(f,e)\in O(G)\times I(G'): \tau\circ \beta(f)=\alpha'(e)\},
	\end{align*}
	so $E(H)=E(H')$. Similarly, $I(H)=I(H')$, $O(H)=O(H')$,  $IO(H)=IO(H')$ and $L(H)=L(H')$.  
	Moreover, the source, target and indexation maps are the same for $H$ and $H'$, so $H=H'$.
	\item 
	\ty{We prove} the \textbf{compatibility} of the $\sym\times \sym^{op}$-action with the horizontal composition.
	Let $G$ and $G'$ be two graphs, $\sigma \in \sym_{o(G)}$ and $\sigma'\in \sym_{o(G')}$. We put
	$H=(\sigma\cdot G)*(\sigma'\cdot G')$ and $H'=(\sigma \otimes \sigma')\cdot (G*G')$. 
	They have the same set of vertices, whether internal, input, output and input-output edges, and the source,
	target and indexation of output edges maps   for $H$ and $H'$ coincide.
	Both indexations of the set of output edges are   given by:
	\[\sigma''(e)=\begin{cases}
	\sigma \circ \beta(e)\mbox{ if }e\in O(G)\sqcup IO(G),\\
	o(G)+\sigma' \circ \beta'(e)\mbox{ if }e\in O(G')\sqcup IO(G').
	\end{cases}\]
	So $H=H'$.
	
	\cy{\item Finally we prove the commutativity of $*$.} Let $G$ and $G'$ be graphs. We set
	$H=c_{o(G),o(G')}\cdot (G*G')$ and $H'=(G'*G)\cdot c_{i(G),i(G')}$, where $c_{m, n}\in \sym_{m+n}$  was defined in (\ref{defcmn}). They have the same sets of vertices, internal,
	input, output and input-output edges, and the same source and target maps. The indexations maps are given by:
	\begin{align*}
	\alpha_H(e)&=\begin{cases}
	\alpha(e)+i(G')\mbox{ if }e\in I(G)\sqcup IO(G),\\
	\alpha'(e)\mbox{ if }e\in I(G')\sqcup IO(G'),
	\end{cases}\\
	\beta_H(e)&=\begin{cases}
	\beta(e)\mbox{ if }e\in O(G)\sqcup IO(G),\\
	\beta'(e)+o(G)\mbox{ if }e\in O(G')\sqcup IO(G'),
	\end{cases}\\ \\
	\alpha_{H'}(e)&=\begin{cases}
	\alpha'(e)\mbox{ if }e\in I(G')\sqcup IO(G'),\\
	\alpha(e)+i(G')\mbox{ if }e\in I(G)\sqcup IO(G),
	\end{cases}\\
	\beta_{H'}(e)&=\begin{cases}
	\beta'(e)+o(G)\mbox{ if }e\in O(G')\sqcup IO(G'),\\
	\beta(e)\mbox{ if }e\in O(G)\sqcup IO(G),
	\end{cases}
	\end{align*}
	so $H=H'$.
\end{itemize}
 
\end{proof}

\section{Freeness of the PROP of graphs} \label{sec:free_prop}

In this Section we prove the folklore result that a free PROP can be described in terms of graphs. Notice that a free PROP was already build in \cite{HaRo12}. However, this constrction of Hackney and Robertson is in the category of megagraphs. It is categorical in nature and thus not very well adapted to applications that are more down to earth, such as the ones we have in mind. Indeed, we will need a more explicit description of the free PROP.  

\subsection{Indecomposable graphs}

It is well-known that free operads can be described by trees (see \cite{ginzburg1994koszul} or \cite{bremner2016algebraic}). Thus it should be no suprise that free PROPs can be described by graphs. We now introduce the appropriate graphs.
\begin{defn}\label{def:indecomposable}
We call a  graph $G$ \textbf{indecomposable} if the \cy{four} following conditions hold:
\begin{enumerate}
\item $V(G)\neq \emptyset$.
\item $IO(G)=\emptyset$.
\item If $G'$ and $G''$ are two graphs such that $G=G'\circ G''$, then $V(G')=\emptyset$ or $V(G'')=\emptyset$.
\item If $G'$ and $G''$ are two graphs and $\sigma$, $\tau$ are two permutations such that
$G=\sigma \cdot (G'*G'')\cdot \tau$, then $V(G')=\emptyset$ or $V(G'')=\emptyset$.
\end{enumerate}
For any $k,l\in  \N$, the subspace of $\Gr(k,l)$ generated by isoclasses of indecomposable graphs $G$ with $i(G)=k$
and $o(G)=l$ is denoted by $\Gri(k,l)$.
\end{defn}
\begin{rk}
 The permutations in the fourth item of the definition of indecomposable graphs play an 
 important role: without them, one would allow for non connected graphs to be indecomposable, which can well happen when  
 the indexations of the inputs and outputs of the various connected components do not match. For 
 example, the graph 
\[\xymatrix{1&2&3&4\\
\rond{}\ar[u]\ar[ru]&&\rond{}\ar[u]\ar[ru]\\
1\ar[u]&3\ar[lu]&2\ar[u]}\]
would be indecomposable. Permuting  inputs we obtain 
\[\xymatrix{1&2&3&4\\
\rond{}\ar[u]\ar[ru]&&\rond{}\ar[u]\ar[ru]\\
1\ar[u]&2\ar[lu]&3\ar[u]}\]
which is decomposable. The same requirement does not arise for the vertical concatenation since 
one can write $\sigma.(P\circ Q).\tau=(\sigma.P)\circ (Q.\tau)=P'\circ Q'$.
\end{rk}
Clearly, $\Gri$ is a $\sym\times\sym^{op}$-submodule. Let us make this statement rigourous.
\begin{prop} \label{prop:ind_modules}
Let $G$ be a graph, $\sigma \in \sym_{o(G)}$ and $\tau\in \sym_{i(G)}$. Then
$G$ is indecomposable if, and only if, $\sigma\cdot G\cdot \tau$ is indecomposable. 
\end{prop}
\begin{proof}
Let us assume that $H=\sigma\cdot G\cdot \tau$ is indecomposable. Then $V(G)=V(H)\neq \emptyset$
and $IO(G)=IO(H)=\emptyset$. We are left to show that if $G=G'\circ G''$ or $G=\sigma\cdot(G'*G'')\cdot\tau$, then $V(G')=\emptyset$ or $V(G'')=\emptyset$.

Let us assume that $G=G'\circ G''$. Then:
\[H=\sigma\cdot(G'\circ G'')\cdot \tau=(\sigma\cdot G')\circ( G''\cdot \tau).\]
As $H$ is indecomposable, $V(G')=V(\sigma \cdot G')=\emptyset$ or $V(G'')=V(G''\cdot \tau)=\emptyset$. 
Let us now assume that $G=\sigma'\cdot (G'*G'')\cdot \tau'$. Then:
\[H=\sigma\cdot(\sigma'\cdot (G'*G'')\cdot \tau')\cdot \tau=((\sigma\sigma')\cdot G')*(G''\cdot (\tau\tau')).\]

As  $H$ is indecomposable, $V(G')= V \left((\sigma\sigma')\cdot G'\right)=\emptyset$ or $V(G'')=V(G''\cdot (\tau\tau'))=\emptyset$.\\

Conversely, if $G$ is indecomposable, then $G=\sigma^{-1}\cdot H\cdot \tau^{-1}$ is indecomposable,
so by the first point $H$ is indecomposable.
\end{proof}
In order to \ty{characterise} indecomposable graphs, we need to introduce some notations for restrictions of graphs.
\begin{notation} 
Let $G$ be a graph.
\begin{enumerate}
\item Let $J\subseteq V(G)$. We define (non uniquely due to the non uniqueness of the maps $\alpha'$ and $\beta'$) the graph $G_{\mid J}$ by:
\begin{align*}
V(G_{\mid J})&=J,\\
E(G_{\mid J})&=\{e\in E(G):  s(e)\in J, t(e)\in J\},\\
I(G_{\mid J})&=\{e\in I(G):  t(e)\in J\}\sqcup  \{e\in E(G): s(e)\notin J, t(e)\in J\},\\
O(G_{\mid J})&=\{e\in O(G):  s(e)\in J\}\sqcup  \{e\in E(G):  s(e)\in J, t(e)\notin J\},\\
IO(G_{\mid J})&=\cy{\emptyset}.
\end{align*}
The source and target maps are defined by:
\begin{align*} 
&\forall e\in E(G_{\mid J})\sqcup O(G_{\mid J}),&s_{G_{\mid J}}(e)&=s(e),\\
&\forall e\in E(G_{\mid J})\sqcup I(G_{\mid J}),&t_{G_{\mid J}}(e)&=t(e),
\end{align*}
The indexation of the input edges is any indexation map $\alpha'$ such that:
\begin{align*}
&\forall e,e'\in \left(I(G)\sqcup IO(G)\right)\cap \left(I(G_{\mid J})\sqcup IO(G_{\mid J})\right),&
\alpha'(e)<\alpha'(e')&\Longleftrightarrow \alpha(e)<\alpha(e').
\end{align*}
The indexation of the output edges is any indexation map $\beta'$ such that:
\begin{align*}
&\forall f,f'\in \left(O(G)\sqcup IO(G)\right)\cap \left(O(G_{\mid J})\sqcup IO(G_{\mid J})\right),&
\beta'(f)<\beta'(f')&\Longleftrightarrow \beta(f)<\beta(f').
\end{align*}
\item We denote by $\tilde{G}$ the graph obtained from $G$ by deleting all its input-output edge. Rigorously, $\tilde G$ is defined by:
% \begin{align*}
% &V(\tilde{G})=V(G),\qquad E(\tilde{G})=E(G),\\
% I(\tilde{G})=I(G)&,\qquad\qquad  O(\tilde{G})=O(G),\qquad\qquad  IO(\tilde{G})=\emptyset,\\
% &\tilde{s}=s,\qquad \qquad \qquad \tilde{t}=t.
% \end{align*}
\begin{align*}
I(\tilde{G})&=I(G),&O(\tilde{G})&=O(G),&IO(\tilde{G})&=\emptyset,\\
V(\tilde{G})&=V(G),&E(\tilde{G})&=E(G),\\
\tilde{s}&=s,&\tilde{t}&=t.
\end{align*}
The indexation of the input edges is the unique indexation map $\tilde{\alpha}$ such that:
\begin{align*}
&\forall e,e'\in I(G),&\tilde{\alpha}(e)<\tilde{\alpha}(e')&\Longleftrightarrow \alpha(e)<\alpha(e').
\end{align*}
The indexation of the output edges is the unique indexation map $\tilde{\beta}$ such that:
\begin{align*}
&\forall f,f''\in O(G),&\tilde{\beta}(f)<\tilde{\beta}(f')&\Longleftrightarrow \beta(f)<\beta(f').
\end{align*}
\end{enumerate}
\end{notation}
We also need to introduce the notions of paths and cycles of graphs.
\begin{defn}
Let $G$ be a graph.
\begin{enumerate}
\item A \textbf{path} in $G$ is a sequence $p=(e_1,\ldots,e_k)$ of internal edges of $G$ such that for any $i\in [k-1]$,
$t(e_i)=s(e_{i+1})$. The source of $p$ is $s(e_1)$ and its target is $t(e_k)$, and we shall say that 
$p$ is a path from $s(e_1)$ to $t(e_k)$ of length $k$. By convention, for any $x\in V(G)$, 
there exists a unique path from $x$ to $x$ of length $0$.
\item We shall say that a path $p$ is a \textbf{cycle} if its source and its target are equal 
and if its length is nonzero.
\end{enumerate}
\end{defn}
We consider oriented-pathwise connected components of graphs.
\begin{lemma} \label{lem:paths_subset}
Let $G$ be a graph such that $V(G)\neq \emptyset$. We denote by $\calO(G)$ the set of nonempty
subsets $I$ of $V(G)$ such that for any $x\in I$, for any $y\in V(G)$, if there exists a path in $G$ from $x$ to $y$,
then $y\in I$. Then:
\begin{enumerate}
\item If $I,J\in \calO(G)$, either $I\cap J=\emptyset$ or $I\cap J\in \calO(G)$.
\item For any $x \in V(G)$, there exists a unique element $\langle x\rangle \in \calO(G)$ which contains $x$
and is minimal for the inclusion. Moreover:
\[\langle x\rangle=\{y\in V(G): \mbox{ there exists a path in $G$ from $x$ to $y$}\}.\]
\end{enumerate}
\end{lemma}
Notice that, for any $x\in V(G)$, if $G_x$ is the connected component of $G$ that contains $x$ then  $\langle x\rangle\subseteq G_x$, but we do not necessarily have an equality, as the edges are \emph{oriented}.
\begin{proof}
\begin{enumerate}
 \item If $I\cap J\neq \emptyset$, let $x\in I\cap J$ and $y\in V(G)$ such that there exists a path in $G$ from $x$ to $y$. As $I,J\in \calO(G)$, $y\in I\cap J$, so $I\cap J\in \calO(G)$.
 \item Note that $V(G)\in \calO(G)$. Let $x\in V(G)$; by the first item, the following element of $\calO(G)$ is the minimal (for the inclusion) element of $\calO(G)$ that contains $x$:
\[\langle x\rangle=\bigcap_{I\in \calO(G), \: x\in I}I.\]
On the one hand,   a set $I$ in $\calO(G)$ contains   $x$ if and only if any path emanating from $x$ ends at an element of $I$. So it contains all the ending vertices of such paths and hence the set
\[I_x:=\{y\in V(G): \mbox{ there exists a path in $G$ from $x$ to $y$}\}.\] 
Thus, $I_x\subseteq \langle x\rangle$ \cy{since $\langle x\rangle\in\calO(G)$}. 
On the other hand, let $y\in I$ and $z\in V(G)$, such that there exists a path from $y$ to $z$ in $G$.
As there exists a path from $x$ to $y$ in $G$, there exists a path from $x$ to $z$, so $z\in I_x$.
Hence, $I_x$ lies in  $\calO(G)$ \ty{and} in turn contains $x$, so $\langle x\rangle\subseteq I_x$. 
\end{enumerate}
\end{proof}
As it turns out, every graph can be decomposed into indecomposable graphs.
\begin{prop}\label{prop:mindec}
Let $G$ be a graph such that $V(G)\neq \emptyset$. We denote by $J_1,\ldots,J_k$ the minimal elements
(for the inclusion) of the set  $\calO(G)$ of nonempty
subsets $I$ of $V(G)$ stable under paths as in Lemma \ref{lem:paths_subset}, and we set $G_i=\tilde{G}_{\mid J_i}$ for any $i\in [k]$. 
Then $G_1,\ldots,G_k$ are indecomposable graphs and there exists a graph $G_0$, an integer $p$ and a permutation $\gamma$ such that:
\begin{equation} \label{eq:min-dec}
 G=(\gamma \cdot (G_1*\ldots *G_k*I_p)\circ G_0).
\end{equation}
Such a decomposition will be called  \textbf{minimal}.
\end{prop}
\begin{rk}
 The minimal decomposition of a graph is not unique. It depends on the indexation of the minimal elements of $\calO(G)$
and of the choice of the indexation of their input and output edges.  Importantly, it only depends on that.
\end{rk}

\begin{proof}
By definition, $V(G_i)=J_i\neq \emptyset$ and $IO(G_i)=\emptyset$ for any $i$. 
Let us assume that $G_i=G'\circ G''$. If $V(G')\neq \emptyset$, then clearly $V(G')\in \calO(G_i)$ 
and, as $J_i\in \calO(G)$, we deduce that $V(G')\in \calO(G)$. As $J_i$ is minimal in $\calO(G)$,
$V(G')=J_i=V(G_i)$, so $V(G'')=\emptyset$. Similarly, if $G_i=\sigma\cdot (G'*G'')\cdot \tau$,
then $V(G')=\emptyset$ or $V(G'')=\emptyset$: we proved that $G_i$ is indecomposable.

Let us assume that $I=V(G_i)\cap V(G_j)\neq \emptyset$. Then \cy{by the first point of Lemma \ref{lem:paths_subset}} $I\in \calO(G)$ and, by minimality of $J_i$ and $J_j$,
$J_i=J_j=I$, so the $J_i$ are disjoint.

Let us set $K:=V(G)\setminus (J_1\cup\ldots \cup J_k)$ and $G':=G_{\mid K}$. As $J_1,\ldots, J_k$ lie in $\calO(G)$,
there is no internal edge of $G$ from a vertex of $G_i$ to a vertex of $G'$, and any outgoing edge of $G'$  is either
glued in $G$ to an incoming edge of $G_i$ or is an outgoing edge of $G$. Hence, 
there exist permutations
$\gamma$, $\sigma$ and $\tau$, and 
two integers $p:=|IO(G)|$ and $q:=|\{e\in I(G):t(e)\in J_1\cup\ldots \cup J_k\}|$ such that:
\[G=\gamma \cdot(G_1*\ldots*G_k*I_p)\circ (\sigma\cdot (I_q*G')\cdot \tau).\]
We conclude in taking $G_0=\sigma\cdot (I_q*G')\cdot \tau$. 
\end{proof}
We can finally fully \ty{characterise} indecomposable graphs.
\begin{prop}\label{propindecomposable}
Let $G$ be a graph such that $V(G)\neq \emptyset$ and $IO(G)=\emptyset$. The graph $G$ is indecomposable if, and only if, for any $x,y\in V(G)$, there exists a path from $x$ to $y$ in $G$.
\end{prop}
\begin{proof}
First notice that if $|V(G)|=1$ the result trivially holds. In the following, we  therefore assume that $|V(G)|\geq 2$.

Let $G=\gamma \cdot (G_1*\ldots *G_k*I_p)\circ G_0$ be a minimal decomposition of $G$. We prove the two directions of the implication separately.

$\Longrightarrow$:  Note that $V(G_1)\neq \emptyset$. As $G$ is indecomposable, 
necessarily  $V(G_0)=\emptyset$, and there exists a permutation
$\tau\in \sym_\cy{q}$ such that $G_0=I_q\cdot \tau$. Therefore, 
$G=\gamma \cdot (G_1*\ldots *G_k*I_p)\cdot \tau$. As $G$ is indecomposable, $k=1$ and $V(G)=V(G_1)=J_1$. Thus Proposition \ref{prop:mindec} implies that $V(G_1)$ is the unique minimal element of $\calO(G)$ for the inclusion. Furthermore, since any element of $\calO(G)$ is a subset of $V(G)=V(G_1)$, $V(G_1)$ is also the maximal element for the inclusion of $\calO(G)$. Consequently $\calO(G)$ is reduced to the singleton $\{V(G_1)\}=\{V(G)\}$. 
 
Therefore, by the second point of Lemma \ref{lem:paths_subset},
for any $x\in V(G)$, $\langle x\rangle=V(G)$, so for any $y\in V(G)$,
there exists a path from $x$ to $y$ in $G$.

$\Longleftarrow$: 
 If $k\geq 2$, there is no path in $G$ from any vertex of $G_1$ to any vertex of $G_2$, so $k=1$. 
Thus, $V(G_0)=\emptyset$
and there exists a permutation $\tau$ such that $G_0=I_p\cdot \tau$. We obtain that
\[G=\gamma\cdot (G_1*I_p)\cdot \tau.\]
As $IO(G)=\emptyset$, we obtain that  
$p=0$, so $G=\gamma\cdot G_\cy{1}\cdot \tau$ is indecomposable \cy{by Proposition \ref{prop:ind_modules}}.
\end{proof}
\begin{rk}
     Another way to formulate the above Proposition is to say that a graph $G$  is indecomposable if, and only if,  \cy{for any pair $(x,y)$ of its vertices,
     a cycle of  strictly positive length goes through $x$ and $y$}.
    \end{rk}
    
  \subsection{Free PROP} \label{subsec:freePROP}
  
  We now state and give a sketch of the proof of one of the main results of this section, namely the freeness of the PROP $\Gr$. 
\begin{theo} \label{thm:freeness_Gr}
Let $P$ be a PROP and $\phi:\Gri\longrightarrow P$ be a morphism of $\sym\times \sym^{op}$-modules.
There exists a unique PROP morphism $\Phi:\Gr\longrightarrow P$ such that $\Phi_{\mid \Gri}=\phi$. 
In other words, $\Gr$ is the free PROP generated by $\Gri$. 
\end{theo}

\begin{proof} We provide here a sketch of the proof,  and refer the reader to Subsection \ref{subsec:proof_free_PROP} for a full proof.
We define $\Phi(G)$ for any graph $G$ by induction on its number $n$ of vertices. If $n=0$,  there exists a permutation $\sigma\in \sym_k$ such that $G=\sigma\cdot I_k$. We set
\[\Phi(G)=\sigma\cdot I_k.\]
If $n>0$ and $G$ is indecomposable, we  set $\Phi(G)=\phi(G)$. Otherwise, 
let \[G=\gamma \cdot (G_1*\ldots *G_k*I_p)\circ G_0\]
be a  minimal decomposition of $G$.  As  $V(G_1)\neq \emptyset$, $|V(G_0)|<n$, we set:
\[\Phi(G)=\gamma \cdot (\phi(G_1)*\ldots *\phi(G_k)*I_p)\circ \Phi(G_0).\]
One can prove that this does not depend on the choice of the minimal decomposition of $G$ with the help of the PROP axioms
applied to $P$. Using minimal decompositions of vertical or horizontal concatenations of graphs,
one can show that $\Phi$ is compatible with both concatenations. 
\end{proof}
\cy{The most technical parts of the proof described above is to find the minimal decompositions of $G*G'$ and $G'\circ G$ in terms of the minimal decompositions of $G$ and $G'$.}

Now we see that we have made a step toward to our initial goal. Embedding Feynman graphs into our graphs given by Definition \ref{def:graph}, one obtains a Feynman rule $F:\Gr\mapsto\calA$ if one can show:
\begin{enumerate}
 \item That the target algebra $\calA$ carries a PROP structure,
 \item That it exists a morphism of $\sym\times \sym^{op}$-modules $f:\Gri\longrightarrow \calA$.
\end{enumerate}
While the first point can be conjectured for reasons that will be clarified below, the real issue lies with the second point. Indeed, an indecomposable graph can be arbitrarily complicated. Therefore if one were to use Theorem  \ref{thm:freeness_Gr} to build a Feynman rule, one would still has to prove that the building block of the Feynman rule, i.e.~the map $f$ is well-defined and well-behaved on an infinity of complicated diagrams. We see that PROPs do not simplify enough our work for practical purposes.

There is a workaround, namely to work only with graphs without closed loops. In \cite[Proposition 3.3.3]{ClFoPa20} it was shown that such graphs form a sub-PROP of $\Gr$ which is the free PROP generated by a infinite family of simple graphs with only one vertices. One could then use the fact that the internal edges of any graph can have their orientation changed such that the obtained graph has no closed loops. It would still be necessary to show that the obtained evaluation of the Feynman rule on a graph with oriented loops does not depend on the choice made to remove the oriented loops.

While it seems like a tractable task, we have chosen not to follow this direction for at least three reasons:

Firstly, closed loops in Feynman graphs have a tendency to create singularities that should then be removed by renormalisation. It is now a commonly accepted stance that these singularities actually carry relevant information about the structure of the Feynman graph being regularised. Permuting the internal edges would darken the meaning of the singularities.

Secondly, proving that the target algebra $\calA$ has a PROP structure seems to be an even more formidable task than proving that it has a TRAP (see Section \ref{sec:TRAP} below) structure. 

Thirdly, closed loops are so normal in QFT, so ingrained in the QFT culture that it seem a huge \emph{faux pas} to not try to deal with them. And this is not mentioning the fact that this way to removing closed loops severely lacks elegance.

Therefore, in \cite{ClFoPa20} and \cite{ClFoPa21} we have instead chosen to introduce a new category of structure more suited for our goals. Nonetheless, before introducing and studying TRAPs, let us give a \ty{detailed} proof of Theorem \ref{thm:freeness_Gr}.

%I will discuss in the next Section \ref{sec:deco_PROP} how one can enhance Theorem \ref{thm:freeness_Gr} in the case of decorated graphs. Not only is this important to understand how one can hope to use freeness results like \ref{thm:freeness_Gr} to build Feynman rules, but it is also a possible direction for generalisations of Multi Zeta Values to directed acyclic graphs.
\begin{rk}
 Equivalent freeness results (for graphs and cycleless graphs) also hold for decorated and planar graphs: see \cite[Theorems 3.4.3 and 4.1.4]{ClFoPa20}. These results are particularly important in the case of graphs decorated by a $\sym\times\sym^{\rm op}$-modules, where they allow to build an endofunctor of the category $\PROP$. Applied to the PROP $\Hom^c_V$ for a Fréchet nuclear space $V$, this construction gives rise to a generalised composition of operators (\cite[Corollaries 4.4.2 and 4.43]{ClFoPa20}). However, since according to the discussion above, PROPs are not the right structure for perturbative QFT, I have chosen not to include these results in this thesis.
 
%  Since planar graphs will not be used in the rest of this thesis, I have chosen not to include them here.
\end{rk}
% IL SEMBLE QUE J'EN AI BESOIN POUR LES ENDOFONCTEURS DE PROPs.

\subsection{Proof of freeness of $\Gr$} \label{subsec:proof_free_PROP}

Here is a full proof of Theorem \ref{thm:freeness_Gr}.
\begin{proof}
Let us define $\Phi(G)$ for any graph $G$ by induction on $n=|V(G)|$, such that for any permutation
$\sigma\in \sym_{i(G)}$, $\tau \in \sym_{o(G)}$,
\[\Phi(\sigma\cdot G\cdot \tau)=\sigma\cdot \Phi(G)\cdot \tau.\]
If $n=0$, there exists a unique permutation $\gamma\in \sym_k$ such that $G=\gamma\cdot I_k$. We put
\[\Phi(G)=\gamma\cdot I_k,\]
where we used the same notation $I_k$ for the units of \ty{$\Gr$} and $P$.

If $\sigma,\tau\in \sym_k$:
\begin{align*}
\Phi(\sigma \cdot G\cdot \tau)&=\Phi((\sigma\gamma)\cdot I_k\cdot \tau)\\
&=\Phi((\sigma\gamma \tau)\cdot I_k)\\
&=(\sigma\gamma \tau)\cdot I_k\\
&=\sigma\cdot(\gamma\cdot I_k)\cdot \tau\\
&=\sigma \cdot \Phi(G)\cdot \tau
\end{align*}
\cy{where we used the facts that $\Gr$ is an $\sym\times\sym^{op}$-module and that, for any $k\geq1$ and $\tau\in\sym_k$ we have $I_k\cdot\tau=(I_k\cdot\tau)\circ I_k=I_k\circ(\tau\cdot I_k)=\tau\cdot I_k$.}
Let us assume that $\Phi(G')$ is defined for any graph $G'$ such that $|V(G')|<n$.
Let \[G=\gamma \cdot (G_1*\ldots *G_k*I_p)\circ G_0\]
be a  minimal decomposition of $G$.  {If $G$ is indecomposable, we set $\Phi(G)=\phi(G)$. Otherwise,} as $V(G_1)\neq \emptyset$, $|V(G_0)|<n$. We put:
\[\Phi(G)=\gamma \cdot  (\phi(G_1)*\ldots *\phi(G_k)*I_p)\circ\Phi(G_0).\]
Let us first prove that this does not depend on the choice of the minimal decomposition of $G$.
Starting from a minimal decomposition of $G$, one obtains all possible minimal decompositions
of $G$ by a finite sequence of operations of type A and B:
\begin{itemize}
\item Type A: changing the indexations of the input and output edges of the graphs $G_i$.
We obtain a minimal decomposition $G=\gamma' \cdot (G'_1*\ldots*G'_k*I_p)\circ G'_0$,
such that there exists permutations $\alpha_i$, $\beta_i$, with:
\begin{align*}
G'_i&=\alpha_i\cdot G_i\cdot \beta_i,\\
G'_0&=(\beta_1^{-1}\otimes \ldots \otimes \beta_k^{-1}\otimes \mathrm{Id}_p)\cdot G_0,\\
\cy{\gamma}'&=\cy{\gamma}(\alpha_1^{-1}\otimes \ldots \otimes \alpha_k^{-1}\otimes \mathrm{Id}_p).
\end{align*}
\item Type B: permuting $G_l$ and $G_{l+1}$  for  $l\in [k-1]$. 
We obtain another minimal decomposition $G=\gamma' \cdot (G'_1*\ldots*G'_k*I_p)\circ G'_0$, with:
\begin{align*}
G'_i&=\begin{cases}
G_{l+1}\mbox{ if }i=l,\\
G_l\mbox{ if }\ty{i}=l+1,\\
G_i\mbox{ otherwise};
\end{cases}\\
G'_0&= (\mathrm{Id}_{i(G_1)+\ldots+i(G_{l-1})}\otimes c_{i(G_{l+1}),i(G_l)}\otimes \mathrm{Id}_{i(G_{l+2})+\ldots+i(G_k)+p})
\cy{\cdot} G_0,\\
\gamma'&=\gamma(\mathrm{Id}_{o(G_1)+\ldots+o(G_{l-1})}\otimes c_{o(G_l),o(G{l+1})}\otimes
 \mathrm{Id}_{o(G_{l+2})+\ldots+o(G_k)+p}).
\end{align*}
\end{itemize}
Let $G=\gamma \cdot (G'_1*\ldots*G'_k*I_p)\circ G'_0$ be another minimal decomposition of $G$.
Then it is enough to prove that
\[\gamma \cdot  (\phi(G_1)*\ldots *\phi(G_k)*I_p)\circ\Phi(G_0)
=\gamma' \cdot  (\phi(G_1')*\ldots *\phi(G_k')*I_p)\circ\Phi(G_0').\]
We can assume that \cy{this new minimal decomposition is obtained from the initial one} by a single operation of type A or of type B.
If it is of type A:
\begin{align*}
&\gamma' \cdot (\phi(G'_1)*\ldots*\phi(G'_k)*I_p)\circ \Phi(G'_0)\\
&=\gamma\cdot(\alpha_1^{-1}\otimes \ldots \otimes \alpha_k^{-1}\otimes \mathrm{Id}_p)\cdot
(\phi(\alpha_1\cdot G_1\cdot \beta_1)\otimes \ldots \otimes \phi(\alpha_k\cdot G_k\cdot \beta_k)*I_p)\\
&\circ \Phi((\beta_1^{-1}\otimes \ldots \otimes \beta_k^{-1}\otimes \mathrm{Id}_p)\cdot G_0)\\
&=\gamma\cdot(\alpha_1^{-1}\otimes \ldots \otimes \alpha_k^{-1}\otimes \mathrm{Id}_p)\cdot
(\alpha_1\cdot \phi(G_1)\cdot \beta_1\otimes \ldots \otimes\alpha_k\cdot  \phi(G_k)\cdot \beta_k*I_p)\\
&\circ((\beta_1^{-1}\otimes \ldots \otimes \beta_k^{-1}\otimes \mathrm{Id}_p)\cdot \Phi( G_0))\quad\cy{\text{by the induction hypothesis}}\\
&=\gamma(\alpha_1^{-1}\otimes \ldots \otimes \alpha_k^{-1}\otimes \mathrm{Id}_p)
(\alpha_1\otimes \ldots \otimes \alpha_k\otimes \mathrm{Id}_p)\cdot
(\phi(G_1)\otimes \ldots \otimes\phi(G_k)*I_p)\\
&\circ  (\beta_1\otimes \ldots \otimes \beta_k\otimes \mathrm{Id}_p)(\beta_1^{-1}\otimes \ldots \otimes \beta_k^{-1}\otimes \mathrm{Id}_p)\cdot \Phi(G_0)\quad\cy{\text{since $\phi$ is a morphism of $\sym\times\sym^{op}$-modules}}\\
&=\gamma\cdot(\phi(G_1)\otimes \ldots \otimes\phi(G_k)*I_p)\circ \Phi(G_0).
\end{align*}
If it is of type B:
\begin{align*}
&\gamma' \cdot (\phi(G'_1)*\ldots*\phi(G'_k)*I_p)\circ \Phi(G'_0)\\
&=\gamma (\mathrm{Id}_{o(G_1)+\ldots+o(G_{l-1})}\otimes c_{o(G_l),o(G_{l+1})}
\otimes \mathrm{Id}_{o(G_{l+2})+\ldots+o(G_k)+p})\\
& \cdot(\phi(G_1)*\ldots*\phi(G_{l+1})*\phi(G_l)*\ldots*\phi(G_k)*I_p)\\
&\circ\Phi((\mathrm{Id}_{i(G_1)+\ldots+i(G_{l-1})}\otimes c_{i(G_{l+1}),i(G_l)}\otimes \mathrm{Id}_{i(G_{l+2})+\ldots+i(G_k)+p})
\cdot G_0)\\
% &\gamma' \cdot (\phi(G'_1)*\ldots*\phi(G'_k)*I_p)\circ \Phi(G'_0)\\
&\cy{=}\gamma (\mathrm{Id}_{o(G_1)+\ldots+o(G_{l-1})}\otimes c_{o(G_l),o(G_{l+1})}
\otimes \mathrm{Id}_{o(G_{l+2})+\ldots+o(G_k)+p})\\
& \cdot(\phi(G_1)*\ldots*\phi(G_{l+1})*\phi(G_l)*\ldots*\phi(G_k)*I_p)\quad\cy{\text{by the induction hypothesis}}\\
&\circ(\mathrm{Id}_{i(G_1)+\ldots+i(G_{l-1})}\otimes c_{i(G_{l+1}),i(G_l)}\otimes \mathrm{Id}_{i(G_{l+2})+\ldots+i(G_k)+p})
\cdot\Phi(G_0)\\
&=\gamma \cdot (\phi(G_1)*\ldots*c_{o(G_l),o(G_{l+1})}\cdot(\phi(G_{l+1})*\phi(G_l))\cdot c_{i(G_{l+1}),i(G_l)}
*\ldots*\phi(G_k)*I_p)\circ\Phi(G_0)\\
&=\gamma \cdot (\phi(G_1)*\ldots*\phi(G_l)*\phi(G_{l+1}))*\ldots*\phi(G_k)*I_p)\circ\Phi(G_0)\quad\cy{\text{by commutativity of }*}\\
&=\gamma\cdot(\phi(G_1)\otimes \ldots \otimes\phi(G_k)*I_p)\circ \Phi(G_0).
\end{align*}
So $\Phi(G)$ is well-defined. Let $\sigma \in \sym_{o(G)}$ and $\tau\in \sym_{i(G)}$. We put $H=\sigma\cdot G\cdot \tau$. \cy{Then $H$ is indecomposable if, and only if, $G$ is (Proposition \ref{prop:ind_modules}). For these graphs, $\Phi$ coincides with $\phi$ which is a morphism of $\sym\times\sym^{op}$-modules by assumption. Otherwise, a}
 minimal decomposition of $H$ is given by:
\begin{align*}
H_0&=G_0\cdot \tau,&H_i&=G_i\mbox{ if }i\in [k],&\gamma'&=\sigma\gamma.
\end{align*}
Hence:
\begin{align*}
\Phi(H)&=\sigma\gamma\cdot (\phi(G_1)*\ldots*\phi(G_k)*I_p)\circ \Phi(G_0\cdot \tau)\\
&=\sigma\gamma\cdot (\phi(G_1)*\ldots*\phi(G_k)*I_p)\circ \Phi(G_0)\cdot \tau\quad\cy{\text{by the induction hypothesis}}\\
&=\sigma\cdot(\gamma\cdot (\phi(G_1)*\ldots*\phi(G_k)*I_p)\circ \Phi(G_0))\cdot \tau\\
&=\sigma\cdot \Phi(G)\cdot \tau.
\end{align*}
Consequently, we have defined a \cy{morphism of $\sym\times\sym^{op}$-modules} $\Phi:\Gr\longrightarrow P$, extending the morphism  $\phi$ of $\sym\times \sym^{op}$-modules.
Let us prove that it is compatible with both concatenations.

Let $G$ and $G'$ be two graphs. Let us prove that $\Phi(G*G')=\Phi(G)*\Phi(G')$ by induction on $n'=|V(G')|$.
 If $n'=0$, there exists $q\in\N$ and $\sigma' \in \sym_q$ such that $G'=\sigma'\cdot I_q$. 
We proceed by induction on $n=|V(G)|$. If $n=0$, there exists $p\in\N$ and $\sigma\in \sym_p$, such that 
$G=\sigma\cdot I_p$. Then $G*G'=(\sigma\otimes \sigma')\cdot I_{p+q}$, and:
\begin{align*}
\Phi(G*G')&=(\sigma\otimes \sigma')\cdot I_{p+q}\\
&=(\sigma\otimes \sigma')\cdot (I_p*I_q)\\
&=(\sigma \cdot I_p)*(\sigma'\cdot I_q)\quad\cy{\text{by compatibility of $\cdot$ and $*$}}\\
&=\Phi(G)*\Phi(G').
\end{align*} 
Otherwise (i.e. if $n>0$, still in the case $n'=0$), let $G=\gamma\cdot (G_1*\ldots*G_k*I_p)\circ G_0$ be a minimal decomposition of $G$.
A minimal decomposition of $G*G'$ is:
\[G*G'=(\gamma\otimes \sigma')\cdot (G_1*\ldots* G_k*I_{p+q}) \circ (G_0*I_q).\]
We can use the induction hypothesis on $G_0$: we have $|V(G_0)|<|V(G)|$ since otherwise $G$ the graphs $G_i$ would be of the form $\sigma_i.I_{p_i}$, which is in contradiction with the fact that the $G_i$ are indecomposable (Proposition \ref{prop:mindec}) and the definition of indecomposable graphs (Definition \ref{def:indecomposable}). So, using the induction hypothesis on $G_0$:
\begin{align*}
\Phi(G*G')&=(\gamma\otimes \sigma')\cdot (\phi(G_1)*\ldots*\phi(G_k)*I_{p+q})
\circ \Phi(G_0*I_q)\\
&=(\gamma\otimes \sigma')\cdot\phi(G_1)*\ldots*\phi(G_k)*I_p*I_q)
\circ (\Phi(G_0)*I_q)\\
&\cy{=(\gamma\otimes \sigma')\cdot\phi(G_1)*\ldots*\phi(G_k)\circ \Phi(G_0)*(I_q\circ I_q) \quad\text{by compatibility of $*$ and $\circ$}} \\
&=(\gamma\cdot (\phi(G_1)*\ldots*\phi(G_k)*I_p)\circ \Phi(G_0))*(\sigma'\cdot I_q)\quad\cy{\text{by compatibility of $*$ and $\cdot$}}\\
&=\Phi(G)*\Phi(G').
\end{align*}
So the result holds at rank $n'=0$.

Let us assume that the results hold at any rank  $\cy{|V(G')|}<n'$. Let us consider minimal decompositions of $G$ and $G'$:
\begin{align*}
G&=\gamma \cdot (G_1*\ldots*G_k*I_p)\circ G_0,&G'&=\gamma' \cdot  (G'_1*\ldots*G'_l*I_q)\circ G'_0,
\end{align*}
with the convention $k=0$ if $V(G)=\emptyset$. \cy{To} obtain a minimal decomposition of $G*G'$, \cy{let us set
\begin{equation*}
 a:=G_1*\ldots*G_k,\qquad b:=G'_1*\ldots*G'_l.
\end{equation*}
Then we have }
\begin{align*}
G*G' & \cy{=\big(\gamma \cdot (a*I_p)\circ G_0\big)*\big(\gamma' \cdot  (b*I_q)\circ G'_0\big)} \\
 & \cy{= \big[\big(\gamma \cdot (a*I_p)\big)*\big(\gamma' \cdot  (b*I_q)\big)\big]\circ(G_0* G_0') \quad\text{by compatibility of $*$ and $\circ$}} \\
 & \cy{=\big[(\gamma\otimes\gamma')\cdot(a*I_p*b*I_q)\big]\circ(G_0* G_0') \quad\text{by compatibility of $\cdot$ and $*$}} \\
 & \cy{=\big[(\gamma\otimes\gamma')\cdot\big(a*(c_{p,o(b)}\cdot (b*I_p)\cdot c_{i(b),p})*I_q\big)\big]\circ(G_0* G_0')\quad\text{by commutativity of $*$}} \\
 & \cy{=\big[(\gamma\otimes\gamma')(\Id_{o(a)}\otimes c_{p,o(b)}\otimes\Id_{q})\cdot (a*b*I_{p+q})\cdot(\Id_{i(a)}\otimes c_{i(b),p}\otimes\Id_q)\big]\circ(G_0* G_0')} 
\end{align*}
\cy{by compatibility of $\cdot$ and $*$. Using the compatibility of $\cdot$ and $\circ$ and noticing that the numbers of inputs and outputs of $a$ (resp. $b$) are given by the sum of the numbers of inputs and outputs of the $G_i$s (resp. of the $G'_j$s) we obtain the following minimal decomposition of $G*G'$:}
\begin{align*}
G*G'  
&=(\gamma\otimes \gamma')(\mathrm{Id}_{o(G_1)+\ldots+o(G_k)}\otimes c_{p,o(G'_1)+\ldots+o(G'_l)}\otimes \mathrm{Id}_q)\\
&\cdot (G_1*\ldots*G_k*G'_1*\ldots*G'_l*I_{p+q})\\
&\circ ((\mathrm{Id}_{i(G_1)+\ldots+i(G_k)}\otimes c_{i(G'_1)+\ldots+i(G'_l),p}\otimes \mathrm{Id}_q)) \cdot (G_0*G'_0)).
\end{align*}
We apply the induction assumption $\Phi(G*G')=\Phi(G)*\Phi(G')$ for  $|V(G')|<n'$  to  $G'_0$ whose number of vertices is smaller than that of 
$G'$ and hence smaller than $n'$ for the same reason than the one given in the case $n'=0$.
\begin{align*}
\Phi(G*G')
&=(\gamma\otimes \gamma')(\mathrm{Id}_{o(G_1)+\ldots+o(G_k)}\otimes c_{p,o(G'_1)+\ldots+o(G'_l)}\otimes \mathrm{Id}_q)\\
&\cdot (\phi(G_1)*\ldots*\phi(G_k)*\phi(G'_1)*\ldots*\phi(G'_l)*I_{p+q})\\
&\circ \cy{\Phi\big((\mathrm{Id}_{i(G_1)+\ldots+i(G_k)}\otimes c_{i(G'_1)+\ldots+i(G'_l),p}\otimes \mathrm{Id}_q) \cdot (G_0*G'_0)\big)}\\
&=(\gamma\otimes \gamma')(\mathrm{Id}_{o(G_1)+\ldots+o(G_k)}\otimes c_{p,o(G'_1)+\ldots+o(G'_l)}\otimes \mathrm{Id}_q)\\
&\cdot (\phi(G_1)*\ldots*\phi(G_k)*\phi(G'_1)*\ldots*\phi(G'_l)*I_{p+q})\\
&\circ (\mathrm{Id}_{i(G_1)+\ldots+i(G_k)}\otimes c_{i(G'_1)+\ldots+i(G'_l),p}\otimes \mathrm{Id}_q)) \cdot\Phi(G_0*G'_0)\\
&=(\gamma\otimes \gamma')(\mathrm{Id}_{o(G_1)+\ldots+o(G_k)}\otimes c_{p,o(G'_1)+\ldots+o(G'_l)}\otimes \mathrm{Id}_q)\\
&\cdot (\phi(G_1)*\ldots*\phi(G_k)*\phi(G'_1)*\ldots*\phi(G'_l)*I_p*I_q)\\
&\circ (\mathrm{Id}_{i(G_1)+\ldots+i(G_k)}\otimes c_{i(G'_1)+\ldots+i(G'_l),p}\otimes \mathrm{Id}_q)) \cdot(\Phi(G_0)*\Phi(G'_0))\\
&=(\gamma\otimes \gamma')\cdot
(\phi(G_1)*\ldots*\phi(G_k)*I_p*\phi(G'_1)*\ldots*\phi(G'_l)*I_q)\circ(\Phi(G_0)*\Phi(G'_0))\\
&=(\gamma\cdot(\phi(G_1)*\ldots*\phi(G_k)*I_p)\circ \Phi(G_0))
*(\gamma' \cdot(\phi(G'_1)*\ldots*\phi(G'_l)*I_q)\circ(\Phi(G'_0))\\
&=\Phi(G)*\Phi(G').
\end{align*}
So for any graphs $G$ and $G'$ we have $\Phi(G*G')=\Phi(G)*\Phi(G')$. \\

Now once again, let $G$, $G'$ be two graphs. Let us prove that $\Phi(G'\circ G)=\Phi(G')\circ \Phi(G)$. 
We proceed by induction on $n=|V(G)|+|V(G')|$. If $V(G')=\emptyset$,
there exists \cy{$p\in\N$ and} a permutation $\sigma\in \sym_p$ such that $G'=\sigma\cdot I_\cy{p}$. Then:
\begin{align*}
\Phi(G'\circ G)&=\Phi(\sigma\cdot G)=\sigma\cdot \Phi(G)=\sigma \cdot (I_p\circ \Phi(G))=(\sigma\cdot I_p)\circ \Phi(G)
=\Phi(G')\circ \Phi(G).
\end{align*}
Similarly, if $V(G)=\emptyset$, $\Phi(G'\circ G)=\Phi(G')\circ \Phi(G)$. Thus we have proved the \ty{case} $n=0$.

Let us assume that it holds up to rank $N$, and take $G$ and $G'$ such that $n=N+1$. By the previous argument, 
$\Phi(G\circ G')=\Phi(G)\circ \Phi(G')$ if $V(G)=\emptyset$ or $V(G')=\emptyset$.
We now assume that $V(G)$ and $V(G')$ are nonempty. \cy{Let us work out the case where both $G$ and $G'$ are indecomposable. Then $\Phi(G)=\phi(G)$ and $\Phi(G')=\phi(G')$. Set $H:=G'\circ G$. This is a minimal decomposition (Equation \eqref{eq:min-dec}) of $H$ with $\gamma$ being the identity, $k=1$, $p=0$, $H_1=G'$ and $H_0=G$. Then we have by definition of $\Phi$
\begin{equation*}
 \Phi(H) = \phi(G')\circ\Phi(G)=\Phi(G')\circ\Phi(G).
\end{equation*}
For the general case, let}
us consider minimal decompositions of $G$ and $G'$:
\begin{align*}
G&=\gamma \cdot(G_1*\ldots*G_k*I_p)\circ G_0&
G'&=\gamma' \cdot  (G'_1*\ldots*G'_l*I_q)\circ G'_0.
\end{align*}
In $G'\circ G$, the output edges of $G$ are glued with an input or an input-output edge of $G'$.
In particular, for any $i$,  output edges of $G_i$ are glued with input edges or input-output edges of $G'$.
Up to a change of indexation \cy{of the $G_i$s} we assume that there is some $r$ such that:
\begin{itemize}
\item For all $i\leq r$, at least one output edge of $G_i$ is glued with an input edge of $G'$.
\item If $i>r$, all output edges of $G_i$ are glued with input-output edges of $G'$. 
\end{itemize}
\cy{Let us compute the minimal decomposition of $H:=G'\circ G$. 
% We set to simplify notations: 
% \begin{equation*}
%  a=G_1*\cdots *G_r,\qquad b=G_{r+1}*\cdots *G_k,\qquad c=G'_1*\cdots *G'_l.
% \end{equation*}
% Then using the commutativity of $*$ we can write $G=\gamma\cdot(a*I_p*b)\circ G_0$
}

\textit{A particular sub-case}. We assume that the input-output edges of $G'$ glued with an output of one of the 
$G_i$ are the input edges of $G'$ with the greatest indices. 
\cy{Then $G'_0=G''_0*I_{o(G_{r+1})+\cdots+o(G_k)}$ (and $q\geq o(G_{r+1})+\cdots+o(G_k)$) and $\gamma$ decomposes into two permutations
\begin{equation*}
 \gamma=\gamma_1\otimes\gamma_2, \qquad \gamma_1\in\sym_{o(G_{1})+\ldots+o(G_r)+p},\qquad\gamma_2\in\sym_{o(G_{r+1})+\cdots+o(G_k)}.
\end{equation*}
Let us write a minimal decomposition of $H$
\begin{equation} \label{eq:min_dec_composition}
 H=\gamma''\cdot(H_1*\cdots *H_m*I_N)\circ H_0.
\end{equation}
Then by minimality of the $H_i$, the hypothesis of the sub-case directly gives
\begin{equation*}
 (H_1,\cdots,H_m)=(G'_1,\cdots,G'_l,G_{r+1},\cdots,G_k).
\end{equation*}
Indeed, for $i\leq r$, there exists a path from $v\in V(G_i)$ to $v'\in V(G'_0)\cup V(G'_j)$, so $G_i$ is not minimal (Lemma \ref{lem:paths_subset}). This implies $N=q-o(G_{r+1})-\cdots-o(G_k)$ which is indeed positive.

Now, by the hypothesis of the subcase, the outputs indices of $\gamma_2\cdot(G_{r+1}*\cdots *G_k)$ are the highest. Therefore this term can go through $G''_0*I_{o(G_{r+1})+\cdots+o(G_k)}$ and we obtain
\begin{align*}
 H= \gamma'\circ(\Id_{o(G'_1)+\cdots+o(G'_l)+N} & \otimes\gamma_2) \cdot(G'_1*\cdots *G'_l*I_N*G_{r+1}*\cdots *G_k)\circ(G''_0*I_{i(G_{r+1}+\cdots+i(G_k)}) \\
 & \circ(\gamma_1\otimes\Id_{i(G_{r+1}+\cdots+i(G_k)})\cdot(G_1*\cdots *G_r*I_p*I_{i(G_{r+1}+\cdots+i(G_k)})\circ G_0.
\end{align*}
One can straightforwardly check that this expression is well-defined. We have almost a minimal decomposition for $H$, we simply need to exchange $I_N$ and $G_{r+1}*\cdots *G_k$. Using the commutativity of $*$ we have
\begin{equation*}
 I_N*G_{r+1}*\cdots *G_k=c_{o(G_{r+1})+\cdots+o(G_k),N}\cdot(G_{r+1}*\cdots *G_k*I_N)\cdot c_{N,G_{r+1}+\cdots+i(G_k)}.
\end{equation*}
Putting the $c_{o(G_{r+1})+\cdots+o(G_k),N}$ out on the left and $c_{N,i(G_{r+1})+\cdots+i(G_k)}$ out on the right we obtain that the minimal decomposition \eqref{eq:min_dec_composition} is given by
\begin{align*}
 \gamma'' = \gamma'\circ(\Id_{o(G'_1)+\cdots+o(G'_l)+N} \otimes\gamma_2) & \circ(\Id_{o(G'_1)+\cdots+o(G'_l)}\otimes c_{o(G_{r+1})+\cdots+o(G_k),N}), \\
 (H_1,\cdots,H_m) & =(G'_1,\cdots,G'_l,G_{r+1},\cdots,G_k), \\
 N =q & -o(G_{r+1})-\cdots-o(G_k), \\
 H_0 = (\Id_{i(G'_1)+\cdots+i(G'_l)} & \otimes c_{N,i(G_{r+1})+\cdots+i(G_k)})\cdot(G''_0*I_{i(G_{r+1}+\cdots+i(G_k)}) \\
 \circ(\gamma_1\otimes & \Id_{i(G_{r+1}+\cdots+i(G_k)})\cdot(G_1*\cdots *G_r*I_{p+i(G_{r+1}+\cdots+i(G_k)})\circ G_0.
\end{align*}
}
% \cy{Since the $G_{r+1},\cdots,G_k$ are plugged only on in-out edges of $G'$ and since those are the inputs of $G'$ with the highest indices, we can write $G'_0=G_0''*I_{o(b)}$.
% 
% By the hypothesis of the sub-case, it exist $\gamma_1\in\sym_{o(a)+p}$ and $\gamma_2\in\sym_{o(b)}$ such that $\gamma=\gamma_1\otimes\gamma_2$. Then we can write
% \begin{equation*}
%  H=\big[\gamma'\cdot(c*I_q)\circ(G_0''*I_{o(b)})\big]\circ\big[(\gamma_1\otimes\gamma_2)\cdot(a*b*I_p)\circ G_0\big]
% \end{equation*}
% }
% Then $G'_0=G''_0*I_{s+o(G_{r+1})+\ldots+o(G_k)}$
% for a certain $s$. Moreover, $\gamma$ can be written as $\gamma=\gamma_1\otimes \gamma_2$, such that a minimal
% decomposition of $H=G'\circ G$ is given by:
% \begin{align*}
% H_0&=(\mathrm{Id}_{i(G'_1)+\ldots+i(G'_l)}\otimes c_{i(G_{r+1})+\ldots+i(G_k)+p,s})\cdot G'_0\\
% &\circ \ty{\gamma_1}\cdot (G_1*\ldots*G_r *I_{i(G_{r+1})+\ldots+i(G_k)+p})\cy{\circ} G_0,\\
% (H_1,\ldots,H_m)&=(G'_1,\ldots,G'_l,G_{r+1},\ldots,G_k),\\
% \gamma''&=\gamma'(\mathrm{Id}_{o(G'_1)+\ldots+o(G'_l)}\otimes c_{s,o(G_{r+1})+\ldots+o(G_k)+p})
% (\mathrm{Id}_{o(G'_1)+\ldots+o(G'_l)+s}\otimes \gamma_2).
% \end{align*}
\cy{Since $l\geq1$ and the $G_i'$s are indecomposable, we have $|V(H_0)|=|V(G_1)|+\ldots+|V(G_r)|+|V(G_0)|+|V(G_0')|<n$, we can apply} the induction hypothesis on \cy{$H_0$. Then we can undo all each of the manipulations we performed to obtain the minimal decomposition of $H_0$.}
\begin{align*}
 \Phi(H) & = \cy{\gamma'\circ(\Id_{o(G'_1)+\cdots+o(G'_l)+N} \otimes\gamma_2) \circ(\Id_{o(G'_1)+\cdots+o(G'_l)}\otimes c_{o(G_{r+1})+\cdots+o(G_k),N})} \\
% \Phi(H)&=\gamma'(\mathrm{Id}_{o(G'_1)+\ldots+o(G'_l)}\otimes c_{s,o(G_{r+1})+\ldots+o(G_k)+p})
% (\mathrm{Id}_{o(G'_1)+\ldots+o(G'_l)+s}\otimes \gamma_2)\\
&\cdot((\phi(G'_1)*\ldots *\phi(G'_l)*\phi(G_{r+1}\ty{)}*\ldots*\phi(G_k))\\
& \cy{\circ\Phi\Big((\Id_{i(G'_1)+\cdots+i(G'_l)}  \otimes c_{N,i(G_{r+1})+\cdots+i(G_k)})\cdot(G''_0*I_{i(G_{r+1}+\cdots+i(G_k)})} \\
 & \cy{\circ(\gamma_1\otimes  \Id_{i(G_{r+1}+\cdots+i(G_k)})\cdot(G_1*\cdots *G_r*I_{p+i(G_{r+1}+\cdots+i(G_k)})\circ G_0\Big)} \quad\cy{\text{by definition of }\Phi}\\
% &\circ \Phi((\mathrm{Id}_{i(G'_1)+\ldots+i(G'_l)}\otimes c_{i(G_{r+1})+\ldots+i(G_k)+p,s})\cdot G'_0\\
% &\circ (\gamma_1\cdot (\cy{G_1*\ldots*G_r }*I_{i(G_{r+1})+\ldots+i(G_k)+p})\cdot G_0)
% \quad\cy{\text{by definition of }\Phi}\\
& \cy{=\gamma'\circ(\Id_{o(G'_1)+\cdots+o(G'_l)+N} \otimes\gamma_2) \circ(\Id_{o(G'_1)+\cdots+o(G'_l)}\otimes c_{o(G_{r+1})+\cdots+o(G_k),N})} \\
% &=\gamma'(\mathrm{Id}_{o(G'_1)+\ldots+o(G'_l)}\otimes c_{s,o(G_{r+1})+\ldots+o(G_k)+p})
% (\mathrm{Id}_{o(G'_1)+\ldots+o(G'_l)+s}\otimes \gamma_2)\\
&\cdot((\phi(G'_1)*\ldots *\phi(G'_l)*\phi(G_{r+1}\ty{)}*\ldots*\phi(G_k))\\
&\cy{\circ(\Id_{i(G'_1)+\cdots+i(G'_l)}  \otimes c_{N,i(G_{r+1})+\cdots+i(G_k)})\cdot(\Phi(G''_0)*I_{i(G_{r+1}+\cdots+i(G_k)})} \\
& \cy{\circ(\gamma_1\otimes  \Id_{i(G_{r+1}+\cdots+i(G_k)})}\cdot (\phi(G_1)*\cdots*\phi(G_r)*I_{p+i(G_{r+1})+\cdots+i(G_k)}\cy{\circ}\Phi(G_0) \quad\cy{\text{by the ind. hyp.}}\\
% 
% 
% 
% &\circ (\mathrm{Id}_{i(G'_1)+\ldots+i(G'_l)}\otimes c_{i(G_{r+1})+\ldots+i(G_k)+p,s})\cdot \Phi(G'_0)\\
% &\circ (\gamma_1\cdot (\phi(G_1)*\ldots*\phi(G_r) *I_{i(G_{r+1})+\ldots+i(G_k)+p})\cdot \Phi(G_0)) \quad\cy{\text{by the induction hypothesis }}\\
&=(\gamma\cdot (\phi(G_1)*\ldots*\phi(G_k)*I_p)\circ\Phi(G_0))
\circ(\gamma' \cdot (\phi(G'_1)*\ldots*\phi(G'_l)*I_q)\circ\Phi(G'_0))\\
&=\Phi(G)\circ \Phi(G').
\end{align*}
\cy{For the second equality, we have also used that the $G_i$s are indecomposable and thus that $\Phi(G_i)=\phi(G_i)$ by definition of $\Phi$. For the third equality we make the same steps than those giving the minimal decomposition of $G'\circ G$.} \\

\textit{General case}. There exists a permutation $\sigma$, such that if $H'=G'\cdot \sigma^{-1}$
and $H=\sigma\cdot G$, then the condition of the particular sub-case holds for $(H,H')$. Then:
\begin{align*}
\Phi(G'\circ G)&=\Phi((G'\cdot \sigma^{-1}\sigma)\circ G)\\
&=\Phi((G'\cdot \sigma^{-1})\circ (\sigma\cdot G))\\
&=\Phi(G'\cdot \sigma^{-1})\circ\Phi(\sigma\cdot G)\quad\text{since the subcase holds}\\
&=(\Phi(G')\cdot \sigma^{-1})\circ(\sigma\cdot\Phi(G))\\
&=(\Phi(G')\cdot \cy{(\sigma^{-1}\circ\sigma))\circ}\Phi(G)\\
&=\Phi(G')\cy{\circ} \Phi(G).
\end{align*}
Finally, for any graphs $G$ and $G'$, $\Phi(G\circ G')=\Phi(G)\circ \Phi(G')$ and $\Phi$ is a morphism of PROPs.
\end{proof}

\section{The category of TRAPs} \label{sec:TRAP}

TRAPs (TRAces and Permutations) are objects introduced in \cite{ClFoPa20} in order to sove the issues with PROPs discussed. Their key feature is that they contain ``trace maps'' that allow to build closed loops. We start by describing the category of TRAPs.

\subsection{Definition}

There exists a categorical definition of TRAPs, but this definition requires the introduction of more sophisticated concept. Indeed, TRAPs cannot (to the best of the author's knowledge) be described as a category, but instead as an algebra over a monad. Notice that this is true for PROPs as well. However, this definition is not very practical for applications. It was shown in \cite{ClFoPa21} that the category of unital TRAPs is isomorphic to the category of wheeled PROPs as introduced in \cite{Merkulov2006} (see also \cite{JY15}). 

I will not include this result here and focus instead on non unitary TRAPs. Therefore, I will state only the pedestrian definition of TRAPs. This pedestrian definition largely overlaps with the Definition \ref{def:prop} of PROPs. The main difference is that the vertical concatenations (or compositions) are replaced with partial \ty{trace} maps. We will show later (in Definition-Proposition \ref{propverticalconcatenation}) that one can build these compositions from the partial traces maps.
\begin{defn} \label{def:Trap}
A \textbf{TRAP} (TRAces and Permutations) is a family $P=(P(k,l))_{k,l\geqslant 0}$ of \cy{vector spaces}, equipped with the following structures:
\begin{enumerate}
\item $P$ is a $\sym\times \sym^{op}$-module.

\item $P$ admits an {\bf horizontal concatenation}, that is to say that for any $(k,l,k',l')$ in $ \N^4$
, there is a map
\begin{align*}
*:&\left\{\begin{array}{rcl}
P(k,l)\times P(k',l')&\longrightarrow&P(k+k',l+l')\\
(p, p')&\longmapsto&p*p',
\end{array}\right.
\end{align*}
which is associative, commutative, unital and equivariant under the action of $\sym\times\sym$.

\item For any $k,l\geqslant 1$, for any $i\in [k]$, $j\in [l]$, there is a map
\begin{equation}\label{eq:partialtrace}
t_{i,j}:\left\{\begin{array}{rcl}
P(k,l)&\longrightarrow&P(k-1,l-1) \\
p&\longmapsto&t_{i,j}(p),
\end{array}\right.
\end{equation}
called the \textbf{partial trace map}, such that:
\begin{enumerate}
\item (Commutativity). For any $k,l\geqslant 2$, for any $i\in [k]$, $j\in [l]$, $i'\in [k-1]$, $j'\in [l-1]$,
\begin{align*}
t_{i',j'}\circ t_{i,j}&=\begin{cases}
t_{i-1,j-1}\circ t_{i',j'}\mbox{ if }i'<i,\: j'<j,\\
t_{i,j-1}\circ t_{i'+1,j'}\mbox{ if }i'\geqslant i,\: j'<j,\\
t_{i-1,j}\circ t_{i',j'+1}\mbox{ if }i'<i,\: j'\geqslant j,\\
t_{i,j}\circ t_{i'+1,j'+1}\mbox{ if }i'\geqslant i,\: j'\geqslant j.
\end{cases}
\end{align*}
\item (Equivariance). For any $k,l\geqslant 1$, for any $i\in [k]$, $j\in [l]$,
$\sigma \in \sym_l$, $\tau\in \sym_k$, for any $p\in P(k,l)$,
\[t_{i,j}(\sigma\cdot p\cdot \tau)=l_j(\sigma)\cdot (t_{\tau(i),\sigma^{-1}(j)}(p))\cdot r_i(\tau),\]
with the following notation: if $\alpha\in \sym_n$ and $q\in [n]$, then 
$\left(l_q(\alpha),r_q(\alpha)\right)\in \sym^2_{n-1}$ are defined by
\begin{align*}
\label{defalphak}  
l_q(\alpha)(s)&=\begin{cases}
\alpha(s)\mbox{ if }s<\alpha^{-1}(q)\mbox{ and }\alpha(s)<q,\\
\alpha(s)-1\mbox{ if }s<\alpha^{-1}(q) \mbox{ and }\alpha(s)>q,\\
\alpha(s+1)\mbox{ if }s\geq \alpha^{-1}(q) \mbox{ and }\alpha(s+1)<q,\\
\alpha(s+1)-1\mbox{ if }s\geq \alpha^{-1}(q) \mbox{ and }\alpha(s+1)>q,
\end{cases}\\ \\
\nonumber  r_q(\alpha)(s)&=\begin{cases}
\alpha(s)\mbox{ if }s<q\mbox{ and }\alpha(s)<\alpha(q),\\
\alpha(s)-1\mbox{ if }s<q \mbox{ and }\alpha(s)>\alpha(q),\\
\alpha(s+1)\mbox{ if } s\geq q \mbox{ and }\alpha(s+1)<\alpha(q),\\
\alpha(s+1)-1\mbox{ if }s\geq q \mbox{ and }\alpha(s+1)>\alpha(q).
\end{cases}
\end{align*}
In other words, if we represent $\alpha$ by the word $\alpha(1)\ldots \alpha(n)$, then 
$l_q(\alpha)$ is represented by the word obtained by deleting the letter $q$ and subtracting $1$ to the letters
$>q$, whereas $r_q(\alpha)$ is represented by the word obtained by deleting the letter $\alpha(q)$ 
and substracting $1$ to the letters $>\alpha(q)$. Note that $r_q(\alpha)=l_{\alpha(q)}(\alpha)$.

\item (Compatibility with the horizontal concatenation). 
For any $k,l,k',l'\geq 1$, for any $i\in [k+l]$, $j\in [k'+l']$, for any $p\in P(k,l)$, $p'\in P(k',l')$:
\[t_{i,j}(p*p')=\begin{cases}
t_{i,j }(p)*p'\mbox{ if }i\leqslant k,\: j\leqslant l,\\
p* t_{i-k,j-l}(p')\mbox{ if }i>k,\: j>l.
\end{cases}\]
\end{enumerate}\end{enumerate}
\end{defn}
The first two points of this definition are precisely the points 1, 2 and 6 of Definition \ref{def:prop}. Some remarks regarding this definition are in order.
\begin{rk}
 The abuse of notation $t_{i,j}$ is legitimate since a full notation such as  $t_{i,j}^{k,l}$ is not necessary in practice. Indeed the indices $k$ and $l$ in $t_{i,j}(p)$ are entirely determined by  the element $p$ to which  $t_{i,j}$ is applied.
\end{rk} 
Quite a few can be said about the lack of unity in the above definition.
\begin{rk}  \label{rk:wheeledPROP}
 In \cite{ClFoPa21}, unital TRAPs were defined. Let $P$ be a TRAP. A unit of $P$ is an element $I\in P(1,1)$ such that for any $p\in P$, the image under the partial trace maps of $I*p$ and $p*I$ is $p$ up to the action of a simple \ty{permutation}. These axioms are actually equivalent to 
 \begin{equation} \label{eq:unit_TRAP}
  t_{1,2}(I*p)=p\qquad\cy{\text{and}\quad t_{2,1}(I*p)=p},
 \end{equation}
 for any $k,l\geqslant 1$ and any $p\in P(k,l)$.
 
 Here I do not introduce this axiom in more details for two reasons. Firstly, it would require later on more general graphs that the ones if Subsection \ref{subsec:PROP_graphs}. Secondly, the TRAPs appearing when dealing with infinite dimensional vector spaces are typically non unital. Actually, unital TRAPs are useful to relate TRAPs and other structure: PROPs and wheeled PROPs. I will state the relation between TRAPs and PROPs since PROPs were introduced in this thesis before, but for the link between unital TRAPs and wheeled PROPs I will simply refer the reader to \cite[Theorem 5.3.1]{ClFoPa21}.
 
 Notice that one could in principle add a formal unit $I$ to any TRAP, but then one also needs to describe $t_{1,1}(I)$ and concatenation with this element. This make the TRAP much more challenging to describe, as can be seen in \cite{ClFoPa21} where the unital version of the TRAP of graphs to be described below requires more complicated objects than the generalised graphs described in Subsection \ref{subsec:PROP_graphs}. Another issue with adding a formal unit is that in some spaces relevant to the theory and application of TRAPs, we do \emph{have} units, just that the partial traces are not defined on them. This happens for example for spaces of operators. The identity is the usual identity operator, but the trace of the identity is not defined in general. So in this case, the identity cannot be an element of the TRAP. Adding another, formal identity seems rather artficial.
%  {subsection:Frechet_nuc} XXXXX
\end{rk}
\begin{rk}
 One could also point out that we gave only a pedestrian definition of TRAPs, and that a concise definition in the fashion of \ref{defn:PROP_cat} is still lacking. However, it was recently pointed out to me that wheeled PROPs are a special case of traced monoidal categories \cite{joyal1996traced}. Since TRAPs are essentially wheeled PROPs, they can probably be consisely defined via traced monoidal categories, at least in their unital version. Since should be explored further in future work. I am thankful to Vladimir Dotsenko for pointing out traced monoidal categories to me.
\end{rk}
Finally, one should notice that TRAPs and PROPs are related concepts.
\begin{rk}
 It should be clear from this definition that one can define a vertical concatenation, or composition, by iterating the partial traces. This is postponed to Subsection \ref{subsec:vert_conc_TRAP}, where we will show that one can build a PROP from a unital (see above) TRAP: see Proposition \ref{propverticalconcatenation} and Remark \ref{rk:TRAP_PROP}.
\end{rk}
Before introducing morphisms of TRAPs, here is one simple but useful Lemma that allows to simplify the axioms of TRAPs.

\begin{lemma}\label{lemmeaxiomessimples}
Let $P=(P(k,l))_{k,l\geq0}$ be a $\sym\times \sym^{op}$-module, equipped with a horizontal concatenation $*$
which is associative, commutative, unital and equivariant under the action of $\sym\times\sym$, and with maps $t_{i,j}$ satisfying axioms 3. (a)-(b) of Definition \ref{def:Trap}. 

Assuming that for any $k,l,k',l'\geqslant 1$, for any $p\in P(k,l)$, $p'\in P(k',l')$,
\[t_{1,1}(p*p')=t_{1,1}(p)*p',\]
then Axiom 3.(c) of Definition \ref{def:Trap} is satisfied.
\end{lemma}

\begin{proof} 
  Let $p\in P(k,l)$ and $p'\in P(k',l')$. We take $i$ in $ [k+l]$, $j$ in $[k'+l']$ and define the transpositions $\sigma=(1,j)$, $\tau=(1,i)$, with the convention $(1,1)=\mathrm{Id}$. We use the notation $\sigma_j:=l_j(\sigma)$ and $\tau_i:=r_i(\tau)$. Let us  consider several cases.
  \begin{itemize}
  	\item  
If $i\leqslant k$ and $j\leqslant l$,  then:
\begin{align*}
t_{i,j}(p*p')&=t_{i,j}(\sigma^2\cdot( p*p')\cdot \tau^2)\\
&=\sigma_j\cdot t_{1,1}(\sigma\cdot (p*p')\cdot \tau)\cdot \tau_i\quad\text{by equivariance of $t_{i,j}$} \\
&=\sigma_j\cdot(t_{1,1}((\sigma\cdot p\cdot \tau)*p'))\cdot \tau_i \quad\text{since $i\leqslant k$, $j\leqslant l$ and by equivariance of $*$} \\
&=\sigma_j \cdot (t_{1,1}(\sigma\cdot p\cdot \tau)*p')\cdot \tau_i\quad\text{by the assumption of the Lemma,}\\
&=(\sigma_j\cdot (t_{1,1}(\sigma\cdot p\cdot \tau)\cdot \tau_i))*p'\quad\text{since $i\leqslant k$, $j\leqslant l$ and by equivariance of $*$} \\
&=t_{i,j}(p)*p'\quad\text{by equivariance of $t_{i,j}$.}
\end{align*}
\item If $i>k$ and $j>l$, using $c_{m,n}^{-1}=c_{n,m}$, and as before writing  $(c_{l',l})_j:=l_j(c_{l',l})$ and $(c_{k,k'})_i:=r_i(c_{k,k'})$ we have
\begin{align*}
t_{i,j}(p*p')&=t_{i,j}(c_{l',l}\cdot (p'*p)\cdot c_{k,k'})\quad\text{by commutativity of $*$} \\
&=(c_{l',l})_j\cdot t_{i-k,j-l}(p'*p)\cdot (c_{k,k'})_i\quad\text{by equivariance of $t_{i,j}$} \\
&=c_{l'-1,l}\cdot (t_{i-k,j-l}(p')*p)\cdot c_{k,k'-1}\quad\text{By the first point,}\\
&=p*t_{i-k,j-l}(p')\quad\text{by commutativity of $*$}.
\end{align*}
\end{itemize}
Thus Axiom $3.(c)$ is satisfied.
\end{proof}

\subsection{Morphisms of TRAPs}

The definition of morphisms of TRAPs, though somewhat complicated, is quite natural.

\begin{defn}       \label{defn:trap_morphism}
     Let $P=(P(k,l))_{k,l\geq0}$ and $Q=(Q(k,l))_{k,l\geq0}$ be two TRAPs with partial trace maps $(t_{i,j}^P)_{i,j\geq0}$ and $(t_{i,j}^Q)_{i,j\geq0}$ 
     respectively. A \textbf{morphism of TRAPs} is a family $\phi=(\phi(k,l))_{k,l\geq0}$ 
     of morphisms of $\sym\times \sym^{op}$-modules  $\phi(k,l):P(k,l)\longrightarrow Q(k,l)$ which are compatible with the horizontal concatenation, 
     and the partial trace maps. More precisely, for any $(k,l,m,n)\in \N^4$:
     \begin{enumerate}
     \item For any $(\sigma,p,\tau)$ in $\sym_l\times P(k,l)\times\sym_k$, $\phi(k,l)(\sigma\cdot p\cdot \tau)
     =\sigma\cdot \phi(k,l)(p)\cdot\tau$.
     \item $\phi(0,0)(I_0)=I_0$.
      \item $\forall (p,q)\in P(k,l)\times P(n,m),~\phi(k+n,l+m)(p* q) = \phi(k,l)(p)* \phi(n,m)(q)$,
      \item $\forall (p,i,j)\in P(k,l)\times [k]\times [l]$, $\phi(k-1,l-1)\circ t^P_{i,j}(p)=t^Q_{i,j}\circ \phi(k,l)(p)$.
     \end{enumerate} 
     With a slight  abuse of notation, we write $\phi(p)$ instead of $\phi(k,l)(p)$ for $p\in P(k,l)$.
     We denote by $\Trap$ the category of TRAPs and TRAPs morphisms.
  \end{defn}
In the same way that Lemma \ref{lemmeaxiomessimples} simplifies the axioms of a TRAP, we can also simplify the axioms for morphisms of TRAPs.
\begin{lemma}\label{lemmemorphismes}
Let $P=(P(k,l))_{k,l\geq0}$ and $Q=(Q(k,l))_{k,l\geq0}$ be two TRAPs and $\phi=( \phi(k,l))_{k,l\geq0}$ be a family of set maps $\phi(k,l):P(k,l)\longrightarrow Q(k,l)$ satisfying Points 1-3 of Definition \ref{defn:trap_morphism}. Suppose further that for any $k,l\geqslant 1$, for any $p\in P(k,l)$
\[t_{1,1}\circ \phi(p)=\phi\circ t_{1,1}(p).\]
Then $\phi$ is a map of TRAPs.
\end{lemma}

\begin{proof}If $i$,  $j$ and $p$ lies respectively in $[k]$, $[l]$, and $ P(k,l)$, then
\begin{align*}
\phi\circ t_{i,j}(p)&=\phi\circ t_{i,j}((1,j)^2\cdot p\cdot (1,i)^2)\\
&=\phi\left(\cy{l_j(1,j)}\cdot t_{1,1}((1,j)\cdot p\cdot (1,i))\cdot \cy{r_i(1,i)}\right)\quad\text{by equivariance of $t_{i,j}$ (Axiom $3b$\ty{)}} \\
&=\cy{l_j(1,j)}\cdot  \phi\circ t_{1,1}((1,j)\cdot p\cdot (1,i))\cdot \cy{r_i(1,i)}\quad\text{by Point 1 of Definition \ref{defn:trap_morphism},}\\
&=\cy{l_j(1,j)}\cdot  t_{1,1}\circ \phi ((1,j)\cdot p\cdot (1,i))\cdot \cy{r_i(1,i)}\quad\text{by the assumption of the Lemma,}\\
&=t_{i,j}((1,j)\cdot \phi((1,j)\cdot p\cdot (1,i))\cdot (1,i))\quad\text{by equivariance of $t_{i,j}$ } \\
&=t_{i,j}\circ \phi(p)\quad\text{by Point 1 of Definition \ref{defn:trap_morphism},},
\end{align*}
with the convention $(1,1)=\mathrm{Id}$. It follows that $\phi$ is a morphism of TRAPs. \end{proof}

In particular, to show that a collection of linear maps between two TRAPs preserving 
the horizontal concatenation and the actions of the symmetry group is a morphism of TRAPs, it is enough to check 
the properties of Lemma \ref{lemmemorphismes}.
  
Finally, let us point out that the category $\Trap$ admits a natural notion of subobjects (see \cite[Chapter V.7]{ML71}).
  \begin{defn} \label{defn:subtrap}
  We define a {\bf sub-TRAP} of a TRAP   $P=(P(k,l))_{k,l\geqslant 0}$ to be a  $\sym\times \sym^{op}$-submodule   $Q=(Q(k,l))_{k,l\geqslant 0}$ of $P$ which contains the unit $I_0\in P(0,0)$ and is stable under the partial trace map of $P$. 
\end{defn}

\section{Examples of TRAPs} \label{sec:TRAPs_ex}

We give here three examples of TRAPs that are relevant for analysis in infinite dimensions, as a first step toward QFT.

\subsection{The TRAP of linear morphisms} \label{subsec:Hom_TRAP}

The most basic example of TRAP is on the same space than the most basic example of PROP which was introduced in Subsection \ref{subsec:PROP_lin_morph}. Let us recall that for a finite dimensional vector space $V$ and its algebraic dual $V^*$ we consider the family \[ \Hom_V=\left(\Hom_V(k, l)\right)_{k,l\geq0}:= { (\Hom(V^{\otimes k},V^{\otimes l}))_{k,l\geq0}},\] where for any $(k,l)\in  \N^2$,  $\Hom(V^{\otimes k},V^{\otimes l})$ is  the vector space of linear maps from $V^{\otimes k}$ to $V^{\otimes l}$.

We shall identify $\Hom_V(k,l)$  and $  {V^{*\otimes k}\otimes V^{\otimes l}}$ through the isomorphism
	\[\theta_{k,l}:\left\{\begin{array}{rcl}
	{V^{*\otimes k}\otimes V^{\otimes l}}&\longrightarrow&\Hom_V(k,l)\\
	 { f_1\ldots f_k\otimes v_1\ldots v_l}&\longmapsto&\left\{\begin{array}{rcl}
	V^{\otimes k}&\longrightarrow&V^{\otimes l}\\
	x_1\ldots x_k&\longmapsto&f_1(x_1)\ldots f_k(x_k)\,v_1\ldots v_l,
	\end{array}\right.
	\end{array}\right.\]
 	where with  some abuse of notation, we have set $f_1\cdots f_k:= f_1\otimes\cdots \otimes f_k\in V^{*\otimes k}$ and 
	$v_1\cdots v_l:= v_1\otimes \cdots \otimes v_l\in V^{\otimes l}$.
	
	As in Subsection \ref{subsec:PROP_lin_morph} $\sym\times\sym^{\rm op}$-module structure is defined first from the fact that for any vector space $W$, the tensor power $W^{\otimes k}$ is a left $\sym_k$-module
	with the action defined by
	\[\sigma \cdot w_1\ldots w_k=w_{\sigma^{-1}(1)}\ldots w_{\sigma^{-1}(k)}.\]
	Furthermore, via the identification $\theta:=\left(\theta_{k, l}\right)_{k,l\geq0}$,  we can equip the family 
	\begin{equation*}
	 \Hom_V=(\Hom(V^{\otimes k},V^{\otimes l}))_{k,l\geq0}
	\end{equation*}
with the structure
	of a $\sym_l\times \sym_k^{op}$-module by putting, for an $f\in \Hom(V^{\otimes k},V^{\otimes l})$,
	for any $(\sigma,\tau)\in \sym_k\times \sym_l$:
	\begin{equation} \label{EQaction}
	\forall v_1\ldots v_k\in V^{\otimes l},
	\tau\cdot f\cdot \sigma(v_1\ldots v_k)=\tau\cdot f(\sigma\cdot v_1\ldots v_k). 
	\end{equation}
	
	The horizontal concatenation is the same as the one introduced in Subsection \ref{subsec:PROP_lin_morph}, i.e. the usual tensor product of linear maps:
	if $f\in \Hom_V(k,l)$ and $g\in \Hom_V(k',l')$, then
	\[f\otimes g:\left\{\begin{array}{rcl}
	V^{\otimes (k+k')}&\longrightarrow&V^{\otimes (l+l')}\\
	v_1\ldots v_{k+k'}&\longmapsto&f(v_1\ldots v_k)\otimes g(v_{k+1}\ldots v_{k+k'}).
	\end{array}\right.\]
	And as before its unit is the unit of the underlying field $I_0:=1$.
	
	Finally we define the partial trace maps as
	\begin{equation} \label{EQtrace} 
	 t_{i,j}(\theta_{k,l}(f_1\ldots f_k\otimes v_1\ldots v_l))=f_i(v_j)\,\theta_{k-1,l-1}(f_1\ldots f_{i-1}f_{i+1}\ldots f_k \otimes v_1\ldots v_{j-1}v_{j+1}\ldots v_l).
	\end{equation}
\begin{rk}
 Notice that for $p\in\Hom_V(1,1)$, $t_{1,1}(p)$ coincides with the usual trace of a linear map on $V$.  Proposition \ref{prop:generalised_traces} below generalises the notion of trace to any element of $\Hom_V$ and to any TRAP.
\end{rk}
\begin{prop}\label{prop:HomV} 
	For a finite dimensional $\K$-vector space $V$, the above construction equips $\Hom_V$ with  a TRAP structure.
\end{prop}
\begin{rk}
 $\Hom_V$ is actually a unitary TRAP, with unit given by the identity map $Id_V\in\Hom_V(1,1)$. This actually shows that $\Hom_V$ is a wheeled PROP, a fact that was used in the context of invariant theory (see \cite{DM19}) and algebraic topology (see \cite{KV21}). 
\end{rk}
\begin{proof}
	Properties 1. and 2. are trivially satisfied, with $I_0=1\in \K$.
% 	Property 3.(a) is direct.  Let us prove Property 3.(b). \cy{First, one checks through direct computations that $\sigma\cdot\theta_{k,l}(f_1\ldots f_k\otimes v_1\ldots v_l) \cdot \tau$. Then}
	\begin{align*}
	t_{i,j}(\sigma\cdot\theta_{k,l}(f_1\ldots f_k\otimes v_1\ldots v_l) \cdot \tau)
	&=t_{i,j}\circ \theta_{k,l}(f_{\tau(1)}\ldots f_{\tau(k)}\otimes v_{\sigma^{-1}(1)}\ldots v_{\sigma^{-1}(l)})\\
	&=f_{\tau(i)}(v_{\sigma^{-1}(j)})
	\theta_{k-1,l-1}(f_{\tau(1)}\ldots f_{\tau(i-1)} f_{\tau(i+1)}\ldots f_{\tau(k)}\\
	&\otimes v_{\sigma^{-1}(1)}\ldots  v_{\sigma^{-1}(j-1)} v_{\sigma^{-1}(j+1)}\ldots v_{\sigma^{-1}(l)})\\
	&=\sigma_j\cdot t_{\tau(i),\sigma^{-1}(j)}\theta_{k,l}(f_1\ldots f_k\otimes v_1\ldots v_l)\cdot \tau_i.
	\end{align*}
	Property 3.(c) is straightforward from the definition and Lemma \ref{lemmeaxiomessimples}. Thus $\Hom_V$ is a TRAP.
\end{proof}

When $V$ is not finite-dimensional, $\theta$ is an injective, non surjective map. Its range is the subspace $\Hom_V^{fr}$ of linear
maps from $V^{\otimes k}$ to $V^{\otimes l}$ of finite rank. We can equip $\Hom_V^{fr}$ with a similar TRAP structure:
\begin{prop}\label{prop:Homfr}
	With the $\sym\times \sym^{op}$ action defined  by (\ref{EQaction}), the usual tensor product of maps
	and the partial trace maps defined by (\ref{EQtrace}), $\Hom_V^{fr}$ is a TRAP.
\end{prop}
\begin{rk}
 The TRAP $\Hom_V^{fr}$ is unitary if, and only if, $V$ is finite-dimensional. This justifies the previous statement that non unitary TRAPs are the objects adapted to analysis on infinite dimensional spaces.
 
 Let us show that if $\Hom_V^{fr}$ has a unit then $V$ is finite dimensional\ty{. Assume $I$ is the} unit of $P$, i.e. Equation \eqref{eq:unit_TRAP} holds. Then $I$ has finite rank, let us fix a basis $(e_1,\ldots,e_k)$ of $\mathrm{Im}({I})$. There exist  $\lambda_1,\ldots,\lambda_k\in V^*$ such that for any $v\in V$,
	\[I(v)=\sum_{i=1}^k \lambda_i(v)e_i.\]
	In other words, 
		\[I=\theta_{1,1}\left({\sum_{i=1}^k e_i\otimes \lambda_i}\right).\]
	Let $v\in V$, nonzero, and let $\lambda \in V^*$ such that $\lambda(v)=1$. We consider $f=\theta_{1,1}(v\otimes \lambda)$.
	Then $f(v)=\lambda(v)v=v$. Moreover:
	\begin{align*}
	v&=f(v)=t_{1,2}(I*f)(v)\\
	&=t_{1,2}\circ \theta_{2,2}\left(\sum_{i=1}^k {e_i}\,  v\otimes \lambda_i \lambda\right)(v)\\
	&=\theta_{1,1}\left(\sum_{i=1}^k { \lambda_i}{ (v)\, e_i} \otimes {\lambda}\right)(v)\\
	&=\sum_{i=1}^k \lambda(v)\lambda_i(v)\, {e_i} \\
	&=\sum_{i=1}^k \lambda_i(v)\, {e_i} ,
	\end{align*}
\end{rk}
 We end this paragraph with an example of a TRAP similar to the TRAP $\Hom_V$ but of a more geometric nature.
\begin{example} [The TRAP of tensors]
		Given a finite dimensional smooth  manifold $M$ and a point $x\in M$, we build the $\Hom$-TRAP $\Hom_{T_xM}$ where $T_xM$  is the tangent space to $M$ at the point $x$.  
		Given a pair $(p, q)\in \N^2$ we have, using the musical isomorphisms (see for example \cite[Chapter 3]{Lee})
		\[ \Hom_{T_xM}(p, q)\simeq (T_x^*M)^{\otimes p}\otimes T_xM^{\otimes q},\]
		where we have set   $V^{\otimes 0}=\R$. The partial trace maps   are built by pairing cotangent  and tangent vectors. 
		We note that if $M$ is equipped with a Riemannian metric, thanks to  the  musical isomorphisms  $T_x^*M\ni \alpha  \longmapsto \alpha^\sharp \in T_x M$ and  $T_xM\ni v\longmapsto v^\flat \in  T_x^*M$ between  $T_x^*M$ and $ T_xM$, these dual pairings can be seen as contractions via the metric tensor.
		
		This yields a smooth fibration  $ \Hom_{TM}:=\{ \Hom_{T_xM}, x\in M\} $ of TRAPs parametrised by $M$. For any $(p, q)\in \N^2$,  a smooth section of $\Hom_{TM}(p,q)$ defines a smooth $(p,q)$ tensor on $M$.   
\end{example}
This example suggests an interesting open question: could one define TRAPs on rings and modules? Provided one would need to such structures, it seems that at least some of the properties of TRAPs would still be valid on this more general setting.

\subsection{The TRAP of continuous morphisms}\label{subsection:Frechet_nuc} 

Still following the same plan as the sections presenting PROPs, we proceed to generalise the TRAP $\Hom_V$ to infinite dimensional vector spaces by replacing $V^{\otimes k}$ in $ \Hom_V$ by Grothendieck's nuclear spaces \cite{Gr52,Gr54} already introduced in Subsection \ref{subsec:PROP_nuc}. 

Recall that the topological dual $E'$ of  a locally convex topological vector space $E$  can be endowed with various topologies, one of which is 
	the \textbf{strong topology}, namely the topology of uniform convergence on the bounded domains of $E$. It is  generated by the family of semi-norms of $E'$ defined on any $f\in E'$ by $||f||_B:=\sup_{x\in B}|f(x)|$
	for any bounded set $B$ of $E$ (Equation \eqref{eq:strong_dual_topo}). The topological dual $E'$ endowed with this topology is called the \textbf{strong dual}.

In the following $E$ and $F$ are two topological vector spaces and $\Hom^c(E,F)$ is the set of continuous  linear maps from $E$ to $F$.

Let us recall Definition \ref{defi:Hom_V_generalised} since we use it once again here. For a  Fr\'echet nuclear space $V$ we set 

\begin{equation} \label{eq:iso_TRAP_cont_morph}
\Hom_V^c(k,l)=\Hom^c(V^{\hat \otimes k}, V^{\hat \otimes l})\simeq(V')^{\widehat\otimes  k}\widehat\otimes  V^{\widehat\otimes  l},
\end{equation}
for any $(k,l)\in\N^2$ and where $V'$ stands for the strong topological dual. 

Furthermore, for any $\sigma \in \sym_n$, we define $\theta_\sigma$ the endomorphism of $V^{\otimes n}$ defined by
\[\theta_\sigma(v_1\otimes \ldots \otimes v_n)=v_{\sigma^{-1}(1)}\otimes \ldots \otimes v_{\sigma^{-1}(n)}\] 
which extends to a continuous linear map 
$\overline{\theta_\sigma}$ on the closure 
$V^{\widehat\otimes  n}$.   
For any $f\in \Hom_V^c (k,l)$, $\sigma \in \sym_l$, $\tau\in \sym_k$, we had then set:
\begin{align*}
\sigma\cdot f&=\overline{\theta_\sigma} \circ f ,&f\cdot \tau&=f\circ \overline{\theta_\tau}.
\end{align*} 
On top of its PROP structure build and proved in Subsection \ref{subsec:PROP_nuc}, the family $\Hom_V^c$ carries a TRAP structure.
\begin{theo} \label{thm:Hom_V_generalised}
	Let $V$ be a Fr\'echet nuclear space. $\Hom_V^c$, with the action of $\sym\times\sym^\mathrm{op}$ described above, is a TRAP. Its horizontal 
	concatenation is the usual topological tensor product of maps with $I_0:\K\longrightarrow\K$ given by the 
identity map of $\K$, its partial trace  maps  coincide with those of  the TRAP $\Hom_V$ on elements of $(V')^{\widehat\otimes  k}\widehat\otimes  V^{\widehat\otimes  l}$
\begin{equation*}
 t_{i,j}(f_1\cdots f_k\otimes v_1\cdots v_l) = f_i(v_j) f_1\cdots f_{i-1}f_{i+1}\cdots f_k\otimes v_1\cdots v_{j-1}v_{j+1}\cdots v_l
\end{equation*}
with the same notations as in   Subsection \ref{subsec:Hom_TRAP}.
\end{theo}
\begin{rk}
 The TRAP $\Hom_V^c$ is unitary if, and only if, $V$ is finite dimensional. This is because the unit would be the identity of $V$, but $t_{1,1}(Id_V)=\Tr(V)$ is only defined when $V$ is finite dimensional. Here the issue is with the completion of the tensor product. The identity of $V$ is mapped by the isomorphism of Equation \ref{propverticalconcatenation} to an infinite series of tensor powers, and the partial trace map $t_{1,1}$ is not defined on this object. Once again we see that non unitary TRAP is the right structure when dealing with infinite dimensional spaces.
\end{rk}
\begin{proof}
	The proof of the TRAP structure of $\Hom^c_V$ goes as in Proposition \ref{prop:HomV}. The proof of the unital case is the same as the proof of Proposition \ref{prop:Homfr}.
\end{proof}
\begin{example}
	For a  finite dimensional vector space $V$, the TRAP $\Hom_V^c$  coincides with the  TRAP  $\Hom_V$ of Subsection \ref{subsec:Hom_TRAP}.
\end{example} 
\begin{example}
	Let $M$ be a smooth finite dimensional manifold.
	From Proposition \ref{prop:prod_function} and Equation \eqref{eq:echange_dual_prod}, it follows that
	the family  $({\mathcal K}_M(k, l))_{k,l\geq0}$,
	with ${\mathcal K}_M(k, l)= \left({\mathcal E}^\prime(M)\right)^{\widehat\otimes  k}\, \widehat\otimes  \,  {\mathcal E}(M)^{\widehat\otimes  l}$  
	defines a TRAP.
\end{example}

\subsection{The TRAP of smoothing pseudo-differential operators}\label{subsec:smoothingop}

We apply our results on TRAPs to  tensor products 
of  Fr\'echet spaces ${\mathcal E}(M)$  of smooth functions on a given smooth finite dimensional  orientable closed manifold $M$ and  $ \mu(z)$ a volume form on $M$. From now on, we work with vector space over $\C$. Recall from Proposition 
\ref{prop:prod_function}   that such spaces are stable under tensor products and morphisms in  $\mathrm{Hom}^c( {\mathcal E}^\prime(M), {\mathcal E}(N))$ are determined by    smoothing kernels in ${\mathcal E}(M\times N)$.

  We consider smooth kernels
which stabilise ${\mathcal E}(M)$ and set, for $(k,l)\neq(0,0)$:
\begin{equation}\label{eq:Klm}{\mathcal K}_{M}^\infty(k, l):={\mathcal E}({M}^k\times {M}^l)\simeq{\mathcal E}({M})^{\widehat\otimes  k}\, \hat \otimes \,  {\mathcal E}({M})^{\widehat\otimes  l},\end{equation}
whose elements we refer to as smooth generalised kernels. We also set ${\mathcal K}_M^{\infty}(0,0)\simeq\C\otimes\C$ and ${\mathcal K}_M^{\infty}:=\left({\mathcal K}_{M}^\infty(k, l)\right)_{k,l\geq0}$. \cy{Notice that ${\mathcal K}_M^{\infty}$ has the structure of a $\sym\times\sym^{\rm op}$-module from the construction of the previous subsection.}
\begin{theo}\label{theo:Kinfty}
	Let $M$ be a smooth finite dimensional orientable closed manifold. The family of topological vector spaces $\left({\mathcal K}_{M}^\infty(k, l)\right)_{k,l\geq0}$ can be equipped with the partial trace  maps $t_{i,j}:{\mathcal K}_{M}^\infty(k, l)\longrightarrow{\mathcal K}_{M}^\infty(k-1,l-1)$
	with $t_{i,j}(K_1\otimes K_2)$ defined by
	\begin{align}\label{eq:trconvK}
	t_{i,j}(K_1\otimes K_2) & (x_1, \cdots, x_{k-1}, y_1, \cdots,  y_{l-1}):= \\
	& \int_M K_1(x_1 ,  \cdots,x_{i-1},z,x_{i}\cdots, x_{k-1})\, K_2(y_1,\cdots,y_{j-1},z,y_{j}\cdots, y_{\cy{l-1}}) \,d\mu(z),\nonumber
	\end{align}
	where $d\mu(z)$ is a volume form on $M$.

Together with the  horizontal concatenation given by   the tensor product of maps $(K_1\otimes K_2)*(K'_1\otimes K'_2)=K_a\otimes K_b$ with $K_a:=K_1\otimes K'_1$ and $K_b:=K_2\otimes K'_2$
this	defines a TRAP, written $\mathcal{K}_{M}^\infty$,  which we call the TRAP of {\bf  generalised smooth kernels} on $M$. 
\end{theo}  
\begin{rk} 
 Note that the partial trace    amounts to what one could call a partial convolution.
\end{rk}
\begin{proof}
 The unit $I_0\in\mathcal{K}_{M}^\infty(0,0)\simeq \C\otimes\C$ of the horizontal concatenation $*$ is the constant map $f:\C\to \C$ defined by $f(x)=1$. It is the unit of $*$ by bilinearity of the tensor product. The horizontal concatenation on the $\sym\times \sym^{op}$-module   $\left({\mathcal K}_{M}^\infty(k, l)\right)_{k,l\geq0}$    satisfies axiom 2. of  Definition \ref{def:Trap} by the properties of the tensor product. We want to check that the maps $t_{i,j}$ are well-defined and  satisfy axioms 3. (a)-(c).

	The existence of the integral follows from the smoothness of $K_1$ and $K_2$ and the 
	closedness of $M$. Therefore, by definition of ${\mathcal K}_M^{\infty}$, to show that $t_{i,j}(K_1\otimes K_2)\in{\mathcal K}_M^{\infty}(k-1,l-1)$ and thus that $t_{i,j}$ is a partial trace, is enough to show that the function 
	$t_{i,j}(K_1\otimes K_2):M^{k-1}\times M^{l-1}\longrightarrow \C$ is smooth.  Since
	$K_1$ and $K_2$ are smooth, the map 
	\begin{equation*}
	(x_1 ,\cdots, x_{k-1},y_1,\cdots, y_{\cy{l-1}}) \longmapsto K_1(x_1 ,  \cdots,x_{i-1},z,x_{i}\cdots, x_{k-1})\, K_2(y_1,\cdots,y_{j-1},z,y_{j}\cdots, y_{\cy{l-1}})
	\end{equation*}
	is infinitely differentiable for any $z\in M$. For $\vec x=(x_1,\cdots,x_k)\in M^k$ and $\vec \alpha=(\alpha_1,\cdots,\alpha_k)\in\N_0^k$ we use the short-hand notation
	\begin{equation*}
	 \partial_{\vec x}^{\vec\alpha}:=\frac{\partial^{\alpha_1}}{\partial x_1^{\alpha_1}}\cdots\frac{\partial^{\alpha_k}}{\partial x_k^{\alpha_k}}.
	\end{equation*}
    Then, since $M$ is compact, the partial derivatives
	\begin{equation*}
	\partial_{\vec x}^{\vec\alpha}\partial_{\vec y}^{\vec \beta}K_1(x_1 ,  \cdots,x_{i-1},z,x_{i}\cdots, x_{k-1})\, K_2(y_1,\cdots,y_{j-1},z,y_{j}\cdots, y_{\cy{l-1}})
	\end{equation*}
	are bounded uniformly in $z$ and hence
	\begin{align*}
	& \int_M \partial_{\vec x}^{\vec\alpha}\partial_{\vec y}^{\vec \beta}K_1(x_1 ,  \cdots,x_{i-1},z,x_{i}\cdots, x_{k-1})\, K_2(y_1,\cdots,y_{j-1},z,y_{j}\cdots, y_{\cy{l-1}}) \,d\mu(z) \\
	= & \partial_{\vec x}^{\vec\alpha}\partial_{\vec y}^{\vec \beta}\int_MK_1(x_1 ,  \cdots,x_{i-1},z,x_{i}\cdots, x_{k-1})\, K_2(y_1,\cdots,y_{j-1},z,y_{j}\cdots, y_{\cy{l-1}}) \,d\mu(z) \\
	= & \partial_{\vec x}^{\vec\alpha}\partial_{\vec y}^{\vec \beta} t_{i,j}(K_1\otimes K_2) (x_1, \cdots, x_{k-1}, y_1, \cdots,  y_{l-1}).
	\end{align*}
	Therefore the map $t_{i,j}(K_1\otimes K_2) (x_1, \cdots, x_{k-1}, y_1, \cdots,  y_{l-1})$ is smooth.
% 	\[t_{1,2}(I*p)=p.\]

\cy{Axiom 3.(a) follows from the Fubini theorem, which holds in this case since $M$ is compact and by definition of the spaces $\calK_M^\infty(k,l)$. Axiom 3.(b) is a consequence of Theorem \ref{thm:Hom_V_generalised}.}

Finally, to check Axiom 3.(c), by Lemma \ref{lemmeaxiomessimples} it is enough to check the compatibility of the horizontal concatenation with the partial trace  to show that $t_{1,1}(p*p')=t_{1,1}(p)*p'$ for any pair $(p,p')\in\mathcal{K}_{M}^\infty(k,l)\times\mathcal{K}_{M}^\infty(k',l')$ with $k,k',l,l'\geq1$. Setting $p=K_1\otimes K_2$ and $p'=K'_1\otimes K'_2$ we have, by definition of the partial trace  maps and the horizontal concatenation
	\begin{align*}
	 & t_{1,1}(p*p')(x_1,\cdots,x_{k+k'-1},y_1,\cdots,y_{l+l'-1}) \\
	  = & \int K_1(z,x_1,\cdots,x_{k-1})K'_1(x_{k},\cdots,x_{k+k'-1})K_2(z,y_1,\cdots,y_{l-1})K'_2(x_{l},\cdots,x_{l+l'-1})d\mu(z) \\
	  = & \left(\int K_1(z,x_1,\cdots,x_{k-1})K_2(z,y_1,\cdots,y_{l-1})d\mu(z)\right)K'_1(x_{k},\cdots,x_{k+k'-1})K'_2(x_{l},\cdots,x_{l+l'-1}) \\
	  = & (t_{1,1}(p)*p')(x_1,\cdots,x_{k+k'-1},y_1,\cdots,y_{l+l'-1}).\qedhere
	\end{align*}
\end{proof}
\begin{rk}
     Notice that $\mathcal{K}_{M}^\infty$ is non unitary, since the map $f:M\times M\longrightarrow\C$ which could play the role of a vertical unity is a $\delta$ distribution supported on the diagonal. This will further strengthen our claim that non-unitary TRAPs offer an appropriate  framework to host infinite dimensional spaces. We  expect  non-unitary TRAPs to host more general distributions. 
    \end{rk}

\section{Free TRAPs} \label{sec:freeTRAP}

\subsection{Corolla ordered graphs}

Recall from Subssection \ref{subsec:PROP_graphs} that we call a {\bf graph} a family 
\begin{equation*}
 G=(V(G),E(G),I(G),O(G),IO(G),s,t,\alpha,\beta)
\end{equation*}
with $V(G)$ the set of vertices, $E(G)$ the set of internal edges, $I(G)$ the set of incoming edge, $O(G)$ the set of outgoing edges and $IO(G)$ the set of incoming and outgoing edges. The sets $V(G)$, $ E(G)$ , $I(G)$, $O(G)$ and $IO(G)$ are finite. Furthermore $s$ and $t$ are respectively the source and target maps of the edges and $\alpha$ and $\beta$ are respectively the indexation of the inputs and outputs of the graph. We refer the reader to Definition \ref{def:graph} for the details.

To show that graphs are a free TRAP, we will need one more structure on graphs. Furthermore, since we are dealing here with non unitary TRAPs only, the set of input-output edges $IO(G)$ will play no role, and on the contrary shall be empty. This justifies the next definition.
\begin{defn} \label{def:solar_graphs}
 For a graph $G=(V(G),E(G),I(G),O(G),IO(G),s,t,\alpha,\beta)$,  for any $v\in V(G)$, $O(v)$ and $I(v)$ respectively refer to the set of {\bf outgoing and ingoing edges of the vertex $v$}. In other words, for any $v\in V(G)$,
 \begin{equation*}
  O(v):=\{e\in E(G)\sqcup O(G)|s(e)=v\},\qquad I(v):=\{e\in E(G)\sqcup I(G)|t(e)=v.\}
 \end{equation*}
 We denote the cardinals of the sets $O(v)$ and $I(v)$ as $o(G)$, $i(G)$, $o(v)$ and $i(v)$ respectively (recall the cardinals of the sets $O(G)$ and $I(G)$ are denoted as $o(G)$ and $i(G)$ respectively).
 
 A \textbf{{corolla ordered} graph} is a graph $G$ such that for any vertex $v$, the set of incoming edges $I(v)$ of $v$
and the set of outgoing edges $O(v)$ of $v$ are totally ordered and we shall denote both order relations by $\leq_v$.

A (resp. a corolla oriented) graph $G$ is \textbf{solar} if $IO(G)=\emptyset$. 
\end{defn}
For a solar graph (that is, such that $IO(G)=\emptyset$), the terminology 'solar' refers to its radiating aspect with rays around a central body. In \cite{JY15} such graphs are called 'ordinary'. 

Graphically, if $G$ is a corolla ordered graph, we shall represent the orders on the incoming and outgoing edges of a vertex by drawing box-shaped vertices, with the incoming and outgoing edges of any vertex  ordered from left to right. For example, we distinguish the following two  situations:
\begin{align*}
&\begin{tikzpicture}[line cap=round,line join=round,>=triangle 45,x=0.7cm,y=0.7cm]
\clip(0.8,-0.5) rectangle (2.2,2.5);
\draw [line width=.4pt] (0.8,0.)-- (2.2,0.);
\draw [line width=.4pt] (2.2,0.)-- (2.2,-0.5);
\draw [line width=.4pt] (2.2,-0.5)-- (0.8,-0.5);
\draw [line width=.4pt] (0.8,-0.5)-- (0.8,0.);
\draw [line width=.4pt] (0.8,2.)-- (2.2,2.);
\draw [line width=.4pt] (2.2,2.)-- (2.2,2.5);
\draw [line width=.4pt] (2.2,2.5)-- (0.8,2.5);
\draw [line width=.4pt] (0.8,2.5)-- (0.8,2.);
\draw [->,line width=.4pt] (1.,0.) -- (1.,2.);
\draw [->,line width=.4pt] (2.,0.) -- (2.,2.);
\end{tikzpicture}&
&\begin{tikzpicture}[line cap=round,line join=round,>=triangle 45,x=0.7cm,y=0.7cm]
\clip(0.8,-0.5) rectangle (2.2,2.5);
\draw [line width=.4pt] (0.8,0.)-- (2.2,0.);
\draw [line width=.4pt] (2.2,0.)-- (2.2,-0.5);
\draw [line width=.4pt] (2.2,-0.5)-- (0.8,-0.5);
\draw [line width=.4pt] (0.8,-0.5)-- (0.8,0.);
\draw [line width=.4pt] (0.8,2.)-- (2.2,2.);
\draw [line width=.4pt] (2.2,2.)-- (2.2,2.5);
\draw [line width=.4pt] (2.2,2.5)-- (0.8,2.5);
\draw [line width=.4pt] (0.8,2.5)-- (0.8,2.);
\draw [->,line width=.4pt] (1.,0.) -- (2.,2.);
\draw [->,line width=.4pt] (2.,0.) -- (1.,2.);
\end{tikzpicture} 
\end{align*}
We note that any graph $G$ can be made into a corolla ordered graph in $\prod_{v\in V(G)}i(v)!o(v)!$ ways, by chosing an order on the ingoing and outgoing vertices of the graph. For example the graph of Example \ref{ex4} can be made into a corolla ordered graph in $3!1!1!3!=36$ ways. Here are three of them:
\begin{align*}
\begin{tikzpicture}[line cap=round,line join=round,>=triangle 45,x=0.7cm,y=0.7cm]
\clip(0.9,-0.1) rectangle (9.5,4.7);
\draw [line width=0.4pt] (1.,2.)-- (3.,2.);
\draw [line width=0.4pt] (3.,2.)-- (3.,3.);
\draw [line width=0.4pt] (3.,3.)-- (1.,3.);
\draw [line width=0.4pt] (1.,3.)-- (1.,2.);
\draw [->,line width=0.4pt] (1.5,1.)-- (1.5,2.);
\draw [->,line width=0.4pt] (2.,1.)-- (2.,2.);
\draw [->,line width=0.4pt] (2.5,1.5)-- (2.5,2.);
\draw [shift={(3.75,1.5)},line width=0.4pt]  plot[domain=3.141592653589793:6.283185307179586,variable=\t]({1.*1.25*cos(\t r)+0.*1.25*sin(\t r)},{0.*1.25*cos(\t r)+1.*1.25*sin(\t r)});
\draw [line width=0.4pt] (5.,1.5)-- (5.,3.5);
\draw [line width=0.4pt] (2.,3.)-- (2.,3.5);
\draw [shift={(3.,3.5)},line width=0.4pt]  plot[domain=0.:3.141592653589793,variable=\t]({1.*1.*cos(\t r)+0.*1.*sin(\t r)},{0.*1.*cos(\t r)+1.*1.*sin(\t r)});
\draw [line width=0.4pt] (4.,3.5)-- (4.,1.5);
\draw [line width=0.4pt] (6.,2.)-- (8.,2.);
\draw [line width=0.4pt] (8.,2.)-- (8.,3.);
\draw [line width=0.4pt] (8.,3.)-- (6.,3.);
\draw [,line width=0.4pt] (6.,3.)-- (6.,2.);
\draw [line width=0.4pt] (4.,3.5)-- (4.,1.5);
\draw [->,line width=0.4pt] (7.,1.5)-- (7.,2.);
\draw [shift={(5.5,1.5)},line width=0.4pt]  plot[domain=3.141592653589793:6.283185307179586,variable=\t]({1.*1.5*cos(\t r)+0.*1.5*sin(\t r)},{0.*1.5*cos(\t r)+1.*1.5*sin(\t r)});
\draw [line width=0.4pt] (7.,3.)-- (7.,3.5);
\draw [shift={(6.,3.5)},line width=0.4pt]  plot[domain=0.:3.141592653589793,variable=\t]({1.*1.*cos(\t r)+0.*1.*sin(\t r)},{0.*1.*cos(\t r)+1.*1.*sin(\t r)});
\draw [->,line width=0.4pt] (6.5,3.)-- (6.5,3.5);
\draw [->,line width=0.4pt] (7.5,3.)-- (7.5,3.5);
\draw (1.8,2.8) node[anchor=north west] {$x$};
\draw (6.8,2.8) node[anchor=north west] {$y$};
\draw [->,line width=0.4pt] (9.,1.)-- (9.,4.);
\draw (1.1,1.1) node[anchor=north west] {$1$};
\draw (1.6,1.1) node[anchor=north west] {$2$};
\draw (8.6,1.1) node[anchor=north west] {$3$};
\draw (6.1,4.1) node[anchor=north west] {$1$};
\draw (7.1,4.1) node[anchor=north west] {$3$};
\draw (8.6,4.6) node[anchor=north west] {$2$};
\end{tikzpicture}
% 	\substack{\displaystyle \\ \vspace{3cm}	\xymatrix{&\ar@(ul,dl)[]}\xymatrix{&\ar@(ul,dl)[]}}
\end{align*}

\vspace{-1.5cm}

\begin{align*}
\begin{tikzpicture}[line cap=round,line join=round,>=triangle 45,x=0.7cm,y=0.7cm]
\clip(0.9,-0.1) rectangle (9.5,4.7);
\draw [line width=0.4pt] (1.,2.)-- (3.,2.);
\draw [line width=0.4pt] (3.,2.)-- (3.,3.);
\draw [line width=0.4pt] (3.,3.)-- (1.,3.);
\draw [line width=0.4pt] (1.,3.)-- (1.,2.);
\draw [->,line width=0.4pt] (1.5,1.)-- (2.,2.);
\draw [->,line width=0.4pt] (2.,1.)-- (1.5,2.);
\draw [->,line width=0.4pt] (2.5,1.5)-- (2.5,2.);
\draw [shift={(3.75,1.5)},line width=0.4pt]  plot[domain=3.141592653589793:6.283185307179586,variable=\t]({1.*1.25*cos(\t r)+0.*1.25*sin(\t r)},{0.*1.25*cos(\t r)+1.*1.25*sin(\t r)});
\draw [line width=0.4pt] (5.,1.5)-- (5.,3.5);
\draw [line width=0.4pt] (2.,3.)-- (2.,3.5);
\draw [shift={(3.,3.5)},line width=0.4pt]  plot[domain=0.:3.141592653589793,variable=\t]({1.*1.*cos(\t r)+0.*1.*sin(\t r)},{0.*1.*cos(\t r)+1.*1.*sin(\t r)});
\draw [line width=0.4pt] (4.,3.5)-- (4.,1.5);
\draw [line width=0.4pt] (6.,2.)-- (8.,2.);
\draw [line width=0.4pt] (8.,2.)-- (8.,3.);
\draw [line width=0.4pt] (8.,3.)-- (6.,3.);
\draw [,line width=0.4pt] (6.,3.)-- (6.,2.);
\draw [line width=0.4pt] (4.,3.5)-- (4.,1.5);
\draw [->,line width=0.4pt] (7.,1.5)-- (7.,2.);
\draw [shift={(5.5,1.5)},line width=0.4pt]  plot[domain=3.141592653589793:6.283185307179586,variable=\t]({1.*1.5*cos(\t r)+0.*1.5*sin(\t r)},{0.*1.5*cos(\t r)+1.*1.5*sin(\t r)});
\draw [->,line width=0.4pt] (7.,3.)-- (7.,3.5);
\draw [shift={(5.75,3.5)},line width=.4pt]  plot[domain=0.:3.141592653589793,variable=\t]({1.*0.75*cos(\t r)+0.*0.75*sin(\t r)},{0.*0.75*cos(\t r)+1.*0.75*sin(\t r)});
\draw [line width=0.4pt] (6.5,3.)-- (6.5,3.5);
\draw [->,line width=0.4pt] (7.5,3.)-- (7.5,3.5);
\draw (1.8,2.8) node[anchor=north west] {$x$};
\draw (6.8,2.8) node[anchor=north west] {$y$};
\draw [->,line width=0.4pt] (9.,1.)-- (9.,4.);
\draw (1.1,1.1) node[anchor=north west] {$1$};
\draw (1.6,1.1) node[anchor=north west] {$2$};
\draw (8.6,1.1) node[anchor=north west] {$3$};
\draw (6.6,4.1) node[anchor=north west] {$1$};
\draw (7.1,4.1) node[anchor=north west] {$3$};
\draw (8.6,4.6) node[anchor=north west] {$2$};
\end{tikzpicture}	
% \substack{\displaystyle \\ \vspace{3cm}	\xymatrix{&\ar@(ul,dl)[]}\xymatrix{&\ar@(ul,dl)[]}}
\end{align*}

\vspace{-1.5cm}
\begin{align*}
\begin{tikzpicture}[line cap=round,line join=round,>=triangle 45,x=0.7cm,y=0.7cm]
\clip(0.9,-0.1) rectangle (9.5,4.7);
\draw [line width=0.4pt] (1.,2.)-- (3.,2.);
\draw [line width=0.4pt] (3.,2.)-- (3.,3.);
\draw [line width=0.4pt] (3.,3.)-- (1.,3.);
\draw [line width=0.4pt] (1.,3.)-- (1.,2.);
\draw [->,line width=0.4pt] (1.5,1.5)-- (1.5,2.);
\draw [->,line width=0.4pt] (2.,1.5)-- (2.,2.);
\draw [->,line width=0.4pt] (2.5,1.5)-- (2.5,2.);
\draw [line width=0.4pt] (5.,1.5)-- (5.,3.5);
\draw [line width=0.4pt] (2.,3.)-- (2.,3.5);
\draw [shift={(3.,3.5)},line width=0.4pt]  plot[domain=0.:3.141592653589793,variable=\t]({1.*1.*cos(\t r)+0.*1.*sin(\t r)},{0.*1.*cos(\t r)+1.*1.*sin(\t r)});
\draw [line width=0.4pt] (4.,3.5)-- (4.,1.5);
\draw [line width=0.4pt] (6.,2.)-- (8.,2.);
\draw [line width=0.4pt] (8.,2.)-- (8.,3.);
\draw [line width=0.4pt] (8.,3.)-- (6.,3.);
\draw [,line width=0.4pt] (6.,3.)-- (6.,2.);
\draw [line width=0.4pt] (4.,3.5)-- (4.,1.5);
\draw [->,line width=0.4pt] (7.,1.5)-- (7.,2.);
\draw [shift={(5.5,1.5)},line width=0.4pt]  plot[domain=3.141592653589793:6.283185307179586,variable=\t]({1.*1.5*cos(\t r)+0.*1.5*sin(\t r)},{0.*1.5*cos(\t r)+1.*1.5*sin(\t r)});
\draw [->,line width=0.4pt] (7.,3.)-- (7.5,4.);
\draw [shift={(5.75,3.5)},line width=.4pt]  plot[domain=0.:3.141592653589793,variable=\t]({1.*0.75*cos(\t r)+0.*0.75*sin(\t r)},{0.*0.75*cos(\t r)+1.*0.75*sin(\t r)});
\draw [shift={(2.5,1.5)},line width=.4pt]  plot[domain=3.141592653589793:4.71238898038469,variable=\t]({1.*1.*cos(\t r)+0.*1.*sin(\t r)},{0.*1.*cos(\t r)+1.*1.*sin(\t r)});
\draw [shift={(4.,1.5)},line width=.4pt]  plot[domain=-1.5707963267948966:0.,variable=\t]({1.*1.*cos(\t r)+0.*1.*sin(\t r)},{0.*1.*cos(\t r)+1.*1.*sin(\t r)});
\draw [line width=.4pt] (2.5,0.5)-- (4.,0.5);
\draw [line width=0.4pt] (6.5,3.)-- (6.5,3.5);
\draw [->,line width=0.4pt] (7.5,3.)-- (7.,4.);
\draw (1.8,2.8) node[anchor=north west] {$x$};
\draw (6.8,2.8) node[anchor=north west] {$y$};
\draw [->,line width=0.4pt] (9.,1.)-- (9.,4.);
\draw (1.6,1.6) node[anchor=north west] {$1$};
\draw (2.1,1.6) node[anchor=north west] {$2$};
\draw (8.6,1.1) node[anchor=north west] {$3$};
\draw (6.6,4.6) node[anchor=north west] {$1$};
\draw (7.1,4.6) node[anchor=north west] {$3$};
\draw (8.6,4.6) node[anchor=north west] {$2$};
\end{tikzpicture}
% 	\substack{\displaystyle \\ \vspace{3cm}	\xymatrix{&\ar@(ul,dl)[]}\xymatrix{&\ar@(ul,dl)[]}}
\end{align*}

% \vspace{-1.5cm}
Recall that for two graphs $G$ and $G'$, a \textbf{morphism} of graphs from $G$ to $G'$ is a family of   maps $f=(f_V,f_E,f_I,f_O,f_{IO})$ where each map sends vertices to vertices, each type of edges of $G$ to the corresponding type of edges in $G'$ and furthermore respects the source, target and indexation maps. We refer the reader to Definition \ref{def:morph_graphs} for the details. Recall also that an \textbf{isomorphism} of graphs from $G$ to $G'$ is a morphism of graphs $f=(f_V,f_E,f_I,f_O,f_{IO},f_L)$ from $G$ to $G'$ such that $f_V,f_E,f_I,f_O,f_{IO}$ and $f_L$ are bijections.
  
We can now introduce the morphisms relevant to the graphs we will deal with.
\begin{defn} \label{def:isoclassesTRAPgraphs}
 Let $G$ and $G'$ be two corolla ordered graphs.
 \begin{enumerate}
  \item A  {\textbf{morphism}} of corolla ordered graphs from $G$ to $G'$ is \ty{a} {\textbf{morphism}}   of graphs
$f$ from $G$ to $G'$ which preserves the order of incoming and outgoing edges that is, for any vertex of $v$:
\begin{itemize}
\item For any incoming edges $e$, $e'$ of $v$, $e\leqslant_v e'$ in $G$ if, and only if, $f(e)\leqslant_{f(v)} f(e')$ in $G'$.
\item For any outgoing edges $e$, $e'$ of $v$, $e\leqslant_v e'$ in $G$ if, and only if, $f(e)\leqslant_{f(v)} f(e')$ in $G'$.
\end{itemize}
  \item An {\bf isomorphism} of corolla ordered graphs from $G$ to $G'$ is a morphism of corolla ordered graphs from $G$ to $G'$ that is also an isomorphism of graphs from $G$ to $G'$.
  \item Recall that for any $(k,l)$ in $\N^2$ $\Gr(k,l)$ is the quotient space of graphs with $k$ input edges and $l$ output edges by the equivalence relation given by isomorphism. Similarly, we denote by $\PGr(k,l)$ the set of isoclasses of corolla ordered 
graphs $G$ such that $i(G)=k$ and $o(G)=l$.
\item The subset of $\Gr(k,l)$ formed by isoclasses of solar graphs is denoted by $\rGr(k,l)$ and the subset of $\PGr(k,l)$ formed by isoclasses of solar {corolla ordered} graphs is denoted by $\rPGr(k,l)$. 
 \end{enumerate}
\end{defn}
 Throughout this thesis, $X=(X(k,l))_{k,l\geqslant 0}$ is  a family of sets.
 \begin{defn}\label{defidecorations}
  A graph decorated by $X=(X(k,l))_{k,l\geqslant 0}$ (or $X$-decorated graph, or simply decorated graph) is a couple $(G,d_G)$ with $G$ a graph as in Definition \ref{def:graph} and $\displaystyle d_G:V(G)\longrightarrow \bigsqcup_{k,l\geq0}X(k,l)$ an arbitrary map, such that for any vertex $v\in V(G)$, $d_G(v)\in X(i(v),o(v))$. We denote by $\Gr(X)$ the set of graphs decorated by $X$. Similarly, we define  $X$-decorated {corolla ordered} graphs which we denote by $\PGr(X)$.
  
We further write $\Gr(X)(k,l)$ (respectively, $\PGr(X)(k,l)$, $\rGr(X)(k,l)$ and\\ $\rPGr(X)(k,l)$) the set of graphs (respectively, of corolla ordered graphs, solar graphs, solar corolla ordered graphs) 
  decorated by $X$ with $k$ inputs (that is $|I(G)|=k$) and $l$ ouputs (that is $|O(G)|=l$).
 \end{defn}

\subsection{TRAPs of graphs} \label{subsectionTRAPSgraphs}

The horizontal concatenation and the structure of $\sym\times\sym^{\rm op}$-module on $\PGr$, $\rGr$ and $\rPGr$ is induced by the one of on $\Gr$. The structure of $\sym\times\sym^{\rm op}$-module is given by Equation \ref{eq:sym_sym_graph}. Let us recall it here.
\begin{equation*}
 \tau\cdot G\cdot \sigma=
(V(G),E(G),I(G),O(G),IO(G),s,t,\sigma^{-1}\circ \alpha,\tau \circ \beta).
\end{equation*}
Let us further recall that for two graphs $G$ and $G'$ the the horizontal concatenation $G*G'$ is given by the disjoint union of the edges and various types of edges of $G$ and $G'$. The source, target and indexation maps are induced by the same maps of $G$ and $G'$. We refer the reader to Subsection \ref{subsec:PROP_graphs} for the precise definitions. Similarly, the orders of the sets $I(v)$ and $O(v)$ for each $v\in V(G*G')=V(G)\sqcup V(G')$ are induced by the total orders on the sets of $V(G)$ and $V(G')$.

Let us finally define the partial trace maps. Let us first give an  outline of their definition. Let $G\in \rGr(k,l)$ (thus $IO(G)=\emptyset$), $1\leqslant i\leqslant k$ and $1\leqslant j\leqslant l$. We set $e_i=\alpha_G^{-1}(i)\in I(G)$,  $f_j=\beta_G^{-1}(j)\in O(G)$ and define $t_{i,j}(G)$ as the graph obtained by identifying the input of $e_i$ with the output $j$ of $f_j$. This creates an edge in $E(G)$. We then reindex in non-decreasing order,  the inputs and the outputs of the obtained graph.

Rigorously we define the graph $G'=t_{i,j}(G)$ in the following way:
	\begin{align*}
	V(G')&=V(G),&E(G')&=E(G)\sqcup \{(e_i,f_j)\},\\
	I(G')&=I(G)\setminus\{e_i\},& O(G')&=O(G)\setminus\{f_j\},\\
	IO(G')&=IO(G)=\emptyset,\\
	s_{G'}(e)&=\begin{cases}
	s_G(f_j)\mbox{ if }e=(e_i,f_j),\\
	s_G(e)\mbox{ otherwise},
	\end{cases}&
	t_{G'}(e)&=\begin{cases}
	t_G(e_i)\mbox{ if }e=(e_i,f_j),\\
	t_G(e)\mbox{ otherwise},
	\end{cases}\\
	\alpha_{G'}(e)&=\begin{cases}
	\alpha_G(e)\mbox{ if }\alpha_G(e)<i,\\
	\alpha_G(e)-1\mbox{  if }\alpha_G(e)\geqslant i,
	\end{cases}&
	\beta_{G'}(e)&=\begin{cases}
	\beta_G(e)\mbox{ if }\beta_G(e)<j,\\
	\beta_G(e)-1\mbox{ if }\beta_G(e)\geqslant j.
	\end{cases}
	\end{align*}

Graphically:
\begin{center}
\begin{tikzpicture}[line cap=round,line join=round,>=triangle 45,x=0.5cm,y=0.5cm]
\clip(-2.5,-4.5) rectangle (2.5,4.);
\draw [line width=0.4pt] (-2.,1.)-- (2.,1.);
\draw [line width=0.4pt] (2.,1.)-- (2.,-1.);
\draw [line width=0.4pt] (2.,-1.)-- (-2.,-1.);
\draw [line width=0.4pt] (-2.,-1.)-- (-2.,1.);
\draw [->,line width=0.4pt] (-1.5,1.) -- (-1.5,3.);
\draw [->,line width=0.4pt] (0.,1.) -- (0.,3.);
\draw [->,line width=0.4pt] (1.5,1.) -- (1.5,3.);
\draw [->,line width=0.4pt] (-1.5,-3.) -- (-1.5,-1.);
\draw [->,line width=0.4pt] (0.,-3.) -- (0.,-1.);
\draw [->,line width=0.4pt] (1.5,-3.) -- (1.5,-1.);
\draw (-0.5,0.5) node[anchor=north west] {$G$};
\draw (-1.8,-3) node[anchor=north west] {$1$};
\draw (-0.3,-3) node[anchor=north west] {$i$};
\draw (1.2,-3) node[anchor=north west] {$k$};
\draw (-1.4,-2.2) node[anchor=north west] {$\ldots$};
\draw (0.1,-2.2) node[anchor=north west] {$\ldots$};
\draw (-1.8,4.2) node[anchor=north west] {$1$};
\draw (-0.3,4.2) node[anchor=north west] {$j$};
\draw (1.2,4.2) node[anchor=north west] {$l$};
\draw (-1.4,2.) node[anchor=north west] {$\ldots$};
\draw (0.1,2.) node[anchor=north west] {$\ldots$};
\end{tikzpicture}
$\substack{\displaystyle \stackrel{t_{i,j}}{\longmapsto}\\ \vspace{4cm}}$
\begin{tikzpicture}[line cap=round,line join=round,>=triangle 45,x=0.5cm,y=0.5cm]
\clip(-3.5,-4.5) rectangle (3.5,5.);
\draw [line width=0.4pt] (-2.,1.)-- (2.,1.);
\draw [line width=0.4pt] (2.,1.)-- (2.,-1.);
\draw [line width=0.4pt] (2.,-1.)-- (-2.,-1.);
\draw [line width=0.4pt] (-2.,-1.)-- (-2.,1.);
\draw [->,line width=0.4pt] (-1.5,1.) -- (-1.5,3.);
\draw [line width=0.4pt] (0.,1.) -- (0.,3.);
\draw [->,line width=0.4pt] (1.5,1.) -- (1.5,3.);
\draw [->,line width=0.4pt] (-1.5,-3.) -- (-1.5,-1.);
\draw [->,line width=0.4pt] (0.,-3.) -- (0.,-1.);
\draw [->,line width=0.4pt] (1.5,-3.) -- (1.5,-1.);
\draw (-0.5,0.5) node[anchor=north west] {$G$};
\draw (-1.8,-3) node[anchor=north west] {$1$};
\draw (1.2,-3) node[anchor=north west] {$k-1$};
\draw (-1.4,-2.2) node[anchor=north west] {$\ldots$};
\draw (0.1,-2.2) node[anchor=north west] {$\ldots$};
\draw (-1.8,4.2) node[anchor=north west] {$1$};
\draw (1.2,4.2) node[anchor=north west] {$l-1$};
\draw (-1.4,2.) node[anchor=north west] {$\ldots$};
\draw (0.1,2.) node[anchor=north west] {$\ldots$};
\draw [shift={(-1.5,-3.)},line width=0.4pt]  plot[domain=3.141592653589793:6.283185307179586,variable=\t]({1.*1.5*cos(\t r)+0.*1.5*sin(\t r)},{0.*1.5*cos(\t r)+1.*1.5*sin(\t r)});
\draw [shift={(-1.5,3.)},line width=0.4pt]  plot[domain=0.:3.141592653589793,variable=\t]({1.*1.5*cos(\t r)+0.*1.5*sin(\t r)},{0.*1.5*cos(\t r)+1.*1.5*sin(\t r)});
\draw [line width=0.4pt] (-3.,3.) -- (-3.,-3.);
\end{tikzpicture}
\vspace{-2cm}
\end{center}
As before, if $G$ is corolla ordered, or if it is $X$-decorated, then $t_{i,j}(G)$ is corolla ordered, or $X$-decorated.

\begin{example} Let $G$ be the following graph:
% \[\xymatrix{1&\\
% \rond{}\ar[u]&\\
% 2\ar[u]&3\ar[lu]}\]
% Then:
% \begin{equation*}
% \substack{\vspace{1cm}\\ \displaystyle t_{2,1}(G)=t_{3,1}(G)=\hspace{.3cm}}
%  \xymatrix{
% \rond{}\ar@(ul,dl)[]\\
% 1\ar[u]}
% \end{equation*}
% \end{example}
% Let us further XXXXXX
% \begin{example}
\[\xymatrix{1&3&2\\
&\rond{}\ar[ru] \ar[u] &&\\
&\rond{}\ar[u]\ar[uul]&&\\
1\ar[ru]&3\ar[u]&2\ar[lu]}\]
Then:
\begin{equation*}
 \substack{\vspace{1cm}\\ \displaystyle t_{1,1}(G)=t_{2,1}(G)=t_{3,1}(G)=\hspace{.3cm}}
 \xymatrix{2&&1\\
&\rond{}\ar[ru] \ar[lu] &&\\
&\rond{}\ar@(ul,dl)[]\ar[u]&&\\
% 1\ar[ru]
&1\ar[u]&2\ar[lu]}
\end{equation*}
and
\[ \substack{\vspace{1cm}\\ \displaystyle t_{i,2}(G)=t_{i,3}(G)=\hspace{.3cm}}\xymatrix{1&&2\\
&\rond{}\ar[ru]  \ar@/_1pc/[d]&&\\
&\rond{}\ar@/_1pc/[u]\ar[uul]&&\\
1\ar[ru]&&2\ar[lu]}\]
for $i\in\{1,2,3\}$.
\end{example}
Let us know state and show the main result of this Subsection.
\begin{theo} \label{thm:TRAPgraphs}
 With this data, $\rGr$, $\rPGr$, $\rGr(X)$ and $\rPGr(X)$ are TRAPs. 
\end{theo}
\begin{proof}
 We provide the proof for $\PGr$. The proof is similar for the three other cases. Properties 1.~and 2., as in Subsection \ref{subsec:PROP_graphs}, follow directly from the symmetric group actions and the horizontal concatenation of graphs defined above.
Let us give a graphical interpretation of the proof of Property 3.(a), when $i'<i$ and $j'<j$.

\begin{center}
\begin{tikzpicture}[line cap=round,line join=round,>=triangle 45,x=0.5cm,y=0.5cm]
\clip(-2.5,-4.5) rectangle (4,4.);
\draw [line width=0.4pt] (-2.,1.)-- (3.5,1.);
\draw [line width=0.4pt] (3.5,1.)-- (3.5,-1.);
\draw [line width=0.4pt] (3.5,-1.)-- (-2.,-1.);
\draw [line width=0.4pt] (-2.,-1.)-- (-2.,1.);
\draw [->,line width=0.4pt] (-1.5,1.) -- (-1.5,3.);
\draw [->,line width=0.4pt] (0.,1.) -- (0.,3.);
\draw [->,line width=0.4pt] (1.5,1.) -- (1.5,3.);
\draw [->,line width=0.4pt] (3.,1.) -- (3.,3.);
\draw [->,line width=0.4pt] (-1.5,-3.) -- (-1.5,-1.);
\draw [->,line width=0.4pt] (0.,-3.) -- (0.,-1.);
\draw [->,line width=0.4pt] (1.5,-3.) -- (1.5,-1.);
\draw [->,line width=0.4pt] (3.,-3.) -- (3.,-1.);
\draw (0.25,0.5) node[anchor=north west] {$G$};
\draw (-1.8,-3) node[anchor=north west] {$1$};
\draw (-0.3,-3) node[anchor=north west] {$i'$};
\draw (1.2,-3) node[anchor=north west] {$i$};
\draw (2.7,-3) node[anchor=north west] {$k$};
\draw (-1.4,-2.2) node[anchor=north west] {$\ldots$};
\draw (0.1,-2.2) node[anchor=north west] {$\ldots$};
\draw (1.6,-2.2) node[anchor=north west] {$\ldots$};
\draw (-1.8,4.2) node[anchor=north west] {$1$};
\draw (-0.3,4.2) node[anchor=north west] {$j'$};
\draw (1.2,4.2) node[anchor=north west] {$j$};
\draw (2.7,4.2) node[anchor=north west] {$l$};
\draw (-1.4,2.) node[anchor=north west] {$\ldots$};
\draw (0.1,2.) node[anchor=north west] {$\ldots$};
\draw (1.6,2.) node[anchor=north west] {$\ldots$};
\end{tikzpicture}
$\substack{\displaystyle \stackrel{t_{i,j}}{\longmapsto}\\ \vspace{4cm}}$
\begin{tikzpicture}[line cap=round,line join=round,>=triangle 45,x=0.5cm,y=0.5cm]
\clip(-2.5,-4.5) rectangle (5.,5.);
\draw [line width=0.4pt] (-2.,1.)-- (3.5,1.);
\draw [line width=0.4pt] (3.5,1.)-- (3.5,-1.);
\draw [line width=0.4pt] (3.5,-1.)-- (-2.,-1.);
\draw [line width=0.4pt] (-2.,-1.)-- (-2.,1.);
\draw [->,line width=0.4pt] (-1.5,1.) -- (-1.5,3.);
\draw [->,line width=0.4pt] (0.,1.) -- (0.,3.);
\draw [line width=0.4pt] (1.5,1.) -- (1.5,3.);
\draw [->,line width=0.4pt] (3.,1.) -- (3.,3.);
\draw [->,line width=0.4pt] (-1.5,-3.) -- (-1.5,-1.);
\draw [->,line width=0.4pt] (0.,-3.) -- (0.,-1.);
\draw [->,line width=0.4pt] (1.5,-3.) -- (1.5,-1.);
\draw [->,line width=0.4pt] (3.,-3.) -- (3.,-1.);
\draw (0.25,0.5) node[anchor=north west] {$G$};
\draw (-1.8,-3) node[anchor=north west] {$1$};
\draw (-0.3,-3) node[anchor=north west] {$i'$};
\draw (2.1,-3) node[anchor=north west] {$k-1$};
\draw (-1.4,-2.2) node[anchor=north west] {$\ldots$};
\draw (0.1,-2.2) node[anchor=north west] {$\ldots$};
\draw (1.6,-2.2) node[anchor=north west] {$\ldots$};
\draw (-1.8,4.2) node[anchor=north west] {$1$};
\draw (-0.3,4.2) node[anchor=north west] {$j'$};
\draw (2.1,4.2) node[anchor=north west] {$l-1$};
\draw (-1.4,2.) node[anchor=north west] {$\ldots$};
\draw (0.1,2.) node[anchor=north west] {$\ldots$};
\draw (1.6,2.) node[anchor=north west] {$\ldots$};
\draw [shift={(3.,-3.)},line width=0.4pt]  plot[domain=3.141592653589793:6.283185307179586,variable=\t]({1.*1.5*cos(\t r)+0.*1.5*sin(\t r)},{0.*1.5*cos(\t r)+1.*1.5*sin(\t r)});
\draw [shift={(3.,3.)},line width=0.4pt]  plot[domain=0.:3.141592653589793,variable=\t]({1.*1.5*cos(\t r)+0.*1.5*sin(\t r)},{0.*1.5*cos(\t r)+1.*1.5*sin(\t r)});
\draw [line width=0.4pt] (4.5,3.) -- (4.5,-3.);
\end{tikzpicture}
$\substack{\displaystyle \stackrel{t_{i',j'}}{\longmapsto}\\ \vspace{4cm}}$
\begin{tikzpicture}[line cap=round,line join=round,>=triangle 45,x=0.5cm,y=0.5cm]
\clip(-3.5,-4.5) rectangle (5.,5.);
\draw [line width=0.4pt] (-2.,1.)-- (3.5,1.);
\draw [line width=0.4pt] (3.5,1.)-- (3.5,-1.);
\draw [line width=0.4pt] (3.5,-1.)-- (-2.,-1.);
\draw [line width=0.4pt] (-2.,-1.)-- (-2.,1.);
\draw [->,line width=0.4pt] (-1.5,1.) -- (-1.5,3.);
\draw [line width=0.4pt] (0.,1.) -- (0.,3.);
\draw [line width=0.4pt] (1.5,1.) -- (1.5,3.);
\draw [->,line width=0.4pt] (3.,1.) -- (3.,3.);
\draw [->,line width=0.4pt] (-1.5,-3.) -- (-1.5,-1.);
\draw [->,line width=0.4pt] (0.,-3.) -- (0.,-1.);
\draw [->,line width=0.4pt] (1.5,-3.) -- (1.5,-1.);
\draw [->,line width=0.4pt] (3.,-3.) -- (3.,-1.);
\draw (0.25,0.5) node[anchor=north west] {$G$};
\draw (-1.8,-3) node[anchor=north west] {$1$};
\draw (2.1,-3) node[anchor=north west] {$k-2$};
\draw (-1.4,-2.2) node[anchor=north west] {$\ldots$};
\draw (0.1,-2.2) node[anchor=north west] {$\ldots$};
\draw (1.6,-2.2) node[anchor=north west] {$\ldots$};
\draw (-1.8,4.2) node[anchor=north west] {$1$};
\draw (2.1,4.2) node[anchor=north west] {$l-2$};
\draw (-1.4,2.) node[anchor=north west] {$\ldots$};
\draw (0.1,2.) node[anchor=north west] {$\ldots$};
\draw (1.6,2.) node[anchor=north west] {$\ldots$};
\draw [shift={(-1.5,-3.)},line width=0.4pt]  plot[domain=3.141592653589793:6.283185307179586,variable=\t]({1.*1.5*cos(\t r)+0.*1.5*sin(\t r)},{0.*1.5*cos(\t r)+1.*1.5*sin(\t r)});
\draw [shift={(-1.5,3.)},line width=0.4pt]  plot[domain=0.:3.141592653589793,variable=\t]({1.*1.5*cos(\t r)+0.*1.5*sin(\t r)},{0.*1.5*cos(\t r)+1.*1.5*sin(\t r)});
\draw [line width=0.4pt] (-3.,3.) -- (-3.,-3.);
\draw [shift={(3.,-3.)},line width=0.4pt]  plot[domain=3.141592653589793:6.283185307179586,variable=\t]({1.*1.5*cos(\t r)+0.*1.5*sin(\t r)},{0.*1.5*cos(\t r)+1.*1.5*sin(\t r)});
\draw [shift={(3.,3.)},line width=0.4pt]  plot[domain=0.:3.141592653589793,variable=\t]({1.*1.5*cos(\t r)+0.*1.5*sin(\t r)},{0.*1.5*cos(\t r)+1.*1.5*sin(\t r)});
\draw [line width=0.4pt] (4.5,3.) -- (4.5,-3.);
\end{tikzpicture}

\vspace{-2cm}
\end{center}

\begin{center}
\begin{tikzpicture}[line cap=round,line join=round,>=triangle 45,x=0.5cm,y=0.5cm]
\clip(-2.5,-4.5) rectangle (4,4.);
\draw [line width=0.4pt] (-2.,1.)-- (3.5,1.);
\draw [line width=0.4pt] (3.5,1.)-- (3.5,-1.);
\draw [line width=0.4pt] (3.5,-1.)-- (-2.,-1.);
\draw [line width=0.4pt] (-2.,-1.)-- (-2.,1.);
\draw [->,line width=0.4pt] (-1.5,1.) -- (-1.5,3.);
\draw [->,line width=0.4pt] (0.,1.) -- (0.,3.);
\draw [->,line width=0.4pt] (1.5,1.) -- (1.5,3.);
\draw [->,line width=0.4pt] (3.,1.) -- (3.,3.);
\draw [->,line width=0.4pt] (-1.5,-3.) -- (-1.5,-1.);
\draw [->,line width=0.4pt] (0.,-3.) -- (0.,-1.);
\draw [->,line width=0.4pt] (1.5,-3.) -- (1.5,-1.);
\draw [->,line width=0.4pt] (3.,-3.) -- (3.,-1.);
\draw (0.25,0.5) node[anchor=north west] {$G$};
\draw (-1.8,-3) node[anchor=north west] {$1$};
\draw (-0.3,-3) node[anchor=north west] {$i'$};
\draw (1.2,-3) node[anchor=north west] {$i$};
\draw (2.7,-3) node[anchor=north west] {$k$};
\draw (-1.4,-2.2) node[anchor=north west] {$\ldots$};
\draw (0.1,-2.2) node[anchor=north west] {$\ldots$};
\draw (1.6,-2.2) node[anchor=north west] {$\ldots$};
\draw (-1.8,4.2) node[anchor=north west] {$1$};
\draw (-0.3,4.2) node[anchor=north west] {$j'$};
\draw (1.2,4.2) node[anchor=north west] {$j$};
\draw (2.7,4.2) node[anchor=north west] {$l$};
\draw (-1.4,2.) node[anchor=north west] {$\ldots$};
\draw (0.1,2.) node[anchor=north west] {$\ldots$};
\draw (1.6,2.) node[anchor=north west] {$\ldots$};
\end{tikzpicture}
$\substack{\displaystyle \stackrel{t_{i',j'}}{\longmapsto}\\ \vspace{4cm}}$
\begin{tikzpicture}[line cap=round,line join=round,>=triangle 45,x=0.5cm,y=0.5cm]
\clip(-3.5,-4.5) rectangle (4.,5.);
\draw [line width=0.4pt] (-2.,1.)-- (3.5,1.);
\draw [line width=0.4pt] (3.5,1.)-- (3.5,-1.);
\draw [line width=0.4pt] (3.5,-1.)-- (-2.,-1.);
\draw [line width=0.4pt] (-2.,-1.)-- (-2.,1.);
\draw [->,line width=0.4pt] (-1.5,1.) -- (-1.5,3.);
\draw [line width=0.4pt] (0.,1.) -- (0.,3.);
\draw [->,line width=0.4pt] (1.5,1.) -- (1.5,3.);
\draw [->,line width=0.4pt] (3.,1.) -- (3.,3.);
\draw [->,line width=0.4pt] (-1.5,-3.) -- (-1.5,-1.);
\draw [->,line width=0.4pt] (0.,-3.) -- (0.,-1.);
\draw [->,line width=0.4pt] (1.5,-3.) -- (1.5,-1.);
\draw [->,line width=0.4pt] (3.,-3.) -- (3.,-1.);
\draw (0.25,0.5) node[anchor=north west] {$G$};
\draw (-1.8,-3) node[anchor=north west] {$1$};
\draw (0.3,-3) node[anchor=north west] {$i-1$};
\draw (2.1,-3) node[anchor=north west] {$k-1$};
\draw (-1.4,-2.2) node[anchor=north west] {$\ldots$};
\draw (0.1,-2.2) node[anchor=north west] {$\ldots$};
\draw (1.6,-2.2) node[anchor=north west] {$\ldots$};
\draw (-1.8,4.2) node[anchor=north west] {$1$};
\draw (0.3,4.2) node[anchor=north west] {$j-1$};
\draw (2.1,4.2) node[anchor=north west] {$l-1$};
\draw (-1.4,2.) node[anchor=north west] {$\ldots$};
\draw (0.1,2.) node[anchor=north west] {$\ldots$};
\draw (1.6,2.) node[anchor=north west] {$\ldots$};
\draw [shift={(-1.5,-3.)},line width=0.4pt]  plot[domain=3.141592653589793:6.283185307179586,variable=\t]({1.*1.5*cos(\t r)+0.*1.5*sin(\t r)},{0.*1.5*cos(\t r)+1.*1.5*sin(\t r)});
\draw [shift={(-1.5,3.)},line width=0.4pt]  plot[domain=0.:3.141592653589793,variable=\t]({1.*1.5*cos(\t r)+0.*1.5*sin(\t r)},{0.*1.5*cos(\t r)+1.*1.5*sin(\t r)});
\draw [line width=0.4pt] (-3.,3.) -- (-3.,-3.);
\end{tikzpicture}
$\substack{\displaystyle \stackrel{t_{i-1,j-1}}{\longmapsto}\\ \vspace{4cm}}$
\begin{tikzpicture}[line cap=round,line join=round,>=triangle 45,x=0.5cm,y=0.5cm]
\clip(-3.5,-4.5) rectangle (5.,5.);
\draw [line width=0.4pt] (-2.,1.)-- (3.5,1.);
\draw [line width=0.4pt] (3.5,1.)-- (3.5,-1.);
\draw [line width=0.4pt] (3.5,-1.)-- (-2.,-1.);
\draw [line width=0.4pt] (-2.,-1.)-- (-2.,1.);
\draw [->,line width=0.4pt] (-1.5,1.) -- (-1.5,3.);
\draw [line width=0.4pt] (0.,1.) -- (0.,3.);
\draw [line width=0.4pt] (1.5,1.) -- (1.5,3.);
\draw [->,line width=0.4pt] (3.,1.) -- (3.,3.);
\draw [->,line width=0.4pt] (-1.5,-3.) -- (-1.5,-1.);
\draw [->,line width=0.4pt] (0.,-3.) -- (0.,-1.);
\draw [->,line width=0.4pt] (1.5,-3.) -- (1.5,-1.);
\draw [->,line width=0.4pt] (3.,-3.) -- (3.,-1.);
\draw (0.25,0.5) node[anchor=north west] {$G$};
\draw (-1.8,-3) node[anchor=north west] {$1$};
\draw (2.1,-3) node[anchor=north west] {$k-2$};
\draw (-1.4,-2.2) node[anchor=north west] {$\ldots$};
\draw (0.1,-2.2) node[anchor=north west] {$\ldots$};
\draw (1.6,-2.2) node[anchor=north west] {$\ldots$};
\draw (-1.8,4.2) node[anchor=north west] {$1$};
\draw (2.1,4.2) node[anchor=north west] {$l-2$};
\draw (-1.4,2.) node[anchor=north west] {$\ldots$};
\draw (0.1,2.) node[anchor=north west] {$\ldots$};
\draw (1.6,2.) node[anchor=north west] {$\ldots$};
\draw [shift={(-1.5,-3.)},line width=0.4pt]  plot[domain=3.141592653589793:6.283185307179586,variable=\t]({1.*1.5*cos(\t r)+0.*1.5*sin(\t r)},{0.*1.5*cos(\t r)+1.*1.5*sin(\t r)});
\draw [shift={(-1.5,3.)},line width=0.4pt]  plot[domain=0.:3.141592653589793,variable=\t]({1.*1.5*cos(\t r)+0.*1.5*sin(\t r)},{0.*1.5*cos(\t r)+1.*1.5*sin(\t r)});
\draw [line width=0.4pt] (-3.,3.) -- (-3.,-3.);
\draw [shift={(3.,-3.)},line width=0.4pt]  plot[domain=3.141592653589793:6.283185307179586,variable=\t]({1.*1.5*cos(\t r)+0.*1.5*sin(\t r)},{0.*1.5*cos(\t r)+1.*1.5*sin(\t r)});
\draw [shift={(3.,3.)},line width=0.4pt]  plot[domain=0.:3.141592653589793,variable=\t]({1.*1.5*cos(\t r)+0.*1.5*sin(\t r)},{0.*1.5*cos(\t r)+1.*1.5*sin(\t r)});
\draw [line width=0.4pt] (4.5,3.) -- (4.5,-3.);
\end{tikzpicture}

\vspace{-2cm}
\end{center}
For Property 3.(b), let us consider a graph $p=G$ .
As the input edge indexed by $i$ in $\sigma \cdot G\cdot \tau$ is the input edge of $G$ indexed by $\tau(i)$
and the output edge indexed by $j$ in $\sigma \cdot G\cdot \tau$ is the output edge of $G$ indexed by $\sigma^{-1}(j)$,
$G_1=t_{i,j}(\sigma\cdot G\cdot \tau)$ is the graph obtained by gluing together
the input indexed by $\tau(j)$ and the output indexed by $\sigma^{-1}(j)$, reindexing \cy{the input
% according to $\sigma_i$ 
and the output edges 
% by $\tau_j$, 
we obtain} $G_1=\cy{l_i(\sigma)}\cdot t_{\tau(i),\sigma^{-1}(j)}(G)\cdot \cy{r_j(\tau)}$.

Let us prove Property 3.(c). By Lemma  \ref{lemmeaxiomessimples}, it is enough to prove it for
$(p,p')=(G,G')$ a pair of graphs and $(i,j)=(1,1)$. In this case,
$e_i$ and $f_j$ are both edges of $G$, so $t_{1,1}(G*G')=t_{1,1}(G)*G'$.
\end{proof}
\begin{rk}
 Notice that our definition of graphs implies that $\Gr$, $\PGr$ and their decorated counterparts are \emph{not} TRAPs. Indeed take $I\in\Gr(1,1)$ the graph without vertices and that contain only a edge, an ingoing-outgoing edge. Then $tr_{1,1}(I)$ is not defined. One can add a loop component to the definition of graphs such that $tr_{1,1}(I)$ is a loop. Then $\Gr$, $\PGr$ and their decorated counterparts are \emph{unitary} TRAPs, with unit $I$. See \cite{ClFoPa20,ClFoPa21} for a presentation of these unitary TRAPs. 
\end{rk}
\begin{rk} \label{rk:subTRAPutiles}
     $\rGr$, $\rPGr$, $\rGr(X)$ and $\rPGr(X)$ admit sub-TRAPs, for example with vertices with only a prescribed number of ingoing and outgoing edges. These sub-TRAPs might be of importance in the question of renormalisability of QFTs, but this question is to be tackled in future work and we should not rigorously introduce these objects here.
    \end{rk}

\subsection{Morphisms of TRAPs and free TRAPs} \label{subsec:morphism_TRAPs_free}

As before, $X=(X(k,l))_{k,l\geqslant 0}$ is  a family of sets. It turns out that $\rPGr(X)$ is the free TRAP generated by $X$. For any $x \in X(k,l)$, we identify $x$ with the graph in $\rPGr(k,l)(X)$ with one vertex decorated by $x$, $k$ incoming edges, totally ordered according to their indices, and $l$ outgoing edges, totally ordered according to their indices. For example,   $x\in X(3,2)$  is identified with the corolla ordered graph
\begin{align} \label{eq:simple_decorated_graph}
\substack{\hspace{5mm} \:\begin{tikzpicture}[line cap=round,line join=round,>=triangle 45,x=0.7cm,y=0.7cm]
\clip(0.6,-2.1) rectangle (2.2,2.);
\draw [line width=.4pt] (0.8,0.)-- (2.2,0.);
\draw [line width=.4pt] (2.2,0.)-- (2.2,-0.5);
\draw [line width=.4pt] (2.2,-0.5)-- (0.8,-0.5);
\draw [line width=.4pt] (0.8,-0.5)-- (0.8,0.);
\draw (1.2,0.05) node[anchor=north west] {\scriptsize $x$};
\draw [->,line width=.4pt] (1.,0.) -- (1.,1.);
\draw [->,line width=.4pt] (2.,0.) -- (2.,1.);
\draw [->,line width=.4pt] (1.,-1.5) -- (1.,-0.5);
\draw [->,line width=.4pt] (1.5,-1.5) -- (1.5,-0.5);
\draw [->,line width=.4pt] (2.,-1.5) -- (2.,-0.5);
\draw (0.7,-1.4) node[anchor=north west] {\scriptsize $1$};
\draw (1.2,-1.4) node[anchor=north west] {\scriptsize $2$};
\draw (1.7,-1.4) node[anchor=north west] {\scriptsize $3$};
\draw (0.7,1.6) node[anchor=north west] {\scriptsize $1$};
\draw (1.7,1.6) node[anchor=north west] {\scriptsize $2$};
\end{tikzpicture}}
\end{align}
\begin{theo} \label{thm:freetraps}
Let $P$ be a  TRAP and $\phi=(\phi(k,l))_{k,l\geq0}$ be a map from $X$ to $P$ that is, for any $(k,l)\in \N^2$, $\phi(k,l):X(k,l)\longrightarrow P(k,l)$ is a map.
Then there exists a unique TRAP morphism $\Phi:\rPGr(X)\longrightarrow P$, sending $x$ to $\phi(x)$ for any $x\in X$. 

In other words, $\rPGr(X)$  is the free TRAP generated by $X$.
\end{theo}

\begin{rk}\label{rk:PhiPGrX}
 In practice we often have $P=X$ and $\phi=\mathrm{Id}$  which yields a map 
 \begin{equation} \label{eq:PhiPGrX}
  \Phi:\rPGr(X)\longrightarrow X
 \end{equation} 
 from decorated corolla ordered graphs to the space $X$ of decorations.
\end{rk}
\begin{proof}
We provide here a sketch of the proof,  and refer the reader to the Subsection \ref{subsec:proof_free_TRAP} for a full proof. We define $\Phi(G)$ for any graph $G\in \rPGr(k,l)(X)$ by induction on the number $N$ of internal edges of $G$. 

If $N=0$, then $G$ can be written as
\begin{equation*} 
 G=\sigma\cdot (x_1*\ldots *x_r)\cdot \tau.
\end{equation*}
 where $r$ lies in  $\N$, $(k_i,{l_i})$ lies in  $\N^2$ for any $i$, $x_i$ in $ X{({k_i,l_i})}$ and $\sigma$ in $ \sym_{k_1+\ldots+k_r}$, $\tau$ in $ \sym_{l_1+\ldots+l_r}$. We then set
\begin{equation} \label{eq:G_simple_solar}
 \Phi(G)=\sigma\cdot(\phi(x_1)*\ldots*\phi(x_k))\cdot \tau.
\end{equation}
We can prove that this does not depend on the choice of the decomposition of $G$, with the help of the TRAP axioms applied to $P$. 

Let us now assume that $\Phi(G')$ is defined for any graph with $N-1$ internal edges, for a given $N \geqslant 1$. Let $G$ be a graph with $N$ internal edges and let $e$ be one of these edges. Let $G_e$ be a graph obtained by cutting this edge in two, such that $G=t_{1,1}(G_e)$.  We then set:
\[\Phi(G)=t_{1,1}\circ \Phi(G_e).\]
We can prove that this does not depend on the choice of $e$. It can then be shown that $\Phi$ defined as above is compatible with the partial trace maps. 
\end{proof}
Since the ingoing and outgoing edges of each vertex  of a corolla ordered graph are totally ordered, each corolla ordered graph $\PGr$ is naturally acted upon by $\sym\times \sym^{op}$.
 \begin{defn}\label{defiactionsommets}
  For any corolla ordered graph $G\in\PGr$ and any vertex $v\in V(G)$, there is a natural action of $\sym_{o(v)}\times \sym_{i(v)}^{op}$ induced by the action on the  totally ordered edges in $O(v)$ and $I(v)$. The {corolla ordered} graph obtained from $G$ by the action of $(\sigma,\tau)$ on the vertex $v$ is denoted by
  \begin{equation*}
   \sigma \cdot_v G\cdot_v \tau.
  \end{equation*}
 A similar action can be built on a corolla ordered graph $G$   decorated by a family of sets $X$:
  \begin{equation*}
   \sigma \cdot_v (G,d_G)\cdot_v \tau:=(\sigma \cdot_v G\cdot_v \tau,d_G).
  \end{equation*}
 \end{defn}
Let us illustrate the above definition with a simple example.
\begin{example}
For a graph with two vertices $v$ and $w$ and only to internal edges, both from $v$ to $w$. Then
\[\substack{\displaystyle (12)\cdot_v\\ \vspace{1.5cm}}
	\begin{tikzpicture}[line cap=round,line join=round,>=triangle 45,x=0.7cm,y=0.7cm]
	\clip(0.8,-0.5) rectangle (2.2,2.5);
	\draw [line width=.4pt] (0.8,0.)-- (2.2,0.);
	\draw [line width=.4pt] (2.2,0.)-- (2.2,-0.5); %right line of lower rectangle
	\draw [line width=.4pt] (2.2,-0.5)-- (0.8,-0.5);
	\draw [line width=.4pt] (0.8,-0.5)-- (0.8,0.); %left line of lower rectangle
	\draw [line width=.4pt] (0.8,2.)-- (2.2,2.); %lower line of upper rectangle
	\draw [line width=.4pt] (2.2,2.)-- (2.2,2.5);
	\draw [line width=.4pt] (2.2,2.5)-- (0.8,2.5); %upper line of upper rectangle
	\draw [line width=.4pt] (0.8,2.5)-- (0.8,2.);
	\draw [->,line width=.4pt] (1.,0.) -- (1.,2.);
	\draw [->,line width=.4pt] (2.,0.) -- (2.,2.);
	\draw (1.2,0.) node[anchor=north west] {\scriptsize $v$};
	\draw (1.2,2.5) node[anchor=north west] {\scriptsize $w$};
	\end{tikzpicture}\,\substack{\displaystyle =\\ \vspace{1.5cm}}\,
	\begin{tikzpicture}[line cap=round,line join=round,>=triangle 45,x=0.7cm,y=0.7cm]
	\clip(0.8,-0.5) rectangle (2.2,2.5);
	\draw [line width=.4pt] (0.8,0.)-- (2.2,0.); %upper line of lower rectangle
	\draw [line width=.4pt] (2.2,0.)-- (2.2,-0.5);
	\draw [line width=.4pt] (2.2,-0.5)-- (0.8,-0.5);
	\draw [line width=.4pt] (0.8,-0.5)-- (0.8,0.);
	\draw [line width=.4pt] (0.8,2.)-- (2.2,2.);
	\draw [line width=.4pt] (2.2,2.)-- (2.2,2.5);
	\draw [line width=.4pt] (2.2,2.5)-- (0.8,2.5);
	\draw [line width=.4pt] (0.8,2.5)-- (0.8,2.);
	\draw [->,line width=.4pt] (1.,0.) -- (1.,2.);
	\draw [->,line width=.4pt] (2.,0.) -- (2.,2.);
	\draw (1.2,0.) node[anchor=north west] {\scriptsize $v$};
	\draw (1.2,2.5) node[anchor=north west] {\scriptsize $w$};
	\end{tikzpicture}
	\substack{\displaystyle \cdot_w (12)\\ \vspace{1.5cm}}
	\substack{\displaystyle =\\ \vspace{1.5cm}}
	\begin{tikzpicture}[line cap=round,line join=round,>=triangle 45,x=0.7cm,y=0.7cm]
	\clip(0.8,-0.5) rectangle (2.2,2.5);
	\draw [line width=.4pt] (0.8,0.)-- (2.2,0.);
	\draw [line width=.4pt] (2.2,0.)-- (2.2,-0.5);
	\draw [line width=.4pt] (2.2,-0.5)-- (0.8,-0.5);
	\draw [line width=.4pt] (0.8,-0.5)-- (0.8,0.);
	\draw [line width=.4pt] (0.8,2.)-- (2.2,2.);
	\draw [line width=.4pt] (2.2,2.)-- (2.2,2.5);
	\draw [line width=.4pt] (2.2,2.5)-- (0.8,2.5);
	\draw [line width=.4pt] (0.8,2.5)-- (0.8,2.);
	\draw [->,line width=.4pt] (1.,0.) -- (2.,2.);
	\draw [->,line width=.4pt] (2.,0.) -- (1.,2.);
	\draw (1.2,0.) node[anchor=north west] {\scriptsize $v$};
	\draw (1.2,2.5) node[anchor=north west] {\scriptsize $w$};
	\end{tikzpicture}\substack{\displaystyle .\\ \vspace{1.5cm}}\]
\vspace{-.5cm}
	 
	 	 \noindent In these pictures, the labelling of the edges outgoing (respectively, ingoing  {to})  {from} the vertex $v$ (respectively, $w$) are labelled from left to right.
 \end{example}
 \cy{\begin{rk}
      Corolla-ordered graphs were introduced for two reasons. First, they simplify the proof of freeness given next subsection. Second, they are a crucial part to show that the categories of TRAPs and the categories of wheeled PROPs are isomorphic. Since this result is rather distant from the application we have in mind for TRAPs, it was elected to not include it here. This construction is given in details in \cite{ClFoPa21}.
     \end{rk}
}
Note that $\Gr(X)$ is obtained from $\PGr(X)$ by forgetting the total orders on the edges, which in fact is equivalent to the trivialisation of this action of symmetric groups on incoming and outgoing edges of any vertex. Hence: 

\begin{cor} \label{cor:extension_identity}
Let $P$ be a TRAP and $\phi=(\phi(k,l))_{k,l\geq0}$ be a map from $X$ to $P$. We assume that for any $x\in X(k,l)$, for any $(\sigma,\tau)\in \sym_k\otimes \sym_l$,
\[\tau\cdot \phi(x)\cdot \sigma=\phi(x).\]
There exists a unique TRAP morphism $\Phi:\rGr(X)\longrightarrow P$, sending $x$ to $\phi(x)$ for any $x\in X$.
\end{cor}
We end this paragraph with the non {corolla ordered} counterpart of Remark   \ref{rk:PhiPGrX}:
\begin{rk}\label{rk:PhiGrX} 
In practice we often have $P=X$ and $\phi=\mathrm{Id}_P$  which yields a  map 
 \begin{equation}\label{eq:PhiGrX}
  \Phi:\rGr(X)\longrightarrow X
 \end{equation} 
 from decorated graphs to the space $X$ of decorations.
\end{rk}

\subsection{Proof of freeness of $\PGr(X)$} \label{subsec:proof_free_TRAP}

We end this Section on free TRAP by giving a detailed proof of Theorem \ref{thm:freetraps}. Let $P$ be a TRAP and let 
$\phi:X\longrightarrow P$ be a map. \\

Let us first prove the existence of $\Phi$. We define $\Phi:\rPGr(X)\longrightarrow P$ by assigning to any graph $G\in\rPGr(X)(k,l)$ an element $\Phi(G)\in P(k,l)$. We proceed by induction on the number $N$  of internal edges of $G$. If $N=0$, then $G$ can be written (non uniquely) as
\begin{align}
\label{eqdecompo} G=\sigma\cdot(G_1*\ldots *G_r)\cdot \tau,
\end{align}
where $r$ in $\N$ is unique, $(k_i,l_i)$ in $ \N^2$ for any $i$, unique up to a permutation, $\sigma $ in $\sym_{k_1+\ldots+k_r}$, $\tau\in \sym_{l_1+\ldots+l_r}$ and $G_i \in X(k_i,l_i)$ for any $i$. 
For any graph $H$ we set in this case $H^{*0}=I_0=\emptyset$, the empty graph. We then put:
\[\Phi(G)=\sigma\cdot(\phi_{k_1,l_1}(G_1)*\ldots*\phi_{k_r,l_r}(G_r))\cdot \tau,\]
where as before $x^{*0}=I_0$ (the unit for horizontal concatenation in the image $P$ of $\Phi$) for any $x\in P$.\\

Let us assume now that $\Phi(G')$ is defined for any graph with $N-1$ internal edges, for a given $N \geqslant 1$.
Let $G$ be a graph with $N$ internal edges and let $e$ be one of these edges. 
Let $G_e$ be a graph obtained by cutting this edge in two:
\begin{itemize}
\item $V(G_e)=V(G)$.
\item $E(G_e)=E(G)\setminus \{e\}$, $I(G_e)=I(G)\sqcup \{e\}$, $O(G_e)=O(G)\sqcup \{e\}$.
\item $s_{G_e}=s_G$ and $t_{G_e}=t_G$.
\item For any $e'\in I(G_e)$, for any $f'\in O(G_e)$:
\begin{align*}
\alpha_{G_e}(e')&=\begin{cases}
1\mbox{ if }e'=e,\\
\alpha_{G}(e')+1\mbox{ if }e'\neq e,
\end{cases}
&\beta_{G_e}(f')&=\begin{cases}
1\mbox{ if }f'=e,\\
\beta_{G}(f')+1\mbox{ if }f'\neq e.
\end{cases}
\end{align*}
\end{itemize}
Notice that since $G$ lies in $\rPGr(X)$ (that is $IO(G)=\emptyset$) then $G_e$ also lies in $\rPGr(X)$. We have $G=t_{1,1}(G_e)$ and $G_e$ has $N-1$ internal edges. We then put:
\begin{equation}
\label{eqdefiarete}\Phi(G):=t_{1,1}\circ \Phi(G_e).
\end{equation}

\begin{lemma}
The map $\Phi$ is well-defined. Moreover, for any $\in \rPGr(X)(k,l)$ with $(k,l)\in \N^2$,
for any $\sigma \in \sym_l$, for any $\tau \in \sym_k$,
\[\Phi(\sigma\cdot G\cdot \tau)=\sigma\cdot \Phi(G)\cdot \tau.\] 
\end{lemma}

\begin{proof}
We proceed by induction on the number $N$ of internal vertices. For $N=0$, we have to show that
$\Phi(G)$ does not depend on the choice of the decomposition \eqref{eqdecompo} of $G$. 
Such a decomposition is determined  modulo a permutation of the vertices
and of the choice of $\sigma$ and $\tau$. Thus,  we can go from one decomposition of $G$ to any other one by means of a finite number of steps among the following two types:
\begin{enumerate}
\item We consider two decompositions of $G$ of the form
\begin{align*}
G&=\sigma\cdot(G_1*\ldots* G_i*G_{i+1} *\ldots *G_r)\cdot \tau,\\
G&=\sigma'\cdot(G_1*\ldots*G_{i+1}* G_i *\ldots *G_r)\cdot \tau',
\end{align*}
with
\begin{align*}
\sigma'&=\sigma(\mathrm{Id}_{l_1+\ldots+l_{i-1}}\otimes c_{l_i,l_{i+1}}\otimes \mathrm{Id}_{l_{i+2}+\ldots+l_r}),\\
\tau'&=(\mathrm{Id}_{k_1+\ldots+k_{i-1}}\otimes c_{k_{i+1},k_i}\otimes \mathrm{Id}_{k_{i+2}+\ldots+k_r})\tau.
\end{align*}
Then, by commutativity of $*$\cy{, and using $c_{l,m}^{-1}=c_{m,l}$}:
\begin{align*}
&\sigma'\cdot(\phi(G_1)*\ldots*\phi(G_r))\cdot \tau'\\
&=\sigma\cdot (\phi(G_1)*\ldots * c_{l_i,l_{i+1}}\cdot(\phi(G_{i+1})*\phi(G_i))\cdot c_{k_{i+1},k_i}
*\ldots*\phi(G_r))\cdot \tau\\
&=\sigma\cdot (\phi(G_1)*\ldots *\phi(G_i)*\phi(G_{i+1})*\ldots *\phi(G_r))\cdot \tau.
\end{align*}
\item We consider two decompositions of $G$ of the form
\begin{align*}
G&=\sigma\cdot(G_1*\ldots*G_r)\cdot \tau,\\
G&=\sigma'\cdot(\cy{G'_1*\ldots*G'_r})\cdot \tau',
\end{align*}
with
\begin{align*}
\sigma'&=\sigma (\sigma_1\otimes \ldots \otimes \sigma_r),&
\tau'&=(\tau_1\otimes \ldots \otimes \tau_r)\ty{\tau},&\cy{G'_i=\sigma_i^{-1}\cdot G_i\cdot\tau^{-1}_i,}
\end{align*}
with $\sigma_i\in \sym_{k_i}$ and $\tau_i\in \sym_{l_i}$ if $i\geqslant 1$.
Using the commutativity of $*$ and the invariance of the $x_{k,l}$, we find
\begin{align*}
\sigma'\cdot (\cy{\phi(G'_1)*\ldots*\phi(G'_r)})\cdot \tau'
&=\sigma \cdot (\sigma_1\cdot \cy{\phi(G'_1)}\cdot \tau_1*\ldots*\sigma_r\cdot \cy{\phi(G'_r)}\cdot \tau_r)\cdot \tau\\
&=\sigma\cdot (\phi(G_1)*\ldots*\phi(G_r))\cdot \tau.
\end{align*}
\end{enumerate}
Hence, $\Phi(G)$ is well-defined \cy{for $G$ without internal vertices}. Moreover, for any  $\tau' \in \sym_k$, $\sigma'\in \sym_l$,  a decomposition  of $G$ of the form
\[G=\sigma\cdot(G_1*\ldots* G_r)\cdot \tau,\]
give rise to a decomposition of $G'=\sigma'\cdot G\cdot \tau'$ given by
\[\sigma'\sigma\cdot(G_1*\ldots*G_r)\cdot \tau\tau',\]
and, by definition of $\Phi(G')$:
\begin{align*}
\Phi(G')&=\sigma'\sigma\cdot(\phi(G_1)*\ldots*\phi(G_r))\cdot \tau\tau'\\
&=\sigma'\cdot(\sigma\cdot(\phi(G_1)*\ldots*\phi(G_r))\cdot \tau)\tau'\\
&=\sigma'\cdot \Phi(G)\cdot \tau'
\end{align*}
\cy{since $\phi$ is a morphism of $\sym\times\sym^{op}$-modules.}

Let us assume the result at rank $N-1$ and let $G$ be a graph with $N$ internal edges. Let us prove that 
$\Phi(G)$ defined in (\ref{eqdefiarete}) does not depend on the choice of $e$. If $e'$ is another internal edge of $G$,
then:
\[(G_e)_{e'}=(12)\cdot (G_{e'})_e\cdot (12),\]
which implies, by definition of $\Phi(G_e)$ and $\Phi(G_{e'})$:
\begin{align*}
t_{1,1}\circ \Phi(G_e)&=t_{1,1}\circ t_{1,1}\circ \Phi((G_e)_{e'})\\
&=t_{1,1}\circ t_{1,1} \circ ((12)\cdot \Phi((G_{e'})_e)\cdot (12))\quad \text{\cy{by the induction hypothesis}}\\
&=t_{1,1}\circ t_{2,2}\circ \Phi((G_{e'})_e)\quad\cy{\text{by equivariance of the }t_{i,j}}\\
&=t_{1,1}\circ t_{1,1}\circ \Phi((G_{e'})_e)\quad\cy{\text{by commutativity of the }t_{i,j}}\\
&=t_{1,1}\circ \Phi(G_{e'}).
\end{align*}
So $\Phi(G)$ is well-defined. Let $\sigma \in \sym_k$ and $\tau\in \sym_l$. Then:
\[(\sigma\cdot G\cdot \tau)_e=((1)\otimes \sigma)\cdot (G_e)\cdot ((1)\otimes \tau),\]
so:
\begin{align*}
\Phi(\sigma \cdot G\cdot \tau)&=t_{1,1}\circ \Phi((\sigma\cdot G\cdot \tau)_e)\\
&=t_{1,1}\left(((1)\otimes \sigma)\cdot \Phi(G_e)\cdot ((1)\otimes \tau)\right) \quad\cy{\text{by the induction hypothesis}}\\
&=\cy{l_1((1)\otimes \sigma)}\cdot t_{1,1}\circ \Phi(G_e)\cdot \cy{r_1((1)\otimes \tau)}\quad\cy{\text{by equivariance of the }t_{i,j}}\\
&=\sigma\cdot \Phi(G)\cdot \tau
\end{align*}
\cy{by definition of $l_i$ and $r_j$.}
% where, for $\sigma\in\sym_k$ we use $\sigma_i$ for the permutation in $\sym_{k-1}$ defined by
% \begin{equation*}
%   \sigma_i(j) = \begin{cases}
%                & \sigma(j) \quad\text{if }j\leq i-1, \\
%                & \sigma(j-1)\quad\text{if }j\geq i. 
%               \end{cases} \qedhere
% \end{equation*}
\end{proof}

We have therefore defined a map $\Phi:\rPGr(X)\longrightarrow P$ in the non-unitary case,
compatible with the action of the symmetric groups. It remains to prove that $\Phi$ is compatible with the horizontal concatenation
$*$ and with the partial trace maps.

\begin{lemma}
For any graphs $G$, $G'$,
\[\Phi(G*G')=\Phi(G)*\Phi(G').\]
\end{lemma}

\begin{proof}
We proceed by induction on the number $N$ of internal edges of $G*G'$. If $N=0$, we put:
\begin{align*}
G&=\sigma\cdot (G_1*\ldots*G_r)\cdot \tau,\\
G'&=\sigma'\ty{\cdot} (G'_1*\ldots*G'_{r'})\cdot \tau'.
\end{align*}
We obtain:
\begin {equation*}
 G*G'=(\sigma \otimes \sigma') 
\cdot (G_1*\ldots *G_r*G'_1*\ldots *G'_{r'})\cdot
(\tau\otimes \tau'),
\end {equation*}
which gives, by \cy{equivarience} of $*$:
\begin{align*}
\Phi(G*G')&=(\sigma \otimes \sigma')\cdot (\phi(G_1)*\ldots *\phi(G_r)*\phi(G'_1)*\ldots *\phi(G'_{r'}))\cdot(\tau\otimes \tau')\\
&=(\sigma\cdot(\phi(G_1)*\ldots*\phi(G_r))\cdot \tau)*\sigma'\cdot(\phi(G'_1)*\ldots*\phi(G'_{r'}))\cdot \tau'\\
&=\Phi(G)*\Phi(G').
\end{align*}
If $N\geqslant 1$, let us assume that the result holds at rank $N-1$ and take an internal edge $e$ of $G*G'$. If $e$ is an internal edge of $G$, then
$(G*G')_e=G_e*G'$, and:
\begin{align*}
 \Phi(G*G')&=t_{1,1}\circ \Phi((G*G')_e)\\
 &=t_{1,1}\circ \Phi(G_e*G')\\
 &=t_{1,1}(\Phi(G_e)*\Phi(G'))\quad\text{by the induction hypothesis}\\
 &=\left(t_{1,1}\circ \Phi(G_e)\right)*\Phi(G') \quad\text{by Axiom 3.(c) of Definition \ref{def:Trap}}\\
 &=\Phi(G)*\Phi(G').
\end{align*}
If $e$ is an internal edge of $G'$, we obtain similarly that $\Phi(G'*G)=\Phi(G')*\Phi(G)$.
The result then 
follows from the \cy{commutativity of $*$.}
%(Axiom $2.(d)$ of Definition \ref{def:Trap}).
\end{proof}
We still need to prove the compatibility of $\Phi$ with the partial trace maps. 
\begin{lemma}
Let $G\in \rPGr(k,l)$ with $(k,l)\in \N^2$, $i\in [k]$ and $j\in [l]$. Then:
\[t_{i,j}\circ \Phi(G)=\Phi\circ t_{i,j}(G).\]
\end{lemma}

\begin{proof}
By Lemma \ref{lemmemorphismes}, it is enough to prove that $\Phi$ is compatible with $t_{1,1}$. Let $G\in\rPGr(X)(k,l)$, $e_1=\alpha^{-1}(1)\in I(G)$ and $f_1=\beta^{-1}(1)\in O(G)$. We set $G'=t_{1,1}(G)$ and $e=\{e_1,f_1\}$ to be the edge of $G'$ created in the process. Notice that since $G\in\rPGr(X)$ we have $G'\in\rPGr(X)$. Then $e\in E(G')$ and $G'_e=G$. By \cy{definition of $G'$ and} construction of $\Phi(G')$:
\[\Phi\circ t_{1,1}(G)=\Phi(G')=t_{1,1}\circ \Phi(G'_e)=t_{1,1}\circ \Phi(G).\]
So $\Phi$ is compatible with the partial trace maps, both in the unitary and non-unitary cases.
\end{proof}

We have proved the existence of $\Phi$. It remains to prove the unicity. \cy{Let $\Psi:\rPGr(X)\longrightarrow P$ be a TRAP morphism extending $\phi$. We prove $\Psi=\Phi$ by induction on the number of internal edges. First, $\Psi$ and $\Phi$ must coincide on graphs with no internal vertices since they are both extentions of $\phi$ and morphisms of TRAPs.

Assume now that $\Psi$ and $\Phi$ coincide on solar corolla oriented graphs with $N$ internal edges or less. Let $G$ be such a graph with $N+1$ internal edges. Then, for $e\in E(G)$.
\begin{equation*}
 \Phi(G)=t_{1,1}\circ\Phi(G_e)=t_{1,1}\circ\Psi(G_e)=\Psi\circ t_{1,1}(G_e)=\Psi(G).
\end{equation*}
We have used the definition of $\Phi$, the induction hypothesis, the fact that $\Psi$ is a morphism of TRAPs and the definition of $G_e$. Thus $\Psi=\Phi$ and $\Phi$ is unique.
}
% Any solar graph can be obtained from graphs with only one vertex, with the help of
% the horizontal concatenation and the partial trace maps, which allow to create the missing internal edges. 
% Hence, $\rPGr(X)$ is generated by $X$ as a TRAP, which implies the unicity of $\Phi$. 

This ends the proof of Theorem \ref{thm:freetraps}.

\section{Compositions, generalised trace and convolution}
\label{sec:applicationsTRAPs}

We will now see how the definition of TRAPs and the universal property of the TRAP of graphs imply that they carry more structures.

\subsection{Vertical concatenation in a TRAP} \label{subsec:vert_conc_TRAP}

% Ca pourrait être dans la section 4, à la fin???? Le problème c'est la représentation graphique...

In Subsection \ref{subsec:prop_graph}, we showed that $\Gr$ admits a PROP structure. It is clear that $\rGr$ is a sub-PROP and that the corolla ordered graphs as well as the decorated graphs inherit this PROP structure, up to the lack of unity. We therefore have both PROP and TRAP structures for (unitary) TRAP.

The vertical concatenation on $\rGr$ can be described in terms of the horizontal concatenation and of the partial trace maps:
If $G$ is a solar graph with $k$ inputs and $l$ outputs, and $G'$ a solar graph with $l$ inputs and $m$ outputs, then:
\begin{align*}
t_{k+1,1}\circ \ldots \circ t_{k+l-1,l-1}\circ t_{k+l,l}(G*G')&=\cy{G'\circ G},
\end{align*}
or, graphically:
\begin{center}
\begin{tikzpicture}[line cap=round,line join=round,>=triangle 45,x=0.5cm,y=0.5cm]
\clip(-2.5,-4.) rectangle (7.,5.);
\draw [line width=0.4pt] (-2.,1.)-- (0.5,1.);
\draw [line width=0.4pt] (0.5,1.)-- (0.5,-1.);
\draw [line width=0.4pt] (0.5,-1.)-- (-2.,-1.);
\draw [line width=0.4pt] (-2.,-1.)-- (-2.,1.);
\draw [line width=0.4pt] (-1.5,1.) -- (-1.5,2.);
\draw [line width=0.4pt] (0.,1.) -- (0.,2.);
\draw [->,line width=0.4pt] (-1.5,-3.) -- (-1.5,-1.);
\draw [->,line width=0.4pt] (0.,-3.) -- (0.,-1.);
\draw (-1.25,0.5) node[anchor=north west] {$G$};
\draw (-1.8,-3) node[anchor=north west] {$1$};
\draw (-0.3,-3) node[anchor=north west] {$k$};
\draw (-1.4,-2.2) node[anchor=north west] {$\ldots$};
\draw (-1.4,2.) node[anchor=north west] {$\ldots$};
\draw [shift={(0.,2.)},line width=0.4pt]  plot[domain=0.:3.141592653589793,variable=\t]({1.*1.5*cos(\t r)+0.*1.5*sin(\t r)},{0.*1.5*cos(\t r)+1.*1.5*sin(\t r)});
\draw [shift={(3.,-2.)},line width=0.4pt]  plot[domain=3.141592653589793:6.283185307179586,variable=\t]({1.*1.5*cos(\t r)+0.*1.5*sin(\t r)},{0.*1.5*cos(\t r)+1.*1.5*sin(\t r)});
\draw [line width=0.4pt] (1.5,2.) -- (1.5,-2.);
\draw [shift={(1.5,2.)},line width=0.4pt]  plot[domain=0.:3.141592653589793,variable=\t]({1.*1.5*cos(\t r)+0.*1.5*sin(\t r)},{0.*1.5*cos(\t r)+1.*1.5*sin(\t r)});
\draw [shift={(4.5,-2.)},line width=0.4pt]  plot[domain=3.141592653589793:6.283185307179586,variable=\t]({1.*1.5*cos(\t r)+0.*1.5*sin(\t r)},{0.*1.5*cos(\t r)+1.*1.5*sin(\t r)});
\draw [line width=0.4pt] (3.,2.) -- (3.,-2.);
\draw [line width=0.4pt] (4.,1.)-- (6.5,1.);
\draw [line width=0.4pt] (6.5,1.)-- (6.5,-1.);
\draw [line width=0.4pt] (6.5,-1.)-- (4.,-1.);
\draw [line width=0.4pt] (4.,-1.)-- (4.,1.);
\draw [->,line width=0.4pt] (4.5,1.) -- (4.5,3.);
\draw [->,line width=0.4pt] (6.,1.) -- (6.,3.);
\draw [->,line width=0.4pt] (4.5,-2.) -- (4.5,-1.);
\draw [->,line width=0.4pt] (6.,-2.) -- (6.,-1.);
\draw (4.75,0.5) node[anchor=north west] {$G'$};
\draw (4.2,4.2) node[anchor=north west] {$1$};
\draw (5.7,4) node[anchor=north west] {$m$};
\draw (4.4,-2.2) node[anchor=north west] {$\ldots$};
\draw (4.6,2.) node[anchor=north west] {$\ldots$};
\end{tikzpicture}
$\substack{\displaystyle =\\ \vspace{3cm}}$
\begin{tikzpicture}[line cap=round,line join=round,>=triangle 45,x=0.5cm,y=0.5cm]
\clip(-2.5,-4.) rectangle (1.,8.);
\draw [line width=0.4pt] (-2.,1.)-- (0.5,1.);
\draw [line width=0.4pt] (0.5,1.)-- (0.5,-1.);
\draw [line width=0.4pt] (0.5,-1.)-- (-2.,-1.);
\draw [line width=0.4pt] (-2.,-1.)-- (-2.,1.);
\draw [->,line width=0.4pt] (-1.5,1.) -- (-1.5,3.);
\draw [->,line width=0.4pt] (0.,1.) -- (0.,3.);
\draw [->,line width=0.4pt] (-1.5,-3.) -- (-1.5,-1.);
\draw [->,line width=0.4pt] (0.,-3.) -- (0.,-1.);
\draw (-1.25,0.5) node[anchor=north west] {$G$};
\draw (-1.8,-3) node[anchor=north west] {$1$};
\draw (-0.3,-3) node[anchor=north west] {$k$};
\draw (-1.4,-2.2) node[anchor=north west] {$\ldots$};
\draw (-1.4,2.) node[anchor=north west] {$\ldots$};
\draw [line width=0.4pt] (-2.,5.)-- (0.5,5.);
\draw [line width=0.4pt] (0.5,5.)-- (0.5,3.);
\draw [line width=0.4pt] (0.5,3.)-- (-2.,3.);
\draw [line width=0.4pt] (-2.,3.)-- (-2.,5.);
\draw [->,line width=0.4pt] (-1.5,5.) -- (-1.5,7.);
\draw [->,line width=0.4pt] (0.,5.) -- (0.,7.);
\draw (-1.25,4.5) node[anchor=north west] {$G'$};
\draw (-1.8,8.2) node[anchor=north west] {$1$};
\draw (-0.3,8.) node[anchor=north west] {$m$};
\draw (-1.4,6.) node[anchor=north west] {$\ldots$};
\end{tikzpicture}
\end{center}

\vspace{-1.5cm}

This construction can be generalised from TRAPs of graphs to an arbitrary TRAP.
\begin{defiprop}\label{propverticalconcatenation}
Let $P$ be a TRAP. We define a vertical concatenation\footnote{
	When there is a risk of confusion, we will write $\circ_P$ for the vertical concatenation of a given TRAP $P$.}  $\circ$ in the following way:
\begin{align*}
&\forall (k,l,m)\in \N^3,\:\forall q\in P(l,m),\:\forall p\in P(k,l),&
p\circ q&:=t_{k+1,1}\circ \ldots \circ t_{k+l-1,l-1}\circ t_{k+l,l}(p*q).
\end{align*}
This operation is associative: for any $(k,l,m,n) $ in  $\N^4$, for any $(p,q,r)$ in $ P(k,l)\times P(l,m)\times P(n,k)$,
\begin{equation}\label{eq:assovertconcTRAP}
r\circ (q\circ p)=(r\circ q)\circ p.
\end{equation}
\end{defiprop}
\begin{rk} \label{rk:TRAP_PROP} 
 We therefore have that non-unitary TRAPs carry a non-unitary version of the PROP structure. In the unitary case we obtain a full-fledge PROP; see \cite[Definition-Proposition 4.1.1]{ClFoPa21}.
\end{rk}
\begin{proof} 
Recall that in Subsection \ref{subsec:morphism_TRAPs_free} we identified any element $p$ of the decorating set and the solar graph with one vertex decorated by $p$ (see Equation \eqref{eq:simple_decorated_graph}). 
Let $\alpha:\rPGr(P)\longrightarrow P$ be the unique TRAP morphism such that $\alpha(p)=p$ for any $p\in P$ whose existence follows from Theorem \ref{thm:freetraps} and more specifically from the case \ty{detailed} in Remark \ref{rk:PhiPGrX}.
This is therefore a surjective TRAP morphism. As $\alpha$ respects the horizontal concatenation and the partial trace maps, 
for any graphs $G,G' \in \rPGr(P)$ such that $G\circ G'$ is well-defined, $\alpha(G)\circ\alpha(G')$ is also well-defined and
\[\alpha(G)\circ \alpha(G')=\alpha(G\circ G').\]
Since the vertical concatenation is clearly associative in $\rPGr(P)$, the vertical concatenation is associative in $P$.
\end{proof}
\begin{rk}
 In \cite[Paragraph 3.3.3]{JY15}, the authors also define {\bf partial vertical concatenations}, which could also be built in this framework by gluing only a subset of the outputs to the inputs with the partial trace maps. I do not pursue further this course since such partial vertical concatenations will play no role here.
\end{rk}
Let us illustrate the link between TRAPs and (non-unitary) PROPs in a example built from the TRAP of Proposition \ref{prop:Homfr}.
\begin{example}
Let $V$ be a vector space and let $f=\theta(v_1\ldots v_l\otimes f_1\ldots f_k)\in \Hom_V^{fr}(k,l)$,
$g=\theta(w_1\ldots w_m\otimes g_1\ldots g_l) \in \Hom_V^{fr}(l,m)$. Then, denoting by $\bullet$ the vertical
concatenation of $\Hom_V^{fr}$:
\begin{align*}
g\bullet f&=g_1(v_1)\ldots g_l(v_l)\theta(w_1\ldots w_m\otimes f_1\ldots f_k)\\
&=\theta(w_1\ldots w_m\otimes g_1\ldots g_l) \circ \theta(v_1\ldots v_l\otimes f_1\ldots f_k)\\
&=g\circ f.
\end{align*}
Hence, the vertical concatenation induced by the TRAP structure is the usual composition of linear maps. 
If $V$ is not finite-dimensional, this composition does not have a unit, as $\mathrm{Id}_V$ is not of finite rank.
\end{example}

We end  this subsection with a simple yet important proposition.
\begin{prop} \label{prop:morphismPROPfromTRAP}
For any two   TRAPs $P=(P(k,l))_{k,l\geq0}$ and $Q$, any TRAP morphism $\phi=(\phi(k,l))_{k,l\geq0}:P\longrightarrow Q$ is compatible with the vertical concatenations of $P$ and $Q$.
\end{prop}

\begin{proof}
 We  need to show that for any TRAPs $P$ and $Q$ and any TRAP morphism $\phi:P\longrightarrow Q$ as in the statement of the proposition, for any $(k,l,m)$ in $\N^3$, $p_1$ in $P(k,l)$ and $p_2$ in $P(l,m)$ we have
 \begin{equation*}
  \phi(k,m)(p_2\circ_P p_1) = \phi(k,l)(p_1) \circ_Q\phi(l,m)(p_2).
 \end{equation*}
 Using the definition of the vertical concatenation in the TRAP $P$ and the \cy{fourth} property of the Definition \ref{defn:trap_morphism} of morphisms of TRAPs we have
 \begin{equation*}
  \phi(k,m)(p_2\circ_P p_1) = t_{k+1,1}^Q\circ\cdots\circ t_{k+l,l}^Q[\phi(k+l,m+l)(p_1*p_2)]
 \end{equation*}
 with $t_{i,j}^Q$ the  partial trace maps  of the TRAP $Q$. Then using the \cy{third} property of Definition \ref{defn:trap_morphism} 
 we obtain:
 \begin{equation*}
  \phi(k,m)(p_2\circ_P p_1) = t_{k+1,1}^Q\circ\cdots\circ t_{k+l,l}^Q[\phi(k,l)(p_1)*\phi(l,m)(p_2)] 
  = \phi(k,l)(p_1) \circ_Q\phi(l,m)(p_2).
  \end{equation*}
\end{proof}

\subsection{The generalised trace  on  a TRAP}

If $G$ is a solar graph with the same number of inputs and outputs, we define its generalised trace by, roughly speaking, grafting
any of its input to the output with the same index:
\begin{center}
\begin{tikzpicture}[line cap=round,line join=round,>=triangle 45,x=0.5cm,y=0.5cm]
\clip(-2.5,-4.) rectangle (1.,4.);
\draw [line width=0.4pt] (-2.,1.)-- (0.5,1.);
\draw [line width=0.4pt] (0.5,1.)-- (0.5,-1.);
\draw [line width=0.4pt] (0.5,-1.)-- (-2.,-1.);
\draw [line width=0.4pt] (-2.,-1.)-- (-2.,1.);
\draw [->,line width=0.4pt] (-1.5,1.) -- (-1.5,3.);
\draw [->,line width=0.4pt] (0.,1.) -- (0.,3.);
\draw [->,line width=0.4pt] (-1.5,-3.) -- (-1.5,-1.);
\draw [->,line width=0.4pt] (0.,-3.) -- (0.,-1.);
\draw (-1.25,0.5) node[anchor=north west] {$G$};
\draw (-1.9,-3) node[anchor=north west] {$1$};
\draw (-0.3,-3) node[anchor=north west] {$k$};
\draw (-1.4,-2.2) node[anchor=north west] {$\ldots$};
\draw (-1.9,4.2) node[anchor=north west] {$1$};
\draw (-0.3,4.2) node[anchor=north west] {$k$};
\draw (-1.4,2.) node[anchor=north west] {$\ldots$};
\end{tikzpicture}
$\substack{\displaystyle \longmapsto\\ \vspace{3cm}}$
\begin{tikzpicture}[line cap=round,line join=round,>=triangle 45,x=0.5cm,y=0.5cm]
\clip(-2.5,-4.) rectangle (3.5,4.);
\draw [line width=0.4pt] (-2.,1.)-- (0.5,1.);
\draw [line width=0.4pt] (0.5,1.)-- (0.5,-1.);
\draw [line width=0.4pt] (0.5,-1.)-- (-2.,-1.);
\draw [line width=0.4pt] (-2.,-1.)-- (-2.,1.);
\draw [line width=0.4pt] (-1.5,1.) -- (-1.5,2.);
\draw [line width=0.4pt] (0.,1.) -- (0.,2.);
\draw [line width=0.4pt] (-1.5,-2.) -- (-1.5,-1.);
\draw [line width=0.4pt] (0.,-2.) -- (0.,-1.);
\draw (-1.25,0.5) node[anchor=north west] {$G$};
\draw (-1.4,-2.2) node[anchor=north west] {$\ldots$};
\draw (-1.4,2.) node[anchor=north west] {$\ldots$};
\draw [shift={(1.5,-2.)},line width=0.4pt]  plot[domain=3.141592653589793:6.283185307179586,variable=\t]({1.*1.5*cos(\t r)+0.*1.5*sin(\t r)},{0.*1.5*cos(\t r)+1.*1.5*sin(\t r)});
\draw [shift={(0.,-2.)},line width=0.4pt]  plot[domain=3.141592653589793:6.283185307179586,variable=\t]({1.*1.5*cos(\t r)+0.*1.5*sin(\t r)},{0.*1.5*cos(\t r)+1.*1.5*sin(\t r)});
\draw [shift={(1.5,2.)},line width=0.4pt]  plot[domain=0.:3.141592653589793,variable=\t]({1.*1.5*cos(\t r)+0.*1.5*sin(\t r)},{0.*1.5*cos(\t r)+1.*1.5*sin(\t r)});
\draw [shift={(0.,2.)},line width=0.4pt]  plot[domain=0.:3.141592653589793,variable=\t]({1.*1.5*cos(\t r)+0.*1.5*sin(\t r)},{0.*1.5*cos(\t r)+1.*1.5*sin(\t r)});
\draw [->,line width=0.4pt] (1.5,2.) -- (1.5,-2.);
\draw [->,line width=0.4pt] (3.,2.) -- (3.,-2.);
\end{tikzpicture}

\vspace{-1.5cm}
\end{center}
This construction preserve corolla ordered graphs and $X$-decorated graphs.
Moreover, we can describe this construction in terms of the partial trace maps: if $G\in {\rPGr}(X)(k,k)$, then
its generalized trace is 
\[t_{1,1}\circ \ldots \circ t_{k,k}(G)=t_{1,1}\circ \ldots \circ t_{1,1}(G).\]
This formulas have a meaning for any TRAP:

\begin{defn} \label{defi:generalised_Traces}
Let $P$ be a TRAP. For any $p$ in $ P(k,k)$, with $k$ in $\N$,  the generalised trace on $P$ is defined as:
\[\mathrm{Tr}_P(p):=t_{1,1}\circ \ldots \circ t_{k,k}(p)\in P(0,0).\]
\end{defn}
In the case of the TRAPs $\rPGr(X)$, we shall simply write $\mathrm{Tr}$ instead of $\mathrm{Tr}_{\rPGr(X)}$.

Let us now state some properties of these generalised traces.
\begin{prop} \label{prop:generalised_traces}
Let $P$ be a TRAP.
\begin{enumerate}
\item For any $(k,l)$ in $ \N^2$, for any $(p,q)$ in $P(k,l)\times P(l,k)$,
\begin{align*}
\mathrm{Tr}_P(p\circ q)&=\mathrm{Tr}_P(q\circ p),
\end{align*}
which justifies the terminology "trace".
\item For any $(k,l)$ in $  \N^2$, for any $(p,q)$ in $ P(k,k)\times P(l,l)$,
\begin{align*}
\mathrm{Tr}_P(p*q)&=\mathrm{Tr}_P(p)*\mathrm{Tr}_P(q).
\end{align*}
\end{enumerate}
\end{prop}
\begin{proof}
Let $\alpha:{\rPGr}(P)\longrightarrow P$ be, as before in the proof of Definition-Proposition \ref{propverticalconcatenation}, the unique TRAP morphism which extends the identity map on $P$.
Since $\alpha$ respects the partial trace maps, for any graph $G\in {\rPGr}(P)(k,k)$,
\[\alpha \circ \mathrm{Tr}(G)=\mathrm{Tr}_P\circ \alpha(G).\]
Let $p,q\in P(k,k)$. In $\rGr(P)$, $\mathrm{Tr}(q\circ p)$ and  $\mathrm{Tr}(p\circ q)$ are represented respectively by the graphs
\begin{align*}
&\begin{tikzpicture}[line cap=round,line join=round,>=triangle 45,x=0.5cm,y=0.5cm]
\clip(-2.5,-4.) rectangle (3.5,8.);
\draw [line width=0.4pt] (-2.,1.)-- (0.5,1.);
\draw [line width=0.4pt] (0.5,1.)-- (0.5,-1.);
\draw [line width=0.4pt] (0.5,-1.)-- (-2.,-1.);
\draw [line width=0.4pt] (-2.,-1.)-- (-2.,1.);
\draw [->,line width=0.4pt] (-1.5,1.) -- (-1.5,3.);
\draw [->,line width=0.4pt] (0.,1.) -- (0.,3.);
\draw [line width=0.4pt] (-1.5,-2.) -- (-1.5,-1.);
\draw [line width=0.4pt] (0.,-2.) -- (0.,-1.);
\draw (-1.25,0.5) node[anchor=north west] {$p$};
\draw (-1.4,-2.2) node[anchor=north west] {$\ldots$};
\draw (-1.4,2.) node[anchor=north west] {$\ldots$};
\draw [line width=0.4pt] (-2.,5.)-- (0.5,5.);
\draw [line width=0.4pt] (0.5,5.)-- (0.5,3.);
\draw [line width=0.4pt] (0.5,3.)-- (-2.,3.);
\draw [line width=0.4pt] (-2.,3.)-- (-2.,5.);
\draw [->,line width=0.4pt] (-1.5,5.) -- (-1.5,6.);
\draw [->,line width=0.4pt] (0.,5.) -- (0.,6.);
\draw (-1.25,4.5) node[anchor=north west] {$q$};
\draw (-1.4,6.) node[anchor=north west] {$\ldots$};
\draw [shift={(1.5,-2.)},line width=0.4pt]  plot[domain=3.141592653589793:6.283185307179586,variable=\t]({1.*1.5*cos(\t r)+0.*1.5*sin(\t r)},{0.*1.5*cos(\t r)+1.*1.5*sin(\t r)});
\draw [shift={(0.,-2.)},line width=0.4pt]  plot[domain=3.141592653589793:6.283185307179586,variable=\t]({1.*1.5*cos(\t r)+0.*1.5*sin(\t r)},{0.*1.5*cos(\t r)+1.*1.5*sin(\t r)});
\draw [shift={(1.5,6.)},line width=0.4pt]  plot[domain=0.:3.141592653589793,variable=\t]({1.*1.5*cos(\t r)+0.*1.5*sin(\t r)},{0.*1.5*cos(\t r)+1.*1.5*sin(\t r)});
\draw [shift={(0.,6.)},line width=0.4pt]  plot[domain=0.:3.141592653589793,variable=\t]({1.*1.5*cos(\t r)+0.*1.5*sin(\t r)},{0.*1.5*cos(\t r)+1.*1.5*sin(\t r)});
\draw [->,line width=0.4pt] (1.5,6.) -- (1.5,-2.);
\draw [->,line width=0.4pt] (3.,6.) -- (3.,-2.);
\end{tikzpicture}& 
\begin{tikzpicture}[line cap=round,line join=round,>=triangle 45,x=0.5cm,y=0.5cm]
\clip(-2.5,-4.) rectangle (3.5,8.);
\draw [line width=0.4pt] (-2.,1.)-- (0.5,1.);
\draw [line width=0.4pt] (0.5,1.)-- (0.5,-1.);
\draw [line width=0.4pt] (0.5,-1.)-- (-2.,-1.);
\draw [line width=0.4pt] (-2.,-1.)-- (-2.,1.);
\draw [->,line width=0.4pt] (-1.5,1.) -- (-1.5,3.);
\draw [->,line width=0.4pt] (0.,1.) -- (0.,3.);
\draw [line width=0.4pt] (-1.5,-2.) -- (-1.5,-1.);
\draw [line width=0.4pt] (0.,-2.) -- (0.,-1.);
\draw (-1.25,0.5) node[anchor=north west] {$q$};
\draw (-1.4,-2.2) node[anchor=north west] {$\ldots$};
\draw (-1.4,2.) node[anchor=north west] {$\ldots$};
\draw [line width=0.4pt] (-2.,5.)-- (0.5,5.);
\draw [line width=0.4pt] (0.5,5.)-- (0.5,3.);
\draw [line width=0.4pt] (0.5,3.)-- (-2.,3.);
\draw [line width=0.4pt] (-2.,3.)-- (-2.,5.);
\draw [->,line width=0.4pt] (-1.5,5.) -- (-1.5,6.);
\draw [->,line width=0.4pt] (0.,5.) -- (0.,6.);
\draw (-1.25,4.5) node[anchor=north west] {$p$};
\draw (-1.4,6.) node[anchor=north west] {$\ldots$};
\draw [shift={(1.5,-2.)},line width=0.4pt]  plot[domain=3.141592653589793:6.283185307179586,variable=\t]({1.*1.5*cos(\t r)+0.*1.5*sin(\t r)},{0.*1.5*cos(\t r)+1.*1.5*sin(\t r)});
\draw [shift={(0.,-2.)},line width=0.4pt]  plot[domain=3.141592653589793:6.283185307179586,variable=\t]({1.*1.5*cos(\t r)+0.*1.5*sin(\t r)},{0.*1.5*cos(\t r)+1.*1.5*sin(\t r)});
\draw [shift={(1.5,6.)},line width=0.4pt]  plot[domain=0.:3.141592653589793,variable=\t]({1.*1.5*cos(\t r)+0.*1.5*sin(\t r)},{0.*1.5*cos(\t r)+1.*1.5*sin(\t r)});
\draw [shift={(0.,6.)},line width=0.4pt]  plot[domain=0.:3.141592653589793,variable=\t]({1.*1.5*cos(\t r)+0.*1.5*sin(\t r)},{0.*1.5*cos(\t r)+1.*1.5*sin(\t r)});
\draw [->,line width=0.4pt] (1.5,6.) -- (1.5,-2.);
\draw [->,line width=0.4pt] (3.,6.) -- (3.,-2.);
\end{tikzpicture}
\end{align*}
which are the same. Applying $\alpha$, we obtain $\mathrm{Tr}_P(p\circ q)=\mathrm{Tr}_P(q\circ p)$. 
Moreover, the graph $\mathrm{Tr}(p*q)$ is represented by 

\begin{center}
\begin{tikzpicture}[line cap=round,line join=round,>=triangle 45,x=0.5cm,y=0.5cm]
\clip(-2.5,-4.) rectangle (3.5,4.);
\draw [line width=0.4pt] (-2.,1.)-- (0.5,1.);
\draw [line width=0.4pt] (0.5,1.)-- (0.5,-1.);
\draw [line width=0.4pt] (0.5,-1.)-- (-2.,-1.);
\draw [line width=0.4pt] (-2.,-1.)-- (-2.,1.);
\draw [line width=0.4pt] (-1.5,1.) -- (-1.5,2.);
\draw [line width=0.4pt] (0.,1.) -- (0.,2.);
\draw [line width=0.4pt] (-1.5,-2.) -- (-1.5,-1.);
\draw [line width=0.4pt] (0.,-2.) -- (0.,-1.);
\draw (-1.25,0.5) node[anchor=north west] {$p$};
\draw (-1.4,-2.2) node[anchor=north west] {$\ldots$};
\draw (-1.4,2.) node[anchor=north west] {$\ldots$};
\draw [shift={(1.5,-2.)},line width=0.4pt]  plot[domain=3.141592653589793:6.283185307179586,variable=\t]({1.*1.5*cos(\t r)+0.*1.5*sin(\t r)},{0.*1.5*cos(\t r)+1.*1.5*sin(\t r)});
\draw [shift={(0.,-2.)},line width=0.4pt]  plot[domain=3.141592653589793:6.283185307179586,variable=\t]({1.*1.5*cos(\t r)+0.*1.5*sin(\t r)},{0.*1.5*cos(\t r)+1.*1.5*sin(\t r)});
\draw [shift={(1.5,2.)},line width=0.4pt]  plot[domain=0.:3.141592653589793,variable=\t]({1.*1.5*cos(\t r)+0.*1.5*sin(\t r)},{0.*1.5*cos(\t r)+1.*1.5*sin(\t r)});
\draw [shift={(0.,2.)},line width=0.4pt]  plot[domain=0.:3.141592653589793,variable=\t]({1.*1.5*cos(\t r)+0.*1.5*sin(\t r)},{0.*1.5*cos(\t r)+1.*1.5*sin(\t r)});
\draw [->,line width=0.4pt] (1.5,2.) -- (1.5,-2.);
\draw [->,line width=0.4pt] (3.,2.) -- (3.,-2.);
\end{tikzpicture}
\begin{tikzpicture}[line cap=round,line join=round,>=triangle 45,x=0.5cm,y=0.5cm]
\clip(-2.5,-4.) rectangle (6.,4.);
\draw [line width=0.4pt] (-2.,1.)-- (0.5,1.);
\draw [line width=0.4pt] (0.5,1.)-- (0.5,-1.);
\draw [line width=0.4pt] (0.5,-1.)-- (-2.,-1.);
\draw [line width=0.4pt] (-2.,-1.)-- (-2.,1.);
\draw [line width=0.4pt] (-1.5,1.) -- (-1.5,2.);
\draw [line width=0.4pt] (0.,1.) -- (0.,2.);
\draw [line width=0.4pt] (-1.5,-2.) -- (-1.5,-1.);
\draw [line width=0.4pt] (0.,-2.) -- (0.,-1.);
\draw (-1.25,0.5) node[anchor=north west] {$q$};
\draw (-1.4,-2.2) node[anchor=north west] {$\ldots$};
\draw (-1.4,2.) node[anchor=north west] {$\ldots$};
\draw [shift={(1.5,-2.)},line width=0.4pt]  plot[domain=3.141592653589793:6.283185307179586,variable=\t]({1.*1.5*cos(\t r)+0.*1.5*sin(\t r)},{0.*1.5*cos(\t r)+1.*1.5*sin(\t r)});
\draw [shift={(0.,-2.)},line width=0.4pt]  plot[domain=3.141592653589793:6.283185307179586,variable=\t]({1.*1.5*cos(\t r)+0.*1.5*sin(\t r)},{0.*1.5*cos(\t r)+1.*1.5*sin(\t r)});
\draw [shift={(1.5,2.)},line width=0.4pt]  plot[domain=0.:3.141592653589793,variable=\t]({1.*1.5*cos(\t r)+0.*1.5*sin(\t r)},{0.*1.5*cos(\t r)+1.*1.5*sin(\t r)});
\draw [shift={(0.,2.)},line width=0.4pt]  plot[domain=0.:3.141592653589793,variable=\t]({1.*1.5*cos(\t r)+0.*1.5*sin(\t r)},{0.*1.5*cos(\t r)+1.*1.5*sin(\t r)});
\draw [->,line width=0.4pt] (1.5,2.) -- (1.5,-2.);
\draw [->,line width=0.4pt] (3.,2.) -- (3.,-2.);
\end{tikzpicture}
\end{center}
which is also a graphical representation of $\mathrm{Tr}(p)*\mathrm{Tr}(q)$. Applying $\alpha$,
we obtain $\mathrm{Tr}_P(p*q)=\mathrm{Tr}_P(p)*\mathrm{Tr}_P(q)$. 
\end{proof}
Let us check that our generalised traces indeed generalise the usual traces of endomorphisms of finite dimensional vector spaces.
\begin{example}
Let $V$ be a finite dimensional vector space and $f=\theta(v_1\ldots v_k\otimes f_1\ldots f_k) \in \Hom_V^{fr}(k,k)$. 
Identifying $\Hom_V(0,0)$ with $\mathbb{R}$, we obtain that
\[\mathrm{Tr}_{\Hom_V}(f)=f_1(v_1)\ldots f_k(v_k),\]
which is the usual trace of linear endomorphisms of a finite-dimensional vector space.
If $V$ is not finite-dimensional, $\mathrm{Tr}_{\Hom_V^{fr}}$ is a direct generalisation of this trace for linear endomorphisms
of finite rank.  
\end{example}

\subsection{Amplitudes and generalised convolutions}

By Theorem \ref{thm:freetraps} applied to $\phi=\mathrm{Id}_P$,
 we know that for any TRAP $P$ there exists a canonical TRAP map $\Phi_P:\rPGr(P)\longrightarrow P$ (see Remark \ref{rk:PhiPGrX}).
\begin{defn} \label{def:generalised-P-convolution}
 Let $G$ be a graph decorated by a TRAP $P$. 
 The \textbf{$P$-amplitude} associated to $G$ is the  image  of $G$ under $\Phi_P$.
 
 When $P=\mathcal{K}_M^\infty$ is the TRAP of smooth generalised kernels over a smooth finite dimensional closed Riemannian manifold $M$ of Subsection \ref{subsec:smoothingop} (that is if $P(k,l)=\mathcal{K}_M^\infty(k,l)$ with the r.h.s defined in \eqref{eq:Klm}), 
 we simply write $\Phi$ for $\Phi_P$ and call $\Phi(G)$   the \textbf{smooth amplitudes} 
 associated to $G\in\rPGr(\mathcal{K}_M^\infty)$.
\end{defn}
\begin{rk}
The terminology $P$-amplitude is justified in so far as it associates to a graph an expression in $P$ depending on the ingoing and outgoing edges of the graph in a similar way  as an amplitude  associated to a Feynman diagram depends on the external ingoing and outgoing momenta. So these \ty{$P$-amplitudes} are a step toward to our final goal of building amplitudes of Feynman graphs.
\end{rk}
The case of a path graph relates amplitudes to convolutions.
\begin{rk} \label{rk:generalised_conv}
	Let $G$ be a path graph decorated by $X=(\mathcal{K}_X^\infty(k,l))_{k,l\geq0}$, that is to say a solar graph such that $I(G)=O(G)=[1]$, $V(G)=\{v_1,\cdots,v_n\}$, $E(G)=\{e_1,\cdots,e_{n-1}\}$ and the source and target maps defined by $s_G(1)=v_n$, $t_G(1)=v_1$ and 
	\begin{equation*}
	 \forall i\in [n-1],~s_G(e_i)=v_i,\qquad t_G(e_i)=v_{i+1}.
	\end{equation*}
	Here is a graphical representation of this graph:
	\[\xymatrix{1\ar[r]& \rond{v_1}\ar[r]&\ldots \ar[r]&\rond{v_n}\ar[r]&1}\]
	Let $P_i, i=1, \cdots, n$ be   smoothing pseudo-differential operators each of which is defined by the kernel $K_i$ that decorates  the vertex $v_i$. Then the generalised convolution associated to the graph $G$ is the convolution $K_1\star\cdots\star K_n$ of the kernels $K_1,\cdots,K_n$, which is the 
	kernel of the smoothing pseudo-differential operator $P_1\circ\cdots\circ P_n$. In this sense, $P$-amplitudes can be seen as  generalisations of the convolution of multiple smooth kernels.
\end{rk}
The generalised amplitude of a TRAP $P$ respects the horizontal concatenation of $P$ but also the vertical concatenation of $P$ built from the partial traces of $P$ in Subsection \ref{subsec:vert_conc_TRAP}.
 \begin{prop} \label{prop:gen_conv_vert_conc}
     For any TRAP $P$, the $P$-amplitude  associated to a horizontal concatenation of graphs is the horizontal concatenation of their $P$-amplitudes: for any $G_1,G_2\in\rPGr(P)$,
\begin{equation*}
\Phi_P(G_1*G_2)=\Phi_P(G_1)*\Phi_P(G_2),
\end{equation*}     
     and the same holds for the vertical concatenation: if $G_1\circ G_2$ exists, then 
     \begin{equation*}
      \Phi_P(G_1\circ G_2) = \Phi_P(G_1)\circ_P\Phi_P(G_2)
     \end{equation*}
     with $\circ_P$ the vertical concatenation of $P$.
    \end{prop}
    \begin{proof}
    % First, notice that since $G_2\circ G_2$ exists and since $\Phi_P$ is a TRAP morphism, $\Phi_P(G_1)\circ\Phi_P(G_2)$ exists. 
     This follows directly from the fact that $\Phi_P$ is a TRAP morphism and from Proposition \ref{prop:morphismPROPfromTRAP}.
    \end{proof}
For any TRAP $P$, let $\iota_P:P\hookrightarrow\rPGr(P)$ be the canonical embedding of $P$ into the  TRAP of $P$-decorated graphs that is,  $\iota_P(p)$ is the solar graph with only one vertex decorated by $p$. We  have the following simple \cy{Proposition, which basically states that every TRAP computations can be performed via graphs}.
\begin{lemma}
 For any TRAP $P$ the following diagram commutes:
  \begin{align*}
  & \xymatrix{P\times_\circ P~\ar@{^{(}->}[r]
  \ar[d]_{\circ_P} & \rPGr(P)\times\rPGr(P) \ar[d]^{\circ} \\
  P & \ar[l]^{\Phi_P} \rPGr(P)}
 \end{align*}
 with $\circ_p$ the vertical concatenation of the TRAP $P$, the top arrow given by $\iota_P\times\iota_P$ and $P\times_\circ P\subseteq P\times P$ is the domain of the vertical concatenation of the TRAP $P$, as defined in Definition \ref{defi:generalised_Traces}.
\end{lemma}
  In words,  the vertical concatenation of two elements $p_1$ and $p_2$ of $P$ is the $P$-amplitude associated with the graph given by the vertical concatenation of two graphs with exactly one vertex, each decorated by one $p_i$.
Graphically, if $p\in P(k,l)$ and $q\in P(l,m)$:
\[\Phi_P\left(\substack{\hspace{5mm}\\ 
\begin{tikzpicture}[line cap=round,line join=round,>=triangle 45,x=0.5cm,y=0.5cm]
		\clip(-2.5,-4.) rectangle (0.5,8.);
		\draw [line width=0.4pt] (-2.,1.)-- (0.5,1.);
		\draw [line width=0.4pt] (0.5,1.)-- (0.5,-1.);
		\draw [line width=0.4pt] (0.5,-1.)-- (-2.,-1.);
		\draw [line width=0.4pt] (-2.,-1.)-- (-2.,1.);
		\draw [->,line width=0.4pt] (-1.5,1.) -- (-1.5,3.);
		\draw [->,line width=0.4pt] (0.,1.) -- (0.,3.);
		\draw [->,line width=0.4pt] (-1.5,-3.) -- (-1.5,-1.);
		\draw [->,line width=0.4pt] (0.,-3.) -- (0.,-1.);
		\draw (-1.25,0.5) node[anchor=north west] {$p$};
		\draw (-1.8,-3) node[anchor=north west] {$1$};
		\draw (-0.3,-3) node[anchor=north west] {$k$};
		\draw (-1.4,-2.2) node[anchor=north west] {$\ldots$};
		\draw [line width=0.4pt] (-2.,5.)-- (0.5,5.);
		\draw [line width=0.4pt] (0.5,5.)-- (0.5,3.);
		\draw [line width=0.4pt] (0.5,3.)-- (-2.,3.);
		\draw [line width=0.4pt] (-2.,3.)-- (-2.,5.);
		\draw [->,line width=0.4pt] (-1.5,5.) -- (-1.5,7.);
		\draw [->,line width=0.4pt] (0.,5.) -- (0.,7.);
		\draw (-1.25,4.5) node[anchor=north west] {$q$};
		\draw (-1.8,8.2) node[anchor=north west] {$1$};
		\draw (-0.4,8.1) node[anchor=north west] {$m$};
		\draw (-1.4,6.) node[anchor=north west] {$\ldots$};
		\end{tikzpicture}}\hspace{3mm}\right)
=\Phi_P(p)\circ_P \Phi_P(q).\]
\begin{proof}
 Let $P$ be a TRAP. Then for any $p_1$, $p_2$ in $P$ such that $p_1\circ_P p_2$ is well defined, $\iota_P(p_1)\circ\iota_P(p_2)$ is well-defined since $\iota_P$ respects the gradings and we have
 \begin{align*}
  \Phi_P(\iota_P(p_1)\circ\iota_P(p_2)) & = \Phi_P(\iota_P(p_1))\circ_P\Phi(\iota_P(p_2)) \quad \text{by Proposition \ref{prop:gen_conv_vert_conc}} \\
  & = p_1\circ_P p_2
 \end{align*}
 since for any TRAP $P$, $\Phi_P\circ\iota_P = \mathrm{Id}_P$ by definition of $\Phi_P$ (Equation \eqref{eq:G_simple_solar} with $k=1$ and $\phi=\mathrm{Id}_P$).
\end{proof}

 \begin{rk} 
 Note that the vertical concatenation is \emph{not} the same as the $P$-amplitude: the latter has a much larger domain. 
\end{rk}

Applying the above constructions to the TRAP of smooth kernels  described in Theorem \ref{theo:Kinfty}, whose  partial traces  \eqref{eq:trconvK} are given by integrations on the underlying manifold,  easily yields the following statement.
We use the notations of Subsection \ref{subsec:smoothingop}: $M$ is a smooth, finite dimensional orientable closed manifold and $\mu(z)$ is a volume form on $M$.
\begin{theo} \label{thm:generalised_convolution}
For the TRAP $\left(\mathcal{K}_M^\infty(k,l)\right)_{k,l\geq0}$, the following statements hold:
  \begin{enumerate}
   \item The vertical concatenation  of two kernels corresponds to their  \textbf{generalised  convolution}: 
   \begin{align*}
&\forall (k,l,m)\in \N^3,\:\forall K_1\in {\mathcal K}_{M}^\infty(k,l),\:\forall K_2\in {\mathcal K}_{M}^\infty(l,m), \forall (x_1, \cdots, x_k, z_1, \cdots, z_m)\in M^{k+m} ,\\
&K_2\circ K_1(x_1,  \cdots, x_k, z_1, \cdots, z_m)\\
&=t_{k+1,1}\circ \ldots \circ t_{k+l-1,l-1}\circ t_{k+l,l}(K_1\otimes K_2)(x_1,  \cdots, x_k, z_1, \cdots, z_m)\\
& = \int_{M^l} K_1(x_1, \cdots, x_k, y_1, \cdots, y_l)\, K_2(y_1,  \cdots, y_l,  z_1, \cdots, z_m)\, {d\mu(y_1)\, \cdots d\mu(y_l)},
\end{align*} 
obtained  by integrating along the diagonal $\Delta_M^l:=\{(y_1,\cdots, y_l, y_1, \cdots, y_l), y_i\in M\}\subset M^{2l}$. 

   \item The associativity property $K_3\circ (K_2\circ K_1)= (K_3\circ K_2)\circ K_1$ (cfr. (\ref{eq:assovertconcTRAP})) for any \cy{$K_1\in {\mathcal K}_{M}^\infty(k,l)$, $K_2\in {\mathcal K}_{M}^\infty(l,m)$ and}  $K_3\in {\mathcal K}_{M}^\infty(m,n)$,
     amounts to  the Fubini property for the corresponding multiple integrals{:
     \begin{align} \label{eq:Fubini}
      &\int_{M^m} \left(\int_{M^l} K_1(\vec{x},\vec{y}_1)K_2(\vec{y}_1,\vec{y}_2)d\vec{\mu}(\vec{y}_1)\right)K_3(\vec{y}_2,\vec{z})d\vec{\mu}(\vec{y}_2) \\
      &= \int_{M^l} K_1(\vec{x},\vec{y}_1)\left(\int_{M^m} K_2(\vec{y}_1,\vec{y}_2)K_3(\vec{y}_2,\vec{z})d\vec{\mu}(\vec{y}_2)\right) d\vec{\mu}(\vec{y}_1) \nonumber
     \end{align}
     for any $\vec{x}\in M^k$ and $\vec{z}\in M^n$, where we use the short-hand notations $d\vec{\mu}(\vec{y}_i):=d\mu(y_1)\cdots d\mu(y_{l_i})$.}
     
     \item The generalised trace  of a generalised kernel \cy{$K$}%=K_1\otimes K_2\in \mathcal{K}_M^\infty(k,k)$ 
     is given by the integral along the small diagonal of $M^k$:
   \begin{equation} \label{eq:generalised_trace_smooth_kernel}
    \mathrm{Tr}_{{K}^\infty}(K)=\int_{M^k}K(x_1, \cdots, x_k, x_1, \cdots, x_k)\, d\mu(x_1)\cdots d\mu(x_k).
   \end{equation}
   \cy{It}
% where we have set $K(\vec x,\vec y):=K_1(\vec x) K_2(\vec y)$ and
obeys the following cyclicity property:
   \[\mathrm{Tr}_{{K}^\infty}(\tilde K\circ K)= \mathrm{Tr}_{{K}^\infty}(K\circ \tilde K)\]
   for $K\in\mathcal{K}_M^\infty(k,l)$ and $\tilde K\in\mathcal{K}_M^\infty(l,k)$.
   
   \item The $\mathcal{K}_M^\infty$-amplitude is compatible with the horizontal and vertical concatenations
   in $\mathcal{K}_M^\infty$.
    \end{enumerate}
\end{theo}
\begin{proof} We prove the assertions one by one.
 \begin{enumerate}
  \item The vertical concatenation $\circ$ of Definition-Proposition \ref{propverticalconcatenation} applied to the TRAP $\mathcal{K}_M^{\infty}$ of smooth kernels of Theorem \ref{theo:Kinfty} gives the generalised convolution..
  \item As proved in Definition-Proposition \ref{propverticalconcatenation}, the vertical concatenation $\circ$ of any TRAP is associative. Writing the explicit expression of each side of the equation \cy{$K_3\circ (K_2\circ K_1)= (K_3\circ K_2)\circ K_1$} for the vertical concatenation of the TRAP $\mathcal{K}_M^{\infty}$ shows that the identity amounts to the Fubini property for multiple integrals as given by Equation \eqref{eq:Fubini}.
  \item By Equation \eqref{eq:Klm}, for any $K$ in $\mathcal{K}^\infty_M(k,k)$, we can write $K=K_1\otimes K_2$ with $K_1$ and $K_2$ in $\mathcal{E}^{\widehat\otimes k}$. The generalised trace of Definition \ref{defi:generalised_Traces} for the TRAP  $\mathcal{K}_M^{\infty}$ of smooth kernels of Theorem \ref{theo:Kinfty} combined with the partial traces of this TRAP given by Equation \eqref{eq:trconvK} yields Equation \eqref{eq:generalised_trace_smooth_kernel}. The cyclicity property of $\mathrm{Tr}_{{K}^\infty}$ follows from the cyclicity property of generalised traces (Proposition \ref{prop:generalised_traces}, item 1).
  \item This follows from Proposition \ref{prop:gen_conv_vert_conc} applied to the generalised amplitude of Definition \ref{def:generalised-P-convolution} for the TRAP $\mathcal{K}_M^{\infty}$ of smooth kernels discussed in Theorem \ref{theo:Kinfty}.
 \end{enumerate}
\end{proof}

\section{Toward an application to QFT} \label{section:QFT}

We now aim at making precise our initial claim, namely that TRAPs could be of use to a rigorous approach of Quantum Field Theory. We will focus on the simple but non-trivial case of a scalar QFT on $\R^d$ with only one field. The case of multiple fields could be tackled by introducing ``colored TRAPs'' where the inputs, outputs and traces of these modified TRAPs would carry colors. The free object of this enhanced category can be described in terms of colored graphs. QFTs on lorentzian space are also expected to be \ty{tractable} with the strategy I will present below but it is expected than this will require more sophisticated analytical tools.

\subsection{Feynman amplitudes as distributions} \label{subsec:Feynman}

As explained in the introduction of this Chapter, the goal of (perturbative) QFT is to use the so-called ``Feynamn rules'' to attribute a value to some specific type of graphs. In other word, one is looking at evaluating a map 
\begin{equation*}
 \Phi_\tau:\calG_\tau\longrightarrow X
\end{equation*}
where $\calG_\tau$ is the set of Feynman graphs of the theory $\tau$ one is considering. Instead of looking at evaluating the Feynman rules $\Phi_\tau$ of the theory $\tau$, we will aim at rigorously defining this map.

Let us try to figure out the space $X$ the Feynman rules takes their values into. To simplify the discussion we will consider a scalar theory in $\R^d$. Let us quote \cite[formula I.4.8.]{Ri91} where the bare amplitude \emph{in position space} of a Feynman graph $G$ is written as a distribution in $N$ variables, with $N$ the number of external legs of the graph. This amplitude is (formally) written as:
 \begin{equation} \label{eq:amplitude_F_naive}
  A_G(z_1,\cdots,z_{|E(G)|}) = \int\prod_{i=1}^ndx_i\prod_{l\in E(G)}C(x_l,y_l)
 \end{equation}
 with $n$ the total number of vertices of the Feynman graph $G$, $E(G)$ its set of internal edges, $x_l$ the starting point of the edge $l$ and $y_l$ its ending point. This is an abuse of notation since the integral is over the positions of all the vertices of the graph $G$, thus over the $y_j$ as well as the $x_i$. Finally the $C(x,y)$ are the propagator, or Green functions, of the theory. For a scalar field theory on $\R^d$ it is given by
 \begin{equation*}
  C(x,y)=\frac{K}{||x^2-y^2||^{(d-2)/2}}
 \end{equation*}
 with $||.||$ the usual euclidian norm on $\R^d$ and $K$ a constant that depends on $d$.
 \begin{rk}
  A difficult task one has to tackle when computing Feynman graphs is to include the fact that multiple seemingly different physical processes give the same contribution to the total amplitude. These are taken into account by multiplying the amplitude $A_G$ of the graph $G$ with a combinatorial factor. Tracking down these factors is a painstaking task. Luckily for us we will not need to do so here since we are only interested in the \emph{existence} of the Feynman rules and not their application to specific graphs. The specific values of these combinatorial factors will therefore always be ignored in this text.
 \end{rk}

 To make the above Formula \eqref{eq:amplitude_F_naive} more precise, let us introduce some notations.
 
 Recall (Definition \ref{def:solar_graphs}) that a solar graph is a family of finite sets and maps of finite sets $G=(V(G),E(G),I(G),O(G),s,t,\alpha,\beta)$ with $V(G)$ the set of vertices, $E(G)$ the set of internal edges, $I(G)$ the set of incoming edge and $O(G)$ the set of outgoing edges.  Furthermore $s$ and $t$ are, respectively, the source and target maps of the edges, and $\alpha$ and $\beta$ are, respectively, the indexation of the inputs and outputs of the graph. Feynman graphs are \cy{such} graphs with one additional structure.
 \begin{defn}
  \cy{The {\bf Feynman graph} of a QFT $\tau$ are solar corolla oriented graphs whose topology respect the constraints of the theory $\tau$.}

%   A {\bf Feynman graph} $G$ with $N=k+l$ external legs is a solar graph with $k$ inputs and $l$ outputs (i.e. with $|I(G)|=k$ and $|O(G)|=l$) together with a position map 
%   \begin{equation*}
%    f:V(G)\longrightarrow \R^d
%   \end{equation*}
%   which to each vertex of the graph associate a position in the physical space in which the QFT lives\footnote{in more physical cases, $\R^d$ would be replaced by a $d$ dimensional lorentzian space.}. For $v\in V(G)$, we write $x_v:=f(v)$ the position of the vertex $v$.
  
  For a given theory $\tau$ we write $\calG_\tau^{k,l,E}\subset\calG_\tau$ the set of its Feynman graphs with $k$ inputs, $l$ ouputs and $E$ internal edges.
 \end{defn}
 \cy{The topology condition on Feynman graphs comes from the fact that}
%  Notice that since 
 in general a theory allows only some diagrams to be relevant. \cy{For example, a theory can contain only regular diagrams of a certain valency, i.e. diagrams such that $i(v)+o(v)$ is the same fixed $n$ for all vertices $v$. Then} we do not have that $\calG_\tau^{k,l,E}$ contains all solar graphs with the set of its Feynman graphs with $k$ inputs, $l$ ouputs and $E$ internal edges.
 
 With this framework, Equation \eqref{eq:amplitude_F_naive} should be understood as a distribution acting on $N=k+l$ test functions. We have (still formally)
 \begin{align} \label{eq:amplitude_distrib_bare}
  \langle A_G,\phi_1\cdots\phi_k\psi_1\cdots\psi_l\rangle & := \int\prod_{v\in V(G)}dx_v\prod_{e\in E(G)}C(x_{s(e)},x_{t(e)})\prod_{i=1}^{k}\phi_i(x_{t(\alpha(i))})\prod_{j=1}^l\psi_j(x_{s(\beta(j))}) \\
  & = \int\prod_{v\in V(G)}dx_v\prod_{e\in E(G)}C(x_{s(e)},x_{t(e)})\prod_{i\in I(G)}\phi_{\alpha^{-1}(i)}(x_{t(i)})\prod_{o\in O(G)}\psi_{\beta^{-1}(o)}(x_{s(o)}) \nonumber
 \end{align}
 where we use $\phi_1\cdots\phi_k\psi_1\cdots\psi_l$ as a short hand notation for $\phi_1\otimes\cdots\otimes\phi_k\otimes\psi_1\otimes\cdots\otimes\psi_l$. We have $\phi_1\cdots\phi_k\psi_1\cdots\psi_l\in(\calD(\R^d))^{\otimes(k+l)}$.
 % and, in an abuse of language, we say that $A_G$ is a distribution over $k+l$ copies of $\R^d$ and we write 
%  \begin{equation*}
%   A_G\in\calD'\left((\R^d)^{\otimes(k+l)}\right).
%  \end{equation*}
 \begin{rk}
  Equation \eqref{eq:amplitude_distrib_bare} is the \emph{truncated} amplitude of the graph $G$. The non-truncated amplitude (which is the one of \cite{Ri91}) is defined for a different class of Feynman graphs. They do not have external edges, but instead have external vertices. So in particular they have $k+l$ more integrations and $k+l$ more propagators. Then each of the $\phi_i$ and $\psi_j$ are acted upon by a different variable, which is not necessarily the case in Equation \eqref{eq:amplitude_distrib_bare}. We could do the same work for non-truncated Feynman amplitudes but these objects are less adapted to TRAPs structures.
 \end{rk}
 Now we have a better idea of what the amplitude of a graph should be. However a crucial observation is that the R.H.S. of Equation \eqref{eq:amplitude_distrib_bare} is in general not well-defined. Indeed the integrations over all the positions of the internal vertices induce divergences when $x_v=x_{w}$ when $v$ and $w$ are two vertices linked by an internal edge.
 
 This issue is solved by a \emph{regularisation procedure}. There are multiple regularisation schemes available in the \ty{literature} (zeta, analytic, dimensional, cut-off). Roughly seaking, they all consist by modifying each of the propagators $C(x_v,x_w)$ by a regularised propagator $C^z(x_v,x_w)$ (or $C^\varepsilon(x_v,x_w)$ according to the chosen regularisation scheme). Notice that in the regularised propagators $z$ and $\varepsilon$ are parameters and not powers.
 
 I will not specify which specific regularisation scheme I will use since I will only write down conjectures. However I will always have in the back of my mind the analytical regularisation scheme and use its notations. Furthermore I will insist on using a \emph{multivariate} regularisation scheme, where each of the propagator is regularised with a different complex variable $z_i$ (with $i\in\{1,\cdots,E=|E(G)|\}$) living in an open $\Omega\subseteq\C$ containing the origin. This choice is made with the framework of \emph{multivariate renormalisation} in mind. Recently a framework for tackling renormalisation with multiple regularisation variables was \ty{developed} and implemented in various papers e.g. \cite{CGPZ1}, \cite{CGPZ3}.
 
 Putting every elements of this discussion together we obtain that the (regularised) Feynman rules of a QFT have to associate to a Feynman graph $G$ with $k$ inputs and $l$ outputs its amplitude $A_G$ which has to be a map 
 \begin{equation*}
  A_G:\vec z=(z_1,\cdots,z_E)\longmapsto\left(A_G(\vec z):\calD(\R^d)^{\otimes(k+l)}\longrightarrow\C\right)
 \end{equation*}
 where, as before, $E=|E(G)|$ is the number of internal edges of the graph $G$.

\subsection{Distribution-valued meromorphic germs}

For a fixed $\vec z=(z_1,\cdots,z_E)$, one should expect $A_G(\vec z)$ to be continuous in some sense since the tests functions they act upon can be interpreted as representing the distributions of the ingoing and outgoing particles. 
%We must then choose a topology on the space $\calD(\R^d)^{\otimes(k+l)}$. Luckily there is a canonical choice, since $\calD(\R^d)$ is a nuclear space (see for example \cite[Corollary page 530]{Treves67}).
\begin{defn}
 We call a linear map $A:\calD(\R^d)^{\otimes(k+l)}\longrightarrow\C$ a {\bf distribution on $(\R^d)^{\otimes(k+l)}$} if for any $i\in[k]$ and $j\in[l]$ the maps 
 $\phi_i\mapsto\langle A,\phi_1\cdots\phi_k\psi_1\cdots\psi_l\rangle$ and $\psi_j\mapsto\langle A,\phi_1\cdots\phi_k\psi_1\cdots\psi_l\rangle$ are distributions\footnote{i.e. are continuous for the usual LF topology -- the Limit Fréchet topology on the space compactly supported functions.} on $\R^d$. We write $\calD'\left((\R^d)^{\otimes(k+l)}\right)$ the space of distributions on $(\R^d)^{\otimes(k+l)}$.
 
 A distribution $A$ on $(\R^d)^{\otimes(k+l)}$ is called {\bf regular} if there exists $M\in\N$ and $f_A:(\R^d)^M\longrightarrow\R$ smooth such that, for any test function $\phi_1\cdots\phi_k\psi_1\cdots\psi_l$ we have
 \begin{equation} \label{eq:action_A_fct_test}
  \langle A,\phi_1\cdots\phi_k\psi_1\cdots\psi_l\rangle = \int_{(\R^d)^M}\prod_{i=1}^Mdx_i f_A(\vec{x})\prod_{i=1}^k\phi_i(x_{\sigma(i)})\prod_{j=1}^l\psi_j(x_{\rho(j)})
 \end{equation}
 for some injective maps $\sigma:[k]\longrightarrow[M]$ and $\rho:[l]\longrightarrow[M]$. We write $\mathcal{D}_{\rm reg}'\left((\R^d)^{\otimes(k+l)}\right)$ is the space of regular distributions on $(\R^d)^{\otimes(k+l)}$.
\end{defn}
\begin{rk}
 \begin{itemize}
  \item There is a clear abuse of notation in this definition. Indeed, strictly speaking, elements of $\calD'\left((\R^d)^{\otimes(k+l)}\right)$ are not distributions on $(\R^d)^{\otimes(k+l)}$: they are not evaluated on the space of functions over $(\R^d)^{\otimes(k+l)}$ but on tensors of functions of $\R^d$. They are neither tensor products of distributions.
  \item In the definition above, one could take the space of Schwartz functions as the space of test functions. We choose against it since it seems to be the standard choice and QFT and also will simplify the coming analysis.
 \end{itemize}
\end{rk}
We now need to introduce functions of $\C^N$ with values in  $\distribreg$. A space of meromorphic functions with values in distributions which is adapted to the QFT context was introduced in \cite{DZ17}. It is based on work on meromorphic germs \cite{GPZ20} as well as the classical work \cite{Gro53}. See also \cite[paragraph 3.1]{BL19} for a recent and very readable presentation of the topic. The following definition is closely inspired by the definitions of \cite[Section 3.2]{DZ17}. This approach is close, at least in spirit, to Epstein-Glaser renormalisation \cite{epstein1973role} which seem to nowaday be the favored renormalisation scheme in the perturbative algebraic quantum field theory community (see for example  \cite{dutsch2014dimensional}).
\begin{defn}
 Let $V$ be a complex locally convex Hausdorff vector space and let $V'$ be the space of continuous linear forms on $V$.
 %and let $U\subset\C^n$ be an open domain of $\C^n$. 
 A map $f:U\subseteq\C^n\longrightarrow V$ is called {\bf analytic} if for any $\alpha\in V'$ the map 
 \begin{align*}
  \langle \alpha, ~ f(.)\rangle: U & \longrightarrow \C \\
  \vec z & \longmapsto \langle \alpha,f(\vec z)\rangle
 \end{align*}
 is analytic on $U$. 
 
 Let $L_1,\cdots,L_k$ be linear forms on $\C^n$ and let $f:\C^n\setminus\{L_1=\cdots=L_k=0\}$ be a map such that 
 \begin{align*}
  L_1\cdots L_k f:\C^n\setminus\{L_1=&\cdots=L_k=0\} \longrightarrow V \\
  \vec z & \longmapsto L_1(\vec z)\cdots L_k(\vec z)f(\vec z)
 \end{align*}
 is analytic on $\C^n\setminus\{L_1=\cdots=L_k=0\}$. Then $f$ is called a {\bf meromorphic} function (with linear poles\footnote{here linear poles will always mean linear poles at the origin and will therefore not be specified. The concept can be easily generalised to linear poles at arbitrary point of $\C^n$}).
\end{defn}
Notice that in this definition we use the notion of analytic functions of multiple variables. I chose not to introduce this topic with details and instead to simply refer the reader to the classical introduction \cite{Hor68}.

Now, in QFT, we will be interested only in the behaviour of these object around a point. We therefore do not want to distinguish two maps that differ around this point: we will need germs. I find analytical or dimensional regularisations to be to most convenient, so the germs will be taken around the origin. For zeta regularisation for example, one as to take germs around $1$. The \cy{following definition is also taken from \cite{DZ17}.}
\begin{defn} \label{def:mero_germs_distrib}
 \ty{An} {\bf analytic germ} (at the origin) of analytic $V$-valued maps is an equivalence class of these maps for the natural equivalence relation
 \begin{equation*}
  f\sim g:\Longleftrightarrow \exists U\subseteq\C^n{\rm open~neighbourhood~of~0 }:f|_U=g|_U.
 \end{equation*}
 This definition naturally extends to meromorphic functions.
 
 For two meromorphic $V$-valued maps with linear poles $f$ and $g$, we write $f\sim g$ if, and only if, it exists linear forms $L_1,\cdots,L_N$ on $\C^n$ and $U\subseteq\C^n$ an open neighbourhood of the origin such that $L_1\cdots L_N f$ and $L_1\cdots L_N g$ are analytic on $\C^n\setminus\{L_1=\cdots=L_N=0\}$ and coincide on $U\setminus(\{L_1=\cdots=L_N=0\}\cap U)$.
 
 Then {\bf meromorphic germs} with linear poles (at the origin) of meromorphic $V$-valued maps are equivalence classes of these maps for this equivalence relation.

 In the case $V={\mathcal{D}_{\rm reg}'\left((\R^d)^{\otimes N}\right)}$ we write $\calM_{\rm lin}\left(\C^{n},{\mathcal{D}_{\rm reg}'\left((\R^d)^{\otimes N}\right)}\right)$ the space of distribution-valued meromorphic germs \cy{with linear poles}.
\end{defn}
We have admitted without proof the reasonable statement that ${\mathcal{D}_{\rm reg}'\left((\R^d)^{\otimes N}\right)}$ is a locally convex Hausdorff space. I do not write this as a conjecture since the Definitions above still make sense if it is not, simply replacing $V$ everywhere with ${\mathcal{D}_{\rm reg}'\left((\R^d)^{\otimes N}\right)}$.

Now, the discussion of Section \ref{subsec:Feynman} can now be reformulated in the following Conjecture.
\begin{conj} \label{conjecture1}
 Let $\tau$ be a quantum field theory and $\Phi_\tau$ its associated regularised Feynman rules. The evaluation on a Feynman graph $G\in\calG_\tau$ of $\Phi_\tau$ is a distribution-valued meromorphic germ:
 \begin{equation*}
  \Phi_\tau(G) \in\calM_{\rm lin}\left(\C^{|E(G)|},\distribreg\right).
 \end{equation*} 
\end{conj}
\begin{rk}
 This conjecture might be too crude as one could also wish to control the \emph{order} of the distribution at play. This was the case in \cite{DZ17} (see definition 3.3 and the proof of Theorem 5.3 therein).
\end{rk}

\subsection{The Main Conjecture}

Now we want to build a TRAP such that its associated amplitude defined in Definition \ref{def:generalised-P-convolution} is exactly the regularised Feynman rules. We will call {\bf Feynman TRAP} this conjectural TRAP and write it $F=(F(k,l))_{k,l\in\N}$. From the discussion above we see that the space $F(k,l)$ should include the image under the regularised Feynman rules of \emph{all} Feynman graphs with $k$ inputs and $l$ outputs. This is done by taking an inductive limit.

For any non-empty sets $X$ and $Y$, recall that there are canonical embeddings
\begin{equation*}
 \iota_k:{\rm Maps}(X^k,Y)\longrightarrow{\rm Maps}(X^{k+1},Y)
\end{equation*}
where a map $f:X^k\longrightarrow Y$ become constant in its last variable:
\begin{equation*}
 \iota_k(f)(x_1,\cdots,x_k,x_{k+1}):=f(x_1,\cdots,x_k).
\end{equation*}
This allows to take the direct limit:
\begin{equation*}
 {\rm Maps}(X^\infty,Y):=\lim_{\substack{\longrightarrow \\ k}}{\rm Maps}(X^k,Y).
\end{equation*}
When $X=\C$ and $Y=\distribreg$, this construction still holds for germs and we set 
\begin{equation} \label{def:space_Feynman_TRAP}
 F(k,l):=\calM_{\rm lin}\left(\C^{\infty},\distribreg\right)=\lim_{\substack{\longrightarrow \\ N}}\calM_{\rm lin}\left(\C^{N},\distribreg\right).
\end{equation}
Now we want to endow $F=(F(k,l))_{k,l\in\N}$ with a TRAP structure.

The horizontal concatenation $*$ is an extension of the usual tensor product of maps. For
\begin{equation*}
  A_{1} \in \mathcal{M}_{\rm lin}\left(\C^{N},\mathcal{D}_{\rm reg}'\left((\R^d)^{k_1+l_1}\right)\right),\quad \text{and}\quad A_{2} \in \mathcal{M}_{\rm lin}\left(\C^M,\mathcal{D}_{\rm reg}'\left((\R^d)^{k_2+l_2}\right)\right)
 \end{equation*}
 then $A_{1}*A_{2}$, evaluated on $\overrightarrow{z}=(z_1,\cdots,z_{N+M})$, is a distribution defined by its action on test functions:
 \begin{align} \label{eq:def_hor_conc_TRAP_QFT}
  & \langle A_{1}*A_{2}(\overrightarrow{z}),\phi_1\cdots\phi_{k_1+k_2}\psi_1\cdots\psi_{l_1+l_2}\rangle := \\
  \langle A_{1}(\overrightarrow{z_1}),&\phi_1\cdots\phi_{k_1}\psi_1\cdots\psi_{l_1}\rangle \langle A_{2}(\overrightarrow{z_2}),\phi_{k_1+1}\cdots\phi_{k_1+k_2}\psi_{l_1+1}\cdots\psi_{l_1+l_2}\rangle \nonumber
 \end{align}
 with $\overrightarrow{z_1}=(z_1,\cdots,z_N)$ and $\overrightarrow{z_2}=(z_{N+1},\cdots,z_{N+M})$.
 \begin{rk} \label{rk:locality_Feynman_rules_TRAP}
  If $A_i=A_{G_i}$ for two graphs $G_1$ and $G_2$, then the universal property of the TRAP of graphs will give us $A_{G_1}*A_{G_2}=A_{G_1G_2}$. In this case, the fact that the action of $A_{G_1}*A_{G_2}(\overrightarrow{z})$ on test functions is a product of the amplitudes associated to each of the graphs $G_1$ and $G_2$ is a mathematical expression of the physical concept of locality.
 \end{rk}
 The product $*$ can then be extended to be defined on $F(k_1,l_1)\otimes F(k_2,l_2)$ and then to an horizontal product on $F$. \\
 
%  \vspace{1cm}
 Next, we need to define the candidates for partial traces. For $A\in\mathcal{M}_{\rm lin}\left(\C^{N},\mathcal{D}_{\rm reg}'\left((\R^d)^{k+l}\right)\right)$ we can write, with the notations of Equation \eqref{eq:action_A_fct_test}
  \begin{equation*}
  \langle A(\vec{z}),\phi_1\cdots\phi_k\psi_1\cdots\psi_l\rangle = \int\prod_{a=1}^Mdx_a f_A^{\vec{z}}(\vec{x})\prod_{b=1}^k\phi_b(x_{\sigma(b)})\prod_{c=1}^l\psi_c(x_{\rho(c)}).
 \end{equation*}
 Then for any $(i,j)\in[k]\times[l]$ we define $tr_{i,j}(A)\in\mathcal{M}_{\rm lin}\left(\C^{N+1},\mathcal{D}_{\rm reg}'\left((\R^d)^{k-1+l-1}\right)\right)$ (conjecturally) by its action of test functions once evaluated on $\vec{z'}=(\vec{z},z_{N+1})\in\C^{N+1}$, namely
 \begin{equation*}
  \langle tr_{i,j}(A)(\vec{z'}),\phi_1\cdots\phi_{k-1}\psi_1\cdots\psi_{l-1}\rangle = \int\prod_{a=1}^Mdx_a f_A^{\vec{z}}(\vec{x})C^{z_{N+1}}(x_j,x_i) \prod_{b=1}^k\phi_b(x_{\sigma_i(b)}) \prod_{c=1}^l\psi_c(x_{\rho_j(c)})
 \end{equation*}
 with $\sigma_i:[k-1]\longrightarrow[M]$ and $\rho_j:[l-1]\longrightarrow[M]$ injective maps defined by
 \begin{align*}
  \sigma_i(b) = \begin{cases}
                     & \sigma(b)\quad\text{if }b<i \\
                     & \sigma(b+1)\quad\text{if }b\geq i
                    \end{cases}, \quad 
  \rho_j(c) = \begin{cases}
                   & \rho(c) \quad\text{if }c<j \\
                   & \rho(c+1)\quad\text{if }c\geq j.
                  \end{cases}
 \end{align*}
 Furthermore, recall that $C^z$ is the regularised propagator of the scalar QFT we are considering, 
 We then extend $tr_{i,j}$ to the whole of $F$. The main conjecture we have to prove is then
 \begin{conj} \label{conj:TRAP_reg}
  The partial traces and horizontal concatenation defined above, together with the natural actions of the permutations groups on the inputs and outputs endow $F$ with a TRAP structure.
 \end{conj}
 This conjecture would have to be proven in three steps:
 \begin{enumerate}
  \item For $A_1\in F(k_1,l_1)$ and $A_2\in F(k_2,l_2)$ one has to show that $A_1*A_2\in F(k_1+k_2,l_1+l_2)$.
  \item For $A\in F(k,l)$ one has to show that $tr_{i,j}(A)\in F(k-1,l-1)$.
  \item Finally, one would have to check that these structures, together with the natural actions of the permutation group, respect the various axioms of TRAPs.
 \end{enumerate}
 I fully expect the first and third items to be fairly easily tractable. On the other hand, in the second item lies (probably) the main obstacles to proving Conjecture \ref{conj:TRAP_reg}. Indeed, the partial traces allow to take into account loops of Feynman graphs. These loops are notoriously the origin of divergences plaguing perturbative QFT. One would thus expect that $tr_{i,j}(A)$, as a germ, has a singular locus containing strictly the singular locus of $A$. \cy{Notice also that this second task contains two different challenges: first to show that the image of a trace is smooth outside of its singular locus, second that its poles are still linear.}
 There is nevertheless hope to tackle this challenge. Indeed, analytical methods to take into account and classify divergences of Feynman graphs were recently devised in \cite{DZ17}. I expect that the methods of this paper or a generalisation of them will allow us to prove that we indeed have TRAP structures behind reasonable perturbative quantum field theories. \cy{The recent paper \cite{DPS22} also brings hope regarding the question of smoothness.}

\subsection{Amplitude of Feynman graphs}

For a \ty{perturbation} parameter $\lambda\in\R^*_+$ and $(k,l)\in\N^2$ let us define $\lambda_{k,l}\in\distribreg$ by its action on test functions
\begin{equation*}
 \langle\lambda_{k,l},\phi_1\cdots\phi_k\psi_1\cdots\psi_l\rangle := \lambda\int_{(\R^d)^{k+l}}\prod_{i=1}^kdx_i\prod_{j=1}^ldy_j\prod_{i=1}^k\phi_i(x_i)\prod_{j=1}^l\psi_j(y_j).
\end{equation*}
Furthermore let us set 
\begin{equation*}
 \Lambda:=\{\lambda_{k,l}|(k,l)\in\N^2\}.
\end{equation*}
For any $(k,l)\in\N^2$, we can see $\lambda_{k,l}$ as a constant function of $\C^\infty$ with values in $\distribreg$, thus as an element of $F(k,l)$. We therefore have an embedding $\iota_\Lambda:\Lambda\hookrightarrow F$.

Moreover any graph $G\in\rPGr$ can be decorated by $\Lambda$ by setting $d(v)=\lambda_{k,l}$ for any vertex $v\in V(G)$ with $k$ inputs and $l$ outputs. From Theorem \ref{thm:freetraps}, and under the assumption that Conjecture \ref{conj:TRAP_reg} we then have a unique TRAP morphism $\Phi:\rPGr(\Lambda)\longrightarrow F$. Finally, by ordering the ingoing and outgoing edges of each vertex of the Feynman graphs of a QFT $\tau$ we have an embedding $\iota_\tau:\calG_\tau\hookrightarrow\rPGr(\Lambda)$ of these Feynman graphs into the TRAP of graphs since the vertices of Feynman graphs are all decorated with $\lambda_{k,l}$. This allows to \emph{define} the Feynman rules.
\begin{defn} \label{def:TRAP_Feynman_rules}
Provided that Conjecture \ref{conj:TRAP_reg} holds, for a scalar quantum fiel theory $\tau$, we define its {\bf regularised Feynman rules} to be 
\begin{equation*}
 \Phi_\tau:=\Phi\circ\iota_\tau.
\end{equation*}
Diagrammatically:
 \begin{figure}[h!] 
  		\begin{center}
  			\begin{tikzpicture}[->,>=stealth',shorten >=1pt,auto,node distance=3cm,thick]
  			\tikzstyle{arrow}=[->]
  			
  			\node (1) {$\calG_\tau$};
  			\node (2) [right of=1] {$\rPGr(\Lambda)$};
  			\node (3) [right of=2] {$F$};
  			\node (4) [below of=2] {$\Lambda$};

  			\path
  			(1) edge node [above] {$\iota_\tau$} (2)
  			(2) edge node [above] {$\Phi$} (3);
  			
  			\draw [right hook-latex] (4) -- node[left] {} (2);
  			\draw [right hook-latex] (4) -- node[right] {$\iota_{\Lambda}$} (3);
%   			(4) edge node [left] {} (2)
%   			(4) edge node [above] {$\iota_\Lambda$} (3);
  			\end{tikzpicture}
  			\caption{The definition of regularised Feynman rule.}\label{fig:def_feynman_rule}
  		\end{center}
  	\end{figure} 
\end{defn}
In other words, the amplitude of a Feynman graph $G\in\calG_\tau$ is the $F$-amplitude of $\iota_\tau(G)$ (see Definition \ref{def:generalised-P-convolution}). This justifies at last the name ``amplitude'' given to this object. Notice also that \cy{in practice this amplitude would only be needed for the subset of graphs that are relevant for the QFT we are studying. Typically these would be graphs whose vertices have a fixed degree} 
% actually be used for a small sub-TRAP of $F$, typically for graphs with only a fixed total number of ingoing and outgoing edges
(see Remark \ref{rk:subTRAPutiles}).
\begin{rk}
 In order for the above definition of regularised Feynman rules to be consistent, it should not depend on the choices made to order the ingoing and outgoing edges of the vertices of the Feynman graphs. In other words, it should not depend on the choice of the embedding $\iota_\tau$. I expect this to be easy to show since all ingoing and outgoing edges play the same role in the definition of $\lambda_{k,l}$.
 
 Notice that in the case of a QFT with more than one field, the situation might be more intricate. However I do not expect this new aspect to be more than a mere technicality.
\end{rk}
We finish this presentation of a research project with a simple but nice result. First, by construction, if a Feynman graph $G\in\calG_\tau$ has $k$ inputs and $l$ outputs, then $\Phi_\tau(G)\in F(k,l)$ since $\Phi$ is a morphism of TRAPs. Furthermore, since the $\lambda_{k,l}$s  are constants in the sense that they do not depend on the parameter $z_i\in\C$, we have that the dependences in the $z_i$ come solely from the partial traces maps. We then have that $\Phi_\tau(G)$ depends precisely on the number of trace maps one has to use to build $\iota_\tau(G)$ from the graphs $G_{k,l}$ with only one vertex, $k$ incoming edges, $l$ outgoing edges and no internal edges. This number is precisely the number of internal edges of the graph $G$. In other words, together with Remark \ref{rk:locality_Feynman_rules_TRAP} we have
\begin{prop} \label{prop:rapport_conj}
 If Conjecture \ref{conj:TRAP_reg} holds, then Conjecture \ref{conjecture1} holds, with the Feynman rule defined by Definition \ref{def:TRAP_Feynman_rules}. Furthermore these Feynman rules define an algebra morphism with the locality property discussed in Remark \ref{rk:locality_Feynman_rules_TRAP}.
\end{prop}
Therefore Conjecture \ref{conj:TRAP_reg} solves precisely the problem we had set for ourselves!

\vspace{1cm}

Notice that we have not discussed the \emph{renormalisation} issue here, but only regularisation. However, this discussion was performed in the setup that allows for \emph{multivariate renormalisation} to be used. This will be the topic of Chapter \ref{chap:loc}.

% % %  COMMENTER CE QUI EST CI-DESSOUS.
% 
% \bibliographystyle{alpha}
% %  \addcontentsline{toc}{section}{Bibliography}
% \bibliography{HDR_biblio}
%  
% \end{document}

%     Décommenter après avoir enlever le header (et enddocument) du document PROP.tex.
    
    % % % 19/01: 125 pages
% % % 
% \documentclass[11pt,twoside,a4paper]{book}
% % \documentclass[11pt,twoside,a4paper]{article}
% 
% \usepackage{HDR}
% 
% %% SI ON LAISSE LES DEUX LIGNES SUIVANTES DANS LE STY, ELLES NE MARCHENT PAS... (probablement à cause de \input).
% \input{xy}
% \xyoption{all}
% 
% 
% \begin{document}
% % 
% % % % 
% % % % % COMMENTER CE QUI EST AVANT CA ET ENDDOCUMENT POUR COMPILER HDR.tex.

\chapter{Generalisations of {Multiple Zeta} Values} \label{chap:MZV}

\section*{Introduction}

\addcontentsline{toc}{section}{Introduction}

\subsection*{State of the art} \addcontentsline{toc}{subsection}{State of the art}

Riemann's zeta sums are the dread and the hope of many a undergrad student\footnote{for mine, it is simply dread, only mitigated by blissful ignorance.}. One could call \ty{these sums {\bf zeta values}}, for the special cases where their exponent is an integer, \ty{greater than} or equal to two:
\begin{equation*}
 \zeta(n):=\sum_{p=1}^\infty\frac{1}{p^n}.
\end{equation*}
Of course, the notation $\zeta$ comes from the observation that these numbers are nothing but evaluations of the famous \ty{Riemann} zeta function at integers $n\geq2$. Zeta values and their generalisations \ty{appeared} in various aspects of Mathematics and Physics. \ty{They} are important in particular in the evaluation of Feynman amplitudes (see for example \cite{petermann1957fourth,laporta1996analytical,Brown2012ProofOT,todorov2014polylogarithms}). I mention this since it was one of my original \ty{motivations} to look at these objects. Notice also that the appearance of these numbers in Feynman amplitudes was the original motivation for the introduction of the romantically named Cosmic Galois Group \cite{cartier1998folle,brown1512feynman}.

\ty{Multiple Zeta} values (MZVs for short) can be seen as a multivariable generalisation of zeta values. They have a long history, starting like most mathematics with Euler \cite{Eu1796}, and \ty{emerging} here and there in many places of mathematics and physics in the following two centuries.  In the 1980s, MZVs have arisen in \'Ecalle's work \cite{Ecalle81b} (see also \cite{schneps2015ari}). A systematic study of MZVs was later initiated by Hoffman \cite{Ho92} and Zagier \cite{Za94} and has since then \ty{been} the subject of \ty{an important literature}. They have also been generalised and applied in more ways that this introduction could possibly list. We refer the reader to one of the many excellent introductions to the topics for a more in-depth historical point of view. \ty{I} learnt with \cite{Waldschmidt} and \cite{Bo15}. Other introductions focusing on various aspects of MZVs are \cite{kaneko2019introduction} and \cite{deligne2012multizetas}. \ty{The latter} introduces Brown's motivic zetas \cite{brown2012mixed,brown2012decomposition}. \cite{cartier2002fonctions} is another introduction, which is extremely pedagogical. 

Let us point out that \ty{during} their fairly long and quite non-linear history, MZVs have been called by many names: ``multizeta numbers''  by \'Ecalle, ``multiple harmonic sums'' by Hoffman, ``Euler-Zagier numbers'' by the Borwein brothers, ``multiple zeta values'' by Zagier, ``polyzeta 
numbers'' (in order to respect Weil's principles of not mixing \ty{Greek and Latin} roots) by Cartier... We follow what is now the most widespread name of ``\ty{multiple zeta} values'' and is probably a contraction of the name chosen by Zagier.

MZVs will be rigorously defined in Subsection \ref{subsec:MZV} below. Without introducing them rigorously yet, let us say that there exists now a variety of known results concerning their algebraic and number-theoretic properties, and ambitious conjectures. One of the most important conjectures is that all the rational relations among MZVs are given by exactly three families of \ty{relations}:
\begin{enumerate}
 \item The stuffle relations (Equation \eqref{eq:stuffle_prod}),
 \item The shuffle relations (Equation \eqref{eq:shuffle_prod}),
 \item Hoffman's relations (Equation \eqref{eq:Hoffman_reg_rel}).
\end{enumerate}
To these relations, one should add Kontsevich's relation (Equation \eqref{eq:Kontsevich}) which relates the two classical representation of MZVs via a binarisation map (Equation \eqref{eq:binarisation_map}). This relation is more of a defining relation, making sure that we are actually talking about the same objects whether we are writing MZVs as interated integrals \ty{or} series.

In this text we will not directly tackle this difficult and important question, however it has motivated the current study. Indeed, a way one might hope \ty{of tackling} such a conjecture is to generalise the object under consideration (here MZVs) and to study the conjecture in this more general space. This was the original motivation for this work: generalise MZVs to rooted forests in the hope \ty{of being} able to solve the generalisation of the above conjecture for these generalised MZVs. As it often happens, the generalisation turns out to be more difficult and interesting than expected and \ty{has for long been} the author's main \ty{object of study}.

\emph{A propos} generalisation of MZVs, this is a place as good as any to point out that MZVs admit a plethora of \ty{generalisations}. Let us mention without any particular order Hurwitz MZVs \cite{Bo15}, elliptic MZVs \cite{SchnepsElliptic,En13}, cell zeta values \cite{brown2010algebra}, $q$-zeta values \cite{chapoton2020moments} and Witten's MZVs \cite{Wi91} 
among others. In this paper, we will chiefly be interested in Arborified Zeta Values\footnote{also called ''branched zeta values`` in \cite{CGPZ1}} (AZVs) and Conical Zeta Values (CZVs).
AZVs appeared in the work of Ecalle \cite{Ecalle81b} and much later in the work of Yamamoto \cite{Ya14}. Their extensive study started in \cite{Ma13} and was completed in \cite{Cl20}. The renormalisation of their divergent counterparts was performed in \cite{CGPZ3}\footnote{this study was seen as a toy model for multivariate renormalisation and started the work presented here, when it was realised that some of the tools devised to deal with divergent AZVs could also be applied to study convergent AZVs}. On the other hand CZVs have been defined in \cite{GPZ13} and their divergent counterparts have been renormalised in \cite{GPZ17}. An important open question which we will investigate here is to characterise which CZVs are linear combinations of MZVs with rational coefficients. Notice that this question may be of interest in physics since some CZVs have been shown to appear in the perturbative expansion of amplitudes of some string theories, see for example \cite{Ze16,Ze17,zagier2019genus}. 

\subsection*{Content and main results} \addcontentsline{toc}{subsection}{Main results}

The content of this chapter is mostly from \cite{Cl20} and \cite{ClPe23} although some methods and results this work is based on appeared in the earlier work \cite{CGPZ2}. It starts with an introduction to classical structures of combinatorics of words. In particular, the famous shuffle and stuffle products are introduced in Definitions \ref{def:shuffle} and \ref{def:stuffle} respectively. These structures are necessary to introduce MZVs (Definition \ref{def:MZVs}) and state their well-known properties.

In section \ref{sec:comb_forest} we introduce rooted forests (Definition \ref{def:forests}) and state an important theorem due to Panzer and Kreimer, namely that the algebra of rooted forests is the initial object in the category of operated algebras (Definition \ref{thm:univ_prop_tree}). We then use this property to define some of the basic constructions that we will investigate, in particular branching of linear maps (Definition \ref{defn:phi_hat}) and flattening maps (Definition \ref{defn:flattening_maps}). We also define the shuffle and stuffle products of rooted forests (Definition \ref{defn:shuffle_tree}); a natural non-associative generalisation of the shuffle and stuffle products of words. These various combinatorial objects are related in the first (non-classical) important theorem of this Chapter, namely Theorem \ref{thm:flattening}. The missing link between these construction is Rota-Baxter operators.

\medskip

Section \ref{section:stuffle} presents a first application of this result. It is essentially applied to \ty{a} space of rooted forests decorated by log-polyhomogeneous symbols (see Definition \ref{def:log_poly_hom}). For \ty{these} objects, the Euler-MacLaurin formula (Equation \eqref{eq:EML_sum}) \ty{gives an interpolation between discrete sums and integrals. This makes the use of standard tools of integration possible}. We can then build our desired generalisation of MZVs as iterated series to rooted forests: the Arborified Zeta Values (AZVs, see Definition-Proposition \ref{defnprop:arborified_zeta}). Notice that although this construction does use some analytical tools, it is algebraic at its core. It is this algebraic side that allows us to straightforwardly obtain important properties of AZVs (Theorem \ref{thm:main_result_stuffle}). Notice that along the way, we also obtain a new algebraic proof of a classical property of MZVs, usually proved in a much more analytical and tedious way.

 The next Section \ref{section:shuffle} is essentially another application of Theorem \ref{thm:flattening}. We apply it to an integration map, thus obtaining a branched generalisation of Chen's integrals. This gives us arborified polylogarithms (Definition-Proposition \ref{defnprop:arbo_polylogs}) for which we readily derive a few properties (Theorem \ref{thm:arborified_polylogs}) from  Theorem \ref{thm:flattening}. We then define a second version of AZVs, as interated integrals, as evaluation of branched polylogarithms (Definition \ref{defn:shuffle_AZVs}). Results on polylogs then directly imply results for this second AZV and are stated in Theorem \ref{thm:main_result_shuffle}. As in the previous section, we obtain along the way new proofs of algebraic properties of MZVs and multiple polylogarithms. Notice that the results of Sections \ref{section:stuffle} and \ref{section:shuffle} could be obtained in completely similar ways, but were not. This was done first for technical reasons (the space of log-polyhomogeneous symbols has the structure of a filtered space that is less clear in the space of functions relevant in section \ref{section:shuffle} -- at least for the author); and second to illustrate that the methods presented here do have some variability.
 
 At this point, two sets of relations among MZVs have been lifted to AZVs. We then aim in Section \ref{section:Hoffman} at a generalisation to rooted forests to Hoffman's relation \eqref{eq:Hoffman_reg_rel}. Using once again the universal properties of rooted forests in the category of operated algebras, we define the branched binarisation map in Definition \ref{defn:branched_bin_map}. A preliminary step for Hoffman's relation is to investigate the generalisation of Kontsevich's relation \eqref{eq:Kontsevich}. However, \ty{the latter} does not hold for AZVs, as shown in Theorem \ref{thm:relation_shuffle_stuffle}. It is then no surprise that Hoffman's relation does not hold either for AZVs, as \ty{explictly} checked in Equation \eqref{eq:counter_example_Hoffman}.
 
\medskip

Looking back, one can understand why Kontsevich's relations fail for AZVs. Essentially, it is because AZVs as iterated series are not ''the right`` generalisation to rooted forests of MZVs as iterated series. The right relation should rather generalise Equation \eqref{eq:MZV_stuffle2}. And indeed, we obtain this representation in Theorem \ref{thm:integral_sum}, which could be seen as the second crucial result of this chapter. 
Section  \ref{section:int_to_series} is devoted to proving this result. 

Theorem \ref{thm:integral_sum} suggests a new generalisation of MZVs to rooted forests. Section \ref{section:TZVs} \ty{introduces} and \ty{studies} this new generalisation, which we call Tree Zeta Values (or TZVs for short). They are introduced in Definition \ref{defn:tree_zeta_values} and their main properties are stated right away in Proposition \ref{prop:crucial} and Theorem \ref{thm:tree_zeta_MZVs}. In order to study TZVs we introduce a new product, the $\yew$-product\footnote{pronounced ''Upsilon-product``} Definition \ref{defn:yew}). A combinatorial description of this product is given in Theorem \ref{thm:yew_formula}. While it is clear by definition that TZVs form an algebra morphism for $\yew$-product, we also show a possibly more surprising result, namely that the $\yew$-product allows us to directly relate TZVs and MZVs, without going through AZVs (Corollary \ref{coro:TZVs_stuffle_MZVs}). TZVs and MZVs are related by a new flattening map built from the $\yew$-product (see Definition \ref{defn:flattening_yew}).

The second to last section of this chapter is Section \ref{section:application}, where we apply our results on TZVs to find properties of other generalisations of MZVs. We start with the Mordell-Tornheim zeta values (Definition \ref{defn:MT}). For this object we obtain in particular new proofs of some classical properties (Theorem \ref{thm:MT_tree}) as well as some decomposition formulas (Equations \eqref{eq:MT_decom} and \eqref{eq:expression_MT}). Our next application \ty{concerns} Conical Zeta Values (CZVs) (see Equation \eqref{eq:def_CZV} for the definition). It is clear that TZVs are CZVs. So, for a family of cones (see Definition \ref{defn:tree_like_cone}), we can prove that the associated CZVs are linear combinations of MZVs. This is done in Theorem \ref{thm:tree_CZVs_MZVs}. We then turn our attention to \ty{characterise} for which cones this result holds. This is fully achieved by Theorem \ref{thm:carac_tree_cone}. This section ends with examples of CZVs evaluated in terms of MZVs with the methods \ty{developed} in the chapter.

\medskip

The chapter ends with a list of open questions that will be pursued in further research. Some are simple questions of enumerative combinatorics. Such questions could be of importance for future work, but are not necessarily very interesting by themselves. More interesting are the questions of algebraic combinatorics, in particular the properties of the shuffle of rooted forests, and eventual related coalgebraic structures. These questions are currently under scrutiny by our PhD student Douglas Modesto. Finally, a very important question is essentially number-theoretic: how could one generalise the stuffle product of words to rooted forests such that TZVs form an algebra morphism \emph{and} such that we have a generalised Hoffman's relation. A list of approaches that have been tried to solve this later question is given, one of which \ty{seemingly} very promising.

% \subsection*{Properties of MZVs} \addcontentsline{toc}{subsection}{Properties of MZVs}

% Stuffle and shuffle MZVs are linked through what we call the {\bf binarisation map}
% \begin{align} \label{eq:binarisation_map}
%  \fraks :\calW_{\N^*}&\longrightarrow\calW_{\{x,y\}} \\
%  (n_1\cdots n_k) & \longrightarrow (\underbrace{x\cdots x}_{n_1-1}y\cdots\underbrace{x\cdots x}_{n_k-1}y). \nonumber
% \end{align}
% Kontsevich's relation
% \begin{equation} \label{eq:shuffle_stuffle_words}
%  \zeta_\shuffle(\fraks(w)) = \zeta_\stuffle(w).
% \end{equation}
% Hoffman:
% \begin{equation} \label{eq:Hoffman_reg_rel}
%  \fraks\left((1)\stuffle w)\right) - (y)\shuffle\fraks(w) \in \Ker(\zeta_\shuffle).
% \end{equation}
% stuffle MZVs:
% \begin{align} \label{eq:sum_stuffle_MZV}
%  \zeta_\stuffle :~ & \calW_{\N^*}^{\rm conv}\subseteq\calW_{\N^*} \longrightarrow \R \nonumber\\
% 		  & (p_1\cdots p_k) \mapsto \sum_{n_1>\cdots>n_k>0}\frac{1}{n_1^{p_1}\cdots n_k^{p_k}} 
% \end{align}

\section{{Multiple Zeta} Values}

We start by recalling classical notions of combinatorics of words and their application to \ty{Multiple Zeta} Values (written MZVs for short). We do not give proofs of these classical results and instead refer the reader to one of the many classical textbooks of the subject, e.g. \cite{lothaire_1997} \cite{Waldschmidt} for an introduction on MZVs. 

\subsection{Products of words}

Let us dive right in and define words.
\begin{defn}
 Let $\Omega$ be a non-empty set. We call {\bf word} (written in the alphabet $\Omega$) a string of elements of $\Omega$. We write $W_\Omega$ the set of words written in the alphabet $\Omega$ and $\calW_\Omega$ its linear span over $\R$. In other words, $\calW_\Omega$ is  therefore the algebra over $\R$ of non-commutative polynomials with variables in $\Omega$. 
 
 We also write $\emptyset$ for the empty word. 
\end{defn}
We also need to introduce various notations that will come in handy later in this chapter.
\begin{defn} \label{def:counting_words}
 \begin{itemize}
  \item Let $w=(\omega_1\cdots\omega_k)$ be a word written in the alphabet $\Omega$. We define its {\bf length} $|w|$ to be 
  $|\omega_1\cdots\omega_k|:=k$. We further set $|\emptyset|:=0$. We write $\calW_\Omega^n$ (resp. $\calW_\Omega^{\leq n}$) the \ty{vector subspace} of $\calW_\Omega$ generated by words of length $n$ and $\emptyset$ (resp. generated by words of length \ty{less than} or equal to $n$).
  \item For any $\omega\in\Omega$ and $w\in W_\Omega$, we write $\sharp_\omega w$ the number of times that the letter $\omega$ appears in a word $w$.
  \item Let $(\Omega,\bullet)$ be a commutative semigroup\footnote{i.e. $\bullet:\Omega\times\Omega\longrightarrow\Omega$ is an associative and commutative binary product on $\Omega$, a priori without a unit.}. We define the {\bf weight with respect to the product $\bullet$} $||w||_\bullet$ of a word $w\in\calW_\Omega$ to be $0$ \ty{if} $w=\emptyset$ and 
  $||\omega_1\cdots\omega_k||_\bullet:=\omega_1\bullet\cdots\bullet\omega_k$. We then extend the weight to all of $\calW_\Omega$ by linearity. If the product on $\Omega$ is clear from context, we will speak of the weight of $w$ and write $||w||$.
 \end{itemize}
\end{defn}
The vector space $\calW_\Omega$ can be endowed with various algebra structures.
\begin{defn} \label{defn:conc_prod}
 The {\bf concatenation product} $\sqcup:\calW_\Omega\times\calW_\Omega\longrightarrow\calW_\Omega$ is defined by
 \begin{equation*}
  \emptyset\sqcup w = w\sqcup\emptyset  := w
 \end{equation*}
 for any $w\in\calW_\Omega$ and, for any $(\omega_1\cdots\omega_k)\in W_\Omega$ and $(\omega_1'\cdots\omega_n')\in W_\Omega$
  \begin{equation*}
   (\omega_1\cdots\omega_k)\sqcup(\omega_1'\cdots\omega_n') := (\omega_1\cdots\omega_k\omega_1'\cdots\omega_n').
  \end{equation*}
  We then extend it by bilinearity to a product on $\calW_\Omega$.
\end{defn}
It is easy to show that the concatenation product is associative and we obtain
\begin{prop}
 For any non-empty set $\Omega$, the triple $(\calW_\Omega,\sqcup,\emptyset)$ is an associative but non commutative unital algebra graded by the length of the words.
\end{prop}
We will be chiefly interested by algebra structures on the vector space $\calW_\Omega$, namely the shuffle and stuffle (or quasi-shuffle) products.
\begin{defn} \cite{EML53} \label{def:shuffle}
Let $\Omega$ be a non-empty set. The {\bf shuffle product} $\shuffle$ on $\calW_\Omega$ is recursively defined by
  \begin{align*}
   \emptyset\shuffle w = w\shuffle\emptyset & = w, \\
   \left((\omega)\sqcup w\right) \shuffle \left((\omega')\sqcup w'\right) & = (\omega)\sqcup\left[w \shuffle \left((\omega')\sqcup w'\right) \right] + (\omega')\sqcup\left[\left((\omega)\sqcup w\right)\shuffle w'\right]
  \end{align*}
  for any $(w,w')\in\calW_\Omega^2$ and any $(\omega,\omega')\in\Omega^2$. We extend it by bilinearity to a product on $\calW_\Omega$.
\end{defn}
Before discussing this product and introducing some important generalisations, let us give some examples (without \ty{detailed} computations) of shuffle products.
\begin{example} \label{ex:shuffles}
 \begin{itemize}
  \item Let $\Omega=\{x,y\}$ be a set with two elements. Then
  \begin{equation*}
   (xx)\shuffle(xy)=3(xxxy)+2(xxyx)+(xyxx).
  \end{equation*}
  \item Let $\Omega=\N^*$ be the set of strictly positive integers. Then 
  \begin{equation*}
   (213)\shuffle (51)=(21351)+(21531)+(21513)+(25131)+2(25113)+(52131)+2(52113)+(51213).
  \end{equation*}
 \end{itemize}
\end{example}
Now, the shuffle product is often explained as a product that gives all the possible shuffles of two decks of cards, hence the name. Indeed, each word, since it is totally ordered, can be seen as a deck. The shuffle of two decks then gives all the possible decks built from the two initial decks with the constraint that the orders of the two initial decks are preserved. 

The stuffle is also a product obtained by a shuffling of decks, but where two cards can \ty{be} sticked together, with their values added. For this reason, the stuffle product is also sometimes called ``sticky shuffle''.

Recall that a semi-group is a pair $(\Omega,\bullet)$ with $ \bullet:\Omega\times\Omega\longrightarrow\Omega$ an associative product on $\Omega$.
\begin{defn} \label{def:stuffle} \cite{Ho00}
 Let $(\Omega,\bullet)$ be a commutative semi-group and $\lambda\in\R$. The 
  {\bf $\lambda$-shuffle product} is recursively defined by
  \begin{align*}
   \emptyset\shuffle_\lambda w = w\shuffle_\lambda\emptyset & = w, \\
   \left((\omega)\sqcup w\right) \shuffle_\lambda \left((\omega')\sqcup w'\right) & = (\omega)\sqcup\left[w \shuffle_\lambda \left((\omega')\sqcup w'\right) \right] + (\omega')\sqcup\left[\left((\omega)\sqcup w\right)\shuffle_\lambda w'\right]  \\
										  & + \lambda(\omega\bullet\omega')\sqcup\left[w\shuffle_\lambda w'\right] 
  \end{align*}
  for any $(w,w')\in\calW_\Omega^2$ and any $(\omega,\omega')\in\Omega^2$. We extend it by bilinearity to a product on $\calW_\Omega$.
  
  For $\lambda=1$ we write $\stuffle:=\shuffle_1$ the {\bf stuffle} (or quasi-shuffle, or sticky shuffle) product. For $\lambda=0$ we simply write $\shuffle=\shuffle_0$\footnote{Notice that is this case we find the previous shuffle on the set $\Omega$, hence the notations. This also shows that the $\lambda$-products are generalisations of the shuffle product, as claimed earlier.}. For $\lambda=-1$, the product $\shuffle_{-1}$ is called the {\bf anti-stuffle} product.
\end{defn}
The shuffle and $\lambda$-shuffle have been studied and generalised in many different ways, adapted to various product. I will not \ty{attempt} to quote all the recent \ty{developements}, but instead simply mention some that I have encountered in the last recent years. Generalisations of $\lambda$-shuffle that are adapted to $q$-MZVs were introduced in \cite{IKOO11} and systematically studied in \cite{HI16}. A quasi-shuffle of rooted trees tailored to study pattern Hopf algebras was \ty{developed} in \cite{PV22}. Let me finally mention that quasi-shuffles have also been applied to stochastic calculus in \cite{E-FP21}.

The archetypal example of commutative semi-group is $(\N^*,+)$, so let us write the sticky counterpart to Example \ref{ex:shuffles}, second example.
\begin{example}
 For $(\Omega,\bullet)=(\N^*,+)$ and $\lambda=1$ we have
 \begin{align*}
  (213)\stuffle (51)& =(21351)+(21531)+(21513)+(2181)+(2154)+
  (25131)+2(25113)+(2514) \\
  + & (2523)+(2631)+(2631)+(264)+
  (52131)+2(52113)+(51213)+(5214)+(5223) \\
  + & (5313)+
  +(7131)+2(7113)+(723)+(714).
 \end{align*}
\end{example}
From their definitions it is clear that the shuffle and $\lambda$-shuffle products are commutative. It is less clear, but still true that they are associative.
\begin{theo} \cite{Ho00,EML53}
 Let $\lambda\in\R$ and $(\Omega,\bullet)$ a commutative semigroups (resp. $\Omega$ a non-empty set). Then $(\calW_\Omega,\shuffle_\lambda,\emptyset)$ (resp. $(\calW_\Omega,\shuffle,\emptyset)$) is an associative and commutative unital algebra graded over $\Omega$ by the weight $||.||_\Omega$ (resp. graded over $\N$ by the length $|.|$) and filtered by the length.
\end{theo}
\begin{rk}
 This theorem is proven by induction on the length of words. The proof is cumbersome but not difficult and we skip it here. Notice that for $\lambda\neq0$, the associativity and commutativity of the semi-group product $\bullet$ is necessary to have the associativity and commutativity of $\shuffle_\lambda$.
\end{rk}
Let us finish this Subsection by pointing out that, while we will need here only the concatenation, shuffle and $\lambda$-shuffle products of words, they carry much richer algebraic structures. First, of course, the shuffle and stuffle products of words are \ty{examples of} dendriform and tridendriform structures (see \cite{ebrahimi2008rota} and \cite{catoire2023tridendriform}). Words also carry \ty{an  operad} structure (see \cite{dotsenko2020word}).

Furthermore, the vector space of words can be endowed with various coproducts (the deshuffle coproduct and the deconcatenation coproduct in particular) which endow them with bialgebra structures. Since these bialgebras are graded they turn out to be Hopf algebras. We refer the reader to \cite{Ma03} for a gentle introduction to bialgebras and Hopf algebras. These various structures can be related by a duality relation. Finally, notice that the quasi-shuffle also carries a double bialgebra structure \cite{Fo22}.

Finally, while we will here essentially deal with the shuffle and stuffle products on words, other structures are also relevant to the theory of MZVs. Examples are the aforementioned dendriform and tridendriform structures, but the structure of Zinbiel algebra of words also \ty{plays} a role (see \cite{chapoton2022zinbiel}). We do not discuss these structures further since they will \ty{play no} role in the rest of this chapter.

\subsection{{Multiple Zeta} values as algebra morphisms} \label{subsec:MZV}

MZVs can be defined as algebra morphisms from the algebras of words already defined to real numbers. They are not defined on all words, a fact that justifies the following definition.
\begin{defn} \label{def:conv_words}
\begin{itemize}
 \item For $\Omega=\{x,y\}$, a word $w\in\calW_{\{x,y\}}$ is called {\bf convergent} if it is the empty word or starts with $x$ and ends with $y$. We write $\calW^{\rm conv}_{\{x,y\}}$ the set of convergent words in $\calW_{\{x,y\}}$:
 \begin{equation*}
  \calW^{\rm conv}_{\{x,y\}}=(x)\sqcup\calW_{\{x,y\}}\sqcup (y)\bigcup\{\emptyset\}.
 \end{equation*}
 \item For $\Omega=\N^*$, a word $w\in\calW_{\N^*}$ is called {\bf convergent} if it is empty or if its first letter is not $1$. We write $\calW^{\rm conv}_{\N^*}$ the set of convergent words in $\calW_{\N^*}$:
 \begin{equation*}
  \calW^{\rm conv}_{\N^*}=\left(\bigcup_{n=2}^{+\infty}(n)\sqcup\calW_{\N^*}\right)\bigcup\{\emptyset\}.
 \end{equation*}
\end{itemize}
\end{defn}
The next Proposition is trivial to show but important for our future constructions.
\begin{prop}
 $(\calW^{\rm conv}_{\{x,y\}},\shuffle,\emptyset)$ is a subalgebra of $(\calW_{\{x,y\}},\shuffle,\emptyset)$ and for any $\lambda\in\R$, $(\calW^{\rm conv}_{\N^*},\shuffle_\lambda,\emptyset)$ is a subalgebra of $(\calW_{\N^*},\shuffle_\lambda,\emptyset)$.
\end{prop}
We can now define MZVs as iterated series and integrals. We will assume that they are convergent on their domain of definition \ty{and later give a proof} for the iterated series. For the iterated integrals, the same approach to show convergence would have also worked but we choose instead to use known results of polylogarithm theory to illustrate the versability of \ty{our} methods. A proof of existence using standard analysis methods exists in the \ty{literature}, and is actually a standard exercise; see for example the internship report \cite{Co22} (in \ty{French}).
\begin{defn} \label{def:MZVs}
 Set $\omega_x$ and $\omega_y$ to be two forms defined by 
 \begin{equation*}
  \omega_x(t)=\frac{dt}{t},\qquad\omega_y(t)=\frac{dt}{1-t}.
 \end{equation*}
 Then the {\bf shuffle \ty{multiple zeta} values} (shuffle MZVs for short) is the map $\zeta_\shuffle:\calW_{\{x,y\}}^{\rm conv}\longrightarrow \R$ \ty{defined by} 
%  \begin{equation*}
%   \zeta_\shuffle:\calW_{\{x,y\}}^{\rm conv}\longrightarrow \R
%  \end{equation*}
 $\zeta_\shuffle(\emptyset):=1$ and for all $k$ in $\N^*$
 \begin{equation*}
  \forall(\eps_1,\cdots,\eps_k)\in\{x,y\}^k,~
  \zeta_\shuffle(\epsilon_1\cdots\epsilon_k):=\int_{1>t_1>\cdots>t_k>0}\omega_{\epsilon_1}(t_1)\cdots\omega_{\epsilon_k}(t_k).
 \end{equation*}
 The {\bf stuffle \ty{multiple zeta} values} (stuffle MZVs for short) is the map $\zeta_\stuffle:\calW_{\N^*}^{\rm conv}\longrightarrow \R$ \ty{defined by} 
 $\zeta_\stuffle(\emptyset):=1$ and 
%  \begin{equation*}
%   \zeta_\shuffle(s_1\cdots s_k):=\sum_{n_1>\cdots>n_k>0}\frac{1}{n_1^{s_1}}\cdots\frac{1}{n_k^{s_k}}.
%  \end{equation*}
\begin{equation} \label{eq:MZV_stuffle1}
   \zeta_\stuffle(s_1\cdots s_k):=\sum_{n_1>\cdots>n_k>0}\frac{1}{n_1^{s_1}\cdots n_k^{s_k}}.
 \end{equation}
\end{defn}
\cy{\begin{rk}
     Notice that we define a multiple zeta values as a map rather than the more typical (and grammatically correct) definition of the images of this map. We make this choice since the properties of the image will be easier to state as property of the map.
    \end{rk}}
Our goal in this chapter will be to \ty{build} generalisations of these two MZVs that preserve their properties in some sense. So let us now look at their properties. The first of these properties is the most well-known. 
\begin{theo} \label{theo:alg_prop_MZVs_stuffle_shuffle}
 $\zeta_\shuffle$ and $\zeta_\stuffle$ are   respectively algebra morphisms for the shuffle product $\shuffle$ on $\calW_{\{x,y\}}$ and the stuffle product $\stuffle$ on $\calW_{\N^*}$:
 \begin{align} \label{eq:stuffle_prod}
 & \zeta_\stuffle(w\stuffle w') = \zeta_\stuffle(w)\zeta_\stuffle(w'), \\
 & \zeta_\shuffle(w\shuffle w') = \zeta_\shuffle(w)\zeta_\shuffle(w').\label{eq:shuffle_prod}
\end{align}
\end{theo}
\begin{rk}
 The stuffle version of this result seems to be due to Hoffman \cite{Ho97}, but MZVs were already introduced in \cite{Ecalle81,Ecalle81b}, and some properties stated without proof (see point (ii) page 135 of \cite{Ecalle81} and Equation (12e15) page 429 of \cite{Ecalle81b}). The shuffle version of the previous theorem is a direct consequence of properties of Chen integrals studied in \cite{Ch77}. We will later give algebraic proofs of these results.
\end{rk}
We now \ty{give} a pedestrian definition of the binarisation map.
\begin{defn} \label{defn:bin_map_words}
 The {\bf binarisation map} is a map $\fraks:\calW_{\N^*}\longrightarrow\calW_{\{x,y\}}$ defined by $\fraks(\emptyset):=\emptyset$, $\fraks(s):=(\underbrace{x\cdots x}_{s-1~ times}y)$ and 
 \begin{equation*}
  \fraks(s_1\cdots s_k):=\fraks(s_1)\sqcup\cdots\sqcup\fraks(s_k).
 \end{equation*}
 We then extend $\fraks$ by linearity to the whole of $\calW_{\N^*}$.
\end{defn}
In other words, we have 
\begin{equation} \label{eq:binarisation_map}
 \fraks(s_1\cdots s_k)=(\underbrace{x\cdots x}_{s_1-1~ times}y\underbrace{x\cdots x}_{s_2-1~ times}y\cdots\underbrace{x\cdots x}_{s_k-1~ times}y).
\end{equation}
In particular, $\fraks$ is a morphism of algebras for the concatenation products and maps words of weight $N\in\N$ to words of length $N\in\N$.

According to \cite[Section 9]{Za94} where it first appeared, the next result is based on an observation by Kontsevich (although \cite[Remark 4 page 431]{Ecalle81b} might be a reference to this fact, as pointed out in \cite{Ma13}).
\begin{theo}[Kontsevich's relation] \label{thm:shuffle_stuffle_words}
 $\fraks$ maps convergent words to convergent words and for any $w\in\calW_{\N^*}^{\rm conv}$ we have 
 \begin{equation} \label{eq:Kontsevich}
 \zeta_\shuffle(\fraks(w)) = \zeta_\stuffle(w).
\end{equation}
\end{theo}
In particular, ${\rm Im}(\zeta_\stuffle) = {\rm Im}(\zeta_\shuffle)$ 
and this justifies the name ``\ty{multiple zeta} \emph{values}''; i.e. that we identify these maps and the elements of their image.
\begin{rk}
 This second property of MZVs is obtained by expanding $1/(1-t)$ in a series and \ty{permuting} series and integrals. It \ty{is} rigorously justified using basic analytical tools as was done for example in \cite{Co22}. Notice that once this computation is made one obtains an alternative form for stuffle MZVs, namely
 \begin{equation} \label{eq:MZV_stuffle2}
  \zeta_\stuffle(s_1,\cdots,s_k)=\sum_{m_1,\cdots,m_k=1}^{+\infty}\frac{1}{(m_1+m_2+\cdots+m_k)^{s_1}(m_2+\cdots+m_k)^{s_2}\cdots(m_k)^{s_k}}
 \end{equation}
 which \ty{coincides} with the series \eqref{eq:MZV_stuffle1} as can be seen with the change of variables $n_i=m_i+\cdots+m_k$.
\end{rk}
We can now directly state the third and last property of MZVs that we will study.
\begin{theo}[Hoffman's regularisation relations \cite{Ho92,Ho97}] \label{theo:Hoffman_reg_relations}
 For any convergent word $w$, 
$\fraks\left((1)\stuffle w)\right) - (y)\shuffle\fraks(w)$ is a convergent word and
\begin{equation} \label{eq:Hoffman_reg_rel}
 \fraks\left((1)\stuffle w)\right) - (y)\shuffle\fraks(w) \in \Ker(\zeta_\shuffle).
\end{equation}
\end{theo} 
There are many open conjectures on MZVs, due to Zagier \cite{Za94}, Hoffman \cite{Ho97}, Brown \cite{brown2012mixed}, Broadhurst and Kreimer \cite{broadhurst1997association} (see also \cite{carr2015broadhurst}) and many others... These conjectures are typically on the dimension of the $\Q$-vector space generated by MZVs of a given weight. Important results regarding these conjectures are for example \cite{brown2012decomposition,zagier2012evaluation} and \cite{ngo2021zagier,im2024zagier}. 

They have far-reaching consequences regarding the theory of \ty{transcendental} numbers. This is a very active field of research but rather far from my work and we will not present it further. Let us simply point out the recent text \cite{dac2022valeurs}
which attempts to summarize the state of the art regarding these conjectures.

My long-term goal is to build an \emph{arborified} version of the MZVs, namely maps $\zeta_\shuffle^T:\calF_{\{x,y\}}^{\rm conv}\longrightarrow\R$ and $\zeta_\stuffle^T:\calF_{\N^*}^{\rm conv}\longrightarrow\R$ that have properties generalising Equations \eqref{eq:stuffle_prod}, \eqref{eq:shuffle_prod} ,~\eqref{eq:Kontsevich} and \eqref{eq:Hoffman_reg_rel}. In order to make this statement precise, I need to define rooted forests.

\section{Combinatorics of rooted forests} \label{sec:comb_forest}

Rooted trees and forests can be seen as a generalisation of words as we will see later. Thus we introduce the combinatorics of rooted forests that I will use later on.

\subsection{The algebra of rooted forests}

There are many good introductions to graph theory, for example \cite{Wi96}. I recall some notions that will be useful in the sequel.
\begin{defn} \label{def:graph_stuff}
 \begin{itemize}
  \item A {\bf graph} is a pair of finite sets $G:=(V(G),E(G))$ with $E(G)\subseteq V(G)\times V(G)$. $E(G)$ is the set of edges of the graph and $V(G)$ the set of vertices of the graph. The {\bf empty graph} $(\emptyset,\emptyset)$ is denoted by $\emptyset$.
  \item A {\bf path} in a graph $G$ is a finite sequence of elements of $V(G)$: $p=(v_1,\cdots,v_n)$ such that for all $i\in[n-1]$, $(v_i,v_{i+1})$ is an edge of $G$.  
  By convention, there is always a path between a vertex and itself. 
  \item A graph $G$ is {\bf connected} if for any pair of vertices $(v_1,v_2)\in V(G)^2$ there exists a finite sequence of vertices $(u_0=v_1,u_1,\cdots,u_{n-1},u_n=v_2)$ of vertices of $G$ such that for all $i\in[n]$, there is a path from $(u_{i-1},u_i)$ is an edge of $E$ \emph{or} $(u_i,u_{i-1})$ is an edge of $E$. 
  \item Let $G_1$ and $G_2$ be two graphs. Their {\bf concatenation} is the graph $$G_1G_2:=(V(G_1)\bigsqcup V(G_2),E(G_1)\bigsqcup E(G_2)).$$
  \item Any graph $G$ can be written as the concatenation of non-empty connected graphs. These connected graphs are the {\bf connected components} of $G$.
 \end{itemize}
\end{defn}
\begin{rk}
 Notice that these graphs are the usual ones, and not the generalised graphs of the previous Chapter (Definitions \ref{def:graph} and \ref{def:solar_graphs}). In the categories of PROPs and TRAPs we needed these more sophisticated structures in order to build to free objects. In this Chapter, we will use that some usual, not generalised graphs also have a universal property, in the somewhat simpler category of operated algebras.
\end{rk}

Here we have actually defined oriented graphs since only these will appear in this work. In particular, ``graph'' will always be used for ``oriented graph''. Let me now introduce rooted trees and forests.
\begin{defn} \label{def:forests}
 \begin{itemize}
  \item For a graph $G=(V(G);E(G))$, let $\preceq$ be the binary relation on $V(G)$ defined by: $v_1\preceq v_2$ if, and only if, there exists a path from $v_1$ to $v_2$. We also denote by $\succeq$ the inverse relation. A {\bf directed acyclic graph} (DAG for short) is a graph such that $(V(G),\preceq)$ is a partially ordered set (poset for short).
  
  \item A {\bf forest} is a DAG such that there is at most one path between two vertices. A {\bf rooted forest} is a forest whose connected components each have a unique minimal element. These elements are called {\bf roots}. A {\bf rooted tree} is a connected rooted forest.
  \item Let $F$ be a rooted forest and $v_1,v_2$ be two vertices of $F$. If $(v_1,v_2)\in E(F)$\footnote{which implies $v_1\preceq v_2$}, then $v_1$ is called the {\bf direct ancestor} of $v_2$ and $v_2$ a {\bf direct descendant} of $v_1$. We write $v_1=a(v_2)$
  \item If a vertex of a forest $F$ has more than one direct descendant it is called a {\bf branching vertex} of $F$. Furthermore, a vertex that is maximal for the partial order $\preceq$ is called a {\bf leaf}.
  \item Let $\Omega$ be a set. A {\bf $\Omega$-decorated rooted forest} is a rooted forest $F$ together with an arbitrary {\bf decoration map} 
  $d_F:V(F)\mapsto\Omega$. For a rooted forest $(F,d_F)$ decorated by $\Omega$ and $\omega\in\Omega$, we write $V_\omega(F)\subseteq V(F)$ the set of vertices of $F$ decorated by $\omega$.
  \item Two rooted forests $F$ and $F'$ (resp. decorated rooted forests $(F,d_F)$ and $(F',d_{F'})$) are {\bf isomorphic} if a poset isomorphism  $f_V:V(F)\longrightarrow V(F')$ (resp. and $d_F = d_{F'}\circ f_V$) exists.
 \end{itemize}
 We write $\calF$ (resp. $\calF_\Omega$) \ty{for} the commutative algebra freely generated by isomorphism classes of rooted forests (resp. by $\Omega$-decorated rooted) with the product given by the concatenation of graphs. We also use $\calT$ and $\calT_\Omega$ for the vector spaces of isomorphism classes of rooted trees and $\Omega$-decorated rooted trees respectively.
\end{defn}
As usual, I always consider isomorphism classes of rooted forests and therefore identify trees and forests with their classes. Furthermore, when there is no need to specify the decoration map I simply write $F$ for a decorated forest $(F,d)$.

We now define gradings of rooted trees and forests similarly to the grading of words we presented above in Definition \ref{def:counting_words}.
\begin{defn}
 \begin{itemize}
  \item Let $F=(V(F),E(F))$ be a rooted forest. We set $|F|:=|V(F)|$.
  \item For any $\omega\in\Omega$ and $F$ a forest decorated by $\Omega$, let $\sharp_\omega F$ the number of vertices of $F$ decorated by 
  $\omega$:
  \begin{equation*}
   \sharp_\omega F:= |V_{\omega}|=|\{v\in V(F):d(v)=\omega\}|.
  \end{equation*}
  \item Let $(\Omega,\bullet)$ be a commutative semigroup and $(F,d)=((V(F),E(F)),d)\in\calF_\Omega$ be a rooted forest. We set 
  $||F||_\bullet:=\sum_{v\in V(F)}^\bullet d(v)$; where the sum is for the product $\bullet$.
 \end{itemize}
\end{defn}
Notice that the concatenation of graphs defined above is an associative and commutative product that stabilises rooted forests. It can be trivially extended to decorated forests: if $(F_1,d_1)$ and $(F_2,d_2)$ are two $\Omega$-decorated forests then their concatenation is $(F_1F_2,d_3)$ with $d_3(v):=d_1(v)$ when $v\in V(F_1)\subseteq V(F_1F_2)$ and $d_3(v):=d_2(v)$ when $v\in V(F_2)\subseteq V(F_1F_2)$.

It is also trivial that the concatenation of rooted forests respects the $\N$-grading given by the number of vertices and, if $(\Omega,\bullet)$ is a commutative semi-group, the $\Omega$-grading given by $||.||_\Omega$. We then have
\begin{prop}
 Let $(\Omega,\bullet)$ be a commutative semi-group. $(\calF,.,\emptyset)$ (resp. $(\calF_\Omega,.,\emptyset)$) is a $\N$-graded associative commutative algebra for the concatenation of rooted forests (resp. decorated rooted forests), with the empty forest as unit and number of vertices as graduation (resp. and is also a $\Omega$-graded algebra).
\end{prop}
\begin{rk}
 Non commutative versions of rooted trees and forests also \ty{exist}, see for example \cite{Fo13}. I will not introduce them here since they will play no role in the rest of this chapter.
\end{rk}
The last important object to define in the context of rooted trees is the grafting operator.
\begin{defn} \label{def:branching_map}
 Let $\Omega$ be a set and $B_+:\Omega\times\calF_\Omega\longrightarrow\calT_\Omega$ be the operation defined through the grafting operator which, to 
 a pair $(\omega,F=T_1\cdots T_k)$, associates the decorated tree obtained from $F$ by adding a root decorated by $\omega$ linked to each 
 root of $T_i$ for $i$ going from $1$ to $k$.
 \end{defn}
 \begin{example}
 Here is a graphical depiction of the action of the grafting operator for some small trees\footnote{ {the code to generate these trees was written by Lo\"ic Foissy.}}:
 \begin{equation*}
  B_+^\omega(\emptyset) = \tdun{$\omega$} \qquad B_+^\omega(\tdun{$\omega'$}) = \tddeux{$\omega$}{$\omega'$} \qquad 
  B_+^\omega(\tdun{$\omega'$}\tdun{$\omega''$}~) = \tdtroisun{$\omega$}{$\omega''$}{$\omega'$}.
 \end{equation*}
\end{example}
\begin{rk} \label{rk:ladder_trees}
 The grafting operator gives rise to a canonical injection $\iota_\Omega:\calW_\Omega\hookrightarrow\calF_\Omega$ which sends the empty word to the empty tree and non empty 
 words to {\bf ladder trees}:
 \begin{equation*}
  \iota_\Omega(\omega_1\cdots\omega_k):=B_+^{\omega_1}\circ\cdots\circ B_+^{\omega_k}(\emptyset).
 \end{equation*}
 It is in this sense that rooted trees and forests generalise words. In particular, we will want our generalised versions of MZVs to give the usual MZVs when restricted to ladder trees.
 
 These ladder trees got their name from the pictural representation of rooted trees. For example
 \begin{equation*}
  \iota_{\Omega}(abcd)=B_+^a\circ B_+^b\circ B_+^c\circ B_+^d(\emptyset)=\tdquatre{a}{b}{c}{d}.
 \end{equation*}

\end{rk}
As for words, rooted trees and forests enjoy many more structures, which we will not need here and therefore will not introduce. Let me simply point out that a coproduct exists that turns rooted forests into a Hopf algebra. This coproduct is called the Connes-Kreimer coproduct and allowed mathematicians to fully unravel the combinatorics of renormalisation \cite{CK1,CK2}. More will be said about this structure in Chapter \ref{chap:loc}.

\subsection{Operated structures}

We start by recalling the definition of the categories of operated structures \cite{Guo07}, as written in \cite{CGPZ2}. 
\begin{defn} \label{def:op_structures}
 Let $\Omega$ be a set. An {\bf $\Omega$-operated set} (resp. {\bf semigroup, monoid, vector space, algebra}) is a set 
 (resp. semigroup, monoid, vector space, algebra) $U$ together with a map\footnote{$\beta$ is just a map, without any necessary algebraic properties.} $\beta:\Omega\times U\longrightarrow U$.

 Let $(U,\beta_U)$ and $(V,\beta_V)$ be two $\Omega$-operated sets (resp. semigroups, monoids, vector spaces, algebras). A 
 {\bf morphism of $\Omega$-operated sets} (resp. semigroups, monoids, vector spaces, algebras) between $U$ and $V$ is a map (resp. a semigroup 
 morphism, a monoid morphism, a linear map, an algebra morphism) $\phi:U\longrightarrow V$ such that, for any $\omega$ in $\Omega$ and $u$ in $U$
 \begin{equation*}
  \phi(\beta_U(\omega,u)) = \beta_V(\omega,\phi(u)).
 \end{equation*}
 In other words, $\phi$ is such that diagram \ref{fig:operated_morphism} commutes.
\end{defn}
 \begin{figure}[ht!] 
  		\begin{center}
  			\begin{tikzpicture}[->,>=stealth',shorten >=1pt,auto,node distance=3cm,thick]
  			\tikzstyle{arrow}=[->]
  			
  			\node (1) {$\Omega\times U$};
  			\node (2) [right of=1] {$U$};
  			\node (3) [below of=1] {$\Omega\times V$};
  			\node (4) [right of=3] {$V$};

  			\path
  			(1) edge node [above] {$\beta_U$} (2)
  			(1) edge node [left] {$ {I_\Omega\times}\phi$} (3)
  			(3) edge node [below] {$\beta_V$} (4)
  			(2) edge node [right] {$\phi$} (4);
  			
  			\end{tikzpicture}
  			\caption{morphism of operated structures. }\label{fig:operated_morphism}
  		\end{center}
  	\end{figure}
\begin{example}
 For any set $\Omega$, $(\calF_\Omega,B_+)$ is an $\Omega$-operated algebra, with the operation given by the grafting operator.
\end{example}
This example is far from random, as the previous result shows. As a matter of fact, a great part of this Chapter \ty{rests upon} the next result. It was originally shown in \cite{KP13}, formulated in the present form in \cite{Guo07} and an alternative proof of this result can be found in \cite{CGPZ2}.
\begin{theo} \cite{KP13,Guo07} \label{thm:univ_prop_tree}
 Let $\Omega$ be a set. \ty{$\calF_\Omega$} is an initial object in the category of commutative $\Omega$-operated algebras, i.e. for any 
 commutative $\Omega$-operated algebra $(A,\beta)$,  {\ty{it} exists} a unique algebra morphism 
 $\Phi:\calF_\Omega\longrightarrow A$ such that diagram \ref{fig:univ_prop_forests} commutes
 \begin{figure}[ht!] 
  		\begin{center}
  			\begin{tikzpicture}[->,>=stealth',shorten >=1pt,auto,node distance=3cm,thick]
  			\tikzstyle{arrow}=[->]
  			
  			\node (1) {$\calF_\Omega$};
  			\node (2) [right of=1] {$\calF_\Omega$};
  			\node (3) [below of=1] {$A$};
  			\node (4) [right of=3] {$A$};

  			\path
  			(1) edge node [above] {$B_+^\omega$} (2)
  			(1) edge node [left] {$\Phi$} (3)
  			(3) edge node [below] {$\beta^\omega$} (4)
  			(2) edge node [right] {$\Phi$} (4);
  			
  			\end{tikzpicture}
  			\caption{Universal property of forests.}\label{fig:univ_prop_forests}
  		\end{center}
  	\end{figure}
 for every $\omega$ in $\Omega$.
\end{theo}
\begin{rk}
 This Theorem is proved constructively, by recursively defining the map $\Phi$ as a morphism of operated algebras. Its uniqueness is then shown indirectly. In particular, this proof still holds in the non commutative case, but we will not need this level of generality here.
\end{rk}
We use these universal properties in a special case, where the decorating set $\Omega$ has an algebra structure to lift maps on the decorating sets to maps on forests. We do this by using the original map to define an operation. This can be carried out in various ways, and we introduce here two that will be used later on. Such branchings were  introduced in \cite{CGPZ1} and further used in \cite{CGPZ2}.
\begin{defn} \label{defn:phi_hat}
 Let $\Omega$ be a commutative algebra and $\phi:\Omega\longrightarrow\Omega$ be a map. Let $\beta_\phi:\Omega\times\Omega\longrightarrow\Omega$ be the 
 operation of $\Omega$ on itself defined by 
 $\beta_\phi(\omega_1,\omega_2) := \phi(\omega_1.\omega_2)$. The {\bf branched $\phi$-map} (or branching of $\phi$) is the morphism of 
 commutative $\Omega$-operated algebras 
 $\widehat{\phi}:(\calF_\Omega,B_+)\longrightarrow(\Omega,\beta_\phi)$ whose existence and uniqueness is given by Theorem \ref{thm:univ_prop_tree}.
\end{defn}
Notice that the map $\widehat\phi$ is entirely determined by the relations
\begin{align}
 \widehat\phi(\emptyset) & = 1_\Omega, \nonumber \\
 \widehat\phi(F_1F_2) & = \widehat\phi(F_1)\widehat\phi(F_2), \label{eq:prod_hat_phi} \\
 \widehat\phi(B_+^\omega(F)) & = \phi\left(\omega\widehat\phi(F)\right). \nonumber
\end{align}
Before moving on, and since branching maps are a crucial element of the rest of the paper, let me work out some of their actions on small trees.
\begin{example}
 Let $\Omega$ and $\phi$ be as in the above Definition. Then
 \begin{align*}
  \widehat{\phi}(\emptyset) = 1_\Omega; \qquad & \widehat{\phi}(\tdun{$\omega$}) = \phi(\omega); \qquad \widehat{\phi}\left(\tddeux{$\omega_1$}{$\omega_2$} \right) = \phi\left(\omega_1\phi(\omega_2)\right); \\
  \widehat{\phi}\left(\tdtroisun{$\omega_1$}{$\omega_2$}{~$\omega_3$}\right) & =  \phi\left(\omega_1\phi(\omega_2)\phi(\omega_3)\right).
 \end{align*}
\end{example}
As already stated, there are other possibilities to lift maps from an algebra to rooted forests decorated by this algebra. Here is an other one that will play a role later on.
\begin{defn} \label{defn:phi_sharp}
 Let $\Omega_1,\Omega_2$ be two commutative algebras and $\phi:\Omega_1\longrightarrow\Omega_2$ be a map. Let 
 $\tilde\beta_\phi:\Omega_1\times\calF_{\Omega_2}\longrightarrow\calF_{\Omega_2}$ be the operation of $\Omega_1$ on 
 $\calF_{\Omega_2}$ defined by  $\tilde \beta_{\phi}(\omega_1,F) := B_+(\phi(\omega_1),F)$. The {\bf lifted $\phi$-map} (or lifting of $\phi$) 
 is the morphism of commutative $\Omega_1$-operated algebras 
 $\phi^\sharp:(\calF_{\Omega_1},B_+)\longrightarrow(\calF_{\Omega_2},\tilde\beta_{\phi})$ whose existence and uniqueness is given by Theorem 
 \ref{thm:univ_prop_tree}.
\end{defn}
As before, let me give examples of the action of a lifted map on small trees.
\begin{example}
 Let $\Omega_1,\Omega_2$ and $\phi$ be as in the above Definition. Then:
 \begin{align*}
  \phi^\sharp(\emptyset) = \emptyset; \qquad & \phi^\sharp(\tdun{$\omega$}) = \tdun{$\phi(\omega)$}\hspace{0.4cm}; \qquad \phi^\sharp\left(\tddeux{$\omega_1$}{$\omega_2$} \right) = \tddeux{$\phi(\omega_1)$}{$\phi(\omega_2)$}\hspace{0.5cm}; \\ 
  \phi^\sharp\left(\tdtroisun{$a$}{$b$}{~$c$}\right) & = 
  \tdtroisun{$\phi(a)$}{$\phi(b)$}{$\phi(c)\quad$}\hspace{0.6cm}.
 \end{align*}
 \end{example}
 
\subsection{Shuffle of rooted forests}

We will see later that branchings are related through Rota-Baxter maps to flattenings and shuffles of rooted forests. Let us start by introducing the \ty{latter}. They are a non-associative generalisation of the shuffle products of words. As far as I am aware, these products were introduced in my work \cite{Cl20}, however a different product with a similar construction was introduced earlier in \cite{Guo}.
\begin{defn} [\cite{Cl20}] \label{defn:shuffle_tree}
 Let $\Omega$ be a set (resp. $(\Omega,\bullet)$ be a commutative semigroup and $\lambda\in\R$). The {\bf shuffle product on trees}, $\shuffle$, (resp. the {\bf $\lambda$-shuffle product on trees}, $\shuffle_\lambda$,) of two forests $F$ and 
 $F'$ is defined recursively on $|F|+|F'|$. 
 
 If $|F|+|F'|=0$ (and thus $F=F'=\emptyset$), we set $\emptyset\shuffle\emptyset=\emptyset\shuffle_\lambda\emptyset=\emptyset$.
 
 For $N\in\N$, assume the shuffle (resp. $\lambda$-shuffle) products of forests has been defined on every forests $f,f'$ such that $|f|+|f'|\leq N$. Then for any two forests $F,F'$ such that $|F|+|F'|=N+1$;
 \begin{itemize}
  \item (Unit) if $F'=\emptyset$, set $\emptyset\shuffle F=F\shuffle\emptyset=F$; and $F\shuffle_\lambda\emptyset=\emptyset\shuffle_\lambda F=F$.
  \item (Compatibility with the concatenation product) if $F$ or $F'$ is not a tree, then we can write $F$ and $f$ uniquely as a concatenation of trees: $F=T_1\cdots T_k$ and $F'=t_1\cdots t_n$ with the $T_i$s and $t_j$s 
  nonempty, $k+n\geq3$ and set 
  \begin{equation*}
   F\shuffle F' = \frac{1}{kn}\sum_{i=1}^k\sum_{j=1}^n\left((T_i\shuffle t_j)T_1\cdots\widehat{T_i}\cdots T_n t_1\cdots\widehat{t_j}\cdots t_k\right)
  \end{equation*}
  (resp. 
  \begin{equation*}
   F\shuffle_\lambda F' = \frac{1}{kn}\sum_{i=1}^k\sum_{j=1}^n\left((T_i\shuffle_\lambda t_j)T_1\cdots\widehat{T_i}\cdots T_n t_1\cdots\widehat{t_j}\cdots t_k\right)~),
  \end{equation*}
  where $T_1\cdots\widehat{T_i}\cdots T_n$ stands for the concatenation of the trees $T_1,\cdots,T_n$ without the tree $T_i$.
  \item (Compatibility with the grafting) if $F=T = B_+^a(f)$ and $F'=T'=B_+^{a'}(f')$ are two nonempty trees, we set 
  \begin{equation*}
   T\shuffle T' = B_+^a(f\shuffle T') + B_+^{a'}(T\shuffle f')
  \end{equation*}
  (resp. 
  \begin{equation*}
   T\shuffle_\lambda T' = B_+^a(f\shuffle_\lambda T') + B_+^{a'}(T\shuffle_\lambda f') + \lambda B_+^{a\bullet a'}(f\shuffle_\lambda f')~).
  \end{equation*}
 \end{itemize}
 We extend these products by linearity to obtain products on the vector space $\calF_\Omega$.
\end{defn}
\begin{rk}
  The well-definedness of the products $\shuffle$, $\shuffle_\lambda$ follows from the fact that any forest can be uniquely written (up to permutation) as an iteration of concatenations and graftings.
  
  Notice also that I use the same symbols for shuffles on trees and shuffles on words, as whether we are working with words or with trees will be clear from context. So, as for words, we write $\stuffle:=\shuffle_1$ the 
  {\bf stuffle product on trees} and $\shuffle_{-1}$ the {\bf anti-stuffle product on trees}.
\end{rk}
As for words, $\shuffle_0=\shuffle$. I nevertheless make a distinction between the cases $\lambda=0$ and $\lambda\neq0$ as in the former case, \ty{$\Omega$} is not 
  required to have a semigroup structure. I will however sometimes treat the $\shuffle$ product as a special case of $\shuffle_\lambda$, keeping in mind that when dealing with 
  $\lambda=0$ (so with the shuffle product), I will always implicitly allow the decoration set to have no semigroup structure.  
\begin{example}
 Here are examples of stuffle of trees with $(\Omega,.)=(\N,+)$:
 \begin{gather*}
  \tdun{n}\tdun{m}\shuffle_\lambda\tdun{p} = \frac{1}{2}\Big(\tdun{n}\left(\tddeux{~m}{~p} + \tddeux{~p}{~m} + \lambda\tdun{m+p}\quad\right) + \tdun{m}\left(\tddeux{~n}{~p} + \tddeux{~p}{~n} + \lambda\tdun{n+p}\quad\right)\Big) \\
  \tdtroisun{q}{~m}{n}\shuffle_\lambda\tdun{p} = \tdquatrequatre{p}{q}{n}{m} + \frac{1}{2}\Big(\tdquatretrois{q}{m}{p}{n} + \tdquatretrois{q}{p}{m}{n} + \lambda\tdtroisun{q}{m+p}{n} +\tdquatrequatre{q}{p}{n}{m} + \tdquatretrois{q}{n}{p}{m~} + \tdquatretrois{q}{p}{n}{m~} + \lambda\tdtroisun{q}{n+p}{m~}\Big) + \lambda\tdtroisun{q+p}{m}{n}. 
 \end{gather*}
 \ty{Based on} these intermediate computations, one can compute more involved shuffles of trees. However due to their length,   we will not write down the result here. For example the shuffle 
 $\tdtroisun{r}{m}{n}\shuffle_\lambda\tdtroisun{s}{q}{p}$ is a sum of forty-four trees, of which twenty have six vertices, twenty have five vertices, and four have four vertices.
\end{example}
We have a simple but important result about the structures these shuffles endow spaces of rooted forests with.
\begin{prop} \label{prop:shuffles_trees}
 Let $\lambda\in\R^*$ and $(\Omega,\bullet)$ be a commutative semigroup; or $\lambda=0$ and \ty{$\Omega$ be} a set. Then $(\calF,\shuffle_\lambda,\emptyset)$ is an nonassociative, commutative, unital algebra.
\end{prop}
\begin{proof}
 The case $\lambda=0$ is a consequence of the more general case as every undefined product in $\Omega$ disappears if $\lambda$ is set to $0$. Therefore we will only explicitly work out the case $\lambda\neq0$. 
 
 Let $(\Omega,.)$ be a commutative semigroup and $\lambda\in\R$.
 \begin{enumerate}
  \item By definition of $\shuffle_\lambda$, $\emptyset\shuffle_\lambda F=F\shuffle_\lambda\emptyset=F$ for any $F\in\calF_\Omega$. Therefore $\emptyset$ is the unit for $\shuffle_\lambda$ as claimed.
  \item We prove the commutativity of $\shuffle_\lambda$ by induction on $|F|+|F'|$. If $|F|+|F'|=0$ then $F=F'=\emptyset$ and $F\shuffle_\lambda F'=\emptyset=F'\shuffle_\lambda F$ by definition. Let $N\in\N$ and assume that, for any 
  pair of forest $f,f'$ such that $|f|+|f'|\leq N$ we have $f\shuffle_\lambda f'=f'\shuffle_\lambda f$. Let $F, F'$ be two forests such that $|F|+|F'|=N+1$. We then distinguish three cases:
  \begin{itemize}
   \item If $F=\emptyset$ or $F'=\emptyset$, then $F\shuffle_\lambda F'= F'\shuffle_\lambda F$ since $\emptyset$ is the unit of $\shuffle_\lambda$.
   \item If $F=T=B_+^{a}(f)$ and $F'=T'=B_+^{a'}(f')$ are two nonempty trees, then 
   \begin{equation*}
    T\shuffle_\lambda T' - T'\shuffle_\lambda T = \lambda B_+^{a\bullet a'}(f\shuffle_\lambda f') - \lambda B_+^{a'\bullet a}(f'\shuffle_\lambda f).
   \end{equation*}
   The R.H.S. vanishes by commutativity of $(\Omega,\bullet)$ and the induction hypothesis.
   \item If $F$ or $F'$ is not a tree, we write $F=T_1\cdots T_k$ and $F'=t_1\cdots t_n$ with $k+n\geq3$. Then 
   \begin{align*}
    F\shuffle_\lambda F' & = \frac{1}{kn}\sum_{i=1}^k\sum_{j=1}^n\left((T_i\shuffle_\lambda t_j)T_1\cdots\widehat{T_i}\cdots T_n t_1\cdots\widehat{t_j}\cdots t_k\right) \\
			 & = \frac{1}{kn}\sum_{i=1}^k\sum_{j=1}^n\left((t_j\shuffle_\lambda T_i)t_1\cdots\widehat{t_j}\cdots t_kT_1\cdots\widehat{T_i}\cdots T_n\right)
   \end{align*}
   by the induction hypothesis and the commutativity of the concatenation product of trees. Exchanging the order of the summations we indeed find $F\shuffle_\lambda F' = F'\shuffle_\lambda F$.
  \end{itemize}
  We have treated the three possible cases. Thus $\shuffle_\lambda$ is indeed commutative.
 \end{enumerate}
\end{proof}
\begin{rk} \label{rk:future_research}
 We will focus here on the applications of these new shuffle products of trees and their nonassociativity to the study of AZVs. Linked questions, such that the existence of a coproduct associated to these shuffles and the 
 eventual existence of a comodule-bialgebra structure \cite{E-FFM17} are interesting questions, but away from the scope of the present work. As such, they are left for further investigations.
\end{rk}
\begin{rk}
 One can see\footnote{I thank Dominique Manchon for his very useful comments on this point.} that the coefficient $1/kn$ arising in the compatibility with the concatenation product of trees equation of the 
 definition of the shuffle products on trees will prevent associativity. However, one could also define ``unweighted'' shuffle products of rooted forests without this factor. It turns out that these unweighted shuffle products are still not associative. Indeed, if one computes $((T_1\dots T_n)\shuffle_\lambda T)\shuffle_\lambda(t_1\cdots t_m)$, we will see 
 appear terms containing the forests $(T_i\shuffle_\lambda T)(T_j\shuffle_\lambda t_k)$. Such terms will not be present in products  $(T_1\dots T_n)\shuffle_\lambda (T\shuffle_\lambda(t_1\cdots t_m))$.
 \end{rk}

Let us end this subsection by an illustration of the nonassocativity of $\shuffle_\lambda$, with $\lambda$ set to $0$ in order to have simpler 
expressions to handle.
\begin{coex}
 Let $\Omega$ be a set. Then an easy computation gives, for any $(a,b,c,d)\in\Omega^4$
 \begin{equation*}
  \left(\tdun{a}\tdun{b}\shuffle~\tdun{c}\right)\shuffle~\tdun{d} = \tdun{a}\tdun{b}\shuffle\left(\tdun{c}\shuffle~\tdun{d}\right) + \frac{1}{2}\left(\tddeux{a}{d} + \tddeux{d}{a}\right)\left(\tddeux{b}{c} + \tddeux{c}{b}\right) + \frac{1}{2}\left(\tddeux{b}{d} + \tddeux{d}{b}\right)\left(\tddeux{a}{c} + \tddeux{c}{a}\right).
 \end{equation*}
\end{coex}

\subsection{Back to words}

I now introduce the flattening maps. They have many different names. In \cite{Ma13}, they are called the arborifications (simple and contracting), following Ecalle in \cite{Ec92} (see also \cite{EV04}). They also appear in \cite{Ya20}. I choose here to follow \cite{CGPZ3} and \cite{Cl20} and to define these flattening maps through the universal properties of rooted forests given by Theorem \ref{thm:univ_prop_tree}.
\begin{defn} \label{defn:flattening_maps}
 Let $\lambda\in\R$ and $(\Omega,\bullet)$ be a commutative semigroup (resp. $\lambda=0$ and $\Omega$ a set). We define an operation of $\Omega$ on the commutative algebra $(\calW,\emptyset,\shuffle_\lambda)$ as:
 \begin{align} \label{eq:defn_c_plus}
 C_+ : ~ \Omega & \times\calW_\Omega \longrightarrow\calW_\Omega \\
         (\omega & ,w) \mapsto C_+^\omega(w):=(\omega)\sqcup w. \nonumber
\end{align}
Then the {\bf flattening map of weight $\lambda$} is the morphism of 
 $\Omega$-operated algebras $fl_\lambda:(\calF_\Omega,B_+)\mapsto(\calW_\Omega,C_+)$ whose 
 existence and uniqueness is given by Theorem \ref{thm:univ_prop_tree}. 
\end{defn}
Notice that these flattening maps are entirely determined by the following relations:
\begin{align*}
 fl_\lambda(\emptyset) & = \emptyset \\
 fl_\lambda(F_1F_2) & = fl_\lambda(F_1)\shuffle_\lambda fl_\lambda(F_2) \\
 fl_\lambda(B_+^\omega(F)) & = (\omega)\sqcup fl_\lambda(F).
\end{align*}
The following Lemma will be useful to prove the main result of this Section.
\begin{lemma} \label{lem:fl_shuffle}
 Let $\lambda\in\R$ and $(\Omega,\bullet)$ be a commutative semi-group (resp. $\lambda=0$ and $\Omega$ a set). Then $fl_\lambda$ is a algebra morphism for the $\lambda$-shuffle products of rooted forests and words: for any rooted forests $F$ and $F'$ in $\calF_\Omega$
 \begin{equation*}
  fl_\lambda(F\shuffle_\lambda F')=fl_\lambda(F)\shuffle_\lambda fl_\lambda(F')
 \end{equation*}
 with $\shuffle_\lambda$ the $\lambda$-shuffle of rooted forests in the LHS and the $\lambda$-shuffle of words in the RHS. 
\end{lemma}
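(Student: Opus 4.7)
The plan is to proceed by induction on $N = |F| + |F'|$, paralleling the three-case recursive definition of $\shuffle_\lambda$ on rooted forests (Definition \ref{defn:shuffle_tree}) and using the three defining relations of $fl_\lambda$. The base case $N = 0$ forces $F = F' = \emptyset$, and both sides evaluate to $\emptyset$.

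For the inductive step with $|F| + |F'| = N+1$, I would split into the three cases of Definition \ref{defn:shuffle_tree}. If either forest is empty, the identity is immediate because $\emptyset$ is a unit on both sides. If both $F = T = B_+^a(f)$ and $F' = T' = B_+^{a'}(f')$ are nonempty trees, I apply $fl_\lambda$ to the three-term recursive formula for $T \shuffle_\lambda T'$. The defining relation $fl_\lambda(B_+^\omega(G)) = (\omega) \sqcup fl_\lambda(G)$ turns each summand into a word of the form $(\omega) \sqcup fl_\lambda(\cdot \shuffle_\lambda \cdot)$, and the induction hypothesis rewrites the inner term as $fl_\lambda(\cdot) \shuffle_\lambda fl_\lambda(\cdot)$. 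Setting $w = fl_\lambda(f)$ and $w' = fl_\lambda(f')$ so that $fl_\lambda(T) = (a) \sqcup w$ and $fl_\lambda(T') = (a') \sqcup w'$, the resulting expression is precisely the recursive definition of $((a) \sqcup w) \shuffle_\lambda ((a') \sqcup w')$ from Definition \ref{def:stuffle}, which is exactly $fl_\lambda(T) \shuffle_\lambda fl_\lambda(T')$.

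The delicate case is the third one, where $F = T_1 \cdots T_k$ and $F' = t_1 \cdots t_n$ with $k + n \geq 3$, so that $|T_i| + |t_j| < N+1$ and the induction hypothesis applies to each $T_i \shuffle_\lambda t_j$. Applying $fl_\lambda$ to the defining formula and using its morphism property for concatenation (which converts products of forests into $\shuffle_\lambda$-products of words), every term of the double sum becomes a $\shuffle_\lambda$-product of all the $fl_\lambda(T_p)$ and $fl_\lambda(t_q)$. By the associativity and commutativity of $\shuffle_\lambda$ on words already proven in the excerpt, the $kn$ summands all coincide, and the weight $1/(kn)$ collapses them to the single expression $fl_\lambda(T_1) \shuffle_\lambda \cdots \shuffle_\lambda fl_\lambda(T_k) \shuffle_\lambda fl_\lambda(t_1) \shuffle_\lambda \cdots \shuffle_\lambda fl_\lambda(t_n)$, which by the morphism property of $fl_\lambda$ for concatenation equals $fl_\lambda(F) \shuffle_\lambda fl_\lambda(F')$.

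The main conceptual point is that the combinatorial weighting $1/(kn)$ in the definition of $\shuffle_\lambda$ on forests is exactly what is needed for compatibility with flattening once one knows that $\shuffle_\lambda$ on words is associative and commutative; no serious obstacle is expected beyond the careful bookkeeping of the double sum in case three.
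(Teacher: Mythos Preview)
Your proposal is correct and follows essentially the same approach as the paper's own proof: induction on $|F|+|F'|$ with the same three-case split, using the operated-algebra relation $fl_\lambda\circ B_+^\omega = (\omega)\sqcup fl_\lambda$ in the tree case and the associativity/commutativity of $\shuffle_\lambda$ on words to collapse the $1/(kn)$-weighted double sum in the forest case. Your explanation of why the $kn$ summands coincide in the third case is in fact slightly more explicit than the paper's.
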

\begin{proof}
 We prove this result for $(\Omega,\bullet)$ a commutative semi-group since the case where $\Omega$ is a set can be deduced from it by setting $\lambda=0$. The proof is by induction on $N=|F|+|F'|$. If $N=0$ then $F=F'=\emptyset$ and the result trivially holds. For $N\in\N$, let us assume that it holds for any rooted forests $f$ and $f'$ such that $|f|+|f'|\leq N$. Let $F$ and $F'$ be two rooted forests such that $|F|+|F'|=N+1$. We have three cases to consider:
 \begin{itemize}
  \item If $F=\emptyset$ or $F'=\emptyset$ the result trivially holds.
  \item If $F=T=B_+^a(f)$ and $F'=T'=B_+^b(f')$ are two rooted trees since $fl_\lambda$ is a map of $\Omega$-operated algebras we have by definition of $\omega_\lambda$, the $\lambda$-shuffle of rooted forests we have
  \begin{align*}
   fl_\lambda(F\shuffle_\lambda F') & =fl_\lambda(B_+^a(f)\shuffle_\lambda B_+^b(f')) \\
   & = fl_\lambda(B_+^a(f\shuffle_\lambda T')) + fl_\lambda(B_+^b(T\shuffle_\lambda f'))+\lambda fl_\lambda(B_+^{a\bullet b}(f\shuffle_\lambda f')) \\
   & = (a)\sqcup fl_\lambda(f\shuffle_\lambda T') + (b)\sqcup fl_\lambda(T\shuffle_\lambda f')+\lambda (a\bullet b)\sqcup fl_\lambda(f\shuffle_\lambda f').
  \end{align*}
  Since $|f|+ |T'|=|T|+|f'|=N$ and $|f|+|f'|=N-1$ we can use the induction hypothesis and we obtain
  \begin{align*}
   fl_\lambda(F\shuffle_\lambda F') & = (a)\sqcup (fl_\lambda(f)\shuffle_\lambda fl_\lambda(T')) + (b)\sqcup (fl_\lambda(T\shuffle_\lambda fl_\lambda(f'))+\lambda (a\bullet b)\sqcup (fl_\lambda(f)\shuffle_\lambda fl_\lambda(f')) \\
   & = ((a)\sqcup fl_\lambda(f))\shuffle_\lambda ((b)\sqcup fl_\lambda(f'))\qquad\text{by definition \ref{def:stuffle}} \\
   & = fl_\lambda(T)\shuffle_\lambda fl_\lambda(T')
  \end{align*}
  by definition of $fl_\lambda$. In this case the result also holds.
  \item If $F=T_1\cdots T_n$ and $F'=t_1\cdots t_k$ with $n+k\geq3$ we have by definition of the $\lambda$-shuffle of rooted forests
  \begin{align*}
   fl_\lambda(F\shuffle_\lambda F') & = \frac{1}{kn}\sum_{i=1}^k\sum_{j=1}^nfl_\lambda\left((T_i\shuffle_\lambda t_j)T_1\cdots\widehat{T_i}\cdots T_n t_1\cdots\widehat{t_j}\cdots t_k\right) \\
   & = \frac{1}{kn}\sum_{i=1}^k\sum_{j=1}^nfl_\lambda\left((T_i\shuffle_\lambda t_j)\right)\shuffle_\lambda fl_\lambda\left(T_1\cdots\widehat{T_i}\cdots T_n t_1\cdots\widehat{t_j}\cdots t_k\right).
  \end{align*}
  Since $n+k\geq3$ we have $|T_i|+|t_j|\leq N$ and we can use the induction hypothesis to each of these pair of trees. Then using that $fl_\lambda(TT')=fl_\lambda(T)\shuffle_\lambda fl_\lambda(T')$ we obtain
  \begin{equation*}
   fl_\lambda(F\shuffle_\lambda F') = \frac{1}{kn}\sum_{i=1}^k\sum_{j=1}^n fl_\lambda(t_1\cdots t_k)\shuffle_\lambda fl_\lambda(t_1\cdots t_k)=fl_\lambda(F)\shuffle_\lambda fl_\lambda(F')
  \end{equation*}
  which concludes the proof.
 \end{itemize}
\end{proof}
Before we move on, I state a simple yet important property of  
flattening maps.
\begin{prop} \label{prop:finite_Q_sum_flattening}
 Let $(\Omega,\bullet)$ be a commutative semigroup and $\lambda$ a rational (resp. integer) number. For any finite forest $F$ in 
 $\calF_\Omega$, $fl_\lambda(F)$ is a 
 finite sum of finite words with rational (resp. integer) coefficients.
 
 If $\Omega$ is a set, then for any finite forest $F$ in 
 $\calF_\Omega$, $fl_0(F)$ is again a 
 finite sum of finite words with integer coefficients.
\end{prop}
\begin{proof}
 This result is easily shown by induction on the number of vertices of $F$. It follows from the fact that the flattening of the empty 
 forest is trivially a finite sum of finite words with integer coefficients; that (provided $\lambda$ is rational; resp. integer) the 
 $\lambda$-shuffle 
 of two finite sums of finite words with rational (resp. integer) coefficients is a finite sum of finite words with rational (resp. integer) 
 coefficients since 
 $\Q$ (resp. $\Z$) is stable under multiplication and addition; and 
 finally that the concatenation by one letter of a finite sum of finite words with rational (resp. integer) coefficients is a finite sum of 
 finite words 
 with rational (resp. integer) coefficients.
 
 The case where $\Omega$ is a set follows from the case $\lambda=0$.
\end{proof}
As in \cite{CGPZ2}, one can build counterparts for words of Definitions \ref{defn:phi_hat} and \ref{defn:phi_sharp}.
\begin{defn} \label{def:branched_lifted_words}
 Let $\Omega$ be a commutative algebra, $\Omega_1$ and $\Omega_2$ be two sets. Let $\iota_\Omega$ 
 (resp. $\iota_{\Omega_1}$, $\iota_{\Omega_2}$) be the canonical inclusion of words into trees, which to any word associate a 
 ladder tree. Let $\phi:\Omega\longrightarrow\Omega$ and $\psi:\Omega_1\longrightarrow\Omega_2$ be two maps. The {\bf W-branched $\phi$ map} 
 $\widehat\phi_\calW:\calW_\Omega\longrightarrow\Omega$ is defined by 
 $\widehat\phi_\calW:=\widehat\phi\circ\iota_\Omega$; and the {\bf W-lifted $\phi$ map} $\phi^\sharp_\calW:\calW_{\Omega_1}\longrightarrow\calW_{\Omega_2}$ by 
 $\phi^\sharp_\calW:=\iota_{\Omega_2}^{-1}\circ\phi^\sharp\circ\iota_{\Omega_1}$.
\end{defn}
\begin{rk}
 \begin{itemize}
  \item The W-lifted $\phi$ map is well defined since $\phi^\sharp$ maps ladder trees (introduced in Remark \ref{rk:ladder_trees}) to ladder trees, thus the image of 
  $\phi^\sharp\circ\iota_{\Omega_1}$ is included in the image of $\iota_{\Omega_2}$.
  \item The maps $\widehat\phi_\calW$ and $\phi^\sharp_\calW$ can be recursively defined by the relations:
  \begin{equation*}
  \widehat\phi_\calW(\emptyset) := 1_\Omega,\qquad
  \widehat\phi_\calW\left((\omega)\sqcup w\right) := \phi\left(\omega\widehat\phi_\calW(w)\right)
 \end{equation*}
 for $\widehat\phi_\calW$ and 
 \begin{equation*}
  \phi^\sharp_\calW(\emptyset) := \emptyset, \qquad
  \phi^\sharp_\calW\left((\omega)\sqcup w\right) := \phi(\omega)\sqcup\phi^\sharp_\calW(w)
 \end{equation*}
 for $\phi^\sharp_\calW$.
 \end{itemize}
\end{rk}

\subsection{Rota-Baxter maps and an important Theorem}

The various structures defined above (branching, flattening, shuffles of words and rooted forests) are related through the notion of Rota-Baxter maps. I recall their definition now but will not enter more into this rich and fascinating topic. I refer instead the reader to \cite{Guo} for a rich and gentle introduction to the subject.
\begin{defn} \label{def:RB_maps}
 Let $\Omega$ be a commutative algebra and $\lambda$ a real number. A map $P:\Omega\longrightarrow\Omega$ is a {\bf Rota-Baxter map of weight 
 $\lambda$} if
 \begin{equation} \label{eq:RBO}
  P(a)P(b) = P(aP(b)) + P(P(a)b) + \lambda P(ab)
 \end{equation}
 for any $a$ and $b$ in $\Omega$.
\end{defn}
Let me give standard examples of Rota-Baxter maps, which will be of importance in the next Sections.
\begin{example} \label{ex:RBmap}
 \begin{enumerate}
  \item Let $X\subseteq L^1_{\rm loc}(\R)$ be an algebra of locally integrable functions on $\R$ equipped with the  {pointwise} product of 
  functions stable under the integration map $f\mapsto\left(x\mapsto\int_a^xf(x)dx\right)$. Then for any 
  $a$ in $\R$, the integration map is a Rota-Baxter operator of weight $0$. \label{ex:RB_integration}
  \item Let $l(\R)$ be the algebra of sequences on $\R$ equipped with the  {pointwise} product of sequences. The summation operator 
  $\Sigma:l(\R)\longrightarrow l(\R)$ defined by $\Sigma(u_n)(N):=\sum_{n=0}^N u_n$ is a Rota-Baxter map of weight $-1$. \label{ex:RB_sum_star}
  \item Let $\frakt_{-1}:\N^*\longrightarrow\N$ be the translation map by $-1$ and $\frakt^*_{-1}$ be its pull-back to $l(\R)$. Then $\frakt^*_{-1}\Sigma$ is 
  a Rota-Baxter map of weight $+1$. \label{ex:RB_sum}
 \end{enumerate}
\end{example}
The next result is the main Theorem of this long Section. The equivalences of the first three items were first shown in \cite{CGPZ3} in the locality framework and a partially new proof was given in \cite{Cl20}. In the same paper, the proof of the equivalence of the fourth item was shown. Here, I will give the proof of \cite{Cl20}.
\begin{theo} \cite[Theorem 2.13]{CGPZ1}, \cite[Theorems 2.20 and 5.8]{Cl20}. \label{thm:flattening} \\
 Let $\Omega$ be a commutative algebra and $P:\Omega\longrightarrow\Omega$ a linear map. For any real number $\lambda$, the following statements are equivalent:
 \begin{enumerate}
  \item $P$ is a Rota-Baxter map of weight $\lambda$.
  \item The $P$-branched map factorises through words: $\widehat{P}=\widehat{P}_\calW\circ fl_\lambda$. \label{thm:ii}
  \item $\widehat P_\calW$ is an algebra morphism for the $\lambda$-shuffle product, namely 
  $\widehat P_\calW(w\shuffle_\lambda w')=\widehat P_\calW(w)\widehat P_\calW(w')$ for any $w$ and $w'$ in $\calW_\Omega$. \label{thm:iii}
  \item \label{thm:iv} $\widehat{P}$ is an algebra morphism for the $\lambda$-shuffle product of rooted forests, namely, for any $F$ and $F'$ in $\calF_\Omega$:
  \begin{equation*}
   \widehat{P}(F\shuffle_\lambda F') = \widehat{P}(F)\widehat{P}(F').
  \end{equation*}
%   $\widehat{P}(F\shuffle_\lambda F') = \widehat{P}(F)\widehat{P}(F')$ for any $F,F'$ in $\calF_\Omega$.
%   \begin{equation*}
%    \widehat{P}(F\shuffle_\lambda F') = \widehat{P}(F)\widehat{P}(F').
%   \end{equation*}
 \end{enumerate}
\end{theo}
\begin{proof}
We prove the implications $(1)\Rightarrow(3)\Rightarrow(2)\Rightarrow(4)\Rightarrow(1)$.
\begin{itemize}
 \item $(1)\Rightarrow(3)$: Assuming that $(1)$ holds, we prove this result by induction on $p=|w|+|w'|$. Let $P$ be a Rota-Baxter operator of weight $\lambda$.
 \begin{itemize}
  \item If $p=0$, then $w=w'=\emptyset$ and the result trivially holds as both sides are equal to $1$.
  \item For $p$ a natural number, let us assume that 
  $\widehat P_\calW(w\shuffle_\lambda w')=\widehat P_\calW(w)\widehat P_\calW(w')$ for any words $w$ and $w'$ such that $|w|+|w'|\leq p$. Let 
  $w$ and $w'$ be two words such that $|w|+|w'|=p+1$. If $w=\emptyset$ or $w'=\emptyset$ the result once again trivially holds. Otherwise 
  there
  exist two words $\tilde w$ and $\tilde w'$ (eventually empty) and $\omega,\omega'$ in $\Omega$ such that $w=(\omega)\sqcup\tilde w$ and 
  $w'=(\omega')\sqcup\tilde w'$. Then by definition of $\widehat P_\calW$ we have
  \begin{align*}
   \widehat P_\calW(w)\widehat P_\calW(w') & = P\left(\omega\wPw(\tilde w)\right)P\left(\omega'\wPw(\tilde w')\right) \\
					   & = P\left(\omega\wPw(\tilde w)P\left(\omega'\wPw(\tilde w')\right)\right) + P\left(P\left(\omega\wPw(\tilde w)\right)\omega'\wPw(\tilde w')\right) \\
					   & \hspace{8cm}\quad + \lambda P\left(\omega\wPw(\tilde w)\omega'\wPw(\tilde w')\right) \\
					   & = P\left(\omega\wPw(\tilde w)\wPw(w')\right) + P\left(\wPw(w)\omega'\wPw(\tilde w')\right) + \lambda P\left(\omega\wPw(\tilde w)\omega'\wPw(\tilde w')\right)   
  \end{align*}
  were we have used that $P$ is a Rota-Baxter map of weight $\lambda$.
  
  On the other hand we have by definition of $\shuffle_\lambda$ and from the definition of $\wPw$:
  \begin{align*}
   \wPw(w\shuffle_\lambda w') & = \wPw\left((\omega)\sqcup(\tilde w\shuffle_\lambda w')\right) + \wPw\left((\omega')\sqcup(w\shuffle_\lambda\tilde w')\right) + \lambda \wPw\left((\omega\omega')\sqcup(\tilde w\shuffle_\lambda\tilde w')\right) \\
			      & = P\left(\omega\wPw(\tilde w\shuffle_\lambda w')\right) + P\left(\omega'\wPw(w\shuffle_\lambda\tilde w')\right) + \lambda P\left(\omega\omega'\wPw(\tilde w\shuffle_\lambda\tilde w')\right) \\
			      & = P\left(\omega\wPw(\tilde w)\wPw(w')\right) + P\left(\omega'\wPw(w)\wPw(\tilde w')\right) + \lambda P\left(\omega\omega'\wPw(\tilde w)\wPw(\tilde w')\right) \\
			      & \hspace{8cm}\quad \text{by the induction hypothesis}\\
			      & = P\left(\omega\wPw(\tilde w)\wPw(w')\right) + P\left(\wPw(w)\omega'\wPw(\tilde w')\right) + \lambda P\left(\omega\wPw(\tilde w)\omega'\wPw(\tilde w')\right) 
  \end{align*}
  by commutativity of $\Omega$. This concludes this part of the proof.
 \end{itemize}
 \item $(3)\Rightarrow(2)$: Assuming that $(3)$ holds, we once again prove this result by induction, this time on $p=|F|$.
 \begin{itemize}
  \item If $p=0$ then $F=\emptyset$ and the result trivially holds.
  \item For any $p$ a natural number, let us assume that $\wP(F)=\wPw(fl_\lambda(F))$ for any forest $F$ such that $|F|\leq p$. Let $F$ be a 
  forest 
  with exactly $p+1$ vertices. Therefore $F$ is nonempty and we have either $F=T=B_+^\omega(\tilde F)$ for some (eventually empty) forest 
  $\tilde F$ or $F=F_1F_2$ with $F_1$ and $F_2$ nonempty forests. In the first case we have
  \begin{align*}
   \wP(F=T=B_+^\omega(\tilde F)) & = P\left(\omega\wP(\tilde F)\right) \quad \text{by definition of }\wP \\
				 & = P\left(\omega\wPw(fl_\lambda(\tilde F))\right) \quad \text{by the induction hypothesis} \\
				 & = \wPw\left((\omega)\sqcup fl_\lambda(\tilde F)\right) \quad \text{by definition of }\wPw \\
				 & = \wPw\left(fl_\lambda(F)\right) \quad \text{by definition of }fl_\lambda.
  \end{align*}
  In the second case we have
  \begin{align*}
   \wP(F=F_1F_2) & = \wP(F_1)\wP(F_2) \quad \text{by definition of }\wP \\
		 & = \wPw\left(fl_\lambda(F_1)\right)\wPw\left(fl_\lambda(F_2)\right) \quad \text{by the induction hypothesis} \\
		 & = \wPw\left(fl_\lambda(F_1)\shuffle_\lambda fl_\lambda(F_2)\right) \quad \text{since }(iii)\text{ holds by assumption} \\
		 & = \wPw\left(fl_\lambda(F_1F_2)\right) \quad \text{by definition of }fl_\lambda.
  \end{align*}
  This concludes the induction.
 \end{itemize}
 \item $(2)\Rightarrow(4)$: Assuming that $(2)$ holds we have for any rooted forests $F$ and $F'$:
 \begin{align*}
  \wP(F\shuffle_\lambda F') & = \wPw(fl_\lambda(F\shuffle_\lambda F')) \\
  & = \wPw\left(fl_\lambda(F)\shuffle_\lambda fl_\lambda F')\right) \qquad\text{by Lemma \ref{lem:fl_shuffle}} \\
  & = \wPw\left(fl_\lambda(F)\right)\wPw\left(fl_\lambda(F)\right)\qquad\text{by definition of }\wPw \\
  & = \wP(F)\wP(F')
 \end{align*}
 since $(2)$ holds.
 \item $(4)\Rightarrow(1)$: For any $a$ and $b$ in $\Omega$, writing $\wP(T\shuffle_\lambda T')=\wP(T)\wP(T')$ with $T=\tdun{a}$ and $T'=\tdun{b}$ gives
 \begin{equation*}
  \wP(T)\wP(T') = P(a)P(b)
 \end{equation*}
 one the one hand and 
 \begin{equation*}
  \wP(T\shuffle_\lambda T') = \wP(\tddeux{a}{b})+\wP(\tddeux{b}{a})+\lambda\wP(\tdun{a b}\hspace{0.2cm}) = P(aP(b))+P(bP(a))+\lambda P(ab)
 \end{equation*}
 thus $P$ is a Rota-Baxter operator of weight $\lambda$.
\end{itemize}
 Hence we have proven the four implications, thus the theorem.
\end{proof}

\section{Arborified zeta values: series} \label{section:stuffle}

\subsection{Log-polyhomogeneous symbols}

I start by recalling the basic definition of analytical objects that will play an essential role for the definition of arborified zeta values (AZVs) as iterated series. The presentation here follows closely \cite{MP10,Pa12}.
Our goal is to use the universal property of rooted forests (Theorem \ref{thm:univ_prop_tree}) for forests decorated by a suitable analytical space to define AZVs, and deduce their properties directly from their construction. Log-polyhomogeneous symbols are the suitable analytical space.
\begin{defn} \label{def:log_poly_hom}
 \begin{enumerate}
  \item Let $r\in\R$. A smooth function $\sigma:\R_{\geq 1}\to\C$ is a {\bf symbol} (with constant coefficients) of order $r$ on 
  $\R_{\geq 1}$ if
  \begin{equation}\label{eq:symbol}
   \forall k\in\N,~\exists C_k:~\forall x\in\R_{\geq 1}, \ |\partial_x^k\sigma(x)|\leq C_k x^{r-k}.
  \end{equation}
  The set of such symbols of order $r$ on $\R_{\geq 1}$ (with constant coefficients) is written $\mathcal{S}^r$%, which is a real 
%   vector space since $r\leq r'~\Longrightarrow S^r(\R_{\geq1})\subseteq S^{r'}(\R_{\geq1})$. 
  \item Let $\alpha\in\R$. A symbol $\sigma:\R_{\geq 1}\longrightarrow\R$ is a {\bf classical symbol} also called {\bf poly-homogeneous symbol} on 
  $\R_{\geq 1}$ of order $\alpha$ if it  lies in $\mathcal{S}^{\alpha}$ and if there exists a sequence 
  $(a_{\alpha-j})_{j\in\N}$ of real numbers such that for any $N\in \N$
  \begin{equation}\label{eq:sigmaN} 
   \sigma_{(N)}(x):=\sigma(x)-\sum_{j=0}^{N-1} a_{\alpha-j}x^{\alpha-j}
  \end{equation}
  lies in $\mathcal{S}^{\alpha-N}$. When this is the case, I will use the following classical short hand notation 
  \begin{equation*}
   \sigma  \sim\sum \sigma_{\alpha -j}
  \end{equation*}
  where $\sigma_{\alpha -j}$ is the function defined by $\sigma_{\alpha -j}(x)=a_{\alpha-j}x^{\alpha-j}$. 
  I denote by $CS^\alpha$ the set of classical symbols on $\R_{\geq 1}$ of order $\alpha\in\R$ and I set
%   and the set of symbols of order 
%   less of equal than $\alpha$ is a real vector space. Notice that one needs to be careful, as the difference of two symbols 
%   of order $\alpha$ can be of order $\alpha-1$, 
%   and we set
  \begin{equation*}
   CS :=\sum _{\alpha\in\R}CS^{\alpha},\qquad CS^{\leq\beta}:=\sum _{\alpha\leq\beta}CS^{\alpha}
  \end{equation*} 	
  the linear span of classical symbols of all orders resp. of order less or equal to $\beta$). 
  \item Let $k\in\N$ and $\alpha\in\R$. A {\bf log-polyhomogeneous symbol of order $(\alpha,k)$} is a function $f:\R_{\geq1}\longrightarrow\R$ 
  such that, for any $x\in\R_{\geq1}$
  \begin{equation} \label{eq:log_symb}
   f(x) = \sum_{l=0}^kf_l(x)\log^l(x)
  \end{equation}
  with $f_l\in CS^\alpha$. I denote by $\calP^{\alpha;k}$ the real vector space spanned by log-polyhomogeneous symbol of 
  order 
  $(\alpha,k)$ and also define
  \begin{equation*}
   \calP^{\alpha;*} := \bigcup_{k\in\N}\calP^{\alpha;k}; \quad \calP^{*;*} := \bigcup_{k\in\N}\calP^{*;k}.   
  \end{equation*}
  with $\calP^{*;k}$ the linear span over $\R$ of  all $\calP^{\alpha;k}$ for $\alpha\in\R$.
 \end{enumerate}
\end{defn}
\begin{rk}
 In general, linear combination of symbols (resp. classical symbols, log-polyhomogeneous symbols) of same orders can give a symbols (resp. classical symbol, log-polyhomogeneous symbol) of lower \ty{degree}. However, it is clear from the definition that $\calS^r$ is a vector space since $r\leq r'~\Longrightarrow S^r(\R_{\geq1})\subseteq S^{r'}(\R_{\geq1})$. The same is true for $CS^{\leq\alpha}$. 
\end{rk}
I choose to restrict our discussion to the case $\alpha\in\R$. Symbols (resp. classical, log-polyhomogeneous) of complex orders can be introduced in a similar fashion, however we will not require this level of generality for the construction and study of stuffle arborified zeta values. For the same reason I choose to define these objects on $\R_{\geq1}$ rather than on $\R_+$, $\R$ or $\R^d$. I did not indicate that these functions are defined on $\R_{\geq1}$ on the notations $\calS^r$, $CS^\alpha$ and $\calP^{\alpha,k}$ for the sake of readability.
\begin{rk}
 Besides these restrictions, \cy{the above definition of log-polyhomogeneous symbols differs slightly to the one of \cite{MP10}. I require the} functions $f_l$ to be classical symbols rather than symbols. This is because these objects appear more naturally in the subsequent developments.
\end{rk}
For any connected subset $I$ of $\R_{\geq1}$ let $\calP^{\alpha,k}(I)$ be the sets of log-polyhomogeneous symbols over $I$. These objects have the same definition than  $\mathcal{P}^{\alpha;k}(\R_{\geq1})$ with $x\in\R_{\geq1}$ replaced by $x\in I$. Notice that 
for two connected subsets $I$, $J$ of $\R_{\geq1}$, $I\subseteq J\Rightarrow \calP^{\alpha,k}(J)\subseteq\calP^{\alpha,k}(I)$.

The following Lemma is a direct consequence of the definition of log-polyhomogeneous symbols and standard results of real analysis:
\begin{lemma} \label{lem:existence_lim}
 Let $\sigma\in\mathcal{P}^{\alpha;k}(I)$ be a log-polyhomogeneous symbol over $I\subseteq\R_{\geq1}$ with $I$ unbounded. If $\alpha<0$, then $\sigma$ admits a finite
 limit in $+\infty$ for any $k$. 
 If $\alpha=0$ then $\sigma$ admits a finite limit in $+\infty$ if and only if $k=0$.
\end{lemma}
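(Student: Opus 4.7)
The plan is to decompose $\sigma$ according to its log-polyhomogeneous structure $\sigma(x)=\sum_{l=0}^{k}f_l(x)\log^l(x)$ with $f_l\in CS^{\alpha}$, and to deal with each term separately using the asymptotic expansions \eqref{eq:symbol}–\eqref{eq:sigmaN}. The two cases $\alpha<0$ and $\alpha=0$ are treated independently, and the "if and only if" in the second case requires both directions.

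For $\alpha<0$ I would argue directly from \eqref{eq:symbol} applied with $k=0$: each $f_l$ lies in $\mathcal{S}^{\alpha}$, so $|f_l(x)|\leq C_l\,x^{\alpha}$ for $x$ large, and consequently
\begin{equation*}
\bigl|f_l(x)\log^l(x)\bigr|\leq C_l\,x^{\alpha}\log^l(x)\xrightarrow[x\to+\infty]{}0,
\end{equation*}
since any negative power of $x$ eventually beats any power of $\log x$. Summing over $l$ gives $\sigma(x)\to0$, which is in particular finite. The case $\alpha=0,\,k=0$ reduces to $\sigma=f_0\in CS^{0}$; by \eqref{eq:sigmaN} applied with $N=1$, $f_0(x)-a_0\in\mathcal{S}^{-1}$, so $|f_0(x)-a_0|\leq C\,x^{-1}$ and $\sigma(x)\to a_0$, a finite limit.

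It remains to show the "only if" direction when $\alpha=0$: if $\sigma$ admits a finite limit one must have $k=0$. The approach I would take is an induction on $k\geq 1$ showing that the leading coefficient $a_0^{(k)}$ of the asymptotic expansion of $f_k$ must vanish. Concretely, dividing \eqref{eq:log_symb} by $\log^k(x)$ one gets
\begin{equation*}
\frac{\sigma(x)}{\log^k(x)}=f_k(x)+\sum_{l=0}^{k-1}f_l(x)\log^{l-k}(x);
\end{equation*}
since $\sigma$ is bounded, the left-hand side tends to $0$, while the right-hand side tends to $a_0^{(k)}$ because the remaining $f_l\log^{l-k}\to 0$ (using $l<k$ and the boundedness of $f_l$ from $CS^{0}$). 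Hence $a_0^{(k)}=0$, so $f_k\in CS^{\leq -1}$, and then $f_k(x)\log^k(x)=O(x^{-1}\log^k x)\to 0$. Subtracting this term brings $\sigma$ into $\mathcal{P}^{0;k-1}$, still with finite limit, and the induction reduces $k$ to $0$. The main obstacle is precisely this last step: the log-polyhomogeneous decomposition of $\sigma$ is not unique (the classes $\mathcal{P}^{\alpha;k}$ are nested), so one must either work with a canonical representative in which $f_k$ is genuinely of order $0$, or phrase the induction on the iterated asymptotic expansion of $\sigma$ itself, rather than on an a priori choice of the $f_l$.
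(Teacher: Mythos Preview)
The paper does not supply a proof of this lemma; it simply states that the result ``is a direct consequence of the definition of log-polyhomogeneous symbols and standard results of real analysis.'' Your argument is a correct and complete realisation of those standard details.

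Your handling of $\alpha<0$ and of $\alpha=0,\,k=0$ is exactly right, and these are in fact the only cases invoked later in the paper (in the proof of Definition--Proposition~\ref{defnprop:arborified_zeta}). For the ``only if'' direction you have also put your finger on a genuine imprecision in the statement: since $\mathcal{P}^{0;0}\subseteq\mathcal{P}^{0;k}$ by Proposition~\ref{prop:bifiltration}, a symbol $\sigma$ with finite limit can perfectly well belong to $\mathcal{P}^{0;k}$ for $k\geq1$. Your induction is nonetheless correct and proves the intended conclusion, namely that such a $\sigma$ actually lies in $\mathcal{P}^{0;0}$; the ``obstacle'' you describe is an ambiguity in the lemma's phrasing rather than a gap in your reasoning. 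The step from $a_0^{(k)}=0$ to $f_k\in CS^{\leq-1}$ is justified, since the remaining terms of the classical-symbol expansion of $f_k$ then give an asymptotic expansion starting at order $-1$.
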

In the following Sections, the case of classical symbols with integer orders will play an important role. In particular, we will need to integrate them. This justifies the introduction of log-polyhomogeneous symbols since \ty{the latter}, contrarily to classical symbols with integer orders, are stable under integration. 

Before moving on to the the Euler-MacLaurin formula, I should recall some important properties of log-polyhomogeneous symbols. Although, as pointed out before, the definition used here of log-polyhomogeneous symbols differs slightly from the one of \cite{Pa12}, it is simple enough to check that the proofs of \cite{Pa12} still work.
\begin{prop} (\cite[Proposition 2.55]{Pa12}) \label{prop:translation}
 Let $\alpha\in\R$ and $k\in\N$. Then for any real number $a<0$, the pull-back translation map $\mathfrak{t}^*_a$ stabilises $\mathcal{P}^{\alpha;k}$ i.e.
 for any $\sigma\in\mathcal{P}^{\alpha;k}$
 \begin{equation*}
  \mathfrak{t}^*_a\sigma:= \left(x\longmapsto\sigma(x+a)\right)
 \end{equation*}
 is an element of $\mathcal{P}^{\alpha;k}$.
\end{prop}
The last useful property of log-polyhomogeneous symbols is very simple to prove, see for example \cite[Exercice 2.18]{Pa12}. It reads:
\begin{prop} \label{prop:bifiltration}
 The space $\calP^{*;*}$ is a bifiltrated, i.e. for any $\alpha,\beta\in\R$ and $k,l\in\N$ we have
\begin{align*}
  & \alpha \leq \beta~\wedge~\alpha-\beta\in\Z \quad \Longrightarrow \quad \calP^{\alpha;k} \subseteq \calP^{\beta;k};\\
  & k \leq l\quad \Longrightarrow \quad \calP^{\alpha;k} \subseteq \calP^{\alpha;l}; \\
   &\calP^{\alpha;k}.\calP^{\beta;l} \subseteq \calP^{\alpha+\beta;k+l}.
 \end{align*}
\end{prop}

\subsection{Euler-MacLaurin formula}

Following the ideas \ty{developed} in \cite{CGPZ3}, the goal is to define AZVs by iterating a summation operator on log-polyhomogeneous symbols:
\begin{equation*}
 S(\sigma)(N):=\sum_{n=1}^N\sigma(n).
\end{equation*}
Of course, the operator $S$ does not stabilise the spaces of classical symbols nor log-polyhomogeneous symbols. So we use the {\bf Euler-MacLaurin formula} to interpolate $S$ at non-integer values. Recall that the Euler-MacLaurin formula reads
\begin{align}\label{eq:EML_sum}
 S(\sigma)(N) & = \int_1^N\sigma(x)dx + \frac{1}{2}\left(\sigma(N)+\sigma(1)\right) \nonumber\\
              & + \sum_{k=2}^K\frac{B_k}{k!}\,\left(\sigma^{(k-1)}(N)- \sigma^{(k-1)}(1)\right) + \frac{(-1)^{K+1}}{K!}\int_1^N \overline{B_{K}}(x)\,\sigma^{(K)}(x)\, dx
\end{align}
where $\overline{B_k}(x)= B_k\left(x-\lfloor x\rfloor \right)$ with $\lfloor x\rfloor$ the integer part of the real number $x$, and $B_k(x)$ the $k$-th Bernoulli polynomial. Notice that \eqref{eq:EML_sum} holds for any $K\geq2$.

Following once more \cite{MP10} and \cite{CGPZ2} we define an operator $P$ acting on smooth functions by
\begin{align} \label{eq:EML}
 P(\sigma)(x) & = \int_1^x\sigma(t)dt + \frac{1}{2}\left(\sigma(t)+\sigma(1)\right) \nonumber\\
              & + \sum_{k=2}^K\frac{B_k}{k!}\,\left(\sigma^{(k-1)}(t)- \sigma^{(k-1)}(1)\right) + \frac{(-1)^{K+1}}{K!}\int_1^x \overline{B_{K}}(t)\,\sigma^{(K)}(t)\, dt.  
\end{align}
Notice that for any $N\in\N^*$ we have $P(\sigma)(N)=S(\sigma)(N)$: $P$ interpolates the discrete sum $S(\sigma)$.

\begin{lemma} \label{lem:int_log}
 For any $\alpha\in\R\setminus\Z_{\geq-1}$ and $l\in\N$ we have
 \begin{equation*}
  \int:\calP^{\alpha;l}\longrightarrow\calP^{\alpha+1;l} + \calP^{0;0}.
 \end{equation*}
 Furthermore 
 \begin{equation*}
  \int:\calP^{-1;l}\longrightarrow\calP^{0;l+1} + \calP^{0;0}.
 \end{equation*}
 In both cases, the operator $\int$ is defined by $(\int f)(x):=\int_1^x\sigma(t)dt$.
\end{lemma}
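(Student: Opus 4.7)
The plan is to reduce to a single summand of the form $f(x)\log^k(x)$ with $f\in CS^\alpha$ and $k\leq l$, compute the primitive via integration by parts combined with the asymptotic expansion of $f$, and finally control the remainder term.

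By linearity, I may assume $\sigma(x)=f(x)\log^k(x)$ with $f\in CS^\alpha$ and $0\leq k\leq l$. The key monomial computation is that for $\beta\neq -1$ and any $j\in\N$, integration by parts gives
\begin{equation*}
 \int_1^x t^\beta\log^j(t)\,dt = \frac{x^{\beta+1}}{\beta+1}\log^j(x) - \frac{j}{\beta+1}\int_1^x t^\beta\log^{j-1}(t)\,dt,
\end{equation*}
so by induction on $j$ this primitive equals $x^{\beta+1}$ times a polynomial of degree $j$ in $\log(x)$, minus the (constant) value at $t=1$; hence it belongs to $\calP^{\beta+1;j}+\calP^{0;0}$. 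When $\beta=-1$, a direct computation yields $\int_1^x t^{-1}\log^j(t)\,dt = \log^{j+1}(x)/(j+1)\in\calP^{0;j+1}$.

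Next I exploit the asymptotic expansion $f\sim\sum_{i\geq 0}a_i x^{\alpha-i}$: for any $N\in\N$, write $f=f_{<N}+R_N$ with $f_{<N}(x)=\sum_{i=0}^{N-1}a_i x^{\alpha-i}$ and $R_N\in\calS^{\alpha-N}$. When $\alpha\notin\Z_{\geq -1}$, none of the exponents $\alpha-i$ equals $-1$, so applying the monomial computation term by term to $f_{<N}\log^k$ produces an element of $\sum_{i=0}^{N-1}\calP^{\alpha+1-i;k}+\calP^{0;0}$, which is contained in $\calP^{\alpha+1;k}+\calP^{0;0}$ by Proposition~\ref{prop:bifiltration}. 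When $\alpha=-1$ only the leading $i=0$ term generates a $\log^{k+1}$ factor, placing the primitive of $f_{<N}\log^k$ in $\calP^{0;k+1}+\calP^{0;0}$ (the subsequent terms $i\geq 1$ yield exponents strictly below $-1$, and land in $\calP^{0;k}\subseteq \calP^{0;k+1}$ after integration).

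The main obstacle is to control the remainder $\int_1^x R_N(t)\log^k(t)\,dt$ and show that it can be absorbed into the target space. For $N$ large enough that $\alpha-N+1<0$, $R_N\log^k$ is integrable at infinity, so the primitive rewrites as $C_N-\int_x^{+\infty}R_N(t)\log^k(t)\,dt$ for a constant $C_N\in\calP^{0;0}$. Using the symbol estimates defining $\calS^{\alpha-N}$ together with successive integration by parts on the tail integral, one verifies that this tail satisfies the estimates making it an element of $\calP^{\alpha+1-N;k}\subseteq\calP^{\alpha+1;k}$ (again by Proposition~\ref{prop:bifiltration}). Since $N$ is arbitrary, this contribution can always be absorbed into $\calP^{\alpha+1;k}+\calP^{0;0}$ (respectively $\calP^{0;k+1}+\calP^{0;0}$ when $\alpha=-1$), which completes the argument.
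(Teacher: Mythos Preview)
Your proof is correct and follows essentially the same route as the paper: reduce by linearity to a single term $f\log^k$ with $f\in CS^\alpha$, split $f$ into its polyhomogeneous expansion plus a symbol remainder, integrate the monomial pieces by repeated integration by parts, and control the remainder via symbol estimates. The only cosmetic difference is that you rewrite the remainder as a tail integral $C_N-\int_x^{\infty}R_N\log^k$, whereas the paper bounds $\int_1^x R_N\log^k$ directly; your closing remark ``since $N$ is arbitrary'' is exactly what is needed to upgrade the symbol-type remainder bounds (one for each $N$) to a genuine classical-symbol expansion of the coefficients of $\log^j(x)$.
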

\begin{proof}
 For any $\alpha\in\R\setminus\Z_{\geq-1}$ and $l\in\N$ we have for $\sigma$ in $\calP^{\alpha;l}$
 \begin{equation*}
  \int_1^x\sigma(t)dt = \sum_{l=0}^k\left[\sigma_{j=0}^{N_l} a_{\alpha-j}^{(l)}\int_1^x t^{\alpha-j}\log^l (t)dt + \int_1^x\sigma_{(N_l),l}(t)\log^l (t)dt\right]
 \end{equation*}
 where we used \eqref{eq:sigmaN} with obvious notations.
 \begin{itemize}
  \item Using that, if $\alpha\in\R\setminus\Z_{\geq-1}$ then $\alpha-j\in\R\setminus\Z_{\geq-1}$ for any $j\in\N$. Then we prove by induction 
  on $l$ that 
  \begin{equation*}
   x\mapsto\int_1^x t^\alpha\log^l(t)dt\in\calP^{\alpha+1;l}+\calP^{0;0}.
  \end{equation*}
  The case $l=0$ is shown by direct integration.
  
  Assuming that this result holds for some $l\in\N$, we have, by integration by parts
  \begin{equation*}
   \int_1^x t^\alpha\log^{l+1}(t)dt = \frac{x^{\alpha+1}}{\alpha+1}\log^l(x) + \frac{l}{\alpha+1} \int_1^x t^\alpha\log^l(t)dt.
  \end{equation*}
  By the induction hypothesis, we have $\left(x\mapsto\int_1^x t^\alpha\log^l(t)dt\right)\in\calP^{\alpha+1;l}+\calP^{0;0}$; then Proposition 
  \ref{prop:bifiltration} allows to conclude this induction.
  \item To end this proof, it is now enough to show that, for any $\alpha\in\R\setminus\Z_{\geq-1}$ and $l\in\N$, if 
  $\tau\in\cal^\alpha(\R_{\geq1})$ then it exists $\rho\in\calS^{\alpha+1}(\R_{\geq1})$ and $K\in\R$ such that
  \begin{equation*}
   \int_1^x\tau(t)\log^l (t)dt = \rho(x)\log^l (x) + K.
  \end{equation*}
  Clearly, $\int$ maps smooth functions of $\R_{\geq1}$ to smooth functions of $\R_{\geq1}$. Moreover, using the 
  bound \eqref{eq:symbol} we have
  \begin{equation*}
   \left|\int_1^x\tau(t)\log^l (t)dt\right| \leq C\int_1^x t^{\alpha}\log^l (t)dt
  \end{equation*}
  For some $C\in\R$. Then the induction of the previous point allows us to conclude.
 \end{itemize}
 In the case of $\calP^{-1;l}$ the same arguments hold but the integration (resp. integration by parts) will add one logarithm.
\end{proof}

We further recall a classical Lemma of the theory of log-polyhomogeneous symbols
\begin{lemma} \label{lem:deriv_log}
 The differentiation operator $\partial:\sigma\mapsto\sigma'$ sends $\calP^{\alpha;l}$ to $\calP^{\alpha-1;l}$. We also have, for any 
 $\sigma\in\calS^\alpha(\R_{\geq1})$ that $\partial(\sigma\log^l) = \tau\log^l$ for some $\tau\in\calS^{\alpha-1}$.
\end{lemma}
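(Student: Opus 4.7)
The plan is to prove both statements by differentiating directly with Leibniz's rule, after first verifying the (standard) fact that the derivative of a classical symbol of order $\alpha$ is a classical symbol of order $\alpha-1$.

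First, I would establish the auxiliary claim: if $\tau\in CS^\alpha$ with $\tau\sim\sum_{j\geq 0}a_{\alpha-j}x^{\alpha-j}$, then $\tau'\in CS^{\alpha-1}$ with $\tau'\sim\sum_{j\geq 0}a_{\alpha-j}(\alpha-j)x^{\alpha-j-1}$. This amounts to checking the remainder estimate \eqref{eq:sigmaN}: differentiating the identity $\tau-\sum_{j=0}^{N-1}a_{\alpha-j}x^{\alpha-j}=\tau_{(N)}\in\calS^{\alpha-N}$ and using the symbol bounds \eqref{eq:symbol} for $\partial_x^{k+1}\tau_{(N)}$ yields the required $O(x^{\alpha-N-1})$ bounds on all derivatives of $\tau'-\sum_{j=0}^{N-1}a_{\alpha-j}(\alpha-j)x^{\alpha-j-1}$.

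Next, for the first statement, I would write a generic $\sigma\in\calP^{\alpha;l}$ in the form \eqref{eq:log_symb}, $\sigma(x)=\sum_{j=0}^l \sigma_j(x)\log^j(x)$ with $\sigma_j\in CS^\alpha$, and apply Leibniz's rule to obtain
\begin{equation*}
 \sigma'(x)=\sum_{j=0}^{l}\sigma_j'(x)\log^j(x)+\sum_{j=1}^{l}j\,\frac{\sigma_j(x)}{x}\log^{j-1}(x).
\end{equation*}
By the auxiliary claim, $\sigma_j'\in CS^{\alpha-1}$; and since $x^{-1}\in CS^{-1}$, the product $\sigma_j(x)/x$ is the product of classical symbols of orders $\alpha$ and $-1$, hence lies in $CS^{\alpha-1}$. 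Both sums are therefore log-polyhomogeneous symbols of order at most $(\alpha-1;l)$, and the bifiltration statement of Proposition \ref{prop:bifiltration} then gives $\sigma'\in\calP^{\alpha-1;l}$.

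The second statement is essentially a by-product of the same computation: for $\sigma\in\calS^\alpha(\R_{\geq 1})$, Leibniz's rule gives $\partial(\sigma\log^l)=\sigma'\log^l+l\,\sigma\,\log^{l-1}/x$; reading off the coefficient of $\log^l$ we obtain $\tau:=\sigma'\in\calS^{\alpha-1}$ as required (and, as in the first part, the subleading $\log^{l-1}$ term has coefficient $l\sigma/x\in\calS^{\alpha-1}$). The only mild obstacle is the auxiliary claim on $CS^\alpha$; all the other arguments are just Leibniz plus the filtration of $\calP^{*;*}$.
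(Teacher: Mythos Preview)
The paper does not give a proof of this lemma; it simply calls it ``a classical Lemma of the theory of log-polyhomogeneous symbols'' and moves on. Your argument is exactly the standard one and is correct: Leibniz's rule together with the elementary facts $\partial:CS^\alpha\to CS^{\alpha-1}$ and $x^{-1}\in CS^{-1}$, combined with Proposition~\ref{prop:bifiltration}, gives the first statement immediately, and the second is the same computation specialised to a single term $\sigma\log^l$.

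One small remark on the second statement: as literally written, $\partial(\sigma\log^l)=\tau\log^l$ cannot hold as an \emph{equality} for $l\geq1$, since the $\log^{l-1}$ term $l\sigma(x)/x$ does not vanish and $1/\log$ is not a symbol. You handle this correctly by reading the claim as identifying the coefficient of the top logarithmic power $\log^l$, which is indeed $\tau=\sigma'\in\calS^{\alpha-1}$; this is also precisely how the lemma is used downstream in the proof of Proposition~\ref{prop:prop_P}. It would do no harm to say explicitly that the equality is meant modulo lower logarithmic powers.
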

We can now state and prove the main Proposition of this subsection. This result is a refinement of \cite[Proposition 8]{MP10}.
\begin{prop} \label{prop:prop_P}
 For any $\alpha\in\R\setminus\Z_{\geq-1}$ and $k\in\N$ we have for the operator $P$ of Equation \eqref{eq:EML}
 \begin{align*}
  P:\mathcal{P}^{\alpha;k} & \longrightarrow \mathcal{P}^{\alpha+1;k} + \calP^{0;0}; \\
  P:\mathcal{P}^{-1;k} & \longrightarrow \mathcal{P}^{0;k+1} + \calP^{0;0}. \\
 \end{align*}
\end{prop}

 \begin{proof}
  \begin{enumerate}
   \item 
   Let us start with the case $\alpha\notin\Z_{\geq1}$. Applying Lemmas \ref{lem:int_log} and \ref{lem:deriv_log} in the Euler-MacLaurin 
   formula
   \eqref{eq:EML} we see that its is enough to show that it exists $\tau\in\calS^{\alpha+1-N}$ so that, for a big enough $K$
   \begin{equation*}
    \int_1^x \overline{B_{K}}(t)\,\sigma^{(K)}(t)\, dt = \tau(x)\log^k(x).
   \end{equation*}
   Using that $\overline{B_{K}}(t)$ is bounded and twice Lemma \ref{lem:deriv_log} and Lemma \ref{lem:int_log} we obtain that such a 
   $\tau\in\calS^{\alpha-K+1}$ exists. Therefore we obtain the desired bound by taking $K\geq N$.
   \item The case $\alpha=-1$ is exactly similar, using the second part of Lemma \ref{lem:int_log}.
  \end{enumerate}
 \end{proof}
 
 \subsection{Arborified zeta values as series}

\begin{defn}
 Let $\calR:\N^*\longrightarrow\calP^{*;*}$ be the map defined by
 \begin{equation*}
  \calR(\alpha):=\left(x\mapsto x^{-\alpha}\right)
 \end{equation*}
 for any $\alpha\in\N^*$. 
\end{defn}
We lift $\calR$ as detailed in Definition \ref{defn:phi_sharp} to obtain the lifted $\calR$-map $\calR^\sharp:\calF_{\N^*}\longrightarrow\calF_{\calP^{*;*}}$.
\begin{defn}
 For $\lambda\in\{0,-1\}$, we define the operator $\frakS_\lambda$ as $\frakt^*_{\lambda}P$. We further define $\calZ_\lambda:=\widehat{\frakS}_\lambda\circ\calR^\sharp:\calF_{\N^*}\longrightarrow\calP^{*;*}$
\end{defn}
The simple, yet important subsequent Lemma is a consequence of Propositions \ref{prop:prop_P} and \ref{prop:translation}
\begin{lemma} \label{lem:ana_prop_frakS}
  For any $\alpha\in\R\setminus\Z_{\geq-1}$, $k\in\N$ and $\lambda\in\{0,-1\}$ we have
 \begin{align*}
  \frakS_\lambda:\mathcal{P}^{\alpha;k} & \longrightarrow \mathcal{P}^{\alpha+1;k}(\R_{\geq1-\lambda}) + \calP^{0;0}(\R_{\geq1-\lambda}) \\
  \frakS_\lambda:\mathcal{P}^{-1;k} & \longrightarrow \mathcal{P}^{0;k+1}(\R_{\geq1-\lambda}) + \calP^{0;0}(\R_{\geq1-\lambda}). \\
 \end{align*}
\end{lemma}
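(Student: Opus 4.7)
The plan is to decompose $\frakS_\lambda = \frakt^*_\lambda \circ P$ along its definition and to apply Propositions \ref{prop:prop_P} and \ref{prop:translation} in turn to each factor. Since only two values of $\lambda$ are under consideration, I would simply treat them as two separate cases.

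For $\lambda = 0$, the translation $\frakt^*_0$ is the identity and $\R_{\geq 1-\lambda} = \R_{\geq 1}$, so $\frakS_0 = P$ and both claimed inclusions reduce verbatim to the two parts of Proposition \ref{prop:prop_P}. There is nothing more to do in this case.

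For $\lambda = -1$, starting from $\sigma \in \calP^{\alpha;k}$ (resp.\ $\calP^{-1;k}$) on $\R_{\geq 1}$, I would first apply Proposition \ref{prop:prop_P} to obtain $P(\sigma) \in \calP^{\alpha+1;k} + \calP^{0;0}$ (resp.\ $\calP^{0;k+1} + \calP^{0;0}$) as a function on $\R_{\geq 1}$. I would then compose with $\frakt^*_{-1}$: since $-1 < 0$, Proposition \ref{prop:translation} applies and shows that $\frakt^*_{-1}$ stabilises each of the spaces $\calP^{\alpha+1;k}$, $\calP^{0;k+1}$ and $\calP^{0;0}$ appearing on the right-hand side. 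The only point requiring care is the domain bookkeeping: $(\frakt^*_{-1} f)(x) = f(x-1)$ is defined for $x \geq 2 = 1-\lambda$, which is precisely the domain $\R_{\geq 1-\lambda}$ appearing in the codomain of the statement, and the restriction from $\R_{\geq 1}$ to $\R_{\geq 2}$ of a log-polyhomogeneous symbol is again a log-polyhomogeneous symbol of the same order (the remark preceding Lemma \ref{lem:existence_lim}).

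I do not anticipate a genuine obstacle: the lemma is essentially the composition of the two cited propositions, and the restriction of $\lambda$ to $\{0,-1\}$ is exactly what ensures that Proposition \ref{prop:translation} applies (its hypothesis $a<0$ being the only nontrivial case $\lambda = -1$). The mildest subtlety, and the only thing I would take a moment to write out explicitly, is the compatibility of domains between the output of $P$ on $\R_{\geq 1}$ and the translated symbols on $\R_{\geq 1-\lambda}$, so that the sums $\calP^{\alpha+1;k}(\R_{\geq 1-\lambda}) + \calP^{0;0}(\R_{\geq 1-\lambda})$ and $\calP^{0;k+1}(\R_{\geq 1-\lambda}) + \calP^{0;0}(\R_{\geq 1-\lambda})$ are the correct ambient spaces.
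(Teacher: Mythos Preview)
Your proposal is correct and follows exactly the approach indicated in the paper, which simply states that the lemma is a consequence of Propositions \ref{prop:prop_P} and \ref{prop:translation} without spelling out the details. Your case split on $\lambda\in\{0,-1\}$ and the domain bookkeeping for the translated symbols are precisely the points one would fill in if asked to expand that one-line justification.
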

Before stating the main analytic property of $\calZ_\lambda$, let us recall some useful notations for forests decorated by $\N^*$:
for any $\N^*$-decorated forest $(F,d)$, we write
 \begin{enumerate}
  \item $|F|:=|V(F)|$ as usual,
  \item $\sharp_n F$ the number of vertices of $F$ decorated by $n\in\N^*$:
  \begin{equation*}
   \sharp_n F:= |\{v\in V(F):d(v)=n\}|.
  \end{equation*}
 \end{enumerate}
 
 \begin{prop} \label{prop:order_calz}
 For any nonempty $\N^*$-decorated tree $(T,d)$ of root $r$ we have
 \begin{equation*}
  \calZ_\lambda(T) \in \calP^{1-d(r),\sharp_1 T}(\R_{\geq1-\lambda|F|}) + \calP^{0;0}(\R_{\geq1-\lambda|F|}).
 \end{equation*}
 For any nonempty $\N^*$-decorated tree $(F,d)$ with $F=T_1\cdots T_k$, with $T_i$ nonempty and of root $r_i$ for any $i\in\{1,\cdots k\}$,
 we have
 \begin{equation*}
  \calZ_\lambda(F) \in \calP^{1-r_{\rm min}(F),\sharp_1 F}(\R_{\geq1-\lambda|F|}) + \calP^{0;0}(\R_{\geq1-\lambda|F|})
 \end{equation*}
 with $r_{\rm min}(F):=\min_{i=1,\cdots,k}d(r_i)$.
\end{prop}

\begin{proof}
 We prove this result by induction on the number $n=|F|$ of vertices of the forest.
 
 For the case $n=1$ we have, for any $\alpha\in\N^*$
 \begin{equation*}
  \calZ_{\lambda}(\tdun{$\alpha$}) = \frakS_\lambda(t\mapsto t^{-\alpha}).
 \end{equation*}
 Since $(t\mapsto t^{-\alpha})$ lies in $\calP^{-\alpha,0}$, the result follows from Lemma \ref{lem:ana_prop_frakS}.
 
 Let us assume this result to hold for any nonempty forests of $n$ or fewer vertices. Let $F$ be a forest of $n+1$ vertices. We distinguish 
 two cases.
 \begin{itemize}
  \item If $F=F_1F_2$ with $F_1$ and $F_2$ non empty then we can write
  \begin{equation*}
   \calZ_\lambda(F) = \calZ_\lambda(F_1)\calZ_\lambda(F_2).
  \end{equation*}
  Indeed $\calZ_\lambda$ being the composition of two algebra morphisms, it is an algebra morphism. Then, using the induction hypothesis, the filtration properties 
  of Proposition \ref{prop:bifiltration} (which can be applied since $\calZ_\lambda(F_i)$ has integer order) and the remark that $I\subseteq J\Rightarrow \calP^{\alpha,k}(J)\subseteq\calP^{\alpha,k}(I)$, 
  we have 
  \begin{align*}
   \calZ_\lambda(F) & \in \calP^{2-r_{\rm min}(F_1)-r_{\rm min}(F_2),\sharp_1 F}(\R_{\geq1-\lambda|F|}) + \calP^{1-r_{\rm min}(F_1),\sharp_1 F_1}(\R_{\geq1-\lambda|F|})  \\ 
    & + \calP^{1-r_{\rm min}(F_2),\sharp_1 F_2}(\R_{\geq1-\lambda|F|}) + \calP^{0;0}(\R_{\geq1-\lambda|F|})
  \end{align*}
  which, by Proposition \ref{prop:bifiltration} implies the result for $F$ since $1-r_{\rm min}(F_1)\leq0$; $1-r_{\rm min}(F_2)\leq0$ and 
  since each of the orders are integers.
  \item In the case $F=T$ a tree, there exists a nonempty forest $\tilde F$ and a positive natural number $\alpha$ such that 
  $T=B_+^\alpha(\tilde F)$ since $|T|=n+1\geq2$. We then have
  \begin{equation*}
   \calZ(T) = \frakS\left(t\mapsto t^{-\alpha}\calZ(\tilde F)(t)\right).
  \end{equation*}
  Using the induction hypothesis on $\tilde F$ and the bifiltration property we then obtain
  \begin{equation*}
   t\mapsto t^{-\alpha}\calZ(\tilde F)(t) \in \calP^{1-d_{\rm min}-\alpha,\sharp_1\tilde F}(\R_{1-\lambda n}) + \calP^{-\alpha,0}(\R_{1-\lambda n}).
  \end{equation*}
  In the case $\alpha\geq2$, from Lemma \ref{lem:ana_prop_frakS} and the observation that $1-d_{\rm min}\leq0$, we obtain the desired result, once again 
  using Proposition \ref{prop:bifiltration} (since in this case $\sharp_1 T=\sharp_1 \tilde F$), which can be used for the first and the second indices as well.
  
  In the case $\alpha=1$, the same argument holds if one notices that $\sharp_1 T=\sharp_1 \tilde F + 1$.
  
  These two cases allow to conclude the proof of the Theorem.
 \end{itemize}
\end{proof}
We now introduce the set of rooted forests on which arborified zeta values will be defined.
\begin{defn} \label{def:convergent_forests}
 A $\N^*$-decorated tree $(T,d)$ is called {\bf convergent} if it is empty or if it has a root $r$ and $d(r)\geq2$, i.e. if the decoration of its root is strictly greater than one. A $\N^*$-decorated forest 
 $(F=T_1\cdots T_k,d)$ is called {\bf convergent} if $T_i$ is convergent for each $i\in\{1,\cdots,k\}$. Let $\calF_{\N^*}^{\rm conv}$ be the 
 subalgebra of convergent forests.
 \end{defn}
 It is clear that $\calF_{\N^*}^{\rm conv}$ that is a subalgebra of $\calF_{\N^*}$ for the concatenation of rooted forests since by definition $\emptyset\in\calF_{\N^*}^{\rm conv}$ and $\calF_{\N^*}^{\rm conv}$ is stable by concatenation of forests.
 
 We can finally define arborified zeta values. This definition is illustrated by diagram \ref{fig:defn_BZVs} below.
 \begin{defiprop} \label{defnprop:arborified_zeta}
 For any convergent $\N^*$-decorated forest $F$ and $\lambda\in\{0,-1\}$, $\calZ_\lambda(F)(x)$ admits a finite limit as $x$ goes to infinity. We define the maps 
 $\zeta^T_\stuffle,\zeta^{T,\star}_\stuffle:\calF_{\N^*}^{\rm conv}\longrightarrow\R$ by
 \begin{equation*}
  \zeta^T_\stuffle(F):=\lim_{x\to\infty}\calZ_{-1}(F)(x), \qquad \zeta^{T,\star}_\stuffle(F):=\lim_{x\to\infty}\calZ_{0}(F)(x) 
 \end{equation*}
 for any $F\in\calF_{\N^*}^{\rm conv}$.
\end{defiprop}
\begin{proof}
  For any convergent forest $F$ we have $1-d_{\rm min}\leq-1$. Therefore by Lemma \ref{lem:existence_lim} and Proposition
  \ref{prop:order_calz} the limits at $+\infty$ of $\calZ_\lambda(F)$ are well-defined and 
  finite for $\lambda\in\{0,-1\}$ provided that $F$ is convergent.
\end{proof}
Let us notice that the arborified zeta values defined here coincide with the branched zeta values of \cite{CGPZ1} in the convergent case where the renormalisation scheme reduces to a simple evaluation at $0$ of the regularisation 
parameters. We could therefore have defined AZVs through these branched zeta values. We have opted for this chapter to not contain the divergent counterparts of AZVs since these were introduced as a testing ground for the multivariate renormalisation scheme that is the main topic of the next chapter.
% be a self-contained presentation of AZVs. 
%Furthermore, the question of 
%renormalisation brings more involved analytic objects, such as meromorphic families of classical symbols, which are unnecessary in the convergent case.
 \begin{figure}[h!] 
  		\begin{center}
  			\begin{tikzpicture}[->,>=stealth',shorten >=1pt,auto,node distance=3cm,thick]
  			\tikzstyle{arrow}=[->]
  			
  			\node (1) {$\calF_{\N^*}^{\rm conv}$};
  			\node (2) [right of=1] {$\R$};
  			\node (3) [below of=1] {$\calF_{\calP^{*;*}}$};
  			\node (4) [right of=3] {$\calP^{*;*}$};

  			\path
  			(1) edge node [above] {$\zeta^T_\stuffle,~\zeta^{T,\star}_\stuffle$} (2)
  			(1) edge node [left] {$\calR^\sharp$} (3)
  			(3) edge node [below] {$\widehat{\frakS}_\lambda$} (4)
  			(1) edge node [left] {$\calZ_\lambda$} (4)
  			(4) edge node [right] {${\rm ev}_{\infty}$} (2);
  			
  			\end{tikzpicture}
  			\caption{Definition of arborified zeta values. }\label{fig:defn_BZVs}
  		\end{center}
  	\end{figure}
  	Using the fact that $\calZ_\lambda$ and ${\rm ev}_{\infty}$ are both algebra morphisms, we obtain the simple, yet important subsequent Proposition
  	\begin{prop} \label{prop:stuffle_alg_mor}
  	 The maps $\zeta^T_\stuffle$ and $\zeta^{T,\star}_\stuffle$ are algebra morphisms for the concatenation product of trees.
  	\end{prop}
\begin{rk}
Let us notice that the techniques used above can also be used to define branched Euler sums which correspond (in the words case) to convergent MZVs with some negative arguments. These objects are an active area of 
 research, see for example \cite{WX}.
\end{rk}

\subsection{Applying the main theorem}

The construction of the previous subsection can be adapted to build  multiple zeta values instead of arborified zetas simply by replacing 
$\calR^\sharp$ by $\calR^\sharp_\calW$ and $\widehat\frakS_\lambda$ by $\widehat\frakS_{\lambda,\calW}$. This gives an alternative construction of MZVs which will allow us to derive algebraic proofs of classical results.
\begin{defn}
 For $\lambda\in\{0,-1\}$, let $\calZ_{\lambda,\calW}:\calW_{\N^*}\longrightarrow\calP^{*;*}$ be the operator defined by 
 $\calZ_{\lambda,\calW}:=\calR_\calW^\sharp\circ\widehat\frakS_\calW$.
\end{defn}
We have an important analytic result which is proven in exactly the same fashion as Proposition \ref{prop:order_calz}, with only the case $w=C_+^\omega(\tilde w)$ to be 
 considered.
\begin{prop} \label{prop:calz_words_order}
 For any nonempty word $w=(\omega_1\cdots\omega_n)$, we have
 \begin{equation*}
  \calZ_{\lambda,\calW}(w) \in \calP^{1-\omega_1,\sharp_1 w}(\R_{\geq1-\lambda|w|}) + \calP^{0;0}(\R_{\geq1-\lambda|w|}).
 \end{equation*}
\end{prop}
Recall (Definition \ref{def:conv_words}) that a word $w$ written in the alphabet $\N^*$ is called convergent if it is empty or if its first letter on the left is 
 greater or equal to $2$ and that $\calW_{\N^*}^{\rm conv}$ is the 
 linear span of convergent words.

% We can now give the counterpart for words of Definition \ref{def:convergent_forests}.
% \begin{defn}
%  A word $w$ written in the alphabet $\N^*$ is called {\bf convergent} if it is empty or if its first letter on the left is 
%  greater or equal to $2$.  Let $\calW_{\N^*}^{\rm conv}$ be the 
%  linear span of convergent words.
% \end{defn}
As the terminology suggests, convergent words are the convergence domain of MZVs, for which we now give an alternative definition.
 \begin{defiprop} \label{defnprop:zeta_stuffle}
 For any convergent word written in the alphabet $\N^*$ and $\lambda\in\{0,-1\}$, $\calZ_{\lambda,\calW}(w)(x)$ admits a finite limit as $x$ 
 goes to infinity. We define the maps 
 $\zeta_\stuffle,\zeta^{\star}_\stuffle:\calW_{\N^*}^{\rm conv}\longrightarrow\R$ by
 \begin{equation*}
  \zeta_\stuffle(w):=\lim_{x\to\infty}\calZ_{-1,\calW}(w)(x), \qquad \zeta^{\star}_\stuffle(w):=\lim_{x\to\infty}\calZ_{0,\calW}(w)(x) 
 \end{equation*}
 for any $w\in\calW_{\N^*}^{\rm conv}$.
\end{defiprop}
\begin{proof}
 The proof of the statement is similar to the proof of Definition-Proposition \ref{defnprop:arborified_zeta}.
\end{proof}
We can \ty{now} start to deduce the algebraic proprties of MZVs from this definition. Before this, we need a very classical result.
\begin{lemma} \label{lem:subalgebra_shuffle_lambda}
 For any $\lambda\in\R$, $(\calW_{\N^*}^{\rm conv},\emptyset,\shuffle_\lambda)$ is a subalgebra of $(\calW_{\N^*},\emptyset,\shuffle_\lambda)$ .
\end{lemma}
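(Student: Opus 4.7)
My plan is to prove this by direct inspection of the recursive definition of $\shuffle_\lambda$ given in Definition \ref{def:stuffle}, without needing an elaborate induction. First I would recall that $\calW_{\N^*}^{\rm conv}$ is closed under linear combinations (it is a linear subspace by definition) and contains $\emptyset$, so it suffices to show that the $\lambda$-shuffle of two convergent words lies in $\calW_{\N^*}^{\rm conv}$.

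If one of $w$, $w'$ equals $\emptyset$, then the unit axiom $\emptyset \shuffle_\lambda w = w$ immediately gives that $w \shuffle_\lambda w'$ is convergent. Otherwise, write $w = (\omega_1)\sqcup\tilde w$ and $w' = (\omega'_1)\sqcup\tilde w'$ with $\omega_1\geq 2$ and $\omega'_1\geq 2$ (this is precisely the convergence condition for words in $\calW_{\N^*}$, by Definition \ref{def:conv_words}). Applying the recursive definition with $\bullet = +$, I obtain
\begin{equation*}
 w\shuffle_\lambda w' \;=\; (\omega_1)\sqcup\bigl[\tilde w\shuffle_\lambda w'\bigr] \;+\; (\omega'_1)\sqcup\bigl[w\shuffle_\lambda\tilde w'\bigr] \;+\; \lambda\,(\omega_1+\omega'_1)\sqcup\bigl[\tilde w\shuffle_\lambda\tilde w'\bigr].
\end{equation*}
Every word appearing in the support of the right-hand side has its first letter equal to either $\omega_1$, or $\omega'_1$, or $\omega_1+\omega'_1$. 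Since $\omega_1,\omega'_1\geq 2$, each of these possibilities is $\geq 2$, so every such word is convergent. Hence $w\shuffle_\lambda w' \in \calW_{\N^*}^{\rm conv}$.

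The key observation, which makes an induction unnecessary, is that convergence of a word depends only on its first letter, and the recursive step of $\shuffle_\lambda$ produces only words whose first letter is one of the three quantities listed above, each of which preserves the constraint $\geq 2$. There is essentially no obstacle here; the only thing to be careful about is handling the cases $\lambda = 0$ (where the third term disappears and one does not need $(\N^*,+)$ as a semigroup for the product to be defined) and the empty-word boundary case uniformly, both of which are immediate.
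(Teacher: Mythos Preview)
Your argument is correct and follows the same approach as the paper, which also appeals to the recursive Definition \ref{def:stuffle}; the paper phrases it as an induction on $|w|+|w'|$, but your observation that one unfolding already fixes the first letter of every word in the support makes the induction unnecessary and is a clean sharpening of the paper's sketch.
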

\begin{proof}
 By definition $\emptyset\in\calW_{\N^*}^{\rm conv}$. The rest of the proof is easily carried out by induction on $|w|+|w'|$ for two convergent words 
 $w$ and $w'$, using Definition \ref{def:stuffle} of the product $\shuffle_\lambda$.
\end{proof}
\begin{prop} \label{prop:zeta_stuffle_mor}
 The map $\zeta_\stuffle$ is a algebra morphism for the stuffle product $\stuffle=\shuffle_1$. Furthermore the map $\zeta^\star_\stuffle$ is 
 an algebra morphism for the anti-stuffle product $\shuffle_{-1}$.
\end{prop}
\begin{proof}
 By Examples \ref{ex:RBmap}, \ref{ex:RB_sum} (resp. \ref{ex:RBmap}, \ref{ex:RB_sum_star}) we know that $\frakS_{-1}$ (resp. $\frakS_0$), when restricted 
 to $\N^*$, is a Rota-Baxter map of weight $+1$ (resp $-1$). One can then apply Theorem \ref{thm:flattening} \ref{thm:iii}  to get that 
 $N\mapsto\widehat\frakS_{-1,\calW}(N)$ (resp. $N\mapsto\widehat\frakS_{0,\calW}(N)$), is an algebra morphism for the stuffle 
 (resp. anti-stuffle) product. Taking the limit $N\to\infty$ gives to the result.
\end{proof}
We can now look at properties of AZVs and in particular relate them to MZVs but \ty{before} we will need a simple Lemma.
\begin{lemma} \label{lem:flattening_conv}
 For any $\lambda\in\R$, the flatting map $fl_\lambda$ maps convergent forests to convergent words. Furthermore, $\calF_{\N^*}^{\rm conv}$ is a subalgebra for the $\lambda$-shuffle of rooted forests $\shuffle^T_\lambda$.
\end{lemma}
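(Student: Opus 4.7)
The statement splits into two claims: (a) $fl_\lambda$ sends convergent forests to convergent words, and (b) $\calF_{\N^*}^{\rm conv}$ is stable under $\shuffle_\lambda^T$. Both proofs proceed by induction and rely on the key numerical fact that if $a,a'\in\N^*$ with $a,a'\geq 2$, then $a\bullet a' = a+a'\geq 4\geq 2$, so the semi-group product on $\N^*$ preserves the condition ``$\geq 2$''.

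For claim (a), I would first treat a single convergent tree. By definition, a convergent nonempty tree is of the form $T=B_+^a(f)$ with $a\geq 2$, so by the defining relation of the flattening map
\[
fl_\lambda(T) = (a)\sqcup fl_\lambda(f),
\]
which is a word whose first letter is $a\geq 2$, hence convergent. For a general convergent forest $F = T_1\cdots T_k$ with each $T_i$ convergent, the morphism property gives
\[
fl_\lambda(F) = fl_\lambda(T_1)\shuffle_\lambda\cdots\shuffle_\lambda fl_\lambda(T_k),
\]
and one then concludes using Lemma \ref{lem:subalgebra_shuffle_lambda}, which says that the $\shuffle_\lambda$ of convergent words is a sum of convergent words.

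For claim (b), I would argue by induction on $|F|+|F'|$ for two convergent forests $F,F'\in\calF_{\N^*}^{\rm conv}$. The base case $F=F'=\emptyset$ is immediate. For the inductive step I distinguish the three cases of Definition \ref{defn:shuffle_tree}. If one of $F,F'$ is empty the result is trivial. If both are trees, write $F=T=B_+^a(f)$ and $F'=T'=B_+^{a'}(f')$ with $a,a'\geq 2$; the recursion
\[
T\shuffle_\lambda T' \;=\; B_+^a(f\shuffle_\lambda T') + B_+^{a'}(T\shuffle_\lambda f') + \lambda B_+^{a\bullet a'}(f\shuffle_\lambda f')
\]
produces only trees whose roots are decorated by $a$, $a'$ or $a+a'$, all of which are $\geq 2$, so each summand is a convergent tree. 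Finally, if $F=T_1\cdots T_k$, $F'=t_1\cdots t_n$ with $k+n\geq 3$, the recursion yields
\[
F\shuffle_\lambda F' \;=\; \frac{1}{kn}\sum_{i,j}\bigl(T_i\shuffle_\lambda t_j\bigr)\,T_1\cdots\widehat{T_i}\cdots T_k\, t_1\cdots\widehat{t_j}\cdots t_n ;
\]
each $T_i\shuffle_\lambda t_j$ is a sum of convergent trees by the tree case just handled, and the remaining concatenation is a product of convergent trees $T_{i'}, t_{j'}$, so each summand is a convergent forest.

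The real content of the argument is the observation that $2+2\geq 2$; once this is noted, both claims reduce to bookkeeping on the recursive definitions. The only point requiring a small amount of vigilance is making sure that in claim (b) the tree case is dispatched before (or at the same level of induction as) the forest case, so that the $T_i\shuffle_\lambda t_j$ appearing in the concatenation case can be handled by the inductive hypothesis.
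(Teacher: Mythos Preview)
Your proof is correct and follows essentially the same route as the paper's: induction on vertex count using the defining recursion of $fl_\lambda$ together with Lemma~\ref{lem:subalgebra_shuffle_lambda} for claim (a), and an inspection of the three cases of Definition~\ref{defn:shuffle_tree} for claim (b). If anything, you are more explicit than the paper, which dispatches claim (b) with ``it is clear from the definition''; your observation that the root decorations $a,a',a\bullet a'$ are all $\geq 2$ is exactly the point, and your remark that the tree case must be available when handling the concatenation case is the one piece of bookkeeping worth flagging.
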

\begin{proof}
 One can perform an easy proof by induction on the number of vertices of the convergent forest. For the empty forest, we have 
 $fl_\lambda(\emptyset)\in\calW_{\N^*}^{\rm conv}$. For a nonempty convergent tree $T=B_+^{n\geq2}(F)$ for some forest $F$. Then 
 $fl_\lambda(T)=C_+^{n\geq2}(fl_\lambda(F))\in\calW_{\N^*}^{\rm conv}$. Finally for the case $F=F_1F_2$, we have that $F_1$ and $F_2$ are 
 convergent if $F$ is and therefore $fl_\lambda(F) = fl_\lambda(F_1)\shuffle_\lambda fl_\lambda(F_2)\in\calW_{\N^*}^{\rm conv}$ by the 
 induction hypothesis and Lemma \ref{lem:subalgebra_shuffle_lambda}.
 
 Similarly, if $T$ and $T'$ (reps. $F$ and $F'$) are convergent trees (resp. forest), it is clear from the definition of the shuffle of rooted forests that $T\shuffle^T_\lambda T'$ (resp. $F\shuffle^T_\lambda F'$) is a linear combination of convergent trees (resp. forests).
\end{proof}
We can now state our main result.
\begin{theo} \label{thm:main_result_stuffle}
 For any convergent forest $F$, the convergent  arborified zeta value $\zeta^T_\stuffle(F)$ (resp. $\zeta^{T,\star}_\stuffle(F)$) is a finite 
 linear combination of convergent  {multiple zeta} values $\zeta_\stuffle(w)$ (resp. $\zeta^{\star}_\stuffle(w)$) with rational coefficients and can be written as a finite 
 linear combination of  {multiple zeta} values with integer coefficients given by the $1$-flattening (resp. the $-1$-flattening):
 \begin{equation*}
  \zeta_\stuffle^T=\zeta_\stuffle\circ fl_1\qquad\text{(resp. }\zeta_\stuffle^{T,\star}=\zeta^\star_\stuffle\circ fl_{-1}\text{)}.
 \end{equation*}

 Futhermore, the map $\zeta^T_\stuffle$ (resp. $\zeta^{T,\star}_\stuffle$) is an algebra morphism for the stuffle product of rooted forest $\stuffle^T:=\shuffle^T_1$ (resp. for the product $\shuffle^T_{-1}$).
\end{theo}
\begin{proof}
 Let $F$ be any convergent forest.
 
 Once again, by Examples \ref{ex:RBmap}, \ref{ex:RB_sum} (resp. \ref{ex:RBmap}, \ref{ex:RB_sum_star}) we know that $\frakS_{-1}$ (resp. 
 $\frakS_0$), when restricted to $\N^*$, is a Rota-Baxter map of weight $+1$ (resp $-1$). Applying Theorem \ref{thm:flattening} \ref{thm:ii} we
 get that $N\mapsto\widehat\frakS_{-1}(F)(N)$ (resp. $N\mapsto\widehat\frakS_{0}(F)(N)$), when restricted to $\N^*$, factorises through words, i.e. 
 that $\widehat\frakS_\lambda(F)(N) = \widehat\frakS_{\lambda,\calW}\circ fl_\lambda(F)(N)$ for any $N$ in $\Z_{\geq|F|+1}$ and 
 $\lambda\in\{0,-1\}$.
 
 Then by Lemma \ref{lem:flattening_conv} we can take the limit $N\to\infty$ in both sides which gives the result. 
 We obtain a finite sum with integer coefficients thanks to Proposition \ref{prop:finite_Q_sum_flattening}.
 
 Finally, the fact that $\zeta_\stuffle$ (resp. $\zeta^{\star}_\stuffle$) is an algebra morphism for the stuffle product of rooted forest $\stuffle^T$ (resp. for the product $\shuffle^T_{-1}$) follows from the fact that it is a composition of algebra morphisms for $\stuffle^T$ (resp. $\shuffle^T_{-1}$). First, once again, by Examples \ref{ex:RBmap} and \ref{ex:RB_sum} we can use Theorem \ref{thm:flattening} \ref{thm:iv} on $\widehat\frakS_\lambda$. Furthermore, $\calR^\sharp$ is an algebra morphism of the shuffle products of rooted forests as well since for any algebras morphism $P:\Omega_1\longrightarrow\Omega_2$ between two commutative algebras, the lifted map 
 $P^\sharp:\calF_{\Omega_1}\longrightarrow\calF_{\Omega_2}$ is an algebra morphism for the $\lambda$-shuffles of trees, for any value of 
 $\lambda\in\R$. This clearly holds from the definition of $P^\sharp$ and the fact that $P$ is an algebra morphism. However, it is can also be easily proven, once again by induction on the number of vertices of forests, using the fact that 
 $P^\sharp$ is a morphism of operated algebras.
 
 Notice that the result makes sense by the stability property of Lemma \ref{lem:flattening_conv}. In any case, 
 composing these algebra morphisms we obtain another algebra morphism, which is no other than $\zeta^T_\stuffle$ (resp. $\zeta^{T,\star}_\stuffle$).
\end{proof}
The results on AZVs obtained so far, the definitions and relations between branched zetas and  multiple zetas are summarized as the 
commutativity of diagram \ref{fig:MZV_AZV} below. Notice that the left-most arrow should be $fl_1$ for the un-starred AZVs and MZVs, and $fl_{-1}$ for the starred AZVs and MZVs.

 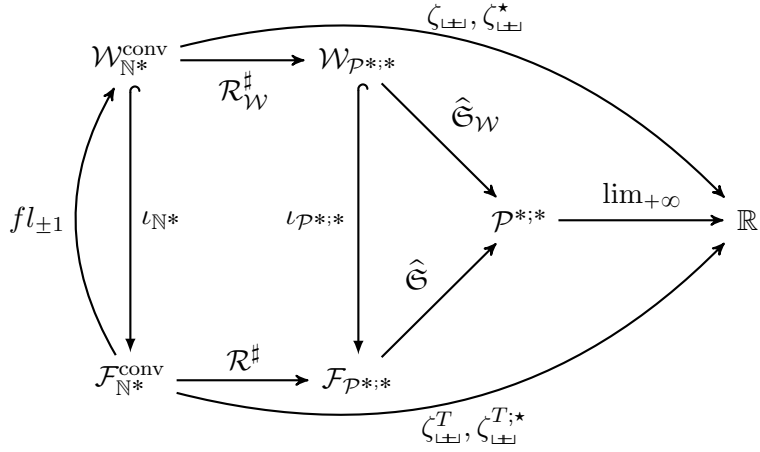
\begin{figure}[h!] 
  		\begin{center}
  			\begin{tikzpicture}[->,>=stealth',shorten >=1pt,auto,node distance=3cm,thick]
  			\tikzstyle{arrow}=[->]
  			
  			\node (1) {$\calW_{\N^*}^{\rm conv}$};
  			\node (2) [right of=1] {$\calW_{\calP^{*;*}}$};
  			\node (4) [below right of=2] {$\calP^{*;*}$};
  			\node (5) [below left of=4] {$\calF_{\calP^{*;*}}$};
  			\node (7) [left of=5] {$\calF_{\N^*}^{\rm conv}$};
  			\node (8) [right of=4] {$\R$};

  			\path
  			(1) edge node [below] {$\calR^\sharp_\calW$} (2)
  			(2) edge node [above right] {$\widehat\frakS_\calW$} (4)
  			(4) edge node [above] {$\lim_{+\infty}$} (8)
  			(7) edge node [above] {$\calR^\sharp$} (5)
  			(5) edge node [above left] {$\widehat\frakS$} (4);

  			\draw [right hook-latex] (1) -- node[right] {$\iota_{\N^*}$} (7);
  			\draw [right hook-latex] (2) -- node[left] {$\iota_{\calP^{*;*}}$} (5);
  			
  			\draw (1) to[bend left] node[above]{$\zeta_\stuffle,\zeta_\stuffle^\star$} (8);
  			\draw (7) to[bend right] node[below]{$\zeta^T_\stuffle,\zeta_\stuffle^{T;\star}$} (8);
  			\draw (7) to[bend left] node[left]{$fl_{\pm1}$} (1);
  			\end{tikzpicture}
  			\caption{ {Multiple zetas} and arborified zetas.}\label{fig:MZV_AZV}
  		\end{center}
  	\end{figure}
  	
% We will later apply the fouth point of Theorem \ref{thm:flattening} for the maps $\zeta^T_\stuffle$ and $\zeta^{T,\star}_\stuffle$. However, we will also apply it for the shuffle arborified zeta values. So in order to not repeat ourselves too much, let us now move on to these shuffle AZVs and come back after to the shuffle products of AZVs.

\section{Shuffle arborified zeta values} \label{section:shuffle}

\subsection{Chen integrals and arborification}

In \cite{Ch77} iterated integrals are recursively defined. One way to define them is as a map $Ch:\calW_X\longrightarrow \calI(I)$; where $I=[a,b]$ is a
closed interval, $\calI(I)$ is the set of continuous, integrable functions over $I$ and $X=\{f_1,\cdots,f_N\}$ is a finite subset of $\calI(I)$.
In \cite{Ch77} this recursive definition goes as follows:
\begin{defiprop} \label{defnprop:Chen_int}
 Let $I=[a,b]$ be a closed interval in $\R$ and $X=\{f_1,\cdots,f_N\}$ be a finite set with $f_i:I\longrightarrow\R$ smooth, continuous functions over 
 $I$. $Ch:\calW_X\longrightarrow \calI(I)$ is the linear map, whose action on the basis elements of $\calW_X$ is recursively defined by
 \begin{align*}
  Ch(\emptyset) & := \left(x\mapsto1~\forall x\in I\right)=:{\bf 1}, \\
  Ch((f_i)\sqcup w) & := \left(x\mapsto\int_a^x f_i(t)Ch(w)(t)dt~\forall x\in I\right)
 \end{align*} 
 for any $f_i$ in $X$ and $w$ in $\calW_X$.
\end{defiprop}
\begin{proof}
 In order to prove that this definition is consistent, one has to prove that $Ch(w)$ lies in $\calI(I)$ for any word $w$ in $\calW_X$. 
 We show by induction that it exists $M\in\R$ such that for any $x$ in $I$ and $w$ in $\calW_X$, one has
 \begin{equation*}
  |Ch(w)(x)| \leq \left(M|b-a|\right)^{|w|}.
 \end{equation*}
 This can easily be done by induction on the length of the word $w$. For a word of length $0$, it is trivially true. 
 Now, since each $f_i$ in $X$ is continuous over the closed interval $I$, $f_i$ is bounded by some constant $M_i$. Let 
 $M:=\max_{i=1,\cdots,n}M_i$. If $|Ch(w)(x)| \leq \left(M|b-a|\right)^{|w|}$ holds for every word $w$ of length $n\geq1$, then one has for any 
 $f_i$ in $X$
 \begin{equation*}
  |Ch((f_i)\sqcup w)(x)| \leq \int_a^x|f_i(t)Ch(w)(t)|dt \leq \left(M|b-a|\right)^{|w|+1}.
 \end{equation*}
 This shows that $|Ch(w)(x)| \leq \left(M|b-a|\right)^{|w|}$ for any $x$ in $I$ and $w$ in $\calW_X$. Continuity of $Ch(w)$ follows from the 
 standard theorems of integration theory.
\end{proof}
We denote by $Ch_X(I)$ the algebra (for the  {pointwise} product) freely generated by the image of $Ch$. It admits ${\bf 1}=Ch(\emptyset)$ as a unit.

This approach shows that Chen integrals are the elements of the image of a map $\widehat{\calI_a^b}_\calW$ going from $\calW_X$ to a subset of 
$\calI(I)$, where $\calI_a^b$ is the integration map defined by $\calI_a^b(f)(x):=\int_a^x f(t)dt$ for $x$ in $[a,b]$. Indeed, standard results of integration theory 
states that, since the interval $I$ is closed, the image of $\calI(I)$ under the map $\calI_a^b$ lies in $\calI(I)$. This suggest a natural 
generalisation of Chen integrals to arborified Chen integrals. Notice that this type of generalisation exists in the related context of rough paths, where one can build branched rough paths, see for example \cite{HK15}.
\begin{defn} \label{defn:arborified_chen_int}
 Let $I=[a,b]$ and $\calI(I)$ be as above. Let $\calI_a^b:\calI(I)\longrightarrow\calI(I)$ be 
 the integration map defined as above by $\calI_a^b(f)(x):=\int_a^x f(t)dt$. Let $X=\{f_1,\cdots,f_N\}$ is a finite subset of $\calI(I)$. 
 {\bf arborified Chen integrals} are elements of the image of $\widehat{\calI_a^b}:\calF_X\mapsto\calI(I)$.
\end{defn}
\begin{rk}
 One could recursively prove that $\widehat{\calI_a^b}$ is well-defined as in the case of words, that is that the image of $\widehat{\calI_a^b}$ 
 is indeed in $\calI(I)$. However by Definition \ref{defn:phi_hat} we already have this result since $\calI_a^b(\calI(I))\subseteq\calI(I)$.
\end{rk}
\begin{prop} \label{prop:chen_int_arbo_words}
 Any arborified Chen  {integral} is a finite sum of Chen iterated integrals with rational coefficients.
\end{prop} 
\begin{proof}
 This follows from the observation above that Chen iterated integrals are elements of the image of $\widehat{\calI_a^b}_\calW$. Then by Example 
 \ref{ex:RBmap} \ref{ex:RB_integration} we can apply Theorem \ref{thm:flattening} to $\widehat{\calI_a^b}$ which gives 
 $\widehat{\calI_a^b}=\widehat{\calI_a^b}_\calW\circ fl_0$. For any finite forest $F$, $fl_0(F)$ is a finite sum of words with rational coefficients by 
 Proposition \ref{prop:finite_Q_sum_flattening}, thus 
 we obtain the result.
\end{proof}

\subsection{Arborified polylogarithms} 

We specialise the construction of the previous subsection to the case $X=\{\sigma_x,\sigma_y\}$ with $\sigma_x(t):=1/t$ and 
$\sigma_y(t):=1/(1-t)$. Furthermore the above construction is carried out on $I=[\epsilon,z]$ with $0<\epsilon<z<1$.

However arborified polylogarithms (resp.  arborified shuffle zeta values) should be defined (up to convergence issues) as a map 
$Li^T:\calF_{\{x,y\}}\longrightarrow\Omega$ (resp. $\zeta^T_\shuffle:\calF_{\{x,y\}}\longrightarrow\R$) for  {a suitable space $\Omega$ of functions}. In order 
to build such a map, we follow the same strategy as for stuffle zeta values. Let $\calR_{\{x,y\}}:\{x,y\}\longrightarrow\{\sigma_x,\sigma_y\}$ defined 
by $\calR_{\{x,y\}}(\varepsilon)=\sigma_\varepsilon$ for $\varepsilon$ in $\{x,y\}$. 
\begin{defn}
 A forest $F$ in $\calF_{\{x,y\}}$ is called {\bf \ty{semi-convergent}} if each of its leaves and branching vertices are decorated by $y$ and {\bf convergent} if it is 
 \ty{semi-convergent} and each of its roots are decorated by $x$. The linear span of \ty{semi-convergent} forests is denoted by $\calF^{\rm semi}_{\{x,y\}}$ 
 and the linear span of convergent forests is denoted by $\calF^{\rm conv}_{\{x,y\}}$.
 
 Similarly, a word $w$ in $\calW_{\{x,y\}}$ is called {\bf \ty{semi-convergent}} if it is empty or if it ends by $y$ and {\bf convergent} if it is empty or if it ends by $y$ and starts by $x$. 
 We write $\calW^{\rm semi}_{\{x,y\}}$ and $\calW^{\rm conv}_{\{x,y\}}$ the linear span of \ty{semi-convergent} and convergent words respectively.
\end{defn}
This various notions are stable under natural product.
\begin{lemma}
 $\calF^{\rm semi}_{\{x,y\}}$ and $\calF^{\rm conv}_{\{x,y\}}$ are  {subalgebras} of $\calF_{\{x,y\}}$ for the concatenation of forests; $\calW^{\rm semi}_{\{x,y\}}$ and $\calW^{\rm conv}_{\{x,y\}}$
  are subalgebras of $\calW_{\{x,y\}}$ for the shuffle product $\shuffle$.
\end{lemma}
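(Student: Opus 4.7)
The plan is to handle the four claims separately, with the two forest claims being essentially immediate and the two word claims requiring the key combinatorial observation that a shuffle of words preserves the positions of first and last letters.

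For the forest claims, I would start with $\calF^{\rm semi}_{\{x,y\}}$. The empty forest lies in $\calF^{\rm semi}_{\{x,y\}}$ by convention. If $F_1,F_2$ are \ty{semi-convergent} forests, then by Definition \ref{def:graph_stuff} the concatenation $F_1 F_2$ has vertex set $V(F_1)\sqcup V(F_2)$ and edge set $E(F_1)\sqcup E(F_2)$. Hence a vertex $v$ of $F_1 F_2$ is a leaf (resp. branching vertex) of $F_1F_2$ if and only if it is a leaf (resp. branching vertex) of the $F_i$ it belongs to, and in either case its decoration is $y$. So $F_1F_2\in\calF^{\rm semi}_{\{x,y\}}$. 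The same argument works for $\calF^{\rm conv}_{\{x,y\}}$: the roots of $F_1F_2$ are exactly the (disjoint) union of the roots of $F_1$ and $F_2$, so they are all decorated by $x$, and \ty{semi-convergence} has just been handled.

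For the word claims, the key observation is the following fact about the shuffle product, which I would establish first: for any nonempty words $w=(\omega_1\cdots\omega_k)$ and $w'=(\omega_1'\cdots\omega_n')$, every word appearing (with nonzero coefficient) in $w\shuffle w'$ starts with $\omega_1$ or $\omega_1'$ and ends with $\omega_k$ or $\omega_n'$. This is immediate by induction on $k+n$ using the recursive Definition \ref{def:shuffle}: the recursive step writes the shuffle as $(\omega_1)\sqcup[\cdots]+(\omega_1')\sqcup[\cdots]$, which makes the claim on initial letters transparent, while the claim on final letters follows from the same recursion applied symmetrically (or equivalently, from the fact that $\shuffle$ is commutative, so one can recurse from the right as well).

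From this observation, both word stability results fall out immediately. If $w,w'\in\calW^{\rm semi}_{\{x,y\}}$, either one is empty (in which case the shuffle is the other word, still \ty{semi-convergent}) or both end in $y$, so every word in the shuffle ends in $y$ by the observation; hence $w\shuffle w'\in\calW^{\rm semi}_{\{x,y\}}$. For $\calW^{\rm conv}_{\{x,y\}}$, the same argument applies simultaneously to first and last letters. I do not expect any serious obstacle: the only subtlety is to phrase the end-letter claim cleanly, which is why I would either invoke commutativity of $\shuffle$ or, slightly more directly, prove the observation by a joint induction covering both ends at once.
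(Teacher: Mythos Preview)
Your proposal is correct and follows essentially the same approach as the paper: the forest claims are dismissed as trivial, and the word claims are reduced to the observation that every word appearing in $w\shuffle w'$ has as its first (resp.\ last) letter the first (resp.\ last) letter of $w$ or of $w'$. Your write-up is simply more detailed, spelling out the induction that the paper leaves implicit.
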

\begin{proof}
 As before, the result trivially holds for forests. For words the result follows from the fact that for any two words $w$ and $w'$, then we can write $w\shuffle w'=\sum_i w_i$ and the first (resp. last) letter 
 of each $w_i$ is the first (resp. last) letter of $w$ or $w'$.
\end{proof}
In order to state an important result of this Section, let us recall the definition of multiple polylogarithms.
\begin{defn} \label{defn:multiple_polylogs}
 Let $w\in\calW_{\{x,y\}}^{\rm semi}$ be a word either empty or whose last letter is $y$. The {\bf single variable multiple polylogarithm} (shortened in multiple polylogarithm in what follows) attached to $w$ is defined by
 \begin{align*}
   Li_w(z):=\begin{cases}
             & 0 \quad {\rm for } \quad z=0, \\
             & \lim_{\epsilon\to0}\widehat{\calI_\epsilon^z}_\calW\left(\calR_{\{x,y\}}^{\sharp,\calW}(w)\right) \quad {\rm for } \quad z\in]0,1[.
            \end{cases}
% \lim_{\epsilon\to0}\widehat{\cali_\epsilon^z}_\calw\left(\calr_{\{x,y\}}^{\sharp,\calw}(w)\right).
  \end{align*}
  We write $Li:\calW_{\{x,y\}}^{\rm semi}\longrightarrow\mathcal{C}^\infty([0,1[)$ the map which, to such a word, associates the map $z\mapsto Li_w(z)$.
\end{defn}
The existence of the limit for \ty{semi-convergent} words and the fact that $Li_w$ is a smooth map are well-known results of polylogarithms theory, see for example \cite{Br06}.
\begin{defiprop} \label{defnprop:arbo_polylogs}
 For any $z\in]0,1[$ and \ty{semi-convergent} $F$ the limit
 \begin{equation*}
  Li^T_F(z) := \lim_{\epsilon\to0}\widehat{\calI_\epsilon^z}\left(\calR_{\{x,y\}}^\sharp(F)\right)
 \end{equation*}
 exists. Setting $Li^T_F(0)=0$ we obtain a map 
 \begin{align*}
  Li^T_F:[0,1[ & \longrightarrow\R \\
           z & \longmapsto Li^T_F(z)
 \end{align*}
 \ty{which we call} the {\bf arborified polylogarithm} associated to the \ty{semi-convergent} forest $F$. The {\bf arborified polylogarithm map} is defined 
 by its action $Li^T:F\longrightarrow Li^T_F$ \ty{semi-convergent} forests.
\end{defiprop}
\begin{proof}
 One needs to prove the existence of the limit for any $z\in[0,1[$. It follows from Example \ref{ex:RBmap}, \ref{ex:RB_integration} that we can 
 apply Theorem \ref{thm:flattening} with $\lambda=0$. Furthermore, one easily shows by induction that the image of a \ty{semi-convergent} forest under 
 the map $fl_0$ is a finite sum of words, each ending with $y$. As stated above, it is a well-known fact (see for example \cite{Br06}) that for such a word $w$,
 the limit
 \begin{equation*}
  \lim_{\epsilon\to0}\widehat{\calI_\epsilon^z}_\calW\left(\calR_{\{x,y\}}^{\sharp,\calW}(w)\right)
 \end{equation*}
 exists. The result then follows.
\end{proof}
We have a simple but important compatibility Lemma whose proof is the same than the one of Lemma \ref{lem:flattening_conv}.
\begin{lemma}\label{lem:flattening_conv_xy}
 The flattening map of weight $0$ maps $\calF_{\{x,y\}}^{\rm semi}$ (resp. $\calF_{\{x,y\}}^{\rm conv}$) to $\calW_{\{x,y\}}^{\rm semi}$ (resp. $\calW_{\{x,y\}}^{\rm conv}$).
 Furthermore, $\calF_{\{x,y\}}^{\rm semi}$ and $\calF_{\{x,y\}}^{\rm conv}$ are subalgebras for the $0$-shuffle of rooted forests $\shuffle^T_0$.
\end{lemma}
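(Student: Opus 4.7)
The plan is to mirror the proof of Lemma \ref{lem:flattening_conv}, proving the first assertion by induction on $|F|$ and the second by induction on $|F|+|F'|$. The only new ingredient compared to that earlier proof is keeping track of how leaves and branching vertices behave under the grafting operator $B_+$, since semi-convergence constrains both of these sets of vertices rather than just the roots.

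For the first claim, the base case $F=\emptyset$ is immediate. For $F = F_1 F_2$ a nontrivial concatenation, $F_1$ and $F_2$ clearly inherit (semi-)convergence, so by induction the $fl_0(F_i)$ are linear combinations of (semi-)convergent words; one then notes that if two words both end in $y$ (and, in the convergent case, both start with $x$), so does every term of their shuffle, since the first and last letters of each summand of $w \shuffle w'$ are among the first and last letters of $w$ or $w'$. For $F = T = B_+^a(\tilde F)$ a tree, the key structural observation is that $\tilde F$ inherits semi-convergence from $T$: the leaves of $\tilde F$ are exactly the leaves of $T$ (when $\tilde F\neq\emptyset$), and the branching vertices of $\tilde F$ remain branching in $T$. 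Since $fl_0(T) = (a) \sqcup fl_0(\tilde F)$, when $\tilde F = \emptyset$ semi-convergence forces $a = y$ (the vertex is a leaf), and when $\tilde F \neq \emptyset$ the induction provides a trailing $y$; in the convergent case the extra datum that the root is $x$ means $a=x$, giving a word starting in $x$ and ending in $y$.

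For the subalgebra claim I induct on $|F|+|F'|$, with the empty cases trivial. When $T = B_+^a(f)$ and $T' = B_+^{a'}(f')$ are two trees, the recursion gives $T \shuffle^T_0 T' = B_+^a(f \shuffle^T_0 T') + B_+^{a'}(T \shuffle^T_0 f')$; the sub-pieces $f, f'$ are semi-convergent by the previous observation, so the inductive hypothesis produces linear combinations of semi-convergent forests inside the graftings. The subtle point—and the one I expect to be the main obstacle—is to check that the grafted trees remain semi-convergent: this reduces to verifying that whenever a term of $f\shuffle^T_0 T'$ has $\geq 2$ components (so that $B_+^a$ creates a branching root), the label $a$ is already forced to be $y$. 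This is true because the compatibility-with-concatenation rule in Definition \ref{defn:shuffle_tree} shows that the number of components of any summand of $f\shuffle^T_0 T'$ equals the number of components of $f$ (when $f$ is a forest); hence the new root becomes branching precisely when the root of $T$ was branching, in which case semi-convergence of $T$ itself already guarantees $a=y$. The convergent case adds only the observation $a=a'=x$.

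Finally, for $F = T_1\cdots T_k$ and $F' = t_1\cdots t_n$ with $k+n \geq 3$, every term in the shuffle formula is a concatenation of some $T_i \shuffle^T_0 t_j$ (a linear combination of semi-convergent, respectively convergent, forests by induction) with the remaining trees $T_p$ and $t_q$ (semi-convergent, respectively convergent, by hypothesis), so the total is again semi-convergent, respectively convergent. This closes the induction.
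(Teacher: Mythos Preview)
Your proof is correct and follows exactly the approach the paper indicates (adapting the induction of Lemma~\ref{lem:flattening_conv}); in fact the paper gives no separate argument here, so your write-up is more detailed than the original, and you have correctly isolated the only new wrinkle, namely that grafting $B_+^a$ may create a branching root, which you dispose of by tracking the component count of $f\shuffle^T_0 T'$. One tiny imprecision: your claim that this component count equals that of $f$ fails when $f=\emptyset$ (then $f\shuffle^T_0 T'=T'$ has one component while $f$ has none), but your conclusion ``the new root is branching iff the root of $T$ was'' remains valid in that case as well, so nothing is lost.
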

We then have our main theorem on arborified polylogarithms:
\begin{theo} \label{thm:arborified_polylogs}
 For any \ty{semi-convergent} forest $F$, the arborified polylogarithm associated to $F$ enjoys the following properties:
 \begin{enumerate}
  \item It is a finite sum of multiple polylogarithms with rational coefficients that can be written as a finite 
 linear combination of multiple polylogarithms with integer coefficients;
  \item It is a smooth map on $[0,1[$;
  \item The arborified polylogarithm map $Li^T:F\longrightarrow Li^T_F$ is a algebra morphism for the concatenation of trees and the  {pointwise} product of functions.
  \item The arborified polylogarithm map $Li^T:F\longrightarrow Li^T_F$ is a algebra morphism for the shuffle product $\shuffle^T:=\shuffle^T_0$ of rooted forests and the  {pointwise} product of functions.
 \end{enumerate}
\end{theo}
 \begin{proof}
 \begin{enumerate}
  \item The proof of this result is the same as the proof of Theorem \ref{thm:main_result_stuffle}, using Example \ref{ex:RBmap}, \ref{ex:RB_integration}. The limits in the definition of $Li_w$ are well-defined 
  by Lemma \ref{lem:flattening_conv_xy}.
  \item The second point follows from the first, since multiple polylogarithms are smooth maps of $[0,1[$ and by the existance of the limit in the definition of $Li^T$.
  \item This follows from the fact that $Li^T$ is the composition of $\widehat{\calI_\epsilon^z}_\calW$ and $\calR_{\{x,y\}}^{\sharp,\calW}$, which are both algebra morphisms.
  Furthermore, if $FF'$ is a \ty{semi-convergent} forest, then $F$ and $F'$ are two \ty{semi-convergent} forests. Then $\lim_{\epsilon\to0}\widehat{\calI_\epsilon^z}\left(\calR_{\{x,y\}}^\sharp(F)\right)$
  and $\lim_{\epsilon\to0}\widehat{\calI_\epsilon^z}\left(\calR_{\{x,y\}}^\sharp(F')\right)$ exist and their product is equal to 
  \begin{equation*}
   \lim_{\epsilon\to0}\left[\widehat{\calI_\epsilon^z}\left(\calR_{\{x,y\}}^\sharp(F)\right)\widehat{\calI_\epsilon^z}\left(\calR_{\{x,y\}}^\sharp(F)\right)\right]
  \end{equation*}
  Thus the limit in the definition of $Li^T$ is also an algebra morphism and $Li^T$ is an algebra morphism as stated.
  \item Once again the proof of this result is as the proof of Theorem \ref{thm:main_result_stuffle}. It is an application of Theorem \ref{thm:flattening} \ref{thm:iv} together with Example \ref{ex:RBmap}. The result make sense by the stability property of Lemma \ref{lem:flattening_conv_xy}.
 \end{enumerate}

\end{proof}
As in the case of the stuffle branched zeta values, one can use this framework to provide a new proof that multiple polylogarithms are algebra morphisms for the shuffle product.
\begin{prop} \label{prop:polylogs_shuffle_mor}
 The map $Li:\calW_{\{x,y\}}^{\rm semi}\longrightarrow\mathcal{C}^{\infty}([0,1],\R)$ is an algebra morphism for the shuffle product.
\end{prop}
\begin{proof}
 The proof follows the same steps as the proof of Proposition \ref{prop:zeta_stuffle_mor}.
\end{proof}

\subsection{Arborified zeta values as integrals}

It is well-known (see for example \cite{Br06} or \cite{Waldschmidt}) that for a convergent word $w$ 
the limit $\lim_{z\to1}Li_w(z)$ exists. This allows the following definition, which clearly is another way to define shuffle MZVs.
\begin{defn} \label{defn:shuffle_MZVs}
 Let $w\in\calW_{\{x,y\}}$ be a word starting with $x$ and ending with $y$, then the {\bf shuffle multiple zeta value} associated to $w$ is the real number
 \begin{equation*}
  \zeta_\shuffle(w):=\lim_{z\to1}Li_w(z).
 \end{equation*}
 We write $\zeta_\shuffle$ the map defined by
 \begin{align*}
  \zeta_\shuffle:\{\emptyset\}\bigcup\left((x)\sqcup\calW_{\{x,y\}}\sqcup(y)\right) \longrightarrow \R & \\
  w \longmapsto \zeta_\shuffle(w & ).
 \end{align*}
\end{defn}
This definition can easily be generalised to trees, thanks to the following Lemma.
\begin{lemma} \label{lem:lim_z_1_forests}
 For any convergent forest $F\in\calF_{\{x,y\}}^{\rm conv}$, the limit $\lim_{z\to1}Li^T_F(z)$ exists.
\end{lemma}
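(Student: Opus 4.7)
The plan is to reduce the existence of $\lim_{z \to 1} Li^T_F(z)$ to the classical existence of $\lim_{z \to 1} Li_w(z)$ for convergent words $w$, by routing through the flattening identity established earlier in the chapter.

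First, I invoke item 1 of Theorem \ref{thm:arborified_polylogs} applied to $F$. Since $F$ is in particular semi-convergent (convergent implies semi-convergent by the definitions given above), the arborified polylogarithm $Li^T_F$ is a finite linear combination of multiple polylogarithms with rational coefficients. More precisely, unwinding the proof of that theorem, this decomposition is the identity $Li^T_F = Li \circ fl_0(F)$ obtained by applying Theorem \ref{thm:flattening} (item \ref{thm:ii}) to the Rota-Baxter operator $\calI_a^b$ (cf. Example \ref{ex:RBmap}.\ref{ex:RB_integration}) and passing to the limit $\epsilon \to 0$. Writing $fl_0(F) = \sum_i c_i w_i$, this gives, as functions on $[0,1[$,
\begin{equation*}
 Li^T_F(z) = \sum_i c_i\, Li_{w_i}(z).
\end{equation*}

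Second, I need to ensure that each $w_i$ appearing above is a convergent word so that the classical limit applies. This is precisely the content of Lemma \ref{lem:flattening_conv_xy}: since $F \in \calF_{\{x,y\}}^{\rm conv}$, the $0$-flattening $fl_0(F)$ lies in $\calW_{\{x,y\}}^{\rm conv}$, so each $w_i$ either is empty or starts with $x$ and ends with $y$.

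Third, I invoke the classical fact recalled immediately after this lemma (and used to formulate Definition \ref{defn:shuffle_MZVs}), namely that $\lim_{z \to 1} Li_w(z)$ exists whenever $w$ is a convergent word in the alphabet $\{x,y\}$. Since the decomposition above is a finite sum, linearity of the limit yields that $\lim_{z \to 1} Li^T_F(z)$ exists, and in fact equals $\sum_i c_i\, \zeta_\shuffle(w_i)$. There is no real obstacle here: the whole point of the machinery of flattenings, Rota-Baxter operators and Theorem \ref{thm:flattening} is precisely to make such inheritance results essentially automatic, and the only verification is the (already established) compatibility of convergence between forests and their flattenings.
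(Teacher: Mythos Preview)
Your proof is correct and follows essentially the same approach as the paper: both reduce to the identity $Li^T_F = Li \circ fl_0(F)$ from Theorem \ref{thm:arborified_polylogs}, check that $fl_0(F)$ is a linear combination of convergent words, and then invoke the classical existence of $\lim_{z\to1} Li_w(z)$ for such words. The only cosmetic difference is that the paper re-derives the fact that $fl_0$ sends convergent forests to convergent words by an \emph{ad hoc} argument on first and last letters, whereas you simply cite Lemma \ref{lem:flattening_conv_xy}, which is cleaner.
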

\begin{proof}
 Let $F\in\calF_{\{x,y\}}^{\rm conv}$. If $F=\emptyset$, then $fl_0(F)=\emptyset$ and the result trivially holds. Otherwise $fl_0(F)\in(x)\sqcup\calW_{\{x,y\}}\sqcup(y)$. Indeed, we can write $fl_0(F)=\sum_{i\in I}w_i$ for some finite set $I$. Then each $w_i$ has the decoration of a root of $F$ (thus a $x$) 
 as its first letter; and the decoration of a leaf of $F$ (thus a $y$) as its last letter. This is shown ad absurdum: if we have $w_i=(y)\sqcup\tilde w$, then a vertex decorated by $y$ was not above all roots of $F$, since 
 every roots of $F$ is decorated by $x$. This is a contradiction. The same argument shows that no $w_i$ cannot end with an $x$. The result then follows from Theorem \ref{thm:arborified_polylogs} and the observation above that 
 $\lim_{z\to1}Li_w(z)$ exists for any 
 $w\in (x)\sqcup\calW_{\{x,y\}}\sqcup(y)$.
\end{proof}
This allows the following definition.
\begin{defn} \label{defn:shuffle_AZVs}
 For any convergent forest $F\in\calF_{\{x,y\}}^{\rm conv}$ the  {corresponding} {\bf shuffle arborified zeta value} is defined as
 \begin{equation*}
  \zeta_\shuffle^T(F) := \lim_{z\to1}Li_F^T(z).
 \end{equation*}
 We write $\zeta_\shuffle^T$ the map defined by
 \begin{align*}
  \zeta_\shuffle^T:\calF_{\{x,y\}}^{\rm conv} \longrightarrow & \R \\
  F \longmapsto \zeta_\shuffle^T&(F).
 \end{align*}
\end{defn}
Shuffle arborified zeta values enjoy the following properties.
\begin{theo} \label{thm:main_result_shuffle}
 For any convergent forest $F\in\calF_{\{x,y\}}^{\rm conv}$,  {the shuffle arborified zeta values} $\zeta_\shuffle^T(F)$ is a finite sum of  {multiple zeta} values with rational coefficients that can be written as a finite 
 sum of  {multiple zeta} values with integer coefficients given by the $0$-flattening:
 \begin{equation*}
  \zeta^T_\shuffle=\zeta_\shuffle\circ fl_0.
 \end{equation*}
Furthermore the map $\zeta_\shuffle^T:\calF_{\{x,y\}}^{\rm conv} \longrightarrow \R$
 is an algebra morphism for the concatenation product of trees as well as for the shuffle product $\shuffle^T$ of rooted forests.
\end{theo}
\begin{proof}
 This Theorem is a consequence of Theorem \ref{thm:arborified_polylogs} applied to convergent forests, for which one can take the limit $z\to1$ according to Lemma \ref{lem:lim_z_1_forests}.
\end{proof}
\begin{rk}
 One could prove this theorem along the steps of the proof of Theorem \ref{thm:main_result_stuffle}. This is true in many occurrences throughout this section. Hence Sections \ref{section:stuffle} and 
 \ref{section:shuffle} present two different (however equivalent) ways of building branched objects.
\end{rk}
We conclude this section by pointing out that, as a consequence of our previous results, we also have shown that  {multiple zetas} as iterated integrals are algebra morphism for the shuffle product.
\begin{prop} \label{prop:shuffle_zeta_alg_mor_shuffle}
 The map $\zeta_\shuffle:\{\emptyset\}\bigcup\left((x)\sqcup\calW_{\{x,y\}}\sqcup(y)\right) \longrightarrow \R$ is an algebra morphism for the shuffle product.
\end{prop}
\begin{proof}
 The proof follows from Proposition \ref{prop:polylogs_shuffle_mor} on words in $\{\emptyset\}\bigcup\left((x)\sqcup\calW_{\{x,y\}}\sqcup(y)\right)$ and taking the limit $z\to1$.
\end{proof}

\section{Attempting to relate AZVs} \label{section:Hoffman}

Now that we have two versions of AZVs\cy{: $\zeta_\stuffle^T$ and $\zeta_\shuffle^T$. The former generalises $\zeta_\stuffle$ since it is defined on rooted trees and coincides with MZVs on linear trees which are the image of words under the canonical embedding of words into trees. Similarly, $\zeta^T_\shuffle$ generalises $\zeta_\shuffle$. Therefore}  
%that each generalise one version of MZVs, 
we would like to relate \ty{$\zeta_\stuffle^T$ and $\zeta_\shuffle^T$} by a generalisation of Kontsevich's relation \eqref{eq:Kontsevich}.

\subsection{Branched binarisation map}

We aim to generalise the map \eqref{eq:binarisation_map} to trees, that is to build a map
$\fraks^T :\calF_{\N^*}\longrightarrow\calF_{\{x,y\}}$ which coincide with $\iota_{\{x,y\}}\circ\fraks$ when restricted to ladder trees.
In order to do so, once again we use  the universal property of trees given by Theorem \ref{thm:univ_prop_tree}.
\begin{defn} \label{defn:branched_bin_map}
 Let $\beta:\N^*\times\calF_{\{x,y\}}\longrightarrow\calF_{\{x,y\}}$ defined by
 \begin{align*}
  \beta(1,F) := B_+^y(F) \\
  \beta(n,F) := \left(B_+^x\right)^{n-1}\left(B_+^y(F)\right)
 \end{align*}
 for any $n\geq2$. The {\bf branched binarisation map} is the morphism of operated algebras 
 $\fraks^T :\calF_{\N^*}\longrightarrow\calF_{\{x,y\}}$ whose existence and uniqueness is given by Theorem \ref{thm:univ_prop_tree}.
\end{defn}
\begin{example} Here are some example of the action of the binarisation map:
 \begin{equation*}
  \fraks^T(\tdun{1}) = \tdun{y} \qquad  \fraks^T(\tdun{2}) = \tddeux{$x$}{$y$} \qquad \fraks^T\left(\tdtroisun{1}{1}{2}\right) = \tdquatredeux{$y$}{$y$}{$x$}{$y$} \qquad
  \fraks^T\left(\tdtroisun{$2$}{$1$}{$2$}\right) = \tcinqonze{$x$}{$y$}{$x$}{$y$}{$y$}.
 \end{equation*}
\end{example}
Now we state a simple lemma relating convergent forests in $\calF_{\N^*}$ and in $\calF_{\{x,y\}}$.
\begin{lemma} \label{lem:bin_map_properties}
The branched binarisation map maps convergent forests to convergent forests:
 \begin{equation*}
  \fraks^T\left(\calF_{\N^*}^{\rm conv}\right) = \calF_{\{x,y\}}^{\rm conv}.
 \end{equation*}
 Furthermore, $\fraks^T$ is a bijection.
\end{lemma}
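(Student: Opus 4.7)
The plan is to establish the set equality and the bijectivity simultaneously, by first verifying the easy direction $\fraks^T(\calF_{\N^*}^{\rm conv}) \subseteq \calF_{\{x,y\}}^{\rm conv}$ and then constructing an explicit inverse map $\Psi:\calF_{\{x,y\}}^{\rm conv}\longrightarrow\calF_{\N^*}^{\rm conv}$; the bijection claim then falls out as a byproduct.

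For the forward inclusion I proceed by induction on $|F|$, and in fact prove the slightly stronger statement that for every $F\in\calF_{\N^*}$, every leaf and every branching vertex of $\fraks^T(F)$ is decorated by $y$. The defining recursion of $\fraks^T$ as a morphism of operated algebras, namely $\fraks^T(B_+^n(F'))=(B_+^x)^{n-1}\left(B_+^y(\fraks^T(F'))\right)$, inspects the newly added vertices: the single $B_+^y$-vertex is the only new branching point (it inherits exactly the children coming from $\fraks^T(F')$) and is $y$-decorated, while the $n-1$ inserted $x$-vertices form a unary chain, hence are neither leaves nor branching. The concatenation case is immediate since $\fraks^T$ is a morphism. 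With this auxiliary fact in hand, convergence of $F$ is promoted to convergence of $\fraks^T(F)$: when every root of $F$ has decoration $\geq 2$, the outermost $B_+^x$ on each tree forces every root of $\fraks^T(F)$ to be $x$-decorated.

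For the reverse inclusion and surjectivity I construct $\Psi$ explicitly, guided by the key structural observation that in a convergent $\{x,y\}$-forest every $x$-vertex is internal and non-branching --- it cannot be a leaf (leaves are $y$) nor a branching vertex (branching vertices are $y$) --- and therefore has exactly one direct descendant. Consequently the $x$-vertices arrange themselves into maximal unary chains, each terminating, going upward for $\preceq$, at a $y$-vertex. I then define $\Psi(G)$ to have vertex set equal to the $y$-vertices of $G$, with $v$ declared the direct ancestor of $w$ in $\Psi(G)$ whenever $v\prec w$ in $G$ with no intermediate $y$-vertex, and decoration $d_{\Psi(G)}(w):=1+\ell_w$ where $\ell_w$ denotes the length of the maximal $x$-chain immediately below $w$. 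Every root of $\Psi(G)$ has $\ell_w\geq 1$ because the corresponding root of $G$ is $x$-decorated, so $\Psi(G)\in\calF_{\N^*}^{\rm conv}$.

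The two identities $\fraks^T\circ\Psi=\mathrm{id}_{\calF_{\{x,y\}}^{\rm conv}}$ and $\Psi\circ\fraks^T=\mathrm{id}_{\calF_{\N^*}^{\rm conv}}$ are then proved by a routine induction on the number of vertices. The concatenation case is trivial since both maps respect concatenation (for $\Psi$ this is clear since the construction is local on connected components). In the grafting case $T=B_+^n(F')$, the recursive unpacking of $\fraks^T$ matches exactly the chain-contraction prescription defining $\Psi$: the outermost $B_+^x$-chain of length $n-1$ followed by the $B_+^y$-vertex recombines under $\Psi$ into a single vertex of decoration $n$. The main bookkeeping obstacle is reconciling the partial-order convention (roots are minimal) with the pictorial ``upward chain'' description; once this dictionary is fixed, both compositions reduce to reading off the definitions, and injectivity together with surjectivity follow at once from the existence of the two-sided inverse.
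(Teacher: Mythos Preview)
Your argument is correct and follows essentially the same route as the paper's proof: an induction on $|F|$ for the forward inclusion (showing the image lies in semi-convergent forests, then that convergent roots go to $x$-decorated roots), followed by an explicit description of the inverse. The paper is much terser --- it simply asserts that bijectivity ``is also shown by induction, using $|\fraks^T(F)|=\|F\|$'' and that the same argument on $(\fraks^T)^{-1}$ gives the reverse inclusion --- so your explicit segment-contraction construction of $\Psi$ is a welcome fleshing-out of what the paper leaves implicit.

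One small gap to flag: you define $\Psi$ only on $\calF_{\{x,y\}}^{\rm conv}$, which establishes that $\fraks^T$ restricts to a bijection $\calF_{\N^*}^{\rm conv}\to\calF_{\{x,y\}}^{\rm conv}$ and hence the set equality. But the lemma also asserts that $\fraks^T$ itself is a bijection (this is used later, e.g.\ in Proposition~\ref{prop:yew_stabilises_conv_forests}), which should be read as a bijection $\calF_{\N^*}\to\calF_{\{x,y\}}^{\rm semi}$. Your structural observation (every $x$-vertex in a semi-convergent forest has exactly one child) and your construction of $\Psi$ both work verbatim on all of $\calF_{\{x,y\}}^{\rm semi}$, not just the convergent part; you should say so explicitly to cover the full statement.
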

\begin{proof}
 By definition of the operation $\beta$, if $F\in\calF_{\{x,y\}}$ is in the image of $\beta$, then $F$ is \ty{semi-convergent}. Therefore $\fraks^T\left(\calF_{\N^*}\right) \subseteq \calF_{\{x,y\}}^{\rm semi} \subseteq \calF_{\{x,y\}}$.
 Thus we only need to prove that the image of a convergent forest has its roots decorated by $x$s only.
 
 Let $T\in\calF_{\N^*}^{\rm conv}$ be a convergent tree. If $T=\emptyset$ then $\fraks^T(\emptyset)=\emptyset\in\calF_{\{x,y\}}^{\rm conv}$ by definition of $\calF_{\{x,y\}}^{\rm conv}$. 
 
 If $T\neq \emptyset$ then it exists a forest $F$ such that $T=B_+^p(F)$ with $p\geq2$. Then by definition of $\fraks^T$ we have
 \begin{equation*}
  \fraks^T(T) = \left(B_+^x\right)^{p-1}\left(B_+^y(\fraks^T(F))\right)
 \end{equation*}
 which lies in $\calF_{\{x,y\}}^{\rm conv}$ since $p-1\geq1$.
 
 Let $F\in\calF_{\N^*}^{\rm conv}$ be a convergent forest. Then we have $F=T_1\cdots T_k$ with $T_i\in\calF_{\N^*}^{\rm conv}$ by definition 
 of $\calF_{\N^*}^{\rm conv}$. Then 
 \begin{equation*}
  \fraks^T(F) = \fraks^T(T_1)\cdots\fraks^T(T_1)\in\calF_{\{x,y\}}^{\rm conv}
 \end{equation*}
 by definition of $\calF_{\{x,y\}}^{\rm conv}$. Therefore $\fraks^T\left(\calF_{\N^*}^{\rm conv}\right) \subseteq \calF_{\{x,y\}}^{\rm conv}$.
 
 The bijectivity of $\fraks^T$ is also shown by induction, using $|\fraks^T(F)| = ||F||$. The same argument on $(\fraks^T)^{-1}$ allows to show that 
 $(\fraks^T)^{-1}\left(\calF_{\{x,y\}}^{\rm conv}\right) \subseteq \calF_{\N^*}^{\rm conv}$; concluding the proof.
\end{proof}
Recall that a branching vertex is a vertex that has strictly more than one direct successor. This concept will be 
of importance when trying to relate shuffle and stuffle arborified zeta values through the branched binarisation map.

We now have a result which is a direct consequence of a result of the next Section (see Remark \ref{rk:AZV_TZV}). We give here only the proof for the ladder forests. We nonetheless state it here since it gives a negative answer to the question we asked.
\begin{theo} \label{thm:relation_shuffle_stuffle}
 For any convergent forest $F\in\calF_{\N^*}^{\rm conv}$ we have
 \begin{equation*}
  \zeta^T_{\shuffle}(\fraks^T(F)) \leq \zeta^T_{\stuffle}(F).
 \end{equation*}
 Furthermore, the inequality is an equality if, and only if, $F$ has no branching vertex (i.e. $F$ is the empty tree or $F=l_1\cdots l_k$ with 
 $l_i$ being ladder trees).
\end{theo}
\noindent\emph{Partial proof.}
 \begin{itemize}
  \item If $F=\emptyset$, then $\fraks^T(F)=\emptyset$ and the result holds by construction.
  \item If $F=l_1\cdots l_k$ with $l_i$ being ladder trees, then the result follows from the classical property \eqref{eq:Kontsevich} of the binarisation map together with the 
  facts that $\zeta^T_\shuffle$ and $\zeta^T_\stuffle$ are algebra morphisms for the concatenation product of trees, $\zeta_\stuffle(w) =  \zeta^T_\stuffle(\iota_{\N^*}(w))$ and 
  $\zeta_\shuffle(w) =  \zeta^T_\shuffle(\iota_{\{x,y\}}(w))$.
 \end{itemize}

 \subsection{Hoffman's relations for branched zetas}

In order to find the branched equivalent of Hoffman's regularisation relations, it is useful to recall that the concatenation product of trees can be seen as lift to trees of both the shuffle and stuffle products. 
The latter only differ in the way one goes back to words, that is to say by a choice of the 
flattening map. Therefore the most naive candidate for this relations is that $\fraks^T(\tdun{1} F) - \fraks^T(\tdun{1})\fraks^T(F)$ is convergent 
for any convergent forest $F$ and lies in the kernel of $\zeta_\shuffle^T$.

However, this statement is trivially true and does not impose any new relation on the algebra spanned by branched zetas. Indeed, since $\fraks^T$ is an 
algebra morphism we have $\fraks^T(\tdun{1} F) - \fraks^T(\tdun{1})\fraks^T(F)=0$\footnote{notice that $0$ and $\emptyset$ are distinct elements of our algebra}. 
This observation is a consequence of the fact that the distinction 
between the stuffle and shuffle products can only be made at the level of words, reached through flattening maps. Consequently, a more relevant quantity 
to study is 
\begin{equation*}
 \fraks\left(fl_1\left(\tdun{1} F\right)\right) - fl_0\left(\fraks^T\left(\tdun{1} F\right)\right)
\end{equation*}
for convergent forests $F$.  However, the following result states that this does not generalise Hoffman's relations.
 \begin{prop} \label{prop:branched_Hoffman}
 For any convergent forest $F\in\calF_{\N^*}^{\rm conv}$, 
 \begin{equation*}
  \fraks\left(fl_1\left(\tdun{1} F\right)\right) - fl_0\left(\fraks^T\left(\tdun{1} F\right)\right)
 \end{equation*}
 lies in the algebra of convergent words if, and only if, $F$ is empty or $F=l_1\cdots l_k$ with the $l_i$ ladder trees. In this case 
 \begin{equation*}
  \fraks\left(fl_1\left(\tdun{1} l_1\cdots l_k\right)\right) - fl_0\left(\fraks^T\left(\tdun{1} l_1\cdots l_k\right)\right) \in \Ker(\zeta_\shuffle).
 \end{equation*}
\end{prop}
 \begin{proof}
 \begin{itemize}
  \item If $F$ is empty, then $\fraks\left(fl_1\left(\tdun{1} F\right)\right) - fl_0\left(\fraks^T\left(\tdun{1} F\right)\right)=0$ and the 
  result trivially holds.
  \item If $F=l_1\cdots l_k$ with the $l_i$ ladder trees then, by definition of the flattening maps we have
  \begin{equation*}
  \fraks\left(fl_1\left(\tdun{1} F\right)\right) - fl_0\left(\fraks^T\left(\tdun{1} F\right)\right) = \fraks\left((y)\stuffle w_1\stuffle\cdots\stuffle w_k\right) - (y)\shuffle\fraks(w_1)\shuffle\cdots\shuffle\fraks(w_k)
 \end{equation*}
 with $w_i:=\iota_{\N^*}^{-1}(l_i)$. Then the result holds by Hoffman's regularisation relations \eqref{eq:Hoffman_reg_rel}.
 \item Let $T$ be a non ladder tree, and $v$ a branching vertex of $T$ with decoration $p_1$. Let $v'$ and $v''$ be two direct successors of $V$ with decorations $p_2$ and $p_3$ respectively. Then it exists two 
 eventually empty words $w$ and $w'$ such that
 \begin{equation*}
  fl_1(\tdun{1} T) = (1)\sqcup w\sqcup(p_1[p_2+p_3])\sqcup w' + X
 \end{equation*}
 with $X$ a finite linear combination of words written in the alphabet $\N^*$.
 
 By definition of $fl_0$, $\fraks\left((1)\sqcup w\sqcup(p_1[p_2+p_3])\sqcup w'\right)$ will not show up in $fl_0\left(\fraks^T\left(\tdun{1} F\right)\right)$. Thus we obtain
 \begin{equation*}
  \fraks\left(fl_1\left(\tdun{1} F\right)\right) - fl_0\left(\fraks^T\left(\tdun{1} F\right)\right) = \fraks\left((1)\sqcup w\sqcup(p_1[p_2+p_3])\sqcup w'\right) + Y
 \end{equation*}
 with $Y$ a finite linear combination of words written in the alphabet $\{x,y\}$ without the divergent tree 
 $\fraks\left((1)\sqcup w\sqcup(p_1[p_2+p_3])\sqcup w'\right)$. This implies the result for any non-ladder tree.
 
 For a forest with at least one branching point, the result follows from the fact that $\fraks^T$ and the flattening maps are algebras morphisms and the previous discussion on non ladder trees.
 \end{itemize}
\end{proof}
 This result was derived from the picture that the shuffle and stuffle products of words are lifted to trees to the concatenation of trees. We 
have also derived a generalisation to rooted forests of the shuffle and stuffle products of words. This leads us to an alternative  possible 
generalisation of Hoffman's relations  \eqref{eq:Hoffman_reg_rel}; namely that
 \begin{equation*}
 \tdun{1}\stuffle ~T - (\fraks^T)^{-1}\left(\tdun{y}\shuffle ~\fraks^T(T)\right) \quad \text{and} \quad \fraks^T\left(\tdun{1}\stuffle ~T\right) - \tdun{y}\shuffle ~\fraks^T(T)
\end{equation*}
lie in $\calF_{\N^*}^{\rm conv}$
for any convergent tree $T\in\calF_{\N^*}^{\rm conv}$. About this quantity one finds
\begin{prop} \label{prop:conv_Hoffman_trees}
 For any convergent forest $F\in\calF_{\N^*}^{\rm conv}$, one has
 \begin{align*}
  & \tdun{1}\stuffle ~F - (\fraks^T)^{-1}\left(\tdun{y}\shuffle ~\fraks^T(F)\right) \in \calF_{\N^*}^{\rm conv}; \\
  & \fraks^T\left(\tdun{1}\stuffle ~F\right) - \tdun{y}\shuffle ~\fraks^T(F) \in \calF_{\{x,y\}}^{\rm conv}.
 \end{align*}
\end{prop}
\begin{proof}
 First, notice that by Lemma \ref{lem:bin_map_properties} the two statements are equivalent. We therefore only prove the second one.
 
 For a convergent forest $F=T_1\cdots T_k$ , we have $\fraks^T(F) = \fraks^T(T_1)\cdots\fraks^T(T_k)$. Then we have, with obvious 
 notations:
 \begin{equation*}
  \tdun{1}\stuffle~F = \frac{1}{k}\sum_{i=1}^k\left(B_+^1(T_i) + X_{i,F}\right)F\setminus T_i
 \end{equation*}
 for some finite sum of trees $X_{i,F}\in\calF_{\N^*}{\rm conv}$. Therefore we have
 \begin{equation*}
  \fraks^T\left(\tdun{1}\stuffle~F\right) = \frac{1}{k}\sum_{i=1}^k\left(B_+^y(\fraks^T(T_i)) + \fraks^T(X_{i,F})\right)\fraks^T(F\setminus T_i).
 \end{equation*}
 On the other hand we have
 \begin{equation*}
  \tdun{y}\shuffle~\fraks^T(F) = \frac{1}{k}\sum_{i=1}^k\left(B_+^y(\fraks^T(T_i)) + Y_{i,F}\right)\fraks^T(F\setminus T_i)
 \end{equation*}
 for some $Y_{i,F}\in\calF_{\{x,y\}}^{\rm conv}$. Taking the difference of these two quantities, one obtains the result, since by 
 Lemma \ref{lem:bin_map_properties} the branched binarisation map $\fraks^T$ maps convergent forests to convergent forests.
\end{proof}
While this result might give us hope, one should expect Theorem \ref{thm:relation_shuffle_stuffle} to prevent the quantities 
$\fraks^T\left(\tdun{1}\stuffle ~F\right) - \tdun{y}\shuffle ~\fraks^T(F)$ from lying in the kernel of $\zeta^T_\shuffle$. And indeed, one 
finds, after a long yet straightforward computation, that
\begin{equation} \label{eq:counter_example_Hoffman}
 \zeta_\shuffle^T\left(\fraks^T\left(\tdun{1}\stuffle\tdtroisun{2}{1}{1}\right) - \tdun{$y$}\shuffle\tdquatrequatre{$x$}{$y$}{$y$}{$y$}\right) = \zeta(2,3)+\zeta(3,2) > 0.
\end{equation}
The precise characterisation of quantities of the form $\fraks^T\left(\tdun{1}\stuffle ~F\right) - \tdun{y}\shuffle ~\fraks^T(F)$ as well as their image under $\zeta_\shuffle^T$ could be of interest for future investigations. 

To sum up our work so far, let us compare the properties of MZVs and AZVs in Table \ref{table:MZV_AZV} below.
%  \begin{center}
% \begin{tabular}{|c||c|c|}
%  \hline
%  MZVs & ~AZVs~ & AZVs/TZVs \\
%  \hline %\vspace{4mm}
%  $\zeta_{\shuffle}$ & $\zeta^T_{\shuffle}$ & $\zeta^T_{\shuffle}$ \\
%  \hline 
%  $\zeta_{\stuffle}$ & $\zeta^T_{\stuffle}$ & $\zeta^t$ \\
%  \hline
%  Relation de Kontsevich & Non & $\zeta^T_{\shuffle}=\zeta^t\circ\fraks^T$ \\
%  \hline
%  $\shuffle$ & $\shuffle^T$ & $\shuffle^T$ \\
%  \hline
%  $\stuffle$ & $\stuffle^T$ & ? \\
%  \hline 
%  Régularisation de Hoffman & Non & ? \\
%  \hline
% \end{tabular}
% \end{center}
\begin{table}[h!] \label{table:MZV_AZV}
  \begin{center}
\begin{tabular}{|c||c|}
 \hline
 MZVs & ~AZVs~  \\
 \hline %\vspace{4mm}
 $\zeta_{\shuffle}$ & $\zeta^T_{\shuffle}$ \\
 \hline 
 $\zeta_{\stuffle}$ & $\zeta^T_{\stuffle}$ \\
 \hline
 Kontsevich's relation & No \\
 \hline
 $\shuffle$ & $\shuffle^T$  \\
 \hline
 $\stuffle$ & $\stuffle^T$ \\
 \hline 
 Hoffman's regularisation relation & No  \\
 \hline
\end{tabular}
\caption{Properties of MZVs and AZVs.}
\end{center}
\end{table}

The conclusion of this Section could be that branching vertices, at least with 
the current generalisation of MZVs, induce an important change when lifting their properties to rooted forests. 

\ty{Another possible point of view} is that some of the properties of MZVs are lost when one generalises them to rooted forests and that our initial goal is out of reach. In front of this observation, we have a choice: to despair or not to despair. As a young naive person I chose the latter. Instead, we will now see that shuffle AVZs \emph{can} be written as multiple sums generalising the MZVs. This \ty{motivates} the introduction of a new generalisation of MZVs to rooted forests.
 
\section{Series representation of AZVs} \label{section:int_to_series}

In order to prove the existence of a series representation for AZVs, we need to introduce a few more notions. Before, let us recall from Definition 
\ref{def:graph_stuff} that \ty{a {\bf path}} in a rooted forest $F$ is a finite sequence of elements of $V(F)$: $p=(v_1,\cdots,v_n)$ that is either a singleton (in which case $p=(v)$ is said to be a path between $v$ and itself) or such that for all $i\in[[1,n-1]]$, $(v_i,v_{i+1})$ is an edge of $F$. The same notion generalises to decorated rooted forests. 
\begin{defn} \label{defn:segment}
 A {\bf segment} of a rooted forest $(F,d_F)\in\calF_{\{x,y\}}$ is a non-empty path $s_v=(v_1,\cdots,v_n=v)$ such that $d_F(v_n)=y$, $d_F(v_i)=x$ for any $i$ in $\{1,\cdots,n-1\}$ and $d(a(v_1))=y$, with $a(v_1)$ the direct ancestor of $v_1$. We call the number $n$ the {\bf length} of the segment $s_v$ and write it $|s_v|$.
 
 We write $S(F):=\{s_v|v\in V_y(F)\}$ the set of segments of a rooted forest $F$.
\end{defn}
In words: a segment $s_v$ of a rooted forest is a path in this rooted forest, which ends at the vertex $v$ decorated by $y$ and starts just above the first ancestor of $v$ being also decorated by $y$.

The set $S(F)$ inherits a poset structure from the poset structure of $V(F)$. We also denote this partial order relation by $\preceq$: $s_v\preceq s_{v'}:\Longleftrightarrow v\preceq v'$. This allows us to define the depth of a segment.
\begin{defn}
 The {\bf depths} of the segments of a decorated rooted forest $F$ decorated by $\{x,y\}$ are recursively defined by:
 \begin{itemize}
  \item depth$(s_v)=0$ if $v$ is a leaf of $F$,
  \item depth$(s_v)=\max\{{\rm depth}(s_{v'})|v'\neq v\wedge s_v\preceq s_{v'}\}+1$.
 \end{itemize}
 We also set $N_F:=\max\{{\rm depth}(s_v)|s_v\in S(F)\}$ the maximal depth of a segment of $F$. For any $n\in\{0,\cdots,N_F\}$ we set 
 \begin{equation*}
  S_n(F) := \{s_v\in S(F)|{\rm depth}(s_v)=n\},
 \end{equation*}
  we further set:
 \begin{equation*}
  S^n(F):=\bigcup_{i=0}^n S_i(F)
 \end{equation*}
 for any $n$ in $\{0,\cdots,N_F\}$. For any such $n$ we also write
 \begin{equation*}
  ||S_n(F)||:=\sum_{s_v\in S_n(F)} |s_v| \quad\text{and}\quad ||S^n(F)||:=\sum_{s_v\in S^n(F)} |s_v|.
 \end{equation*}

\end{defn}
Notice that depth of segments of a rooted forest are well-defined because we have taken our forests to be finite.

We are now able to prove the following
\begin{theo} \label{thm:integral_sum}
 For any convergent forest $F$ the corresponding arborified zeta values admits the following series representation:
 \begin{equation*}
  \zeta_\shuffle^T(F) = \sum_{\substack{ n_v\geq1\\v\in V_y(F)}}\prod_{v\in V_y(F)}\left(\sum_{\substack{v'\in V_y(F)\\ v'\succeq v}}n_{v'}\right)^{-|s_v|}
 \end{equation*}
 where $s_v=(v_1,\cdots,v_n=v)$ is a the segment of $F$ ending at $v$ and $v\in V_y(F)$ in the first sum means that this series has a summation variable for each $v\in V_y(F)$.
\end{theo}
\begin{proof}
 Let $F$ be a convergent forest. Since the map $F\mapsto\zeta_\shuffle^T(F)$ is an algebra morphism for the concatenation product of trees, it is enough to show that the theorem holds for $F$ a rooted tree. Thus we can assume without loss of generality that $F$ is a rooted tree.
 
 Let $N_F$ be the maximal depth of the segments of this tree. If $N_F=0$, the theorem reduces to the usual series representation of a Riemann zeta. 
 
 If $N_F\geq1$, since we are working with convergent integrals we can use Fubini's theorem to regroup integrations of the segment. This mean that we can write the arborified zeta associated to $F$ as
 \begin{equation} \label{eq:step_zero}
  \zeta_\shuffle^T(F) = \int_{\Delta_F}\prod_{n=0}^{N_F}\prod_{s_v\in S_n(F)} d\omega_{s_v}
 \end{equation}
 where, for $s_v=(w_1,\cdots,w_{p})\in S_n(F)$ we have set 
 \begin{equation*}
  d\omega_{s_v}:=\frac{dz_{w_1}}{z_{w_1}}\cdots\frac{dz_{w_{p-1}}}{z_{w_{p-1}}}\frac{dz_{w_p}}{1-z_{w_p}}
 \end{equation*}
 (with an obvious abuse of notation if $p=1$). Now recall that for any $A\in\R_+$, $p\geq1$ and any $Z\in[0,1]$ ($Z\neq 1$ if $p=1$) we have
 \begin{equation} \label{eq:lemma_trad}
  \int_{0\leq z_p\leq\cdots\leq z_1\leq Z}\frac{dz_1}{z_1}\cdots\frac{dz_{p-1}}{z_{p-1}}\frac{dz_p}{1-z_p} (z_p)^A = \sum_{n=1}^{+\infty} \frac{Z^{n+A}}{(n+A)^p}
 \end{equation}
 (again with obvious abuses of notations when $p=1$). This classical result follows from the theorem of dominated convergence and the Taylor expansion of the function $x\mapsto (1-x)^{-1}$.
 
 We can now use \eqref{eq:lemma_trad} with $A=0$ in \eqref{eq:step_zero} to integrate all the variable attached to vertices belonging to a segment of depth zero. We obtain 
 \begin{equation} \label{eq:step:one}
  \zeta_\shuffle^T(F) = \sum_{\substack{n_v=1, \\ v\in V_y(F)| s_v\in S_0(F)}}^{+\infty} \prod_{v\in V_y(F)|s_v\in S_0(F)}(n_v)^{-|s_v|}\int_{\Delta_F\setminus S_0(F)}\prod_{n=1}^{N_F}\prod_{s_v\in S_n(F)} d\omega_{s_v}\prod_{v\in V_y(F)|s_v\in S_1(F)}(z_v)^{\sum_{\substack{v'\in V_y(F)\\v'\succ v}}n_{v'}}
 \end{equation}
 where $v'\succ v$ means that $v'$ is a descendant of $v$ that is distinct from $v$ 
 and where we have set 
 \begin{align*}
  & [0,1]^{|V(F)|-||S_0(F)||}\ni(z_{v_1},\cdots,z_{v_p})\in\Delta_F\setminus S_0(F) \\
  :\Longleftrightarrow & \left(\{v_1,\cdots,v_p\}=V(F)\setminus\{v'\in s_v|s_v \in S_0(F)\}~\wedge~(v_i\preceq v_j \Leftrightarrow z_{v_j}\leq z_{v_i})\right).
 \end{align*}
 Now for any $k\in\{0,\cdots,N_F-1\}$ we set 
 \begin{align*}
  & [0,1]^{|V(F)|-||S^k(F)||}\ni(z_{v_1},\cdots,z_{v_p})\in\Delta_F\setminus S^k(F) \\
  :\Longleftrightarrow & \left(\{v_1,\cdots,v_p\}=V(F)\setminus \{v'\in s_v|s_v \in S^k(F)\}~\wedge~(v_i\preceq v_j \Leftrightarrow z_{v_j}\leq z_{v_i})\right)
 \end{align*}
 (notice that we replaced $S_0(F)$ by $S^k(F)$). Let us prove by induction over $k$ that for any $k\in \{0,\cdots,N_F-1\}$ we have
 \begin{align} \label{eq:step_rec}
  \zeta_\shuffle^T(F) & = \sum_{\substack{n_v=1, \\ v\in V_y(F)| s_v\in S^k(F)}}^{+\infty} \prod_{v\in V_y(F)|s_v\in S^k(F)}\left(\sum_{\substack{v'\in S^k(F) \\ v'\succeq v}}n_{v'}\right)^{-|s_v|} \\ 
   & \times \underbrace{\int_{\Delta_F\setminus S^k(F)}\prod_{n=k+1}^{N_F}\prod_{s_v\in S_n(F)} d\omega_{s_v}\prod_{v\in V_y(F)|s_v\in S_{k+1}(F)}(z_v)^{\sum_{\substack{v'\in V_y(F) \\v'\succ v}}n_{v'}}}_{=:I_{F,k}}.  \nonumber
 \end{align}
 First, since $S_0(F)=S^0(F)$ and $\Delta_F\setminus S_0(F)=\Delta_F\setminus S^0(F)$, Equation \eqref{eq:step:one} is exactly Equation \eqref{eq:step_rec} for $k=0$. Then if $N_F=1$, we have proven Equation \eqref{eq:step_rec} in every case of interest. If $N_F\geq2$ then let us assume that Equation \eqref{eq:step_rec} holds for $k\in\{0,\cdots,N_F-2\}$. We then have, once again from the Taylor expansion of the function $x\mapsto(1-x)^{-1}$ and the dominated convergence theorem
 \begin{align*}
  I_{F,k} & = \int_{\Delta_F\setminus S^k(F)}\left(\prod_{n=k+2}^{N_F}\prod_{s_v\in S_n(F)} d\omega_{s_v}\right)\left(\prod_{s_v\in S_{k+1}(F)} d\omega_{s_v}\right)\prod_{v\in V_y(F)|s_v\in S_{k+1}(F)}(z_v)^{\sum_{\substack{v'\in V_y(F) \\v'\succ v}}n_{v'}} \\
  & = \sum_{\substack{n_v=0, \\ v\in V_y(F)| s_v\in S_{k+1}(F)}}^{+\infty} \int_{\Delta_F\setminus S^k(F)}\left(\prod_{n=k+2}^{N_F}\prod_{s_v\in S_n(F)} d\omega_{s_v}\right)  \\
  \times & \prod_{\substack{v\in V_y(F)| \\ s_v=(\vec{\alpha},v)\in S_{k+1}(F)}} \frac{d\vec{z}_{\vec{\alpha}}}{\vec{z}_{\vec\alpha}}dz_v(z_v)^{n_v} \prod_{v\in V_y(F)|s_v\in S_{k+1}(F)}(z_v)^{\sum_{\substack{v'\in V_y(F) \\v'\succ v}}n_{v'}}
 \end{align*}
 with the obvious notation that $\frac{d\vec{z}_{\vec{\alpha}}}{\vec{z}_{\vec\alpha}}$ is a product of $dz/z$. We can now merge the two last products to obtain
 \begin{equation*}
  I_{F,k} = \sum_{\substack{n_v=0, \\ v\in V_y(F)| s_v\in S_{k+1}(F)}}^{+\infty} \int_{\Delta_F\setminus S^k(F)}\left(\prod_{n=k+2}^{N_F}\prod_{s_v\in S_n(F)} d\omega_{s_v}\right) 
  \prod_{\substack{v\in V_y(F)| \\ s_v=(\vec{\alpha},v)\in S_{k+1}(F)}} \frac{d\vec{z}_{\vec{\alpha}}}{\vec{z}_{\vec\alpha}}dz_v(z_v)^{\sum_{\substack{v'\in V_y(F) \\v'\succeq v}}n_{v'}}.
 \end{equation*}
 Finally integrating the variable attached to vertices belonging to segments of depth $k$ and switching each of the summation variable by one we obtain
 \begin{equation*}
  I_{F,k} = \sum_{\substack{n_v=1, \\ v\in V_y(F)| s_v\in S_{k+1}(F)}}^{+\infty} \prod_{v\in V_y(F)|s_v\in S_{k+1}(F)}\left(\sum_{\substack{v'\in V_y(F) \\ v'\succeq v}}n_{v'}\right)^{-|s_v|}
  \underbrace{\int_{\Delta_F\setminus S^{k+1}(F)}\left(\prod_{n=k+2}^{N_F}\prod_{s_v\in S_n(F)} d\omega_{s_v}\right)}_{I_{F,k+1}}.
 \end{equation*}
 Notice that this whole computation consisted essentially of using Formula \eqref{eq:lemma_trad} for each of the segments of depth exactly $k$. In any case, plugging this expression for $I_{F,k}$ back into \eqref{eq:step_rec} we obtain exactly the same equation with $k$ replaced by $k+1$. So, by a finite induction we have proven Equation \eqref{eq:step_rec} for any $k\in\{0,\cdots,N_F\}$ for any value of $N_F\geq1$.
 
 Since we have assumed $F$ to be a connected rooted forest (i.e. a rooted tree), $F$ has exactly one segment $s_v$ of maximal depth $N_F$. Furthermore, since $F$ is convergent, we have $l:=|s_v|\geq2$. Thus, after a relabelling of $s_v$:
 \begin{equation*}
  s_v=(1,\cdots,l=v),
 \end{equation*}
 Equation \eqref{eq:step_rec} with $k=N_F-1$ reads
 \begin{align*}
  \zeta_\shuffle^T(F) & = \sum_{\substack{n_v=1, \\ v\in V_y(F)| s_v\in S^{N_F-1}(F)}}^{+\infty} \prod_{v\in s_v|s_v\in S^{N_F-1}(F)}\left(\sum_{\substack{v'\in S^{N_F-1}(F) \\ v'\succeq v}}n_{v'}\right)^{-|s_v|} \\ 
   & \times \int_{0\leq z_1<\cdots<z_l\leq1} \frac{dz_1}{z_1}\cdots\frac{dz_{l-1}}{z_{l-1}}\frac{dz_l}{1-z_l}\left(z_l\right)^{\sum_{\substack{v'\in V_y(F) \\ v'\neq v}}n_{v'}}
 \end{align*}
 Using once again Formula \eqref{eq:lemma_trad} with $Z=1$, $p=l\geq2$ and $A=\sum_{\substack{v'\in V_y(F) \\ v'\neq v}}n_{v'}$ we obtain the statement of the theorem once we write all the sums together.
\end{proof}

\begin{rk} \label{rk:AZV_TZV}
 This series representation of AZVs is \emph{not} the stuffle AZVs defined earlier.
 %Theorem A.4 of \cite{Cl20} implies that these two series applied to the same forest give in general different values. 
 Instead, this series representation of AZVs defined by iterated integrals should be seen as a new generalisation of MZVs defined as multiple series. 
 
 Notice however that this result implies Theorem \ref{thm:relation_shuffle_stuffle}. Indeed the series expression of $F=\fraks^T(f)$ given by Theorem \ref{thm:integral_sum} contains less terms that $\zeta^T_\stuffle(f)$, and strictly less if $f$ has at least one branching vertices. To see this, let us take $f=\tdtroisun{a}{c}{b}$. Then
 \begin{equation*}
  \zeta^T_\stuffle(f)=\sum_{\substack{n_1,n_2,n_3>0 \\ n_1>n_2,n_3}}\frac{1}{n_1^an_2^bn_3^b}, \qquad \zeta^T_\shuffle(\fraks(f))=\sum_{m_1,m_2,m_3>0}\frac{1}{(m_1+m_2+m_3)^am_2^bm_3^b}.
 \end{equation*}
 If we change the varibles in $\zeta^T_\shuffle(\fraks(f))$ to $n_3=m_3$, $n_2=m_2$ and $n_1=m_1+m_2+m_3$, the summand is the same that the one of $\zeta^T_\stuffle(f)$, but the condition is $n_1>n_2+n_3$. For example, terms like $(n_1,n_2,n_3)=(3,2,1)$ appears in $\zeta^T_\stuffle(f)$ but not in $\zeta^T_\shuffle(\fraks(f))$. This is true for any rooted forest with at least one branching vertex: we have precisely Theorem \ref{thm:relation_shuffle_stuffle}.

\end{rk}

\section{Tree zeta values} \label{section:TZVs}

Theorem \ref{thm:integral_sum} motivates a new generalisation of MZVs to trees. This section is devoted to studying this generalisation.

\subsection{Definition and first properties}

Let us first state the definition of these new iterated sums without taking care of their convergence.
\begin{defn} \label{defn:tree_zeta_values}
 For an $\N^*$-decorated rooted forest $F$, whenever it exists, let 
 \begin{equation} \label{eq:TZV}
  \zeta^t(F) := \sum_{\substack{ n_v\geq1\\v\in V(F)}}\prod_{v\in V(F)}\left(\sum_{\substack{v'\in V(F)\\ v'\succeq v}}n_{v'}\right)^{-\alpha_v}
 \end{equation}
 (with $\alpha_v=d_F(v)\in\N^*$ the decoration of the vertex $v$ and, as before, $v\in V(F)$ in the first sum means that this series has a summation variable for each $v\in V(F)$) be the {\bf tree zeta value} (TZV) associated to $F$. $\zeta^t$ is then extended by linearity to a linear map defined on a subset of $\calF_{\N^*}$.
\end{defn}
\begin{rk} 
In a private communication, F. Zerbini  suggested that tree zeta values could be of interest, and in particular that they were likely to be MZVs. I am thankful for this input.
\end{rk}
Here are some simple examples of TZVs, one of which was discussed in Remark \ref{rk:AZV_TZV}:
\begin{equation*}
 \zeta^t(\tdun{n})=\zeta_\stuffle(n),\qquad \zeta^t(\tdtroisun{$2$}{$1$}{$2$})=\sum_{n_1,n_2,n_3\geq1}\frac{1}{(n_1+n_2+n_3)^2n_2(n_3)^2}.
\end{equation*}
\begin{rk}
 As already pointed out, a simple change of summing variables in the interated series of Equation \eqref{eq:MZV_stuffle1} allows to rewrite stuffle MZVs as
 \begin{equation*}
  \zeta_\stuffle(n_1\cdots n_k)=\sum_{m_1,\cdots,m_k\geq1}\frac{1}{(m_1+\cdots+m_k)^{n_1}(m_2+\cdots+m_k)^{n_2}\cdots(m_k)^{n_k}}.
 \end{equation*}
 This is the expression of a tree zeta value associated to a ladder tree decorated by $n_1,\cdots,n_k$. Therefore, TZVs are indeed a new generalisation of MZVs to rooted forests.
\end{rk}
Recall from Definition \ref{defn:branched_bin_map} that the branched binarisation map $\fraks^T:\calF_{\N^*}\longrightarrow\calF_{\{x,y\}}$ is the the unique morphism of $\N^*$-operated algebras given by the universal property of $\calF_{\N^*}$. Roughly speaking, it can be understood as mapping a \ty{vertex} decorated by $n$ to a segment of length $n$ (in the sense of Definition \ref{defn:segment}).
We then have a simple but important result which is the justification of the definition of TZVs:
 \begin{prop} \label{prop:crucial}
  For any convergent $\N^*$-decorated rooted forest $F$, the tree zeta value $\zeta^t(F)$ associated to $F$ is convergent and is equal to the arborified zeta values associated to the convergent $\{x,y\}$-decorated rooted forest $\fraks^T(F)$:
  \begin{equation*}
   \zeta^t(F) = \zeta^T_\shuffle(\fraks^T(F)).
  \end{equation*}
 \end{prop}
 \begin{proof}
  The result follows from the simple observation that, for any convergent $\{x,y\}$-decorated rooted forest $f$ the series representation of $\zeta_\shuffle^T(f)$ given by Theorem \ref{thm:integral_sum} is precisely the tree zeta value $\zeta^t((\fraks^T)^{-1}(f))$. The convergence of $\zeta^t(F)$ for any convergent $\N^*$-decorated rooted forest $F$ then follows from the facts that $\fraks^T$ is a one to one map between the two sets of convergent rooted forests and that the arborified zeta value $\zeta_\shuffle^T(f)$ converges for any convergent $\{x,y\}$-decorated rooted forest $f$.
 \end{proof}
 
 From this Proposition, one easily derives important properties of tree zeta values from the properties of branched zeta values.
 \begin{theo} \label{thm:tree_zeta_MZVs}
  \begin{itemize}
   \item The map $\zeta^t:F\mapsto\zeta^t(F)$ is an algebra morphism from convergent $\N^*$-decorated rooted forests to $\R$ for the concatenation product of forests.
   \item For any convergent $\N^*$-decorated rooted forests $F$, the tree zeta value $\zeta^t(F)$ is a $\Q$-linear combination of MZVs, given by
   \begin{equation*}
    \zeta^t(F) = (\zeta_\shuffle\circ fl_0\circ\fraks^T)(F).
   \end{equation*}
  \end{itemize}
 \end{theo}
 \begin{proof}
  Both points follow directly from Proposition \ref{prop:crucial}
  \begin{itemize}
   \item The first point follows from the fact that $\zeta^t=\zeta_\shuffle^T\circ\fraks^T$ together with the fact that both $\zeta_\shuffle^T$ and $\fraks^T$ are algebra morphisms for the concatenation product of rooted forests.
   \item The second point follows from the same relation $\zeta^t=\zeta_\shuffle^T\circ\fraks^T$ together with Theorem \ref{thm:main_result_shuffle} which states that $\zeta_\shuffle^T=\zeta_\shuffle\circ fl_0$.
  \end{itemize}
 \end{proof}
 \begin{rk}
  Finite sums similar to TZVs were studied in \cite{On16} in the context of finite MZVs. In particular, \cite[Theorem 1.4]{On16} is the equivalent to Theorem \ref{thm:tree_zeta_MZVs} in the context of finite sums.
 \end{rk}
  Notice that $\zeta^t$ is \emph{not} an algebra morphism for the shuffle product of trees, since $\fraks^T$ is not. It can also be checked on computations that it neither is an algebra morphism for any of the stuffle products of rooted forests.  The rest of this Section is dedicated to the study of the algebraic properties of TZVs.
  
   \subsection{The Upsilon product}
 
 The $\yew$ product is an alternative product which will have interesting applications to TZVs as well as their applications to CZVs.
 \begin{defn} \label{defn:yew}
  The $\yew$ product (read ``Upsilon'')\footnote{which was the rune ``yew'' in the paper \cite{Cl20}, but the rune package seems to bring more issues than it is worth for long documents, and I have chosen to modify it here.} is a product $\yew:\calF_{\N^*}\otimes\calF_{N^*}\longrightarrow\calF_{\N^*}$ defined by
  \begin{equation*}
   \yew=(\s^T)^{-1}\circ\shuffle^T\circ(\s^T\otimes\s^T).
  \end{equation*}
 \end{defn}
 \begin{rk} \label{rk:yew_words}
  In some applications, we will use the $\yew$ product on words: for two words $w$ and $w'$ we will write $w\yew w'$ instead of $\iota^{-1}(\iota(w)\yew\iota(w'))$ (with $\iota$ the canonical injection of words into rooted forests). We allow ourselves to make this small abuse of notation as it should not give rise to any confusion and will allow to greatly lighten the notations.
 \end{rk}
 We start by a simple but important property of the $\yew$ product.
 \begin{prop} \label{prop:properties_yew}
  The $\yew$ product is commutative and unital, but not associative.
 \end{prop}
 \begin{proof}
  The commutativity directly follows from the commutativity of $\shuffle^T$. The non associativity as well and can be check on examples: take $F_1 = \tdun{2} \tdun{2}$, $F_2 = \tdun{2}$ and $F_3 = \tdun{2}$. After some computations we obtain:
	\begin{align*}
	  (F_1 \yew F_2) \yew F_3 &- F_1 \yew (F_2 \yew F_3) = 
	 2 \times \tddeux{2}{2} \tddeux{2}{2} + 8 \times \tddeux{2}{2} \tddeux{3}{1} + 8 \times \tddeux{3}{1} \tddeux{3}{1} \\
	  &- \left[ 3 \times \tdun{2} \tdtrois{2}{2}{2} + 6 \times \tdun{2}
\tdtrois{3}{1}{2} + 6 \times \tdun{2} \tdtrois{2}{3}{1}
+ 12 \times \tdun{2} \tdtrois{3}{2}{1} + 18 \times \tdun{2}
\tdtrois{4}{1}{1} \right].
	\end{align*}
	The fact that $\yew$ is unital follows directly from the facts that $\fraks^T(\emptyset)=\emptyset$ and that $\emptyset$ is the neutral element for $\shuffle^T$.
 \end{proof}
 We will now work out an explicit formula for the $\yew$ product. We start with two technical Lemmas about the shuffle product of rooted forests decorated by $\{x,y\}$.
  
 \begin{lemma} \label{lem:lem de lem pour yew}
  For any rooted forests $(f,g)\in\calF_{\{x,y\}}^2$ and any $n\in\N$ we have:
  \[ \big( (B_+^x)^{\circ n}\circ  B_+^y(f) \big) \shuffle^T B_+^y(g) 
	=  (B_+^x)^{\circ n}\circ  B_+^y \big[ f \shuffle^T B_+^y(g) \big] + \sum_{j=0}^n (B_+^x)^{\circ j}\circ  B_+^y \Big[ \big( (B_+^x)^{\circ(n-j)}  B_+^y(f) \big) \shuffle^T g \Big].  \]
 \end{lemma}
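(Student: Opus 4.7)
The plan is to proceed by induction on $n \geq 0$, directly unrolling the recursive definition of the shuffle product of trees (with $\lambda = 0$) given in Definition \ref{defn:shuffle_tree}. Recall that for two nonempty trees $T = B_+^a(\tilde f)$ and $T' = B_+^{a'}(\tilde f')$ one has $T \shuffle^T T' = B_+^a(\tilde f \shuffle^T T') + B_+^{a'}(T \shuffle^T \tilde f')$.

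For the base case $n = 0$, applying this recursion to $T = B_+^y(f)$ and $T' = B_+^y(g)$ gives
\[ B_+^y(f) \shuffle^T B_+^y(g) = B_+^y\bigl(f \shuffle^T B_+^y(g)\bigr) + B_+^y\bigl(B_+^y(f) \shuffle^T g\bigr), \]
which is exactly the right-hand side of the claim with $n = 0$, as the sum on $j$ collapses to its only term $j = 0$.

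For the inductive step, suppose the formula holds at $n$. Writing $(B_+^x)^{\circ(n+1)}\circ B_+^y(f) = B_+^x\bigl((B_+^x)^{\circ n}\circ B_+^y(f)\bigr)$, the shuffle recursion applied to the trees $B_+^x(\cdot)$ and $B_+^y(g)$ yields
\[ \bigl((B_+^x)^{\circ(n+1)}\circ B_+^y(f)\bigr) \shuffle^T B_+^y(g) = B_+^x\Bigl[\bigl((B_+^x)^{\circ n}\circ B_+^y(f)\bigr) \shuffle^T B_+^y(g)\Bigr] + B_+^y\Bigl[\bigl((B_+^x)^{\circ(n+1)}\circ B_+^y(f)\bigr) \shuffle^T g\Bigr]. \]
Now plug the induction hypothesis into the bracket after the outer $B_+^x$; since $B_+^x$ is linear, one gets
\[ (B_+^x)^{\circ(n+1)}\circ B_+^y\bigl[f \shuffle^T B_+^y(g)\bigr] + \sum_{j=0}^n (B_+^x)^{\circ(j+1)}\circ B_+^y\Bigl[\bigl((B_+^x)^{\circ(n-j)} B_+^y(f)\bigr) \shuffle^T g\Bigr]. \]
Reindex this sum by $j \mapsto j-1$ so it runs over $j \in \{1,\dots,n+1\}$, and observe that the second summand from the recursion step, namely $B_+^y\Bigl[\bigl((B_+^x)^{\circ(n+1)}\circ B_+^y(f)\bigr) \shuffle^T g\Bigr]$, is exactly the missing $j = 0$ term of the full sum over $\{0,\dots,n+1\}$. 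Combining these gives the claim at rank $n+1$.

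There is no real obstacle here: the proof is a direct bookkeeping induction. The only point requiring some attention is the shift of summation index together with the identification of the extra term coming from the outer shuffle recursion as the $j = 0$ contribution, which together produce the full sum from $0$ to $n+1$.
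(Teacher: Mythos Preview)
Your proof is correct and follows essentially the same approach as the paper: induction on $n$, with the base case read off directly from the recursive definition of $\shuffle^T$, and the inductive step obtained by applying the recursion once to peel off the outer $B_+^x$, inserting the induction hypothesis, and reindexing the sum to absorb the extra $B_+^y$ term as the $j=0$ contribution.
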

\begin{proof}
 We prove this result by induction on $n$. For $n=0$, the lemma holds by definition of the shuffle product $\shuffle^T$. Let us assume it holds for some $n\in\N$. We then have
 \begin{align*}
	& \big( (B_+^x)^{\circ(n+1)}\circ B_+^xy(f) \big) \shuffle^T B_+^y(g) \\
	= & B_+^x\circ \Big[ \big( (B_+^x)^{\circ n}\circ B_+^y (f) \big) \shuffle^T B_+^y(g) \Big] + B_+^y \Big[ \big( (B_+^x)^{\circ(n+1)}\circ  B_+^y (f) \big) \shuffle^T g \Big]  \\
		= & B_+^x \bigg( (B_+^x)^{\circ n}\circ B_+^y \big[ f \shuffle^T B_+^y(g) \big] + \sum_{j=0}^a (B_+^x)^{\circ j}\circ B_+^y \Big[ \big( (B_+^x)^{\circ(n-j)}\circ B_+^y(f) \big) \shuffle^T g \Big] \bigg)
         \\
        & \hspace{4cm}\qquad +  B_+^y \Big[ \big( (B_+^x)^{\circ(n+1)}\circ B_+^y (f) \big) \shuffle^T g \Big]  \text{ by the induction hypothesis} \\
% 		=&  (B_+^x)^{\circ(n+1)}\circ B_+^y \big[ f \shuffle^T B_+^y(g) \big] + \sum_{j=0}^n (B_+^x)^{\circ(j+1)}\circ B_+^y \Big[ \big( (B_+^x)^{\circ(n-j)}\circ B_+^y(f) \big) \shuffle^T g \Big] + B_+^y\circ \Big[ \big( (B_+^x)^{\circ(n+1)}\circ B_+^y (f) \big) \shuffle^T g \Big] \\
		&= (B_+^x)^{\circ(n+1)}\circ B_+^y \big[ f \shuffle^T B_+^y(g) \big] + \sum_{j=0}^{n+1} (B_+^x)^{\circ j}\circ B_+^y \Big[ \big( (B_+^x)^{\circ(n+1-j)} \circ B_+^y(f) \big) \shuffle^T g \Big]
	\end{align*}
	which conclude the induction and the proof.
\end{proof}
We generalise this result:
\begin{lemma} \label{lem:yew_inter}
 For any rooted forests $(f,g)\in\calf_{\{x,y\}}^2$ and any $(n,m)\in\N^2$ we have:
	\begin{align*}
	\big( (B_+^x)^{\circ n}\circ  B_+^y(f) \big) \shuffle^T \big( (B_+^x)^{\circ m}\circ  B_+^y(g) \big) &
	=  \sum_{i=0}^{m} \binom{n+i}{i} (B_+^x)^{\circ(n+i)}\circ B_+^y \left[ f \shuffle^T \big( (B_+^x)^{\circ(m-i)}\circ B_+^y(g) \big) \right] \\
	 +& \sum_{j=0}^{n} \binom{m+j}{j} (B_+^x)^{m+j}\circ B_+^y \left[ \big( (B_+^x)^{\circ(n-j)}\circ B_+^y(f) \big) \shuffle^T g \right].
	\end{align*}
\end{lemma}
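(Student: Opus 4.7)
The natural strategy is a double induction, cleanest phrased as induction on $N := n+m$, with Lemma \ref{lem:lem de lem pour yew} playing the role of a ``warm-up'' that already does the case $m = 0$ (and, by commutativity of $\shuffle^T$ from Proposition \ref{prop:shuffles_trees}, the case $n=0$). The base case $n = m = 0$ amounts to the recursive defining identity
\[ B_+^y(f) \shuffle^T B_+^y(g) = B_+^y\!\left(f \shuffle^T B_+^y(g)\right) + B_+^y\!\left(B_+^y(f) \shuffle^T g\right), \]
which matches the claimed formula term by term.

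For the inductive step I would split into three cases. When $n, m \geq 1$ (the generic case), apply the recursive shuffle identity on the two trees, both of which have root decorated by $x$, to obtain
\[ \big((B_+^x)^{\circ n} B_+^y(f)\big) \shuffle^T \big((B_+^x)^{\circ m} B_+^y(g)\big) = B_+^x\Big[\big((B_+^x)^{\circ (n-1)} B_+^y(f)\big) \shuffle^T (B_+^x)^{\circ m} B_+^y(g)\Big] + B_+^x\Big[(B_+^x)^{\circ n} B_+^y(f) \shuffle^T \big((B_+^x)^{\circ (m-1)} B_+^y(g)\big)\Big]. \]
Both inner shuffles have total index sum $N$ and so the induction hypothesis applies to each. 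After expanding, relabelling summation indices (shifting $i \mapsto i+1$, resp.\ $j \mapsto j+1$, in the appropriate sums), and then using Pascal's rule $\binom{n-1+i}{i-1} + \binom{n-1+i}{i} = \binom{n+i}{i}$ on the $f$-indexed sums, and the analogous identity on the $g$-indexed sums, all the terms collapse into exactly the RHS claimed. The two remaining cases, $n = 0$ with $m \geq 1$ and $n \geq 1$ with $m = 0$, are handled similarly but using the mixed-root shuffle identity (root $y$ on one side, root $x$ on the other); the case $m = 0$ is in fact Lemma \ref{lem:lem de lem pour yew} itself, and $n=0$ follows either directly by peeling off the $x$ from the right tree or from $m=0$ via commutativity of $\shuffle^T$.

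The main obstacle is purely combinatorial bookkeeping: the induction spawns four sums (two from each of the two new shuffles), with indices shifted by one relative to the target, and one must verify that after the shift of variables each pair of overlapping coefficients collapses by Pascal's rule to the single binomial prescribed by the statement. The boundary terms of the shifted sums (corresponding to $i=0$ or $j=0$ and to $i=m$ or $j=n$) vanish with the convention $\binom{\cdot}{-1}=0$, so no spurious pieces appear. Everything else is mechanical, which is why I would present only the generic case $n, m \geq 1$ in detail and dispose of the boundary cases with a sentence appealing to Lemma \ref{lem:lem de lem pour yew} and to commutativity.
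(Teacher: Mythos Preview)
Your proposal is correct and follows essentially the same approach as the paper: induction on $n+m$, with the base case given by the defining recursion of $\shuffle^T$, the boundary cases $n=0$ or $m=0$ handled by Lemma~\ref{lem:lem de lem pour yew} (plus commutativity), and the generic case $n,m\geq 1$ treated by peeling off one $B_+^x$ from each side, applying the induction hypothesis to the two resulting shuffles, and recombining via Pascal's rule. The paper's write-up is slightly more explicit about the intermediate expressions but the argument is the same.
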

\begin{proof}
 We prove this result by induction on $n+m$. If $n+m=0$, then it holds by definition of the shuffle product of rooted trees. Assume that the result hold for some $N=n+m$. Then for any $(n,m)\in\N^2$ such that $n+m=N+1$, if $n=0$ or $m=0$ the result reduces to the previous Lemma \ref{lem:lem de lem pour yew} (eventually using the commutativity of $\shuffle^T$). For $n\neq 0$ and $m\neq0$ we have
 \begin{align*}
  \big( (B_+^x)^{\circ n}\circ B_+^y(f) \big) \shuffle^T \big( (B_+^x)^{\circ m}\circ B_+^y(g) \big) & = B_+^x\circ \Big( \big( (B_+^x)^{\circ(n-1)}\circ B_+^y(f) \big) \shuffle^T \big( (B_+^x)^{\circ m}\circ B_+^y(g) \big) \Big) \\ 
	 + & B_+^x\circ \Big( \big( (B_+^x)^{\circ n}\circ B_+^y(f) \big) \shuffle^T \big( (B_+^x)^{\circ(m-1)}\circ B_+^y (g) \big) \Big).
 \end{align*}
 Using the induction hypothesis we obtain:
 \begin{align*}
  & B_+^x\circ \Big( \big( (B_+^x)^{\circ(n-1)}\circ B_+^y(f) \big) \shuffle^T \big( (B_+^x)^{\circ m}\circ B_+^y(g) \big) \Big) \\
  = & B_+^x \Bigg( 
  \sum_{i=0}^{m} \binom{n-1+i}{i} (B_+^x)^{\circ(n-1+i)}\circ B_+^y \left[ f \shuffle^T \big( (B_+^x)^{\circ(m-i)}\circ B_+^y(g) \big) \right] \\
  & \hspace{5cm}+ \sum_{j=0}^{n-1} \binom{m+j}{j} (B_+^x)^{m+i}\circ B_+^y \left[ \big( (B_+^x)^{\circ(n-1-j)}\circ B_+^y(f) \big) \shuffle^T g \right] \Bigg) \\
%   = & \sum_{i=0}^{b} \binom{a-1+i}{i} B^{a+i}_x B_y \left[ f \shuffle^T \big( B^{b-i} B_y(g) \big) \right] \\
% 	&\hspace*{5mm} + \sum_{j=0}^{a-1} \binom{b+j}{j} B^{b+1+j}_x B_y \left[ \big( B^{a-1-j}_x B_y(f) \big) \shuffle^T g \right] \\
  = & \sum_{i=0}^{m} \binom{n-1+i}{i} (B_+^x)^{\circ(n+i)}\circ B_+^y \left[ f \shuffle^T \big( (B_+^x)^{\circ(m-i)}\circ B_+^y(g) \big) \right] \\
	&\hspace{5cm}
	+ \sum_{j=1}^{n} \binom{m-1+j}{j-1} (B_+^x)^{\circ(m+j)}\circ B_+^y \left[ \big( (B_+^x)^{n-j}\circ B_+^y(f) \big) \shuffle^T g \right].
 \end{align*}
 Similarly we obtain:
 \begin{align*}
  & B_+^x\circ \Big( \big( (B_+^x)^{\circ n}\circ B_+^y(f) \big) \shuffle^T \big( (B_+^x)^{\circ(m-1)}\circ B_+^y (g) \big) \\
  = & \sum_{i=1}^{m} \binom{n-1+i}{i-1} (B_+^x)^{\circ(n+i)}\circ B_+^y \left[ f \shuffle^T \big( (B_+^x)^{\circ(m-i)} B_+^y(g) \big) \right] \\
	&\hspace{5cm} + \sum_{j=0}^{n} \binom{m-1+j}{j}  (B_+^x)^{\circ(m+j)}\circ B_+^y  \left[ \big( (B_+^x)^{\circ(n-j)}\circ B_+^y(f) \big) \shuffle^T g \right].
 \end{align*}
 Summing these two expressions and using Pascal's triangle (and the fact that $\binom{p-1}{0}=\binom{p}{0}$) gives the result, which concludes the induction and the proof.
\end{proof}
  \begin{rk}
 This result can be understood in a purely combinatorial way. Each tree in this shuffle has to have at least min$\{n,m\}$ $x$'s decorating its root and its descendants. These decorations can come either from the term $(B_+^x)^{\circ n}\circ B_+^y(f)$ or from the term $(B_+^x)^{\circ m}\circ  B_+^y(g)$. The binomial coefficients come from all the possibilities the $x$'s have to be chosen.
\end{rk}
We can now prove the main result of this Subsection, which gives an inductive formula to directly compute the $\yew$ product of two rooted forests, without references to the shuffle $\shuffle^T$ nor the binarisation map $\s^T$.
\begin{theo} \label{thm:yew_formula}
 The $\yew$ product admits the following inductive description:
 \begin{itemize}
	\item For any forest $F\in\calf_{\N^*}$,  $F \yew \emptyset = \emptyset \yew F = F$. 
	\item For any rooted trees $T_1 = B_+^n (f_1)$ and $T_2 = B_+^m(f_2)$:
	$$T_1 \yew T_2 = \sum_{i=0}^{m-1} \binom{n-1+i}{i} B_+^{n+i}\left[ f_1 \yew B_+^{m-i}(f_2) \right]  + \sum_{j=0}^{n-1} \binom{m-1+j}{j} B_+^{m+j} \left[ B_+^{n-j}(f_1) \yew f_2 \right].$$
	\item For any rooted forests $F_1 = T_1\cdots T_n$ and $F_2 = t_1\cdots t_k$:
	\[ F_1 \yew F_2 = \frac{1}{kn} \sum_{i=1}^k \sum_{j=1}^n \left(  (T_i \yew t_j) T_1 \cdots \widehat{T_i} \cdots T_k t_1 \cdots \widehat{t_j} \cdots t_n \right). \]
	\end{itemize}
\end{theo}
\begin{proof}
 \begin{itemize}
  \item The first point follows directly from the facts that $\s^T(\emptyset)=\emptyset$ and that $\emptyset$ is the unique neutral element for the shuffle product $\shuffle^T$.
  \item For the second point, let $T_1 = B_+^n (f_1)$ and $T_2 = B_+^m(f_2)$ be as in the Theorem. Then
  \begin{align*}
	& T_1 \yew T_2 = (\s^T)^{-1} ( \s^T(T_1) \shuffle^T \s^T(T_2))\qquad\text{by definition of }\yew  \\
    = & (\s^T)^{-1} \left( (B_+^x)^{\circ(n-1)}\circ B_+^y (\s^T(f_1)) \shuffle^T (B_+^x)^{\circ(n-1)}\circ B_+^y (\s^T(f_2)) \right) \qquad\text{by definition of }\s^T\\
    = & (\s^T)^{-1} \Bigg( \sum_{i=0}^{m-1} \binom{n-1+i}{i} (B_+^x)^{\circ(n-1+i)}\circ B_+^y \left[ \s^T(f_1) \shuffle^T (B_+^x)^{\circ(m-1-i)}\circ B_+^y (\s^T(f_2)) \right]  \\ 
		& + \sum_{j=0}^{n-1} \binom{m-1+j}{j} (B_+^x)^{\circ(m-1+j)}\circ B_+^y \left[ (B_+^x)^{\circ(n-1-j)}\circ B_+^y (\s^T(f_1)) \shuffle^T \s^T(f_2) \right] \Bigg)  ~ \text{by Lemma \ref{lem:yew_inter}} \\
    = & \sum_{i=0}^{m-1} \binom{n-1+i}{i} B_+^{n+i} \left( (\s^T)^{-1}\left[ \s^T(f_1) \shuffle^T \s^T(B_+^{m-i}f_2) \right] \right) \\ 
		& + \sum_{j=0}^{n-1} \binom{m-1+j}{j} B_+^{m+j} \left( (\s^T)^{-1} \left[\s^T( B_+^{n-j} f_1) \shuffle^T \s^T(f_2) \right] \right) \qquad\text{by definition of }(\s^T)^{-1} \\
    = & \sum_{i=0}^{m-1} \binom{n-1+i}{i} B_+^{n+i}\left[ f_1 \yew B_+^{m-i}(f_2) \right]  + \sum_{j=0}^{n-1} \binom{m-1+j}{j} B_+^{m+j} \left[ B_{n-j}(f_1) \yew f_2 \right]
	\end{align*}
	by definition of $\yew$.
  \item Finally, let $F_1 = T_1\cdots T_n$ and $F_2 = t_1\cdots t_k$ be as in the Theorem. Then 
  \begin{align*}
	& F_1 \yew F_2 = (\s^T)^{-1}( s^T(F_1) \shuffle^T \s^T(F_2)) \qquad\text{by definition of }\yew  \\
    = & (\s^T)^{-1} \Bigg( \frac{1}{kn} \sum_{i=1}^k \sum_{j=1}^n \Big(  (\s^T(T_i) \shuffle^T \s^T(t_j)) \s^T(T_1) \cdots \widehat{\s^T(T_i)} \cdots \s^T(T_k) \s^T(t_1) \cdots \widehat{\s^T(t_j)} \cdots \s^T(t_n) \Big) \Bigg) \\
    & \hspace{6cm}\llcorner\text{since }\s^T\text{ is an algebra morphism and by definition of }\shuffle^T \\
% 		&= \frac{1}{kn} \sum_{i=1}^k \sum_{j=1}^n \left(  (\s^T)^{-1}\left( (\s^T(T_i) \shuffle^T \s^T(t_j) \right) T_1 \cdots \widehat{T_i} \cdots T_k t_1 \cdots \widehat{t_j} \cdots t_n \right) \\
		&= \frac{1}{kn} \sum_{i=1}^k \sum_{j=1}^n \left(  (T_i \yew t_j) T_1 \cdots \widehat{T_i} \cdots T_k t_1 \cdots \widehat{t_j} \cdots t_n \right).
	\end{align*}
 \end{itemize}
\end{proof}
This result allows reasonably fast computations of $\yew$ products of rooted forests, especially in cases with few branchings. Let us illustrate this with some examples.
  \begin{example} \label{ex:yew_products}
We start with an explicit computation:
 \begin{align*}
		\tdun{2} \yew \tddeux{3}{1} 
		&= B_+^2( \tddeux{3}{1}) + 2. B_+^3( \tddeux{2}{1}) + 3. B_+^4( \tddeux{1}{1}) + B_+^3( \tdun{2} \yew \tdun{1}) + 3. B_+^4( \tdun{1} \yew \tdun{1}) \\
		&= \tdtrois{2}{3}{1} + 2. \tdtrois{3}{2}{1} + 3. \tdtrois{4}{1}{1} + B_+^3 \left[ B_+^2(\tdun{1}) + B_+^1( \tdun{2}) + B_+^2( \tdun{1}) \right] + 3. B_+^4 \left[ B_+^1( \tdun{1}) + B_+^1( \tdun{1}) \right] \\
% 		&= \tdtrois{2}{3}{1} + 2. \tdtrois{3}{2}{1} + 3. \tdtrois{4}{1}{1} + \tdtrois{3}{2}{1} + \tdtrois{3}{1}{2} + \tdtrois{3}{2}{1} + 6.\tdtrois{4}{1}{1} \\
		&= \tdtrois{2}{3}{1} + 4. \tdtrois{3}{2}{1} + 9. \tdtrois{4}{1}{1} + \tdtrois{3}{1}{2}
	\end{align*}
 The same computation can be carried out with arbitrary coefficients. We obtain
 \begin{align*}
  \tdun{m} \yew \tddeux{n}{k} = & \sum_{i=0}^{m-1}\binom{n-1+i}{i}\left[\sum_{i'=0}^{k-1}\binom{m-i-1+i'}{i'}\tdtrois{n+i}{m-i+i'}{k-i'} + \sum_{j'=0}^{m-i-1}\binom{k-1+j'}{j'}\tdtrois{n+i}{k+j'}{m-i-j'}\right] % \\ 
%   & 
+ \sum_{j=0}^{n-1}\binom{m-1+j}{j}\tdtrois{m+j}{n-j}{k}.
 \end{align*}
 We can carry further the computation. For example, one finds
 \begin{align*}
  \tddeux{n}{k}\yew\tddeux{m}{l} & =  \sum_{a=0}^{m-1}\binom{n-1+a}{a}
  \left\{ 
  \sum_{i=0}^{k-1}\binom{m-a-1+i}{i}
  \left[
  \sum_{i'=0}^{l-1} \binom{k-i-1+i'}{i'}
  \tdquatre{n+a}{m-a+i}{k-i+i'}{l-i'} + \sum_{j'=0}^{k-i-1}\binom{l-1+j'}{j'}\tdquatre{n+a}{m-a+i}{l+j'}{k-i-j'}~
  \right]\right. \\
  + & \left.\sum_{j=0}^{m-a-1}\binom{k-1+j}{j}\tdquatre{n+a}{k+j}{m-a-j}{l}\right\} + \sum_{b=0}^{n-1}\binom{m-1+b}{b}\left\{\sum_{i=0}^{l-1}\binom{n-b-1+i}{i}\left[\sum_{i'=0}^{k-1}\binom{l-i-1+i'}{i'}\tdquatre{m+b}{n-b+i}{l-i+i'}{k-i'}    \right.\right.  \\
  & +  \left.\left. \sum_{j'=0}^{l-i-1}\binom{k-1+j'}{j'}\tdquatre{m+b}{n-b+i}{k+j'}{l-i-j'}~\right] + \sum_{j=0}^{n-b-1}\binom{l-1+j}{j}\tdquatre{m+b}{l+j}{n-b-j}{k}\right\}.
 \end{align*}

\end{example}
 
 \subsection{The Upsilon product and TZVs}
 
  We will now show that TZVs form an algebra morphism for the $\yew$ product. We first need a simple property.
 \begin{prop} \label{prop:yew_stabilises_conv_forests}
  The $\yew$ product stabilises $\calf_{\N^*}^{\rm conv}$ i.e., for any $F_1,F_2\in\calf_{\N^*}^{\rm conv}$, $F_1\yew F_2\in \calf_{\N^*}^{\rm conv}$.
 \end{prop}
 \begin{proof}
  The result follows from the observation that $\s^T(\calf_{\N^*}^{\rm conv})=\calf_{\{x,y\}}^{\rm conv}$,  (\cite[Lemma A.3]{Cl20}), the fact that $\shuffle^T$ stabilises $\calf_{\{x,y\}}^{\rm conv}$ (\cite[Lemma 5.9]{Cl20}) and the obvious fact that $\s^T$ is a one-to-one map.
 \end{proof}
  Now our main result result makes sense.
 \begin{theo} \label{thm:yew_TZVs}
  The map $\zeta^t:\calf_{\N^*}^{\rm conv}\longrightarrow\R$ is an algebra morphism for the $\yew$ product:
  \begin{equation*}
   \forall F_1,F_2 \in \calf_{\N^*}^{\rm conv}, \quad \zeta^t \left( F_1 \yew F_2 \right) = \zeta^t(F_1) \zeta^t(F_2).
  \end{equation*}
 \end{theo}
 \begin{proof}
  For any $F_1,F_2 \in \calf_{\N^*}^{\rm conv}$ we have
  \begin{align*}
		\zeta^t(F_1 \yew F_2) &= \zeta_\shuffle^T( \s( F_1 \yew F_2) ) &\text{by Theorem \ref{thm:integral_sum}}\\
			&= \zeta_\shuffle^T( \s(F_1) \shuffle^T \s(F_2)) 	&\text{by definition of }\yew \\
			&= \zeta_\shuffle^T(\s(F_1)) \zeta_\shuffle^T(\s(F_2))	&\text{since }\zeta_\shuffle^T\text{ is an algebra morphism for }\shuffle^T\\
			&= \zeta^t(F_1) \zeta^t(F_2)	&\text{by Theorem \ref{thm:integral_sum}}
	\end{align*}
 \end{proof}
 As for the shuffle product $\shuffle^T$, the non-associativity of the $\yew$ product implies the existence of relations amongst TVZs that have no direct equivalent for MZVs. These relations follow directly from the previous theorem and the associativity of the product on $\R$. More precisely, we have
 \begin{cor}
  The image of the associator of $\yew$ restricted to $\calf_{\N^*}^{\rm conv}$ lies in $\ker(\zeta^t)$ i.e., for any $F_1,F_2$ and $F_3$ in $\calf_{\N^*}^{\rm conv}$ we have
  \begin{equation*}
   (F_1 \yew F_2) \yew F_3 - F_1 \yew (F_2 \yew F_3) \in \ker(\zeta^t).
  \end{equation*}
 \end{cor}
 Interestingly, these are not the only relations between TZVs and the $\yew$ product. Indeed, \ty{the latter} allows to relate $\zeta^t$ directly to the stuffle MZV map $\zeta_\stuffle$. We start by an important proposition which rely upon the associativity of the shuffle product of words $\shuffle$. 
 
 In order to state this result, recall that a {\bf ladder tree} decorated by a set $\Omega$ is a rooted tree of the form $l=B_+^{\omega_1}\circ\cdots\circ B_+^{\omega_n}(\emptyset)$ for some $n\in\N^*$ and $(\omega_1,\cdots,\omega_n)\in\Omega^n$. A {\bf ladder forest} is a rooted forest $f=l_1\cdots l_n$ where the trees $l_i$s are all ladder trees. We then have a simple lemma:
 \begin{lemma}
  The product $\yew$ is associative on ladder trees. In particular, for any ladder trees $l_1,\cdots,l_n$, the quantity $l_1\yew\cdots\yew l_n$ is well-defined and we furthermore have
  \begin{equation} \label{eq:yew_ladders_n}
   l_1 \yew \cdots \yew l_n = (\s^T)^{-1} ( \s^T(l_1) \shuffle^T \cdots \shuffle^T \s^T(l_n)).
  \end{equation}
 \end{lemma}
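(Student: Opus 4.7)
The plan is to reduce the associativity of $\yew$ on ladder trees to the well-known associativity of the shuffle of words. The overall strategy is:

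\textbf{Step 1: Stability.} First I would establish that $\shuffle^T$ preserves the subspace of ladder trees: if $l$ and $l'$ are ladder trees decorated by $\Omega$, then $l \shuffle^T l'$ is a linear combination of ladder trees. This is done by induction on $|l|+|l'|$, using the ``compatibility with grafting'' clause of Definition \ref{defn:shuffle_tree} (the concatenation clause is never invoked because each operand is a single tree, and each inductive call produces again a single tree that is then grafted under $B_+$, yielding a ladder). I would further note that $\s^T$ sends ladder trees in $\calF_{\N^*}$ to ladder trees in $\calF_{\{x,y\}}$ (by construction, $\s^T$ maps a vertex decorated by $n$ to a vertical string of $n$ letters), and that $(\s^T)^{-1}$ correspondingly sends ladder trees in $\calF_{\{x,y\}}^{\mathrm{semi}}$ (those in the image of $\s^T$) back to ladder trees.

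\textbf{Step 2: Intertwining with the word shuffle.} Via the canonical inclusion $\iota_\Omega:\calW_\Omega \hookrightarrow \calF_\Omega$ of Remark \ref{rk:ladder_trees}, I would prove the identity
\[
\iota_\Omega(w) \shuffle^T \iota_\Omega(w') = \iota_\Omega\!\left(w \shuffle w'\right)
\]
for every pair of words $(w,w') \in \calW_\Omega^2$, again by induction on $|w|+|w'|$, matching the grafting clause of $\shuffle^T$ with the recursive definition of $\shuffle$ in Definition \ref{def:shuffle}. Thus $\iota_\Omega$ is an algebra morphism between $(\calW_\Omega,\shuffle)$ and the restriction of $(\calF_\Omega,\shuffle^T)$ to ladder trees. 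Since $(\calW_\Omega,\shuffle)$ is associative, this forces $\shuffle^T$ to be associative when restricted to ladder trees.

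\textbf{Step 3: Transporting associativity through $\s^T$.} Combining Steps~1 and~2, for ladder trees $l_1,l_2,l_3 \in \calF_{\N^*}$ the images $\s^T(l_i)$ are ladder trees in $\calF_{\{x,y\}}$, so using the definition $\yew = (\s^T)^{-1}\circ\shuffle^T\circ(\s^T\otimes\s^T)$ and the fact that $(\s^T)^{-1}$ of a sum of ladder trees is again a sum of ladder trees, I compute
\[
(l_1 \yew l_2) \yew l_3 = (\s^T)^{-1}\!\left(\bigl(\s^T(l_1) \shuffle^T \s^T(l_2)\bigr) \shuffle^T \s^T(l_3)\right),
\]
and a symmetric expression for $l_1 \yew (l_2 \yew l_3)$. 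The associativity of $\shuffle^T$ on ladder trees established in Step~2 then implies the two expressions coincide.

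\textbf{Step 4: The $n$-fold formula.} I would prove \eqref{eq:yew_ladders_n} by induction on $n$. The base case $n=2$ is just the definition of $\yew$. For the inductive step, by associativity I may bracket to the left: $l_1 \yew \cdots \yew l_n = (l_1 \yew \cdots \yew l_{n-1}) \yew l_n$; the induction hypothesis rewrites the first factor as $(\s^T)^{-1}(\s^T(l_1)\shuffle^T\cdots\shuffle^T\s^T(l_{n-1}))$, which is a sum of ladder trees, so applying $\s^T$ returns $\s^T(l_1)\shuffle^T\cdots\shuffle^T\s^T(l_{n-1})$, and one more application of the $\yew$ definition yields the desired identity. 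The main (mild) obstacle is Step~2, where one has to carry out the induction carefully to verify that the grafting clause of $\shuffle^T$ on single ladder trees really reproduces the word-shuffle recursion; everything else is bookkeeping.
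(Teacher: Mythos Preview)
Your proposal is correct and follows essentially the same route as the paper: both arguments reduce the associativity of $\yew$ on ladder trees to the associativity of the word shuffle via the identification $\iota_\Omega(w)\shuffle^T\iota_\Omega(w')=\iota_\Omega(w\shuffle w')$ together with the fact that $\s^T$ sends ladder trees to ladder trees, and both prove \eqref{eq:yew_ladders_n} by a straightforward induction on $n$. Your Steps~1 and~2 simply make explicit the stability and intertwining that the paper compresses into the single displayed formula $\s^T(f)\shuffle^T\s^T(g)=\iota\bigl(\s(\iota^{-1}(f))\shuffle\s(\iota^{-1}(g))\bigr)$.
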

  \begin{proof}
  For any forests $f$, $g$ and $h$, using the definition of $\yew$ we obtain
  \begin{equation*}
   (f\yew g)\yew h = (\s^T)^{-1}\left(\left(\s^T(f)\shuffle^T\s^T(g)\right)\shuffle^T\s^T(h)\right).
  \end{equation*}
  Furthermore, if $f$ and $g$ are ladder trees we have
  \begin{equation*}
   \s^T(f)\shuffle^T\s^T(g)=\iota\left(\s\left(\iota^{-1}(f)\right)\shuffle\s\left(\iota^{-1}(g)\right)\right)
  \end{equation*}
  with $\iota$ the canonical injection of words into rooted forests. The associativity of $\yew$ on ladders trees follows from this formula and the associativity of the usual shuffle product $\shuffle$ on words.
  
  For the second statement on the Lemma, we already have by the previous result that for any ladder trees $l_1,\cdots,l_n$, the quantity $l_1\yew\cdots\yew l_n$ is well-defined. Let us now prove Equation \eqref{eq:yew_ladders_n} by induction over $n\geq2$. 
  
  For $n=2$, Equation \eqref{eq:yew_ladders_n} is the definition of the \cy{upsilon} product $\yew$. Assuming that for some $n\geq2$, the result holds for any $k\in\{2,\cdots,n\}$ take $l_1,\cdots,l_{n+1}$ be $n+1$ \ty{ladder} trees. If $n+1=3$, the result holds from the computation we performed for the first point of the Lemma. If $n+1\geq4$ we have 
  \begin{align*}
	l_1 \yew \cdots\yew l_{n+1} &= ((\s^T)^{-1}(\s^T(l_1) \shuffle^T \s^T(l_2))) \yew l_3 \cdots \yew l_{n+1} \\
		&= ((\s^T)^{-1}(\s^T(l_1) \shuffle^T \s(l_2))) \yew (\s^T)^{-1}( \s^T(l_3) \shuffle^T \cdots \shuffle^T \s^T(l_{n+1})) \\
		&= (\s^T)^{-1} \left( \s^T(l_1) \shuffle^T \cdots \shuffle^T \s^T(l_{n+1})\right)
	\end{align*}
  by the associativity of $\yew$ on ladder trees, the induction hypothesis and the definition of $\yew$ respectively. This concludes our proof.
 \end{proof}
 In other words, and with the abuse of words we allowed ourselves to make in Remark \ref{rk:yew_words}, $(\calW_{\N^*},\yew)$ is an associative algebra. This allows us to define the $\yew$-flattening map in a similar fashion than the usual flattening maps (see Definition \ref{defn:flattening_maps}):
 \begin{defn} \label{defn:flattening_yew}
 The {\bf $\yew$-flattening map} $fl_\yew:\mathcal{F}_{\N^*}\longrightarrow\calW_{\N^*}$  from the algebra of rooted forests decorated by $\N^*$ and 
 the algebra of words $\mathcal{W}_{\N^*}$  written in the alphabet ${\N^*}$ is the unique map of operated algebra between $(\calF_{\N^*},B_+,.)$ and $(\calW_{\N^*},C_+,\yew)$ given by Theorem \ref{thm:univ_prop_tree}; where $.$ is the usual concatenation of rooted forests and $C_+$ is the operation on words given by the left concatenation (see Equation \eqref{eq:defn_c_plus}).
 \end{defn}
 The fact that this definition makes sense may be related to the fact that the flattening maps are defined through an operation of $\N^*$ on vector spaces of words which is essentially the concatenation of words, i.e. the free multiplication. I thank M. Bordemann to have pointed this to me.
  
 Notice that the $\yew$-flattening map is recursively given by
 \begin{equation*}
  fl_\yew(\emptyset)=\emptyset,\quad fl_\yew(F_1F_2) = fl_\yew(F_1) \yew fl_\yew(F_2), \quad fl_\yew(B_+^n(F))=(n)\sqcup fl_\yew(F).
 \end{equation*}
 extended by linearity to a map on $\calf_{\N^*}$ and $\iota$ the canonical injection of words into rooted forests.
 
 We can now prove another important result of this Section.
\begin{prop} \label{thm:comm_fl_fl_yew}
 The flattening and binarisation maps are related by $fl_0 \circ \s^T = \s \circ fl_\yew$.
\end{prop}
\begin{proof}
 We will show that this result holds for any rooted forests as usual, i.e. by induction on the number of vertices of the forest. For $F\in\calF_{\N^*}$, let $n$ be the number of vertices of $F$.
 
 If $n=0$, we have $F=\emptyset$ and $fl_0 \circ \s^T(\emptyset) = \emptyset = \s \circ fl_\yew(\emptyset)$, and the result holds for $n=0$. Now for some $n\in\N$ let us assume that the result holds for any rooted forest with $k\leq n$ vertices and let $F$ be a forest with $n+1$ vertices. We have two cases to consider:
 \begin{itemize}
  \item If $F$ is a rooted tree, then we can write $F=B_+^r(f)$ for some $r\in\N^*$ and rooted forests $f$ with $n$ vertices. Then we have
  \begin{align*}
		 \s \circ fl_\yew(F) &= \s \circ fl_\yew \circ B_+^r (f) \\
		 	&= \s \circ (r \sqcup fl_\yew (f)) &\text{by definition of $fl_\yew$}\\
% 		 	&= (\underbrace{x\cdots x}_{r-1~{\rm times}}y)\sqcup (\s \circ fl_\yew)(f) &\text{par définition de $\s$.}\\
&= (x\cdots xy)\sqcup (\s \circ fl_\yew)(f)  &\text{by definition of $\s$}\\
		 	&= (x\cdots xy)\sqcup ( fl_0 \circ \s^T)(f) &\text{by the induction hypothesis} \\
		 	&= fl_0\circ (B_+^x)^{\circ (r-1)} \circ B_+^y \circ \s^T(f) &\text{by definition of $fl_0$}\\
		 	&= fl_0 \circ \s^T \circ B_+^r (f) &\text{by definition of $\s^T$}\\
		 	&= fl_0 \circ \s^T(F).
  \end{align*}
  In this computation, $(x\cdots xy)$ was always $r-1$ $x$s and one $y$.
  \item If $F = t_1\cdots t_k$ with $t_1,\cdots, t_k$ rooted trees then:
		\begin{align*}
	  	 \s \circ fl_\yew(F) &= \s \circ( fl_\yew(t_1) \yew \cdots \yew fl_\yew(t_k)) &\text{by definition of $\yew$} \\
	  	 &= (\s \circ fl_\yew)(f_1) \shuffle \cdots \shuffle (\s \circ fl_\yew)(f_k) &\text{by Equation \eqref{eq:yew_ladders_n}} \\
	  	 &= (fl_0 \circ \s^T)(f_1) \shuffle \cdots \shuffle (fl_0 \circ \s^T)(f_k) &\text{by the induction hypothesis} \\
	  	 &= (fl_0 \circ \s^T) (f_1 \cdots f_k) &\text{by definition of $fl$}.
		\end{align*}
 \end{itemize}
 This concludes the induction and the proof.
\end{proof}
Let us apply this result to TZVs. We first need the following simple Lemma:
\begin{lemma}
 $fl_\yew$ maps convergent rooted forests to convergent words: $fl_\yew( \calF_{\N^*}^{\rm conv}) = \calW_{\N^*}^{\rm conv}$.
\end{lemma}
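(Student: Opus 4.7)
The plan is to exploit the commutative diagram $\s \circ fl_\yew = fl_0 \circ \s^T$ just established in Proposition \ref{thm:comm_fl_fl_yew}, reducing the question for $fl_\yew$ to the already-known behaviour of $\s^T$, $fl_0$, and $\s$ on convergent objects. Both inclusions should drop out essentially for free, with only a small support argument to verify.

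For the forward inclusion $fl_\yew(\calF_{\N^*}^{\rm conv}) \subseteq \calW_{\N^*}^{\rm conv}$, I would take $F \in \calF_{\N^*}^{\rm conv}$ and chase it through the square. Lemma \ref{lem:bin_map_properties} yields $\s^T(F) \in \calF_{\{x,y\}}^{\rm conv}$, and then Lemma \ref{lem:flattening_conv_xy} gives $fl_0(\s^T(F)) \in \calW_{\{x,y\}}^{\rm conv}$, i.e., a linear combination of $\{x,y\}$-words starting with $x$ and ending with $y$. By Proposition \ref{thm:comm_fl_fl_yew} this equals $\s(fl_\yew(F))$. To descend this property to the support of $fl_\yew(F)$ itself, I would use that the word-level binarisation $\s$ is injective on $\calW_{\N^*}$ (one recovers $(s_1,\dots,s_k)$ from $\s(s_1\cdots s_k)$ by counting consecutive $x$s before each $y$). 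Consequently, each word $w$ in the support of $fl_\yew(F)$ has image $\s(w)$ sitting in the support of $\s(fl_\yew(F))$, hence beginning with $x$; since $\s$ prepends precisely $s_1-1$ copies of $x$ (where $s_1$ is the first letter of $w$), this forces $s_1 \geq 2$, so $w \in \calW_{\N^*}^{\rm conv}$.

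For the reverse inclusion, I would note that for any convergent word $w = (s_1\cdots s_k) \in \calW_{\N^*}^{\rm conv}$, its canonical ladder embedding $\iota_{\N^*}(w) = B_+^{s_1}\circ\cdots\circ B_+^{s_k}(\emptyset)$ is a convergent tree (its root is decorated by $s_1 \geq 2$), and a short induction on $k$ using the recursive description $fl_\yew(B_+^n(F)) = (n)\sqcup fl_\yew(F)$ shows $fl_\yew \circ \iota_{\N^*} = \mathrm{id}_{\calW_{\N^*}}$. Hence $w = fl_\yew(\iota_{\N^*}(w)) \in fl_\yew(\calF_{\N^*}^{\rm conv})$, and linearity finishes the argument.

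The only delicate point is the support argument in the forward direction: $fl_0(\s^T(F))$ is only known to be a linear combination of convergent $\{x,y\}$-words, and one must ensure that this convergence really is inherited by each $\N^*$-word in the support of $fl_\yew(F)$ rather than arising from cancellation. Injectivity of $\s$ on individual words is what rules out such cancellation, since distinct $\N^*$-words have distinct $\s$-images and therefore contribute to disjoint parts of the support of $\s(fl_\yew(F))$.
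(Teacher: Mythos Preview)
Your proof is correct, but the forward inclusion goes by a different route than the paper. The paper argues directly from the recursive description of $fl_\yew$: for a tree $T=B_+^n(F)$ one has $fl_\yew(T)=(n)\sqcup fl_\yew(F)$, so every word in the output starts with the root decoration $n$; for a forest $F=T_1\cdots T_k$ one uses that the $\yew$ product of words (via the formula of Theorem~\ref{thm:yew_formula}) produces words whose first letter is at least $\min(n,m)$ when the inputs start with $n$ and $m$. Hence if all roots are decorated by integers $\geq 2$, so are all first letters of the output. Your approach instead reuses the commutative square of Proposition~\ref{thm:comm_fl_fl_yew} together with the already-proved convergence preservation of $\s^T$ (Lemma~\ref{lem:bin_map_properties}) and $fl_0$ (Lemma~\ref{lem:flattening_conv_xy}), and then pulls back through the injectivity of $\s$. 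This is a perfectly valid and arguably cleaner structural argument, since it avoids reopening the combinatorics of $\yew$; the paper's version is more self-contained and elementary. Your reverse inclusion via $fl_\yew\circ\iota_{\N^*}=\mathrm{id}$ on convergent ladders is exactly what the paper does.
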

\begin{proof}
 Notice that any convergent rooted tree $T$ is mapped by $fl_\yew$ to a linear combination of words starting by the decoration of the root of $T$. Furthermore, any convergent rooted forest $F$ is mapped by $fl_\yew$ to a linear combination of words starting by the decoration of the one of the root of $F$. Thus we have 
 $fl_\yew( \calF_{\N^*}^{\rm conv}) \subseteq \calW_{\N^*}^{\rm conv}$.
 
 Recall that $\iota:\calF_{\N^*}\longrightarrow\calW_{\N^*}$ is the canonical injection of words into rooted forests, mapping any word to a ladder tree. Then from the definition of $fl_\yew$ we have, for any word $w\in\calW_{\N^*}$ (not necessarily convergent) $fl_\yew(\iota(w))=w$. This holds in particular for convergent words and we have therefore that $fl_\yew( \calF_{\N^*}^{\rm conv}) = \calW_{\N^*}^{\rm conv}$.
\end{proof}
  We then obtain from Theorem \ref{thm:comm_fl_fl_yew} that the $\yew$-flattening relates TZVs and stuffle MZVs. This gives in particular a new, more direct proof that TZVs are MZVs and a new way to compute them.
\begin{cor} \label{coro:TZVs_stuffle_MZVs}
 We have $\zeta^t = \zeta_\stuffle \circ fl_\yew$. In particular for any convergent forest $F\in\calf_{\N^*}^{\rm conv}$, $\zeta^t(F)$ is a linear combination with integer coefficients of stuffle MZVs given by the $\yew$-flattening. 
\end{cor}
\begin{proof}
 From the previous Lemma, we have that $\zeta^t$ and $\zeta_\stuffle \circ fl_\yew$ are defined on the same set. We need to prove that they are equal.
 \begin{align*}
	\zeta^t &= \zeta^T_\shuffle \circ \s^T &\text{by Proposition \ref{prop:crucial}}\\
	&= \zeta_\shuffle \circ fl_0 \circ \s^T &\text{by Theorem  \ref{thm:flattening}} \\
	&= \zeta_\shuffle \circ \s \circ fl_\yew  &\text{by Theorem \ref{thm:comm_fl_fl_yew}} \\
	&= \zeta_\stuffle \circ fl_\yew &\text{by Kontsevich's relation \eqref{eq:Kontsevich}}
	\end{align*} 
\end{proof}
We will later see that TZVs are conical zeta values (CZVs), and use this result to compute some CZVs. In the mean time, and to conclude this section, we can already use the computations performed in Example \ref{ex:yew_products} to express some TZVs in terms of MZVs.
\begin{example}
 By definition of the $\yew$-flattening we have
 \begin{equation*}
  fl_\yew\left(\tdquatredeux{p}{m}{n}{k} \right) = (p)\sqcup((nk)\yew(m))
 \end{equation*}
 (with the abuse of notation discussed after Definition \ref{defn:flattening_yew}). Then by Corollary \ref{coro:TZVs_stuffle_MZVs} and the second computation of Example \ref{ex:yew_products} we obtain
 \begin{align*}
  & \zeta^t\left(\tdquatredeux{p}{m}{n}{k} \right) =  \sum_{i=0}^{m-1}\binom{n-1+i}{i}\left[\sum_{i'=0}^{k-1}\binom{m-i-1+i'}{i'}\zeta_\stuffle(p;n+i;m-i+i';k-i') \right. \\
  + & \left.\sum_{j'=0}^{m-i-1}\binom{k-1+j'}{j'}\zeta_\stuffle(p;n+i;k+j';m-i-j')\right]
+ \sum_{j=0}^{n-1}\binom{m-1+j}{j}\zeta_\stuffle(p;m+j;n-j;k).
 \end{align*}
 for any $p\geq2$, $(m,n,k)\in(\N^*)^3$.
 
 Finally, we can also use Corollary \ref{coro:TZVs_stuffle_MZVs} on the tree $\tcinq{p}{n}{k}{m}{l}$ and the third computation of Example \ref{ex:yew_products} to obtain
 \begin{align*}
  & \zeta^t\left(\tcinq{p}{n}{k}{m}{l}\right)= \\
   & \sum_{a=0}^{m-1}\binom{n-1+a}{a}\left\{ \sum_{i=0}^{k-1}\binom{m-a-1+i}{i}\left[\sum_{i'=0}^{l-1}\binom{k-i-1+i'}{i'}
  \zeta_\stuffle(p;n+a;m-a+i;k-i+i';l-i') \right.\right. \\
  & \hspace{4cm}+  \left.\sum_{j'=0}^{k-i-1}\binom{l-1+j'}{j'}\zeta_\stuffle(p;n+a;m-a+i;l+j';k-i-j')\right] \\
  & \hspace{6cm}+  \left.\sum_{j=0}^{m-a-1}\binom{k-1+j}{j}\zeta_\stuffle(p;n+a;k+j;m-a-j;l)\right\} \\
 + &  \sum_{b=0}^{n-1}\binom{m-1+b}{b}\left\{\sum_{i=0}^{l-1}\binom{n-b-1+i}{i}\left[\sum_{i'=0}^{k-1}\binom{l-i-1+i'}{i'}\zeta_\stuffle(p;m+b;n-b+i;l-i+i';k-i')    \right.\right.  \\
   & \hspace{4cm}+\left. \sum_{j'=0}^{l-i-1}\binom{k-1+j'}{j'}\zeta_\stuffle(p;m+b;n-b+i;k+j';l-i-j')\right] \\
   & \hspace{6cm}+\left.\sum_{j=0}^{n-b-1}\binom{l-1+j}{j}\zeta_\stuffle(p;m+b;l+j;n-b-j;k)\right\}.
 \end{align*}
 Notice that Corollary \ref{coro:TZVs_stuffle_MZVs}, together with Theorem \ref{thm:yew_formula} allows to (relatively) efficiently compute the evalutation of TZVs in terms of MZVs. 
\end{example}

\section{Applications to other generalisations of MZVs} \label{section:application}

 \subsection{Mordell-Tornheim zeta values} \label{subsection:huit_un}

 A special class of Mordell-Tornheim zetas were introduced in \cite{To50} and studied (albeit not in full generality) in \cite{Mo58} and \cite{Ho92}. Their full version was introduced in  \cite{Mat03} and further investigated in \cite{Ts05} and \cite{BrZh10}. The study of the finite version of Mordell-Tornheim zeta values was carried on in \cite{Ka16}. We will illustrate how Theorem \ref{thm:integral_sum}, together with results described before, has far-reaching consequences for various generalisations of MZVs. In particular, it provides new and straightforward proofs for some results regarding Mordell-Tornheim zeta values.
  \begin{defn}[\cite{To50}] \label{defn:MT}
  Let $(s,s_1,\cdots,s_r)\in\N^{r+1}$ be sequence of non-negative integers. The {\bf Mordell-Tornheim zeta value} associated to this sequence is
  \begin{equation} \label{eq:MT_zeta}
   MT(s_1,\cdots,s_r|s) := \sum_{n_1,\cdots,n_r\geq1}\frac{1}{n_1^{s_1}\cdots n_r^{s_r}(n_1+\cdots+n_r)^s}
  \end{equation}
  whenever this series is convergent. The integer $r$ is called the {\bf depth} of this Mordell-Tornheim zeta values, $s+s_1+\cdots+s_r$ its {\bf weight} and $s_1+\cdots+s_r$ its {\bf partial depth}.
 \end{defn}
 Since the series in \eqref{eq:MT_zeta} is invariant under a permutation of the $s_i$, it is traditional to assume $s_1\leq\cdots\leq s_r$. We will follow this convention.
 
  Bradley and Zhou gave (\cite[Theorem 2.2]{BrZh10}) a condition for the convergence of the series \eqref{eq:MT_zeta} to hold in the more general case where the $s_i$ are complex numbers. With our convention, this condition reads: if for any $k\in\{1,\cdots,r\}$, the inequality
  \begin{equation} \label{eq:conv_crit_BrZh}
   s+\sum_{i=1}^k s_i > k
  \end{equation}
  holds, then the series \eqref{eq:MT_zeta} converges.
  
  The same authors also proved (\cite[Theorem 1.1]{BrZh10}) that any convergent Mordell-Tornheim zeta value of weight $w$ and depth $r$ can be written as a linear combination with rational coefficients of MZVs of weight $w$ and depth $r$. For finite Mordell-Tornheim zeta values, the same result was obtained in
  \cite[Theorem 1.2]{Ka16}.
  
    We will show here that our Theorem \ref{thm:tree_zeta_MZVs} gives elementary proofs of the results of Bradley and Zhou for the case $s_2>0$. We will also provide an explicit formula for Mordell-Tornheim zeta values in the case $s_1=0$, which is reminiscent of \cite[Theorem 1.2]{Ka16} for finite Mordell-Tornheim zeta values.

 \begin{prop} \label{prop:MT_s1_0}
   The Mordell-Tornheim zeta values associated to the sequence $(s,s_1=0,s_2>0,\cdots,s_r)$ is convergent whenever $s\geq2$. In this case $MT(s_1=0,\cdots,s_r|s)$ can be written as a linear combination with integer coefficients of MZVs of weight $s+s_2+\cdots+s_r$ and depth $r$ given by 
   \begin{equation} \label{eq:MT_s1_0}
    MT(s_1=0,s_2,\cdots,s_r|s) = \zeta_\shuffle\left((\underbrace{x\cdots x}_{s-1}y)\sqcup \left((\underbrace{x\cdots x}_{s_2-1}y)\shuffle\cdots\shuffle(\underbrace{x\cdots x}_{s_r-1}y)\right)   \right).
   \end{equation}
  \end{prop}
  Notice that the condition $s\geq2$ is equivalent to the convergence criterion \eqref{eq:conv_crit_BrZh} of Bradley and Zhou in the case $s_1=0$ and $s_2>0$.
 \begin{proof}
   Observe that in the case $s_1=0$ and $s_2>0$ the series \eqref{eq:MT_zeta} coincides with $\zeta^t(T)$ with $(T,d_T)$ the decorated tree with $r$ vertices and $r-1$ leaves. Its root is decorated by $s$ and its leaves by $s_2,\cdots,s_r$.
   Thus $T$ is a convergent tree whenever $s\geq2$. In this case we then have by Theorem \ref{thm:tree_zeta_MZVs}
   \begin{equation*}
    MT(s_1=0,s_2,\cdots,s_r|s) = (\zeta_\shuffle\circ fl_0 \circ\fraks^T)(T).
   \end{equation*}
   Now $\fraks^T(T)$ is a tree with only one branching vertex. The segment between the root and the branching vertex contains $s$ vertices, the first $s-1$ being decorated by $x$ and the last one by $y$. Each of the $r-1$ segments between the branching vertex and one leaf contains $s_i$ vertices (with $i\in\{2,\cdots,r\}$) whose first $s_i-1$ vertices are decorated by $x$ and the last one by $y$. Therefore its flattening is precisely
   \begin{equation*}
    fl(\fraks^T(T))=(\underbrace{x\cdots x}_{s-1}y)\sqcup \left((\underbrace{x\cdots x}_{s_2-1}y)\shuffle\cdots\shuffle(\underbrace{x\cdots x}_{s_r-1}y)\right).
   \end{equation*}
   This gives Equation \eqref{eq:MT_s1_0}. In particular $MT(s_1=0,\cdots,s_r|s)$ can be written as a linear combination with integer coefficients of MZVs of weight $s+s_2+\cdots+s_r$ and depth $r$ (the number of $y$s in each words appearing in the expression of $MT(s_1=0,\cdots,s_r|s)$).
  \end{proof}
  We are now ready to prove our main result regarding Mordell-Tornheim zetas.
  \begin{theo} \label{thm:MT_tree}
   Let $s_1\in\N^*$. The Mordell-Tornheim zeta value associated with $(s,s_1,\cdots,s_r)$ is convergent whenever $s\geq1$. In this case $MT(s_1,\cdots,s_r|s)$ can be written as a linear combination of MZVs of weight $s+\sum_{i=1}^rs_i$ and depth $r$ with integer coefficients.
  \end{theo}
  Notice that, as before, the condition $s\geq1$ is equivalent to the convergence criterion \eqref{eq:conv_crit_BrZh} of Bradley and Zhou in the case $s_1>0$.
  \begin{proof}
   We prove this result by induction on the partial depth $n:=s_1+\cdots+s_r$ of $MT(s_1,\cdots,s_r|s)$. If $n=1$, the conditions $s_1\geq1$ and $s_i\geq s_1$ for all $i\in\{1,\cdots,r\}$ implies $r=1$ and $s_1=1$. In this case the Mordell-Tornheim zeta value reduce to the usual zeta value $\zeta(s+1)$ which is convergent if $s>0$.
   
   Now, assume the result holds for all Mordell-Tornheim zeta values of partial weight $n$ with $s_1\geq1$. Let $s$ and $s_1$ be greater or equal to one and $(s_1,\cdots,s_r)$ be an increasing sequence of integers such that $s_1+\cdots+ s_r=n+1$. Using the partial fraction decomposition we obtain
   \begin{equation*}
    \frac{1}{n_1\cdots n_r(n_1+\cdots+n_r)} = \frac{1}{(n_1+\cdots+n_r)^2} \sum_{i=1}^r\prod_{\substack{j=1 \\j\neq i}}^r \frac{1}{n_j}
   \end{equation*}
   (which holds since $\prod_{\substack{j=1 \\j\neq i}}^r \frac{1}{n_j}=\frac{n_i}{n_1\cdots n_r}$) we obtain
   \begin{equation*}
    \frac{1}{n_1^{s_1}\cdots n_r^{s_r}(n_1+\cdots+n_r)^s} = \sum_{i=1}^r\frac{1}{n_1^{s_1}\cdots n_i^{s_i-1}\cdots n_r^{s_r}(n_1+\cdots+n_r)^{s+1}}.
   \end{equation*}
   Summing over the $n_i$ we obtain (up to an irrelevant permutation of the $s_i$)
   \begin{equation} \label{eq:sum_decrease_weight_MT}
    MT(s_1,\cdots,s_r|s) = \sum_{i=1}^r MT(s_1,\cdots,s_i-1,\cdots,s_r|s+1)
   \end{equation}
   whenever the series on one of the two sides converges. Examining each of the terms of the RHS we have then two cases to consider:
   \begin{enumerate}
    \item $s_i=1$. Then by Proposition \ref{prop:MT_s1_0} $MT(s_1,\cdots,s_i-1,\cdots,s_r|s+1)$ is convergent since $s+1\geq2$, and it is a linear combination with integer coefficients of MZVs of depth $r$ and weight 
    \begin{equation*}
     s+1+\sum_{\substack{k=1\\k\neq i}}^r s_k.
    \end{equation*}
    \item $s_i\geq2$. In this case by the induction hypothesis $MT(s_1,\cdots,s_i-1,\cdots,s_r|s+1)$ is convergent and can be written as a linear combination of MZVs of weight $s+\sum_{i=1}^rs_i$ and depth $r$ with integer coefficients. 
   \end{enumerate}
   In any case the RHS of Equation \eqref{eq:sum_decrease_weight_MT} is convergent whenever $s\geq1$ and is then a finite sum of linear combinations of MZVs of weight $s+\sum_{i=1}^rs_i$ and depth $r$ with integer coefficients.
   
   This concludes the induction and proves the Theorem.
  \end{proof}
  Along the way, we have obtained the decomposition formula
  \begin{equation} \label{eq:MT_decom}
    MT(s_1,\cdots,s_r|s) = \sum_{i=1}^r MT(s_1,\cdots,s_i-1,\cdots,s_r|s+1)
   \end{equation}
  which only existed for finite Mordell-Tornheim zetas (see the first equation of \cite[Corollary 5.2]{BTT21}). I am very thanksful to Masataka Ono for finding out this reference and kindly telling me about it.
  
  This formula, together with Proposition \ref{prop:MT_s1_0} allows us to derive expressions for the Mordell-Tornheim zeta values with $s_1>0$. Indeed, iterating \eqref{eq:MT_decom} until one of the $s_i$ is cancelled, we obtain 
   \begin{equation*}
    MT(s_1,\cdots,s_r|s) = \sum_{i=1}^r \sum_{p_1=0}^{s_1-1}\cdots\sum_{p_{i-1}=0}^{s_{i-1}-1}\sum_{p_{i+1}=0}^{s_{i+1}-1}\cdots\sum_{p_r=0}^{s_r-1} MT\left(q_1,\cdots,q_r|s+\sum_{j=1}^r p_j\right)
   \end{equation*}
   with in each of the terms in the RHS, $p_i:=s_i$ and $q_j:=s_j-p_j$ for $q\in\{1,\cdots,r\}$. Each of the Mordell-Tornheim zeta values in the RHS are of the type treated by Proposition \ref{prop:MT_s1_0} since $q_i=0$ for each $i$ in the leftmost sum. Thus we obtain
   \begin{align} \label{eq:expression_MT}
    & MT(s_1,\cdots,s_r|s)  =  \\
    \nonumber 
\sum_{i=1}^r \sum_{p_1=0}^{s_1-1}\cdots & \sum_{p_{i-1}=0}^{s_{i-1}-1}\sum_{p_{i+1}=0}^{s_{i+1}-1}\cdots\sum_{p_r=0}^{s_r-1} \zeta_\shuffle\left(\fraks\left(s+\sum_{\substack{j=1}}^r p_j\right)\sqcup \Big(\fraks(s_1-p_1)\shuffle\cdots\shuffle\fraks(s_{r}-p_{r})\Big)   \right)
   \end{align}
   where in each of the terms in the RHS we have set $p_i:=s_i$ and used the convention $\fraks(0):=\emptyset$.
  
  \subsection{Conical zeta values}

Let us start by recalling some classical definition of cones \cite{Fulton93,Ziegler94}
\begin{defn} \label{defn:cones}
\begin{itemize}
 \item Let $v_1,\cdots,v_n$ by $n$ linearly independant nonzero vectors in $\Z^k$. The {\bf cone} associated to these vectors is
 \begin{equation*}
  C=\langle v_1,\cdots,v_n\rangle := \R^*_+v_1+\cdots\R^*_+v_n.
 \end{equation*}
 If furthermore $k=n$, the cone is called {\bf maximal}. We write $\calc$ the set of maximal cones. By convention, the empty set is a maximal cone.
 \item A {\bf decorated cone} is a pair $(C,\vec s)$ with $C=\langle v_1,\cdots,v_n\rangle$ a cone and $\vec s\in\N^n$. We call $\calc_\N$ the set of decorated maximal cones.
 \item  For a cone $C$ (resp. a decorated cone $(C,\vec s)$), write the vectors $v_1,\cdots,v_n$ in the canonical basis $\mathcal{B}=\{e_1,\cdots,e_n\}$ of $\R^n$: $v_i=\sum_{j=1}^na_{ij}e_j$. Then $A_C:=(a_{ij})_{1\leq i,j\leq n}$ is the {\bf representing matrix} (in the basis $\mathcal B$) of the cone $C$ (resp. $(C,\vec s)$).
 \item A cone $C$ (resp. a decorated cone $(C,\vec s)$) is {\bf unimodular} if its representing matrix $A_C$ has only $0$s and $1$s in its entries.
\end{itemize}
\end{defn}
We will only consider here maximal cones, therefore we simply write cones instead of maximal cones. 
\begin{rk}
 The definition above only covers open simplicial rational smooth cones. More general cones, or closed ones, will play no role here thus we do not introduce them. Similarly, in the case of decorated cones, one could have $\vec s\in\C^n$. Notice that a cone is invariant under permutations of the $v_i$s, while a decorated cone in general is only invariant under the simultaneous permutations of the $v_i$s and the components of $\vec s$.
\end{rk}
Conical zeta values (CZVs) were introduced in \cite{GPZ13} as  weighted sums over \ty{integer} points on cones. A generalisation of these objects was introduced before in \cite{Te04}. In \cite{Ze17} a description of CZVs in terms of matrices was given. We adopt here an intermediate definition, where the cone is encoded by a matrix but not the weight. Our definition is rigorously equivalent to the ones in \cite{GPZ13,Te04,Ze17}, but is more suitable for our purpose. 
\begin{defn} \label{defn:useful_stuff_cone}
 \begin{itemize}
  \item Let $\mathcal{B}=\{e_1,\cdots,e_n\}$ be the canonical basis of $\R^n$ and $l:\R^n\longrightarrow\R$ be a linear map defined by $l(v)=\sum_{i=1}^n l_iv_i$, with $v_i$ the coordinates of the vector $v$ in the canonical base: $v=\sum_{i=1}^n v_ie_i$. Then we say that $l$ is (resp.~strictly) {\bf positive (with respect to $\mathcal{B}$)}, and we write $l\geq0$\footnote{the more rigorous notation $l\geq_\mathcal{B}0$ is not necessary since we always work with the canonical basis of $\R^n$.} (resp $l>0$) if $l_i\geq0$ (resp. $l>0$) for any $i\in[n]$.
  
  \item Let $C=\langle v_1,\cdots,v_n\rangle$ (resp. $(C=\langle v_1,\cdots,v_n\rangle,\vec{s}$)) be a maximal cone (resp.~decorated cone). Write the vectors $v_i$s in the canonical basis $\mathcal{B}$: $v_i=\sum_{j=1}^na_{ij}e_j$ and set $l_i:(\R)^n\longrightarrow\R$ the linear maps defined by $l_i(w)=\sum_{j=1}^na_{ij}w_j$. Then  
  $C$ (resp. $(C,\vec s)$) is (resp. strictly) {\bf positive (with respect to $\mathcal{B}$)} if $l_i\geq0$ (resp. $l_i>0$) for any $i\in[n]$.
  
  \item Let $(C=\langle v_1,\cdots,v_n\rangle,\vec{s}=(s_1,\cdots,s_n))$ be a decorated (maximal) strictly positive cone and $l_i:(\R)^n\longrightarrow\R$ the associated linear maps as before. Then the {\bf conical zeta value} associated to $(C,\vec s)$  is
 \begin{equation} \label{eq:def_CZV}
  \zeta(C,\vec s) := \sum_{\vec{n}\in(\N^*)^n}\frac{1}{l_1(\vec{n})^{s_1}\cdots l_n(\vec{n})^{s_n}}
 \end{equation}
 whenever the series converge.
 
 We also denote by $\zeta$ the map which to a cone $(C,\vec s)$ associates the CZV $\zeta(C,\vec s)$ when it exists.
 \end{itemize}
\end{defn}
  Notice that the object that we call CZVs here were named ``Shintani zeta values'' in \cite{GPZ13}, following \cite{Mat03}. We use here instead what seems to be now the standard name, namely CZVs.
  \begin{rk}
 Notice that any unimodular cone is strictly positive with respect to the canonical basis. This justifies that we will not require our cones to be strictly positive since they will be unimodular.
\end{rk}
There are many important open questions concerning CZVs. 
An important one is the existence of linear relations with rational coefficients between CZVs. It was shown in \cite{GPZ13} that they obey a family of relations given by double subdivisions of cones which conjecturally generate all linear relations with rational coefficients between by CZVs. Another question is the number-theoretic content of CZVs. It was shown in \cite{Te04} that CZVs are evaluation of polylogarithms at $N$-th roots of unity. A conjecture by Dupont, refined by Panzer, written in \cite[Conjecture 2]{Ze17} and sometimes called the Dupont-Panzer-Zerbini conjecture states that for a cone $C$, $N$ is the least common multiplier of the minors of $A_C$. In this paper, we answer this second question in the case of tree-like cones.
  
\subsection{From trees to cones}

One easily sees that TZVs are CZVs. Furthermore, one can relate the tree underlining a TZVs to a specific cone. For this, one needs \ty{one} new definition.
\begin{defn}
 Let $F\in\calf$ be a  rooted forest with vertices $V(F)=\{1,\cdots,N\}$. The {\bf path matrix of $F$} $A_F=(a_{ij})_{i,j=1,\cdots,N}$ is the $N\times N$ matrix defined by
\begin{align*}
 a_{ij} = \begin{cases}
           & 1 \quad\text{if }i\preceq j\\
           & 0 \quad\text{otherwise}.
          \end{cases}
\end{align*}
Furthermore, let us set 
\begin{equation*}
 v_i(F):=\sum_{j=1}^Na_{ij}e_j
\end{equation*}
(with $\{e_1,\cdots,e_N\}$ the canonical basis of $\R^N$) the {\bf path vectors of $F$}.
\end{defn}
The path vectors of a rooted forest define a (maximal) cone.
\begin{lemma}
 We have a map from rooted forest to cones:
\begin{align*}
 \Phi:\calf~&\longrightarrow~ \calc \\
 F~&\longmapsto \langle v_1(F),\cdots,v_N(F)\rangle.
\end{align*}
\end{lemma}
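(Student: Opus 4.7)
The plan is to verify that, for any rooted forest $F$ with $N$ vertices, the path vectors $v_1(F),\dots,v_N(F)$ are $N$ linearly independent nonzero vectors in $\Z^N$, which is exactly what is required by Definition \ref{defn:cones} to generate a maximal cone $\langle v_1(F),\dots,v_N(F)\rangle\in\calC$. Once this is established, the map $\Phi:F\mapsto\langle v_1(F),\dots,v_N(F)\rangle$ is well defined; the fact that $\Phi$ does not depend on the chosen labelling $V(F)=\{1,\dots,N\}$ follows from the invariance of a cone under permutation of its generators, so that $\Phi$ descends to isomorphism classes of rooted forests.

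First, integrality and non-vanishing are immediate. By definition, each entry $a_{ij}$ of the path matrix $A_F$ is $0$ or $1$, so every $v_i(F)\in\Z^N$. Moreover, since there is always a path between a vertex and itself (Definition \ref{def:graph_stuff}), we have $a_{ii}=1$, hence $v_i(F)\neq 0$.

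The main step is linear independence. Here I would use the fact that a rooted forest is a DAG, so its vertex set admits a linear extension of the partial order $\preceq$: one can relabel the vertices by a permutation $\sigma$ so that $\sigma(i)\preceq\sigma(j)$ implies $i\leq j$. In this new labelling, $a_{ij}=1$ forces $i\preceq j$, hence $i\leq j$; in other words, the relabelled path matrix is upper triangular, with $1$s on the diagonal thanks to the previous paragraph. Its determinant equals $1$, so it is invertible, which shows that $v_1(F),\dots,v_N(F)$ are linearly independent in $\R^N$. Since linear independence does not depend on the choice of labelling, this proves the claim.

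The only mild subtlety to articulate carefully will be the independence from the labelling of $V(F)$: changing the labels of the vertices simply permutes both the rows and the columns of $A_F$ by the same permutation, which amounts to permuting the generators $v_i(F)$ and simultaneously applying the corresponding coordinate permutation to $\R^N$; since the cone $\langle v_1(F),\dots,v_N(F)\rangle$ is invariant under permutation of its generators, a canonical cone in $\R^N$ is associated to the isomorphism class of $F$. Putting everything together, $\Phi:\calF\longrightarrow\calC$ is a well-defined map from (isomorphism classes of) rooted forests to maximal cones, which concludes the proof.
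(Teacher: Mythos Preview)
Your argument is correct and complete. It differs from the paper's proof, which proceeds by induction on $|V(F)|$ using the operated-algebra structure of rooted forests: in the inductive step the paper treats separately the case $F=F_1F_2$ (where the two families of path vectors live in orthogonal coordinate blocks) and the case $F=B_+(\tilde F)$ (where only the root's path vector has a nonzero component on the new coordinate). Your approach instead uses a single observation: any linear extension of the partial order $\preceq$ on $V(F)$ makes the path matrix upper triangular with $1$s on the diagonal, hence invertible. This is arguably more direct and, as a bonus, shows that $\det A_F=1$ (so $A_F\in\mathrm{GL}_N(\Z)$), which makes the unimodularity of the resulting cone immediate. The paper's inductive argument, on the other hand, mirrors the recursive structure used throughout the chapter and therefore fits more naturally with the surrounding proofs. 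Your additional remark on independence from the labelling is a point the paper leaves implicit.
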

\begin{proof}
%  Recall that all our cones are maximal cones. 
 We need to prove that, for any $F\in\calf$, the vectors $v_i(F)$ are linearly independant. We prove this by induction on $N=|V(F)|$. If $N=0$, then $F=\emptyset$ and $\Phi(\emptyset)=\emptyset$ is a cone. The case $N=1$ also trivially holds.
 
 Assume the result holds for all forests with $k\leq N$ vertices and let $F$ be a forest with $N+1$ leaves. Without loss of generality we can assume $V(F)=[N+1]$. We have two cases to consider: First, if $F=F_1F_2$ with $F_1$ and $F_2$ non empty, the result holds by the induction hypothesis used on $F_1$ and $F_2$ and the fact that the $v_i(F_1)$ and the $v_j(F_2)$ belong to two orthogonal subspaces of $\R^{N+1}$. 
 
 Second, if $F=T=B_+(\tilde F)$, we can assume without loss of generality that $N+1$ is the root of $T$. Then we have $v_{N+1}(T)=\sum_{k=1}^{N+1}e_k$ and $v_i(T)=v_i(\tilde F)$ for any $i\in[N]$. Thus 
 \begin{equation*}
  \sum_{k=1}^{N+1}\lambda_k v_k(T) = 0 ~\Longleftrightarrow~\lambda_{N+1}=0~\wedge~\sum_{k=1}^{N}\lambda_k v_k(\tilde F)=0
 \end{equation*}
 since $v_{N+1}(T)$ is the only vector with a non-zero $e_{N+1}$ component. The result then holds from the induction hypothesis used on $\tilde F$, which concludes the proof.
\end{proof}
Notice that we do not claim this map from forests to cones to be unique, nor new. It simply turns out that it is the map that we can use to relate TZVs and CZVs. To achieve this, we lift the map $\Phi$ to a map $\overline{\Phi}$ acting on decorated forests. 

 Let $\overline{\Phi}:\calf_{\N}\longrightarrow\calc_\N$ be the map defined by
 $\overline{\Phi}(F,d_F):=(\Phi(F),\vec{s}_F)$
 for any decorated forest $(F,d_F)\in\calf_{\N}$ with vertices $V(F)=\{1,\cdots,N\}$ and where we have set $\vec s_F:=(d_F(1),\cdots,d_F(N))$.
 
 The maps $\Phi$ and $\overline{\Phi}$  are not surjective. This justifies the following definition:
\begin{defn} \label{defn:tree_like_cone}
 A cone (resp. a decorated cone) is said to be a {\bf tree-like cone} (resp. a {\bf decorated tree-like cone}) when it lies in the image of $\Phi:\calf\longrightarrow\calc$ (resp. $\overline{\Phi}:\calf_{\N}\longrightarrow\calc_\N$). We write $\calct$ the set of tree-like cones.
 
 Furthermore, if a decorated cone lies in $\overline{\Phi}(\calf_{\N^*}^{\rm conv})$ it is called a {\bf convergent decorated tree-like cone}. We write $\calct_\N$ the set of decorated tree-like cone and $\calct_\N^{\rm conv}$ the set of convergent decorated tree-like cone.
\end{defn}
Let us recall before (Definition \ref{def:convergent_forests}) that a $\N^*$-decorated forest is convergent if the decorations of each of its roots are greater or equal to 2.
The following simple properties of the map $\overline{\Phi}$ are a key result.
\begin{prop} \label{prop:forest_cones}
 For any non-empty convergent forest $(F,d_F)\in\calf_{\N^*}^{\rm conv}$, the conical zeta value $\zeta(\overline{\Phi}(F,d_F))$ is convergent and 
 \begin{equation*}
  \zeta(\overline{\Phi}(F,d_F)) = \zeta^t(F,d_F).
 \end{equation*}
\end{prop}
\begin{proof}
 Let $(F,d_F)$ be any convergent forest. Up to relabelling, we can identify its vertices set $V(F)$ with $[N]$ for some $N\in\N$: $V(F)=\{1,\cdots,N\}$. First let us  observe that, for any $\vec n=(n_1,\cdots,n_N)\in\N^N$ and $i\in V(F)$ we have 
 \begin{equation*}
  l_i(F)(\vec n) = \sum_{\substack{j\in[N]\\j\succeq i}}n_j
 \end{equation*}
 by definition of $l_i(F)$, and where $\succeq$ is the partial order of the set vertices $V(F)$ of $(F,d_F)$. Then Equation \eqref{eq:TZV} applied to the convergent rooted forest $(F,d_F)$ gives
 \begin{equation*}
  \zeta^t(F,d_F) = \sum_{\vec n\in\N^N}\prod_{i=1}^N\left(\sum_{\substack{j\in[N]\\j\succeq i}}n_j\right)^{-d_F(i)} = \sum_{\vec n\in\N^N} \frac{1}{l_1(F)(\vec n)^{d_F(1)}\cdots l_N(F)(\vec n)^{d_F(N)}} = \zeta(\overline{\Phi}(F),\vec s_F)
 \end{equation*}
 as claimed in the Proposition. The convergence of $\zeta(\overline{\Phi}(F),\vec s_F)$ then follows from the first point of Proposition \ref{prop:crucial}.
\end{proof}
\begin{rk}
 This implies in particular that shuffle AZVs are CZVs and thus also the more general Shintani zetas. As such, if we take the decorations of $F$ to be complex parameters, the function $\vec s\mapsto\zeta^t(F)$ admits a meromorphic continuation to $\C^{|V(F)|}$ \cite{Ma03,Lo21}.
\end{rk}
We now readily obtain our next important result.
\begin{theo} \label{thm:tree_CZVs_MZVs}
 For any convergent decorated tree-like cone $(C,\vec s)=\overline{\Phi}(F,d_F)$, the associated conical zeta values $\zeta(C,\vec s)$ is a linear combinations of MZVs with rational coefficients given by
 \begin{equation*}
  \zeta(C,\vec s) = (\zeta_\shuffle\circ fl_0\circ\fraks^T)(F,d_F).
 \end{equation*}
\end{theo}
\begin{proof}
 For any convergent decorated cone $(C,\vec s)=\overline{\Phi}(F,d_F)$ the result follows from Proposition \ref{prop:forest_cones} and Theorem \ref{thm:tree_zeta_MZVs} applied to $\zeta^t(F,d_F)$.
\end{proof}
Theorems \ref{thm:tree_zeta_MZVs} and \ref{thm:tree_CZVs_MZVs}, together with Proposition \ref{prop:forest_cones} and our previous results on AZVs can be summarised as the commutativity of Figure \ref{fig:MZV_AZV} below (where the CZVs, AZVs and MZVs maps are all written $\zeta$).
\begin{figure}[h!] 
  		\begin{center}
  			\begin{tikzpicture}[->,>=stealth',shorten >=1pt,auto,node distance=3cm,thick]
  			\tikzstyle{arrow}=[->]
  			
  			\node (1) {$\calF_{\N^*}^{\rm conv}$};
  			\node (2) [right of=1] {$\calF_{\{x,y\}}^{\rm conv}$};
  			\node (3) [right of=2] {$\calW_{\{x,y\}}^{\rm conv}$};
  			\node (4) [below of=1] {$\calct_{\N}^{\rm conv}$};
  			\node (5) [right of=4] {$\R$};
%   			\node (8) [right of=4] {$\R$};

  			\path
  			(1) edge node [above] {$\fraks^T$} (2)
  			(2) edge node [above] {$fl_0$} (3)
  			(1) edge node [left] {$\overline{\Phi}$} (4)
  			(4) edge node [above] {$\zeta$} (5)
  			(1) edge node [above right] {$\zeta^t$} (5)
  			(2) edge node [right] {$\zeta_\shuffle^T$} (5)
  			(3) edge node [below right] {$\zeta_\shuffle$} (5);
  			\end{tikzpicture}
  			\caption{CZVs, AZVs, MZVs and TZVs.}\label{fig:MZV_AZV}
  		\end{center}
  	\end{figure} \\
Theorem \ref{thm:tree_CZVs_MZVs} answers one of the important question about CZVs for the convergent tree-like cones; namely that they can written as rational sums of MZVs. Therefore it is useful to be able to characterise which cones are (convergent) tree-like. This is the subject of the next Subsection.
\begin{rk}
 In \cite[Lemma 1]{Ze16}, another sufficient condition was given for a unimodular CVZs to be a linear combinations of MZVs with rational coefficients. This condition, called C1 in \cite{Ze17}, the rows of the representing matrix $A_C$ can be permuted such that the 1s of $A_C$ are consecutive in each column. It is easy to see with a counter-example that this condition has no relation with being tree-like.
 
 The dual condition, let us call it C2, is that the columns of $A_C$ can be permuted so that the 1s of $A_C$ are consecutive in each column. To the best of the author's knowledge, it is still a conjecture (proposed by F. Zerbini in \cite{Ze17}) that C2 is another sufficient (but not necessary) condition for unimodular CZVs to be a linear combinations of MZVs with rational coefficients. It is easy to show that being tree-like implies the C2 conditon, but that the converse does not hold. Thus, Theorem \ref{thm:tree_CZVs_MZVs} is an indication that Zerbini's conjecture holds.
\end{rk}

\subsection{Characterisation of tree-like cones}

We want to relate the R.H.S. of Equation \eqref{eq:def_CZV} and \eqref{eq:TZV}. In particular, each factor in the denominator of a CZV correspond to a vector generating the underlining cone, while for a TZV, each such term come from a vertex of the underlining rooted forest. So we need to associate to each vertex of a rooted forest a vector in $\R^{|V(F)|}$. Furthermore, observe that in Equation \eqref{eq:TZV} if $v_1\preceq v_2$, the term in the series associated to $v_1$ is
\begin{equation*}
 l_{v_1}(\vec n) = \sum_{\substack{v'\succeq v_1\\v'\not\succeq v_2}}n_{v'} + \sum_{v'\succeq v_2}n_{v'} = \sum_{\substack{v'\succeq v_1\\v'\not\succeq v_2}}n_{v'} + l_{v_2}(\vec n).
\end{equation*}
Therefore the partial order of a rooted forest $F$ should be transmitted in some sense to a partial order on the vector generating the tree-like cone $\Phi(F)$. This justifies the following definition.
\begin{defn}
 For any $n$-dimensional cone $C$ (resp. decorated cone $(C,\vec s)$), let $\preceq_C$ be the relation on $[n]$ defined by
 \begin{equation*}
  i\preceq_C j ~:\Longleftrightarrow~ l_i-l_j\geq0
 \end{equation*}
 with the linear maps $l_i$ and the notion of positive linear maps of Definition \ref{defn:useful_stuff_cone}. As before, we write $\succeq_C$ for the inverse relation.
\end{defn}
As expected, this defines a partial order.
\begin{lemma} \label{lem:trivial_poset}
 For any $n$-dimensional cone $C$ (resp. decorated cone $(C,\vec s)$), $([n],\preceq_C)$ is a poset.
\end{lemma}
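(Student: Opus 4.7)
The plan is to check the three axioms of a partial order (reflexivity, transitivity, antisymmetry) in turn, all three of which follow directly from elementary properties of positive linear maps together with the linear independence of the generators $v_1,\dots,v_n$ of the cone.

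First I would unpack notation: writing $v_i=\sum_{j=1}^n a_{ij}e_j$, the linear map $l_i$ has coefficient vector $(a_{i1},\dots,a_{in})$ in the dual basis, so saying $l_i\geq0$ amounts to saying all $a_{ij}\geq0$. Reflexivity is then immediate since $l_i-l_i=0$ is the zero map, whose coefficients are all $0\geq 0$, hence $i\preceq_C i$. Transitivity follows from additivity of positivity: if $l_i-l_j\geq0$ and $l_j-l_k\geq0$, then
\[
 l_i-l_k=(l_i-l_j)+(l_j-l_k)
\]
has non-negative coefficients (coefficient-wise sum of non-negative tuples), so $i\preceq_C k$.

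The main point, and the only place where the definition of a (maximal) cone plays a role, is antisymmetry. Suppose $i\preceq_C j$ and $j\preceq_C i$, so that both $l_i-l_j\geq0$ and $l_j-l_i=-(l_i-l_j)\geq0$. Each coefficient $a_{ik}-a_{jk}$ is then simultaneously $\geq0$ and $\leq0$, hence equal to $0$. This forces $l_i=l_j$ as linear maps, equivalently $v_i=v_j$ as vectors in $\R^n$. Because $C$ is a maximal cone, the generators $v_1,\dots,v_n$ are (by Definition \ref{defn:cones}) linearly independent, hence in particular pairwise distinct, so $i=j$.

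I do not expect any real obstacle here; the only subtle step is antisymmetry, where one must remember to invoke the linear independence of the generators built into the definition of a maximal cone to upgrade $v_i=v_j$ to $i=j$. The decorated case is identical since the decoration $\vec s$ plays no role in the definition of $\preceq_C$.
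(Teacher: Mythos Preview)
Your proof is correct and follows essentially the same approach as the paper's: reflexivity and transitivity are declared trivial there, and antisymmetry is argued by noting that $l_i=l_j$ forces $i=j$ because distinct rows of the representing matrix of a maximal cone must differ (equivalently, your linear-independence argument). Your write-up is simply a more explicit unpacking of the same reasoning.
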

\begin{proof}
 Reflexivity and transitivity are trivial. Anti-symmetry follows from the fact that $C$ is a maximal cone, so two different lines of the representing matrix $A_C$ have to be different. Thus $l_i=l_j$ implies $i=j$ for any $i$ and $j$ in $[n]$.
\end{proof}
Thus we obtain a map 
\begin{align} \label{defn_Psi}
 \Psi:\mathcal{C} & \longrightarrow \mathcal{P}^{\rm fin} \\
 C & \longmapsto ([n],\preceq_C) \nonumber
\end{align}
where $\mathcal{P}^{\rm fin}$ is the set of finite posets. We lift $\Psi$ to a map $\overline\Psi :\calc_\N\longmapsto\mathcal{P}^{\rm fin}_\N$ from decorated cones to decorated posets\footnote{where a decorated poset is a pair $(\calP,d)$ with $d:\calP\longrightarrow\Omega$ a map, and $\Omega$ is the decoration set. We will only encounter posets decorated by positive integers in this work, and not after this discussion, therefore it did not seem important enough to write a formal definition for this natural object.} by setting $\overline\Psi(C,\vec s):=(\Psi(C),d_C)$, with $d_C:[n]\mapsto\N$ defined by $d_C(i):=s_i$.

Since not every cone is a tree-like cone, and more generally, not every conical sum is indexed by a poset, we need a compatibility condition on the cone to ensure that its associated conical sum respects the poset structure associated to the cone. One finds out the right condition by observing that in \eqref{eq:TZV}, if $v'\succeq v$, then the term coming from $v$ in Equation \eqref{eq:TZV} contains the summation variable $n_{v'}$ associated to the vertex $v'$.
\begin{defn}
 A $n$-dimensional cone $C$ (resp. decorated cone $(C,\vec s)$) with representing matrix $A_C=(a_{ij})_{i,j=1,\cdots,n}$ is 
 {\bf poset compatible} if, for any $i,j\in[n]$ we have
 \begin{equation*}
  a_{ij} \neq0~\Longrightarrow~i\preceq_C j.
 \end{equation*}
\end{defn}
So the CZV associated to any $n$-dimensional poset compatible cone is a conical zeta value in which the sum is given by the partial order of the poset $([n],\preceq_C)$. To check that this poset is a rooted forest, we need to introduce one more object. 
\begin{defn}
 For any $n$-dimensional cone $C$ (resp. decorated cone $(C,\vec s)$), its {\bf second representing matrix} $B_C:=(b_{ij})_{i,j=1,..,n}$ is the incidence matrix of the Hasse diagram of $([n],\preceq_C)$.
 
 In other words, $b_{ij}=1$ when $j$ is a direct successor of $i$ and $0$ otherwise:
 \begin{equation*}
  b_{ij}=1~\Longleftrightarrow~i\preceq_C j~\wedge~(\forall k\in[n],~i\preceq_Ck\preceq_Cj\Rightarrow k\in\{i,j\}).
 \end{equation*}
\end{defn}
We then have 
\begin{lemma} \label{lem:carac_tree_cone}
 For a $n$-dimensional cone $C$ (resp. decorated cone $(C,\vec s)$), the poset $\Psi(C)=([n],\preceq_C)$ is a rooted forest if, and only if, its second representing matrix has at most one non-zero component per column.
\end{lemma}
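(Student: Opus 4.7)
The plan is to identify the poset $\Psi(C) = ([n], \preceq_C)$ with its Hasse diagram $H_C$, viewed as a directed graph with vertex set $[n]$ and edges given by the covering relations of $\preceq_C$. By construction, the edge set of $H_C$ is encoded by $B_C$, so ``at most one nonzero entry per column of $B_C$'' is exactly the condition that every vertex of $H_C$ has in-degree at most one. Moreover, for a finite poset the transitive closure of the covering relation recovers the order, so the path order of $H_C$ (in the sense of Definition~\ref{def:graph_stuff}) coincides with $\preceq_C$. Showing that $\Psi(C)$ is a rooted forest (in the sense of Definition~\ref{def:forests}) thus reduces to showing that $H_C$ itself is, and the lemma becomes a purely graph-theoretic statement: a finite DAG is a rooted forest if and only if each vertex has in-degree at most one.

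For the forward direction, suppose $H_C$ is a rooted forest and let $j$ be a non-root vertex, with $r$ the root of its connected component. The forest property provides a unique path $r = u_0, u_1, \ldots, u_k = j$, and $u_{k-1}$ is a direct predecessor of $j$. Any second direct predecessor $i \neq u_{k-1}$ of $j$ would lie in the same connected component as $r$, and concatenating the unique path from $r$ to $i$ with the edge $i \to j$ would produce a second path from $r$ to $j$, contradicting uniqueness. Hence every non-root vertex has exactly one direct predecessor and roots have none, so every column of $B_C$ has at most one nonzero entry.

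Conversely, assume every vertex of $H_C$ has in-degree at most one. Since $\preceq_C$ is already a partial order, $H_C$ is acyclic. Given two directed paths $p_0, \ldots, p_k$ and $q_0, \ldots, q_\ell$ in $H_C$ with the same endpoints, the in-degree hypothesis forces $p_{k-1} = q_{\ell-1}$; iterating this observation shows that the two paths coincide, which gives the forest property. For the rooted property, the in-degree condition makes the ``parent function'' single-valued, so following parents from any vertex $v$ produces a finite chain (by acyclicity) terminating at a unique parentless vertex $r(v)$; a short induction along any undirected path connecting two vertices in the same component shows that they share the same $r(v)$, yielding a unique minimal element per component. The only mild subtlety lies in the identification of $\Psi(C)$ with the rooted forest structure on $H_C$, i.e., verifying that a rooted forest in the sense of Definition~\ref{def:forests} is indeed the Hasse diagram of its own path order; beyond this, the proof is essentially bookkeeping.
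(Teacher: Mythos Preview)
Your proof is correct and follows essentially the same approach as the paper: both reduce the statement to the graph-theoretic fact that the Hasse diagram of a finite poset is a rooted forest if and only if every vertex has in-degree at most one, using that acyclicity is automatic from the poset structure. Your argument is in fact more detailed than the paper's, which simply asserts that the absence of undirected cycles together with a unique minimum per component is equivalent to each vertex having at most one direct ancestor, whereas you spell out both implications explicitly.
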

\begin{proof}
 An oriented graph is a rooted forest if, and only if it has no oriented cycle, no non-oriented cycle\footnote{i.e. no circuits} and each of its connected components has exactly one minimal element.
 
 The first point is guaranteed by Lemma \ref{lem:trivial_poset} since if there is an oriented cycle, then we have $i,j\in[n]$ such that $i\neq j$ while $i\succeq j$ and $j\succeq i$. Thus in this case, $([n],\preceq_C)$ would not be a poset.
 
 The second and third points are equivalent to asking that each vertex has at most one direct ancestor for the relation $\preceq_C$. Since $j$ is a direct ancestor of $i$ if, and only if, $b_{ij}=1$, we have that $j$ has at most one direct ancestor if, and only if, it exists at most one $i\in[n]$ such that $b_{ij}=1$, thus that the $j$-th column of $B_C$ has at most one non-zero entry. Since this must hold for all $j\in[n]$, we have the result
\end{proof}
Lemma \ref{lem:carac_tree_cone} actually gives a characterisation of tree-like cones. Stating all the results together we have
\begin{theo} \label{thm:carac_tree_cone}
 \begin{enumerate}
  \item A unimodular cone $C$ (resp. a decorated unimodular cone $(C,\vec s)$) is a tree-like cone if, and only if, it is poset compatible and its second representing matrix has at most one non-zero entry in each column.
  \item The map $\Psi$ (resp. $\overline{\Psi}$) restricted to $\calct$ (resp. $\calct_\N$) is the inverse of the map $\Phi$ (resp. $\overline{\Phi}$).
  \item For any decorated unimodular cone $\calc=(C,\vec s)$ such that its second representing matrix has at most one non-zero component per column and $\overline\Psi(\calc)$ is a convergent forest, $\zeta(C,\vec s)$ converges and is a linear combination of MZVs of weights $||\vec s||:=s_1\cdots+s_n$ with rational coefficients. They evaluate as:
 \begin{equation*}
  \zeta(C,\vec s) = (\zeta_\shuffle\circ fl_0\circ \fraks^T\circ\overline\Psi)(C,\vec s).
 \end{equation*}
 \end{enumerate}
\end{theo}
\begin{proof}
 \begin{itemize}
  \item We start by showing the second point of the Theorem. 
  
  First, notice that the maps $\Phi$ and $\overline{\Phi}$ are injective since $F$ is actually an isomorphism class of rooted forests.

 Now, let $C$ be any tree-like cone and $F$ its preimage under $\Phi$. Since $F$ is an isomorphism class, we can assume without loss of generality that $V(F)=[n]$. Then by construction of $\Phi$ and $\preceq_C$, the latter is the partial order relation on $[n]=V(F)$ that defines the forest structure of $F$. Thus $B_C$ is the adjacency matrix of $F$ and since $\Psi(C)$ is by definition the poset whose adjacency matrix is $B_C$ we obtain $\Psi|_{\calct}=\Phi^{-1}$ as claimed.
 
 The same argument carries over to the decorated cases by definition of the lift.
  \item We now prove the first point of the Theorem. Lemma \ref{lem:carac_tree_cone} directly implies that if $C$ is a tree-like cone, then $\Psi(C)$ is a rooted forest. Thus, again by Lemma \ref{lem:carac_tree_cone}, $C$ is poset compatible and $B_C$ has at most one non-zero entry in each column. We thus have that each tree-like cone satisfies the hypothesis of Lemma \ref{lem:carac_tree_cone}.
  
  The sole thing left to be shown is the fact that, if a cone obeys the hypothesis of Lemma \ref{lem:carac_tree_cone}, namely that its second representing matrix has at most one non-zero component per column, then it is a tree-like cone. By Lemma \ref{lem:carac_tree_cone} for such a cone $C$, $\Psi(C)\in\calf$. Using the same argument as in the first point of this proof, we have that $\Phi(\Psi(C))=C$. Thus $C\in\calct$ by definition of tree-like cones.
  
  The same arguments hold with the lifted maps $\overline\Phi$ and $\overline\Psi$ by definition of the lifts.
  
  \item The third point is a direct reformulation of Theorem \ref{thm:tree_CZVs_MZVs} together with the results of Lemma \ref{lem:carac_tree_cone} and the first two points of the Theorem.
 \end{itemize}

\end{proof}

\subsection{Computations of CZVs} \label{subsec:computations_CZVs}

We have now tools to compute CZVs whose underlining cones are tree-like cones; \emph{and} to detect which cones are tree-like. The algorithm we obtain to compute $\zeta(C,\vec s)$ is the following:
\begin{enumerate}
 \item check that $C$ is poset compatible,
 \item compute the second representing matrix of $C$ and to check that it has at most one $1$ per column,
 \item apply the branched binarisation map $\fraks^T$ to $\overline{\Psi}(C,\vec s)$,
 \item apply the flattening map $fl_0$ to $\fraks^T(\overline{\Psi}(C,\vec s))$,
 \item apply the shuffle MZV map $\zeta_\shuffle$ to the resulting linear combination of words.
\end{enumerate}
Notice that in some case, it is simpler to directly apply the $\yew$-flattening with Theorem \ref{thm:yew_formula} to make the computation. In practice, one can replace steps 3. and 4. with the application of the $\yew$-flattening map $fl_\yew$; and step 5. with the application of the stuffle MZV map $\zeta_\stuffle$ 

We illustrate this procedure by computing some CZVs. Before this, let us point out that this algorithm produces the same result than the procedure of \cite[Definition 3.16]{OSY21}. Indeed, this procedure is equivalent to our flattening $fl_0$ but built from leaves rather than from the root (and excluding forests). Furthermore the harvestable trees of \cite{OSY21} are essentially our convergent trees in $\calf_{\{x,y\}}$. Then \cite[Theorem 3.17]{OSY21} is the finite sum version of Theorem \ref{thm:main_result_shuffle}.
\begin{example} \label{ex:simple_cone}
 The simplest non-trivial example of CZV that we can compute with our method is:
 \begin{equation*}
  \zeta(C_1,\vec s_1):=\sum_{p,q,r\geq1}\frac{1}{(p+q+r)^2qr}.
 \end{equation*}
 The representing and second representing matrices to be respectively:
 \begin{equation*}
  A_{C_1}= \begin{pmatrix}
            1 & 1 & 1 \\
            0 & 1 & 0 \\
            0 & 0 & 1
           \end{pmatrix},\quad 
  B_{C_1}= \begin{pmatrix}
            0 & 1 & 1 \\
            0 & 0 & 0 \\
            0 & 0 & 0
           \end{pmatrix}.
 \end{equation*}
 One easily checks from $A_{C_1}$ that $C_1$ is poset compatible and clearly $B_{C_1}$ satisfies the hypothesis of Lemma \ref{lem:carac_tree_cone}. We further have
 \begin{equation*}
  \overline\Psi(C_1,(2,1,1)) = \tdtroisun{$2$}{$~1$}{$1$}
 \end{equation*}
 which is convergent. Applying the branched binarisation map $\fraks^T$ and the MZVs map $\zeta_\shuffle$ gives the result
 \begin{equation*}
  \sum_{p,q,r\geq1}\frac{1}{(p+q+r)^2qr} = 2\zeta_\stuffle(2,1,1).
 \end{equation*}
\end{example}
The exact same computation can be used, up to the flattening, for higher powers in the denominator. However, in this case, it is more efficient to use the $\yew$-flattening $fl_\yew$.
\begin{prop}
 For any $\vec s=(n,m,l)\in(\N^*)^3$ with $n\geq2$ we have that the CZV
 \begin{equation*}
  \zeta(C_1,\vec s)=\sum_{p,q,r\geq1}\frac{1}{(p+q+r)^nq^mr^l}
 \end{equation*}
  is a linear combinations of MZVs with rational coefficients given by binomial coefficients:
  \begin{equation*}
   \sum_{p,q,r\geq1}\frac{1}{(p+q+r)^nq^mr^l} = \sum_{i=0}^{l-1} \binom{m-1+i}{i} \zeta_\stuffle (n, m+i, l-i) +  \sum_{j=0}^{m-1} \binom{l-1+j}{j} \zeta_\stuffle (n, l+j, m-j).
  \end{equation*}
\end{prop}
\begin{proof}
 First observe that the cone underlying this conical sum is again $C_1$. So it is a linear combination of MZVs with rational coefficients from the same argument than the one of the previous Example \ref{ex:simple_cone}. Again as in the Example above we have
 \begin{equation*}
  \overline\Psi(C_1,(n,m,l)) = \tdtroisun{$n$}{$~l$}{$m$}.
 \end{equation*}
 For the computation, from Proposition \ref{prop:forest_cones} and Corollary \ref{coro:TZVs_stuffle_MZVs} we deduce
$$ \zeta(C_1,(n,m,l) ) = \zeta_\stuffle \circ fl_\yew \left( \tdtroisun{$n$}{$~l$}{$m$} \right). $$
Using Theorem \ref{thm:yew_formula} we obtain
$$ \tdun{$m$} \yew \tdun{$l$} = \sum_{i=0}^{l-1} \binom{m-1+i}{i} \tddeux{m+i}{l-i}
+  \sum_{j=0}^{m-1} \binom{l-1+j}{j}\tddeux{l+j}{m-j}
$$
The result then follows from the definition of $fl_\yew$ (Definition \ref{defn:flattening_yew}).
\end{proof}
\begin{rk}
 The two computations above are a good illustration of an important point concerning the number-theoretic content of TZVs. It depends only of the weight of $||\vec s||:=s_1+\cdots+s_n$. Providing the standard conjectures on MZVs hold, $\zeta^t(\overline\Psi(C,\vec s))$ is always a linear combination of MZVs of weight exactly $||\vec s||$ with rational coefficients, for any decorated cone $(C,\vec s)\in\mathcal{CT}_{N^*}$ such that $\overline\Psi(C,\vec s)$ is a convergent forest. This is not the case for general CZVs, where some cancellations might happen and some terms in the expression of the CZV might be of lower weight than expected (or, of course, not be rational combination of MZVs). I am very thankful to E. Panzer who pointed out this fact to me and gave me an example.
\end{rk}
The two previous computations were special cases of Mordell-Tornheim zetas, but we can also compute more general conical sums. We present two more computations without all the intermediate steps.
\begin{example} \label{ex:less_simple}
 \begin{equation*}
  \zeta(C_2,\vec s_2) := \sum_{p,q,r,s,t\geq1}\frac{1}{(p+q+r+s+t)^4(q+t)^2rst}
 \end{equation*}
 We have 
 \begin{equation*}
  A_{C_2} = \begin{pmatrix}
             1 & 1 & 1 & 1 & 1 \\
             0 & 1 & 0 & 0 & 1 \\
             0 & 0 & 1 & 0 & 0 \\
             0 & 0 & 0 & 1 & 0 \\
             0 & 0 & 0 & 0 & 1 
            \end{pmatrix}, \quad 
  B_{C_2} = \begin{pmatrix}
             0 & 1 & 1 & 1 & 0 \\
             0 & 0 & 0 & 0 & 1 \\
             0 & 0 & 0 & 0 & 0 \\
             0 & 0 & 0 & 0 & 0 \\
             0 & 0 & 0 & 0 & 0
            \end{pmatrix}.
 \end{equation*}
 One easily checks that $C_2$ satisfies the hypotheses of Theorem \ref{thm:carac_tree_cone}. Applying the algorithm above, one readily finds
 \begin{equation*}
  \sum_{p,q,r,s,t\geq1}\frac{1}{(p+q+r+s+t)^4(q+t)^2rst} = 2\zeta_\stuffle(4,1,1,2,1) + 6\zeta_\stuffle(4,1,2,1,1) + 12\zeta_\stuffle(4,2,1,1,1).
 \end{equation*}
\end{example}
\begin{rk}
 Examples \ref{ex:simple_cone} and \ref{ex:less_simple} were kindly checked by E. Panzer using his HyperInt package. This package uses an integral representation of CZVs that differs from the one of Theorem \ref{thm:integral_sum}. The HyperInt package and the mathematics it relies on is presented in \cite{panzer2015algorithms}.
\end{rk}
The next (and last) example required more complicated computations that we will still not detail. It was kindly checked by Masataka Ono to be coherent with their finite MZVs approach \ty{developed} in \cite{OSY21}.
\begin{example}
 \begin{equation*}
  \zeta(C_3,\vec s_3) := \sum_{n_1,\cdots,n_7\geq1}\frac{1}{(n_1+\cdots+n_7)^5(n_2+\cdots+n_7)^2n_3(n_4)^2(n_5+n_6+n_7)^2n_6n_7}.
 \end{equation*}
 We have
 \begin{equation*}
  A_{C_3}=\begin{pmatrix}
           1 & 1 & 1 & 1 & 1 & 1 & 1 \\
           0 & 1 & 1 & 1 & 1 & 1 & 1 \\
           0 & 0 & 1 & 0 & 0 & 0 & 0 \\
           0 & 0 & 0 & 1 & 0 & 0 & 0 \\
           0 & 0 & 0 & 0 & 1 & 1 & 1 \\
           0 & 0 & 0 & 0 & 0 & 1 & 0 \\
           0 & 0 & 0 & 0 & 0 & 0 & 1
          \end{pmatrix},\quad 
  B_{C_3}=\begin{pmatrix}
           0 & 1 & 0 & 0 & 0 & 0 & 0 \\
           0 & 0 & 1 & 1 & 1 & 0 & 0 \\
           0 & 0 & 0 & 0 & 0 & 0 & 0 \\
           0 & 0 & 0 & 0 & 0 & 0 & 0 \\
           0 & 0 & 0 & 0 & 0 & 1 & 1 \\
           0 & 0 & 0 & 0 & 0 & 0 & 0 \\
           0 & 0 & 0 & 0 & 0 & 0 & 0 
          \end{pmatrix}
 \end{equation*}
 and the hypothesis of the third point of Theorem \ref{thm:carac_tree_cone} are satisfied. After some more computations we obtain
 \begin{align*}
  & \sum_{n_1,\cdots,n_7\geq1}\frac{1}{(n_1+\cdots+n_7)^5(n_2+\cdots+n_7)^2n_3(n_4)^2(n_5+n_6+n_7)^2n_6n_7} \\
  =~ & 8\zeta_\stuffle(5,2,1,2,2,1,1) + 16\zeta_\stuffle(5,2,1,3,1,1,1) + 2\zeta_\stuffle(5,2,1,2,1,1,2) + 4\zeta_\stuffle(5,2,1,2,1,2,1) \\
  +~& 48\zeta_\stuffle(5,2,2,2,1,1,1) 
  + 28\zeta_\stuffle(5,2,2,1,2,1,1) + 8\zeta_\stuffle(5,2,2,1,1,1,2)   + 16\zeta_\stuffle(5,2,2,1,1,2,1) \\
  +~& 40\zeta_\stuffle(5,2,3,1,1,1,1) .
 \end{align*}
\end{example}

\section{Perspectives}

As said in the introduction of this chapter, there are still quite a few important conjectures about MZVs and their generalisations. In the discussions above, some \ty{progress has} been made towards some of these conjectures (e.g. the \ty{characterisation} of which CZVs can be written as MZVs with rational coefficients), but in general it was not our objective to solve them. However, honesty compels me to say that it was the hope of making some progress towards them that started this research program. To be more specific, one could expect that a generalisation of standard conjectures about MZVs could hold for TZVs/AZVs, and that they would be easier to prove than the MZVs counterparts. And being even more optimistic, one could also expect that proving such a conjecture on rooted forests would imply the standard conjectures on MZVs.

This is a very far-\ty{fetched} leap, and I do not claim that I expect to tackle it head on in the near future. In particular because much more amenable open questions were raised in this chapter, some of which \ty{needing} to be answered before we look at the most important ones. For example, some enumerative combinatorics questions (e.g., what is the image of the associator of the shuffle product of rooted forests?) may be of importance. Another different line of research is to explore more in depth the link between TZVs and AZVs. In particular, the evaluation of CZVs (either via AZVs or by using the $\yew$-product) could be worth exploring, and also to encode in some formal computing language.

Another line of research, closer to my current interests, is to further study the shuffle product of rooted forests. The shuffle product as well as rooted forests have many interesting universal properties, does the shuffle of rooted forests unify them in some sense? Another natural question is whether or not the shuffle of rooted forests admits a coproduct that endows the \ty{algebra} of rooted forests with a new Hopf algebra structure? Could one give a description of its dual coproduct? These questions are currently being investigated with Douglas Modesto as part of \ty{his} PhD project.

I would also like to point out that another version of this shuffle product of rooted forests adapted to binary trees seems to have applications in data sciences. More specifically, ascending hierachical classification algorithms order a data set into a binary tree. 
%Such algorithms are routinely used as a preliminary step to clusterisation of a data set. 
With L. Sautot and L. Journaux, we wrote and implemented in R an algorithm to add one data to such a tree without having to recompute the whole tree. A natural question is how to merge to data sets and not only add new data one by one. Such a merging could be implemented via a shuffle of binary trees. We hope to tackle this question in the near future.

Another \ty{interesting} question, which might be the most important one in this domain, is to find a new generalisation of the stuffle product such that TZVs form an algebra morphism for this product, and such that a suitable generalisation of Hoffman's relation is satisfied. Such a generalisation would be of the form
$$\fraks^T(\tdun{1}\star F)-\tdun{y}\shuffle^T\fraks^T(F)\in\Ker(\zeta_\shuffle^T)$$
for any convergent forest $F\in\calF^{\rm conv}_{\N^*}$, where $\star$ is this new generalisation of the stuffle. I have already looked at quite a few ways to tackle this question. Let us list some, in no particular order:
\begin{itemize}
 \item Combinatorial approach. Looking for combinatorial structures that could underline the construction of TZVs in the same way that Rota-Baxter operators underline the construction of AZVs have not \ty{led} to anything.
 \item Brute force approach. Trying to solve the desired equations by writing the product of rooted forests as a sum over all reasonable forests and solving to determine their coefficients. The proliferation of rooted forests made the number of \ty{solutions} too high to hope to find a pattern.
 \item Universal property of quasi-symmetric functions. Interestingly, one could also define MZVs (as iterated integrals) using a universal property of quasi-symmetric functions\footnote{I realised that after discussions with Yannic Vargas, who I warmly thank to introducing me to this fascinating subject.}. This approach also gives easily that MZVs form an algebra morphism for the stuffle product. However, this approach does not seem to be a good one to build TZVs.
 \item Mould calculus. There is a mould that Ecalle defines that looks somewhat similar to TZVs. However, no property of this mould seems to be known. Its arborification also does not seem to have been studied.
 \item Cones decomposition.  In \cite{GPZ13} the authors showed that a decomposition of cones leads to relations between CZVs that generalise the stuffle relations between MZVs. This TZVs are CZVs that sit above specific cones (tree-like cones), it would be enough to show that an tree-like cone admits a (non-trivial) decomposition in tree-like cones.
\end{itemize}
This last approach seems to be by far the most promising one. Various duties have prevented me from exploring it further, but I intend to offer a PhD position with this question as the main focus of the student's investigation.
\begin{rk}
 After the redaction of this chapter, Pierre Catoire, from the Université du Littoral in Calais, came to Mulhouse to discuss related topics. From his short stay, an interesting concept emerged. We realised we could use the universal property of the dendriform and tridenform algebras of (planar) rooted trees (see \cite{ronco2002eulerian,loday2007algebra,catoire2023tridendriform}) to define respectively new shuffle and stuffle arborified MZVs. These new objects obey other versions of the shuffle and stuffle products and are not both completely different to the ones presented above. However, it is at this moment still not clear what the (tri)dendriform version of the branched binarisation map \eqref{eq:binarisation_map} could be. We hope to be able to see in the coming months if these new objects are valid candidates for generalisation of MZVs to rooted forests; planar or not.
\end{rk}

% An intersting stuffle for TZVs, such that Hoffman's relation are satisfied. Another proposal was made ($\stuffle^t$), but the most natural version of Hoffman's relation don't hold for this product. Quite a few ways have been explored toward this goal (combinatorics, brute force, using the universal properties of quasi-symmetric functions, mould calculus). Most promising way: the decomposition of cones.

% XXXXXX: CHECK THAT IN DIDN'T INCLUDE $\stuffle^t$.

% Implementation of the $\yew$ product in a formal computing program?
% In particular, it seems tractable for computers. The implementation of these results in formal language computation is left for future research. 

% % % % %  COMMENTER CE QUI EST CI-DESSOUS.
% 
% \bibliographystyle{unsrt}
%  \addcontentsline{toc}{section}{Bibliography}
% \bibliography{HDR_biblio}
%  
% \end{document}

    % 
% % 
% \documentclass[11pt,twoside,a4paper]{book}
% % % % \documentclass[11pt,twoside,a4paper]{article}
% 
% \usepackage{HDR}
% 
% %% SI ON LAISSE LES DEUX LIGNES SUIVANTES DANS LE STY, ELLES NE MARCHENT PAS... (probablement à cause de \input).
% \input{xy}
% \xyoption{all}
% 
% 
% \begin{document}
% 
% % % 
% % % % % COMMENTER CE QUI EST AVANT CA ET ENDDOCUMENT POUR COMPILER HDR.tex.

% % % % \chapter{Locality structures}% and multivariate renormalisation}
\chapter{Locality structures and multivariable renormalisation}
\label{chap:loc}

\section*{Introduction}

\addcontentsline{toc}{section}{Introduction}

\subsection*{Locality from physics to mathematics}

\addcontentsline{toc}{subsection}{Locality from physics to mathematics}

Locality is one of the central concept of modern and not-so-modern physics. \emph{Very} broadly speaking, it can be expressed by the following idea
\begin{quote}
 \emph{If two events are far away from each other, the outcome of one cannot influence the outcome of the other.}
\end{quote}
Of course, nearly every word of this sentence can mean something different for the various domain of Physics (or even other sciences) one is working with. It is not the purpose of this introduction to list them.

Since we expect nature's language to be mathematics, one needs to encode this locality principle, however it might manifest itself, into the mathematics used to describe reality. In (perturbative) Quantum Field Theory, QFT for short, this is typically encoded into the requirement that the Feynman rules (already \ty{mentioned} in Chapter \ref{chap:PROP}) are an algebra morphism for the concatenation of graphs:
\begin{equation*}
 \forall(G_1,G_2)\in\calG_\tau,\quad\Phi_\tau(G_1G_2)=\Phi_\tau(G_1)\Phi_\tau(G_2).
\end{equation*}
This is a relevant locality requirement since if $G=G_1G_2$ is a non-connected Feynman graph of the theory $\tau$, then the processes described by the graphs $G_1$ and $G_2$ respectively take place in different parts of spacetime and should not influence each other.

In realistic QFT, Feynman rules need to be renormalised. Since the theoretical predictions that will be tested against experiments are actually obtained through this renormalised Feynman rules, we need renormalisation to preserve locality.
A systematic algorithm to perform such a renormalisation was proposed by Bogoliubov, Parasiuk, Hepp and Zimmermann (abbreviated BPHZ renormalisation, and based on the so-called forest formula) \cite{BP57,He66,Zi70}. More recently,  Connes and   Kreimer \cite{CK1} gave an interpretation of this algorithm by means of a coproduct which enables to build -- using  dimensional regularisation -- a renormalised map via its algebraic Birkhoff-Hopf factorisation, regarded as an algebra homomorphism from the Hopf algebra of Feynman graphs to the Rota-Baxter algebra of meromorphic functions in one variable.

The first goal of this chapter is to build a \emph{multivariate} renormalisation scheme, in which regularised Feynman amplitudes would take their values in spaces of meromorphic functions (or rather germs) in several variables. Attents towards such a renormalisation scheme had been made, in particular by Speer in \cite{Sp74}. We present the approach of \cite{CGPZ1}, which completely solved this issue by introducing {\bf locality structures}, taylored to make sure the various maps we build have the locality properties we desire. The theory of locality structures actually goes beyond the multivariate renormalisation scheme. These themes will be brushed over or just mentioned in passing as we wish to focus here on the construction and application of the multivariate renormalisation scheme.

Before summarising the content of this chapter and stating its main results, it is worth addressing a natural question the reader might have: why want to build a multivariate renormalisation scheme when univariate ones are working well-enough? There might be various answers but the one that has my personal preference is the following: regularising every \emph{possible} source of divergences in a Feynman graph with a different variable allows one to always keep track of where each divergent term comes from. Thus we deal with expressions that could be more complicated (since no cancellation can take place) but that carry more information. \cy{This could already be seen as a fair trade-off but this feature expresses itself in a somewhat surprising result.} In natural applications of multivariate renormalisation, the Birkhoff-Hopf factorisation \emph{reduces} to a minimal \ty{subtraction}. \cy{This is a renormalisation scheme} where one simply removes the divergent parts. Such renormalisation schemes typically do not preserve locality.

This \ty{striking} feature justifies a posteriori multivariate renormalisation as a natural renormalisation scheme. Let us also point out that there are occurrences where the most naive regularisation would be a multivariate one.

\subsection*{State of the art}

\addcontentsline{toc}{subsection}{State of the art}

As we already mentioned, attempts to introduce multivariate renormalisation are not new, see for example \cite{Sp74}. The approach we present below is only new in that it relies on (what we believe to be) the right algebraic structures to build this renormalisation scheme. These so called {\bf locality structures} were originally introduced in \cite{CGPZ1}. They offer a framework to rigorously build a multivariate renormalisation scheme, a task they completely fulfilled. It has been applied to various objects, most notably in \cite{CGPZ2} to Kreimer's toy model \cite{Kr97}, in \cite{CGPZ3} to a divergent version of Arborified Zeta Values studied in Chapter \ref{chap:MZV} and in \cite{CGPZ2019} to divergent Conical Zeta Values.

To perform these renormalisations, the locality versions of various structures of interest have been build. For example, the locality counterparts of operated structures have been introduced in \cite{CGPZ2}. Another typical case is that known analytical spaces have been shown to carry locality structures, for example meromorphic germs again in \cite{CGPZ2} or families of meromorphic germs in \cite{CGPZ2019}. 

These results were obtained as means towards applications of our multivariate renormatisation scheme. However, locality structures contain more than the multivariate renormalisation scheme and I believe that, as it is often the case in mathematics, they are worth studying for their own sake. Various extensions of the theory presented above were proposed. For example, a locality version of lattice theory was studied in \cite{CGPZ21}. Other locality structures and their relation with various partial structures were investigated in \cite{Zh18}.Locality versions of the Poincaré-Birkhoff-Witt and the Milnor-Moore Theorems were proven in \cite{Lo23}, see also \cite{CFLP22}.

The link between this algebraic approach of the physical question of locality and other approaches to locality has also been explored. Let us point out in particular \cite{Re19} which unifies locality with causality in Algebraic QFT (AQFT). We refer the readers to \cite{Re16} for a gentle introduction to AQFT and to \cite{dutsch2019classical} for a more complete one. Other very recent and exciting results aiming at closing the gap between mathematical and physical renormalisation were obtained in \cite{GPZ24},

\subsection*{Content and main results}

\addcontentsline{toc}{subsection}{Content and main results}

The results presented in this chapter are mostly from \cite{CGPZ1} and \cite{CGPZ2}. Section \ref{sec:loc_tensor_prod} on locality tensor products is from \cite{CFLP22}.

\medskip

We start by introducing locality categories. We start with the category of locality sets (Definitions \ref{defn:independence} and \ref{defn:localmap}). We move on to more sophisticated locality categories, in particular locality vector spaces (Definition \ref{defn:lvs}), locality algebras (Definition \ref{defn:localisedalgebra}) and locality Rota-Baxter algebras (Definition \ref{defn:lrba}). 

The second section \ref{subsec:multi_mero} is the direct continuation of the first. There, we detail an important important example, namely the (locality) algebra of multivariate meromorphic germs with linear poles and rational coefficients. This space is built as a direct limit in Equation \eqref{eq:def_MQ}. A result of \cite{GPZ4} is that this space admits splittings into holomorphic and polar germs. We show in Proposition \ref{pp:merodecomp} that this space of multivariate meromorphic germs is indeed a locality algebra and that the projections on holomorphic and polar germs are locality algebra morphisms and locality Rota-Baxter morphisms, respectively.

\medskip 

The next locality categories we wish to build to obtain a locality Birkhoff-Hopf decomposition are the locality counterparts of coalgebras, bialgebras and Hopf algebras. For this, we need locality tensor products, which are the main focus of Section \ref{sec:loc_tensor_prod}. Since locality tensors, mimicking usual tensors, will be build as quotients, we start by introducing locality relations on quotient spaces. We show that if $(V,\top)$ is a locality vector space and $W$ is a \ty{vector subspace} of $V$, then $V/W$ inherits a natural locality relation $\ttop$ from $\top$, see Definition \ref{defn:quotientlocality}. We here encounter a striking difference with the non-locality case: the quotient of a locality vector space with a \ty{vector subspace} might not be a locality vector space for the natural relation $\ttop$ as explicitly shown in Counterexample \ref{counterex:quotient_loc}.

Nonetheless, we can still define locality tensor products of two (or more) subspaces of some ambient space (Definition \ref{defn:loctensprod1}) and endow this product with a canonical locality structure. However, as we explicitly state, it is still an open question whether or not these locality tensor products are always a locality vector space. Based on  the work of \cite{CFLP22} it is conjectured to be true.

\medskip

In Section \ref{sec:coal} we finally introduce locality coalgebras (Definition \ref{defn:colocalcoproduct}), bialgebras (Definition \ref{defn:locbialgebra}) and Hopf algebras (Definition \ref{defn:LHopf}). These structures induce a locality version of the convolution product (Definition \ref{def:conv_prod_loc}) and we argue that the usual result that graded connected bialgebras are Hopf algebras hold in the locality framework. We also prove results on the convolution product that are specific to locality structures: Theorem
\ref{thm:loc_conv_prod}. 

Within these structures we obtain the main result of this Chapter: Theorem \ref{thm:abflhopf} and in particular Equation \eqref{eq:renom_map}. It is the locality version of the Birkhoff-Hopf factorisation. We show that the above-mentioned BPHZ formula actually reduces to a minimal \ty{subtraction} when one works in the framework of locality structures. This is because locality ``takes care'' of the inner working of the quantities we are renormalising. Moving on, Corollary  \ref{co:abflhopf} is then our solution to our original problem: it is an application of the preceding Theorem where the target locality algebra is the one of multivariate meromorphic germs introduced in Subsection \ref{subsec:multi_mero}. 

\medskip

The last section of the chapter, Section \ref{sec:Kreimer} is an application of this multivariate renormalisation scheme to Kreimer's toy model of iterated integrals. These integrals reflect structures of rooted forests. We treat them by using the locality version of operated structures (Definitions \ref{defn:loc_op_set} and \ref{defn:basedlocsg}). We then introduce with Definition \ref{def:loc_rooted_forests} a locality relation on rooted forests decorated by a locality set. We further introduce properly decorated forests (Definition \ref{def:prop_dec_forests}) and show with Theorem \ref{thm:loc_Hopf_prop_forests} that the usual Connes-Kreimer Hopf algebra of rooted forests carry on to the locality framework. Notice that this result does not seem to have been available in the literature before.

We also show (Proposition \ref{prop:operatedalgebra}) that properly decorated forests, together with the usual grafting map, have the structure of a locality operated locality algebra. But the most important result of this Section is Theorem \ref{thm:univ_prop_trees_loc} which is the locality version of the universal property of rooted forests. Notice that it requires to take \emph{properly} decorated forests. Putting our various ingredients together, we build the multivariate regularisation map for Kreimer's toy model and renormalise it in one go in Theorem-Definition  \ref{defthm:reg_ren_maps}.

\medskip

The paper ends with a list of some open questions in the domain of locality structures.

\section{Locality categories}

We present the basic categories of the theory of locality structures. Except if stated otherwise, these definitions are all from \cite{CGPZ1}. Some of them have been generalised and/or refined in subsequent work, but we will often not need this further degree of generality.

\subsection{The category of locality sets}

The most basic locality structure is the locality set, which we define now.
 \begin{defn} \label{defn:independence} 
 		\begin{enumerate}
\item A {\bf locality set} is a  couple $(X, \top)$ where $X$ is a set and $ \top \subseteq X\times X$ is a symmetric relation on $X$, referred to as the
{\bf locality relation} of the locality set. When the underlying set $X$ needs to be emphasised, we use the notation $X\times_\top X$ or $\top_X$ for $\top$.
 \item  For any subset $U$ of a locality set $(X,\top)$, let
 			\begin{equation*}
 			U^\top:=\{x\in X\,|\, U\times\{x\}\subseteq X\times_\top X\}
 \end{equation*}
be the {\bf  polar subset} of $U$.
\item A {\bf locality subset} of a locality set $(X,\top_X)$ is a pair $(Y,\top_Y)$ with $Y\subseteq X$ and $\top_Y=\top_X\bigcap(Y\times Y)$.
\end{enumerate}	
\end{defn}
For a locality set $(X,\top)$ the locality relation $\top$ is often called the {\bf independence relation}. Therefore, for $x_1, x_2\in X$, we denote $x_1\top x_2$ if $(x_1,x_2)\in \top$ and say that $x_1$ and $x_2$ are {\bf independent} (w.r.t. the locality relation $\top$). 

We will soon give a number of examples of locality sets. Before, let us give another definition so that we indeed have a category of locality sets.
\begin{defn} \label{defn:localmap}
% \begin{enumerate}
%  \item 
 Let $(X,\top_X)$ and $(Y,\top_Y)$ be two locality sets. Two maps $\phi,\psi:X\longrightarrow Y$ are called {\bf independent} (w.r.t. $\top_X$ and $\top_Y$) if 
 \begin{equation*}
  (\phi\times\psi)(\top_X)\subseteq\top_Y\Longleftrightarrow\forall(x_1,x_2)\in X^2,~x_1\top_X x_2\Rightarrow \phi(x_1)\top_Y\psi(x_2).
 \end{equation*}
 Then $\phi:X\longrightarrow Y$ is called a {\bf locality map} if it is independent of itself.
\end{defn}
In other words, a map $\phi:X\longrightarrow Y$ between two locality sets $(X,\top_X)$ and $(Y,\top_Y)$ is a locality map if, and only if
\begin{equation*}
 \forall(x_1,x_2)\in X^2,~x_1\top_X x_2\Rightarrow \phi(x_1)\top_Y\phi(x_2).
\end{equation*}
If two maps $\phi$ and $\psi$ are independent, we often write $\phi\top\psi$. This is of course an abuse of notation, in the same way that saying that $\phi$ and $\psi$ are independent since the independence depends on the locality relations $\top_X$ and$\top_Y$ on each sets. These abuses should not lead to any ambiguities when we will make them.

So far we have as advertised a category of locality sets whose objects are locality sets and morphisms are locality maps. We denote $\LSet$ this category. \cy{Admettedly, locality sets are very general. For example ``being equal'' is an independence relation, and so is ``being different''. Therefore, there are not much than we will do with only locality sets in practice we will require more structures on these objects. Locality sets should be seen as the stepping stone to more relevant structures.}

\medskip

Let us now present examples of locality sets. First, let us point out that it is extremely easy to lift a set to a locality set by endowing it with a symmetric binary relation. For example ``$=$'' and ``$\neq$'' define two such locality relation. It is a simple exercise to show that a set with $n\in\N$ elements can be endowed with exactly $2^{n(n+1)/2}$ locality relations. Indeed such a relation can be think of as a $n\times n$ symmetric matrix whose entry are 0 or 1. Since a symmetric matrix has $n(n+1)/2$ independent entries, and any entries has two possible values, we find the claimed number of locality relations.

Let us now look at other simple yet interesting examples of locality sets. \cy{These examples have locality relations that are closer, intuitively speaking, to the physical notions of locality we have in mind.}
\begin{example} \label{ex:power_set}
 Let $X$ be a set and $\calP(X)$ be the power set of $X$. We endow $\calP(X)$ with the relation $\top^\calP$ defined by the empty intersection:
 \begin{equation*}
  \forall A,B\subset X,~A\top^\calP B:\Longleftrightarrow A\cap B=\emptyset.
 \end{equation*}
 Then $(\calP(X),\top^\calP)$ is a locality set.
\end{example}
Notice that many variations of $\top^\calP$ can be defined (e.g. $A\cup B=X$) but $\top^\calP$ turns out to be the most interesting one for examples and applications.
\begin{example} \label{ex:loc_disj_supp}
 Let $X$ be a set, $\K$ to be $\R$ or $\C$ and $\calA(X,\K)$ the set of maps from $X$ to $\K$. The relation $\top_{\rm dis}$ given by disjointness of (set-theoretic) support is defined by
 \begin{equation*}
  f\top_{\rm dis}g:\Longleftrightarrow \forall x\in X, f(x)g(x)=0.
 \end{equation*}
 Then $(\calA(X,\K),\top_{\rm dis})$ is a locality set.
 
 In a topological setting, this can be extended to the disjointness of the topological supports, i.e. of the closures of $\{x\in X|f(x)\neq 0\}$ and $\{x\in X|g(x)\neq 0\}$.
\end{example}
This example can be extended in various ways to distributions, by requiring disjointness of their supports, singular supports  or  wavefront sets (see \cite{BDH14} and references therein for a clear presentation of these topics). This gives hope that the program of Epstein-Glaser renormalisation could be formulated in the language of locality structures.
 
 We have already seen an interesting phenomenon, namely that given a locality set $(X,\top)$ one can often derive locality relations from $\top$ on sets related in some sense from $X$. The example we have seen is the set of maps between two locality sets which can be endowed with an independence relation. Another example of this phenomenon gives a generalisation of Example \ref{ex:power_set}.
 \begin{example}
  Let $(X,\top_X)$ be a locality set and $\calP(X)$ be the power set of $X$. We endow $\calP(X)$ with another locality relation $\top^{\calP}_X$ defined by
  \begin{equation*}
   A\top^{\calP}_X B:\Longleftrightarrow \forall a\in A,\forall b\in B,a\top_Xb.
  \end{equation*}
  for any subsets $A$ and $B$ of $X$. Then $(\calP(X),\top^{\calP}_X)$ is a locality set.
 \end{example}
 The last example is for decorated forests. Recall (Definition \ref{def:forests}) that a {\bf rooted forest} $F=(V(F),E(F))$ is a directed acyclic graph such that each component has a unique minimal element. Let $\Omega$ be a set. A {\bf $\Omega$-decorated rooted forest} is a pair $(F,d_F)$ (often the decoration map $d_F$ is omitted) with $F$ a rooted forest and $d_F:V(F)\longrightarrow\Omega$ a map. $\calF_\Omega$ is the vector space freely generated by $\Omega$-decorated rooted forests.
 \begin{example}
  Let $(\Omega,\top_\Omega)$ be a locality set. Then we endow the set $\calF_\Omega$ with a locality relation which in an abuse of notation we also write $\top_\Omega$. This relation is defined as
  \begin{equation*}
   F\top_\Omega F':\Longleftrightarrow\forall (v,v')\in V(F)\times V(F'),~d_F(v)\top_\Omega d_{F'}(v')
  \end{equation*}
  for any $\Omega$-decorated forests $(F,d_F)$ and $(F',d_{F'})$.
 \end{example}
 A locality subset of $\calF_\Omega,\top_\Omega)$ will play an important role in the sequel.
 
 \subsection{Locality vector spaces and algebras}

 We define now locality vector spaces, which together with locality linear maps to be defined latter on, form probably the most important of locality categories. 
\begin{defn} \label{defn:lvs}
A {\bf locality vector space} is a vector space $V$ equipped with a locality relation $\top$ which is compatible with the linear structure on $V$ in the sense that, for any  subset $X$ of $V$, $X^\top$ is a linear subspace of $V$.
\end{defn}
\begin{rk} \label{rk:locvecsp}
 For a locality vector space $(V,\top)$, since $V^\top$ is a linear subspace of $V$, we have  $\{0\}\times V\subset \top,$ or equivalently  $0\in V^\top.$ Note that there are no locality restrictions for the vector space structure (addition and scalar multiplication) on $V$, that is, the addition and scalar multiplication are everywhere defined.
\end{rk}
% \begin{example} \label{ex:merosp}
% The vector space $\calM_\Q $ equipped with the relation $\perp^Q$ in Definition~\ref{de:meroindep} is a locality vector space $\left(\calM _\Q,\perp^Q\right)$.
% \end{example}
An important example is when the locality structure on a vector space comes from a locality on a basis. It is a special case of locality basis.
\begin{example}
 From a locality set $(X, \top)$ let  $(KX, \top)$ be the locality vector space whose defining relation (denoted by the same symbol $\top$) is  the  linear extension of that on $X$. More precisely for $u, v\in   K X$, $(u,v)\in \top$ if the basis elements from $X$ appearing in $u$ are related via $\top$ to the basis elements appearing in $v$.
Thus
$$  KX\times_{\top_{  KX}}  KX=\bigcup_{U,V\subseteq X,(U,V)\subseteq \top}  KU\times   KV.$$
\end{example}
In order to have the category of locality vector spaces, let us define their morphisms.
\begin{defn}  \label{defn:locallmap}
Let $\left(U,\top_U\right)$ and $ (V,\top_V)$ be locality vector spaces, a linear map $\phi:\left( U,\top_U\right)\to \left(V,\top_V\right)$ is called a {\bf locality linear map} if it is a locality map.

We write $\LVect$ the category of locality vector spaces with locality linear maps.
\end{defn}
There are so far no clear way to turn $\LVect$ into a monoidal category with a monoidal structure relevant for locality (see Conjecture \ref{conj:loc_tensor} at the end of this Chapter and the discussion after). However, the set of locality linear maps from a locality vector space to another can itself be endowed with the structure of a locality vector space. Below 
% We have a useful property of locality linear maps, namely that the category $\LVect$ is enriched over $\LVect$. W
we only prove the case that is of importance for us.
\begin{prop} \label{lem:locallinearmap}
Let $(U,\top_U), (V,\top_V)$ be locality vector spaces and   $\phi,\psi:(U,\top_U) \longrightarrow (V,\top_V)$ be independent locality linear maps. Any two linear combinations of $\phi$ and $\psi$ are also independent. In particular, any linear combination  $\lambda\phi+\mu \psi$ with $\lambda, \mu \in \K$ is a locality linear map.
\end{prop}
\begin{proof}
Let $u_1, u_2$ be in $\top_U$. Since $\phi$ and $\psi$ are independent locality linear maps, we have $\phi(u_1)\top\phi(u_2)$ and $\phi(u_1)\top\psi(u_2)$. Since $\phi(u_1)^\top$ is a \ty{vector subspace} of $V$ this implies 
\begin{equation*}
 \phi(u_1)\top\left(\lambda\phi(u_2)+\mu\psi(u_2)\right).
\end{equation*}
Similarly, we also have $\psi(u_1)\top\left(\lambda\phi(u_2)+\mu\psi(u_2)\right)$ and hence, using once more the axiom of locality vector spaces we obtain
$\left(\lambda\phi(u_1)+\mu \psi(u_1)\right)\top_V  \left(\lambda\phi(u_2)+\mu \psi(u_2)\right).$
	\end{proof}	
	Locality algebras are defined through locality semigroups, so let us introduce this notion now. First, for a locality set $(X,T)$ and an integer $k\geq 2$, denote
\begin{equation} \label{eq:lprod}
X^{_\top k}: = \underbrace{X \times_\top \cdots \times_\top X}_{k \text{ factors}}: = \left\{ (x_1,\cdots,x_k)\in X^k\,\left|\, (x_i,x_j)\in \top, 1\leq i\neq j\leq k\right.\right\}.
\end{equation}
\begin{defn} \label{defn:lsg}
\begin{enumerate}
\item
An {\bf locality semigroup}\footnote{\label{ft:part}
As a special case of partial algebras \cite{Gratzer}, the terminology ``partial semigroup" is used for a set equipped with a partial associative product defined only for certain pairs of elements in the set. The condition for a locality semigroup is more restrictive than that of a partial semigroup in that the former requires that the pairs {for which  the partial product is defined stem  from} a symmetric relation and that
the partial product should be compatible with the locality relation in the sense of Equation \eqref{eq:semigrouploc}.}
 is a locality set $(G,\top)$ together with a partial product law defined on $\top$:
$$ m_G: G\times_\top G\longrightarrow  G
$$
for which the product is compatible with the locality relation on $G$, namely
\begin{equation} \label{eq:semigrouploc}
\forall U\subseteq G, \quad  m_G((U^\top\times U^\top)\cap\top)\subset U^\top
\end{equation}
and such that the following {\bf locality associativity property} holds:
\begin{equation}\label{eq:asso}
(x\cdot y) \cdot z = x\cdot (y\cdot z) \text{ for all }(x,y,z)\in G\times_\top G\times_\top G. 	
\end{equation}
Note that, because of the condition \eqref{eq:semigrouploc}, both sides of Eq.~(\ref{eq:asso}) are well-defined for any triple in the given subset.
\label{it:lsg}
\item
\ty{A} locality semigroup is {\bf commutative} if $m_G(x,y)=m_G(y,x)$ for $(x,y)\in \top$, noting that both sides of the equations are defined since $\top$ is symmetric.
\item
\ty{A}  {\bf locality   monoid} is a locality   semigroup $(G,\top, m_G)$ together with a {\bf unit element} $1_G\in G$ given by the defining property
\[\{1_G\}^\top=G\quad \text{ and }\quad m_G(x, 1_G)= m_G(1_G,x)=x\quad \forall  x\in G.\]
We denote the locality  monoid by $(G,\top,m_G, 1_G)$.
\label{defn:partial monoid}
\item A {\bf locality group} is a locality monoid $(G,\top, m_G,1_G)$ equipped with a morphism $\iota: G\to G$ of locality sets, called the {\bf inverse map}, such that  $(\iota (g),g)\in \top$ and $m_G(\iota(g),g)=m_G(g,\iota(g))=1_G$ for any $g\in G$.
\item
A {\bf sub-locality semigroup} of a locality semigroup $(G,\top,m_G)$ is a locality semigroup $(G',\top',m_{G'})$ with $G'\subseteq G$, $\top'=(G'\times G')\cap \top$ and $m_{G'}=m_G|_{\top'}$, that is, for $x, y\in G'$ and $(x, y)\in \top$, $m_G(x,y)$ is in $G'$.
A {\bf sub-locality monoid} of a locality monoid is a sub-locality semigroup of the corresponding locality semigroup which share the same unit. A {\bf sub-locality group} of a locality group is a sub-locality monoid of the corresponding locality monoid which
is also a locality group.
\label{it:lssg}
\end{enumerate}
\end{defn}
For notational convenience, we usually abbreviate $m_G(x,y)$ by $x\cdot y$ or simply $xy$. 

\begin{rk}
One easily checks that on   a locality   monoid $(G, \top,m_G, 1_G)$ if $(x_1,x_2, y_1, y_2)$ is in $G^{_\top4}$, then $(x_1 x_2, y_1, y_2)$ and $(x_1,x_2, y_1 y_2)$ are in $G^{_\top3}$ and hence $(x_1x_2, y_1y_2)\in \top.$
\end{rk}
Locality semigroup is our first structure where the philosophy of locality is made clear. In this structure, one can only multiply objects that are independent. This is one of the basic ideas behind locality structures. It is quite easy to exhibit examples of locality semigroup. Instead, we present a simple counter example of locality semigroup.
\begin{coex}
The set $G$ of linear subspaces of $\R^2$ is a locality set with respect to the following relation $\top_G$ on linear subspaces of $G$: $U, V\subseteq \R^2$ are called transverse if they intersect trivially, namely if
$U\cap V=\{0\}$. The set $G$ equipped with linear sums $+$  is  a monoid. But the corresponding $(G,\top_G, +)$ is not a locality monoid. Indeed, for the standard basis $\{e_1,e_2\}$ of $\R^2$,  the subspaces $\R e_1$ and $\R e_2$  both  intersect $\R (e_1+e_2)$ trivially, but $\R e_1 +\R e_2$ does not.
\end{coex}

Before defining locality algebras, we need first need a preliminary notion.
\begin{defn} 
 Let $V$ and $W$ be vector spaces and let $\top:=V\times_\top W \subseteq V\times W$. A map $f: V\times_\top W \to U$ to a vector space $U$ is called {\bf locality bilinear} (with respect to $\top$) if
$$f(v_1+v_2,w_1)=f(v_1,w_1)+f(v_2,w_1), \quad f(v_1,w_1+w_2)=f(v_1,w_1)+f(v_1,w_2),$$
$$f(kv_1,w_1)=kf(v_1,w_1), \quad 
f(v_1,kw_1)=kf(v_1,w_1)$$
for all $v_1,v_2\in V$, $w_1,w_2\in W$ and $k\in  \K $ such that all the pairs arising in the above expressions are in $V\times_\top W$.
\end{defn}
In most of our applications, the set $V\times_\top W \subseteq V\times W$ will be clear from context, and therefore we will not write that a map is bilinear \emph{with respect to $V\times_\top W$} except is absolutely necessary.
\begin{defn} \label{defn:localisedalgebra}
\begin{enumerate}
\item A {\bf nonunitary locality  algebra} over $\K$ is a locality vector space $(A,\top)$ over $K$ together with a locality bilinear map
	$$ m_A: A\times_\top A \to A$$ such that
	$(A,\top, m_A)$ is a locality semi-group in the sense of Definition~\ref{defn:lsg}.(\ref{it:lsg}).
	\item A {\bf locality algebra} is a nonunitary locality algebra $(A,\top, m_A)$ together with a {\bf unit} $1_A:\K\to A$ in the sense that
	$(A,\top, m_A, 1_A)$ is a locality monoid defined in  Definition~\ref{defn:lsg}.~(\ref{defn:partial monoid}). We shall omit explicitly  mentioning the unit $1_A$ and the product $m_A$ unless this generates an ambiguity.
\item
		A  linear subspace $B$ of a locality algebra $\left(A,\top , m_A \right) $ is called a {\bf sub-locality algebra} of $A$ if
$B$ is a sub-locality semigroup of $A$ in the sense of Definition~\ref{defn:lsg}.(\ref{it:lssg}).
\item
A sub-locality algebra $I$ of a locality commutative algebra $\left(A,\top ,m_A \right) $ is called a {\bf locality ideal} of $A$ if for any $b\in I$ we have
$b^\top \cdot b\subseteq I ~\forall b^\top\in\{b\}^\top$.
\item
\ty{A} locality-linear map $f:(A,\top_A,\cdot_A)\to (B,\top_B,\cdot_B)$ between two (non necessarily unital) locality algebras is called a {\bf locality algebra homomorphism} if
\begin{equation} \label{eq:defn:localisedideal}
f(u\cdot_A v)=f(u)\cdot_B f(v)\ \  \forall (u,v)\in\top_A.
\end{equation}
\item A locality algebra $A$ with a linear grading $A=\oplus_{n\geq 0}A_n$ is called a locality graded algebra if $m_A((A_m\times A_n)\cap\top_A) \subseteq A_{m+n}$ for all $m, n\in \Z$.
    \label{it:gradalg}
\end{enumerate}
\end{defn}
It is easy to check that a locality linear map $f:(A,\top_A,\cdot_A)\to (B,\top_B,\cdot_B)$ between two locality algebras is a locality algebra homomorphism if and only if $\ker f$ is a locality ideal of $A$,
by the same argument as the one for the corresponding result on an algebra homomorphism.

\begin{rk}
		\begin{enumerate}
\item For a locality   algebra $(A,\top)$ we have $\{0, 1_A\}\subset A^\top$  since $0\in  A^\top$ by Remark~\ref{rk:locvecsp}.
\item	If $A\times_\top A$ is $A\times A$ in a locality monoid and locality algebra, we recover the usual notions of monoid and algebra.
		\end{enumerate}
\label{rk:algebraunit}
\end{rk}
% Let us finish this Subsection with a straightforward result which will play a role later on.
% \begin{lemma}
% Let $(G,m_G,\top_G)$ be a locality semigroup. Let $k\geq  2$ and $1\leq i\leq k$. For $(x_1,\cdots,x_k)\in G^{_\top k}$ we have
% \begin{enumerate}
% \item
% $(\Id_G^{i-1}\times m_G\times \Id_G^{k-i-1})(x_1,\cdots,x_k)\in G^{_\top (k-1)}.$
% \item
% $(x_1\cdot \ldots \cdot x_i, x_{i+1}\cdot \ldots \cdot x_k)\in G\times_\top G,$.
% \end{enumerate}
% \label{lem:locprod}
% \end{lemma}

\subsection{Locality Rota-Baxter algebras and projection maps}

 Rota-Baxter operators and algebras are a classical topic (see for example \cite{Guo}). Let us now give their locality version. 
\begin{defn}
A linear operator $P: A\to  A$ on a locality algebra $(A,\top)$ over a field $ K $ is   called
		 a {\bf locality Rota-Baxter operator of
		 weight} $\lambda\in  K $, or simply a {\bf Rota-Baxter operator}, if it is a locality map, independent of $\Id_A$, and satisfies the following {\bf locality Rota-Baxter relation}:
		 \begin{equation}
 P({ a})\, P({ b})= P(P({ a})\,{ b})+ P({ a}\, P({ b})) +\lambda\, P({ a}\,{ b}) \quad \forall (a,b)\in \top.
 \label{eq:rbo}
 \end{equation}
We call the triple $(A,\top, P)$ a {\bf locality Rota-Baxter algebra}.
\label{defn:lrba}
\end{defn}
  \begin{rk}\begin{enumerate}
\item The right hand side of  {Eq.~}\eqref{eq:rbo} is well defined due to the condition that $P$ is independent of the identity. Then for any $(a,b)\in \top$ we have $a\top P(b)$ and $P(a)\top b$. Since $P$ is a locality map\footnote{which is actually implied from being independent of the identity, so that this requirement was actually not necessary. We leave it for readability.}, we also have then $P(a)\top P(b)$ so that the left hand side is well-defined
\item As in the classical setup  \cite[Proposition 1.1.12]{Guo}, if  $P$ is a locality Rota-Baxter operator of weight  $\lambda$, then $-\lambda-P$ is also a locality Rota-Baxter of weight $\lambda$.
\end{enumerate}
\label{rk:LRB}
\end{rk}
An important class of locality Rota-Baxter algebras arises from idempotent operators, i.e. projections.
\begin{prop}	
Let $\left(A,\top ,m_A \right)$  be  a locality algebra. Let $P:A\longrightarrow A$ be a locality linear idempotent  operator in which case there is a linear decomposition $A=A_1\oplus A_2$ with $A_1=\ker\, (\Id-P)$ and $A_2=\ker\,(P)$ so that $P$ is the projection   onto $A_1$ along $A_2$.  The following statements are equivalent:
\begin{enumerate}
\item $P$ or $\Id-P$ is a locality Rota-Baxter operator of weight $-1$; \label{it:irbo1}
\item $A_1$ and $A_2$ are locality subalgebras of $A$, and $P$ and $\Id-P$ are independent locality maps.
\label{it:irbo2}
		\end{enumerate}
If one of the conditions holds, then $P$ is a locality multiplicative map if and only if $A_2$ is a locality ideal of $A$.
\label{prop:multpi}
\end{prop}
\begin{proof}
We write $\pi_1=P$ and $\pi_2=\Id -P$.
\begin{itemize}
 \item ((\ref{it:irbo1}) $\Longrightarrow$ (\ref{it:irbo2}))
It follows from the locality Rota-Baxter identity \eqref{eq:rbo} that $A_1=P(A)$ is
a sub-locality algebra of $A$. Since $\Id-P$ is again an idempotent locality Rota-Baxter operator, $A_2=(\Id-P)(A)$ is also a sub-locality algebra of $A$. Finally, $P$  {and} $\Id-P$ are independent locality maps as a consequence of Definition \ref{defn:lrba}: for any $(a,b)\in\top$ we have $P(a)\top P(b)$ and $P(a)\top b$ since $P$ is a locality map independent from the identity. Then, since $P(a)^\top$ is a \ty{vector subspace} of $A$ we have $P(a)\top(\Id-P(b)$.

\item ((\ref{it:irbo2}) $\Longrightarrow$ (\ref{it:irbo1}))
Since $\pi_1$ and $\pi_2=\Id-\pi_1$ are locality Rota-Baxter operators of weight $-1$ at the same time in view of Remark~\ref{rk:LRB}, we only need to verify that $\pi_1$ is a locality Rota-Baxter operator of weight $-1$:
\begin{equation}\label{eq:rbeminus}
\pi_1({ a})\,\pi_1({ b})+\pi_1(a\,b)= \pi_1(\pi_1({ a})\,{ b})+ \pi_1({ a}\, \pi_1({ b}))\quad \forall ({ a}, { b})\in \top.
\end{equation}

Write $a=a_1+a_2$ and $b=b_1+ b_2$. Since the projections $\pi_i, i=1,2,$ are independent locality maps, it follows that $\{a_1,a_2\} \top \{b_1,b_2\}$. Thus every term in
$$ ab = a_1b_1 + a_1b_2+a_2b_1 +a_2b_2$$
is well defined, with $a_1b_1\in A_1$ and $a_2b_2\in A_2$.
Then the left hand side of Eq.~(\ref{eq:rbeminus}) becomes
$$ a_1b_1+\pi_1(a_1b_1 +a_1b_2+a_2b_1 +a_2b_2)
=2a_1b_1+\pi_1(a_1b_2)+\pi_1(a_2b_1).$$
The right hand side of Eq.~(\ref{eq:rbeminus}) becomes
$$\pi_1(a_1b)+\pi_1(ab_1)=\pi_1(a_1b_1+a_1b_2)+\pi_1(a_1b_1+a_2b_1)
=\pi_1(a_1b_2)+\pi_1(a_2b_1)+2a_1b_1,$$
as needed and since $\pi_1$ is linear.
\item For the last statement, let us first assume that $P$ is multiplicative. Then for any $a\in A_2$ and $b\in A_2^\top$ we have $P(ab)=P(a)P(b)=0$ thus $ab\in\Ker(P)=A_2$ and $A_2$ is a locality ideal. On the other hand if $A_2$ is an ideal, then for any $(a,b)\in\top$, if we write as before their decomposition $a=a_1+a_2$, $b=b_1+b_2$ we have on the one hand $P(a)P(b)=a_1b_1$. On the other hand $P(ab)=P(a_1b_1+a_1b_2+a_2b_1+a_2b_2)$. We have $a_1b_1\in A_1$ as before but $(a_1b_2+a_2b_1+a_2b_2)\in A_2$ since $A_2$ is a locality ideal. Thus by linearity $P(ab)=P(a_1b_1)=a_1b_1$ as needed.
\end{itemize}
\end{proof}
The structures we have introduced so far, from locality sets to locality Rota-Baxter algebras, will be used to build a multivariate renormalisation scheme. Our target algebra will be a space of multivariate meromorphic germs, to which we now turn our attention and describe its relevant locatity structures.

\section{The example of multivariate meromorphic germs} \label{subsec:multi_mero}

The key ideas of this section are taken from \cite{GPZ2015,GPZ2,GPZ4} from other authors. These constructions are necessary for the rest of this chapter.

\subsection{Multivariate meromorphic germs}

We briefly recall the construction of the algebra of multivariate meromorphic germs and refer the readers to \cite{GPZ2015,GPZ2,GPZ4} for details. This algebra will be the target algebra for the multivariate renormalisation scheme we aim at building.

First recall (for example from \cite{Hor68}) that a function $f:\C^k\longrightarrow\C$ is {\bf holomorphic} if each of its partial functions are holomorphic. Equivalently, it obeys the Cauchy-Riemann equations
\begin{equation*}
 \overline{\partial }f:=\sum_{i=1}^k\frac{\partial f}{\partial \bar z_i}d\bar z_i=0.
\end{equation*}
A {\bf meromorphic function} is a quotient of two holomorphic functions.

As already stated, we are interested in building a multivariate renormalisation scheme. In general, a renormalisation scheme consists of identifying ''problematic`` terms in a regularised expression and removing them in a way that preserves that desired property of the object being renormalised. These problematic terms are typically determined via a Laurent expansion. But Laurent expansion are naturally associated to germs and not function, but this detail is not relevant in this preliminary discussion. It is enough to recall that germs are equivalence classes of functions.  
% thus let us briefly recall their definition here. Once again, a fuller definition is given in \cite{Ho68} (Definition 6.2.1 in the edition of 1973 in my possession).
% 
% For $f$ and $g$ two meromorphic functions on $\C^k$, define and the following equivalence relation:
% \begin{equation} \label{eq:def_germs_mero}
%  f\sim g:\Longleftrightarrow \exists U\text{ neighborhood of the origin s.t. }f|_U=g|_U.
% \end{equation}
% We are being a bit unprecise here: for meromorphic functions, $f|_U=g|_U$ means that their numerators and denominators coincide on $U$ up to a non-vaniching function. More details are given later.
% 
% Then a {\bf germ} of meromorphic function at 0 is an equivalence class of meromorphic functions for this equivalence relation.
In practice we will always write a germ as one of its representant. \\

In their seminal article \cite{GPZ2015}, Li Guo, Sylvie Paycha and Bin Zhang \ty{developed} a theory of Laurent expansions for meromorphic germs in several variable using tools from lattice spaces and cones. We recall now some of these tools that we will use.

Consider the filtered rational Euclidean lattice space
$$\Big(\R ^\infty=\bigcup_{\geq 1} \R ^k, \Z ^\infty =\bigcup_{\geq 1} \Z ^k, Q=(Q_k(., .))_{k\geq 1}\Big),$$
where
$$ Q_k(.,.): \R ^k\otimes \R ^k \to \R, \quad k\geq 1,$$
is an inner product in $\R ^k$ such that $Q_k(\Z ^k\otimes \Z ^k)\subset \Q$ and $Q_{k+1}\circ(j_k\otimes j_k)=Q_k$, where $j_k:\R^k\hookrightarrow\R^{k+1}$ is the canonical embedding. $Q$ induces a locality relation $\perp^Q$ on $\R^\infty$. For example, for $x\in\R^k$ and $y\in\R^{k+1}$ we set 
\begin{equation} \label{eq:def_perpQ}
 x\perp^Q y~:\Longleftrightarrow~ Q_{k+1}(j_k(x),y)=0.
\end{equation}
The generalisation to any pair $(x,y)\in(\R^\infty)^2$ is obvious but cumbersome and will not be written. Using the canonical isomorphism $\R^k\cong(\R^k)^*$ we also have a locality relation on $(\R^\infty)^*:=\bigcup_{\geq 1} (\R ^k)^*$. We denote this other relation by the same symbol $\perp^Q$.
\begin{rk}
 As the name suggests, the construction above is only an example of filtered rational Euclidean lattice space. We will not need other examples of such a structure here and have therefore chosen not to give its definition in full generality, and we refer instead the reader to \cite[Definition 2.1]{GPZ4}.
\end{rk}
In order to see how this framework can be used for meromorphic germs we need a somewhat unusual definition of these objects, which can be worth explaining. 

A meromorphic function or germ with $k$ variable can be seen as a meromorphic function of germs of linear forms over $\C^k$. For example the meromorphic function or germs over $\C^3$
\begin{equation*}
 f(z_1,z_2,z_3) = \frac{z_1(z_1+2z_3)^2}{z_2+z_3}
\end{equation*}
is built from the linear maps $(z_1,z_2,z_3)\mapsto z_1$, $(z_1,z_2,z_3)\mapsto(z_1+2z_3)$ and $(z_1,z_2,z_3)\mapsto(z_2+z_3)$. Using the canonical isomorphism $Lin(\C^k,\C)\simeq(\C^k)^*\otimes\C$ we obtain that meromorphic germs can be seen as germs over $(\C^k)^*\otimes\C$.

This space $(\C^k)^*\otimes\C$, where the tensor product is taken over $\C$, has complex dimension $k$, so real dimension $2k$. This is the same than the space $(\R^k)^*\otimes\C$, where this times the tensor product is taken over $\R$. Thus these spaces are isomorphic and we understand that meromorphic functions or germs with $k$ variables can be seen as meromorphic functions or germs over $(\R^k)^*\otimes\C$. \\

We can now understand why the following definition, taken from \cite{GPZ2015}, is adapted to our context. We slightly modified it to fit the framework of \cite{CGPZ1}.
\begin{defn} \label{defn:mero_germs_lin_rational}
 \begin{enumerate}
  \item A {\bf meromorphic function} on $(\R^k)^*\otimes\C$ is the quotient of two holomorphic functions on $(\R^k)^*\otimes\C$ with respect to the canonical complex structure on $(\R^k)^*\otimes\C$. 
  %Such a function on $(\R^k)^*\otimes\C$ is holomorphic if
% it is holomorphic in the complex coordinates in any dual basis of $(\R^k)^*$.
  \item A {\bf germ of meromorphic functions at 0} or {\bf meromorphic germ} in short on $(\R^k)^*\otimes\C$ is the equivalence class of meromorphic functions on $(\R^k)^*\otimes\C$ for the following equivalence relation: for $f=u_1/u_2$ and $g=v_1/v_2$ two meromorphic functions, $f\sim g$ if, and only if, it exists $U$ neighborhood of the origin of $(\R^k)^*\otimes\C$ and $u:U\longrightarrow\C$ holomorphic such that
  \begin{equation*}
%     f\sim g:\Longleftrightarrow \exists U\text{ neighborhood of the origin, }\exists u:U\longrightarrow\C\text{ holomorphic s. t. }
    u_1|_U=u v_1|_U\wedge u_2|_U=u v_2|_U.
  \end{equation*}
  (essentially, $f$ and $g$ coincide on some neighborhood of the origin).

%   \item A {\bf germ of meromorphic functions at 0} or {\bf meromorphic germ} in short on $(\R^k)^*\otimes\C$ is the quotient of two holomorphic functions in a neighborhood of the origin in $(\R^k)^*\otimes\C$ with respect to the canonical complex structure on $(\R^k)^*\otimes\C$. Such a function on $(\R^k)^*\otimes\C$ is holomorphic if
% it is holomorphic in the complex coordinates in any dual basis of $(\R^k)^*$.
  \item A germ of meromorphic functions $f$ on $(\R^k)^*\otimes\C$ is said to have {\bf linear poles at zero with rational
coefficients} if there exist vectors $L_1,\cdots,L_n\in \Z_k^*\otimes \Q$ (possibly with repetitions)
such that $f\prod L_i$ is a holomorphic germ at zero whose Taylor expansion for coordinates
 in the dual basis $\{e^*_1,\cdots,e^*_k\}$ of a given (and hence every) basis $\{e_1,\cdots,e_k\}$ of $\Z^k$ has
coefficients in $\Q$.
\item We write $\calM_\Q\left((\R^k)^*\otimes\C\right)$ the vector space of these meromorphic germs with linear poles at zero and rational coefficients.
 \end{enumerate}
\end{defn}

\subsection{Splitting of meromorphic germs}

Now, $Q_k:\R ^k\otimes \R ^k \to \R$ induces an isomorphism $Q_k^*:\R^k\longrightarrow(\R^k)^*$ defined by
\begin{equation*}
 Q_k^*:u\mapsto\Big(Q_k(u,.):v\mapsto Q_k(u,v)\Big).
\end{equation*}
This induces a projection $(\R^{k+1})^*\twoheadrightarrow(\R^k)^*$ which in turn induces an embedding $\calM_\Q((\R^k)^*\otimes\C)\hookrightarrow\calM_\Q((\R^{k+1})^*\otimes\C)$. Thus we have a directed system and we can set 
\begin{equation} \label{eq:def_MQ}
 \calM_\Q:=\varinjlim_k\calM_\Q\left((\R^k)^*\otimes\C\right).
\end{equation}
By \cite[Corollary 4.18]{GPZ4}, any element of $\calM_\Q$ can be written as a sum of a holomorphic germ and elements the form
\begin{equation}
\frac{h(\ell_1,\cdots,\ell_m)}{L_1^{s_1}\cdots L_n^{s_n}}, \quad s_1, \cdots, s_n\in \Z_{>0},
 \label{eq:polar}
 \end{equation}  
 where $h$ is a holomorphic germ with rational coefficients in linear forms $\ell_1,\cdots,\ell_m\in (\Q ^k)^*$,  and $L_1,\cdots, L_n$ are linearly independent linear forms in $(\Q ^k)^*$, $\ell_i\perp^QL_j$ for all $i\in \{1,\cdots, m\}$ and $j\in \{1, \cdots, n\}$. An element of the form \eqref{eq:polar} which is called a {\bf polar germ} (for the inner product $Q$). 
 
 In other words, writing $\calM_+$ the space of holomorphic germs and $\calM_-^Q$ the set of polar germs of the form \eqref{eq:polar} we have a splitting (\cite[Corollary 4.18]{GPZ4})
 \begin{equation} \label{eq:merodecomp}
  \calM_\Q=\calM_+\oplus\calM_-^Q. 
 \end{equation}
 \begin{rk}
  This splitting depends of the chosen family of inner products $Q$, so a more appropriate notation would be $\calM_\Q=\calM_+\oplus_Q\calM_-^Q$ which we elected not to use to improve readibility. Notice also that while the set $\calM_-^Q$ depends on the chosen $Q$, the set of holomorphic germs $\calM_+$ does not.
 \end{rk}
 Below are two examples of this decomposition on meromorphic germs.
 \begin{example}
  Let $f\in\calM_\Q((\R^2)^*\otimes\C)\ni g$ defined by
  \begin{equation*}
   f(z_1,z_2)=\frac{z_1}{z_1-z_2},\qquad g(z_1,z_2)=\frac{z_1z_2}{z_1-z_2}.
  \end{equation*}
  To find their holomorphic and singular parts with respect to the canonical scalar product on $\R^2$ we decompose them as
  \begin{equation*}
   f(z_1,z_2)=\frac{1}{2}\frac{z_1-z_2+z_2+z_1}{z_1-z_2}  = \frac{1}{2}\left(1+\frac{z_1+z_2}{z_1-z_2}\right).
  \end{equation*}
  Thus, we find that the holomorphic part of $f$ is the constant germ $(z_1,z_2)\mapsto1/2$ and its singular part is $(z_1,z_2)\mapsto  \frac{1}{2}\frac{z_1+z_2}{z_1-z_2}$. The \ty{latter} is indeed singular w.r.t. the canonical scalar product since $\langle e_1+e_2,e_1-e_2\rangle=0$.
  
  For $g$ we have
  \begin{align*}
   g(z_1,z_2)=\frac{z_2}{2}\left(1+\frac{z_1+z_2}{z_1-z_2}\right) & = \frac{z_2}{2} + \frac{1}{4}(z_2-z_1+z_1+z_2)\frac{z_1+z_2}{z_1-z_2} \\
   & = \frac{z_2}{2} - \frac{z_1+z_2}{4} + \frac{1}{4}\frac{(z_1+z_2)^2}{z_1-z_2}.
  \end{align*}
  Thus the holomorphic part of $g$ is $(z_1,z_2)\mapsto\frac{z_2}{2} - \frac{z_1+z_2}{4}$ and its singular part is $(z_1,z_2)\mapsto  \frac{1}{4}\frac{(z_1+z_2)^2}{z_1-z_2}$ (both w.r.t. the canonical scalar product).
 \end{example}
%  {\color{red} XXXXX This paragraph later on????} \\
%  In other word, writing $\calM_+$ the space of holomorphic germs and $\calM_-^Q$ the set of polar germs of the form \eqref{eq:polar} we have a splitting
%  \begin{equation*}
%   \calM_\Q=\calM_+\oplus\calM_-^Q.
%  \end{equation*}
%  \begin{rk}
%   This splitting depends of the chosen family of inner products $Q$, so a more appropriate $\calM_Q=\calM_+\oplus_Q\calM_-^Q$ which we elected not to use to improve readibility. Notice also that while the set $\calM_-^Q$ depends on the chosen $Q$, the set of holomorphic germs $\calM_+$ does not.
%  \end{rk}
% {\color{red} XXXXX fin du paragraph.}

\subsection{The locality structure}

We can now define an independence relation on $\calM_\Q$ which is induced by $\perp^Q$ and that we therefore also denote by this symbol.
\begin{defn} (\cite{GPZ20})
 For a meromorphic function $f$, the {\bf dependence subspace} ${\rm Dep}(f)$ is the smallest subspace for the inclusion of $(\R^n)^*$ on which it depends. For a meromorphic germ, the dependence subspace is the dependence subspace of any of its representing elements.

Two meromorphic germs with rational coefficients $f$ and $g$ are {\bf orthogonal} (with respect to the given $Q$) if {${\rm Dep}(f)\perp^Q  {\rm Dep}(g)$}. Then we denote $f\perp^Q g$. Let $(\calM_\Q, \perp^Q)$ denote the resulting locality set.
% The pointwise product gives rise to a \loc  algebra $\left(\calM, \perp\right)$, and $(\calL,\perp)$ is viewed as a \loc subspace of the \loc linear space  $\left(\calM , \perp\right)$. See~\cite{CGPZ3} for another \loc structure on $\calM$.
\end{defn}
Other locality structures on $\calM_\Q$ exist, see \cite{CGPZ1,CGPZ2019}. This one is a good one for our application in that it encodes and makes rigorous the simple idea that two germs are independent if they ``don't depend on the same set of variables'' \emph{or if they can be written as depending on orthogonal variables} (w.r.t. the chosen scalar product). We illustrate this concept in the follwing example.
\begin{example}
 Let $f,g:\C^3\longrightarrow\C$ be defined by 
 \begin{equation*}
  f(z_1,z_2,z_3)=z_1-z_3,\qquad g(z_1,z_2,z_3)=z_2(z_1+z_3).
 \end{equation*}
 Then Dep$(f)=\langle e^*_1-e^*_3\rangle$ and Dep$(g)=\langle e^*_2,e^*_1+e^*_3\rangle$ with $\{e_1^*,e_2^*,e_3^\}$ the canonical basis of $(\R^3)^*$. Then $f\perp^Q g$ for the canonical scalar product since Dep$(f)$ and Dep$(g)$ are orthogonal.
\end{example}
It is clear that $(\calM_\Q, \perp^Q)$ actually belongs to a more sophisticated locality category.
\begin{prop} \label{prop:mult_germs_loc_alg}
 $(\calM_\Q, \perp^Q)$ is a locality vector space. Furthermore, the restricted multiplication $m:\calM_\Q\times\perp^Q\calM_\Q\longrightarrow\calM_\Q$ endows $(\calM_\Q, \perp^Q)$ with a locality algebra structure.
\end{prop}
% To finish this example we need to come back to Equation \eqref{eq:polar}. It implies that,  writing $\calM_+$ the space of holomorphic germs and $\calM_-^Q$ the set of polar germs of the form \eqref{eq:polar} we have a splitting
%  \begin{equation} \label{eq:merodecomp}
%   \calM_\Q=\calM_+\oplus\calM_-^Q. 
%  \end{equation}
%  \begin{rk}
%   This splitting depends of the chosen family of inner products $Q$, so a more appropriate notation would be $\calM_\Q=\calM_+\oplus_Q\calM_-^Q$ which we elected not to use to improve readibility. Notice also that while the set $\calM_-^Q$ depends on the chosen $Q$, the set of holomorphic germs $\calM_+$ does not.
%  \end{rk}
 Thus, the splitting given by Equation \eqref{eq:merodecomp} together with Propositions \ref{prop:multpi} and \ref{prop:mult_germs_loc_alg} directly imply a result that will be crucial for our multivariate renormalisation scheme.
\begin{prop}
In the decomposition in Equation \eqref{eq:merodecomp}, the space  $\calM_{+}$ is a subalgebra and a locality subalgebra of $\calM_\Q$. The space $\calM_{-}^Q$ is not a subalgebra but a locality subalgebra, in fact a locality   ideal of $\calM_\Q$. Consequently, the projection $\pi_+^Q:\calM_\Q \to \calM_{+}$ is a locality algebra homomorphism and  $(\calM_{\Q},\pi_-^Q)$ is a locality Rota-Baxter algebra..
\label{pp:merodecomp}
\end{prop}

In contrast to the multivariate case, the space  ${\calM}_{\Q,-}^Q(\R ^*\otimes \C )=\eps^{-1}\C[\eps^{-1}]$ is a subalgebra in the space 	 $	{\mathcal M}_\Q(\R^*\otimes\C  )$  of meromorphic functions in one variable.
	 This is a major difference between our multivariate setup and the usual  single variable framework used for renormalisation purposes. We circumvent the difficulty in  relaxing ordinary  multiplicativity  to a  multiplicativity allowed only on independent elements. In fact, $\calM_{\Q,-}^Q(\R^*\otimes \C ) $ is a locality ideal of $\calM_\Q(\R^*\otimes \C  )$ under the restriction of independence relation since the locality relation $\perp^Q$ restricted to $\calM_\Q(\R^*\otimes \C )$ is simply $(\C\times \calM_\Q(\R^*\otimes \C )) \cup (\calM_\Q (\R^*\otimes \C ) \times \C)$. Thus, the   locality algebra homomorphism  $\pi_+^Q$ restricts  to  a mere linear map on $\calM_\Q(\R^*\otimes \C  )$ with no additional multiplicativity property.

\section{Locality tensor products}
\label{sec:loc_tensor_prod}

The next locality categories we are interested in are the pendant of coalgebraic structures. For this, we need locality tensors which we now present. This presentation is from \cite{CFLP22}.

Let us recall that the tensor product of two vector spaces $V$ and $W$ reads
\begin{equation} \label{eq:def_tensor}
 V\otimes W:=\K (V\times W)\big/{\rm I_{bil}},
\end{equation}
where $\K (V\times W)$ is the vector space freely spanned by $V\times W$ and ${\rm I_{bil}}$ defined as its \ty{vector subspace} generated by all elements of the form
\begin{align*}
 (a+b,x)-(a,x)-(b,x),\qquad & (a,x+y)-(a,x)-(a,y), \\
 (ka,x)-k(a,x),\qquad
& (a,kx)-k(a,x).
\end{align*}
We will build locality tensor products in the same fashion and want to endow them will a locality relation, so we turn our attention to the more general problem of endowing quotient of locality vector space with a locality relation.

\subsection{Quotient locality as a final locality relation} 

We define a final locality in much the same way as a final topology. 
Recall that given two topologies $\tau_1$, $\tau_2$ on some set $X$,   $\tau_1$ is said to be {\bf coarser  {(weaker or smaller)}}  than $\tau_2$, or equivalently $\tau_2$ {\bf finer    {(stronger or larger)}}
 than $\tau_1$ if, and only if $\tau_1\subset \tau_2$.  
Also, given  a set $X$ and  $(X_i,\tau_i)_{i\in I}$  a family of topological spaces together with a family of maps  $f_i:X_i\to X$, the {\bf final topology  { (or strong,  colimit, coinduced, or inductive topology)}  
 $\bar\tau$} is the finest topology on $X$ such that all maps $f_i$ are continuous.
  With a small abuse of language, one says that the topology $\bar\tau$ is final with respect to the maps $f_i$.
  
  Let us now transpose this terminology to the locality setup. 
\begin{defn} \label{defn:final_relation}
\begin{itemize}
 \item Let $\top_1$ and $\top_2$ be two locality relations over a set $A$. We say $\top_1$ is  {\bf coarser} than $\top_2$  or equivalently, that $\top_2$ is 
 {\bf finer} than $\top_1$ if, and only if $\top_1\subset \top_2$.
 \item  Let $X$ be a set, $(X_i,\top_i)_{i\in I}$ a family of locality sets, and $f_i:X_i\to X$ a family of maps. The {\bf final locality relation $\ttop$} 
on $X$ is the  {coarsest  locality relation among the locality relations $\top$ on $X$ for which \[  f_i: (X_i,\top_i)\longrightarrow (X, \top), \quad i\in I \] are locality maps}. 
\end{itemize}
As before, with a slight abuse of language, we shall say that $\ttop$ is a {\bf  final locality relation} on $X$  for the maps $f_i$.
\end{defn}
Let us see with an example how taking inspiration from topologies is relevant for locality.
\begin{example} 
Let $X$ be a set and ${\mathcal P}(X)$ its power set. Disjointness of sets:
		\[A\top B\Longleftrightarrow A\cap B=\emptyset\] defines a locality relation on any subset ${\mathcal O}$ of ${\mathcal P}(X)$. 
	If $(X, {\mathcal O})$ is a topological space with topology ${\mathcal O}  \subset {\mathcal P}(X)$, this disjointness relation gives rise to  
	another locality relation (which with some abuse of notation, we denote by the same notation) given by the separation of points:
	\[x\top y\Longleftrightarrow \exists U, V\in {\mathcal O}, \quad \left( U\, \top \, V\right)\, \wedge \,  \left(x\in U\wedge y\in V\right).\]
The finer (coarser) the  topology ${\mathcal O}$, the  {larger (smaller)} the graph $\{(x, y), \, x\top y\}$ of the locality relation, hence the terminology we have chosen.
 	\end{example}
We can characterise final locality relations.
\begin{prop} \label{prop:description_final_relation}
 Given a  surjective map $ {\phi:A\to B}$, the locality relation $\top$ on $A$ induces a locality relation $\ttop$ on $B$ defined by 
\[ b_1\ttop b_2\Longleftrightarrow (\exists (a_1,a_2)\in A\times A : \phi(a_i)=b_i\ \rm{and}\ a_1\top a_2),\]
which is the final locality relation for the map $\phi$.
\end{prop}
\begin{proof} 
 {It is clear from the definition  of $\ttop$,}  that $\phi:(A,\top)\longrightarrow(B,\ttop)$ is a locality map.

Let $\top_B$ be a locality relation on $B$ such that $ {\phi:(A,\top)\longrightarrow(B,\top_B)}$ is a locality map. For any $(b_1,b_2)\in B^2$ we have
\begin{align*}
 b_1\ttop b_2 ~&\Longrightarrow~\left(\exists(a_1,a_2)\in A^2|\phi(a_i)=b_i~\wedge~a_1\top a_2\right)\quad \text{for }i\in\{1,2\}\\
 &\Longrightarrow~\left(\exists(a_1,a_2)\in A^2|\phi(a_i)=b_i~\wedge~\phi(a_1)\top_B \phi(a_2)\right)\quad \text{since }\phi\text{ is a locality map}\\
 &\Longrightarrow~b_1\top_B b_2.
\end{align*}
Therefore $\ttop\subseteq\top_B$.
\end{proof}
Applying Proposition \ref{prop:description_final_relation}   to  the canonical   projection map $\pi:V\to V/ W$  of a  locality  vector space  $(V, \top)$ to its quotient $V/ W$ by a linear subspace $W$, we equip  the quotient with the quotient locality relation.
\begin{defn}\label{defn:quotientlocality}
 For a subspace $W$ of a  \cy{locality} vector space $(V, \top)$, we call {\bf quotient locality} on 
 the quotient $V/ W$, the final locality relation
  \begin{equation*}%\label{eq:quotientloc}
 \left([u]\ttop[v]~\Longleftrightarrow \exists(u',v')\in[u]\times[v]:~u'\top v'\right)\quad\quad \forall ([u],[v])\in (V/ W)^2 
 \end{equation*}
  for the canonical projection map    $\pi:V\to V/ W$. 
%   This way, the  pre-locality space $(V, \top)$ gives rise to a  {pre-locality} vector quotient space $(V/ W, \ttop)$ and the projection map $\pi:(V, \top)\to (V/ W, \ttop)$ is a morphism of pre-locality vector spaces.
\end{defn}
Notice that $(V/W,\ttop)$ is a vector space with a locality relation but it is a priori \emph{not} in general a locality vector space\footnote{it is however a ``pre-locality vector space'', a notion \ty{developed in} \cite{CFLP22} to tackle this type of issue.}. The general question ``when is a quotient of locality vector spaces a locality vector space for the quotient locality'' seems very interesting but deeply out of reach. We just give a couterexample to show that the answer to this question cannot always be positive.
\begin{coex} \label{counterex:quotient_loc}
 	We equip  the vector space $V$ of  {real valued maps on $\R$} with the locality relation $\top$ given by disjoint supports: $f\top g\Longleftrightarrow {\rm supp}(f)\cap {\rm supp}(g)=\emptyset$. Let $W$ denote the linear subspace of constant functions. Consider the three functions $u,v,w$ in $V$ defined by
 	\begin{align*}
 		v:&\left\{\begin{array}{rcl}
 			\R&\longrightarrow&\R\\
 			x&\longmapsto&\begin{cases}
 				1\mbox{ if }x>1,\\
 				0\mbox{ otherwise},
 			\end{cases}
 		\end{array}\right.&
 		u:&\left\{\begin{array}{rcl}
 			\R&\longrightarrow&\R\\
 			x&\longmapsto&\begin{cases}
 				1\mbox{ if }x>0,\\
 				0\mbox{ otherwise},
 			\end{cases}
 		\end{array}\right.&
 		w:&\left\{\begin{array}{rcl}
 			\R&\longrightarrow&\R\\
 			x&\longmapsto&\begin{cases}
 				1\mbox{ if }x>2,\\
 				0\mbox{ otherwise}.
 			\end{cases}
 		\end{array}\right.\\
 		&\definecolor{qqttcc}{rgb}{0.,0.2,0.8}
 		\begin{tikzpicture}[line cap=round,line join=round,>=triangle 45,x=1.0cm,y=1.0cm]
 			\begin{axis}[
 				x=0.7cm,y=0.7cm,
 				axis lines=middle,
 				xmin=-2.5,
 				xmax=2.5,
 				ymin=-0.5,
 				ymax=1.5,
 				xtick={-2.0,-1.0,...,2.0},
 				ytick={-0.0,1.0,...,1.0},]
 				\clip(-2.5,-0.5) rectangle (2.5,1.5);
 				\draw [line width=0.8pt,color=qqttcc] (-5.,0.)-- (1.,0.);
 				\draw [line width=0.8pt,dash pattern=on 1pt off 1pt,color=qqttcc] (1.,0.)-- (1.,1.);
 				\draw [line width=0.8pt,color=qqttcc] (1.,1.)-- (7.,1.);
 			\end{axis}
 		\end{tikzpicture}&
 		&\definecolor{qqttcc}{rgb}{0.,0.2,0.8}
 		\begin{tikzpicture}[line cap=round,line join=round,>=triangle 45,x=0.7cm,y=0.7cm]
 			\begin{axis}[
 				x=0.7cm,y=0.7cm,
 				axis lines=middle,
 				xmin=-2.5,
 				xmax=2.5,
 				ymin=-0.5,
 				ymax=1.5,
 				xtick={-2.0,-1.0,...,2.0},
 				ytick={-0.0,1.0,...,1.0},]
 				\clip(-2.5,-0.5) rectangle (2.5,1.5);
 				\draw [line width=0.8pt,color=qqttcc] (-5.,0.)-- (0.,0.);
 				\draw [line width=0.8pt,dash pattern=on 1pt off 1pt,color=qqttcc] (0.,0.)-- (0.,1.);
 				\draw [line width=0.8pt,color=qqttcc] (0.,1.)-- (7.,1.);
 			\end{axis}
 		\end{tikzpicture}&
 		&\definecolor{qqttcc}{rgb}{0.,0.2,0.8}
 		\begin{tikzpicture}[line cap=round,line join=round,>=triangle 45,x=0.7cm,y=0.7cm]
 			\begin{axis}[
 				x=0.7cm,y=0.7cm,
 				axis lines=middle,
 				xmin=-2.5,
 				xmax=2.5,
 				ymin=-0.5,
 				ymax=1.5,
 				xtick={-2.0,-1.0,...,2.0},
 				ytick={-0.0,1.0,...,1.0},]
 				\clip(-2.5,-0.5) rectangle (2.5,1.5);
 				\draw [line width=0.8pt,color=qqttcc] (-5.,0.)-- (2.,0.);
 				\draw [line width=0.8pt,dash pattern=on 1pt off 1pt,color=qqttcc] (2.,0.)-- (2.,1.);
 				\draw [line width=0.8pt,color=qqttcc] (2.,1.)-- (7.,1.);
 			\end{axis}
 		\end{tikzpicture}
 	\end{align*} 
 	Then, in $V/W$, we have $[u]\ttop[v]$ since $u\top(v-1)$ and $[u]\ttop[w]$ since $(u-1)\top w$. However, 
 	$[u]\cancel{\ttop}[v+w]$. Thus $V/W$ is not a locality vector space for $\ttop$. 
 \end{coex} 
 
 \subsection{Locality tensor products}
 
 For two \ty{vector subspaces} $V$ and $W$ of an ambient locality vector space $E$, we want the locality version of the tensor product to depend on the locality on $E$. We will define this locality tensor product in a similar fashion than the usual tensor (Equation \eqref{eq:def_tensor}). 
 \begin{defn}\label{defn:loctensprod1}			%%%%%%%%%%%%%%%%%			Def: locality tensor product 1
	Given  $V$ and  $W$ subspaces of a locality vector space $(E,\top)$, the  locality  tensor product is the  vector space
	\begin{equation} \label{eq:VotimestopW}
	 V\otimes_{\top}W:={\K(V\times_\top W)}\big/{I_{\rm bil}^{ \top_\times}}
	\end{equation}
	with $I_{\rm bil}^{\top_\times}:=\K(V\times_\top W) \cap I_{\rm bil}$.
\end{defn}
 \begin{rk}
	Since $ V\times_\top W\subset V\times W$ and  $I_{\rm bil}^{\top_\times}:=\K(V\times_\top W) \cap I_{\rm bil}$,  we have an inclusion of vector spaces $V\otimes_{\top}W\subset V\otimes W$. This inclusion of vector spaces is what motivated this choice of locality tensor product over other options. If $V\times_\top W= V\times W$, then $V\otimes_{\top}W= V\otimes W$. 
\end{rk}
Many usual properties of usual tensor products are conserved by this locality tensor product, and in particular various form of their well-known universal property. This is beyond the scope of this thesis and we refer the readers to \cite[Part I]{CFLP22} for details.

We will also need higher locality tensor products. These need further construction to be properly defined. First, for $(E,\top)$ a locality vector space over $\K$, and $V_1, \cdots, V_n$ linear subspaces of $E$, let  $I_{{\rm mult}} {(V_1, \cdots, V_n)}$  generated by all elements of the form
\begin{align*}
 (x_1,...,x_{i-1},a_i+b_i,x_{i+1},...,&x_n)-(x_1,...,x_{i-1},a_i,x_{i+1},...,x_n)-(x_1,...,x_{i-1},b_i,x_{i+1},...,x_n)% \label{eq:multlinform1} 
 \\
 &(x_1,...,kx_i,...,x_n)-k(x_1,...,x_i,...,x_n) %\label{eq:multlinform2}
\end{align*}
for every $i\in[n]$, $k\in\K$ and $a_i,b_i,x_i\in V_i$. If $V_1=\cdots =V_n=V$, we write  $I_{{\rm mult},n}(V) $.
 \begin{defn}\label{defn:loctensalgebra}
	We   define  
	\begin{itemize}
		\item the {\bf  locality cartesian product} 
		%of $(V,\top)$ with itself as
		\begin{equation*}%\label{eq:Vtopn} 
		 V_1\times_\top \cdots \times_\top V_n:=\{(x_1,...,x_n)\in  V_1\times  \cdots \times  V_n|\forall(i,j\in[n]): i\neq j\Rightarrow (x_i,x_j)\in V_i\times_\top V_j\};
		 \end{equation*}
If  $V_i=V$ for any $i\in [n]$, we set $V^{\times_{\top}^n}:=  V_1\times_\top \cdots \times_\top V_n=V\times_\top\dots\times_\top V$.
		In particular $V^{\times_{\top}2}=\top$, $V^{\times_{\top}1}=V$ and we set by convention  $V^{\times_{\top}0}=\K$.
% 		and $V^{\times_\top\infty}:=\bigcup_{n\geq0}V^{\times_\top n}.$ 
		Note that $V_1\times_\top \cdots \times_\top V_n\ni (0_E, 0_E, \cdots, 0_E)$ where $0_E$ is the zero element in $E$, since $0_E\top 0_E$ by definition of a locality vector space.
		\item the {\bf     locality tensor product} \begin{equation} \label{eq:Votimesn}
		 V_1\otimes_{\top} \cdots \otimes_{\top} V_n~:=~{\K(V_1\times_\top \cdots \times_\top V_n)}\big/I^{\top_\times}_{{\rm mult}}(V_1, \cdots, V_n)
		 \end{equation}
		 with $I^{\top_\times}_{{\rm mult}}(V_1, \cdots, V_n):=(I_{{\rm mult}} {(V_1, \cdots, V_n)}\cap\K(V_1\times_\top \cdots \times_\top V_n))$.
		 
 If $V_i=V$ for any $i\in [n]$, we set $V^{\otimes_{\top}^n }:=  V_1\otimes_\top \cdots \otimes_\top V_n$.
 
 \item
We endow the locality tensor product $V_1\otimes_\top\cdots\otimes_\top V_n$ with the locality relation $\top_{\otimes n}$ defined as the quotient locality (see Definition \ref{defn:quotientlocality}) for the quotient map $\pi_n:\K(V_1\times_\top \cdots \times_\top V_n)\longrightarrow V_1\otimes_{\top} \cdots \otimes_{\top} V_n$.
		\end{itemize}
	\end{defn}
This higher locality tensor products allow us to define more sophisticated objects such as the locality tensor algebra and the locality universal envelopping algebra of a (locality) Lie group. These objects in turn have universal properties in suitable locality categories\footnote{at least conjecturally. The proof only exists in a weaker case, for \emph{pre}-locality vector space, a structure defined in \cite{CFLP22}.}. Once again, this lies beyond the scope of this thesis and we refer the reader to \cite{CFLP22} for more elaborate constructions.

Let us finish this section with an intriguing remark. The quotient locality relation allow us to endow locality tensor products with a locality structure. These are easy enough to describe explicitly: let  $(E,\top)$ be a locality vector space. Then for $X, Y\in E\otimes_\top E$, we have $X\top_{\otimes 2} Y$ if we can write
\begin{equation*}
 X=\sum_i v_i\otimes w_i,\qquad Y=\sum_i v'_i\otimes w'_i
\end{equation*}
such that $(v_i,w_i,v'_j,w'_j)\in E^{\times_\top 4}$ for any $i$ and $j$. An obvious generalisation exists for higher tensors products.

However, it is still a conjecture that $(E\otimes_\top E,\top_{\otimes2})$ is a locality vector space. We call it conjecture rather than open question since it is a locality vector space for many cases of interest and numerous attempts to build counterexamples have all failed. To tackle this question might very well require tools that are beyond the theory of locality structures, or a rather large enlargement of that theory.

\section{Coalgebraic locality structures} \label{sec:coal}

\subsection{Locality coalgebras}
 
 We recall that
a coalgebra $(C,\Delta)$ over a field $\K   $ is  {\bf counital} if there is a map $\eps:C\to  \K  $ such that $(\eps\otimes \Id _C)\Delta =(\Id _C\otimes \eps)\Delta=\Id _C$. It is {\bf  ($\Z_{\geq 0}$-)graded}   if
$$C=\bigoplus_{n\in\Z_{\geq 0}}C_n \quad \text{and} \quad \Delta(C_n)\subseteq\bigoplus_{p+q=n} C_p\otimes C_q, \quad \bigoplus_{n\geq 1} C_n\subseteq \ker \eps.$$
Thus $C=C_0+\ker \eps$.
Moreover a graded coalgebra is
called {\bf connected}  if  $C=C_0\oplus \ker \eps$. Consequently, $\eps$ restricts to a linear bijection $\eps: C_0\cong  K $ and $\ker \eps = \oplus_{n\geq 1}C_n$.
% For the sake of simplicity, we shall drop the explicit mention of the grading and simply call such a coalgebra  a connected coalgebra.
 \begin{defn} \label{defn:colocalcoproduct}
Let $(C,\top)$ be a locality vector space and let $\Delta:C\to C\otimes C$ be a linear map. $(C,\top, \Delta)$ is a {\bf locality noncounital coalgebra} if  it satisfies the following two conditions
\begin{enumerate}
\item for any $U\subset C$ (compare with Eq.~\eqref{eq:semigrouploc})
\begin{equation}
\Delta (U^\top) \subset U^\top \otimes _\top U^\top.
\label{eq:comag}
\end{equation}
In particular, $\Delta(C)\subseteq C\otimes_\top C$;
\label{colocalDelta} 	
\item     the following {\bf coassociativity}  holds:
$$(\Id _C\otimes \Delta)\,\Delta= (\Delta\otimes \Id _C)\, \Delta.$$
\label{localDelta}
\end{enumerate}
\begin{itemize}
\item
If in addition, there is a {\bf counit}, namely a linear map $\eps: C\to   \K  $ such that
  $ ({\rm Id}_C\otimes   \eps)\, \Delta=  (\eps\otimes {\rm Id}_C)\, \Delta= {\rm Id}_C$, then $(C,\top,\Delta,\eps)$ is called a {\bf locality coalgebra.}
\item A {\bf connected locality coalgebra} is a locality coalgebra $(C,\top, \Delta)$ with a grading $C=\oplus_{n\geq 0} C_n$ such that, for any $U\subseteq C$,
\begin{equation}
\Delta(C_n\cap U^\top)\subseteq\bigoplus_{p+q=n} (C_p\cap U^\top)\otimes_\top (C_q\cap U^\top), \quad \bigoplus_{n\geq 1} C_n = \ker \eps.
\label{eq:lconn}
\end{equation}

We denote  by $J$   the unique element of $C_0$ with $\eps (J)=1_ \K $, giving $C_0= \K \, J$.
\end{itemize}
\end{defn}
 \begin{rk}
Notice that whereas the conditions for a locality algebra are weaker than those for an algebra, the conditions for a locality coalgebra are stronger than those for a coalgebra. In particular, a locality coalgebra is a coalgebra and a connected locality coalgebra is a connected coalgebra.
\label{rk:conn}
\end{rk}
 Let us list a few useful general properties of locality coalgebras.
\begin{lemma}
Let $(C,\top,\Delta)$ be a locality coalgebra.
\begin{enumerate}
\item For any $n\geq 2$ and $0\leq i \leq n$,
\begin{equation} \label{eq:gcoasso}
\Id _C^{\otimes i} \otimes \Delta \otimes \Id _C^{\otimes (n-i-1)}: C^{\otimes _\top n} \to C^{\otimes_\top (n+1)}
\end{equation}
\label{it:coalgId}
\item
We have
$(\Delta\otimes \Delta)(C\otimes_\top C)\subseteq C^{\otimes_\top 4}$;
\label{it:coalgmap1}
\item
$\Delta:(C,\top)\longrightarrow(C\otimes_\top C,\top_{\otimes 2})$ is a locality map, i.e.:
$(\Delta\times \Delta)(C\times_\top C)\subseteq (C\otimes_\top C)\times _\top (C\otimes_\top C)$;
\label{it:coalgmap3}
\item
For any locality linear map $\phi$ independent of $\Id$, $n\geq 2$ and $0\leq i \leq n$, we have
$\Id^{\otimes i}\otimes \phi\otimes \Id^{\otimes (n-i-1)}: C^{\otimes_\top n} \to C^{\otimes_\top n}$. \label{it:coalgmap2}
\end{enumerate}
 \label{lem:coalgmap}
 \end{lemma}
 \begin{proof}
 \begin{enumerate}
  \item Let $n\geq 2$ and $1\leq i\leq n$ be given. By the definition of $C^{\otimes_\top n}$, any of its elements is a finite sum of pure tensors $c_1\otimes \cdots \otimes c_n$ with $(c_1,\cdots,c_n)\in C^{_\top n}$. Let $U=\{c_j\,|\, j\not =i+1\}$. Then $c_{i+1}\in U^\top$,
so by Equation \eqref{eq:comag} there exist $(d_1, e_1), \cdots, (d_k,e_k)\in U^\top\times _\top U^\top$, such that
$$\Delta (c_{i+1})=\sum_{ \ell} d_\ell \otimes e_\ell.
$$
Now
$$(\Id _C^{\otimes i} \otimes \Delta \otimes \Id _C^{\otimes (n-i-1)})(c_1\otimes \cdots \otimes c_n)=\sum_{ \ell} c_1\otimes \cdots c_i \otimes d_\ell\otimes e_\ell\otimes c_{i+2}\otimes \cdots \otimes c_n,
$$
and $c_1\otimes \cdots c_i \otimes d_\ell\otimes e_\ell\otimes c_{i+2}\otimes \cdots \otimes c_n\in C^{\otimes_\top (n+1)}$.

\item Since $(\Delta\otimes \Delta)=(\Delta\otimes \Id)(\Id \otimes \Delta)$, from Equation \eqref{eq:gcoasso} we obtain
\begin{eqnarray*}
(\Delta\otimes \Delta)(C\otimes_\top C)&=& (\Delta\otimes \Id)(\Id \otimes \Delta) (C\otimes_\top C)\\
&\subseteq & (\Delta\otimes \Id)(C\otimes_\top C \otimes_\top C)\\
&\subseteq & C\otimes_\top C\otimes_\top C\otimes_\top C.
\end{eqnarray*}

\item Let $(c_1,c_2)\in C\times_\top C$. Then $c_2\in \{c_1\}^\top$. So by Equation \eqref{eq:comag}, $\Delta(c_2)=\sum_{(c_2)} c_{2,(1)}\otimes c_{2,(2)}$ with $c_{2,(1)}\top c_{2,(1)}$ and $\{c_{2,(1)}, c_{2,(2)}\}\subseteq \{c_1\}^\top.$ Thus $c_1 \in \{c_{2,(1)},c_{2,(2)}\}^\top$. By Equation \eqref{eq:comag} again, $\Delta(c_1)=\sum_{(c_1)} c_{1,(1)}\otimes c_{2,(2)}$ with $c_{1,(1)}\top c_{1,(2)}$ and $\{c_{1,(1)},c_{1,(2)}\} \subseteq \{c_{2,(1)},c_{2,(2)}\}^\top.$ This shows that $(c _{1,(1)},c _{1,(2)}, c _{2,(1)}, c _{2,(2)})$ is in $C^{_\top 4}$ and hence $ \big((c _{1,(1)}\otimes c _{1,(2)}), (c _{2,(1)}\otimes c _{2,(2)})\big)$ is in $\top_{\otimes2}$ since this relation is defined as a final locality relation.

\item Again any element of $C^{\otimes_\top n}$ is a sum of pure tensors $c_1\otimes \cdots \otimes c_n$ with $(c_1,\cdots,c_n)\in C^{_\top n}$. Thus $(c_1,\cdots,c_{i-1}, \phi(c_i),c_{i+1},\cdots,c_n)$ is in $C^{_\top n}$. This is what we want since $(\Id^{\otimes i}\otimes \phi\otimes \Id^{\otimes (n-i-1)})(c_1\otimes\cdots\otimes c_n)=
c_1\otimes\cdots\otimes c_{i-1}\otimes \phi(c_i)\otimes c_{i+1}\otimes\cdots\otimes c_n$.
 \end{enumerate}
\end{proof}
We now define and state properties of the reduced locality coproduct which will be of use in the sequel.
\begin{lemma} \label{lem:reduced_coproduct}
Let $(C=\oplus_{n\geq 0} C_n,\top,\Delta)$ be a connected locality coalgebra. Define the {\bf reduced coproduct}
$\tilde{\Delta}(c):= \Delta(c)- J\otimes c-c\otimes J$.
Recursively define
\begin{equation}
\tilde{\Delta}^{(1)}=\tilde{\Delta}, \quad \tilde{\Delta}^{(k)}:=\left(\Id\otimes \tilde{\Delta}^{(k-1)}\right)\tilde{\Delta}, \ \  k\geq 2.
\label{eq:redcoprod}
\end{equation}
\begin{enumerate}
\item
For $c\in \oplus_{n\geq 1}C_n$,
$\tilde \Delta (c)= \sum_{(c)} c'\otimes c''$ with $\deg(c'), \deg(c'')>0$ and $(c',c'')\in C\times_\top C$;
\label{it:conil-1}
\item
If in addition  $c\in U^\top $ for some $U\subset C$, then the above pairs
$(c', c'')$ are in $U^\top\times _\top U ^\top$;
\label{it:conil0}
\item
$\tilde\Delta^{(k)}(x)$ is in $C  ^{\otimes _\top (k+1)}$ for all $x\in  C, k\in\N$;
\label{it:conil1}
\item
$\tilde \Delta^{(k)}(   C  _n)=\{0\}$ for all $k\geq n.$
\label{it:conil2}
\end{enumerate}
\label{lem:conil}
\end{lemma}
\begin{proof}
We only need to prove the second point since the first one is the special case when $U=\{0\}$.

Let $c\in C_n\cap U^\top$. By {Eq. \eqref{eq:lconn}},
we can write
$$ \Delta(c)=y\otimes J + J\otimes z + \sum_{(c)}c'\otimes c''$$
with $y, z\in C_n$, $c', c''\in U^\top$ and each $c'\otimes c''\in C_p \otimes _ \top C_q, p+q=n, p, q\geq 1$. Then by the same argument for a connected coalgebra~\cite[Theorem~2.3.3]{Guo}, we obtain $y=z=x$. This proves (\ref{it:conil0}).

Then (\ref{it:conil1}) follows from an easy induction on $k$ by the locality property of $\Delta$; while the proof of (\ref{it:conil2}) is similar to the case without a locality structure~\cite[Proposition II.2.1]{Ma03}.
\end{proof}

\subsection{Locality bialgebras and locality Hopf algebras}
 
 As in the usual, non-locality case, we can now merge the locality algebra and locality coalgebra structures to obtain locality bialgebras.
 \begin{defn} \label{defn:locbialgebra}
\begin{enumerate}
\item
An {\bf locality bialgebra} is a sextuple $( B , \top, m, u, \Delta, \eps)$ consisting of a locality  algebra $( B , m, u, \top)$ and a locality coalgebra $\left( B , \Delta, \top, \eps\right)$
that are locality compatible in the sense that $\Delta$ and $\eps$ are locality algebra homomorphisms.
\item
{A} locality bialgebra $B$ is called {\bf connected} if there is a $\Z_{\geq 0}$-grading $B=\oplus_{n\geq 0} B_n$ with respect to which $B$ is both a locality graded algebra in the sense of Definition~\ref{defn:localisedalgebra} and a connected locality coalgebra in the sense of Definition~\ref{defn:colocalcoproduct}. {Then $J=1_B$.}
\end{enumerate}
\end{defn}
For cases in which the unit and counit play no role, we will write $( B , \top, m,  \Delta)$ for a locality bialgebra.

The last structure we need to introduce is the Hopf algebra. It is define through the convolution product which we now introduce.
\begin{defn} \label{def:conv_prod_loc}
Let $\left( C ,\top_C , \Delta \right)$   be a locality coalgebra and let $\left(A ,\top_A,m_A\right) $ be a locality algebra. Let ${\mathcal L}:=\Hom_{\rm loc}(C,A)$  be the space of locality linear maps.
  Define
    $$\top_{\mathcal L}:=\left\{ (\phi,\psi)\in \calL\times \calL\,|\, (\phi\times \psi)(C\times_\top C)\subseteq A\times_\top A\right\}.$$
 For $(\phi,\psi)\in\top_\calL$, define the {\bf convolution product} of $\phi$ and $\psi$ by
\begin{equation}
\phi\star \psi: C \stackrel{\Delta_C}{\longrightarrow} C\otimes_\top C \stackrel{\phi\otimes \psi}{\longrightarrow} A\otimes_\top A \stackrel{m_A}{\longrightarrow} A.
\label{eq:lconv}
\end{equation}
\end{defn}
We postpone to the next subsection the study of the convolution product. For now, let us just point out that for $(\phi, \psi)\in\top_{\mathcal L}$, since $(\phi\times \psi)(C\times_\top C)\subseteq A\times_\top A$, we have
 $$(\phi\otimes \psi)(C\otimes_\top C)\subseteq A\otimes_\top A.
 $$
Hence the composition in Equation \eqref{eq:lconv} is well-defined, giving a well-defined convolution product.
\begin{defn} \label{defn:LHopf}
A {\bf locality Hopf algebra} is a locality   bialgebra $\left( B ,\top, m, \Delta,u, \eps\right)$  with an antipode, defined to be a linear map $S:  B \to  B $
 such that $S$ and $\Id_ B  $ are mutually independent (in the sense of Definition \ref{defn:locallmap}) and
 \[S\star \Id=\Id\star S = u  \eps.\]
\end{defn}
The usual proof (see e.g.\cite{Guo,Ma03}) for the existence of the antipode on connected bialgebras extends to locality bialgebras as follows. For $k\geq 1$, denote $m_1=m$ and $m_k=m(\Id_B\otimes m_{k-1})$. We omit it here and simply recall its spirit. First prove.
\begin{lemma}
 Let $\left(   B ,\top, m, u,\Delta,\eps\right)$ be  a connected locality bialgebra, $\tilde\Delta^{\otimes k}$ as in Eq.~(\ref{eq:redcoprod}) and $\alpha:   B  \to   B $ a locality linear map with $\alpha (1_B)=0$. Then
\begin{enumerate}
\item
$\alpha^{\star k} = m_{k-1} \alpha^{\otimes k} \tilde\Delta^{(k-1)}$ for all $k\geq 2$;
\label{it:apower1}
\item
$\alpha^{\star k}(   B  _n)=\{0\}$ for all $k\geq n+1.$
\label{it:apower2}
\end{enumerate}
\label{lem:apower}
\end{lemma}
(The first point is easily proven by induction and the second is a direct consequence of the first one and Lemma~\ref{lem:conil}.(\ref{it:conil2}).

We then have the locality version of the Sweedler-Takeuchi formula \cite{takeuchi1971free}:
\begin{prop}
		 Let $\left(   B  ,\top , m, u,\Delta,\eps\right)$ be  a graded connected locality bialgebra.
There is  a linear map $S: {   B  }\to {   B }$  with the properties of the antipode stated above. It is given by
		  \begin{equation*}
		   S=\sum_{k=0}^\infty (u  \eps -\Id)^{\star k}.
		  \end{equation*}
\label{prop:localisedantipode}
\end{prop}
This is proven with the previous Lemma by setting $\alpha:   B  \to   B $  defined by $\alpha=\Id-u  \eps$, which is locality linear, and $\alpha(1_B)=0$. The geometric series   $S=\sum_{k=0}^{\infty}(-1)^k\alpha^{\star k}$  which is locally
  finite by Lemma~\ref{lem:apower}.(\ref{it:apower2}) and hence well-defined, gives  the inverse of the identity for the convolution product. We postpone to the next section the proof that $\alpha$ is a locality map and the $S$ is independent from the identity map.
% \end{proof}
 
 \subsection{Locality and the convolution product}
 
 We show that the locality (independence) of linear maps are preserved under the convolution product. We start with a Lemma that is of use to prove Proposition \ref{prop:localisedantipode}.
\begin{lemma}  Let $(C,\top_C,\Delta)$  be a locality coalgebra  with counit $\eps_C: C\longrightarrow  \K   $. Let $(A,\top_A )$  be a locality   algebra with unit $u_A:    \K   \longrightarrow A$.
The map  $e:=   u_A   \eps_{C}:C\longrightarrow A$ is independent to any linear map $\phi: C\longrightarrow A$. In particular, the map  $e$ is a locality linear map.
\label{lem:mutuallyindIA}
\end{lemma}
\begin{proof} This is because   $\mathrm{im}\, e=  \K  \cdot 1 _A \subset A ^{\top_A}$ as we can see from Remark \ref{rk:algebraunit}.
\end{proof}
We now prove the aforementioned results regarding convolution of locality maps. We state together the main points of \cite[Proposition 4.9 and Theorem 5.7]{CGPZ1}.
\begin{theo} \label{thm:loc_conv_prod}
 Let $\left( C,\top_C ,  \Delta\right)$ (resp. $( B , \top_B, m, \Delta)$)  be a locality coalgebra (resp. a locality bialgebra) and $\left(A,\top_A,\cdot\right) $ be a locality commutative  algebra. Let
 			\begin{equation*}
 			\phi,\psi: \left( C,\top_ C  \right)\longrightarrow   \left(A ,\top_A\right)
 			\end{equation*}
 			(resp. $\phi,\psi: \left( B,\top_B  \right)\longrightarrow   \left(A ,\top_A\right)$) be independent locality linear maps (resp. locality algebra homomorphisms).
 \begin{enumerate}
  \item $\phi\star \psi$ is a locality linear map (resp. a locality algebra homomorphism) and
the triple $({\mathcal L},\top_{\mathcal L}, \star)$ is a locality algebra.
\item If moreover    $C$ (resp. $B$) is connected  then
$$\calG \calL:=\{\phi\in \calL\ | \ \phi (J)=1_A\}$$ is  a locality  group for the convolution product.
\label{it:conv2}
% \item
% Under this same assumption, we have \[(\phi, \psi)\in\top_{\mathcal L}\cap\left(\calG \calL\times \calG \calL\right) \Longrightarrow (\phi^{\star k}, \psi^{\star l})\in\top_{\mathcal L}\cap \left(\calG \calL\times \calG \calL\right) \quad \forall k, l\in \Z.\]	
 			\item Resp., if $ B $ is connected and $\phi$ is a homomorphism of locality algebras, then  so is its convolution inverse $ \phi^{\star (-1)}$. So the set $\calG$ of homomorphisms of locality algebras from $(B,\top_B)$ to $(A,\top_A)$ is a locality group with respect to the independent relation of locality linear maps.
\label{it:conv3}
 \end{enumerate}
\end{theo}
\begin{proof}
 \begin{enumerate}
  \item We separate the three statements of this first point.
  \begin{itemize}
   \item We first verify that $\phi\star\psi$ is a locality linear map. For $ c  _1\top_C\,  c _2$,
 by Lemma~\ref{lem:coalgmap}.(\ref{it:coalgmap3}), there are finitely many $(d_i,e_i), (f_j,g_j)\in C\times_\top C$ with $(d_i, e_i, f_j, g_j)\in C^{_\top 4}$, such that
$$\Delta ( c  _1)=\sum_{i} d  _{i}\otimes  e  _{i} \quad \text{ and} \quad \Delta ( c  _2)=\sum_{j}   f_{j}\otimes  g  _{j}.
 $$
 Then
$$
 \phi\star\psi( c _1)=\sum_{i} \phi( d_i)\psi( e_i) \quad
 \text{and }\quad
 \phi\star\psi( c _2)=\sum_{j} \phi( f_j)\psi( g_j).
$$
From $(\phi, \psi)\in\top_{\mathcal L}$, we obtain
$\left(\phi( d_i),\psi( e_i),\phi( f_j),\psi( g_j)\right)\in A^{_\top 4}$. So
$$\big(\sum_i \phi( d_i)\psi( e_i),\sum_j \phi( f_j)\psi( g_j)\big)$$
is in $A\times_\top A$ and
thus $(\phi\star\psi( c _1))\top_A(\phi\star\psi( c _2))$.

\item Next we need to verify the axioms for a locality semigroup: the closeness of $U^{\top_{\mathcal L}}$ under the convolution product for every $U\subseteq {\mathcal L}$ and the associativity.

Let $\psi$ and $\chi$ be independent locality linear maps in $U^{\top_\calL}$ and let $\phi$ be in $U$. Then
$\phi,\psi,\chi$ are pairwise independent. Therefore
$$\phi\times \psi\times \chi: C^{_\top 3} \longrightarrow A^{_\top 3}$$
is well defined.
For $(c_1,c_2)\in C^{_\top 2}$, that is $c_1\in \{c_2\}^\top$, there exist
$(d_1, e_1), \cdots, (d_k,e_k)\in \{c_2\}^\top \times _\top \{c_2\}^\top $, such that
$$\Delta ( c  _1)=\sum_{i} d  _{i}\otimes  e  _{i},
 $$
 with
$(d_i, e_i, c_2)\in C^{_\top 3}$.
Then
$(\psi( d_i), \chi(e_i), \phi (c_2))\in A^{_\top 3}$ and hence $(\psi( d_i)\chi( e_i))\top_A \phi (c_2).$
So we have
$(\psi\star\chi)( c _1)\top_A \phi (c_2),
$
which means
$ \psi\star \chi$ is in $\phi^{\top_{\mathcal L}}.$
Thus $\psi \star \chi \in U^{\top_\calL}$. This verifies the first axiom. The associativity of $\star$ follows from the associativity of $m$ and coassociativity of $\Delta$ as in the classical case.

\item For $( c,  d)\in\top_ B $, by the proof of Lemma~\ref{lem:coalgmap}.(\ref{it:coalgmap3}), we can write
$$\Delta (c)=\sum_i c_{i1}\otimes c_{i2}, \quad \Delta (d)=\sum_j d_{j1}\otimes d_{j2}$$
with
$(c_{i1}, c_{i2}, d_{j1}, d_{j2})\in B^{_\top 4}.$
Then
$$\Delta (cd)=(m\otimes m)\tau _{23}(\Delta \otimes \Delta )(c\otimes d)=\sum_{i,j} c_{i1}d_{j1}\otimes c_{i2}d_{j2}.
$$
So
\begin{eqnarray*}
 			(\phi\star \psi) ( c\,  d)&=&
 			\sum_{i,j} \phi\left(c_{i1}d_{j1}\right)\,\psi\left( c_{i2}d_{j2}\right) \\
 			&=&\sum_{i,j} \phi\left(  c_{i1} \right)\, \phi\left( d_{j1}\right)\,\psi\left(  c_{i2}\right)\,\psi\left(  d_{j2}\right) \\
 			&=&\sum_{i} \phi\left(  c_{i1}\right)\,\psi\left(  c_{i2}\right)\, \sum_{j} \phi\left( d_{j1}\right)\,\psi\left(  d_{j2}\right) \\
 			&=& \left(\phi\star \psi( c)\right)\,\left(\phi\star \psi( d)\right).
 			\end{eqnarray*}
  \end{itemize}
  \item For the second point, the bialgebra structure plays no role, so we will omit the various ``(resp.'' in this item.
  
  We assume that $ C $  is a connected   locality coalgebra.
For a locality linear map $\phi : C \to A$, we now prove by induction on the degree of $ c  _1$ that the map
\begin{equation}
\phi ^{\star(-1)}( c  _1)=\left\{\begin{array}{ll}
1_A, & c_1=J, \\
-\phi ( c _1)-\sum_{( c  _1)} \phi ( c  _1^\prime) \phi ^{\star(-1)}(  c  _1^{\prime \prime}), & c_1\in \ker \eps, \end{array} \right .
\label{eq:phirec}
\end{equation}
is well defined, and that $c _1\top_C\,  c$ implies $\phi ^{\star(-1)}( c  _1)\top_A \phi ( c ).$

{This} is trivial for degree $0$ since $\phi^{\star (-1)}(J)=1_A$. Assume for any $ c  _1\in C$ of degree $\le n$, $\phi ^{\star(-1)}( c  _1)$ is well defined, and for $c$ with $ c  _1\top_C\,  c $,
$\phi ^{\star(-1)}( c _1)\top_A \phi ( c )$ holds.

Now for any $c_1$ of degree $n+1\ge 1$ with $ c  _1\top_C\,  c  $, by Lemma~\ref {lem:conil}.(\ref{it:conil0}), we have
 $$\Delta ( c  _1)= c  _1\otimes J+J\otimes  c  _1+\sum _{( c  _1)} c  _1^\prime \otimes  c  _1^{\prime \prime}
\ \text{ with }
 (c  _1',  c  _1^{\prime \prime}, c)\in C^{_\top 3}.
 $$
By the induction hypothesis, $\phi ^{\star(-1)}( c _1^{\prime\prime})$ is well defined, such that $\phi ^{\star(-1)}( c _1^{\prime\prime})\top_A \phi (c)$ and $\phi ^{\star(-1)}( c _1^{\prime\prime})\top_A \phi (c_1^\prime)$. Since $\phi$ is a locality linear map, we also have
$\phi ( c  _1)\top_A \phi ( c  )$ and $ \phi ( c  _1^\prime )\top_A \phi ( c  ).$
Thus
$\phi ( c  _1^\prime) \phi ^{\star(-1)}(  c  _1^{\prime \prime})$ is well defined and
$(\phi ( c  _1^\prime) \phi ^{\star(-1)}(  c  _1^{\prime \prime}))\top_A \phi ( c ).
$
So, $\phi ^{\star(-1)}(  c  _1)$ is well defined and
$\phi ^{\star(-1)}( c  _1)\top _ A \phi ( c )$, which means $\phi\top_\calL \phi ^{\star(-1)}$.

Again by induction on the degree of $c_1$, we now  prove that
$\phi ^{\star(-1)}$ is a locality linear map by checking
\begin{equation}
\phi ^{\star(-1)}( c  _1)\top_A \phi ^{\star(-1)}( c  _2)\quad \forall c_2\in C, c_1\top_C c_2,
\label{eq:philoc}
\end{equation}
a fact which is obvious at degree 0. Assume that, for a given $n\geq 0$ and any $c_1$ of degree $\le n$
the equation holds. Consider $ c  _1 $ of degree $n+1\ge 1$. Since $c_1\top_C c_2$, we can choose
 $$\Delta ( c  _1)= c  _1\otimes J+J\otimes  c  _1+\sum _{( c  _1)} c  _1^\prime \otimes  c  _1^{\prime \prime},
 $$
such that
$ \{c_1,c_1^\prime, c_1''\}\top_C\, c_2.$
From this we have $\{\phi ( c  _1), \phi ( c  _1'),
\phi ^{\star(-1)}( c  _1'')\}\top_A\, \phi ^{\star(-1)} ( c  _2).$
So Eq.~(\ref {eq:phirec}) gives
$\phi ^{\star(-1)}( c  _1)\top_A \phi ^{\star(-1)}( c  _2).$
Therefore, we conclude that $\calG \calL $ is a locality group with unit $u_A \varepsilon_C$ by Lemma \ref{lem:mutuallyindIA}.

\item For this last item, we use an induction on the sum of degrees of $c$ and $d$, $c\top d$ to prove
$$\phi^{\star(-1)}(c)\phi^{\star(-1)}(d)=\phi^{\star(-1)}(cd),
$$
which is true if the sum of degrees is 0.

In general, by Lemma~\ref{lem:conil}.(\ref{it:conil0}), we write
$$\Delta (c)=c\otimes J +J\otimes c+\sum _{(c)}c'\otimes c^{\prime \prime}
, \quad
\Delta (d)=d\otimes J +J\otimes d+\sum _{(d)}d'\otimes d^{\prime \prime}
$$
with
$(c', c'', d', d'')\in B^{_\top 4}.$
So by $\Delta(cd)=\Delta(c)\Delta(d)$, we obtain
\begin {eqnarray*}
\Delta (cd)&=&cd\otimes J+J\otimes cd +c\otimes d+d\otimes c+\sum _{(d)}cd'\otimes d^{\prime \prime}+\sum _{(d)}d'\otimes cd^{\prime \prime}\\
&&+\sum _{(c)}c'd\otimes c^{\prime \prime}+\sum _{(c)}c'\otimes c^{\prime \prime}d+
\sum _{(c)(d)}c'd'\otimes c^{\prime \prime}d''.
\end{eqnarray*}
By Eq.~(\ref {eq:phirec}) we obtain
\begin{eqnarray*}\phi ^{\star(-1)}( cd  )&=&-\phi (cd)-\phi (c)\phi ^{\star(-1)}(d)-\phi (d)\phi ^{\star(-1)}(c)\\
&&-\sum _{(d)}\phi (cd')\phi ^{\star (-1)}( d^{\prime \prime})-\sum _{(d)}\phi (d')\phi ^{\star (-1)}( cd^{\prime \prime})\\
&&-\sum _{(c)}\phi (c'd)\phi ^{\star (-1)}( c^{\prime \prime})-\sum _{(c)}\phi (c')\phi ^{\star (-1)}( c^{\prime \prime}d)\\
&&-\sum _{(c)(d)}\phi (c'd')\phi ^{\star (-1)}( c^{\prime \prime}d'').
\end{eqnarray*}

By Eq.~(\ref {eq:phirec}) applied to $c$ and $d$, the locality multiplicativity of $\phi $, the commutativity of $A$ and induction hypothesis, we have
\begin{eqnarray*}\phi ^{\star(-1)}( cd  )&=&\phi(c)\phi (d)+\sum _{( d )}\phi (c)\phi ( d  ^\prime) \phi ^{\star(-1)}(  d  ^{\prime \prime})+\sum _{( c )}\phi ( c  ^\prime)\phi (d) \phi ^{\star(-1)}(  c  ^{\prime \prime})\\
&&+\sum _{( c )(d)} \phi ( d  ^\prime) \phi ^{\star(-1)}(  d  ^{\prime \prime})\phi ( c  ^\prime) \phi ^{\star(-1)}(  c  ^{\prime \prime})
+\sum _{( c )(d)}\phi ( c  ^\prime) \phi ^{\star(-1)}(  c  ^{\prime \prime})\phi ( d  ^\prime) \phi ^{\star(-1)}(  d  ^{\prime \prime})\\
&&-\sum _{(c)(d)}\phi (c'd')\phi ^{\star (-1)}( c^{\prime \prime}d'')\\
&=&\phi(c)\phi (d)+\sum _{( d )}\phi (c)\phi ( d  ^\prime) \phi ^{\star(-1)}(  d  ^{\prime \prime})+\sum _{( c )}\phi ( c  ^\prime)\phi (d) \phi ^{\star(-1)}(  c  ^{\prime \prime})\\
&&+\sum _{( c )(d)}\phi ( c  ^\prime) \phi ^{\star(-1)}(  c  ^{\prime \prime})\phi ( d  ^\prime) \phi ^{\star(-1)}(  d  ^{\prime \prime})\\
&=& \big(\phi ( c )+\sum _{( c )}\phi ( c  ^\prime) \phi ^{\star(-1)}(  c  ^{\prime \prime})\big)\big(\phi ( d )+\sum _{( d )}\phi ( d  ^\prime) \phi ^{\star(-1)}(  d  ^{\prime \prime})\big)\\
&=&\phi ^{\star(-1)}( c  )\phi ^{\star(-1)}( d).
\end{eqnarray*}
This completes the induction.
 \end{enumerate}
\end{proof}
Notice that the proof of Proposition \ref{prop:localisedantipode} follows with an easy induction from the results of this Theorem.

 \subsection{The locality Birkhoff-Hopf factorisation}
 
 \begin{theo} \label{thm:abflhopf}
  {\bf (Algebraic Birkhoff  factorisation, locality Hopf algebra version)}
Let $\left( H ,\top_H \right)$   be a locality connected Hopf algebra, $H=\oplus_{n\geq 0} H _n$, $H _0= \K  e$.
Let $\left(A,\top_A,\cdot \right) $ be a commutative locality algebra with decomposition $A=A_1\oplus A_2$ as a vector space such that the linear projections $\pi_i$ onto $A_i$ along $A_{\hat{i}}, \{\hat{i}\}:=[2]\backslash \{i\}, i=1,2,$  are independent locality linear maps and $1_A$ is in $A_1$.
Let
\begin{equation*}
 \phi: \left(H ,\top_H \right)\longrightarrow   \left(A,\top_A\right)
\end{equation*}
be a locality algebra homomorphism.
Then there are unique independent locality algebra homomorphisms $\phi_i: H\to \K+A_i$ with $\phi_i (\ker \eps)\subseteq A_i, i=1,2, $ such that

\begin{equation}
\phi= \phi_1^{\star (-1)} \star \phi_2.
\label{eq:lhabf}
\end{equation}
The map $  \phi_1^{\star(-1)}$ is also a locality algebra homomorphism and $\phi_1\top_\calL \{\phi,\phi_2\}$, $\phi_1^{\star(-1)}\top_\calL\{\phi_1, \phi_2\}$.

If $\psi: \left(C ,\top_C \right)\longrightarrow   \left(A,\top_A\right)
$
is also a locality linear map independent of $\phi$ with $\psi(J)=1_A$, then $\phi_i$ and $\psi_j$ are independent for $i, j =1,2$.
\begin{enumerate}
 \item If in addition $A_1$ is a sub-locality algebra of $A$, then $\phi_1^{\star (-1)}:C\to  \K  + A_1$.
 \item If in addition $A_2$ is a locality ideal of $A$, then $\phi_1^{\star(-1)}=\pi_1 \phi$
and $\phi_2$ is recursively given by
\begin{equation} \label{eq:phi_2}
 \phi_2(1_H)=1_A, \quad
\phi_2(c)=(\pi_2  \phi)(c)-\sum_{(c)}(\pi_1  \phi)(c')\phi_2(c'') \quad\forall c\in \ker \eps.
\end{equation}
with $\tilde\Delta(c)=\sum_{(c)}c'\otimes c''$ the reduced coproduct defined in Lemma \ref{lem:reduced_coproduct}.
\end{enumerate}
 \end{theo}
 \begin{rk}
  All the statements that make sense if $H$ is only a graded connected locality coalgebra (i.e. all the statements except that $\phi_i$ and $\phi_1^{\star(-1)}$ are algebra homomorphisms) are true in the category of locality \ty{coalgebras}. We do not state this level of generality since it will not be so useful later. The interested reader is refered to \cite[Theorem 4.10]{CGPZ1} for this purely coalgebraic case.
 \end{rk}

 \begin{proof} We prove separately the various statements of the theorem.
 \begin{itemize}
  \item \underline{existence of the $\phi_i$:} Let $n\geq 1$ and $ c  \in H_n$. Since $H$ is a connected locality Hopf algebra, we can write
$$\Delta ( c  )=J\otimes  c  + c  \otimes J+\sum _{( c  )} c '\otimes c ''
$$
with $\deg(c'), \deg(c'')>0$ and $c'\top_H c''$.

We first prove by induction on the degree $n$ of $c$ that the map given by
\begin {equation}
\phi_1( c )=\left\{\begin{array}{ll}
1_A, & c=J, \\
-\pi _1\Big(\phi( c )+\sum_{( c )} \phi_1( c ')\phi( c  '')\Big),
& c\in H_n, n>0, \end{array} \right .
\label{eq:LPhi-}
\end{equation}
is well-defined, and for any $d\in H$ with $d\top _H  c $, there is
$$\phi _1( c )\top _A \phi (d),
$$
which clearly hold for $ c $ of degree $0$.

Assume that these hold true for $ c $ of degree less or equal to $n$. Then for $ c  $ of degree $n+1$, according to Lemma \ref{lem:reduced_coproduct} $ c  '$ is of degree less or equal to $n$, so $\phi _1(c')$ is defined and $\phi _1(c')\top_A \phi ( c  '')$. Therefore $\phi _1( c  ')\phi ( c  '')$ makes sense by the induction hypothesis, and $\phi _1( c )$ is well-defined.

Now for any $ c \top_H d$, we have $\phi (c)\top_A \phi_1(d)$. By a similar induction on the degree of $c$, we obtain
$ \phi_1( c )\top_A \phi _1(d),
$
so
$\phi _1$ is a locality linear map.
Therefore, the map
\begin{equation}
\phi_2(c):=\left\{\begin{array}{ll}
1_A, & c= J, \\
\pi_2\big(\phi(c)+\sum_{(c)} \phi_1(c')\phi(c'')\big), &
c\in H_n, n>0, \end{array} \right .
\label{eq:LPhi+}
\end{equation}
is well-defined.

Notice that for $c\in C_n, n>0$,  Equation \eqref{eq:LPhi+} means
\begin {equation}
\phi_2(c)=\phi (c)+ \phi _1(c)+\sum_{(c)} \phi_1(c')\phi(c'').
\label{eq:LPhi2}
\end{equation}
With the condition on $J$, this in turn reads
$\phi _2=\phi _1\star \phi$ and hence $\phi =  \phi_1^{\star(-1)}\star \phi _2.$

\item \underline{uniqueness of the $\phi_i$:} The proof of the uniqueness of the maps $\phi_i$ is the same as the proof \cite[Theorem 4.4]{GPZ2} for the case of a trivial locality relation i.e., when $\top= C\times C$.

\item \underline{Independence properties of $\phi_i$ and $\phi_1^{\star(-1)}$:} We perform the same induction than in the first point. For any $d \in H$ with $ c \top_H d$, we can take $\{c',c''\}\top_H d$. Since $\phi $ is a locality map, we obtain
$\phi ( c  )\top_A \phi (d)$ and $\phi ( c''  )\top_A \phi (d)$ . Also the induction hypothesis gives
$\phi _1( c  ')\top_A \phi (d).$
Thus
$(\phi _1( c  ')\phi ( c  ''))\top_A \phi (d)$ since $(A,\top_A,.)$ is a locality algebra.
Therefore,
$$\Big(\phi( c )+\sum_{( c )} \phi_1( c ')\phi( c  '')\Big)\top_A \phi (d).
$$
Now since $\pi _1$ is a locality map and $\pi _1 $ and $\pi _2$ are independent for any $c\top_H d$ we have $\pi_1(c)\top_A\pi_1(d)$ and $\pi_1(c)\top_A\pi_2(d)$. Therefore $\pi_1(c)\top_A(\pi_1(d)+\pi_2(d))=d$ since $\Id _A=\pi _1+\pi _2$. Thus
$\pi _1$ and $\Id _A$ are independent. Thus
$\phi_1( c )\top_A \phi (d).
$
Therefore we have proved that $\phi _1\top_\calL \phi$.

From Equation \eqref{eq:LPhi2}, we easily obtain
$\phi _1\top_\calL \phi _2.
$
By Equation \eqref{eq:phirec}, an easy induction on the degree of $c$ shows that
$\phi_1^{\star(-1)}\top_\calL\, \{\phi _1, \ \phi _2\}.$

A similar induction shows that if $\psi: \left(C ,\top_C \right)\longrightarrow   \left(A,\top_A\right)$
is also a locality map, independent of $\phi$ with $\phi(J)=1_A$, then $\phi_i$ and $\psi_j$ are independent for $i, j =1,2$, proving the last statement of the theorem.

\item \underline{$\phi_i$ and $\phi_1^{\star(-1)}$ are locality algebra homomorphisms:} For $c\top_H d$, by Lemma~\ref{lem:conil}.(\ref{it:conil0}), we can write
$$\Delta (c)=c\otimes J +J\otimes c+\sum _{(c)}c'\otimes c^{\prime \prime}, \quad
\Delta (d)=d\otimes J +J\otimes d+\sum _{(d)}d'\otimes d^{\prime \prime}
$$
with
$(c', c'', d', d'')\in H^{_\top 4}.
$
According to Proposition \ref{prop:multpi}, $\pi_1$ is a locality Rota-Baxter operator of weight $-1$. Using \eqref{eq:rbo}, by a similar argument as in the non-locality case~\cite[Theorem~2.4.3]{Guo}, we can prove  that $\phi _1$ and $\phi _2$ are homomorphisms of locality algebras. Then, by Theorem \ref{thm:loc_conv_prod}, \ref{it:conv3}, $ \phi_1^{\star(-1)}$ is also a homomorphism of locality algebras.

\item \underline{$\phi_1^{\star(-1)}$ takes values in $\K+A_1$:} Assume that $A_1$ is a sub-locality algebra.  Equation \eqref{eq:phirec}) and a simple induction on $n\geq 0$ show
that $\phi_1^{\star (-1)}(c) \in  \K  + A_1$ for any $c\in H_n$.

\item \underline{Formulas for $\phi_1^{\star(-1)}$ and $\phi_2$:} Assume further that $A_1$ is a sub-locality algebra and $A_2$ is a locality ideal. We prove by induction on $n\geq 0$ that
\begin{equation}
	\phi_1^{\star-1}({c})  = (\pi_1  \phi)({ c})\quad \forall c\in  H_n.
\label{eq:BHF3}
	\end{equation}
	
Notice  that $\phi(J)=\phi_1(J)=1_A$ implies $(\phi_1\,\star\, ( \pi_1  \phi))(J)=1_A$, so Equation \eqref{eq:BHF3} holds for $n=0$
since $H_0=  K   J$. Assuming  that Equation \eqref{eq:BHF3} holds for any $c$ in $H$ of degree $\leq n$, we prove that Equation \eqref{eq:BHF3} holds for any element $c\in  H_{n+1}$.

We write
\begin{equation*}
 \Delta (c) ={c}\otimes J+ J\otimes {c}+\sum_{({c})}{c}'\otimes {c}''
\end{equation*}
with ${c}'\top_C {c}''$ of degree $\le n$.
 By the definition of $\phi_1 $,
  \begin{equation*} \phi_1({c})  =  -\pi_1\left(\phi(c)+  \sum_{({c})}\phi_1({c}^\prime)\, \phi({c}^{\prime\prime})\right)\end{equation*}
and we have already proven that
$\phi_1({c}^\prime)\top_A \phi({c}^{\prime\prime})$. We have also already proven that $\pi_1$ is independent to $\Id_A$ and this \ty{straightforwardly} implies that $\pi_2$ is also independent to $\Id_A$. Thus
$\phi_1({c}^\prime)\top_A \{\pi _1\phi({c}^{\prime\prime}), (\pi _2\phi)({c}^{\prime\prime})\}.$
Then
\begin{eqnarray*}
 \phi_1({c}) &=&-\pi_1\Big(\phi(c)- \sum_{({c})} \phi_1({c}^\prime) \big((\pi _1\phi)({c}^{\prime\prime})+(\pi _2 \phi)({c}^{\prime\prime})\big)\Big)\\
 &=&-(\pi_1\phi)(c)- \sum_{({c})} \phi_1({c}^\prime) (\pi _1\phi)({c}^{\prime\prime})
\end{eqnarray*}
 where we have used the facts that $A_1$ is a locality subalgebra, so $\phi_1({c}^\prime) \pi _1\phi({c}^{\prime\prime}\in A_1$ and that $A_2$ is a locality ideal so $\phi_1({c}^\prime)\pi _2 \phi({c}^{\prime\prime})\in A_2$.
 
 Consequently,
\begin{equation*}
 \big(\phi_1 \star(\pi_1  \phi)\big)({c})=\phi_1({c})+\pi_1\phi(c)+ \sum_{({c})} \phi_1({c}^\prime) \,\pi_1  \phi({c}^{\prime\prime})=0.
\end{equation*}
 We conclude that $\phi_1\,\star\,\left(\pi_1 \phi\right)=e$, leading to $ \phi_1^{\star(-1)} = \pi_1  \phi$ since $\phi_1^{\star(-1)}$ has been shown to exist. The {locality algebraic Birkhoff factorization}  $\phi =\phi_1^{\star -1} \star \phi_2$  then yields  for $c{\in \ker}(\varepsilon)$:
 \begin{equation*}
  \pi_1(\phi(c)) + \phi_2(c) + \sum_{(c)}\pi_1(\phi(c'))\phi_2(c'') = \phi(c)
 \end{equation*}
  Using $\pi_2={\Id}-\pi_1$ gives   the recursive expression \eqref{eq:phi_2} for $\phi_2$ in terms of $\pi_1$ and $\pi_2$.
 \end{itemize}
 \end{proof}
 \begin{rk} 
 %{\color{red} XXXXX check no redites avec l'intro!!!}
 As in the usual (non locality) case, the decomposition $\phi=\phi_1^{\star(-1)}\star\phi_2$ is a {\bf Birkhoff-Hopf decomposition} of the map $\phi$. Let us repeat the introduction: one striking result of Connes and Kreimer \cite{CK1} is that the the combinatorics of renormalisation is actually encoded by such a decomposition. In particular, if $\phi=\Phi_\tau$ is the unrenormalised Feynman rules of some QFT $\tau$, then $\phi_1^{\star(-1)}$ is their renormalised counterpart, while $\phi_2$ gives the counter-terms. Since this decomposition still exists in the context of locality structure we can define a renormalisation scheme with multiple regularisation variables. The striking fact that $\phi_1^{\star(-1)}$ is a locality map is summed up by the motto
  \begin{quote}
   ``Renormalisation preserves locality''.
  \end{quote}
  One possibly even more surprising result is that this renormalised map takes a particularly simple form in this framework of locality structures:
  \begin{equation} \label{eq:renom_map}
   \phi_1^{\star(-1)}=\pi_1\phi.
  \end{equation}
  This is sharp contrast with the non locality case, where the renormalised map is given by a complicated BPHZ \cite{BP57, He66, Zi70} formula. This can be understood as what we obtain for the price of working with locality structures instead of usual total structures. 
  
  To be more specific, let us say that the need for a Birkhoff-Hopf factorisation, or BPHZ formula (as opposed to a minimal \ty{subtraction}) is because we need to take into account the inner parts of a Feynman graph to obtain a renormalised map that is an algebra morphism, i.e. that preserves locality. In the locality framework, this is done by construction, by requiring for example that the vertices of the Feynman graphs are decorated by elements of a locality set that are pairwise independent. \cy{In practice, one needs to implement this locality framework using a suitable locality algebra for the regularisation. The simplest example seems to be an algebra of multivariate meromorphic germs: the multivariate renormalisation scheme.}
 \end{rk}
 Applying Theorem \ref{thm:abflhopf} to $(A,\top_A, \cdot) =\left(\calM_\Q, \perp^Q,\cdot\right)$  yields the following result, which is the application of the tools defined above to build a multivariate renormalisation scheme. 
 
%  {\color{red} XXXX plus tard???? Une nouvelle sous-section??? Pas si je mets la preuve en annexe...}
\begin{cor} \label{co:abflhopf}
Let $(H,\top_H)$   be a connected locality Hopf algebra.
	Let
	\begin{equation*}
	\phi: \left( H  ,\top_ H  \right)\longrightarrow   \left(\calM_\Q, \perp^Q\right)
	\end{equation*}
	be a locality  linear map  such that $\phi(1_H)=1_{\mathcal M _\Q}$.
	Let $\phi=(\phi_1^Q)^{\star (-1)}\star \phi_2^Q$ be the algebraic Birkhoff  factorisation in Eq.~\eqref{eq:lhabf} with $\phi_1^Q(1_H)=\phi_2^Q(1_H)=1_{\mathcal M_\Q}$. Then
	\begin{enumerate}
\item
$\pi_1^Q \phi$ is a locality linear map;
\item $(\phi_1^Q)^{\star (-1)}=\pi_1^Q  \, \phi$  so that
\begin{equation} \phi=(\pi_1^Q\,  \phi)\star \phi_2^Q;
\label{eq:BHlocalpi}
\end{equation}
\item  the maps $\phi_1^Q$, $\phi_2^Q$ are locality linear maps  and $\phi_1^Q\top_\calL \phi$,  $\pi_1^Q\, \phi\top_\calL \phi_2^Q$;
\item assuming in addition that  $\phi$ is a locality algebra homomorphism, then the maps
    $\pi_1^Q\phi, \phi_1^Q$ and $\phi_2^Q$ are locality algebra homomorphisms.
		\end{enumerate}
\end{cor}
\begin{proof}  The proof is straightforward; let us nevertheless mention that $\pi_1^Q\, \phi\top_\calL \phi_2^Q$ follows from $\phi_1\top_\calL \phi_2$ in Equation \ref{eq:lhabf} combined with the fact that the convolution inverse preserves locality. The crucial fact that $(\phi_1^Q)^{\star (-1)}=\pi_1^Q  \, \phi$ comes from the fact that $\calM_{+}$ is a locality subalgebra of $\calM_\Q$ and $\calM_{-}^Q$ is a locality ideal of $\calM_\Q$ (Proposition \ref{pp:merodecomp}).
\end{proof}
This is our multivariate renormalisation scheme: given a connected graded locality Hopf algebra $(H,\top,\Delta,.)$ and a locality map $\phi:H\longrightarrow\calM_\Q$, the renormalised map is simply $\pi_+^Q\phi$.
 
\section{Application to Kreimer's toy model} \label{sec:Kreimer}

We apply our multivariate renormaisation scheme to Kreimer's toy model, introduced in \cite{Kr97} which has some of the essential  features of QFT (see \cite{KK13}). We first need the locality version of operated structures.

\subsection{Locality operated structures}

We now merge operated structures (Definition \ref{def:op_structures}) with locality. These definitions are from \cite{CGPZ2}.
\begin{defn} \label{defn:loc_op_set}
Let $(\Omega, \top)$ be a locality set. An {\bf $(\Omega,\top)$-operated locality set} or simply a {\bf locality operated set} is a locality set $(U,\top_U)$ together with a {\bf partial action} $\beta$ of $\Omega$ on $U$ on a subset
$\top_{\Omega,U}:=\Omega\times_\top U\subseteq \Omega \times U$

$$ \beta:  \Omega\times_\top U \longrightarrow U, \ (\omega, x)\mapsto \beta ^\omega (x), $$
satisfying the following compatibility conditions.
	   \begin{enumerate}
\item
For
$$\Omega \times_\top U  \times_\top U: = \{(\omega,u,u')\in \Omega\times U\times U\,|\,
 (u,u')\in \top_U, (\omega, u), (\omega , u')\in \Omega \times_\top U\},$$
the map $\beta \times \Id_U: (\Omega\times_\top U) \times U \longrightarrow U\times U$ restricts to
$$\beta \times \Id_U: \Omega \times_\top U  \times_\top U
\longrightarrow  U\times_\top U.$$
 In other words,  {if} $(\omega,u,u')$ lies in $\Omega \times_\top U  \times_\top U$, then  $(\beta^\omega(u),u')$ lies in $\top_U$.
  \item
For
$$\Omega\times_\top \Omega \times_\top U:=\{(\omega,\omega',u)\in \Omega \times \Omega \times U\,|\, (\omega,\omega')\in \top_\Omega, (\omega,u), (\omega',u)\in \Omega\times_\top U\},$$
the map $\Id_\Omega \times \beta: \Omega\times (U \times_\top U) \longrightarrow \Omega \times U$  {restricts to}
$$ \Id_\Omega \times \beta: \Omega\times_\top \Omega \times_\top U \longrightarrow \Omega \times_\top U. $$
In other words, if $(\omega,\omega',u')$ lies in $\Omega \times_\top \Omega  \times_\top U$, then  $(\omega,\beta^{\omega'}(u))$ lies in $\Omega\times_\top U$.
	\end{enumerate}
 \end{defn}
 We can now define locality operations on more sophisticated locality structures.
 \begin{defn} \label{defn:basedlocsg}
Let $(\Omega , \top )$ be a locality set.
 \begin{enumerate}
  \item A {\bf locality $(\Omega,\top)$-operated semigroup} is a  quadruple  $\left(U,\top _U, \beta ,m_U\right)$, where $(U,\top _U,m_U)$ is a locality semigroup and
  $\left(U,\top _U, \beta \right)$ is a $(\Omega,\top)$-operated locality set such that
  \begin{equation}
    (\omega, u, u')\in\Omega\times_\top U\times_\top U~\Longrightarrow~(\omega, uu')\in\Omega\times_\top U;
  \end{equation}
 \item A {\bf locality $(\Omega,\top)$-operated monoid} is a  quintuple  $\left(U,\top _U, \beta ,m_U,1_U\right)$, where $(U,\top _U,m_U,1_U)$ is \ty{a} locality monoid
  and $\left(U,\top _U, \beta, m_U \right)$ is a $(\Omega,\top)$-operated locality semigroup such that $\Omega \times 1_U\subset \Omega \times _\top U$.
    \item A {\bf $(\Omega,\top)$-operated locality nonunitary algebra } (resp. {\bf $(\Omega,\top)$-operated locality unitary algebra}) is a  quadruple  $\left(U,\top _U, \beta ,m_U\right)$ (resp. quintuple
    $(U,\top _U, \beta,$ $m_U, 1_U)$) which is a locality algebra (resp. unitary algebra) and a locality $(\Omega,\top)$-operated semigroup (resp. monoid), satisfying the additional condition that for any $\omega\in \Omega$, the set $\{\omega\}^{\top_{\Omega,U}}:=\{ u\in U\,|\, \omega\top_{\Omega,U} u \}$ is a subspace of $U$ on which the action of $\omega$ is linear. More precisely, the last condition means that for any $u_1, u_2\in \{\omega\}^{\top_{\Omega,U}}$ and for any $k_1, k_2\in \K$, we have $k_1u_1+k_2u_2\in \{\omega\}^{\top_{\Omega,U}}$ and $\beta^\omega(k_1u_1+k_2u_2)=k_1\beta^\omega(u_1)+k_2\beta^\omega(u_2)$
(resp. this condition and $\Omega \times 1_U\subset \Omega \times _\top U$).
\end{enumerate}
In each case, the $(\Omega,\top)$-operated structure is called {\bf commutative} if the corresponding locality structure is.
\end{defn}
Notice that \cy{the structures introduced above depend on the chosen set $\top_{\Omega,U}$. Therefore this set should appear in the names of 
 these various locality operated structures to be completely rigourous.} We drop this requirement as it would unnecessarily complicate notations since \ty{the set $\top_{\Omega,U}$} will always be clear from context.
\begin{example}
A locality \, semigroup $(G,\top,\cdot) $ is a locality $(G, \top)$-operated commutative semigroup for the action $\alpha: G\times_\top G\to G$ given by the product on $G$.
\end{example}
We still have to define categories of locality operated locality structures in order to obtain categories.
\begin{defn}\label{defn:morphoplocstr}
Given $(\Omega,\top _\Omega)$-operated locality structures  (sets, semigroups, monoids, nonunitary algebras, algebras) $(U_i, \top _{U_i}, \beta_i)$, $i=1,2,$ a
{\bf morphism of locality operated  locality structures}  is a locality morphism (of sets, semigroups, monoids, nonunitary algebras, algebras) $f: U_1\to U_2$ such that
\begin{itemize}
\item
$(\Id_\Omega\times f)(\Omega\times_\top U_1)\subseteq \Omega\times_\top U_2$
and
\item  $f\circ \beta_1^\omega = \beta_2^{\omega} \circ f$ for all $\omega\in \Omega$.
\end{itemize}
\label{defn:morphismoperatedloc}
\end{defn}
In the following part of this section, we will focus on a locality (Hopf) algebra which has a universal property in the category of locality operated locality algebras. It is described in terms of rooted forests.

\subsection{The locality Hopf algebra of properly decorated forests}

Recall (Definition \ref{def:forests}) that a rooted forest (or forest for short) is a directed acyclic graph such that there is at most one path between any two vertices and also such that each connected component has a minimal element, called a root. A connected forest is called a (rooted) tree.  Recall also that for a set $\Omega$, an $\Omega$-decorated (rooted) forest is a pair $(F,d_F)$ with $F$ a (rooted) forest and $d_F:V(F)\longrightarrow\Omega$ a map from the vertices of $F$ to the set $\Omega$. If the set $\Omega$ is clear from the context, we will call these objects decorated forests. In general, we will also omit the $d_F$ and called $F$ a decorated forest. We write $\calF_\Omega$ the vector space freely generated by $\Omega$-decorated forests and $\calT_\Omega$ the vector space freely generated by $\Omega$-decorated trees. The concatenation of forests turns $\calF_\Omega$ into a commutative and associative unital algebra with the empty tree as neutral element.

We recall a classical (and important) Hopf algebraic \ty{structure} on $\calF_\Omega$, namely the Connes-Kreimer Hopf algebra. This structure was introduced in \cite{CK99} to understand the combinatorics of renormalisation. We also refer the reader to \cite{Fo13} for an accessible introduction to this beautiful topic and \cite{CK1} for applications to QFT.
\begin{defn} \cite{CK99}
 Let $F$ be a rooted forest. Set $E'(F):=E(F)\cup\{e_r|r\text{ a root of }F\}$, with $e_r$ a half-edge only linked to the root $r$. A {\bf cut} of $F$ is a subset $c$ of $E'(F)$. An {\bf admissible cut} of $F$ is a cut $c\subseteq E'(F)$ such that any path (see Definition \ref{def:graph_stuff}) of $F$ intersects $c$ at most once. We write $\Adm(F)$ the set of admissible cuts of $F$.
 
 For any cut $c$ of $F$, we write $R_c(F)$ the {\bf stump} given by $c$, namely the subforest of $F$ whose vertices and edges are below all the elements of $c$ and $P_c(F)$ the {\bf pruning} given by $c$, namely the subforest of $F$ whose vertices and edges are above all the elements of $c$.
\end{defn}
The R in $R_c(F)$ comes from ``roots''. Indeed, if $F=T$ is a rooted tree with unique root $r$ then $R_c(T)$ contains $r$ for any admissible cut $c\neq\{e_r\}$. Our definition is succinct since admissible cuts are a standard object in the combinatorics of rooted forests and we did not want to spend more time than strictly necessary to introduce these well-known notions. These definitions also naturally extend to decorated forests.

The coproduct of the Connes-Kreimer Hopf algebra can be described in terms of admissible cuts.
\begin{defn} \label{def:CK_coproducts} \cite{CK99}
 For any set $\Omega$, the {\bf Connes-Kreimer coproduct} is a map $\calF_\Omega\Delta_{CK}:\calF_\Omega\longrightarrow\calF_\Omega\otimes\calF_\Omega$ defined by its action of any rooted forest
 \begin{equation*}
  \Delta_{CK}(F)=\sum_{c\in\Adm(F)}R_c(F)\otimes P_c(F).
 \end{equation*}
\end{defn}
Notice that $\emptyset\in\Adm(F)$ and $R_\emptyset(F)=F$, $P_\emptyset(F)=\emptyset$. Furthermore, $c_{\rm full}:=\{e_r|r\text{ a root of }F\}\in\Adm(F)$ and $R_{c_{\rm full}}(F)=\emptyset$, $P_{c_{\rm full}}(F)=F$. Thus for any $F\neq\emptyset$
\begin{equation*}
  \Delta_{CK}(F)=\emptyset\otimes F+F\otimes\emptyset+\tilde\Delta_{CK}(F)
 \end{equation*}
i.e. $\Delta_{CK}$ is counital. \\

Let us now turn our attention to the case where the decoration set $\Omega$ carries a locality relation. \cy{In this case $\calF_\Omega$ inherits a locality relation: two forests $F_1$ and $F_2$ are independent if, and only if, each decoration of $F_1$ is independent to every decorations of $F_2$, and vice-versa.}
\begin{defn} \label{def:loc_rooted_forests}%\cite{CGPZ2} 
 Let $(\Omega,\, \top _\Omega )$ be a locality set and equip the space $\calF _{\Omega}$ of ${\Omega}$-decorated rooted forests  with the following independence relation induced by that of $\calP (\Omega)$ given by Example \ref{ex:power_set}:
\begin{align} \label{eq:IndForestss}
 					(F_1, d _1)\, \top _{\calF_\Omega}\,(F_2, d _2)& \Longleftrightarrow d_1(V(F_1))\, \top _\Omega\, d _2(V(F_2)) \\
 					& \Longleftrightarrow\forall(v_1,v_2)\in V(F_1)\times V(F_2),~d_1(v_1)\top_\Omega\, d_2(v_2) \nonumber
 \end{align}
 and extended by linearity.
\end{defn}
Since $\top _{\calF_\Omega}$ is defined on the whole of $\calF_\Omega$ by a linear extention, it is clear that $(\calF_\Omega,\top _{\calF_\Omega})$ is a locality vector space. It actually carries further (locality) structures but we are more interested by a subspace.
\begin{defn} \label{def:prop_dec_forests}
Let $(\Omega,\top_\Omega)$ be a locality set. A {\bf properly} $\Omega$-decorated forest is a $\Omega$-decorated forest $(F,d_F)$ such that any disjoint pair of vertices of $F$ are decorated by independent elements of $\Omega$:
 \begin{equation*}
 \forall(v_1,v_2)\in V(F)^2,~v_1\neq v_2\Longrightarrow d_F(v_1)\top_\Omega d_F(v_2).
\end{equation*}
 We write $\mathcal{F}_\Omega^{\rm prop}$ the set of properly $\Omega$-decorated forests. When $\Omega$
is clear from context, we call them properly decorated forests.
\end{defn}
Since properly decorated rooted forests are decorated forests, the usual concatenation and Connes-Kreimer coproducts are defined on properly decorated rooted forests. We then have the following simple albeit important result.
\begin{theo} \label{thm:loc_Hopf_prop_forests}
 Let $(\Omega,\top_\Omega)$ be a locality set. Then $(\calF_\Omega^{\rm prop},\top _{\calF_\Omega},.,\Delta_{CK})$ (with $.$ the usual concatenation of rooted forests) is a locality bialgebra. It is graded by the number of vertices and connected, hence a locality Hopf algebra.
\end{theo}
\begin{proof}
We start by proving the locality properties.
 \begin{itemize}
  \item $\calF_\Omega^{\rm prop}$ is a \ty{vector subspace} of $\calF_\Omega$ carrying the same locality relation, therefore it is a loclaity vector space.
  \item For any pair of properly decorated forest $F_1$ and $F_2$ with $F_1\top _{\calF_\Omega} F_2$, let $(v_1,v_2)\in V(F_1F_2)\times V(F_1F_2)$. Let $d:V(F_1F_2)\longrightarrow \Omega$ be the decoration map of $F_1F_2$ inherited from the decorations maps of $F_1$ and $F_2$. If $v_1$ and $v_2$ are both elements of $V(F_i)$ for $i=1$ or $2$, then $d(v_1)\top_\Omega d(v_2)$ since $F_i$ is properly decorated. Otherwise, we still have $d(v_1)\top_\Omega\, d(v_2)$ since $F_1\top _{\calF_\Omega} F_2$. Therefore $F_1F_2$ is properly decorated and $\calF_\Omega^{\rm prop}.\calF_\Omega^{\rm prop}\subseteq\calF_\Omega^{\rm prop}$.
  \item Let $F_1$ and $F_2$ be two independent properly decorated rooted forests that are both independent to $F_3$, a third properly decorated forest. Since $V(F_1F_2)=V(F_1)\sqcup V(F_2)$, and since $F_1\top _{\calF_\Omega} F_3$ and $F_2\top _{\calF_\Omega} F_3$; the decoration of any vertex of $F_1F_2$ is independent to the decoration of any vertex of $F_3$. Thus, since $\top _{\calF_\Omega}$ is defined through a linear extention, Property \eqref{eq:semigrouploc} holds and $(\calF_\Omega^{\rm prop},\top _{\calF_\Omega},.)$ is a locality algebra.
  \item Let $F\in\calF_\Omega^{\rm prop}$ and let $c\in\Adm(F)$. Since $F$ is properly decorated, any vertex of $R_c(F)$ will be independent to any vertex of $P_c(F)$. For the same reason both $R_c(F)$ and $P_c(F)$ are properly decorated rooted forests. Thus, by linearity of the locality tensor product, $\Delta_{CK}(F)\in\calF_\Omega^{\rm prop}\otimes_\top\calF_\Omega^{\rm prop}$\footnote{we use $\otimes_\top$ as a shorthand notation for the cumbersome $\otimes_{\top_{\calF_\Omega}}$.} i.e. $\Delta_{CK}(\calF_\Omega^{\rm prop})\subseteq\calF_\Omega^{\rm prop}\otimes_\top\calF_\Omega^{\rm prop}$. Furthermore, let $F_1$ and $F_2$ be two independent properly decorated rooted forests. Then since $F_1\top _{\calF_\Omega} F_2$, and $V(R_c(F_1))\subseteq V(F_1)$, $V(P_c(F_1))\subseteq V(F_1)$, the decoration of any vertex of $R_c(F_1)$ and $P_c(F_1)$ is idenpendent to the decoration of any vertex of $F_2$. Thus, again by linearity Property \eqref{eq:comag} holds and $(\calF_\Omega^{\rm prop},\top _{\calF_\Omega},\Delta_{CK})$ is a locality coalgebra.
 \end{itemize}
 
 The properties of usual (non locality) Hopf algebra come from the fact that the Connes-Kreimer Hopf algebra is a Hopf algebra and from Proposition \ref{prop:localisedantipode}.
\end{proof}
\begin{rk}
 A non-commutative version of properly decorated rooted forests also exists but will not be needed in this context.
\end{rk}

\subsection{A universal property}

Recall that the {\bf grafting operator} is the map 
\begin{equation*}
 B_+:\Omega\times\calF_\Omega\longrightarrow\calF_\Omega
\end{equation*}
defined by, for any $\omega\in\Omega$ and $F\in\calF_\Omega$, $B_+(\omega,F)$ is the rooted tree obtained by adding a root decorated by $\omega$ to the forest $F$ and grafting each roots of $F$ to this new root. We write $B_+^\omega(F)$ instead of $B_+(\omega,F)$ and for $\omega\in\Omega$ fixed, we have $B_+^\omega:\calF_\Omega\longrightarrow\calF_\Omega$ the related map.

We have a simple yet important property of properly decorated forests.
\begin{prop} \label{prop:operatedalgebra}
 Let $(\Omega,\top _\Omega)$ be a locality set.
 Then
 $(\calF_{\Omega}^{\rm prop},\top _{\calF_{\Omega}}, B_+,\cdot,\emptyset)$
  is a locality $(\Omega,\top _\Omega)$-operated commutative algebra where $B_+$ is the grafting operator restrained to the set 
  \begin{equation*}
   \top_{\Omega,\calF_\Omega}:=\{(\omega,F)\in\Omega\times\calF_\Omega^{\rm prop}~|~\forall v\in V(F),\omega\top_\Omega d_F(v)\}.
  \end{equation*}
\end{prop}
\begin{proof}
 The grafting operator $B_+$ satisfies by definition of $\top _{\calF_{\Omega}}$ the conditions of a
locality $(\Omega,\top _\Omega)$-operated algebra. Let us check them one by one. First, by the definitions of $\top_{\Omega,\calF_\Omega}$ and properly decorated forests, if $(\omega,F)\in\top_{\Omega,\calF_\Omega}$, then $B_+^{\omega}(F)\in\calF_\Omega^{\rm prop}$.

Second, if $F_1\top_{\calF_\Omega} F_2$ and $(\omega,F_i)\in\top_{\Omega,\calF_\Omega}$ for $i\in\{1,2\}$ then from the definition of $\top_{\calF_\Omega}$ and $\top_{\Omega,\calF_\Omega}$ we have $B_+^\omega(F_1)\top_{\calF_\Omega} F_2$ as needed.

Third, if $\omega_1\top_\Omega\omega_2$ and $(\omega_i,F)\in\top_{\Omega,\calF_\Omega}$ for $i\in\{1,2\}$ then again from the definition of $\top_{\calF_\Omega}$ and $\top_{\Omega,\calF_\Omega}$ we have $(\omega_1,B_=^{\omega_2}(F))\in\top_{\Omega,\calF_\Omega}$ as needed. Therefore, $(\calF_{\Omega}^{\rm prop},\top _{\calF_{\Omega}},B_+)$ is a $(\Omega,\top_\Omega)$-operated locality set.

The fact that it is actually a $(\Omega,\top_\Omega)$-operated locality \emph{algebra} comes directly from the definition of the concatenation product of decorated forests and the definition of $\top_{\Omega,\calF_\Omega}$.
\end{proof}
As it should be expected, this example is not random: properly decorated forests are actually the initial object in the category of locality operated locality structures. First, here is a straightforward consequence of the  definition of locality operated structures.
\begin {lemma}
Let $(U,\top_U)$ be an $(\Omega,\top_\Omega)$-operated locality set. For $m, n\geq 1$, denote
$$\Omega^{_\top m} \times_\top U^{_\top n}: =\left\{ (\omega_1,\cdots,\omega_m,x_1,\cdots,x_n)\in \Omega^m\times U^n\,\left |\,
\begin{array}{l}
(\omega_1,\cdots,\omega_m)\in \Omega^{_\top m}\\
 (x_1,\cdots,x_n)\in U^{_\top n} \\
 \omega_i\top_{\Omega,U}x_j\quad\forall(i,j)\in[m]\times[n]
 \end{array} \right . \right\}.$$
$ (\Id_\Omega^{m-1} \times \beta^{\omega_m} \times \Id_U^{n-1})(\Omega^{_\top m} \times_\top U^{_\top n})$ is contained in $\Omega^{_\top (m-1)}\times_\top U^{_\top n}$.

With a similar notation, we have
${(\beta \times \beta)} (\Omega\times_\top U\times_\top \Omega\times_\top U)\subseteq U\times_\top U.$
\label{lem:itact}
\end{lemma}
Next, we have a lemma that will allow us to perform inductions on the number of vertices of properly decorated rooted forests.
\begin{lemma}
Let $(\Omega,\top _{\Omega})$ be a locality set. An ${\Omega}$-properly decorated rooted forest in $\calF_{\Omega}^{\rm prop}$ is either the empty tree $1$ or it can be written uniquely in one of the following forms:
\begin{enumerate}
\item
$(F_1,d_1)\cdots (F_n,d_n), \ n\ge 2,$ with rooted trees $F_i\neq=\emptyset$ such that
$$(F_i,d_i)\top _{\calF _{\Omega}} (F_j,d_j),\ 1\leq i\neq= j\leq n. $$
Furthermore $\deg(F,d)=\deg(F_1,d_1)+\cdots+\deg(F_n,d_n)$;
\item	
$ B_+^{\omega}(F,d)$ for some $(F,d) \in \calF_{\Omega}^{\rm prop}$ which is independent of  $\bullet_\omega.$ Furthermore,
$$\deg(B_+^{\omega}(F,d))=\deg (F,d)+1.$$
\end{enumerate}
\label{lem:fdecomp}
\end{lemma}
\begin{proof}
The statements hold for any decorated forest without any independence requirement~\cite{Guo}. If a decorated forest has independent decorations, then the independence conditions in the statements automatically hold from the definition of the various locality relation at play.
\end{proof}
 The following results are also easy to verify.
\begin{lemma}
\begin{enumerate}
\item \label{lem:welldefined_point_i}
For rooted forests $(F_1,d_1),\cdots, (F_n,d_n)\in \calF_{\Omega}^{\rm prop}$, set as before $d_{F_1\cdots F_n}:V(F_1)\bigsqcup\cdots\bigsqcup V(F_n)\longrightarrow\Omega$ to be the map whose restriction to $V(F_i)$ is $d_i$. Then
the product forest 
\begin{equation*}
 (F_1,d_1)\cdots(F_n,d_{n})=(F_1\cdots F_n,d_{F_1\cdots F_n})
\end{equation*}
lies in $\calF_{\Omega}^{\rm prop}$ if and only if
$(F_i,d_i)\top _{\calF _{\Omega}} (F_j,d_j)$ for $1\leq i\neq j\leq n$;
\item
For a properly decorated rooted forest $(F,d)\in \calF_{\Omega}^{\rm prop}$, the rooted tree $B_+^{\omega}(F,d)$ lies in $\calF_{\Omega}^{\rm prop}$ if and only if $(F,d)$ is independent of  $\bullet_\omega.$
\end{enumerate}
\label{lem:welldefined}
\end{lemma}
 The following characterisation of a morphism of operated commutative monoids will later be useful.
\begin{lemma}
Let $U$ be an $(\Omega,\top_\Omega)$-operated commutative monoid. A map $\phi:\calF_{\Omega}^{\rm prop} \to U$ is a morphism of $(\Omega,\top_\Omega)$-operated commutative monoids if and only if
\begin{enumerate}
\item
$\phi(1) = 1_{U}$;
\label{it:mid}
\item for any $(F,d),(F',d') \in \calF_{\Omega}^{\rm prop}$, if $(F,d) \top _{\calF_\Omega}(F',d')$, then $\phi(F,d) \top _U \phi(F',d')$;
\label{it:mloc}
\item
For any $(T_1,d_1)\cdots (T_n,d_n)\in \calF_{\Omega}^{\rm prop}$ where $(T_1,d_1), \cdots, (T_n,d_n)$ are decorated rooted trees, the equation
$$\phi ((T_1,d_1)\cdots (T_n,d_n)) = \phi(T_1,d_1)\cdots\phi(T_n,d_n)$$
holds;
\label{it:mprod}
\item If $(\omega,(F,d))$ is in $\Omega\times_\top \calF_{\Omega}^{\rm prop}$, then $(\omega, \phi (F,d))$ is in $\Omega\times _\top U$;
\label{it:mact}
\item
For any $B_+^\omega(F,d)\in \calF_{\Omega}^{\rm prop}$, the equation
$\phi \left( B_+^\omega(F,d)\right) =  \beta_{U,+}^{\omega}\left(\phi(F,d)\right)$ holds.   \label{it:mgraft}
 \end{enumerate}
 \label{lem:char}
\end{lemma}
\begin{proof}
($\Longrightarrow$). Suppose  that $\phi:\calF_{\Omega}^{\rm prop}\to U$ is a morphism of operated commutative monoids. Then conditions~(\ref{it:mloc}), (\ref{it:mact}) and (\ref{it:mgraft}) hold by the locality of the map
$\phi$ and its compatibility with the actions of $(\Omega,\top_\Omega)$. Condition (\ref{it:mid}) is the unitary condition and Condition (\ref{it:mprod}) follows since the concatenation is the product in
$\calF_{\Omega}^{\rm prop}$.

($\Longleftarrow$). Now suppose that all the conditions are satisfied, so that we only need to verify that $\phi$ is multiplicative
for any $(F_1,d_1), (F_2,d_2)\in \calF_{\Omega}^{\rm prop}$ with
$(F_1,d_1)\top_{\calF_\Omega} (F_2,d_2)$.

By Lemma~\ref{lem:fdecomp}, we have decompositions
$$ (F_i,d_i)=({T}_{i,1},d_{i,1})\cdots ({T}_{i,n_i},d_{i,n_i}), i=1,2,$$
of $(F_i,d_i)$ into rooted trees. Furthermore, by Lemma~\ref{lem:welldefined}, the concatenation $(F_1,d_1)(F_2,d_2)$ is well-defined in $\calF_{\Omega}^{\rm prop}$ and then the decomposition of $(F_1,d_1)(F_2,d_2)$ in Lemma~\ref{lem:fdecomp} is
$$(F_1,d_1)(F_2,d_2)=({T}_{1,1},d_{1,1})\cdots ({T}_{1,n_1},d_{1,n_1}) ({T}_{{2},1},d_{1,1})\cdots ({T}_{{2},n_{2}},d_{{2},n_{2}}).$$
This gives the multiplicativity of $\phi$:
\begin{eqnarray*}
\phi((F_1,d_1)(F_2,d_2))
&=&\phi(({T}_{1,1},d_{1,1})\cdots ({T}_{1,n_1},d_{1,n_1}) ({T}_{{2},1},d_{1,1})\cdots ({T}_{{2},n_{2}},d_{{2},n_{2}}))\\
&=& \phi({T}_{1,1},d_{1,1})\cdots \phi({T}_{1,n_1},d_{1,n_1}) \phi({T}_{{2},1},d_{1,1})\cdots \phi({T}_{{2},n_{2}},d_{{2},n_{2}}) \\
&=& \phi(F_1,d_1)\phi(F_2,d_2),
\end{eqnarray*}
as required.
\end{proof}
Finally, let us state a useful result that is \ty{straightforwardly} derived from the definition of locality semigroups (Definition \ref{defn:localisedalgebra}).
\begin{lemma}
Let $(G,m_G,\top_G)$ be a locality semigroup. Let $k\geq  2$ and $1\leq i\leq k$. For $(x_1,\cdots,x_k)\in G^{_\top k}$ we have
\begin{enumerate}
\item
$(\Id_G^{i-1}\times m_G\times \Id_G^{k-i-1})(x_1,\cdots,x_k)\in G^{_\top (k-1)}.$
\item
$(x_1\cdot \ldots \cdot x_i, x_{i+1}\cdot \ldots \cdot x_k)\in G\times_\top G,$.
\end{enumerate}
\label{lem:locprod}
\end{lemma}
We are finally ready to prove
\begin{theo} \label{thm:univ_prop_trees_loc}
 Let a locality set $(\Omega,\top _{\Omega})$ be given. The quintuple $(\calF_{\Omega}^{\rm prop},\top _{\calF _{\Omega}},  B_+, \cdot, 1 )$ is the initial object in the category of
$(\Omega,\top _{\Omega})$-operated commutative locality algebras. More precisely, for any $(\Omega,\top_\Omega)$-operated commutative locality \ty{algebra} $U:=(U,\top_U,\beta_U,m_U,1_U)$, there is a unique morphism $\phi:=\phi_U: \calF_{\Omega,\top_\Omega}\to U$ of $(\Omega,\top_\Omega)$-operated commutative locality monoids.
\end{theo}
\begin{proof}
Let an $(\Omega,\top_\Omega)$-operated commutative locality algebra $(U,\top_U,\beta_U,m_U,1_U)$ be given. We only need to prove that there is a unique morphism of
$(\Omega,\top _{\Omega})$-operated commutative locality algebras
$$\phi=\phi_U: (\calF_{\Omega}^{\rm prop},\top _{\calF _{\Omega}},  B_+, \cdot, 1 )\longrightarrow	 (U,\top_U,\beta_U,m_U,1_U).$$
By Lemma~\ref{lem:char}, we only need to prove that there is a unique map
$\phi:\calF_{\Omega}^{\rm prop} \to U$ satisfying the conditions (\ref{it:mid}) --- (\ref{it:mgraft}).
For $k\geq 0$, we set
$$\calF_k: =\{ (F,d)\in \calF_{\Omega}^{\rm prop}\, |\, |F|\leq k\}.$$
We will prove by induction on $k\geq 0$ that there is a unique map $\phi_k: \calF_k\to U$ fulfilling the five conditions in
Lemma~\ref{lem:char} when $\calF_{\Omega}^{\rm prop}$ is replaced by $\calF_k$, in which case, we call the corresponding conditions
Condition~$(j)_k$ for $j\in \{\ref{it:mid},~ \ref{it:mloc},~ \ref{it:mprod},~ \ref{it:mact},~ \ref{it:mgraft}\}$.

When $k=0$, we have $\calF_k=\{\emptyset\}$. Then only Condition~(\ref{it:mid}) and Condition~(\ref{it:mloc}) apply when $(F,d)=(F',d')=\emptyset$, giving the unique map
$$\phi_0:\calF_0\to U, \quad \emptyset\mapsto 1_U.$$
Since $(1_U,1_U)\in U\times_T U$, Condition~(\ref{it:mloc}) is satisfied.

Another  instructive example to study before the inductive step is the case   $k=1$, so
$\calF_1=\{1\}\cup \{\bullet_\omega\,|\, \omega\in \Omega\}$. Since $\bullet_\omega=B_+^\omega(\emptyset)$, the only map $\phi:\calF_1\to U$ satisfying Conditions (\ref{it:mid}) --- (\ref{it:mgraft}) is given by
$$ \phi(1)=1_U, \quad \phi(\bullet_\omega)=\phi(\beta^\omega(\emptyset)) =\beta^\omega(\phi(\emptyset))=\beta^\omega(1_U).$$
Now let $k\geq 0$ be given and assume that there is a unique map $\phi_k: \calF_k\to U$ satisfying Conditions~$(\ref{it:mid})_k$ --- $(\ref{it:mgraft})_k$. Consider $f=(F,d)\in \calF_{k+1}$. If $f$ is already in $\calF_k$, then $\phi(f)$ is uniquely defined by the induction hypothesis. If $f\in \calF_{k+1}$ is not in $\calF_k$, then the degree of $f$ is at least $1$. So Lemma~\ref{lem:fdecomp} shows that either there is a factorisation $f=f_1\cdots f_n, n\geq 2,$ into independent properly decorated rooted trees, or
$f=B_+^\omega(\overline{f})$ for $\overline{f}$ independent of $\bullet_\omega$ and necessarily in $\calF_k$. The assignment
\begin{equation}
\phi_{k+1}(f)=\left\{\begin{array}{ll} \phi_k(f_1)\cdots \phi_k(f_n), & \text{ if } f = f_1\cdots f_n, \\
\beta^\omega(\phi_k(\overline{f})), & \text{ if } f = B_+^\omega(\overline{f}). \end{array} \right.
\label{eq:mphik+1}
\end{equation}
is then well-defined  since $\phi_{k}$ satisfies Conditions $(\ref{it:mid})_{k}$ --- $(\ref{it:mgraft})_{k}$ by assumption.

Note that this is in fact the only way to define $\phi_{k+1}$ satisfying the conditions in Lemma~\ref{lem:char}, proving the uniqueness of $\phi_{k+1}(f)$.

Next we verify that the map $\phi_{k+1}$ obtained this way indeed satisfies Conditions $(\ref{it:mid})_{k+1}$ --- $(\ref{it:mgraft})_{k+1}$. By the above equation and the inductive hypothesis, $\phi_{k+1}$ satisfies Conditions~$(\ref{it:mid})_{k+1}$, $(\ref{it:mprod})_{k+1}$ and $(\ref{it:mgraft})_{k+1}$.

To verify Condition~$(\ref{it:mloc})_{k+1}$, consider $f, f'\in \calF_{k+1}$ with $f \top_{\calF_{\Omega}} f'$.
Depending on whether $f$ or $f'$ lies or not in $\calF_k$, there are four cases to consider.
In the case when both $f$ and $f'$ are in $\calF_k$, the condition is satisfied by the induction hypothesis. For the remaining three cases, the verifications are similar, the most complicated one being when neither $f$ nor $f'$ lies in $\calF_k$. So we will only verify this case. For this case, we further have four subcases depending on which of the two forms in Lemma~\ref{lem:fdecomp} that $f$ or $f'$ takes.

{\bf Subcase 1. $f=f_1\cdots f_n, n\geq 1,$ for independent properly decorated trees $f_1,\cdots, f_n$ and $f'=f_1'\cdots f'_{n'}$ for independent properly decorated trees $f'_1,\cdots, f'_{n'}$.} Then all the factor trees are pairwise independent. Since all the factor trees are in $\calF_k$, by the inductive assumption of Condition~$(\ref{it:mloc})_k$, their images
$$\phi_k(f_i), 1\leq i\leq n, \quad \phi_k(f'_j), 1\leq j\leq n',$$
are pairwise independent in $U$. Furthermore the products
$\phi_k(f_1)\cdots \phi_k(f_n)$ and $\phi_k(f'_1)\cdots \phi_k(f'_{n'})$ are independence thanks to  Lemma \ref{lem:locprod}
By Lemma \ref{lem:locprod}, the products
% $\phi_k(f_1)\cdots \phi_k(f_n)$ and $\phi_k(f'_1)\cdots \phi_k(f'_{n'})$ are independent.
But by the construction of $\phi_{k+1}$ in Equation \eqref{eq:mphik+1}, the last two products equal to $\phi_{k+1}(f_1\cdots f_n)$ and $\phi_{k+1}(f'_1\cdots f'_{n'})$. This gives Condition~$(\ref{it:mloc})_{k+1}$ in this subcase.

{\bf Subcase 2.  $f=B_+^\omega(\overline{f})$ for $(\omega,\overline{f})\in \top_{\Omega,\calF_{\Omega}}$ and $f'=f'_1\cdots f'_n, n\geq 2,$ for independent properly decorated trees $f'_1,\cdots, f'_n$.} Since $n\geq 2$, we have $f'=f'_A f'_B$ with independent $f'_A$ and $f'_B$, both in $\calF_k$. Since $f$ and $f'$ are independent, we have
that $\overline{f},~f'_A$ and $f'_B$ are pairwise independent and the $(\omega,f)\in\top_{\Omega,\calF_{\Omega}}$ for $f$ being any one of these three forests.
Then Conditions~$(\ref{it:mloc})_k$ and $(\ref{it:mact})_k$ lead to $\phi_k(\overline{f}),~\phi_k(f'_A)$ and $\phi_k(f'_B)$ being pairwise independent and $(\omega,\phi_k(f))\in\top_{\Omega,\calF_{\Omega}}$ again for $f\in\{\overline{f},f'_A,f'_B\}$.

Applying Lemma~\ref{lem:itact} gives $(\beta^\omega(\phi_k(\overline{f})), \phi_k(f'_A)\phi_k(f'_B)) \in U \times_\top U$. By Equation \eqref{eq:mphik+1}, this implies that $(\phi_k(\beta^\omega(\overline{f})), \phi_k(f'_Af'_B))$ is in $U \times_\top U$
% , that is,
% $(\phi_k(\beta^\omega(\overline{f})), \phi_k(f'_Af'_B))$ is in $U \times_\top U$, by Eq.~(\ref{eq:mphik+1}).
This gives Condition~$(\ref{it:mloc})_{k+1}$ in this subcase.
 
 {\bf Subcase 3. $f=f_1\cdots f_n, n\geq 2,$ for independent properly decorated trees $f_1,\cdots, f_n$ and $f'=B_+^\omega(\overline{f}')$ for $(\omega,\overline{f}')\in \Omega\times_\top \calF_{\Omega,\top_\Omega}$.} This subcase follows from the previous subcase by the commutativity of the concatenation of the forest product and the locality relation.
 
 {\bf Subcase 4. $f=B_+^{\omega_1}(\overline{f_1})$ and $f'=B_+^{\omega_2}(\overline{f_2})$ for $(\omega_i,\overline{f_i})\in \Omega\times_\top \calF_{\Omega,\top_\Omega}$ ($i\in\{1,2\}$).}
Since the two forests are independent, we have
$(\omega_i,\overline{f_j})\in\top_{\Omega,\calF_\Omega}$ for $i,j\in\{1,2\}$, $\omega_1\top_\Omega\omega_2$ and $\overline{f_1}\top_{\calF_\Omega}\overline{f_2}$.

Then the locality of $\phi_k$, in particular Condition~$(\ref{it:mact})_k$,  guaranteed by the induction hypothesis, gives $(\omega_i,\phi(\overline{f_j}))\in\top_{\Omega,\calF_\Omega}$ for $i,j\in\{1,2\}$ and $\phi(\overline{f_1})\top_{\calF_\Omega}\phi(\overline{f_2})$. This
yields, by Lemma~\ref{lem:itact} and Equation \eqref{eq:mphik+1},
$$(\phi(B_+^\omega(\overline{f})), \phi(B_+^{\omega'}(\overline{f}'))) =(\beta^\omega(\phi(\overline{f})), \beta^{\omega'}(\phi(\overline{f}')))\in
U\times_\top U.$$
This gives Condition~$(\ref{it:mloc})_{k+1}$ in this subcase.

We have therefore completed the verification of Condition~$(\ref{it:mloc})_{k+1}$.

\medskip 

Let us finally check Condition~$(\ref{it:mact})_{k+1}$ assuming the induction hypothesis, distinguishing two cases:  $f\in \calF_{k+1}$ is of the form $f=f_1\cdots f_n, n\geq 2,$ for independent properly decorated trees $f_1,\cdots, f_n\in \calF_k$ or $f=B_+^{\omega'}(\overline{f})$ for $(\omega',\overline{f})\in \Omega\times_\top \calF_k$.

In the first case, we write $f=f_A f_B$ with $f_A, f_B\in \calF_k$. Then from $(\omega,f)\in \top_{\Omega,\calF_\Omega}$ we have $(\omega,f_A)\in \top_{\Omega,\calF_\Omega}$, $(\omega,f_B)\in \top_{\Omega,\calF_\Omega}$ as well as $f_A\top_{\calF_\Omega}f_B$. This implies $(\omega,\phi(f_A))\in \top_{\Omega,U}$, $(\omega,\phi(f_B))\in \top_{\Omega,U}$ and $\phi(f_A)\top_{\calF_\Omega}\phi(f_B)$.
 This gives
$$ (\omega,\phi_{k+1}(f_Af_B)) =(\omega,\phi(f_A)\phi(f_B))\in \top_{\Omega,U}$$ as needed.

In the second case, similarly we have $\omega\top_\Omega\omega'$, $(\omega,\overline{f}')\in\top_{\Omega,\calF_\Omega}$ and $(\omega',\overline{f}')\in\top_{\Omega,\calF_\Omega}$. This implies $(\omega,\phi(\overline{f}'))\in\top_{\Omega,\calF_\Omega}$ and $(\omega',\phi(\overline{f}'))\in\top_{\Omega,\calF_\Omega}$.
Therefore,
$(\omega,\phi(B_+^{\omega'}(\overline{f}'))) =(\omega,\beta^{\omega'}(\phi(\overline{f}')))\in\top_{\Omega,U}$ by the assumption on $\beta$.

\medskip 

This completes the verification of Condition~$(\ref{it:mact})_{k+1}$. Together with the verification of the other conditions for the existence of $\phi_{k+1}$ above, as well as that of the uniqueness of $\phi_{k+1}$ after Equation \eqref{eq:mphik+1} and the inductive step is completed.
\end{proof}
\begin{rk}
 For planar (i.e. non-commutative) properly decorated rooted forests, the same result would hold for non commutative locality operated locality algebra. The same proof holds in this case. We omit it since we will not need this level of generality.
\end{rk}

\subsection{Multivariate renormalisation of Kreimer's toy model}

Recall (Subsection \ref{subsec:multi_mero}) that endowing $\R^\infty$ with an Euclidean structure $Q=(Q_k(., .))_{k\geq 1}$
where for $k\geq1$, 
$Q_k(.,.): \R ^k\otimes \R ^k \to \R$ induces isomorphisms $Q_k^*:\R^k\longrightarrow(\R^k)^*$ defined by
\begin{equation*}
 Q_k^*:u\mapsto\Big(Q_k(u,.):v\mapsto Q_k(u,v)\Big).
\end{equation*}
Composing the maps $Q_n^*$ and their inverses with the canonical injections $j_n:\R^n\hookrightarrow\R^{n+1}$ gives rise to a direct system $(\iota_n^*)_{n\geq0}$ defined by
\begin{equation*}
 \iota_n^*:=Q_{n+1}^*\circ j_n\circ(Q_n^*)^{-1}:(\R ^n)^*\longrightarrow (\R ^{n+1})^*.
\end{equation*}
Then we set
$$\calL: =\varinjlim_n (\R^n)^*.$$
Let us recall that $\calM_\Q$ is the space of germs with linear poles rational coefficients, see Equation \eqref{eq:def_MQ}. In the algebra of germs with complex variables in $\C^\infty$ and with one real variable $x$, we consider the $\calM_\Q$-module of linear combinations
$$\Big\{ \sum_{i=1}^k f_i x^{L_i} \,\Big|\, f_i \in \calM, L_i\in \calL, 1\leq i\leq k, k\geq 1\Big\}.$$
It is an $\calM_\Q$-subalgebra since $x^L, L\in \calL$ is closed under multiplication. It can further be checked that $x^{L}, L\in \calL,$ are linearly independent over $\calM_\Q$. Thus it is  isomorphic to the group ring $\calM_\Q[\calL]$ over $\calM_\Q$ generated by the additive monoid $\calL$. We will   henceforth make this identification.

Let us also recall that $Q=(Q_k(., .))_{k\geq 1}$ also induces locality relations (both written $\perp^Q$) on $\R^\infty$ and $\calL$. This in turns extend to a locality relation, also written $\perp^Q$, on $\calM_\Q$ and $\calM_\Q$.

We now endow $\calM_\Q[\calL]$ with the structure of a $(\calL,\perp^Q)$-operated locality algebra. We revisit a map defined  in  \cite {GPZ4}, viewed here as an operating map on $\calM [\calL]$:
\begin {lemma}\label{lem:opL} For any $L\in \calL$, the operator
$$ {\mathcal I}^L(f)(x):=\int_0^\infty  \frac {f(y)y^{-L}}{y+x} dy,$$
defines a linear map from $\calM_\Q[\calL]$ to $\calM_\Q[\calL]$.
With
$$
 {\mathcal I}: \calL\times \calM_\Q[\calL]{\longrightarrow} \calM_\Q[\calL],\quad
(L, f)\longmapsto {\mathcal I}^L(f),
$$
denoting the resulting action of $\calL$ on $\calM_\Q[\calL]$, the triple $(\calM_\Q[\calL], \perp^Q, {\mathcal I})$ is an  $(\calL,\perp^Q)$-operated locality algebra.
\end{lemma}
\begin {proof} We first prove that ${\mathcal I}^L$ defines a map from $\calM_\Q[\calL]$ to $\calM_\Q [\calL]$. By the $\calM_\Q$-linearity, we only need to prove $\calI ^L(x^\ell )\in \calM_\Q [\calL]$.

For fixed $x>0$ and $z\in \C$, the map $y\longmapsto \frac{y^{-z}}{(y+x)^k}$ is locally integrable on $ (0, +\infty)$. Since $\frac{y^{-z}}{(y+x)^k}\underset{y\to 0}{\sim}C\, y^{-z}$ and $\frac{y^{-z}}{(y+x)^k} \underset{y\to \infty}{\sim} y^{-z-k}$, the integral $\int_0^\infty \frac{y^{-z}}{(y+x)^k}\,dy$ converges absolutely on the strip $ \Re (z) \in (1-k,1)$ and  the map $x\longmapsto \int_0^\infty \frac{y^{-z}}{y+x}\, dy$ is smooth with $k$-th derivative $x\longmapsto (-1)^k\,k!\,\int_0^\infty \frac{y^{-z}}{(y+x)^{k+1}} \, dy. $

For $\alpha >0$, we have
$$\lim _{x\to 0+}x^\alpha \,\ln (x)=0 , \ \lim _{x\to \infty }x^{-\alpha} \,\ln (x)=0,
$$ so that for fixed $x>0$,  the map $z\longmapsto \int_0^\infty \frac{y^{-z}}{y+x}\,dy$ is
  holomorphic in $z$ on the  strip $\Re (z)\in (0,1)$ with derivative $z\longmapsto -z\,\int_0^\infty \frac{\ln y\, y^{-z}}{y+x}\,dy$.
For any real number $ a \in (0, 1)$,  an explicit computation~\cite[Lemma 4.5]{GPZ4} gives:
 $$\int_0^\infty \frac{x^{-a}\,d x}{x+1}
  = \frac{\pi}{\sin (\pi a)},$$
 and hence
$$\int_0^\infty \frac{y^{-a}}{y+x}\,dy=
 x^{-a}\, \int_0^\infty \frac{y^{-a}\,d y}{y+1}=\frac{\pi}{\sin(\pi a)} x^{-a}.$$  It follows that for any complex number in the strip $\Re (z)\in (0,1)$, we have
$$\int_0^\infty \frac{y^{-z}}{y+x}\,dy=\frac{\pi}{\sin(\pi z)} x^{-z}.
$$

As a consequence of  this explicit formula, the map $z\longmapsto  \int_0^\infty \frac{y^{-z}}{y+x}\,dy$ extends to a meromorphic family (in the  parameter $z$) of smooth functions in $x$. Thus, when restricted to a neighborhood of $0$ in the parameter space, ${\mathcal I}^L$ can be viewed as a $\calM_\Q$-linear map from $\calM_\Q [\calL] $ to
$\calM_\Q [\calL] $. More precisely,
\begin{equation} \label{eq:formula_IL}
 {\mathcal I}^L(f x^{-\ell})=\left(x\mapsto f(x) \frac{\pi}{\sin(\pi(L+\ell))} x^{-L-\ell}\right), \quad f\in \calM_\Q, L\in \calL
\end{equation}
(on the l.h.s, we write $x^{-\ell}$ for $x\to x^{-\ell}\in\calM_\Q[\calL]$).

The conditions for $(\calM_\Q [\calL],\perp^Q,{\mathcal I} )$ to be an $(\calL,\perp^Q)$-operated locality algebra are then easy to check. For example, if (with a small abuse of notations) $f_1x^{L_1}$ and $f_2x^{L_2}$ are two independent elements of $\calM_\Q[\calL]$ and $L\in\calL$ is such that $L\perp^Q L_i$ then it is clear from the definition $\perp^Q$ that $\calI^L(f_1x^{L_1})\perp^Q f_2x^{L_2}$. We omit to write down the other various properties as is it nothing more than bookkeeping.
\end{proof}
The universal property of the $(\calL,\perp^Q)$-operated locality algebra $(\calF_{\calL}^{\rm prop},\top_{\calF_\calL})$ discussed in Theorem~\ref{thm:univ_prop_trees_loc} yields a unique locality algebra homomorphism
$${\mathcal R} : (\calF_{\calL}^{\rm prop}, \top_{\calF_{\calL}},B_+)\to (\calM_\Q[\calL], \perp^Q,{\mathcal I}),
$$
(with $\mathcal{I}:=\{\mathcal{I}^L,L\in\mathcal{L}\}$)
which is characterized by the following conditions:
\begin{align*}
\calR(\bullet_ L) & = \int_0^\infty \frac{y^{-L}}{y+x}dy = \frac{\pi}{\sin (\pi L)} x^{-L},
% \label{eq:rinit}
\\
\calR((F_1,d_1)(F_2,d_2)) & = \calR(F_1,d_1) \calR(F_1,d_2) \quad \text{for all } (F_1,d_1)(F_2,d_2)\in \calF_{\calL}^{\rm prop},
% \label{eq:rprod}
\\
\calR\left( B_+^L((F, d))\right) & = {\mathcal I}^L\left(\calR((F,d))\right) \quad \text{for all } B_+^L((F, d))\in \calF_{\calL}^{\rm prop}.
% \label{eq:rgraft}
\end{align*}
 We are almost ready to define the multivariate regularised and renormalised maps for Kreimer's toy model. We just need some straightforward yet useful preliminary results which we state without proof.
 \begin{lemma} \label{lem:ev_maps}
  \begin{enumerate}
   \item The evaluation map at $x=1$:
\begin{equation*}
{\rm ev}_1:={\rm ev}_{x=1}: \left(\calM_\Q[\calL ], \perp\right)\longrightarrow \left(\calM_\Q, \perp\right), \quad f x^{L}\mapsto f, \quad f\in \calM_\Q, L\in \calL,
% \label{eq:ev1}
\end{equation*}
is a homomorphism of locality algebras.
\item Recall (Equation \eqref{eq:merodecomp} and Proposition \eqref{pp:merodecomp}) that $\calM_\Q$ admits a splitting $\calM_Q=\calM_+\oplus\calM_-^Q$ into holomorphic and polar germs. The evaluation map at $\vec z=\vec0$ of holomorphic germs 
$${\rm ev}_0:={\rm ev}_{z=0}:\calM_+\to \C$$
is a homomorphism of locality algebras.
  \end{enumerate}
 \end{lemma}
 We have \ty{all the} tools we need to define the regularisation and renormalisation maps of Kreimer's toy model in our multivariate framework and to prove that they are locality algebra morphisms.
 \begin{defitheo} \label{defthm:reg_ren_maps}
  \begin{enumerate}
   \item The {\bf multivariate regularisation map} of Kreimer's toy model
$${ \mathcal  R}_1 := {\rm ev}_1 \circ { \mathcal  R}: \calF _{\calL}^{\rm prop}\to \calM_\Q
$$
is  a homomorphism of locality algebras
\begin{equation*}
\label{eq:calr1}
 \calR_1  : (\calF _{\calL}^{\rm prop}, \perp _{\calF_{\calL}})\to \left(\calM_\Q , \perp^Q \right).
\end{equation*}
% In particular,  for independent properly decorated forests $(F, d)$ and $(F', d')$, we have
% $$ { \mathcal  R}_1 (F, d)\perp^Q { \mathcal  R}_1(F', d') \ \text{ and } \ { \mathcal  R}_1\left( (F, d)\bullet (F', d')\right)= { \mathcal  R}_1 (F, d)\, { \mathcal  R}_1(F', d'). $$
\item The {\bf multivariate renormalisation map} of Kreimer's toy model ${ \mathcal  R}^{\rm ren}$  on $\calF _{\calL}^{\rm prop}$   defined by
\begin{equation*}
% \label{eq:calrren} 
\mathcal  R^{\rm ren}  :={\rm ev}_0\circ\pi_+^Q\circ  { \mathcal  R}_1: \calF _{\calL}^{\rm prop}\to \C
\end{equation*}
   (with $\pi_+^Q:\calM_\Q\to\calM_+$ is the projection to $\calM_+$ parallely to $\calM_-^Q$) is a  locality character on the locality algebra $\calF _{\calL}^{\rm prop}$. It coincides with the positive part of the Birkhoff-Hopf decomposition of $\calR_1$ evaluated at $\vec0$.
  \end{enumerate}
 \end{defitheo}
 \begin{proof}
  \begin{enumerate}
   \item By Theorem \ref{thm:univ_prop_trees_loc} and the first point of Lemma \ref{lem:ev_maps}, $\calR$ and ev$_1$ are two locality algebra morphisms, thus so is $\calR_1$.
   \item By the second point of Lemma \ref{lem:ev_maps} and Proposition \ref{pp:merodecomp}, ev$_0$ and $\pi_+^Q$ are two locality algebra morphisms, and $\calR_1$ is as well according to the first point, so $\calR^{\rm ren}$ is a locality algebra morphism as well.
   
   It coincides with the positive part of the Birkhoff-Hopf decomposition of $\calR_1$ evaluated at $\vec0$ by unicity of the locality Birkhoff-Hopf decomposition (Theorem \ref{thm:abflhopf}), but also as a direct consequence of Corollary \ref{co:abflhopf}.
  \end{enumerate}
 \end{proof}
Notice that, thanks to the relative simplicity of Kreimer's toy model, we can actually derive explicit properties of $\calR_1$ and $\calR^{\rm ren}$. We omit these results here since they are very specific to this model, which we only use as a demonstrating ground for our multivariate renormalisation scheme. We refer the reader to \cite[Proposition 5.5 and Corollary 5.8]{CGPZ2} for these details.

\section{Some open questions regarding locality structures}

% Locality structures have been originally introduced to offer a framework to rigorously build a multivariate renormalisation scheme, a task they completely fullfilled. It has been applied to various objects, most notably in \cite{CGPZ2} to Kreimer's toy model \cite{Kr97}, in \cite{CGPZ3} to a divergent version of Arborified Zeta Values studied in Chapter \ref{chap:MZV} and in \cite{CGPZ2019} to divergent Conical Zeta Values.

I hope to have presented a rather self-containing introduction to the main ideas and results of the theory of locality structures. In particular, I hope that the last Section \ref{sec:Kreimer} shows that they can be successfully applied to some relevant models. In conclusion, let me list some questions that are yet to be answered in the domain of locality structures, should the interest of a potential reader be roused enough that she or he would like to tackle them. 

A first rather obvious point, is that the various application of our multivariate renormalisation scheme (most notably Kreimer's toy model, CZVs and AZVs) are rather firmly stepped in the mathematics domain. Physical applications, e.g. multivariate renormalisation of physical models is the next logical step for our multivariate renormalisation scheme. Notice that some steps have been made into this direction, in particular in\cite{Re19} which unifies locality with causality in AQFT
% We refer the readers to \cite{Re16} for a gentle introduction to AQFT. 
and in \cite{GPZ24}, 
where the dependence on $Q$ of the splitting \eqref{eq:merodecomp} of meromorphic germs was investigated and interpreted as the action of a renormalisation group akin to Cartier's universal Galois group. 

To conclude this plea for applications of multivariate renormalisation to Physics, I would like to emphasize that it fits rather naturally into the program of a TRAP approach to building Feynman rules that was presented at the end of Chapter \ref{chap:PROP}. \ty{It} might be the proof of a considerable hubris, but I do hope that somehow both these approaches will become unified into an accepted, if not standard, approach to perturbative QFT. 

\medskip 

% Locality structures contain more than the multivariate renormalisation scheme and I believe that, as it is often the case in mathematics, they are worth studying for their own sake. Various extensions of the theory presented above were proposed. For example, a locality version of lattice theory was studied in \cite{CGPZ21}. Other locality structures and their relation with various partial structures were investigated in \cite{Zh18}. A locality version of the Poincaré-Birkhoff-Witt and the Milnor-Moore Theorems were proven in \cite{Lo23}, see also \cite{CFLP22}.

The domain has therefore reached a certain maturity. However, there are some gaps that have been proven rather delicate to close. The most important of them, in my opinion, was thoroughly 
investigated in \cite{CFLP22}.  It has to do with the locality structures carried be locality tensor products. It can be stated as a conjecture:
\begin{conj} \label{conj:loc_tensor}
 Let $(E,\top)$ be any locality vector space, $V$ and $W$ any two locality \ty{vector subspaces} of $E$. Then $(V\otimes_\top W,\ttop)$ is a locality vector space.
\end{conj}
Evidences for this conjecture have been exhibited in \cite{CFLP22}. Let me briefly list them here:
\begin{itemize}
 \item This conjecture holds (see \cite{CFLP22}[Theorem 6.4]) for locality vector spaces with {\bf locality basis}, that is roughly to say for locality vector spaces freely generated by a locality set, such that the locality on the vector space is induced by the one on the basis. See \cite{CFLP22}[Definition 6.1] for a precise definition.
 \item It holds for Hilbert spaces where the locality is given by being orthogonal w.r.t. the inner product\footnote{However in this case, the quotient locality relation becomes trivial (see \cite{CFLP22}[Example 2.7]) so this argument is less strong than the first one.}.
 \item We could not find counterexamples, although we did spend time and effort to build them.
\end{itemize}
Looking for counterexamples, we actually came to stronger conjectures, in particular that $I_{\rm bil}^{\top_\times}$ from Definition \ref{defn:loctensprod1} always has a property insuring that the quotient locality relation turns $V\otimes\top W$ into a locality vector space. This conjecture and others, in particular regarding locality envelopping algebras, have not been published as of yet.

% Other related conjectures have been made in the same paper.
I would nonetheless point out that these studies suggest a strategy to tackle Conjecture \ref{conj:loc_tensor}: to build a (locality) lattice from $\top$, $V$ and $W$ and to use tools of (locality) lattices theory \ty{developed} in \cite{CGPZ21}. At the moment, this is the only possible approach to this conjecture that I see as being reasonable

Other open problems of locality theory are still open, some technical and some more conceptual. Let me name a few here in no particular order.
\begin{itemize}
 \item As build, the category $\LVect$ is not monoidal. Defining a ``natural'' monoidal structure on this category could possibly solve Conjecture \ref{conj:loc_tensor}, but this is --in my opinion -- a rather unreasonable (in the sense of very interesting but difficult) approach to the conjecture.
 \item A classification of locality Lie groups is still missing. It was shown in \cite{CFLP22} that such a classification would necessarily be richer than the one of usual (non-locality) Lie groups.
 \item Given a locality vector space $(V,\top)$, there are not yet natural notions of a dual locality vector space $(V^*,\top^*)$. This question might be anecdotic or lead to a deeper understanding of what locality categories should be.
 \item \cy{It was pointed out by one of the referee that, as presented here, the notion of locality might be be too large. With this point of view, one could require some properties for the relation of Definition \ref{defn:independence} to be called an independence relation. It is possible that the right axioms force these restricted independence relations to automatically solve conjecture \ref{conj:loc_tensor}.}
\end{itemize}
% Does someone read the source .tex? I would be most surprised, so if so, let me know!
With this open list, I hope to convince the motivated reader wishing to attack open problems in the theory of locality structures that they are plethora. There is more than enough room for all interested researchers.
% and I hope to meet you there some time.

% % % % %  COMMENTER CE QUI EST CI-DESSOUS.
% 
% \bibliographystyle{unsrt}
%  \addcontentsline{toc}{section}{Bibliography}
% \bibliography{HDR_biblio}
%  
% \end{document}

    % % 
% \documentclass[11pt,twoside,a4paper]{book}
% 
% \usepackage{HDR}
% 
% %% SI ON LAISSE LES DEUX LIGNES SUIVANTES DANS LE STY, ELLES NE MARCHENT PAS... (probablement à cause de \input).
% \input{xy}
% \xyoption{all}
% 
% 
% \begin{document}
% % 
% % % % 
% % % % % % COMMENTER CE QUI EST AVANT CA ET ENDDOCUMENT POUR COMPILER HDR.tex.

\chapter{Resurgence in Quantum Field Theory}
\label{chap:res}

This chapter is based mostly on \cite{Cl21} and \cite{BC19}. It also uses results from previous papers \cite{BS13,Be10,BC14,BC15,BC18}.

\section*{Introduction}

\addcontentsline{toc}{section}{Introduction}

\subsection*{Resurgence in Physics}

\addcontentsline{toc}{subsection}{Resurgence in Physics}

As we have already pointed out in the first and third chapter of this thesis, the classical approach to QFT is a \ty{perturbative one}, where the perturbative terms are typically computed with the help of Feynman graphs. These are typically divergent, and thus need to be regularised and renormalised. Once this tremendous task is achieved, one ends up with a formal series in the \ty{perturbative} parameter, typically the fine structure constant of the theory. A natural question is then ``is this \ty{perturbative} (formal) series convergent?''

The answer to this good question is a resounding ``no''. This is due to the famous Dyson's argument\footnote{which is often quoted but seems to never have been written down by Dyson, so that it is transmitted in an ``oral'' way through QFT practitionners. If someone knows of a reference by Dyson, I would be very thankful for them to let me know of it.}. The argument is often very well explained in the \ty{literature}, see for example \cite{schweber2020} where the author puts it in historical perspective but for completeness I would like to explain it as well.

Assume the perturbative series converges for a value of the fine structure constant $a>0$. Then this series would be convergent in a little disk around the origin. Such a disk contains negative values of $a$, so the theory would also be defined for such values. But for these values, pairs of particles are repelled. So pairs of particles created from quantum fluctuations of the void would become real. The vacuum would continuously radiates particles i.e. be unstable.

This is from 1951, and probably very soon after (e.g. it seems to be already discussed as a well-known fact in \cite{Ho79}) it was realised that Borel-Laplace resummation allows to escape this issue. Then we can replace the original question by the more resonable one ``is this \ty{perturbative} (formal) series summable with the Borel-Laplace method?'' I believe that the answer to that question is still ``no'', at least in many physically relevant cases. We will see later that this is because often the Borel transform has singularities in the direction one wishes to perform the Laplace transform. In this chapter, I argue that \'Ecalle's resurgence theory \cite{Ecalle81,Ecalle81b,Ecalle81c} is the right framework to solve this issue, and that \ty{perturbation} theories for QFT should be Borel-\'Ecalle summable, or accelero-summable in the case of asymptotically free QFTs. 

For this program, some Physics jargon will be used. We will not need to precisely introduce all the relevant physical notions, but in order to increase readibility for an eventual reader that would not be accustomed I try to explain the main ideas in Subsection \ref{subsec:physics_concepts}. The hope is that the rest of the text will be made more intelligible if the main equations (the renormalisation group equation and the Schwinger-Dyson equation) carry a meaning for the reader and are not seen as just abstract nonsense.\\

For completeness, I also need to point out that the use of resurgence theory I will present here is not the most classical in physics \ty{literature}. For physicists, as far as I understand, the main interest of resurgence is that it offers tools to compute non perturbative contributions to the theory from the \ty{perturbative} series. Before I am more precise, I would like to specify that by ``non perturbative'' I mean objects or effect that cannot be seen at any order in the \ty{perturbation} series. For example, the funtion $f:\R\longrightarrow\R$ defined by $f(0)=0$ and $f(x)=\exp(-1/x^2)$ if $x\neq0$. This function is smooth and we have $f^{(n)}(0)=0$ for all $n\in\N$. Therefore, its Taylor series is the null series: this function would not be detected \ty{perturbatively}. In this sense, physicists want to compute such contributions in the form of \emph{transseries} and manage to do so by exploiting tools of resurgence theory. We refer the reader to \cite{ABS19} for a review of this topics (although recent articles in the same spirit such as \cite{BoDu20} are of course missing).

Before broadly presenting this approach to QFT, I need to specify a \ty{vocabulary} term: \emph{sector}. It is related to the usual notion of regimes, for example of a solution of an ordinary differential equation. The solution $f$ can then typically typically be approximated as a sum of two terms: $f(x)=f_1(x)+f_2(x)$, where $f_1$ is dominant for some values of $x$, and $f_2$ for some other. In physics, $f_1$ and $f_2$ can typically be seen as comming from two different aspects of the theory. So \emph{sector} is often used to describe one part of the theory, e.g. one term of the lagrangian. It can also describe one specific part of a solution of the theory, or at least of an approximation of a solution. Later, we will essentially use this second meaning.

The crucial point of the physicists's approach can be summed up in three steps:
\begin{itemize}
 \item Asymptotics of the perturbative series leads to singularities of its Borel transform;
 \item Singularities of the Borel transform encode non \ty{perturbative} sectors of the theory;
 \item Corrections to the asymptotics correspond to expansion in the non perturbative sectors.
\end{itemize}
This lead to replace the \ty{perturbative} series $\phi(a)\in\C[[a]]$ by a transseries. The simplest of these is of the form
\begin{equation*}
 \Phi(a)=\sum_{n\in\Z}\phi_n(a)e^{-nS/a}
\end{equation*}
with $S\in\R^*$ is the transseries action, and $\phi_0(a)=\phi(a)$, the initial \ty{perturbative} series\footnote{which is usually thought of as describing the $0$-instanton sector of the theory; while $\phi_n$ is the perturbative series in the $n$-th instantons sectors if $n\geq1$ or in the $|n|$-th anti-instantons sector if $n\leq-1$.}. Then $\phi_1$ is encoded into the asymptotics of $\phi_0=\phi$, $\phi_2$ in the asymptotics of $\phi_1$ and in the asymptotics of the asymptotics of $\phi_0$, and so one. Writing $\phi_n(a)=\sum_{k\geq0}\phi_k^{(n)} a^k$ we have at the first few orders:
\begin{align*}
 \phi_k^{(0)}\mathop{\sim}_{k\to\infty}k!\psi_0(1/k) & \rightsquigarrow \phi_k^{(1)}(a)=A_1\psi_0(a) \\
 \phi_k^{(1)}\mathop{\sim}_{k\to\infty}k!\psi_1(1/k)~\wedge~\psi_0{(k)}\mathop{\sim}_{n\to\infty}k!\psi_1^1(1/k) & \rightsquigarrow \phi_k^{(2)}(a)=A_2\psi_1(a)+A_3\psi_1^1(a)
\end{align*}
and so on... Here the $\psi$s are themselves formal series and the $A$s are invariant that have to be determined, typically numerically. The fact that the same numbers come back from the asymptotics to the non-\ty{perturbative} sectors is the origin to the name ``resurgence''. I cannot emphasize enough how remarkable this feature is! The perturbative series ``knows'' about non \ty{perturbative} aspects of the theory: it is rather unintuitive for the physicist that I used to be.

What is maybe even more remarkable: this program can actually be carried out, and has been for many models. This can be done thanks to another aspect of resurgence theory, namely \'Ecalle's \emph{alien calculus}. %I will not go into further details, but simply point 

\subsection*{Content and main results}

\addcontentsline{toc}{subsection}{Content and main results}

This chapter starts with a presentation of aspects of resurgence theory relevant for the task at hand. First (Subsection \ref{subsect:Bl_res}) of the Borel-Laplace resummation method. In Subsection \ref{subsec:res_fct} I present some aspects of resurgent functions and resurgent analysis, in particular the useful Theorem 
\ref{thm:bound_resu_Sauzin} due to Sauzin. The last Subsection \ref{subsec:BE_res} of this first section introduces the notions of average and well-behaved average. The main features of the Borel-\'Ecalle resummation method are then summed up in Theorem \ref{thm:Borel_Ecalle_resummation}.

Section \ref{sec:WZ} starts by introducing some physical concepts that are discussed in the rest of the text (Subsection \ref{subsec:physics_concepts}). Then we \ty{introduce} the model we are studying, namely the Wess-Zumino model. We are looking for a solution to the system of integro-differential equations consisting of the (truncated) Schwinger-Dyson equation (SDE) \eqref{eq:SDnlin} and the Renormalisation Group Equation (RGE) \eqref{eq:RGE}. The solution is the two-point function of the Wess-Zumino model. In the same section we present the results we will use (in particular Claim \ref{thm:resurgence_gamma}) and formulate the SDE and the RGE in the Borel plane.

The third section is devoted to the proof of Theorem \ref{thm:resurgence_two_points_function}, namely that the two-point function is resurgent. This result is obtained by reorganising first the two-point function (Proposition \ref{prop:solution_RGE}) which allows to show that it is 1-Gevrey (Proposition \ref{prop:G_one_Gevrey}). Then we apply the \ty{aforementioned} Theorem \ref{thm:bound_resu_Sauzin}) in what is the heart of our resurgent analysis.

The fourth section is the most technical of this chapter. Using  complex analysis we are able to show a bound on the asymptotics of the two-point function (Theorem \ref{thm:bound_two_point_infinity}) which, together with Theorem \ref{thm:resurgence_two_points_function}, implies our main result: Corollary \ref{cor:main_result}. This result is obtained through a number of technical lemmas I will not try to summarise here. We also discuss at the very end of Subsection \ref{subsec:G_res} how this procedure could give rise to a non-\ty{perturbative} mass generation mechanism.

The final Section \ref{sec:res_final} of this chapter serves as a justification that Borel-\'Ecalle resummation is probably not enough for the most interesting theories, namely asymptotically free ones. We start by recalling that for such theories, 't Hooft argued in \cite{Ho79} that the two-point function can be analytic only in a ``horn-shaped domain'' that we illustrate in Figure \ref{fig:an_dom_tHooft}. We then briefly introduce in Subsection \ref{subsec:accelero} \'Ecalle's (strong) accelero-summation procedure and show that it gives an analyticity domain (illustrated in Figure \ref{fig:an_dom_acc}) for the two-point function that agrees with 't Hooft's horned-shaped domain. The chapter ends in Subsection \ref{subsec:res_questions} with a list of open questions concerning the applications of resurgence theory to QFT.

\section{Elements of resurgence theory}

In this section we introduce the classical Borel-Laplace resurgence method and some notions of resurgence theory that will be important for our purposes. For a more complete \ty{literature} on resurgence theory we refer the readers to the original texts \cite{Ecalle81,Ecalle81b,Ecalle81c} or \cite{Ec93}. Other classical introductions are \cite{Sa14}, \cite{MS16} (the \ty{latter} being more recent) or \cite{Do19} for a physicist's point of view.

\subsection{The Borel-Laplace resummation method} \label{subsect:Bl_res}

Many excellent introductions of the classical theory of Borel-Laplace resummation (not necessarily including resurgent aspects) can be found in the literature. In particular, the PhD thesis 
\cite{Bo11} offers a well-written and short presentation of this topic (in French), while the article \cite{Sa14} contains a very thorough 
introduction. Consider also the recent work \cite{MS16}. Nonetheless, I shortly present the Borel-Laplace resummation method for the sake of self-containedness.

First recall that the space $\C[[X]]$ of formal series in the $X$ variable can be identified with the space of sequences with values in $\C$. It can also be built as a completion of the space of \ty{polynomials} $\C[X]$. $\C[[X]]$ has a ring structure given by the {\bf Cauchy product}, or {\bf convolution product}.
\begin{equation*}
 \left(\sum_{n=0}^{+\infty}a_nX^n\right)\star \left(\sum_{n=0}^{+\infty}a_nX^n\right):=\sum_{N=0}^{+\infty}\left(\sum_{k=0}^Na_{N-k}b_k\right)X^N.
\end{equation*}
Then $X\mathbb{C}[[X]]$ is a sub-ring of $\mathbb{C}[[X]]$.
\begin{defn} \label{def:formal_Borel_tsfm}
The {\bf formal Borel transform} is defined on formal series as
 \begin{eqnarray*}
             \mathcal{B}: (z^{-1}~\mathbb{C}[[z^{-1}]],.) & \longrightarrow & (\mathbb{C}[[\zeta]],\star) \\
  \tilde{f}(z) = \frac{1}{z}\sum_{n=0}^{+\infty}\frac{c_n}{z^n} & \longrightarrow & \hat{f}(\zeta) = \sum_{n=0}^{+\infty}\frac{c_n}{n!}\zeta^n
 \end{eqnarray*}
 extended by linearity.
\end{defn}
\begin{rk}
 The Borel transforms $\hat f$ are typically written as series of a formal variable $\zeta$ (or other letters of the greek alphabet). This is to clarify whether we are working in the usual or direct space (latin letters) or in the Borel plane (i.e. after a Borel transform). Also note that the Cauchy product is written a dot . in the direct space, and as $\star$ in the Borel plane, for the same reason.
\end{rk}
The formal Borel transform enjoys many useful properties, easy to prove by manipulation of formal series (see for example \cite{Sa14}, 
\S 4.3 and 5.1). 
\begin{prop}\label{prop:properties_formal_Borel_tsfm_conv}
 The Borel transform is morphism of rings, and in particular a morphism of algebras for the Cauchy products. Furthermore, let $\tilde{f}(z),\tilde{g}(z)\in z^{-1}~\mathbb{C}[[z^{-1}]]$ be two formal series and $\hat{f},\hat{g}\in\mathbb{C}[[\zeta]]$ be their Borel transforms. If $\hat{f}$ and $\hat{g}$ are convergent then $\mathcal{B}(\tilde f.\tilde g)$ is also convergent and 
 \begin{equation*}
  \mathcal{B}(\tilde f.\tilde g)(\zeta) = (\hat f\star \hat g)(\zeta) = \int_0^\zeta\hat f(\eta) \hat g(\zeta-\eta)d\eta
 \end{equation*}
 for $\zeta$ in the intersection of the convergence domains of $\hat f$ and $\hat g$.
\end{prop}
The integral representation above of $(\hat f\star \hat g)(\zeta)$ justifies that we call $\star$ the ``convolution product'' in the Borel plane.

The Borel transform also preserves the differential structure of the ring of formal series:
\begin{prop} \label{prop:properties_formal_Borel_tsfm}
 Let $\tilde{f}(z),\tilde{g}(z)\in z^{-1}~\mathbb{C}[[z^{-1}]]$ be two formal series and $\hat{f},\hat{g}\in\mathbb{C}[[\zeta]]$ be their Borel 
 transforms. Then the following hold
 \begin{itemize}
%   \item $\mathcal{B}(\tilde f.\tilde g) = \hat f\star \hat g$;
  \item $\mathcal{B}(\partial\tilde f) = -\zeta\hat f$;
  \item $\mathcal{B}(z^{-1}\tilde f) = \int\hat f$;
  \item if $\tilde{f}(z)\in z^{-2}~\mathbb{C}[[z^{-1}]]$, then $\mathcal{B}(z\tilde f)=\frac{d\hat f}{d\zeta}$;
 \end{itemize}
 where the derivatives and the integral are formal (i.e. defined term by terms).
 These properties stay true in the case where $\hat f$ and  $\hat g$ are convergent with the usual derivative and integration.
%  In this case, the first property becomes
%  \begin{equation*}
%   \mathcal{B}(\tilde f.\tilde g)(\zeta) = (\hat f\star \hat g)(\zeta) = \int_0^\zeta\hat f(\eta) \hat g(\zeta-\eta)d\eta
%  \end{equation*}
%  for $\zeta$ in the intersection of the convergence domains of $\hat f$ and $\hat g$.
\end{prop}
We will in fact study the case where the Borel transform is convergent. There exists a simple necessary and sufficient 
condition of the convergence of the Borel transform, but we need one more definition.
\begin{defn} \label{def:Gevrey}
 A formal series $\tilde f(z) = \frac{1}{z}\sum_{n=0}^{+\infty}\frac{a_n}{z^n}$ is {\bf 1-Gevrey} if
  \begin{equation*}
   \exists A,B>0:~|a_n|\leq AB^n n!~~\forall n\in\N.
  \end{equation*}
  In this case, we write $\tilde f(z)\in z^{-1}\C[[z^{-1}]]_1$.
\end{defn}
The following theorem is a standard exercise to prove (see for example \cite{Sa14}, \S 4.2) but turns out to be rather important for the applications we have in mind.
\begin{theo} \label{thm:Gevrey_give_conv_Borel}
 Let $\tilde{f}(z)\in z^{-1}~\mathbb{C}[[z^{-1}]]$ be a formal series. Its Borel transform has a {strictly positive, possibly infinite} radius of convergence (in this case 
 we write $\hat f\in\C\{\zeta\}$) if and only if $\tilde f$ is 1-Gevrey.
\end{theo}
The Borel transform provides a tool to extend the usual notion of convergent series thanks to the existence of its inverse: the Laplace transform.
\begin{defn} \label{def:Laplace_tsfm}
 Let $\theta\in[0,2\pi[$ and set $\Gamma_\theta:=\{Re^{i\theta},R\in[0,+\infty[\}$. Let $\hat f\in\C\{\zeta\}$ be a germ admitting an 
 analytic continuation in an open subset of $\C$ containing $\Gamma_\theta$ and such that 
 \begin{equation} \label{eq:exp_bound}
  \exists c\in\R,~K>0:|\hat f(\zeta)|\leq K e^{c|\zeta|}
 \end{equation}
 for any $\zeta$ in $\Gamma_\theta$. Then the {\bf Laplace transform } of $\hat f$ in the direction $\theta$ is defined as
 \begin{equation*}
  \mathcal{L}^{\theta}[\hat f](z) = \int_0^{e^{i\theta}\infty}\hat f(\zeta)e^{-\zeta z}d\zeta.
 \end{equation*}
\end{defn}
The Laplace integral of this definition is finite for $z$ in an open subset of $\C$ to be specified later on. For the time being, let us say that
the composition of the Laplace and the Borel transforms is the so-called Borel-Laplace resummation method.
\begin{defn} \label{defn:Borel_Laplace}
 Let $\theta\in[0,2\pi[$, and $\tilde f(z)\in z^{-1}\C[[z^{-1}]]_1$ such that the Laplace transform of its Borel transform exists in the 
 direction $\theta$. Then $\tilde f(z)$ is said to be {\bf Borel summable} in the direction $\theta$. 
 
 The {\bf Borel-Laplace resummation operator} in the direction $\theta$ is defined on the functions Borel summable in the direction $\theta$ 
 as 
 \begin{equation*}
  S_{\theta} = \mathcal{L}^{\theta}\circ\mathcal{B}.
 \end{equation*}
 For a Borel summable formal series $\tilde f$, the function $z\mapsto S_\theta[f](z)$ is called the {\bf Borel sum} of $\tilde f$.
\end{defn}
Varying the direction $\theta$ of the resummation leads to interesting concepts and phenomena such as sectorial resummation and the 
Stokes phenomenon, however we will not {deal with} them here. However, I would like to point out that understanding and computing (in some sense) the Stokes phenomenon is one of the original motivations for resurgence theory. In particular, the now relatively famous alien calculus has been devised for this very purpose in \cite{Ecalle81,Ecalle81b,Ecalle81c}.
\begin{rk}
 It is easy to see that a formal series $\tilde f(z)\in z^{-1}\C[[z^{-1}]]$ with a finite non-zero radius of convergence has a Borel transform admitting
 an exponential 
 bound \eqref{eq:exp_bound} for all $\theta\in[0,2\pi[$ and that its the Borel sum in any direction coincide with the usual summation of 
 series. Thus the Borel-Laplace resummation method is an extension of the usual summation of series.
\end{rk}
We claimed in Definition \ref{defn:Borel_Laplace} that the Borel sum of a Borel summable function is a function. This is a consequence of 
the following theorem, which is itself a consequence of classical results of the theory of the Laplace transformation.
\begin{theo} \label{thm:analyticity_domain_Borel_resum}
 Let $\tilde f$ be a formal series, Borel summable in the direction $\theta$ with the exponential bound \eqref{eq:exp_bound}:
 \begin{equation*}
  \exists c\in\R,~{K>0}:|\hat f(\zeta)|\leq K e^{c|\zeta|}.
 \end{equation*}
 Then its Borel sum is analytic as a function of $z$ in the half-plane $\Re(ze^{i\theta})>c$.
\end{theo}
We have seen that one can perform the Borel-Laplace resummation method in non-singular directions of the Borel transform only. However, in 
many problem of interest (in particular, of interest to physicists), the Borel transform will have singularities in the direction where we wish to perform 
the resummation. \'Ecalle's solution to this important problem is to extend further the Borel-Laplace resummation method to the so-called resurgent functions, which we now introduce.

\subsection{Resurgent functions} \label{subsec:res_fct}

Let $\Omega$ be a non-empty, discrete and closed subset of $\C^*$. Recall that a function (or a germ) $f$ is {\bf analytically extensible} along a path $\gamma$ in $\C$ starting within the disc of convergence of the function if there is a finite 
family $(D_i)_{i\in \{1,\cdots,n\}}$ of convex open subset of $\C$ covering $\gamma$ such that $f$ is analytically extensible to $D\cup D_1\cup\cdots\cup D_n$. 

Notice that being analytically extensible along a path is much less strict than being analytically extensible. In particular, a function analytically extensible along a family of paths can be seen as a function over an open subset of a cover of $\C$ rather than a function of $\C$. 
\begin{defn} \label{defn:resum_function}
 A germ $\phi\in\C\{\zeta\}$ is said to be an {\bf $\Omega$-resurgent function} if it is analytically extensible along 
  any rectifiable (i.e. of finite length) path in $\C\setminus\Omega$. We set
  \begin{equation*}
   \widehat{\mathcal{R}}_\Omega := \{\text{all $\Omega$-analytically extensible}\} \subset\C\{\zeta\}.
  \end{equation*}
\end{defn}
A few important remarks are in order regarding this definition.
\begin{rk}
\begin{itemize}
 \item Technically speaking, this is the definition of \emph{minors} of resurgent functions. A resurgent function, defined in its full analytical glory, is a class of functions analytically extensible along any rectifiable paths. Working with this more general notions of resurgent functions has some advantages. In particular, it allows us to deal with the case where $0\in\Omega$. For a \ty{detailed} presentation of this subtle subjet we refer the readers to \cite{Sa14} and \cite{MS16}. The text \cite{Fa20} also contains a concise but enlightning presentation of this topic.
 
 \item It will be clear in a moment that the above definition is actually stronger than what we need. For our purpose, it would be \ty{enough to} have germs analytically extensible along paths that are rectifiable to paths not circling back around a element of $\Omega$. These objects, which could be called \emph{forward resurgent} have not, to the best of my knowledge, been studied. I do not use it and only mention them in passing since I will use important analytical results about resurgent functions that do not necessarily exist for these forward resurgent functions.
\end{itemize} 
\end{rk}
Now, the convolution product of two $\Omega$-resurgent function is well-defined inside the intersection of their convergence discs. A difficult question 
is whether or not this convolution product defines an $\Omega$-resurgent function. 
The following theorem is a cornerstone of resurgence theory, as it states when this is indeed the case 
and thus {when} resurgent functions are stable under an extension of the convolution 
product {and are therefore suited} to the study of non-linear differential equations.
\begin{theo}({Ecalle}, Sauzin \cite{Sa14}[Theorem 21.1]) \label{thm:stability_resu_fct}\\
 Let $\Omega\subset\C$ be non-empty, discrete and closed. Then $\widehat{\mathcal{R}}_\Omega$ is stable under the convolution product if, and only if, 
 $\Omega$ is closed under addition.
\end{theo}
The {classical} example below is already enough to show that $\Omega$ being closed under addition is a necessary condition. The hard part of the Theorem is thus to 
show that it is sufficient.
\begin{example} \label{ex:comvo_resu}
Take $\omega_1,\omega_2\in\Omega$ {non zero and not proportional} and two meromorphic (and therefore resurgent) functions defined by
 $\hat{f}_1(\zeta) = \frac{1}{\zeta-\omega_1}$, $\hat{f}_2(\zeta) = \frac{1}{\zeta-\omega_2}$.
 Then a direct computation gives
 \begin{align*}
  (\hat{f}_1\star\hat{f}_2)(\zeta):= & \int_0^\zeta \hat{f}_1(\eta)\hat{f}_2(\zeta-\eta)d\eta \\
                                  = & \frac{1}{\zeta-\omega_1-\omega_2}\left[\int_0^\zeta\frac{d\eta}{\eta-\omega_1} + \int_0^\zeta\frac{d\eta}{\eta-\omega_2}\right] \\
 \end{align*}
 {Now, take $R$ the Riemann surface obtained from removing the lines $\omega_i[1,+\infty[$ from $\C$.}
 One can {then} check that the R.H.S. has indeed a pole in $\omega_1+\omega_2$ {on some sheets of the Riemann surface $R$}. Therefore if $\omega_1+\omega_2$ is not an element of $\Omega$, $\hat{f}_1\star\hat{f}_2$
 is not $\Omega$-resurgent. {For more details, see for example \cite[Section 20]{Sa14}.}
\end{example}
%  For $\Omega\subset\C$ non-empty, discrete and closed we set $\rho(\Omega):=\min\{|\omega|:\omega\in\Omega^*\}$, with $\Omega^*=\Omega$ if 
%  $0\notin\Omega$ (this will be the case we will work with) and $\Omega^*=\Omega\setminus\{0\}$ otherwise.
We will use bounds on convolution products of resurgent functions. First, recall that for any open set $U\subset\C$ containing the origin and 
star-shaped with respect to the origin, the following bound holds by direct computation:
\begin{equation} \label{eq:bound_star_shaped}
 |(\hat\phi_1\star\cdots\star\hat\phi_n)(\zeta)|\leq \frac{|\zeta|^{n-1}}{(n-1)!}\max_{[0,\zeta]}|\hat\phi_1|\cdots\max_{[0,\zeta]}|\hat\phi_n|
\end{equation}
for any $\hat\phi_1,\cdots,\hat\phi_n$ holomorphic on $U$ and $\zeta$ in $U$. We used $[0,\zeta]$ to denote the straight line between $0$ and $\zeta$.

This bound will be useful to show that the two-point function has the right type of bound at infinity on {$\mathcal{U}_\Omega$, the connected star-shaped domain in $\C$ obtained from placing a radial cut starting from the first singularities of $\Omega$,} and converges near the 
origin. However, it will not allow {us} to prove that it is resurgent. For this we will need to prove the normal convergence of a series of analytic 
continuations of functions along any paths avoiding $\Omega$. It will require the refined results of \cite{Sa12}, specific to resurgence 
theory. In order to state {the main} result, we need to introduce some notations, the same as in \cite{Sa12}.

First, for $\Omega\subseteq\C^*$ non-empty, discrete and closed as before, let us set
$\rho(\Omega):=\min\{|\omega|:\omega\in\Omega\}$.
Second, let $\mathscr{S}_\Omega$ be the set of homotopy classes with fixed endpoints of path $\gamma:[0,l]\longrightarrow\C\setminus\Omega^*$ 
such that $\gamma(0)=0$. Then, for $\delta,L\geq0$ we set 
\begin{equation*}
 \calK_{\delta,L}(\Omega) := \{\zeta\in\scrS_\Omega|\exists\gamma\in\scrS_\Omega:\gamma(l)=\zeta,~\gamma\text{ of length }\leq L, \text{ dist}(\gamma(t),\Omega^*)\geq\delta~\forall t\in[0,l]\}.
\end{equation*}
It was shown in \cite{KaSa16} that $\mathscr{S}_\Omega$ has the structure of a Riemann surface, which is a cover of $\C\setminus\Omega$. 
Then $\calK_{\delta,L}(\Omega)$ can be described as the set of point of this cover which can be reached by paths of length less than 
$L$\footnote{in order to avoid confusion between the kinematic parameter of the two-point function and the length of the path we will 
denote the former by the letter $\Lambda$.} and staying at a distance at least $\delta$ of $\Omega^*$. One 
can in particular see the set of $\Omega$-resurgent functions as the set of locally integrable maps $f:\mathscr{S}_\Omega\longrightarrow\C$. This 
observation will become important to define the notion of average. But what we will mostly use is the following analytical bound for the convolution of resurgent functions.
\begin{theo} \label{thm:bound_resu_Sauzin} \cite[Theorem 1]{Sa12}\\
 Let $\Omega\subset\C$ be discrete, closed and stable 
 under addition. Let $\delta,L>0$ with 
 $\delta<\rho(\Omega)$. 
 Set
 \begin{equation*}
  C:=\rho(\Omega)\exp\left(3+\frac{6L)}{\delta}\right), \qquad \delta':=\frac{1}{2}\rho(\Omega)\exp\left(-2-\frac{4L}{\delta}\right), \qquad L':=L+\frac{\delta}{2};
 \end{equation*}
  Then, for any any $n\geq1$ and $\hat\phi_1,\cdots,\hat\phi_n\in\widehat{\mathcal{R}}_\Omega$
 \begin{equation} \label{eq:bound_conv_resu}
  \max_{\calK_{\delta,L}(\Omega)}|\hat\phi_1\star\cdots\star\hat\phi_n| \leq \frac{2}{\delta}\frac{C^n}{n!} \max_{\calK_{\delta',L'}(\Omega)}|\hat\phi_1|\cdots\max_{\calK_{\delta',L'}(\Omega)}|\hat\phi_n|.
 \end{equation}
\end{theo}
In subsequent work \cite{KaSa16}, Sauzin and Kamimoto have generalised this result to the cases where $\Omega$ is not stable under addition. One could 
 in principle use the result of \cite{KaSa16} to prove resurgence of the two-point functions on $\Z^*/3$ rather than $\N^*/3$, 
 However this is not needed for the Ecalle-Borel resummation procedure along the positive real axis, and we will satisfy ourselves with {exploiting} the above 
 bound, which is simpler to use.
 
 \subsection{Borel-\'Ecalle resummation method} \label{subsec:BE_res}
 
 We have already pointed out that our goal is to apply an extention of the Borel-Laplace resummation method to a two-point function. We now present this method, which was described in particular in \cite{EcMe95} and \cite{Menous}.
 
 As already pointed out, we don't need in practice to consider path going backward to perform a Borel-\'Ecalle resummation. To simplify the statements we take from now on $\Omega$ to be 
a subset of $\R_+^*$.
\begin{defn} \label{def:ramified_plane}
 Let $\C\dsetminus\Omega$ be the {\bf$\Omega$-ramified plane}, namely the space of homotopy classes $[\gamma]$ of rectifiable paths 
 $\gamma:[0,1]\mapsto\C\setminus\Omega$ such that $\forall t,t'\in[0,1],~t<t'\Rightarrow \Re(\gamma(t))< \Re(\gamma(t'))$.
\end{defn}
One can show that $\C\dsetminus\Omega$ has the structure of a Riemann surface, see \cite{KaSa16}. $\C\dsetminus\Omega$ is a cover of 
$\C\setminus\Omega$. Let $\pi:\C\dsetminus\Omega\longrightarrow\C\setminus\Omega$ be the 
canonical local biholomorphism associated to this Riemann surface. The precise definition of this geometrical object can be found in \cite[Section 3]{KaSa16}. I omit these definitions as they will play only a minor role in the present work.

Let $\zeta\in\C\setminus\Omega$ and $\underline\zeta\in\C\dsetminus\Omega$ such that $\pi(\underline{\zeta})=\zeta$. 
If $\Omega=\{\omega_1,\omega_2,\cdots\}\subset\R_+^*$ with $\omega_0:=0<\omega_1<\omega_2<\cdots$, we write $\zeta^{\epsilon_1,\cdots,\epsilon_n}$ 
instead of $\underline\zeta$, with 
$\epsilon_i\in\{+,-\}$, $(\epsilon_1,\cdots,\epsilon_n)$ the signature of the branch of $\C\dsetminus\Omega$ on which 
$\zeta^{\epsilon_1,\cdots,\epsilon_n}$ stands and 
$|\pi(\zeta^{\epsilon_1,\cdots,\epsilon_n})|\in]\omega_n,\omega_{n+1}[$. If $\epsilon_i=+$ (resp $-$), we avoid $\omega_i$ by going to the right (resp. to the left).

We further write $\mathcal{U}_\Omega$ the connected star-shaped domain in $\C$ obtained from placing a radial cut starting from $\omega_1$, the first singularity of $\Omega$. We identify $\mathcal{U}_\Omega$ with a subset of $\C\dsetminus\Omega$: for $\zeta\in\mathcal{U}_\Omega$, if $\Re(\zeta)\leq\omega_1$, we identify it with the homotopy class of the straight path from the origin to $\zeta$. Otherwise we identify it with $\zeta^{+,\cdots,+}$ if $\Im(\zeta)>0$ and with $\zeta\leftrightarrow\zeta^{-,\cdots,}$ if $\Im(\zeta)<0$. Notice that in the more common case where $\Omega\subset\R^*$, $\mathcal{U}_\Omega$ is obtained from placing two radial cuts at the first singularities of $\Omega$.

From now one we will make the simplifying assumption that $\Omega=\omega\N^*$ 
for some $\omega\in\R_+^*$. 

While performing a Borel-\'Ecalle resummation, it will be useful to see $\Omega$-resurgent functions as locally integrable functions from 
$\C\dsetminus\Omega$ to $\C$.
We also denotes by 
$\widehat{\mathcal{U}}_\Omega$ the set of uniform functions on $\C\dsetminus\Omega$; i.e. the set of functions $\hat\phi$ whose value at $\zeta$ do not depend on 
the branch of $\C\dsetminus\Omega$ $\zeta$ sits on:
\begin{equation*}
 \forall (\zeta,\zeta')\in(\C\dsetminus\Omega)^2,~\pi(\zeta)=\pi(\zeta')\Longrightarrow\hat\phi(\zeta)=\hat\phi(\zeta').
\end{equation*}
The crucial object that will allow us to perform an Borel-Laplace-like resummation in the direction where the Borel transform has (isolated) singularities is the average.
\begin{defn} \label{def:averages}
 An {\bf average} is a linear map ${\bf m}:\resOm\longrightarrow\uniOm$ defined by its weights $\{{\bf m}^{\varepsilon_1,\cdots,\varepsilon_n}\}$ and its 
 action on resurgent functions: for any such $\phi$ and any $\zeta\in\C\setminus\Omega$ with $|\zeta|\in[n\omega,(n+1)\omega[$
  \begin{equation*}
   ({\bf m}\phi)(\zeta) = \sum_{\varepsilon_1,\cdots,\varepsilon_n=\pm}{\bf m}^{\varepsilon_1,\cdots,\varepsilon_n}\phi(\zeta^{\varepsilon_1,\cdots,\varepsilon_n});
  \end{equation*}
  with the coherence relations ${\bf m}^\emptyset=1$ and 
  \begin{align*}
   {\bf m}^{\varepsilon_1,\cdots,\varepsilon_{n-1},+} & + {\bf m}^{\varepsilon_1,\cdots,\varepsilon_{n-1},-} = {\bf m}^{\varepsilon_1,\cdots,\varepsilon_{n-1}}\\
   {\sum_{\varepsilon_i=\pm}{\bf m}^{\varepsilon_1,\cdots,\varepsilon_i,\cdots,\varepsilon_{n}}} & {= {\bf m}^{\varepsilon_1,\cdots,\varepsilon_{i-1},\varepsilon_{i+1},\cdots,\varepsilon_{n}} \quad\forall i\in\{1,\cdots,n-1\}}
  \end{align*}
  {with ${\bf m}^{\varepsilon_1,\cdots,\varepsilon_{i-1},\varepsilon_{i+1},\cdots,\varepsilon_{n}}$ in the last condition an average over $\Omega\setminus\{\omega_i\}$ if $\Omega=\{\omega_j|j\in\N^*\}$ with the convention $0<\omega_1<\cdots<\omega_i<\cdots$ as before.}
\end{defn}
It is a simple (but not uninteresting) exercise to check that the following are examples of averages.
\begin{example} \label{ex:averages}
  \begin{itemize}
  \item Left lateral average:
  \begin{align*}
   {\bf mul}^{\varepsilon_1\cdots\varepsilon_n} = \begin{cases}
                                                   & 1 \quad \text{ if }\varepsilon_1=\cdots=\varepsilon_n=+ \\
                                                   & 0 \quad \text{ otherwise.}
                                                  \end{cases}
  \end{align*}
  \item Median average:
  \begin{equation*}
   {\bf mun}^{\varepsilon_1\cdots\varepsilon_n} = \frac{(2p)!(2q)!}{4^{p+q}(p+q)!p!q!}
  \end{equation*}
  with $p$ (resp. $q$) the number of $+$ (resp. of $-$) in $\{\varepsilon_1,\cdots,\varepsilon_n\}$.
  \item  Catalan average: Let 
  $Ca_n$ be the $n$-th Catalan number, $Qa_n(x)$ the $n$-th Catalan polynomial, $\alpha,\beta\in\R$, $\alpha+\beta=1$. 
  
  Write $\pmb{\varepsilon}=\varepsilon_1\cdots\varepsilon_n =(\pm)^{n_1}(\mp)^{n_2}\cdots(\varepsilon_s)^{n_s}$, set 
  \begin{equation*}
   {\bf man}_{(\alpha,\beta)}^{\pmb \varepsilon} = (\alpha\beta)^nCa_{n_1}\cdots Ca_{n_{s-1}}Qa_{n_s}((\alpha/\beta)^{\varepsilon_n}).
  \end{equation*}
 \end{itemize}
\end{example}
\begin{rk}
 Other type of operators\cy{, with values in different spaces} exist and are useful. {\bf Alien operators} are linear maps ${\bf op}:\resOm\longrightarrow\resOm$. As the name suggest, the famous {\bf alien derivatives} are alien operators that are derivative for the natural derivation $\partial\hat\phi:=-\zeta\hat\phi$. We refer the reader to \cite{Menous} \S1.1.3 for a detailed presentation of these objects.
 
 \cy{We will not need them here, but it is perhaps worth \ty{mentioning} that they are used in Physics \ty{literature} to compute non \ty{perturbative} contributions to some physical processes, see for example the classical work \cite{Do19} or \cite{bellon2016alien} for an application to QFT.}
\end{rk}
The notion of average is too weak to be used as such. Indeed, we want the averaged function {\bf m}$\phi$ to have properties insuring that the resummed functions it builds exists in non-trivial cases and has the right properties. More precisely we want want it to 
\begin{itemize}
 \item solve the same equation as $\phi$;
 \item be a real function\footnote{$f:U\subset\C\longrightarrow\C$ is real if $f(\bar z) =\overline{f(z)}$ whenever both sides of the equation make sense. 
 We require this condition since we want the resummed function to represent a physical quantity.}
 \item admit a Laplace transform provided that $\phi$ had a reasonable behavior at infinity.
\end{itemize}
These requirements are formalized by the notion of well-behaved average.
\begin{defn}  \label{defn:well_behaved_averages}
 An average {\bf m} is called {\bf well-behaved} if
  \begin{itemize}
   \item {\bf (P1)} It preserves the convolution ${\bf m}(\phi\star\psi) = ({\bf m}\phi)\star({\bf m}\psi)$.
   \item {\bf (P2)} It preserves reality: ${\bf m}^{\varepsilon_1\cdots\varepsilon_n} = \overline{\bf m}^{\bar\varepsilon_1\cdots\bar\varepsilon_n}$, 
   with $\bar\pm=\mp$.
   \item {\bf (P3)} It preserves exponential growths: 
   $\forall\phi\in\resOm,\zeta\in\C\setminus\Omega$
   \begin{equation*}
   |\phi(\zeta^{\pm\cdots\pm})|\leq Ke^{c|\zeta|} ~ \Longrightarrow ~ |({\bf m}\phi)(\zeta)|\leq  Ke^{c|\zeta|}
   \end{equation*}
  \end{itemize}
\end{defn}
% \begin{rk}
     The $\zeta^{\pm\cdots\pm}$ appearing in condition {\bf (P3)} can be seen as an element of $\mathcal{U}_\Omega$. So condition {\bf (P3)} can be formulated as: the average preserves exponential growth along $\mathcal{U}_\Omega$.
%     \end{rk}
\begin{rk}
 In general the equation one is studying with the Borel-\'Ecalle resummation method is a differential equation. However, averages naturally preserve the 
 differential structure: since $\mathcal{B}(\partial_z\tilde f)(\zeta)=-\zeta\hat f(\zeta)$ and since $\zeta\mapsto-\zeta$ is in $\uniOm$, 
 $\mathcal{L}[{\bf m}\mathcal{B}(\partial_z\tilde f)](z)=\partial_z\mathcal{L}[{\bf m}\mathcal{B}(\tilde f)](z)$. We used the variable $z=1/a$ for the 
 Borel transform for simplicity.
\end{rk}
The following table lists the properties of the averages of Example \ref{ex:averages}.
\begin{center}
\begin{tabular}{|l||c|c|c|}
\hline
             & {\bf (P1)} & {\bf (P2)} & {\bf (P3)} \\ \hline
 ${\bf mul}$ & {\Checkmark} & N & \Checkmark  \\ \hline
 ${\bf mun}$ & \Checkmark & \Checkmark & N  \\ \hline 
 ${\bf man}$ & \Checkmark & \Checkmark & \Checkmark  \\ \hline
\end{tabular}
\end{center}
In particular, the fact that the Catalan average is a well-behaved average is a highly non-trivial result of \cite{Menous}. A finite number of other 
families of well-behaved averages are known. It is conjectured there are no more than the ones already known. Progresses toward a classification of 
well-behaved averages were recently made in \cite{VB19}, using methods from the theory of Rota-Baxter algebras.
\begin{rk}
     One can use an average that has only conditions {\bf (P1)} and {\bf (P2)} when one works on a problem with only one (or eventually finitely many) alien derivatives acting non-trivially on the Borel transform. This is the case in non-trivial problems, e.g. for non-linear systems of ODEs of rank one studied in \cite{Co06} or the Schwinger-Dyson equation of the Yukawa model studied in \cite{BoDu20} and subsequent work.
     
     As soon as the an infinite number of alien derivatives act non-trivially on the Borel transform, condition {\bf (P3)} is needed, as implied by Lemma 5 and Equation (40) of \cite{EcMe95}. We will see in the next Section that we are in this case.
    \end{rk}

Finally, the core of the Borel-\'Ecalle resummation method can be summed up in the following theorem, which is a summary of various statements that exist in the \ty{literature}.
\begin{theo} \label{thm:Borel_Ecalle_resummation}
 Let $(E)$ a differential equation admitting a solution $\tilde f\in\C[[a]]_1$ such that $\hat f\in\resOm$ for some $\Omega=\omega\N^*\subset\R^*_+$ 
 and $|\phi(\zeta^{\pm\cdots\pm})|\leq Ke^{c|\zeta|}$ for $|\zeta|$ big enough. Let {\bf m} be a well-behaved average. Then 
 \begin{equation*}
  f^{\rm res}:=\mathcal{L}\circ{\bf m}\circ\mathcal{B}\circ\tilde f
 \end{equation*}
 is a solution of $(E)$ analytic in the open set
 \begin{equation*}
  U =\{a\in \C: |a-c/2|<c/2\}.
 \end{equation*} 
\end{theo}

\section{The Wess-Zumino model} \label{sec:WZ}

We introduce the model we are going to study and state some known facts about it. Most of these results are well-known (in particular the derivation of 
the Schwinger-Dyson and renormalisation group equations) and they can all be found for instance in my PhD thesis \cite{Cl15}. But before, we need to have a short discussion about the Physics concepts that will play a role here. This discussion does not aim at rigor, but rather to give some intuitions for the readers that would never heard of some of the concepts that will be \ty{mentioned} later.

\subsection{Some concepts from Physics} \label{subsec:physics_concepts}

One of the most remarkable idea of modern Physics is \emph{renormalisation flow} \cite{wilson1974renormalization}. Very roughly, the idea is that a theory is modified by the scale (typically the energy) at which we investigate it. For example, quantum electrodynamics will not give the same prediction if we investigate it at the energy range of a standard lightbulb or at the energy range of a large electron-positron collider. This idea is encoded in the \emph{renormalisation group equation} (RGE for short) of the theory.

Two of the most important quantities that appear in RGEs are the \emph{beta function} and the \emph{anomalous dimension}. Both can be seen as formal series of a (formal) \ty{coupling} parameter $a$. The beta function $\beta$ encodes how the (effective, measured) coupling parameter varies when one changes the energy at which the theory is studied. The anomalous dimension $\gamma$ deals with the behaviour of the theory under scale changes $x\mapsto\lambda x$. More precisely, it encodes how these variations under rescalling change with the energy at which the theory is studied. \\

In what follows, we will use the RGE of a specific model, but we will mostly be interested in another equation: the Schwinger-Dyson equation (SDE for short) of the model. SDEs were introduced in \cite{dyson1949s} and \cite{schwinger1951}. They can be described as equations of motion for Green functions of the the theory, but in my opinion they are better explained graphically. 

Classically, if a particle goes from a point A to a point B, it follows a path joining these two points, represented by a line. But in the quantum world, the particle goes simultaneously through all paths between A and B. So if we were to ``zoom on'' the line describing the classical path, quantum effects would manifest themselves as branchings. These represent paths where the particle splits into two or more, that then recombine together to give back one particle. Such processes are allowed in QFTs with interactions. Taking into account these branchings and the loops they create require to regularise and renormalise the theory, topics that were discussed at length in Chapters \ref{chap:PROP} and \ref{chap:loc} of this thesis.

However, SDEs use another idea. Zooming on the classical path, one sees quantum paths appear. But zooming again, we see even more paths appear. Physically, this correspond to going to higher order of the \ty{perturbation} theory. But one can easily convince oneself that while zooming, the same structures will keep appearing. So the Green functions of the theory have some kind of self-similarity structure, somewhat like fractals. Expressing this allow to write equations between the Green functions of the theory. These equations are the SDEs. This point of view allow to formulate SDEs as a purely combinatorial problem. With this approach L. Foissy could build combinatorial solutions of these equations in \cite{foissy2014general}. Combinatorial studies of SDEs is still an active field of research, see for example the recent article \cite{balduf2024tubings}. \\

The last concept from Physics that will be \ty{mentioned} below is Supersymmetry. Again, very roughly speaking, supersymmetric models are theories where each type of particle has a supersymmetric partner with opposite statistics. Therefore fermionic particles have bosonic partners, and vice-versa. Supersymmetry solves some issues of modern Physics, but has the drawback that no supersymmetric partners could ever be observed so far. For mathematical physicists, supersymmetric theories can be easier to handle and are therefore a good testing ground for the tools we wish to apply. This is exactly the case of the model that we will consider now.

\subsection{The model}

The Wess-Zumino model is one of the simplest possible supersymmetric model: it is massless and exactly supersymmetric. It was first introduced and 
studied in the papers \cite{WZ74a,WZ74b}, seminal to supersymmetry. This model has two features that make it suitable as a first QFT to study 
within the framework of resurgence theory. 

First, its beta function $\beta$ and anomalous dimensions $\gamma$ functions are proportional: $\beta=3\gamma$. This can in particular be shown using Hopf-algebraic techniques. It also presents the 
striking feature that it needs no vertex renormalisation, due to its (exact) supersymmetry. In words, the contributions for divergent Feynman graphs with three external legs cancel out. This is because supersymmetry imposes that for any such graph $G$, there is another graph $G^{\rm super}$ where the particles in $G$ are replaced by their supersymmetric partners, and that their evaluations exactly cancel each other: $F_{\rm WZ}(G)+F_{\rm WZ}(G^{\rm super})=0$, with $F_{\rm WZ}$ the Feynman rules of the Wess-Zumino model (see the introduction of Chapter \ref{chap:PROP}).

This implies that the SDE for the 
two-point function $G(L,a)$, truncated to the first loop, actually decouples from the SDEs for higher point functions. It reads
\begin{equation}\label{eq:SDnlin}
\left(
\tikz \node[prop]{} child[grow=east] child[grow=west];
\right)^{-1} = 1 - a \;\;
\begin{tikzpicture}[level distance = 5mm, node distance= 10mm,baseline=(x.base)]
 \node (upnode) [style=prop]{};
 \node (downnode) [below of=upnode,style=prop]{}; 
 \draw (upnode) to[out=180,in=180]   
 	node[name=x,coordinate,midway] {} (downnode);
\draw	(x)	child[grow=west] ;
\draw (upnode) to[out=0,in=0] 
 	node[name=y,coordinate,midway] {} (downnode) ;
\draw	(y) child[grow=east]  ;
\end{tikzpicture}.
\end{equation}

The other equation we are going to study is the RGE briefly discussed above. 
It takes the particularly simple form
\begin{equation} \label{eq:RGE}
 \partial_LG(L,a) = \gamma(a)(1+3a\partial_a)G(L,a)
\end{equation}
with $\gamma(a):=\partial_L {G(L,a)}|_{L=0}$ the anomalous dimension of the theory. In this equation, $a$ is the coupling constant of the theory and $L=\ln(p^2/\mu^2)$ is a kinematic parameter. $p^2$ is the impulsion going through the graphs in Equation \eqref{eq:SDnlin} and $\mu^2$ is the reference impulsion: the one at which the experiments are performed. Notice that the RGE \eqref{eq:RGE} relates the variations of the two-point function $G(L,a)$ under changes of the kinematical parameter and of the coupling constant. \\

As already stated, the goal of this chapter is to rigorously prove the Borel-\'Ecalle summability of the solution of the system \eqref{eq:SDnlin}-\eqref{eq:RGE}. For this, we will start from results of articles \cite{BC18,BC14,BC15} and use in particular analytical tools for resurgent functions developed in \cite{Sa12,Sa14}. This should be seen as a proof of concept that resurgent analysis allows us to prove summability for QFTs. However, other methods also allow to analyse summability of PDEs, see for example \cite{Co08,CoTa07} and references therein. These other methods could also be used in future studies of other QFT models. Let us now state the known results about the Wess-Zumino model.

Writing $G$ as a formal series in $L$
\begin{equation} \label{eq:L_expansion_G}
 G(L,a) = \sum_{k=0}^{+\infty}\gamma_k(a)\frac{L^k}{k!},
\end{equation}
(with $\gamma_0(a)=1$ and $\gamma_1(a)=:\gamma(a)$)
we can easily write the RGE \eqref{eq:RGE} as an induction relation on the $\gamma_k$s:
\begin{equation} \label{eq:recursion_gamma}
 \gamma_{k+1}(a) = {\gamma(a)}(1+ 3a\partial_a)\gamma_k.
\end{equation}
This justifies that we look for an equation {for} $\gamma$ rather than an equation {for} $G$. Plugging the expansion \eqref{eq:L_expansion_G} into the 
Schwinger-Dyson equation \eqref{eq:SDnlin} and computing the Feynman integral we obtain
\begin{equation} \label{SDE}
 \gamma(a) = \left.a\left(1+\sum_{n=1}^\infty\frac{\gamma_n(a)}{n!}\frac{d^n}{dx^n}\right)\left(1+\sum_{m=1}^\infty\frac{\gamma_m(a)}{m!}\frac{d^m}{dx^m}\right)H(x,y)\right|_{x=y=0},
\end{equation}
with $H$ the one-loop Mellin transform:
\begin{align}
 H(x,y) & = \frac{\Gamma(1+x)\Gamma(1+y)\Gamma(1-x-y)}{\Gamma(1-x)\Gamma(1-y)\Gamma(2+x+y)} \label{eq:Mellin0} \\
	& = \frac{1}{1+x+y}\exp\Bigl(2\sum_{k=1}^{+\infty}\frac{\zeta(2k+1)}{2k+1}\left((x+y)^{2k+1}-x^{2k+1}-y^{2k+1}\right)\Bigr). \label{eq:Mellin}
\end{align}
We will study the Borel transform of {Equation \ref{SDE}}. It maps the usual product of formal series to a convolution product and the identity function to the 
constant function $\zeta\mapsto1$. This suggests to separate the $1$ in the equation above from the rest: 
\begin{equation} \label{eq:SDE_Borel}
 \hat\gamma(\zeta) = 1 + \left.2\sum_{n=1}^\infty \frac{(1\star\hat\gamma_n)(\zeta)}{n!}\frac{d^n}{dx^n}H(x,y)\right|_{x=y=0} + \left.\sum_{n,m=1}^\infty \frac{(1\star\hat\gamma_n\star\hat\gamma_m)(\zeta)}{n!m!}\frac{d^n}{dx^n}\frac{d^m}{dy^m}H(x,y)\right|_{x=y=0}.
\end{equation}
Similarly, taking the Borel transform of the renormalisation group equation \eqref{eq:recursion_gamma} one obtains
\begin{equation} \label{eq:RGE_borel}
 \hat\gamma_{n+1}(\zeta) = \hat\gamma\star(4+3\zeta\partial_\zeta)\hat\gamma_n(\zeta).
\end{equation}
Now, $\gamma(a)$ is a formal series with coefficients in $\C$, without constant term:
\begin{equation*}
 \gamma(a) = \sum_{n=1}^{+\infty} c_n a^n.
\end{equation*}
{The asymptotic behavior of the coefficients $c_n$ was found in \cite[Equation (18)]{Be10}. The asymptotic behavior of the coefficients $c_n$ is:
\begin{equation} \label{eq:asymp_behavior_cn}
 c_{n+1} = -(3n+2+\mathcal{O}(n^{-1}))c_n.
\end{equation}
Furthermore, one easily check that the first terms of this expansion are given by $c_1=1$ and $c_2=-2$. From this result we readily derive the following handier bound.
\begin{lemma} \label{lem:bounds_cn}
 For any $n\in\N^*$, the following bounds hold
 \begin{equation*}
  (3\delta)^{n-1}(n-1)!\leq |c_n| \leq (3K)^n n!
 \end{equation*}
 for some $K>1$ and $\delta\in]0,1]$.
\end{lemma}
\begin{proof}
 The proof is by induction. The case $n=1$ holds since $c_1=1$. Assuming that both inequalities hold for $n\in\N^*$, we first have
 \begin{equation*}
  |c_{n+1}| = |3n+2+\mathcal{O}(n^{-1})||c_n| \leq 3K(n+1)|c_n|
 \end{equation*}
 provided $K$ has been chosen large enough. The upper bound of $|c_{n+1}|$ then follows from the upper bound of $|c_n|$. For the lower bound, one 
 writes
 \begin{equation*}
  |c_{n+1}| = |3n+2+\mathcal{O}(n^{-1})||c_n| 
  \geq 3n\delta|c_n|
 \end{equation*}
 (provided $\delta$ has been chosen small enough) and the lower bound of $|c_{n+1}|$ then follows from the lower bound of $|c_n|$.
\end{proof}
So in particular, $\gamma$ is 1-Gevrey. We will also need to use that it is resurgent. This was argued in \cite{BC15} with a level of rigor that is acceptable for physicists's work. However, I cannot truly say that it was proven in a rigorous mathematical sense, and I do not want to spend time and effort to give a full proof, so I state this result as a claim rather than a theorem.
\begin{claim} \label{thm:resurgence_gamma}
 $\hat\gamma$ is $\Z^*/3$-resurgent.
\end{claim}
The argument goes as follows: show that the two-point function is $1$-Gevrey (we will prove this rigorously below), so that its Borel transform is analytic in a disc around the origin. Then make the ansatz that $\widehat{G}(\zeta,L)$ can be \ty{parametrised} as a loop integral as follows
\begin{equation*}
 \widehat{G}(\zeta,L)=\oint f(\zeta,\xi)e^{3\xi L}\frac{d\xi}{\xi}.
\end{equation*}
Plug this into the Schwinger-Dyson equation, and study how the integration contour can be deformed. It turns out, that at some points it cannot and this implies that at this points, $\hat\gamma$ has singularities.

This proof is not completely rigourous because it assumes that the function $f(\zeta,\xi)$ can be extended beyond the original convergence domaine of $\widehat{G}(\zeta,L)$ and no theorem (that I know of) allow to easily show that. I do believe that the proof could be made completely rigorous, either using general method of complex analysis, or the more specialised \cite{Co08,CoTa07}. Another, quite exciting approach would be to use resurgent monomials to obtain this result. One could look for solutions of the SDE and the RGE in terms of these functions, which are resurgent and suited for resurgent analysis. We refer the reader to the preprint \cite{Fa20} for a clear and rather complete introduction to this new and exciting topic.

\section{Resurgent analysis of the RGE}

We shall now deeply dive into resurgent analysis, and starting from the results stated above, we show that the two-point function is resurgent.

\subsection{Solution of the renormalisation group equation}

We want to study the two-point function $G(L,a)$ as a formal series in $a$. We first show that $G(L,a)$ is indeed such a formal series thanks 
to the following lemma.
\begin{lemma}
 For any $L\in\C$; the formula \eqref{eq:L_expansion_G} defines a formal series in $a$ with coefficients depending on $L$.
\end{lemma}
\begin{proof}
 Since $\gamma_0(a)=1$ by definition and $\gamma_1(a)=\gamma(a)$ lies in $a\C[[a]]$ as a result of \cite{BeLoSc07}, we obtain from
 \eqref{eq:recursion_gamma} with a trivial induction that, for any $k\in\N$, $\gamma_k(a)\in a^k\C[[a]]$.
 Then, for $n\geq1$, contributions to $a^n$ in $G(L,a)$ can only come from $\gamma_1(a),\cdots,\gamma_n(a)$ and their sum is therefore finite.
\end{proof}
The fact that we had to make this small manipulation indicates that the expansion \eqref{eq:L_expansion_G} is not suited to the study of $G(L,a)$ as a
formal series in $a$. Instead we will use the two-point function written as a formal series in $a$.
\begin{equation} \label{eq:G_a_exp}
 G(L,a) = \sum_{n=0}^{+\infty} g_n(L)a^n\in A[[a]]
\end{equation}
with $A$ some suitable algebra of smooth functions or formal series. The precise nature of $A$ is given by the following Proposition:
\begin{prop} \label{prop:solution_RGE}
 The renormalisation group equation \eqref{eq:RGE} admits a solution of the form \eqref{eq:G_a_exp}, with $A=\C[L]$, explicitly given by $g_0(L)=1$ and
 \begin{equation} \label{eq:ansatz_sol_RGE}
  g_n(L) = \sum_{q=1}^n\left(\sum_{\substack{i_1,\cdots,i_q>0\\i_1+\cdots+i_q=n}}c_{i_1}\cdots c_{i_q}K_{i_1\cdots i_q}\right)\frac{L^q}{q}
 \end{equation}
 with the $c_n$ the coefficients of $\gamma(a)$ and $K_{i_1\cdots i_q}$ real numbers inductively defined for any $n\in\N$ and $q\in\{2,\cdots,n+1\}$ by $K_{n}=1$ and
 \begin{equation*}
  K_{i_1\cdots i_q} = (1+3(n+1-i_q))K_{i_1\cdots i_{q-1}}
 \end{equation*}
 with $i_1+\cdots +i_q=n+1$.
\end{prop}
\begin{proof}
 First, observe that the SDE \eqref{eq:SDnlin} taken at $a=0$ gives $G(L,0)=g_0(L)=1$. Furthermore, the RGE \eqref{eq:RGE} implies the following 
 family of differential equations (with $n\geq1$) when one replaces $G(L,a)$ by its representation \eqref{eq:G_a_exp}
 \begin{equation*}
  g_n'(L) = \sum_{p=1}^n c_p(1+3(n-p))g_{n-p}(L).
 \end{equation*}
 Notice that at this stage the derivative can be the derivative of a function or the derivative of formal series.
 
 We now prove that these equations are solved as claimed by \eqref{eq:ansatz_sol_RGE} by induction. For the case $n=1$, the equation reduces to
 $g_1'(L)=1$ since $c_1=1=g_0(L)$. This is solved to $g_1(L)=L$ since by the expansion \eqref{eq:G_a_exp}, $G(L,a)$ has only $1=g_0(L)$ as a term 
 independent of $L$. We thus find $K_1=1$ as claimed. 
 
 It will be important for the induction step to have performed the case $n=2$. Observing that $g_1(L)=c_1L$ since $c_1=1$ we find for $g_2$ the 
 equation $g'_2(L) = c_1(1+3(2-1))c_1L + c_2$. This integrates to 
 \begin{equation*}
  g_2(L) = (c_1)^2(1+3(2-1))\frac{L^2}{2} + c_2L
 \end{equation*}
 without constant term for the same reason than the case $n=1$ treated above. We then find $K_2=1$ and $K_{11}=(1+3(2-1))\frac{K_1}{2}$ 
 as claimed.
 
 Let us now assume that the statement of the proposition holds for $n\geq2$. Writing aside the term $p=n+1$, integrating and switching the sum over 
 $q$ by one we find
 \begin{equation*}
  g_{n+1}(L) = c_{n+1}L + \sum_{p=1}^nc_p(1+3(n+1-p)\sum_{q=2}^{n+1-(p-1)}\frac{L^q}{q} \sum_{\substack{i_1,\cdots,i_{q-1}>0\\i_1+\cdots+i_{q-1}=n+1-p}}c_{i_1}\cdots c_{i_{q-1}}K_{i_1\cdots i_{q-1}}.
 \end{equation*}
 As before, we do not have a constant term thanks to the expansion \eqref{eq:G_a_exp}.
 
 Noticing that $\sum_{p=1}^n\sum_{q=2}^{n+1-(p-1)}=\sum_{q=2}^{n+1}\sum_{p=1}^{n+1-(q-1)}$ we can rewrite $g_{n+1}(L)$ as 
 \begin{equation*}
  c_{n+1}L + \sum_{q=2}^{n+1}\frac{L^q}{q}\sum_{p=1}^{n+1-(q-1)}(\cdots).
 \end{equation*}
 Now we can relabel the sum over $p$ as a sum over $i_q$. Thus the sums over $p$ and $i_1,\cdots,i_{q-1}$ can be merged. We obtain
 \begin{equation*}
  g_{n+1}(L) = c_{n+1}L +  \sum_{q=2}^{n+1}\frac{L^q}{q}\left(\sum_{\substack{i_1,\cdots,i_{q}>0\\i_1+\cdots+i_{q}=n+1}}c_{i_1}\cdots c_{i_q}\underbrace{(1+3(n+1-i_q))K_{i_1\cdots i_{q-1}}}_{=:K_{i_1\cdots i_q}}\right)
 \end{equation*}
 We therefore have the right form for $g_{n+1}(L)$, $K_{n+1}=1$ and the induction relation over the $K_{i_1\cdots i_q}$ claimed in the 
 Proposition.
\end{proof}

\subsection{The two-point function is 1-Gevrey}

The next step in the Borel-Ecalle resummation procedure exposed above is to show the formal series \eqref{eq:G_a_exp} is 1-Gevrey. One can easily show that 
\begin{equation*}
 \frac{1}{q} K_{i_1\cdots i_q} \leq \frac{1}{n}K_{\underbrace{1\cdots1}_{n\text{ times}}} = (3n-2)!!!
\end{equation*}
with $n=i_1+\cdots+i_q$ and $(3n-2)!!!=\prod_{i=0}^{n-1}(3n-2-i)$.
However this bound is too crude: we need a bound that is not uniform in $q$. Indeed, one obtain from the Lemma \ref{lem:bounds_cn} that the term 
$c_{i_1}\cdots c_{i_q}$ in the solution \eqref{eq:ansatz_sol_RGE} is dominated by the case $q=1$ while the term $K_{i_1\cdots i_q}$ is dominated by 
the term $q=n$. It is the fact that these two bounds cannot be reached together that will allow to prove that the solution \eqref{eq:ansatz_sol_RGE} is 
1-Gevrey.

We first need a finer bound on the $K_{i_1\cdots i_q}$.
Recall that for $n\in\N^*$, a {\bf composition} of $n$ is a finite sequence $(i_1,\cdots,i_q)$ of strictly positive integers such that ${i_1}+\cdots+i_q=n$. 
For any composition $(i_1,\cdots,i_q)$ of $n\in\N^*$ recall that the 
{\bf multinomial number} $\binom{n}{i_1,\cdots,i_q}$ is defined by
\begin{equation*}
 \binom{n}{i_1,\ty{\ldots},i_q} := \frac{n!}{i_1!\cdots i_q!}.
\end{equation*}
These numbers famously appear in the multinomial theorem and have many important combinatorial properties.
\begin{lemma} \label{lem:bound_K_w}
 For any $n$ in $\N^*$ and composition $(i_1,\cdots,i_q)$ of $n$, we have
 \begin{equation*}
  \frac{1}{q}K_{i_1\cdots i_q} \leq \frac{3^n}{n}\binom{n}{i_1,\ty{\ldots},i_q}.
 \end{equation*}
\end{lemma}
\begin{proof}
 First, observe that, for any $n\in\N^*$, the case $q=1$ trivially hold since $K_n=1=\binom{n}{n}$. We now prove that the result holds for every $n$ and 
 every $q$ by induction over $n$.
 
 For $n=1$, the inequality trivially holds (it is the equality case). Assume it holds for all $p\in\{1,\cdots,n\}$ for some $n\in\N^*$ and let 
 $(i_1,\cdots,i_q)$ be a composition of $n+1$. We have 
 already seen {that} if $q=1$ the result holds. If $q\geq2$ we then have
 \begin{equation*}
  \frac{1}{q}K_{i_1\cdots i_q} 
  \leq (1+3(n+1-i_q)\frac{K_{i_1\cdots i_{q-1}}}{q-1} \leq (1+3(n+1-i_q)\frac{3^{n+1-i_q}}{n+1-i_q}\binom{n+1-i_q}{i_1,\ty{\ldots},i_{q-1}}
 \end{equation*}
 by the induction hypothesis, which we can use since $q\geq2$ and thus $i_q\in\{1,\cdots,n\}$.
 
 From the definition of the multinomial numbers, we have
 \begin{equation*}
  \binom{n+1-i_q}{i_1,\ty{\ldots},i_{q-1}} = \binom{n+1}{i_q}^{-1}\binom{n+1}{i_1,\ty{\ldots},i_q}.
 \end{equation*}
 The result on rank $n+1$ then follows from the observation that 
 \begin{equation*}
  \left(3+\frac{1}{n+1-i_q}\right)\binom{n+1}{i_q}^{-1} \leq 3^{i_q}
 \end{equation*}
 for every $n\in\N^*$ and $i_q\in\{1,\cdots,n\}$.
\end{proof}
With the help of Lemma \ref{lem:bounds_cn} we are now ready to prove the result justifying the title of this subsection, namely that the two-point function is 1-Gevrey.
\begin{prop} \label{prop:G_one_Gevrey}
 The two-point function $G(L,a)$ is 1-Gevrey as a formal series in $a$: for any $L\in\R$ 
 \begin{equation*}
  |g_n(L)| \leq \frac{3}{2}(18K^2 \tilde L)^n n!
 \end{equation*}
 with $\tilde L:=\max\{L,1\}$ and $K$ the constant appearing in the upper bound of $|c_n|$ in Lemma \ref{lem:bounds_cn}.
\end{prop}
\begin{rk} \label{rk:res_relevant_NP_phys1}
 In practice, we are interested in the non perturbative regime which in the Wess-Zumino model appears for $p^2=\mu^2\exp(L)\to\infty$. In this regime, we see 
 that the locus of the first singularity of the two-point function could depend on $L$ and in particular could go to zero as $L\to\infty$. We will see later 
 that this is not the case. However the first singularities of $\hat{G}(\zeta,L)$ can move in an intermediate regime.
 This indicates that the singularities of the Borel transform\footnote{at least the first one, but since a singularities in $\omega\in\C^*$
 generally produces new singularities in $\omega\N^*$ (as in Example \ref{ex:comvo_resu}), we expect that all singularities will depend on $L$, at 
 least in some 
 non perturbative regime.} contain non perturbative information of the theory. This is of course now a rather well-understood fact (and has been known for some time: see for example \cite{Ho79}) but I find this example quite striking in its simplicity. It suggests also that resurgence theory has to be an important tool to unravel 
 non perturbative aspects of QFTs.
\end{rk}
\begin{proof}
 Using Lemma \ref{lem:bounds_cn} we have
\begin{equation*}
  \left|\frac{c_n}{c_{i_1}\cdots c_{i_q}}\right| \geq \frac{(3\delta)^{n-1}(n-1)!}{(3K)^{i_1}i_1!\cdots(3K)^{i_q}i_q!}  = \frac{1}{3n}\frac{\delta^{n-1}}{K^n}\binom{n}{i_1,\ty{\ldots},i_q} \geq \frac{1}{3n}\frac{1}{K^n}\binom{n}{i_1,\ty{\ldots},i_q}.
 \end{equation*}
 Using this as an upper bound for $|c_{i_1}\cdots c_{i_q}|$ together with the bound for $\frac{1}{q}K_{i_1\cdots i_q}$ of Lemma 
 \ref{lem:bound_K_w} we obtain
 \begin{equation} \label{eq:bound_gn}
  |g_n(L)| \leq 3\sum_{q=1}^n\left(\sum_{\substack{i_1,\cdots,i_q>0 \\ i_1+\cdots+i_q=n}}(3K)^n|c_n|\right)L^q  = 3 (3K)^n|c_n| \sum_{q=1}^n\binom{n-1}{q-1}L^q
 \end{equation}
 where we have used the simple combinatorial result that there are $\binom{n-1}{q-1}$ compositions of $n$ with 
 length $q$. Using that $L^q\leq \tilde L^n$ for any $q\in\{1,\cdots,n\}$ and once more the upper bound for $|c_n|$ of Lemma \ref{lem:bounds_cn} we find 
 the result of the Theorem since $\sum_{q=1}^n\binom{n-1}{q-1}=2^{n-1}$.
\end{proof}
\begin{rk} \label{rk:res_relevant_NP_phys2}
 One can use the bound \eqref{eq:bound_gn} more directly to find a more precise bound:
 \begin{equation*}
  |g_n(L)| \leq 3 (9K^2)^n L(L+1)^{n-1}n!
 \end{equation*}
 which holds for all $L$. This bound indicates that the first singularities of the Borel transform is rejected to infinity in the perturbative 
 limit $L\to0$ (but \emph{not} that $G(L,a)$ is analytic in this limit), and therefore that the non perturbative effects encoded in the singularities 
 of the Borel transform vanish as expected in the perturbative limit $L\to 0$. This states more explicitly that the singularities of the Borel transform and therefore the non perturbative physics are reachable through standard analysis.
\end{rk}

\subsection{Resurgence of the two-point function} \label{subsec:G_res}

We will show the result that the title of the subsection suggests by using the claimed resurgence of $\hat\gamma$, which has a direct consequence:
\begin{lemma} \label{lem:gamma_n_resu}
 The function $\hat\gamma_n$ is  $\Omega$-resurgent for all $n$ in $\N^*$.
\end{lemma}
\begin{proof}
 This result is a direct consequence of the fact that the space of $\Omega$-resurgent functions is stable under convolution, derivation and
 multiplication by an analytic function together with the fact that $\hat\gamma$ is resurgent ({Claim} \ref{thm:resurgence_gamma}). This Lemma 
 is {then easily} shown by induction using the renormalisation group equation \eqref{eq:RGE_borel}.
\end{proof}
The space of resurgent functions is stable by sums, but the above Lemma is not enough to prove that 
$\sum_{n\geq1}\hat\gamma_n(\zeta)\frac{\Lambda^n}{n!}=:\hat{G}(\zeta,\Lambda)$ is $\Omega$-resurgent. We now introduce objects that will simplify the combinatorics of the proof.
\begin{defn}
 For any $n\in\N^*$ define the set $W_n$ as the subset of words written in the alphabet $\{a,b\}$ such that 
 \begin{equation*}
  W_1:=\{\emptyset\},\quad W_{n+1}:=\{(a)\sqcup w|w\in W_n\}\bigcup\{(ab)\sqcup w|w\in W_n\}
 \end{equation*}
 with $\sqcup$ the concatenation product of words. We further set $W:=\bigcup_{n\in\N^*} W_n$.
\end{defn}
The first few $W_n$ are given by
\begin{align*}
 & W_2=\{(a);(ab)\},\quad W_3=\{(aa),(aab),(aba),(abab)\} \\
 W_4=\{(aaa), & (aaab),(aaba),(aabab),(abaa),(abaab),(ababa),(ababab)\}.
\end{align*}

\begin{lemma} \label{lem:W_n}
 For any $n\in\N^*$ we have $|W_n|=2^{n-1}$.
\end{lemma}
\begin{proof}
 For any $n\in\N^*$ write $W_{n+1}=A_n\bigcup B_n$ with $A_n:=\{(a)\sqcup w|w\in W_n\}$ and $B_n:=\{(ab)\sqcup w|w\in W_n\}$. Notice that $|A_n|=|B_n|=|W_n|$. Let us check 
 that $A_n\cap B_n=\emptyset$. Let $W_{n+1}\ni w\in A_n\cap B_n$. Then it exists $w_1$ and $w_2$ in $W_n$ such that 
 \begin{equation*}
  w = (a)\sqcup w_1=(ab)\sqcup w_2.
 \end{equation*}
 This implies that $w_1\neq\emptyset$ and since every nonempty word in $W$ starts with $a$ we can write $w_1=(a)\sqcup w_3$ for some word $w_3$ not 
 necessarily in $W$. We then have $w=(a)\sqcup w_3 = (ab)\sqcup w_2$ which a contradiction. Then $A_n\cap B_n=\emptyset$ and 
 $|W_{n+1}|=2|W_n|$. The result then follows from $|W_1|=1=2^0$.
\end{proof}
For any $\delta, L>0$ and $N\in\N^*$, we will deduce a bound on $\hat\gamma_{N+1}$ from a bound on $\hat\gamma$  in the domain $\calK_{\delta,L}(\Omega)$ which contain 
 the path $\gamma$. So, fix $N\in\N^*$ and for $n\in\{1,\cdots,N+1\}$, set 
 \begin{equation*}
  \delta_n:=\frac{\delta}{2} + (n-1)\frac{\delta}{2N},\quad L_n:=L+\frac{\delta}{2} - (n-1)\frac{\delta}{2N}.
 \end{equation*}
 We did not write the dependence on $N$ of $\delta_n$ and $L_n$ to lighten the notations. Notice however that $\delta_1=\delta/2$ and $L_1=L+\delta/2$ for 
 any $N\in\N^*$.
 
 We now define a map 
 \begin{align*}
  f:W & \longrightarrow \widehat{\mathcal{R}}_\Omega \\
    w & \longmapsto f_w
 \end{align*}
 recursively by
 \begin{equation*}
  f_\emptyset(\zeta) :=  |\hat\gamma(\zeta)| + S,\quad f_{(a)\sqcup w}(\zeta):=4(f_\emptyset\star f_w)(\zeta),\quad f_{(ab)\sqcup w}(\zeta):=\frac{6NK}{\delta}(f_\emptyset\star f_w)(\zeta)
 \end{equation*}
 where $\star$ is the convolution product and where we have set 
 \begin{equation*}
  S:=\max_{\zeta\in\calK_{\delta_1,L_1}(\Omega)}|\hat\gamma(\zeta)|\quad\text{and}\quad K:= \max_{\zeta\in\calK_{\delta_1,L_1}(\Omega)}|\zeta|.
 \end{equation*}
 The map $f$ is well-defined due to the proof above that the sets $A_n$ and $B_n$ introduced above in the proof of Lemma \ref{lem:W_n} do not intersect. Furthermore its image is a subset 
 of the $\Omega$-resurgent functions since these functions are stable by convolution and by multiplication by analytic functions.
 
  The analytical part of the work is now essentially contained is the next Lemma.
 \begin{lemma} \label{lem:analytical_bound_f_w}
  For any $N\in\N^*$ and $n\in\{1,\cdots,N+1\}$ we have 
  \begin{equation*}
   |\hat\gamma_n(\zeta)|\leq \sum_{w\in W_n}f_w(\eta)
  \end{equation*}
  for any $\zeta,\eta\in\calK_{\delta_n,L_n}(\Omega)$.
 \end{lemma}
 \begin{proof}
  We prove this result by induction on $n$. For $n=1$ we have $f_\emptyset(\zeta)\geq S=\max_{\zeta\in\calK_{\delta_1,L_1}(\Omega)}|\hat\gamma(\zeta)|$ 
  and therefore the 
  lemma holds. Assume it holds for $n\in\{1,\cdots,N\}$. We then have, for any $\zeta\in\calK_{\delta_{n+1},L_{n+1}}(\Omega)$
  \begin{equation*}
   |\hat\gamma_{n+1}(\zeta)| \leq 4|\hat\gamma|\star|\hat\gamma_n|(\zeta) + 3|\hat\gamma|\star|\zeta\partial_\zeta\hat\gamma_n|(\zeta).
  \end{equation*}
  Then using the induction hypothesis and the continuity of the convolution product we have
  \begin{equation*}
   4|\hat\gamma|\star|\hat\gamma_n|(\zeta) \leq \sum_{w\in W_n} 4(f_\emptyset\star f_w)(\eta) = \sum_{w\in W_n}f_{(\star)\sqcup w}(\eta)
  \end{equation*}
  for any $\eta\in\calK_{\delta_{n},L_{n}}(\Omega)\subset\calK_{\delta_{n+1},L_{n+1}}(\Omega)$.
  
  Now, by definition, for any $\zeta\in\calK_{\delta_{n+1},L_{n+1}}(\Omega)$, the disc of center $\zeta$ and radius $\frac{\delta}{2N}$ lies in 
  $\calK_{\delta_{n},L_{n}}(\Omega)$. Therefore, using the definition of $K$ and Cauchy's estimate (see for example \cite[Theorem 10.26]{rudin1986}) on the disc of center $\zeta$ and radius 
  $\frac{\delta}{2N}$ we find
  \begin{equation*}
   |\zeta\partial_\zeta\hat\gamma_n(\zeta)| \leq \frac{2NK}{\delta}\max_{\zeta\in D(\zeta,\delta/2N)}|\hat\gamma_n(\zeta)| \leq \sum_{w\in W_n}\frac{2NK}{\delta}f_w(\eta)
  \end{equation*}
  for any $\eta\in\calK_{\delta_{n},L_{n}}(\Omega)\subset\calK_{\delta_{n+1},L_{n+1}}(\Omega)$. Thus
  \begin{equation*}
   3|\hat\gamma|\star|\zeta\partial_\zeta\hat\gamma_n|(\zeta) \leq \sum_{w\in W_n}\frac{6NK}{\delta}(f_\emptyset\star f_w)(\eta) = f_{(\star.)\sqcup w}(\eta)
  \end{equation*}
  for any $\eta\in\calK_{\delta_{n},L_{n}}(\Omega)\subset\calK_{\delta_{n+1},L_{n+1}}(\Omega)$. Combining this bound with the one for $4|\hat\gamma|\star|\hat\gamma_n|(\zeta)$
  we obtain
  \begin{equation*}
   |\hat\gamma_{n+1}(\zeta)| \leq \sum_{w\in W_n}\left(f_{(\star)\sqcup w}(\eta) + f_{(\star.)\sqcup w}(\eta)\right) = \sum_{w\in W_{n+1}}f_{w}(\eta)
  \end{equation*}
  for any $\eta\in\calK_{\delta_{n+1},L_{n+1}}(\Omega)$.
 \end{proof}

Finally, we need a technical lemma about analytic continuation of series.
\begin{lemma} \label{lem:analytic_continuation_series}
 Let $U\subset V$ be two open subsets of $\C$. Let $f_n:U\mapsto \C$ be a sequence of holomorphic functions such that:
 \begin{enumerate}
  \item $f:=\sum_{n=0}^\infty f_n$ is holomorphic in $U$;
  \item $f_n$ admits an analytic continuation $\tilde f_n$ to $V$;
  \item $\tilde f_n$ is bounded on $V$ by an analytic function $F_n$: $|\tilde f_n|\leq F_n$;
  \item The series $F=\sum_{n=0}^\infty F_n$ converges in $V$.
 \end{enumerate}
 Then $f$ admits an analytic continuation $\tilde f$ to $V$ and $|\tilde f|\leq F$.
\end{lemma}
\begin{proof}
 For any $z\in V$, let us set 
 \begin{equation*}
  S_N(z) := \sum_{n=0}^N|\tilde f_n(z)| \leq \sum_{n=0}^N F_n(z) \longrightarrow F(z)
 \end{equation*}
 as ${N\to \infty}$. Then $S_N(z)$ is increasing and bounded and therefore convergent. {Hence} the series $\tilde f(z) := \sum_{n=0}^\infty\tilde f_n(z)$ is absolutely 
 convergent and thus convergent. This series {is by definition} an analytic continuation of $f$ to $V$ and is bounded by $F$.
\end{proof}
We can now prove the main result of this section.
\begin{theo} \label{thm:resurgence_two_points_function}
 For any $\Lambda\in\R$, the map $\zeta\mapsto\hat{G}(\zeta,\Lambda)$ is $\Omega$-resurgent. 
\end{theo}
\begin{proof}
 For $\hat\phi\in\C\{\zeta\}$ an $\Omega$-resurgent function and $\gamma$ a rectifiable path in $\C\setminus\Omega$, we denote by cont$_{\gamma}(\hat\phi)$ the analytic continuation of $\hat\phi$ along the path $\gamma$. This notation is standard in the \ty{literature} of resurgence theory.
 
 Let $\delta,L>0$ with $\delta<\rho(\Omega)/2$. Let $\gamma$ be a path in $\calK_{\delta,L}(\Omega)$.
 According to Lemma \ref{lem:analytic_continuation_series} we only need to prove that the series 
 \begin{equation*}
  \sum_{n\geq1}(\text{cont}_\gamma\hat\gamma_n)(\zeta)\frac{\Lambda^n}{n!}
 \end{equation*}
 converges normally. Indeed, in this case, this series will be equal to a continuation of $\zeta\mapsto\hat{G}(\zeta,\Lambda)$
 \begin{equation*}
  (\text{cont}_\gamma\hat G)(\zeta,\Lambda):=\left(\text{cont}_\gamma\sum_{n=1}^{\infty}\hat\gamma_n\right)(\zeta).
 \end{equation*}
 From Lemma \eqref{lem:analytical_bound_f_w} this can be done by bounding the $f_w(\zeta)$ on $\calK_{\delta_n,L_n}(\Omega)$.
 
 Let $||w||_b$ be the number of times the 
 letter $b$ is present in the word $w\in W$. Then for any $n\in\{1,\cdots,N+1\}$ and $w\in W_n$ we have
 \begin{equation*}
  f_w(\zeta) = \left(\frac{6NK}{\delta}\right)^{||w||_b}4^{n-||w||_b}f_\emptyset^{\star n}(\zeta).
 \end{equation*}
 We can now use Sauzin's bound \eqref{eq:bound_conv_resu} for $n=N+1$:
 \begin{equation*}
  \max_{\zeta\in\calK_{\delta,L}(\Omega)} f_w(\zeta) \leq \left(\frac{6NK}{\delta}\right)^N4^{N+1}\frac{C^{N+1}}{(N+1)!}\left(\max_{\zeta\in\calK_{\delta/2,L+\delta/2}(\Omega)}f_\emptyset(\zeta)\right)^{N+1}
 \end{equation*}
 where we have used that $||w||_b\in\{0,1,\cdots,N\}$. Now, using that $\delta/2=\delta_1$ and $L+\delta/2=L_1$ we find 
 $\max_{\zeta\in\calK_{\delta/2,L+\delta/2}(\Omega)}f_\emptyset(\zeta)=2S$. Using Lemmas \ref{lem:analytical_bound_f_w} and \ref{lem:W_n} we obtain
 \begin{equation*}
  \max_{\zeta\in\calK_{\delta,L}(\Omega)}|\hat\gamma_{N+1}(\zeta)| \leq \frac{\delta}{12K}\left(\frac{96}{\delta}SKC\right)^{N+1}\frac{N^N}{(N+1)!}.
 \end{equation*}
 Using Stirling's formula we then have the following bound, for $N$ big 
 \begin{equation*}
  \max_{\zeta\in\calK_{\delta,L}(\Omega)}|\hat\gamma_{N+1}(\zeta)| \leq \frac{\delta}{12Ke}\left(\frac{96}{\delta}SKCe\right)^{N+1} \frac{1}{N\sqrt{2\pi N}}\left(1+\mathcal{O}\left(\frac{1}{\sqrt{N}}\right)\right).
 \end{equation*}
 This implies the normal convergence of the series 
 $\sum_{n\geq1}(\text{cont}_\gamma\hat\gamma_n)(\zeta)\frac{\Lambda^n}{n!}=:(\text{cont}_\gamma\hat G)(\zeta,\Lambda)$ and concludes the proof.
 \end{proof}
 \begin{rk}
  Lemma \ref{lem:gamma_n_resu} and this result imply that, if one excludes  miraculous cancellation of singularities, an infinite number of alien derivatives act non-trivially on  $\zeta\mapsto\hat G(\zeta,\Lambda)$. This is corroborated by the computations of \cite[Section 4.1]{BC18} where the main contributions to the (lateral) alien derivatives applied to $\hat G$ were computed and shown to be non-zero. Therefore, the full theory or well-behaved averages is needed for the summation of the two-point function of the Wess-Zumino model.
\end{rk}

\section{Asymptotic bound of the two-point function}
 
 As explained above, we now need to prove that  $\hat G(\zeta,L)$ admits an exponential bound in the star-shaped domain $\mathcal{U}_\Omega$ of $\C\dsetminus\Omega$.
 
 \subsection{The need for Dyson-Schwinger} \label{subsect:need_SDE}
 
 So far we have only used the renormalisation group equation (except of course in the results we used as starting points). The next lemma implies that one actually needs to study the Schwinger-Dyson equation in order to find the right type of bound on the two-point function.
\begin{lemma} \label{lem:bound_gamma_n}
 Let $g:\mathcal{U}_\Omega\longrightarrow\R_+$ be an increasing analytic function such that, for any $\zeta\in\mathcal{U}_\Omega$
 \begin{equation*}
  \max\Big\{\max_{\eta\in[0,\zeta]}|\hat\gamma(\eta)|,\max_{\eta\in[0,\zeta]}|\hat\gamma'(\eta)|\Big\} \leq g(\zeta).
 \end{equation*}
 Then for any $n\in\N^*$ we have
 \begin{equation*}
  \max\Big\{\max_{\eta\in[0,\zeta]}|\hat\gamma_n(\eta)|,\max_{\eta\in[0,\zeta]}|\hat\gamma_n'(\eta)|\Big\} \leq \left[(4+3|\zeta|)(1 + g(\zeta)|\zeta|)\right]^{n-1} g(\zeta).
 \end{equation*}
\end{lemma}
\begin{rk}
 The function $g$ exists since $\hat\gamma$ and $\hat\gamma'$ are analytic (but not bounded) on $\mathcal{U}_\Omega$.
\end{rk}
\begin{proof}
 We prove this Lemma by induction. The case $n=1$ holds by definition of $g$. Assuming the Lemma holds for some $n\in\N^*$; we use the bound 
 \eqref{eq:bound_star_shaped} (which we can use on $\mathcal{U}_\Omega$ since it is star-shaped with respect to the origin) 
 on the renormalisation group equation \eqref{eq:RGE_borel} to obtain, for any $\zeta\in\mathcal{U}_\Omega$
 \begin{align*}
  |\hat\gamma_{n+1}(\zeta)| & \leq g(\zeta)|\zeta|(4\max_{\eta\in[0,\zeta]}|\hat\gamma_n(\eta)|+3|\zeta|\max_{\eta\in[0,\zeta]}|\hat\gamma'_n(\eta)|) \\
			    & \leq (4+3|\zeta|)g(\zeta)|\zeta|\max\Big\{\max_{\eta\in[0,\zeta]}|\hat\gamma_n(\eta)|,\max_{\eta\in[0,\zeta]}|\hat\gamma_n'(\eta)|\Big\} \\
			    & \leq (4+3|\zeta|)(1+g(\zeta)|\zeta|)\max\Big\{\max_{\eta\in[0,\zeta]}|\hat\gamma(\eta)|,\max_{\eta\in[0,\zeta]}|\hat\gamma'(\eta)|\Big\}.
 \end{align*}
 For any $\eta\in[0,\zeta]$ we further have
 \begin{align*}
  |\hat\gamma_{n+1}(\eta)| & \leq (4+3|\eta|)(1+g(\eta)|\eta|)\max\Big\{\max_{\sigma\in[0,\eta]}|\hat\gamma_n(\sigma)|,\max_{\sigma\in[0,\eta]}|\hat\gamma_n'(\sigma)|\Big\} \\
			    & \leq (4+3|\zeta|)(1+g(\zeta)|\zeta|)\max\Big\{\max_{\eta\in[0,\zeta]}|\hat\gamma(\eta)|,\max_{\eta\in[0,\zeta]}|\hat\gamma'(\eta)|\Big\}
 \end{align*}
 since we have assumed $g$ to be increasing.
 Therefore $\max_{\eta\in[0,\zeta]}|\hat\gamma_{n+1}(\zeta)|$ admits the bound of the Lemma.
 
 To obtain a bound on $|\hat\gamma_{n+1}'(\zeta)|$ we use Leibniz's formula
 \begin{equation} \label{eq:Leibniz}
  \frac{d}{dt}\int_{a(t)}^{b(t)} f(t,x)dx = b'(t)f(t,b(t)) - a'(t)f(t,a(t)) + \int_{a(t)}^{b(t)} \frac{\partial f}{\partial t}(t,x)dx;
 \end{equation}
 which holds provided $a$, $b$  and $f$ are differentiable with continuous derivatives.
 
 In our case this formula gives
 \begin{equation*}
  \partial_\zeta(f\star g)(\zeta) = f(0)g(\zeta) + \int_0^\zeta f'(\zeta-\eta)g(\eta) d\eta = f(\zeta)g(0) + \int_0^\zeta f(\zeta-\eta) g'(\eta) d\eta.
 \end{equation*}
 (one gets the second equality through an integration by part). Using $\hat\gamma(0)=1$ and again the bound \eqref{eq:bound_star_shaped} {on} the
 renormalisation group equation \eqref{eq:RGE_borel} {derived once one obtains}, for any $\zeta\in\mathcal{U}_\Omega$
 \begin{equation*}
  |\hat\gamma_{n+1}'(\zeta)|\leq \left[(4+3|\zeta|)(1 + g(\zeta)|\zeta|)\right]\max\Big\{\max_{\eta\in[0,\zeta]}|\hat\gamma_n(\eta)|,\max_{\eta\in[0,\zeta]}|\hat\gamma_n'(\eta)|\Big\}
 \end{equation*}
 The same bound holds for any $\eta\in[0,\zeta]$ from the same argument than the one used for $\hat\gamma_n$.
 
 From these bounds, the Lemma holds by induction.
 \end{proof}
 Summing these $\hat\gamma_n$ we end up with the following asymptotic bound for the two-point function:
\begin{equation*}
 |\hat G(\zeta,L)| \leq K\exp(c|\zeta|^2g(\zeta)L),
\end{equation*}
for some bound $g(\zeta)$ of $\hat\gamma$ and $\hat\gamma'$ at infinity. This is too weak a bound to apply the Borel-\'Ecalle resummation method. The 
square of $|\zeta|$ comes from the $\zeta$ in the renormalisation group 
equation \eqref{eq:RGE_borel} and the $\zeta^{n-1}$ in the Equation \eqref{eq:bound_star_shaped}, which we used with $n=2$. In order to apply 
Borel-\'Ecalle resummation without accelero-summation, we have two challenges to tackle:
 \begin{itemize}
 \item relate the bounds for $\hat\gamma_n$ and for $\hat\gamma'_n$ in order to get rid of one of the powers of $\zeta$;
 \item find a specific bound on the asymptotic behavior of $\hat\gamma$.
\end{itemize}
The second issue will be solved using the Schwinger-Dyson equation, and the solution of the first one will actually use inputs from the Schwinger-Dyson 
equation as well.

 \subsection{The Schwinger-Dyson equation revisited}
 
 Expanding the sum in the Schwinger-Dyson equation in the Borel plane (Equation \eqref{eq:SDE_Borel}), and using $\mathcal{B}(af(a))=1\star \hat f$ we find 
\begin{equation*}
  \hat\gamma(\zeta) = 1 +2\sum_{n=1}^{+\infty}X_{0n}(1\star\hat\gamma_n)(\zeta) + \sum_{n,m=1}^{+\infty}X_{nm}(1\star\hat\gamma_n\star\hat\gamma_m)(\zeta).
  \end{equation*}
with 
\begin{equation*}
 X_{nm}:=\frac{1}{n!m!}\frac{d^n}{dx^n}\frac{d^m}{dy^m}H(x,y)|_{x=y=0}.
\end{equation*}
Now, observe that the series 
$\sum_{k=1}^{+\infty}\frac{\zeta(2k+1)}{2k+1}\left((x+y)^{2k+1}-x^{2k+1}-y^{2k+1}\right)$ contains no terms of the form 
$x^Ny^0$ nor $x^0y^N$. Thus 
\begin{equation*}
 \partial_x^n \left.\exp\Bigl(2\sum_{k=1}^{+\infty}\frac{\zeta(2k+1)}{2k+1}\left((x+y)^{2k+1}-x^{2k+1}-y^{2k+1}\right)\Bigr)\right|_{x=y=0} = 0;
\end{equation*}
and the same holds for the derivatives with respect to $y$. Therefore, the representation \eqref{eq:Mellin} of the Mellin transform $H$ gives us $X_{0n}=X_{n0}=(-1)^n$. We thus find the Schwinger-Dyson equation in the Borel plane: 
\begin{equation} \label{eq:SDE_Borel_expanded}
 \hat\gamma(\zeta) = 1 +2\sum_{n=1}^{+\infty}(-1)^n(1\star\hat\gamma_n)(\zeta) + \sum_{n,m=1}^{+\infty}X_{nm}(1\star\hat\gamma_n\star\hat\gamma_m)(\zeta).
\end{equation}
\begin{rk} \label{rk:analytic_continuation_bound}
 It is crucial to the rest of this proof to realise that, while Equation \eqref{eq:SDE_Borel_expanded} holds for any $\zeta\in\C\dsetminus\Omega$, the 
 series on the R.H.S. only converge in a small open subset of $\C\dsetminus\Omega$ which is mapped to a neighborhood of the origin in $\C$. Indeed, taking a derivative of this equation we obtain 
 \begin{equation*}
  \hat\gamma'(\zeta) = 2\sum_{n=1}^{+\infty}(-1)^n\hat\gamma_n(\zeta) + \sum_{n,m=1}^{+\infty}X_{nm}(\hat\gamma_n\star\hat\gamma_m)(\zeta).
 \end{equation*}
 The renormalisation group equation \eqref{eq:RGE_borel} together with the result of \cite{BC15} that $\hat\gamma(\zeta)\sim A\ln(1/3-\zeta)$ when 
 $\zeta$ goes to $1/3$ implies that $\hat\gamma_n$ has the same behavior when $\zeta$ goes to $1/3$. Thus 
 $\sum_{n=1}^{+\infty}(-1)^n\hat\gamma_n(\zeta)$ trivially diverges in an open set close to $1/3$.
 
 Therefore, the series of the R.H.S. of \eqref{eq:SDE_Borel_expanded} should be read as the analytic continuation of these series when one is away from 
 their convergent domain. This will be important since we will use bounds on $\hat\gamma_n$ of the form of  the bounds of Lemma \ref{lem:bound_gamma_n}
 which hold for any $\zeta\in\mathcal{U}_\Omega$. Provided the series of these bounds will admit an analytic extension to the whole of 
 $\mathcal{U}_\Omega$, it will provide a bound for $\hat\gamma$ as needed.
\end{rk}
One can compute the numbers $X_{nm}$ using the same type of argument we used to find $X_{n0}$, or directly using the Fa\`a-di-Bruno formula. 
However the result of this computation is not particularly enlightening. It will be enough for us to find a bound for $|X_{nm}|$.
\begin{lemma} \label{lem:bound_X_nm}
 For any any $r\in]0,1/2[$ it exists a {real positive number}  $K_r>0$ such that, for any $n,m\in\N^*$ we have
 \begin{equation} \label{eq:X_nm}
  |X_{nm}| \leq \frac{K_r}{r^{n+m}}.
 \end{equation}
\end{lemma}
\begin{proof} 
 We use the multivariate Cauchy inequality (see for example \cite[Theorem 2.2.7]{Hormander66}); namely that if a function $f:\C^n\longrightarrow\C$ is analytic and bounded by $M$ in the polydisc $\{z:|z_i|\leq r_i,~i=1,\cdots,n\}$, then $|\partial^{\alpha}f(0)|\leq M\frac{\alpha!}{r^\alpha}$ for 
 any multi-index $\alpha\in\N^n$ and with obvious notations for factorials and powers. According to \eqref{eq:Mellin0}, the Mellin transform $H$ is 
 analytic in the polydisc $\{(z_1,z_2)\in\C^2:|z_1|\leq r~\wedge~|z_2|\leq r\}$ for any $r\in]0,1/2[$. For any such $r$, set 
 $K_r:=\sup_{|z_1\leq r,z_2\leq r}|H(z_1,z_2)|$. The bound \eqref{eq:X_nm} follows then directly from the multivariate Cauchy inequality.
\end{proof}

\subsection{Intermediate bounds} \label{intermediate}

We start with a common bound of $\hat\gamma$ and $\zeta\partial_\zeta\hat\gamma$ to find bounds on 
$\hat\gamma_n$ and $\hat\gamma_n'$ for any $n\in\N^*$.
\begin{lemma} \label{lem:un_autre_lemme}
 Let $g:\mathcal{U}_\Omega\setminus\{0\}\longrightarrow\R$ be a holomorphic function increasing with $|\zeta|$ such that, for any $\zeta\in\mathcal{U}_\Omega\setminus\{0\}$, 
\begin{equation*}
 \max_{\eta\in[0,\zeta]}|\hat\gamma(\eta)|\leq g(\zeta) \quad {\rm and} \quad \max_{\eta\in[0,\zeta]}|\hat\gamma'(\eta)|\leq \frac{g(\zeta)}{|\zeta|}.
\end{equation*}
Let $(g_n)_{n\in\N^*}$ and $(h_n)_{n\in\N^*}$ be two 
 sequences of functions from $\mathcal{U}_\Omega\setminus \{0\}$ 
 to 
 $\R$ inductively defined for any $\zeta\in\mathcal{U}_\Omega\setminus \{0\}$ by $g_1{(\zeta)}:=g(\zeta)$, $h_1{(\zeta)}:=g(\zeta)/|\zeta|$ and 
 \begin{align*}
  g_{n+1}(\zeta) := g(\zeta)|\zeta|\left[4g_n(\zeta)+3|\zeta| h_n(\zeta)\right],\\
  h_{n+1}(\zeta) := \frac{g_{n+1}(\zeta)}{|\zeta|} + 4g_n(\zeta)+3|\zeta|h_n(\zeta).
 \end{align*}
 Then, for any $n\in\N^*$ and $\zeta\in\mathcal{U}_\Omega\setminus \{0\}$
 \begin{equation*}
  \max_{\eta\in[0,\zeta]}|\hat\gamma_n(\eta)| \leq g_n(\zeta), \quad \max_{\eta\in[0,\zeta]}|\hat\gamma_n'(\eta)| \leq h_n(\zeta).
 \end{equation*}
\end{lemma}
\begin{rk}
 Such a function $g$ exists since $\hat\gamma$ and $\zeta\partial_\zeta\hat\gamma$ are analytic on $\mathcal{U}_\Omega$. From such a bound $g$ we will later derive the existence of the bounds we need.
\end{rk}
\begin{proof}
 We prove this by induction: the case $n=1$ holds by definition of $g$.
 
 Assuming the result holds for $n\in\N^*$, using the renormalisation group equation \eqref{eq:RGE_borel}, the 
 bound \eqref{eq:bound_star_shaped} and the induction hypothesis we obtain
 \begin{equation*}
  |\hat\gamma_{n+1}(\zeta)| \leq g(\zeta)|\zeta|\left[4g_n(\zeta)+3|\zeta| h_n(\zeta)\right] =: g_{n+1}(\zeta).
 \end{equation*}
 Taking once again the derivative of the renormalisation group equation \eqref{eq:RGE_borel} we obtain, using Leibniz's formula \eqref{eq:Leibniz}
 \begin{equation*}
  \hat\gamma_{n+1}'(\zeta) = 4[\hat\gamma_n(\zeta) + (\hat\gamma'\star\hat\gamma_n)(\zeta)] + 3 [\zeta\hat\gamma_n'(\zeta) + (\hat\gamma'\star(\zeta\hat\gamma_n'))(\zeta)].
 \end{equation*}
 Using the bound \eqref{eq:bound_star_shaped} and the induction hypothesis on this equation gives the result for $\zeta$. The case of 
 $\eta\in[0,\zeta]$ holds from the same argument than the one of Lemma \ref{lem:bound_gamma_n}, which still holds since we assume $g$ to be increasing.
\end{proof}
We can now express together the bounds of $\hat\gamma_n$ and $\hat\gamma_n'$. 
\begin{lemma} \label{lem:encore_un_lemme}
 For any $\zeta\in\mathcal{U}_\Omega\setminus \{0\}$, and $g$ a bound of $\hat\gamma$ and $\zeta\hat\gamma'$ as in Lemma \ref{lem:un_autre_lemme} set 
 \begin{equation*}
  \alpha_g(\zeta) : = \frac{g(\zeta)}{g(\zeta)+1}.
 \end{equation*}
 Then, for any 
 $n\in\N^*$ and any $\zeta\in\mathcal{U}_\Omega\setminus \{0\}$
 \begin{equation*}
  h_n(\zeta) \leq \frac{1}{\alpha_g(\zeta)}\frac{g_n(\zeta)}{|\zeta|}.
 \end{equation*}
\end{lemma}
\begin{proof}
 For $n=1$, the inequality to show is the case $n=1$ of Lemma \ref{lem:un_autre_lemme} since $1/\alpha(\zeta)>1$.
 
 For $n=2$, direct computations give 
 \begin{equation*}
  \frac{1}{\alpha(\zeta)}\frac{g_2(\zeta)}{|\zeta|} = 7g(\zeta)(g(\zeta)+1) \geq h_2(\zeta)=14g(\zeta)
 \end{equation*}
 since $g(\zeta)\geq\max_{\eta\in[0,\zeta]}|\hat\gamma(\zeta)| \geq 1=\hat\gamma(0)$. 
 
 For any $n\geq2$ we have 
 \begin{equation*}
  \frac{1}{\alpha(\zeta)}\frac{g_{n+1}(\zeta)}{|\zeta|} = (g(\zeta)+1)[4g_n(\zeta)+3|\zeta|h_n(\zeta)] = h_{n+1}(\zeta).
 \end{equation*}
 Therefore the result also holds for any $n\geq2$.
\end{proof}
We can now prove the main result of this subsection.
\begin{prop} \label{prop:main_bound}
 Let $g:\mathcal{U}_\Omega\longrightarrow\R$ be a bound of $\hat\gamma$ and $\zeta\hat\gamma'$ as in Lemma 
 \ref{lem:un_autre_lemme}. 
 Then, for any $n\in\N^*$ and $\zeta\in\mathcal{U}_\Omega\setminus \{0\}$ 
 \begin{equation*}
  \max_{\eta\in[0,\zeta]}|\hat\gamma_n(\eta)| \leq \left[(7g(\zeta)+3)|\zeta|\right]^{n-1}g(\zeta).
 \end{equation*}
\end{prop}
\begin{proof}
 By Lemma \ref{lem:un_autre_lemme} it is sufficient to prove $g_n(\zeta)\leq \left[(7g(\zeta)+3)|\zeta|\right]^{n-1}g(\zeta)$ for any $n\in\N^*$. We prove 
 this by induction: the case $n=1$ trivially holds. Assuming the result holds for $n\in\N^*$, we have according to Lemma \ref{lem:encore_un_lemme}
 \begin{equation*}
  g_{n+1}(\zeta) \leq g(\zeta)|\zeta|\left(4+\frac{3}{\alpha(\zeta)}\right)g_n(\zeta) = |\zeta|\left(7g(\zeta)+3)\right)g_n(\zeta)
 \end{equation*}
 by definition of $\alpha(\zeta)$.
\end{proof}

\subsection{Borel-\'Ecalle resummation of the two-point function} \label{subsect:final}

We now need to find the right bound $g$ used everywhere in our proofs to derive the correct asymptotic behavior for $\hat G$.
\begin{prop} \label{prop:bound_gamma_infinity}
 {On  $\mathcal{U}_\Omega$,} $|\hat\gamma(\zeta)|$ {and $|\hat\gamma'(\zeta)|$ are} bounded in a neighborhood of infinity by $1$ {and $1/|\zeta|$ respectively}.
\end{prop}
\begin{proof}
 As before
 let $g:\mathcal{U}_\Omega{\setminus\{0\}}\longrightarrow\R$ be a bound of $\hat\gamma$ and $\zeta\hat\gamma'$ as in Lemma 
 \ref{lem:un_autre_lemme}. 
 Using the bound \eqref{eq:bound_star_shaped} on the Schwinger-Dyson equation \eqref{eq:SDE_Borel_expanded}
 with the bounds of Proposition \ref{prop:main_bound} for {the} $\hat\gamma_n$ and the bounds of Lemma \ref{lem:bound_X_nm} for the coefficients $X_{nm}$ we 
 find that $|\hat\gamma|$ is bounded on $\mathcal{U}_\Omega\setminus\{0\}$ by two geometric series. {More properly, and in the spirit of Remark \ref{rk:analytic_continuation_bound},} $|\hat\gamma|$ is bounded in $\mathcal{U}_\Omega\setminus \{0\}$ by the analytic continuation of ({products} of) geometric series. To be more 
 precise, one has 
 \begin{align*}
  |\hat\gamma(\zeta)| & \leq 1 + 2|\zeta|\sum_{n=1}^{\infty} \max_{\eta\in[0,\zeta]}|\hat\gamma_n(\eta)| + \frac{K_r}{2}\max_{\eta\in[0,\zeta]}|\zeta|^2\sum_{n,m=1}^{\infty}\frac{1}{r^{n+m}}\max_{\eta\in[0,\zeta]}|\hat\gamma_n(\zeta)|\max_{\eta\in[0,\zeta]}|\hat\gamma_m(\zeta)| \\
		      & \leq 1 +\frac{2|\zeta| g(\zeta)}{1-(7g(\zeta)+3)|\zeta|} + K_r\left(\frac{|\zeta|g(\zeta)}{r-(7g(\zeta)+3)|\zeta|}\right)^2 =: G(\zeta,g(\zeta))
 \end{align*}
 for any $r\in]0,1/2[$, $\zeta\in\mathcal{U}_\Omega{\setminus\{0\}}$ and with $K_r:=\sup_{|z_1\leq r,z_2\leq r}|H(z_1,z_2)|$. 
 Notice that we removed the $1/2$ in {the third term of} the last 
 bound in order for $G$ to have the following property: for any $\zeta\in\mathcal{U}_\Omega\setminus{\{0\}}$
 \begin{equation} \label{eq:bound_gamma_prime}
  |\hat\gamma'(\zeta)| \leq \frac{G(\zeta,g(\zeta))}{|\zeta|}.
 \end{equation}
 To prove this, we take the derivative of the Schwinger-Dyson equation \eqref{eq:SDE_Borel_expanded}:
\begin{equation*}
 \hat\gamma'(\zeta) = 2\sum_{n=1}^{+\infty}(-1)^n\hat\gamma_n(\zeta) + \sum_{n,m=1}^{+\infty}X_{nm}(\hat\gamma_n\star\hat\gamma_m)(\zeta).
\end{equation*}
Therefore 
\begin{align*}
 |\hat\gamma'(\eta)| & \leq 2\sum_{n=1}^{+\infty}|\hat\gamma_n(\zeta)| + \sum_{n,m=1}^{+\infty}|X_{nm}(\hat\gamma_n\star\hat\gamma_m)(\zeta)| \\
		      & \leq 2\sum_{n=1}^{\infty} \max_{\eta\in[0,\zeta]}|\hat\gamma_n(\eta)| + K_r|\zeta|\sum_{n,m=1}^{\infty}\frac{1}{r^{n+m}}\max_{\eta\in[0,\zeta]}|\hat\gamma_n(\eta)|\max_{\eta\in[0,\eta]}|\hat\gamma_m(\eta)| \\
		      & \leq \frac{1}{|\zeta|} + \sum_{n=1}^{\infty}\left[(7g(\zeta)+3)|\zeta|\right]^{n-1}g(\zeta)+ K_r|\zeta|\sum_{n,m=1}^{\infty} \frac{1}{r^{n+m}}\left[(7g(\zeta)+3)|\zeta|\right]^{n+m-2}g(\zeta)^2 \\
		      &= \frac{G(\zeta,g(\zeta))}{|\zeta|}
\end{align*}
as claimed, and where we have used Proposition \ref{prop:main_bound} in the last inequality.

It is a cumbersome but simple exercise to study the variations of $G$. However it is enough for the task at hand to check that $G$ is bounded at infinity by $1$. For $\zeta$ {in 
$\mathcal{U}_\Omega$}, we have 
\begin{equation*}
 G(\zeta,X)\sim  \underbrace{1 - \frac{2 X}{7X+3} + K_r\left(\frac{X}{7X+3}\right)^2}_{=: f(X)} +\mathcal{O}(|\zeta|^{-1})
\end{equation*}
for $|\zeta|\to\infty$. We can still choose $r\in]0,1/2[$. Since $H(0,0)=1$ and since $H$ is holomorphic in a neighborhood of $(0,0)$, we can take $r$ small enough 
{for $K_r$ to be} arbitrarily close to $1=H(0,0)$. It then is a simple exercise of real analysis to show that, provided $K_r<7$, $f$ is continuous and decreases over 
 $\R_+^*$. Therefore
 \begin{equation*}
  |\hat\gamma(\zeta)| \lesssim f(0) = 1
 \end{equation*}
 in a neighborhood of infinity. The bound for $\hat\gamma'$ in the same neighborhood of infinity comes from the inequality \eqref{eq:bound_gamma_prime}.
\end{proof}
The bounds for $\hat\gamma$ and $\zeta\hat\gamma$ that we just proved allow us to straightforwardly prove the needed bound for the two-point function.
\begin{theo} \label{thm:bound_two_point_infinity}
 There exist real constants $K,~M>0$ such that, for any $L\in\R$, the Borel transform of the solution of the Schwinger-Dyson equation \eqref{eq:SDnlin} and the renormalisation group equation 
 \eqref{eq:RGE} admits the following bound in $\mathcal{U}_\Omega$ around infinity
 \begin{equation*}
  |\hat G(\zeta,L)| \leq \frac{K}{|\zeta|}\exp\left(M|\zeta|L\right).
 \end{equation*}
\end{theo}
\begin{proof}
 From Proposition \ref{prop:bound_gamma_infinity} we can find a bound of $g$ of $\hat\gamma$ and $\hat\gamma'$ which is increasing and bounded at infinity. Using such a bound in Proposition \ref{prop:main_bound} we obtain
 \begin{align*}
  |\hat G(\zeta,L)| & \leq \sum_{n=1}^\infty[(7g(\zeta)+3)|\zeta|]^{n-1} g(\zeta)\frac{L^n}{n!} \\
		    & = \frac{g(\zeta)}{(7g(\zeta)+3)|\zeta|}{\Big(\exp\left[(7g(\zeta)+3)|\zeta|L\right] - 1\Big)} \\
		    & \leq \frac{{K}}{|\zeta|}\exp(M|\zeta|L)
 \end{align*}
 for some $K>0$, and where we have set $M:=7{[\sup_{\zeta\in\mathcal{U}_\Omega}g(\zeta)]}+3{<\infty}$ {since we have assumed $g$ to be bounded at infinity}. 
\end{proof}
Therefore, \'Ecalle's resummation results give us the following results:
\begin{cor} \label{cor:main_result}
 The solution of the renormalisation group equation \eqref{eq:RGE} and the Schwinger-Dyson equation \eqref{eq:SDnlin} is Borel-\'Ecalle resummable. For any $L$ in $\R^*_+$, the resummed function $a\to G^{\rm res}(a,L)$ is analytic in the open subset of $\C$ defined by
 \begin{equation*}
  \left|a-\frac{1}{20L}\right| < \frac{1}{20L}.
 \end{equation*}
%  for any $L$ in $\R^*_+$.
\end{cor}
\begin{proof}
 Theorems \ref{thm:bound_two_point_infinity} and \ref{thm:resurgence_two_points_function} directly imply that the solution of the renormalisation group equation \eqref{eq:RGE} and the Schwinger-Dyson equation \eqref{eq:SDnlin} is Borel-\'Ecalle resummable.
 
 For the analyticity domain, one has simply to observe that it only depends on the asymptotic bound, therefore it is enough to bound $g$. This function was assumed to be increasing, and bounded by $1$. So this implies that the coefficient $M$ in the proof of Theorem \ref{thm:bound_two_point_infinity} is bounded by $10$. The result then follows directly from Theorem \ref{thm:Borel_Ecalle_resummation}.
\end{proof}

Let us finish this long section by pointing out that we have shown the analyticity of a solution of the Schwinger-Dyson equation in an open disc tangent to the origin. {Assuming that the bound of Theorem \ref{thm:bound_two_point_infinity} is optimal, standard results of the theory of Laplace transform and of Borel-\'Ecalle resummation theory indicate that the resummed function $G^{\rm res}(a,L)$ admits a logarithmic singularity at $a=(10L)^{-1}$. Notice that this logarithmic singularity was already pointed out in the conclusion of \cite{BC18}.

If one sees the resummed function $G^{\rm res}(a,L)$ as a function of $p^2=\mu^2\exp(L)$, its singularities at finite $p^2$ can be seen as masses that were not present in the lagrangian but can only be seen after a resurgent analysis. Further notice that if the Borel transform of the two-point function has an exponential behavior at infinity 
\begin{equation} \label{bound_G_asympt_free}
 \widehat{G}(\zeta,L)\sim K\exp(ML|\zeta|)
\end{equation}
then the associated resummed function admits a simple pole at $ML=1/a~\Longleftrightarrow p^2=\mu^2\exp((aM)^{-1})$. In other words: under the assumption of the bound \eqref{bound_G_asympt_free} we have generated a mass $\mu^2\exp((aM)^{-1})$ for our theory. 

Finally, let us point out two things. First, that a bound of the form \eqref{bound_G_asympt_free} is what one should expect to obtain after performing an acceleration of the Borel transform. Furthermore, according to \cite{BC19}
such an acceleration will likely take place in the context of asymptotically free QFTs. Therefore we are confident that the proposed mechanism could, at least in principle, be applied to some Yang-Mills theories. Second, if one improves the bound \eqref{bound_G_asympt_free}\footnote{this being of course an abuse of language: it is only possible if Equation \eqref{bound_G_asympt_free} is a bound not an equivalence. We are not more precise in order to not burden the text with too much technical details.} that is to say find an $M'<M$ then the induced mass $\mu^2\exp((aM')^{-1})$ will increase. In other words: improving the bound \eqref{bound_G_asympt_free} increases the mass gap of the theory.

\section{Toward asymptotically free theories} \label{sec:res_final}

The work presented above, while encouraging, is for a rather specific, exactly supersymmetric, quantum field theory. The holy grail would be to obtain the same type of results for physically relevant QFTs, and in particular for quantum chromodynamics (QCD). In this section, we argue that an extension of the Borel-\'Ecalle resummation method would be needed for this and explore its consequences.

\subsection{Analyticity domain}

The analyticity domain one obtains after a Borel-\'Ecalle resummation procedure in the positive real direction is a disc tangent to the origin, whose center is on the positive real line, and whose diameter is given by the asymptotic behavior of the Borel transform. This is rather nice since it avoids the negative real numbers and thus Dyson's argument: the theory is not defined in a domain where it gives absurd results. Of course this is not specific to the Borel-\'Ecalle resummation mechanism: it is also true for the simpler Borel-Laplace resummation. So what makes Borel-\'Ecalle a better candidate for physical resummation and in particular QFT?

We have already seen one argument: in physically relevant theories, the perturbative series tend to be divergent and their Borel transform to have singularities on the positive real axis. This prevents their resummation with the Borel-Laplace method thus justifying that one needs Borel-\'Ecalle.
Another argument in favor of \'Ecalle's resurgence theory is its alien calculus, that allow to compute transseries contributions from the alien derivatives. We already \ty{mentioned} this in the introduction and here is not the place to say more. 

In this section, we will see another, often overlooked, argument. In his remarkable paper \cite{Ho79} 't Hooft analyses the renormalisation group equation of an asymptotically free theory. Based on physical assumptions of the theory such as the existence of a mass, he concludes that the singuralities will lie on circles of radius $K_n = \left|2\beta_0(2n+1)\pi\right|^{-1}$ and center $\pm i K_n$; where $n\in\Z$ and $\beta_0>0$ is the (opposite of) the first coefficient of the $\beta$-function of the theory: $\beta(a)=-\beta_0a^2+O(a^3)$. The argument can also be found in \cite{BC19} which might be easier to obtain than \cite{Ho79}.

Let us point out first that $|K_n|$ decreases when $|n|$ increases. The maximum values are $K_0 = \left(2\beta_0\pi\right)^{-1} = -K_{-1}$.
Second, all the values of $a$ for which the two-point function is singular will lie inside the two disks $D_\pm$ of radius $K_0$ and of center $\pm iK_0$. 

Finally $K_n$ goes to 0 as $n$ goes to infinity, therefore the poles will concentrate around the origin. Thus we find that the domain of analyticity of $G$ arbitrarily close to the 
origin is delimited by the circles $D_+$ and $D_-$ tangent to the origin (see Figure \ref{fig:an_dom_tHooft}). This is in strong
contradiction with the analyticity domain of 
Borel-Laplace or Borel-\ty{\'Ecalle} summed functions. We thus have an argument that $G$ cannot be Borel summable. \cy{However we will see
that accelero-summation gives the right} analyticity domain.
 \begin{figure}
  \begin{tikzpicture}
  %semicircle: 
   \draw [rotate=270] (0,4)  arc (180:0:1);
   \draw [rotate=270] (-2,4)  arc (180:0:1);
   %forbidden regions:
   \draw[pattern=north west lines, pattern color=blue] (-4,-4.5) rectangle (4,4.5);
   \draw [rotate=270,pattern=north west lines, pattern color=blue] (0,4)  arc (180:0:1);
   \draw [rotate=270,pattern=north west lines, pattern color=blue] (-2,4)  arc (180:0:1);
% legends
  \draw[pattern=north west lines, pattern color=blue] (8.4,2) rectangle (9.3,2.5);
   \node (tag) at (11,2.3) {Forbidden domain};
      %axis:
   \node (tag) at (3.8,0.2) {0};
   \draw[->] (4,-3) -- (4,3) node[above right]{$\Im(a)$};
   \draw[->] (-1,0) -> (6.1,0) node[right]{$\Re(a)$};
  \end{tikzpicture}
  \caption{Maximal analyticity domain from 't Hooft argument.}
   \label{fig:an_dom_tHooft}
 \end{figure}
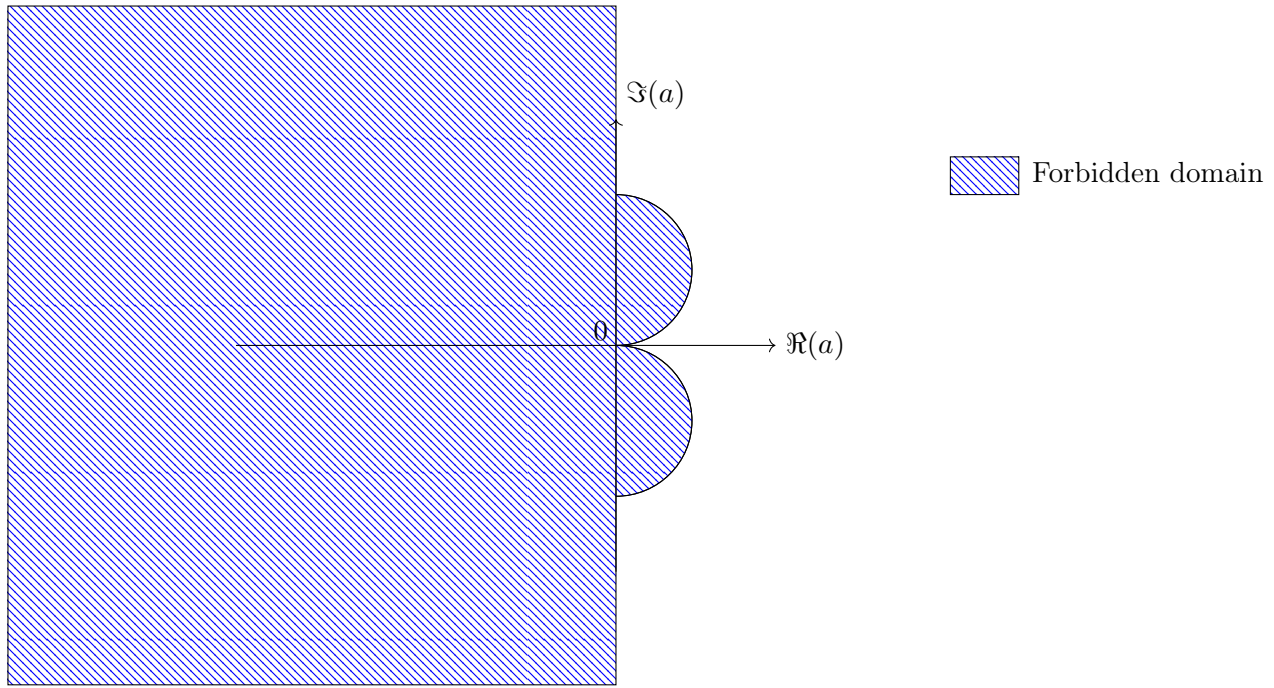
 
 \subsection{Accelero-summation in a nutshell} \label{subsec:accelero}
 
 This section intends to be a self-contained introduction to the basic concepts of accelero-summation.
We refer the reader to \cite{Ec93,Ec92} for a more detailed description.

Let $\tilde{f}(z) = \frac{1}{z}\sum_{n=0}^{+\infty}c_n\frac{1}{z^n}$ be a formal series in the variable $z:=a^{-1}$ and assume that $\widehat f$ is a $\Z^*$-resurgent function. Acceleration allows to generalise the Borel-Laplace and Borel-\'Ecalle procedures to more complex situations, where the behaviour at
infinity of the Borel transform is super-exponential and does not allow for a Laplace transform. Let $F:\C\mapsto\C$ be a function, which will be called the
acceleratrix, such that
\begin{itemize}
 \item $\overline{F(y)}=F(\bar y)$,
 \item $\lim_{y\to\infty}F(y)=+\infty$,
 \item $\lim_{y\to\infty}\frac{y}{F(y)}=+\infty$.
\end{itemize}
Let us specify that this last condition makes us deal with the so-called ``strong acceleration'' of
\cite{Ec92}, other forms exist but we will not need them here.  Acceleration allows to have a Borel transform with respect to the variable
$y$ starting from the Borel transform with respect to the variable $z=F(y)$.

Acceleration is performed by mapping the function $\widehat f(\zeta)$ to a germ $\widehat
f_{\rm acc}(\zeta)$ by
\begin{equation}  \label{def_acc}
 \widehat f_{\acc}(\zeta):=\int_0^{+\infty}C_F(\zeta,\xi)\widehat f(\xi)d\xi
\end{equation}
The acceleration kernel is the Borel transform (with respect to the variable $y$) of the function $\exp(-\xi F(y))$ and can be 
obtained through an inverse Laplace transform: 
\begin{equation*}
 C_F(\zeta,\xi):=\frac{1}{2i\pi}\int_{c-i\infty}^{c+i\infty}e^{-\xi F(y)+\zeta y}dy ,
\end{equation*}
where $c$ can be taken as any positive constant.

Similarly to the case of $\widehat f$ which was a priori only defined in a neighbourhood of 0, the preceding integral only defines $\widehat f_{\acc}(\zeta)$ in the
vicinity of the origin since the integral in~\eqref{def_acc} is only convergent for small enough $\zeta$ and we follow the same pattern. We analytically extend it, look for
singularities of this analytical continuation which are controlled by a new set of alien derivatives
and finally perform 
a Laplace (or typically an averaged Laplace) transform on it to obtain the resummed function\footnote{In principle,
other rounds of accelerations could be necessary before the final Laplace transform, but once again, we
do not aim at describing the most general procedure.}.  The resummed function is then given by the following integral
\begin{equation} \label{eq:Laplace}
 f^{\res}(y) = \int\widehat f_{\acc}(\xi)e^{-y \xi}d\xi.
\end{equation}
All accelerations have in common that they transform
convolution products to convolution products, ensuring that the whole procedure will give a sum which
satisfies the same equations as the formal series we start with. Since accelero-summation simply adds the acceleration step in the Borel-\'Ecalle resummation method, the resummed function obtained with an acceleration will satisfy the same equations than our initial formal series.
\begin{rk}
 The acceleration presented above was performed before averaging. It seems to be a folklore result, and a reasonable one, that \cy{accelerating} could be performed after averaging with the same result. Let me point out however that the Borel transform is more explicit before averaging and therefore that the asymptotics analysis should rather be performed before averaging.
\end{rk}

\subsection{Application to asymptotically free QFTs} \label{subsec:asymp_free_QFT}

It is now time to answer the main question: why should one expect that the two-point function of an asymptotically free QFT is accelero-summable, but not Borel-Laplace nor Borel-\'Ecalle summable? As hinted above the first reason is that the strong acceleration described above provides us with an explanation of the peculiar analyticity domain predicted by 't Hooft in \cite{Ho79}. Let us check that now.

As before, we assumed the Borel transform was performed in the variable $z=a^{-1}$. Let us assume that the Borel transform is such that we need an acceleration of the form 
\begin{equation*} 
 z = F(y) = \frac{1}{\sigma}\log(y).
\end{equation*}
After resummation, one obtains a function $y\to\tilde G^{\res}(y)$ analytic in a half-plane $H_R:=\{y\in\C:\Re(y)\geq R\}$ for some $R>0$. Writing
$\tilde G^{\res}(y)=\tilde G^{\res}(F^{-1}(z))=:G^{\res}(z)$, we obtain that $z\to G^{\res}(z)$ is analytic in the domain $F(H_R)$.

Using the principal branch of the logarithm $\log(x+it) = \log(|x+it|)+i\arctan(t/x)$ (which was already used to obtain 't Hooft's result) we see that the image under $F$
of a vertical line with real part $x\geq R$ is the curve parameterized by \(s=\arctan(t/x)\)
\begin{equation}
 S_x:=\left\{\left.\frac{1}{\sigma}\Bigl(\log(x)-\log(\cos(s))+is\Bigr)\right|s\in\left]-\frac{\pi}{2},\frac{\pi}{2}\right[\right\}.
\end{equation}
Since $\log(\cos(s))$ goes logarithmically to minus infinity when $s$ approaches $\pm\pi/2$, 
the analyticity domain of $z\to G^{\res}(z)$ is well approximated by the open rectangle
\begin{equation}
 F(H_R) = \left\{z\in\C\left|\Re(z)\geq\frac{1}{\sigma}\log(R)\wedge\Im(z)\in\left]-\frac{\pi}{2\sigma},\frac{\pi}{2\sigma}\right[\right.\right\}.
\end{equation}
Now, let us map this domain back into the $a$ plane.

The three lines approximating the boundary of $F(H_R)$ will be converted to circles including the
origin, as was recalled in section~1.  First, $F(H_R)$ is beyond the line with $\Re(z)=
\log(R)/\sigma$, so its transformation will be inside the circle with center $2\sigma/\log(R)$ and
radius the same expression.  The lines with $\Im(z) = \pm \pi/(2\sigma)$ will likewise be transformed
in circles with centers $\pm i \sigma/\pi$ and radius $\sigma/\pi$, but this time, the image of
$F(H_R)$ will be outside these circles.  All in all, we obtain in the usual $a$ plane a domain
squeezed between the two tangent circles centered at $\pm i \sigma /\pi$ near the origin, limited by
the circle centered at  $2\sigma/\log(R)$.  We must not forget that the three circle limits are but
approximations, since the real boundary must be smooth, because it is the image of line by a holomorphic
map. The situation is summed up in Figure \ref{fig:an_dom_acc}
 \begin{figure}
 \begin{center}
 \begin{tikzpicture} 
  %semicircles: 
   \draw [rotate=270] (0,-1)  arc (180:0:1);
   \draw [rotate=270] (-2,-1)  arc (180:0:1);
   \draw [rotate=270] (-2,-1)  arc (180:0:2);
%    \filldraw[thick,fill=green!50,even odd rule] (-0.25,0) circle (2) (0.25,0) circle (2);
\filldraw[rotate=270,thick,fill=green!50] (0,-1)  arc (180:0:1) (-2,-1)  arc (180:0:1) (-2,-1)  arc (180:0:2);
\filldraw[rotate=270,thick,fill=white] (0,-1)  arc (180:0:1);
\filldraw[rotate=270,thick,fill=white] (-2,-1)  arc (180:0:1);
%    \shade[top color=lightgray, bottom color=gray] (0,0) circle [radius=1cm];
   %tag:
%   \draw (1,1.1) rectangle (2,1.9) node[midway] {Ana. dom. of $G^{res}$};
   \node (tag) at (2.7,1.3) {Ana. dom. of $G^{\res}$};
   \draw[->] (1.5,1) .. controls (1.6,0.4) .. (0.6,0.5);
   %axis:
   \draw[->] (-1,-2.3) -- (-1,2.3) node[above]{$\Im(a)$};
   \node (tag) at (-1.2,0.2) {0};
   \draw[->] (-1.2,0) -> (2.1,0) node[right]{$\Re(a)$};
  \end{tikzpicture}
  \caption{The analyticity domain of a function resummed with an acceleration}  \label{fig:an_dom_acc}
  \end{center}
\end{figure}
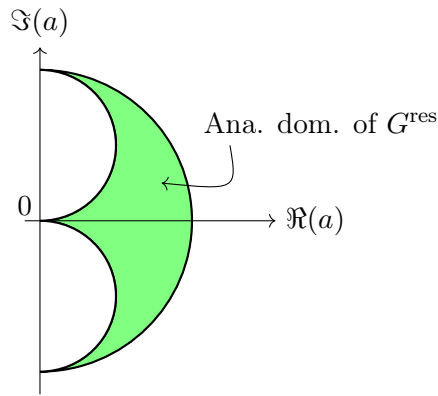

This fact alone suggests the conjecture, namely that two-point functions of a asymptotically free QFTs are accelero-summable (with the strong acceleration described above). 
% I instead offer a more precise conjecture:
\begin{conj} \label{conj:asympt_bound}
 For an asymptotically free QFT with $\beta$-function $\beta(a)=-\beta_0+\mathcal{O}(a^3)$, there exists
 for each value of the kinematical parameter $\Lambda$ a number $\sigma_\Lambda\geq\frac{1}{2\beta_0}$ such that the formal series $G(\Lambda,a)$ is accelero-summable with acceleratrix $F(y)=\frac{1}{\sigma_\Lambda}\log(y)$.
\end{conj}
The lower bound $\sigma_\Lambda\geq\frac{1}{2\beta_0}$ for $\sigma_\Lambda$ in the conjecture is so that the analyticity domain of the accelero-summed function does not intersect 't Hooft's forbidden domain of Figure \ref{fig:an_dom_tHooft}. Indeed it implies
\begin{equation*}
 \frac{\sigma_\Lambda}{\pi}\geq \frac{1}{2\pi\beta_0} = K_0.
\end{equation*}
Therefore the semicircles centered on the imaginary real axis in Figure \ref{fig:an_dom_acc} are bigger than the ones predicted by 't Hooft. This bound is necessary for the analyticity domain we find after an accelero-summation not to intersect the forbidden domain in Figure \ref{fig:an_dom_tHooft}.

\subsection{Other open questions} \label{subsec:res_questions}

Let us finish this chapter with a list of open questions that should be of interest for future research. In this list I only include question of resurgence theory, and not how some techniques of resurgence theory could be related to methods \ty{developed} in other fields, for example constructive field theory. \\

{\bf Question:} Could the non-perturbative mass generation mechanism presented in the end of Subsection \ref{subsect:final} be adapted to the framework of asymptotically free QFT with an accelero-summable two-point function? \\

Preliminary results suggest that the minimal acceleratrix $F(y)=\frac{1}{\sigma_\Lambda}\log(y)$ of Conjecture \ref{conj:asympt_bound} is not perfectly suited for this. Finding a different acceleratrix that would preserve 't Hooft's analyticity domain and offer a mass generation mechanism would help make this conjecture more precise: it would tell us what is the expected asymptotic behaviour of the two-point function.

However, such a modified Conjecture \ref{conj:asympt_bound} would be very ambitious. One can --through hard work-- find bounds for functions that prove their (accelero-)summability, but typically not show that these bounds are optimal. As a matter of fact we would need to have the precise asymptotic behaviour of the two-point function to prove that the proposed mass generation mechanism takes place (but numerical studies could offer enough evidence for physically acceptable results). This difficulty comes on top of the fact that the Schwinger-Dyson equation for Yang-Mills theories do not typically close. The results of \cite{BoDu20} suggests that the resurgence algebra of the solutions of various truncations of the two-point function could greatly differ.\\

The following question is related to the most standard use of resurgence in the physics \ty{literature}: its applications to compute transseries solutions of a given physical \ty{equation}. There is a very precise analytical link between a Borel 
summable series and the associated Borel resummed function, known as Watson's theorem \cite{Wa1912}. This classical result was generalised by Nevanlinna in \cite{Ne1918}\footnote{my most sincere thanks to V. Rivasseau that pointed out this reference to me and to M. Serone who quoted it precisely enough in his lecture notes \cite{Se24} for me to find the reference} and by Sokal in \cite{So80}. Sokal's generalisation of Watson's theorem, henceforth called Sokal-Watson theorem, is the one commonly used in QFT. To the best of my knowledge, the following question is still open: \\

{\bf Question:} Is there a Sokal-Watson's theorem for the Borel-\'Ecalle (accelero-)resummation method? \\

The most general case of transseries of any level (which are, in principle, obtainable from the most general Borel-\'Ecalle resummation method) is still beyond the reach of resurgence theory. Let us precise this question on the example of a level one transseries, as the one found in the Wess-Zumino model studied below, and most of physical applications to resurgence:
\begin{equation*}
 \Phi(a) = \sum_{k=0}^{+\infty}\Phi_n(a)e^{-Sn/a}.
\end{equation*}
Provided the Borel transforms $\widehat\Phi_n$ all satisfy simultaneously the assumptions of the Sokal-Watson theorem, we readily obtain a positive answer. These assumptions are that each of the Borel transform $\widehat\Phi_n$ admits an analytic extension to a common strip containing the real line, and a common exponential bond at infinity on this strip. Then we would obtain common analytical properties and bounds for each of the $\Phi_n$ which can be used to prove the analyticity domain and bounds on $\Phi$.

However in general the Borel transforms do not admit an analytic extension in a strip containing the positive real line. As we have seen they have singularities. They only admit analytic continuation along paths avoiding these singularities. It is more delicate but should, at least in principle, be solvable using Ecalle's resurgence relations which should allow us to reduce this case to the previous one.

A Sokal-Watson theorem for transseries would be of importance for the physical implications of the Borel-\'Ecalle resummation method. Indeed, for these applications only a transseries expansions of the Borel-\'Ecalle resummed function 
were computed. These transseries are not the full Borel-\'Ecalle resummed functions but rather a good approximation which can then be compared to experimental results. A Watson's theorem for Borel-\'Ecalle 
resummation which would be formulated with transseries would provide a more precise meaning to the word ``good'' in the previous sentence and allow to have estimates for error margins coming from the 
truncations of the transseries. \\

% Another reason why such a theorem would be of importance lies in the details of the physical applications of resurgence theory to physics. The coefficients of the transseries expansion are computed using the 
% so-called median average, which can be expressed in terms of the alien derivatives of the formal series to be resummed. The median average is one special average, a notion that will be introduced below. 
% However, it is not a ``well-behaved average'', which are the ones that should be used for the Borel-\'Ecalle resummation method. Nonetheless, one could expect the transseries expansion of a function to be 
% unique. Thus Watson's theorem for Borel-\'Ecalle resummation would give a better mathematical ground to physical computations.
The last question looks at an  point that has been left purposefully unclear in the discussion of the Borel-\'Ecalle resummation method. \\

{\bf Question:} How does the Borel-\'Ecalle resummed function depend on the choice of the well-behaved average? \\

A priori, different well-behaved averages being very different objects, one could assume the answer to be that different choices of averages build different solutions. However, a finer analysis of the problem at hand leads to more subtle conclusions. In particular, for physics-related problems studied with resurgence, the Riemann surface on which the resurgent functions have their domain is highly structured. This should induce that different average coincide up to one (or possibly finitely many) free parameters, as in the case of ODEs \cite{Co98}, \cite{Co06}.

Therefore, one could conjecture that it actually does not depend on the choices made and that changing averages amounts to a reparametrisation of the solution. This conjecture is motivated by an observation of 
\cite{Me97} that it indeed holds for a specific problem and from the fact that two averages are always related by a so-called passage automorphism. Even if the choice of the average changes the resummed 
function, one should expect stability of some physically relevant properties, for example the poles of the resummed function.\\

As a final word, I would like to repeat that I had presented here only one approach to resurgence. Among the other approaches that exist, the so-called \emph{modular resurgence} (see \cite{fantini2024modular}). The idea is to study some quantum modular form (see \cite{zagier2010quantum}) whose failure to modularity can be measured with a divergent resurgent series. These objects appear (at least conjecturally) in topological string theory, knot theory and quantum Chern-Simons theory.

% % % % %  COMMENTER CE QUI EST CI-DESSOUS.
% 
%  \bibliographystyle{unsrt}
%  \addcontentsline{toc}{section}{Bibliography}
% \bibliography{HDR_biblio}
%  
% \end{document}

    %% END MAIN TEXT
    
%     \setcounter{chapter}{0}
%     
%     \input{APPA}

% \includepdf[pages={1}]{cv_eng_24.pdf}

% %% TO INCLUDE CV:
% 
% \newpage
% \phantomsection
% \addcontentsline{toc}{chapter}{Appendix: Curriculum Vitae}
% 
% % \includepdf[pages={2-}]{cv_eng_24.pdf}
% 
% \includepdf[pages={-}]{cv_eng_24.pdf}
% 
% %% END CV

% \includepdf[
% pages=-,
% pagecommand={
% \thispagestyle{empty}
% \stepcounter{insertpages}
% \ifnum\value{insertpages}=1\addcontentsline{toc}{section}{Curriculum Vitae}\fi
% }
% ]{cv_eng_24.pdf}
% \setcounter{insertpages}{0}

 \bibliographystyle{alpha}
 \addcontentsline{toc}{chapter}{Bibliography}
\bibliography{HDR_biblio}
 
\end{document}